\documentclass[acmsmall,screen,nonacm]{acmart}
\newif\ifarxivPreprint
\arxivPreprinttrue

\setcopyright{none}
\AtBeginDocument{%
}

\usepackage{mathtools}
\usepackage[capitalise,noabbrev,nameinlink]{cleveref}
\usepackage{xspace}
\usepackage{tikz}
\usetikzlibrary{arrows.meta,positioning,fit,backgrounds,tikzmark,decorations.pathreplacing,automata,bending,tikzmark}
\usepackage{yfonts}
\usepackage[colorinlistoftodos,prependcaption,textsize=tiny]{todonotes}
\usepackage{colortbl}
\usepackage{xcolor}
\usepackage{multirow}
\usepackage{makecell}
\usepackage{array}
\usepackage{stmaryrd}
\usepackage{ebproof}
\usepackage{subcaption}
\usepackage{tabularx}
\usepackage{enumitem}
\usepackage{eqparbox}
\usepackage{adjustbox}
\usepackage{pdflscape}
\usepackage{thm-restate}
\usepackage{nicefrac}
\usepackage{wrapfig}
\usepackage{transparent}
\usepackage{hhline}

\crefname{theorem}{theorem}{theorems}
\Crefname{theorem}{Theorem}{Theorems}
\crefname{definition}{definition}{definitions}
\Crefname{definition}{Definition}{Definitions}
\crefname{lemma}{lemma}{lemmas}
\Crefname{lemma}{Lemma}{Lemmas}
\crefname{prop}{proposition}{propositions}
\Crefname{prop}{Proposition}{Propositions}
\crefname{corollary}{corollary}{corollaries}
\Crefname{corollary}{Corollary}{Corollaries}
\crefname{statement}{statement}{statements}
\crefname{example}{example}{examples}
\Crefname{example}{Example}{Examples}
\Crefname{statement}{Statement}{Statements}
\crefname{implication}{implication}{implications}
\Crefname{implication}{Implication}{Implications}
\crefname{equation}{equation}{equations}
\Crefname{equation}{Equation}{Equations}
\crefname{figure}{figure}{figures}
\Crefname{figure}{Figure}{Figures}
\crefname{table}{table}{tables}
\Crefname{table}{Table}{Tables}
\crefname{example}{example}{examples}
\Crefname{example}{Example}{Examples}

\newcommand{\withabbrev}[2]{%
  \begingroup
  \crefname{theorem}{thm.}{thms.}%
  \Crefname{theorem}{Thm.}{Thms.}%
  \crefname{definition}{def.}{defs.}%
  \Crefname{definition}{Def.}{Defs.}%
  \crefname{lemma}{lem.}{lems.}%
  \Crefname{lemma}{Lem.}{Lems.}%
  \crefname{prop}{prop.}{props.}%
  \Crefname{prop}{Prop.}{Props.}%
  \crefname{corollary}{cor.}{cors.}%
  \Crefname{corollary}{Cor.}{Cors.}%
  \crefname{statement}{stmt.}{stmts.}%
  \Crefname{statement}{Stmt.}{Stmts.}%
  \crefname{implication}{impl.}{impls.}%
  \Crefname{implication}{Impl.}{Impls.}%
  \crefname{equation}{eq.}{eqs.}%
  \Crefname{equation}{Eq.}{Eqs.}%
  \crefname{figure}{fig.}{figs.}%
  \Crefname{figure}{Fig.}{Figs.}%
  \crefname{table}{tab.}{tabs.}%
  \Crefname{table}{Tab.}{Tabs.}%
  \crefname{example}{ex.}{exs.}%
  \Crefname{example}{Ex.}{Exs.}%
  #1{#2}%
  \endgroup
}

\newcommand{\abCref}[1]{\withabbrev{\Cref}{#1}}

\newcommand{\fullref}[1]{\Cref{#1}: \nameref{#1}}
\newcommand{\titleFullref}[2]{\texorpdfstring{#1\NoCaseChange{\fullref{#2}}}{#1Section \ref*{#2}: \nameref*{#2}}}

\newcommand{\semiring}{S}
\newcommand{\Semiring}{\mathcal{S}}

\newcommand{\sbigadd}{\bigoplus}
\newcommand{\splus}{\mathbin{\oplus}}
\newcommand{\ssplus}{\morespace{\splus}}
\newcommand{\stimes}{\mathbin{\otimes}}
\newcommand{\sdot}{\mathbin{\odot}}
\newcommand{\szero}{\textswab{0}}
\newcommand{\snull}{\textswab{0}}
\newcommand{\sone}{\raisebox{-.75pt}{\textswab{1}}\hspace*{.25pt}}
\newcommand{\sonetiny}{\scalebox{.6}{\textswab{1}}}
\newcommand{\sgeq}{\hgeq}
\newcommand{\sleq}{\hleq}
\newcommand{\seq}{=}

\newcommand{\Semiringlong}{\Semiring = (\semiring, \oplus, \odot, \snull, \sone)}

\newcommand{\monoid}{M}
\newcommand{\Monoid}{\mathcal{M}}
\newcommand{\monoiddef}{\Monoid = (\monoid,\, \monop,\, \monone)}
\newcommand{\monop}{\sdot}
\newcommand{\monone}{\sone}

\newcommand{\module}{W}
\newcommand{\Module}{\mathcal{W}}
\newcommand{\moduledef}{\Module = (\module, \oplus, \snull, \otimes)}

\newcommand{\trop}{\texttt{trop}}

\newcommand{\comb}{\texttt{nats}}
\newcommand{\prob}{\texttt{prob}}

\newcommand{\why}{\texttt{why}}
\newcommand{\bool}{\texttt{bool}}

\newcommand{\lang}{\texttt{lang}}
\newcommand{\clear}{\texttt{clear}}

\newcommand{\modbool}{\Module_{\bool}}
\newcommand{\monbool}{\Monoid_{\bool}}
\newcommand{\monboollong}{(\Bools, \land, \true)}
\newcommand{\modboollong}{(\Bools, \lor, \false, \land)}

\newcommand{\modlang}{\Module_{\lang}}

\newcommand{\monlang}{\Monoid_{\lang}}

\newcommand{\modnats}{\Module_{\comb}}

\newcommand{\modprob}{\Module_{\prob}}
\newcommand{\modproblong}{(\PosRealsInf, +, 0, \cdot)}
\newcommand{\monprob}{\Monoid_{\prob}}
\newcommand{\monproblong}{(\PosReals, \cdot, 1)}

\newcommand{\montrop}{\Monoid_{\trop}}
\newcommand{\modtrop}{\Module_{\trop}}

\newcommand{\modwhy}{\Module_{\why}}
\newcommand{\pDNF}{\mathsf{pDNF}}
\newcommand{\modwhylong}{(\pDNF, \vee, 0, \wedge)}
\newcommand{\monwhy}{\Monoid_{\why}}
\newcommand{\monwhylong}{(\pDNF, \wedge, 1)}

\newcommand{\modclear}{\Module_{\clear}}
\newcommand{\clearUniverse}{\{\mathsf{P, C, S, TS}\}}
\newcommand{\modclearlong}{(\clearUniverse, \min, \mathsf{TS}, \max)}
\newcommand{\monclear}{\Monoid_{\clear}}
\newcommand{\monclearlong}{(\clearUniverse, \max, \mathsf{P})}

\newcommand{\sem}[1]{\llbracket #1 \rrbracket}
\newcommand{\nosem}[1]{#1}

\newcommand{\BRANCH}[2]{\{\, #1\, \} \BranchSymbol \{\, #2\, \}}
\newcommand{\BIGBRANCH}[3]{\sbigadd_{#1}^{#2} \, \{\, #3\, \}}
\newcommand{\WeighSymbol}{\stimes}
\newcommand{\WEIGH}[1]{{\WeighSymbol}\,#1\,}

\newcommand{\AssignSymbol}{\coloneqq}
\newcommand{\ASSIGN}[2]{\ensuremath{#1 \AssignSymbol #2}}

\newcommand{\IF}[1]{\ensuremath{\symIf\,\left(\, {#1} \,\right)\,\{}}
\newcommand{\ELSE}{\ensuremath{\}\,\symElse\,\{}}
\newcommand{\ITE}[3]{\ensuremath{\symIf\,\left(\, {#1} \,\right)\,\left\{\, {#2} \,\right\}\,\symElse\,\left\{\, {#3} \,\right\}}}
\newcommand{\WHILE}[1]{\ensuremath{\symWhile \left(\, {#1} \,\right)\left\{\right.}}
\newcommand{\WHILEDO}[2]{\ensuremath{\symWhile \left(\, {#1} \,\right)\left\{\, {#2} \,\right\}}}

\newcommand{\semcol}{\,\fatsemi\;}

\DeclareMathOperator{\supp}{supp}
\newcommand{\config}[1]{\langle {#1} \rangle}

\newcommand{\morespace}[1]{~{}#1{}~}
\newcommand{\qmorespace}[1]{\quad{}#1{}\quad}
\newcommand{\qqmorespace}[1]{\qquad{}#1{}\qquad}

\newcommand{\hheyloleq}{\morespace{\heyloleq}}
\newcommand{\hheylogeq}{\morespace{\heylogeq}}

\newcommand{\qiff}{\qmorespace{\textnormal{iff}}}
\newcommand{\qqiff}{\qqmorespace{\textnormal{iff}}}

\newcommand{\qimplies}{\qmorespace{\textnormal{implies}}}

\definecolor{Platinum}{HTML}{f0f0f0}
\definecolor{Charcoal}{HTML}{5C5C5C}
\definecolor{Black}{HTML}{000000}

\definecolor{CornflowerOcean}{HTML}{0072b2}
\definecolor{EmeraldDepths}{HTML}{095541}
\definecolor{Teal}{HTML}{008080}
\definecolor{CoralGlow}{HTML}{FF7F50}
\definecolor{ForestMoss}{HTML}{63922F}
\definecolor{Olive}{HTML}{87912F}
\definecolor{RebeccaPurple}{HTML}{682F91}
\definecolor{BrightLemon}{HTML}{FFED27}

\definecolor{TurfGreen}{HTML}{077547}
\definecolor{StormyTeal}{HTML}{076C75}
\definecolor{YaleBlue}{HTML}{074775}
\definecolor{RegalNavy}{HTML}{073176}
\definecolor{YaleBlue2}{HTML}{074475}
\definecolor{YaleBlue}{HTML}{075675}
\definecolor{DeepOcean}{HTML}{077570}

\definecolor{BrickRed}{HTML}{B11F1F}
\definecolor{BrickAmber}{HTML}{B03820}
\definecolor{RustySpice}{HTML}{B05020}
\definecolor{DarkGoldenrod}{HTML}{b08020}

\definecolor{DeepSpaceBlue}{HTML}{203b57}
\definecolor{CharcoalBlue}{HTML}{204457}
\definecolor{DarkTeal}{HTML}{204d57}
\definecolor{DarkGarnet}{HTML}{6e0e0a}
\definecolor{MoltenLava}{HTML}{6e1e0a}
\definecolor{DarkWalnut}{HTML}{6e2e0a}

\colorlet{basicBackground}{Charcoal}
\colorlet{primalColor}{Teal} %
\colorlet{dualColor}{CoralGlow} %

\colorlet{moduleColor}{RustySpice} %
\colorlet{weightedColor}{BrickAmber} %
\colorlet{wGCLColor}{BrickRed} %

\colorlet{heytingColor}{TurfGreen} %
\colorlet{prepostColor}{StormyTeal} %
\colorlet{heyvlColor}{YaleBlue} %

\colorlet{stmtColor}{EmeraldDepths} %
\colorlet{hintcolor}{Charcoal}
\colorlet{neutralColor}{Olive}

\newcommand{\colweighted}[1]{{\color{weightedColor}#1}}
\newcommand{\colwGCL}[1]{{\color{wGCLColor}#1}}
\newcommand{\colheylo}[1]{{\color{prepostColor}#1}}
\newcommand{\colheyvl}[1]{{\color{heyvlColor}#1}}
\newcommand{\colproc}[1]{{\color{hintcolor}#1}}

\newcommand{\hint}[1]{\color{hintcolor}{#1}}
\newcommand{\explain}[1]{\text{\small\color{hintcolor}{(#1)}}\quad}
\newcommand{\inlineExplain}[1]{\tag*{\small\color{hintcolor}{(#1)}}}

\newcommand{\lsupquant}[2]{\ensuremath{\bigsqcup_{#1} #2}}
\newcommand{\linfquant}[2]{\ensuremath{\bigsqcap_{#1} #2}}

\newcommand{\ifThenElse}[3]{\begin{cases}{#2}, & \text{if } {#1}\\{#3}, & \text{otherwise} \end{cases}}
\newcommand{\ifThenElseDot}[3]{\begin{cases}{#2}, & \text{if } {#1}\\{#3}, & \text{otherwise}~. \end{cases}}

\newcommand{\true}{\mathsf{true}}
\newcommand{\false}{\mathsf{false}}

\newcommand{\impl}{\rightarrow}
\newcommand{\coimpl}{\leftarrow}
\newcommand{\coneg}{{\sim}}
\newcommand{\heylovalidate}[1]{\ensuremath{\triangle\!\left( #1 \right)}}
\newcommand{\heylocovalidate}[1]{\ensuremath{\triangledown\!\left( #1 \right)}}

\newcommand{\heyloleq}{\sqsubseteq}
\newcommand{\heylogeq}{\sqsupseteq}
\newcommand{\hleq}{\heyloleq}
\newcommand{\hhleq}{\morespace{\hleq}}
\newcommand{\bhleq}{\hleq_{\statesclosed}}
\newcommand{\bhhleq}{\morespace{\hleq_\beta}}

\newcommand{\hgeq}{\heylogeq}

\newcommand{\hand}{\sqcap}
\newcommand{\hbigand}{\bigsqcap}

\newcommand{\hor}{\sqcup}
\newcommand{\hbigor}{\bigsqcup}

\newcommand{\hsym}{\mathpzc{H}}
\newcommand{\reflectH}[2]{\reflectbox{$#1\mathpzc{H}$}}
\newcommand{\cohsym}{\mathpalette\reflectH\relax}
\newcommand{\bihsym}{\mathcal{H}}

\newcommand{\hdom}{H}

\newcommand{\hdef}{\hsym = (\hdom, \hleq, \hand, \hor, \impl)}
\newcommand{\hcodef}{\cohsym = (\hdom, \hleq, \hand, \hor, \coimpl)}
\newcommand{\hbidef}{\bihsym = (\hdom, \hleq, \hand, \hor, \impl, \coimpl)}

\newcommand{\hla}{\ensuremath{\varphi}}
\newcommand{\hlb}{\ensuremath{\psi}}
\newcommand{\hlc}{\ensuremath{\rho}}

\newcommand{\inv}{\ensuremath{I}}

\newcommand{\wla}{\colweighted{f}}
\newcommand{\wlb}{\colweighted{g}}
\newcommand{\wlc}{\colweighted{h}}
\newcommand{\wX}{\colweighted{X}}

\newcommand{\hhla}{\colheylo{\hla}}
\newcommand{\hhlb}{\colheylo{\hlb}}
\newcommand{\hhlc}{\colheylo{\hlc}}

\newcommand{\iinv}{\colweighted{\ensuremath{I}}}
\newcommand{\hinv}{\colheylo{\inv}}
\newcommand{\hinvseq}[1]{\hinv_{\colheylo{#1}}}

\newcommand{\substBy}[2]{[{#1} \mapsto {#2}]}

\newcommand{\iverson}[1]{\left[{#1}\right]}
\newcommand{\booth}[1]{\left<{#1}\right>}
\newcommand{\interpretsimple}[1]{\llbracket{#1}\rrbracket}

\newcommand{\eval}[1]{\llbracket{#1}\rrbracket}

\newcommand{\isValid}[1]{#1 \text{ is valid}}
\newcommand{\isCovalid}[1]{#1 \text{ is covalid}}

\newcommand{\qfFrag}{\wHeyLo_{\mathsf{qf}}}
\newcommand{\downQFrag}{\wHeyLo_\downarrow}
\newcommand{\upQFrag}{\wHeyLo_\uparrow}
\newcommand{\downPFrag}{\mathcal P_\downarrow}
\newcommand{\upPFrag}{\mathcal P_\uparrow}

\newcommand{\downQelim}[1]{\mathsf{qelim}^{\downarrow}(#1)}
\newcommand{\upQelim}[1]{\mathsf{qelim}^{\uparrow}(#1)}
\newcommand{\downPrenex}[1]{\mathsf{prenex}^{\downarrow}(#1)}
\newcommand{\upPrenex}[1]{\mathsf{prenex}^{\uparrow}(#1)}
\newcommand{\downStrip}[2]{\mathsf{strip}^{\downarrow}(#1,#2)}
\newcommand{\upStrip}[2]{\mathsf{strip}^{\uparrow}(#1,#2)}

\newcommand{\midequiv}{&\quad\equiv\quad}%
\newcommand{\midequivL}{\midequiv}

\newcommand{\lfp}[2]{\mathrm{lfp}\,{#1}.~{#2}}
\newcommand{\gfp}[2]{\mathrm{gfp}\,{#1}.~{#2}}

\newcommand{\Bools}{\mathbb{B}}
\newcommand{\Nats}{\mathbb{N}}

\newcommand{\NatsX}{\mathbb{N}^{\infty}}
\newcommand{\PosReals}{\mathbb{R}_{\geq 0}}
\newcommand{\PosRealsInf}{\mathbb{R}_{\geq 0}^{\infty}}
\newcommand{\eureal}{\PosRealsInf}

\newcommand{\Weighting}{\mathbb{W}}
\newcommand{\States}{\Sigma}
\newcommand{\State}{\sigma}
\newcommand{\Vals}{\mathbb{N}}

\newcommand{\sfsymbol}[1]{\textsf{\upshape {#1}}}
\newcommand{\Vars}{\sfsymbol{Vars}\xspace}
\newcommand{\varset}{\ensuremath{\beta}}
\newcommand{\GCL}{\sfsymbol{GCL}\xspace}
\newcommand{\pGCL}{\sfsymbol{pGCL}\xspace}
\newcommand{\wGCL}{\textsf{\upshape wGCL}\xspace}
\newcommand{\HeyVL}{\sfsymbol{HeyVL}\xspace}
\newcommand{\HeyLo}{\sfsymbol{HeyLo}\xspace}
\newcommand{\wHeyVL}{\sfsymbol{wHeyVL}\xspace}
\newcommand{\wHeyLo}{\sfsymbol{wHeyLo}\xspace}
\newcommand{\Caesar}{\textsc{Caesar}}
\newcommand{\sflangsub}[2]{\ensuremath{\textsf{\upshape #1}_{#2}}\xspace}
\newcommand{\wHeyVLEUReal}{\sflangsub{wHeyVL}{\mathtt{\prob}}}
\newcommand{\wHeyVLBool}{\sflangsub{wHeyVL}{\mathtt{\bool}}}
\newcommand{\wHeyVLTrop}{\sflangsub{wHeyVL}{\mathtt{\trop}}}
\newcommand{\wHeyVLClearance}{\sflangsub{wHeyVL}{\mathtt{\clear}}}
\newcommand{\wHeyVLWhy}{\sflangsub{wHeyVL}{\mathtt{\why}}}
\newcommand{\wHeyVLLang}{\sflangsub{wHeyVL}{\mathtt{\lang}}}
\newcommand{\semiringPrefix}[1]{\symStmt{\texttt{#1}}\xspace}
\newcommand{\rigEUReal}{\semiringPrefix{\prob}}
\newcommand{\rigBool}{\semiringPrefix{\bool}}
\newcommand{\rigTrop}{\semiringPrefix{\trop}}
\newcommand{\rigClear}{\semiringPrefix{\clear}}
\newcommand{\rigWhy}{\semiringPrefix{\why}}
\newcommand{\rigLang}{\semiringPrefix{\lang}}

\newcommand{\MonExp}{\sfsymbol{MExp}\xspace}
\newcommand{\ArithExp}{\sfsymbol{AExp}\xspace}
\newcommand{\BExp}{\sfsymbol{BExp}\xspace}

\newcommand{\typeof}[2]{\ensuremath{#1 \colon {#2}}}
\newcommand{\typefont}[1]{\mathsf{#1}}
\newcommand{\listtype}{\typefont{Lists}}

\newcommand{\typevar}{\ensuremath{\tau}}

\newcommand{\aexpr}{\ensuremath{a}}
\newcommand{\mexpr}{\ensuremath{m}}

\newcommand{\bexpr}{\ensuremath{b}}
\newcommand{\procname}{\colproc{P}}

\newcommand{\symStmt}[1]{\ensuremath{{\color{stmtColor}{#1}}}}
\newcommand{\symAssign}{\morespace{\coloneqq}}
\newcommand{\symRasgn}{\coloneqq}
\newcommand{\symSemi}{\symStmt{\texttt{;}~}}
\newcommand{\symIf}{\symStmt{\texttt{if}}}
\newcommand{\symElse}{\symStmt{\texttt{else}}}
\newcommand{\symAssert}{\symStmt{\texttt{assert}}}
\newcommand{\symAssume}{\symStmt{\texttt{assume}}}
\newcommand{\symDemonic}{\symStmt{\symIf~(\sqcap)}}
\newcommand{\symAngelic}{\symStmt{\symIf~(\sqcup)}}
\newcommand{\symHavoc}{\symStmt{\texttt{havoc}}}
\newcommand{\symWhile}{\symStmt{\texttt{while}}}
\newcommand{\symTick}{\symStmt{\texttt{reward}}}
\newcommand{\symProc}{\symStmt{\texttt{proc}}}
\newcommand{\symcoProc}{\symStmt{\texttt{coproc}}}
\newcommand{\symReturns}{\texttt{->}}
\newcommand{\symRequires}{\symStmt{\texttt{pre}}}
\newcommand{\symEnsures}{\symStmt{\texttt{post}}}
\newcommand{\symValidate}{\symStmt{\texttt{validate}}}
\newcommand{\symVar}{\symStmt{\texttt{var}}}
\newcommand{\symWeightedBranching}{\symStmt{\symIf~(+)}}
\newcommand{\symWeigh}{\symStmt{\texttt{weigh}}}

\newcommand{\blockStart}{\ensuremath{\{}}
\newcommand{\blockEnd}{\ensuremath{\}}}

\newcommand{\weightedProc}[4]{{#1}~\symProc~{\mathit{#2}}\,\texttt{(}{#3}\texttt{)}~\symReturns~\texttt{(}{#4}\texttt{)}}
\newcommand{\weightedCoproc}[4]{{#1}~\symcoProc~{\mathit{#2}}\,\texttt{(}{#3}\texttt{)}~\symReturns~\texttt{(}{#4}\texttt{)}}

\newcommand{\Ensures}[1]{\symEnsures~{\color{prepostColor}#1}}
\newcommand{\Requires}[1]{\symRequires~{\color{prepostColor}#1}}

\newcommand{\varin}{\ensuremath{\mathit{in}}}
\newcommand{\varout}{\ensuremath{\mathit{out}}}
\newcommand{\multi}[1]{\ensuremath{\overline{#1}}}

\newcommand{\stmt}{C}

\newcommand{\sstmt}{{\color{heyvlColor}\stmt}}

\newcommand{\sstmts}[1]{\sstmt_{\colheyvl{#1}}}
\newcommand{\wcom}{\ensuremath{C}}
\newcommand{\wwcom}{\colwGCL{\wcom}}
\newcommand{\wwcoma}{\colwGCL{\wcom'}}
\newcommand{\wwcomi}{\colwGCL{\wcom_1}}
\newcommand{\wwcomii}{\colwGCL{\wcom_2}}
\newcommand{\wwcomtimes}[1]{\colwGCL{\wcom^{\colwGCL{#1}}}}
\newcommand{\cc}{\wcom}
\newcommand{\ccs}[1]{\cc_{#1}}

\newcommand{\cctick}{\wwcom\colweighted{'}}
\newcommand{\stmtAsgn}[2]{\ensuremath{{#1} \symAssign {#2}}}
\newcommand{\stmtDeclInit}[3]{\symVar~\ensuremath{{#1}\colon {#2} \symRasgn {#3}}}
\newcommand{\stmtRasgn}[2]{\ensuremath{{#1} \symRasgn {#2}}}
\newcommand{\stmtSeq}[2]{\ensuremath{{#1}\symSemi {#2}}}
\newcommand{\stmtIfStart}[1]{\ensuremath{\symIf~({#1})~\{ }}

\newcommand{\stmtHavoc}[1]{\ensuremath{\symHavoc~{#1}}}
\newcommand{\stmtAssert}[1]{\ensuremath{\symAssert~{#1}}}
\newcommand{\stmtAssume}[1]{\ensuremath{\symAssume~{#1}}}

\newcommand{\stmtDemonicStart}{\ensuremath{\symDemonic~\{ }}
\newcommand{\stmtElseStart}{\ensuremath{\symElse~\{ }}
\newcommand{\stmtAngelicStart}{\ensuremath{\symAngelic~\{ }}

\newcommand{\stmtDemonic}[2]{\ensuremath{\stmtDemonicStart{#1} \}~\stmtElseStart {#2} \}}}
\newcommand{\stmtWeighted}[2]{\ensuremath{\symWeightedBranching~\{ {#1} \}~\symElse~\{ {#2} \}}}
\newcommand{\BranchSymbol}{\splus}

\newcommand{\stmtAngelic}[2]{\ensuremath{\stmtAngelicStart{#1} \}~\stmtElseStart {#2} \}}}
\newcommand{\stmtTick}[1]{\ensuremath{\symTick~{#1}}}
\newcommand{\stmtValidate}{\ensuremath{\symValidate}}
\newcommand{\stmtWeigh}[1]{\ensuremath{\symWeigh~{#1}}}

\newcommand{\symDown}{}

\newcommand{\Havoc}[1]{\ensuremath{\symDown\symHavoc~{#1}}}
\newcommand{\Assert}[1]{\ensuremath{\symDown\symAssert~{#1}}}
\newcommand{\Assume}[1]{\ensuremath{\symDown\symAssume~{#1}}}
\newcommand{\Validate}{\ensuremath{\symDown\stmtValidate}}

\newcommand{\symUp}{\symStmt{\texttt{co}}}

\newcommand{\coHavoc}[1]{\ensuremath{\symUp\symHavoc~{#1}}}
\newcommand{\coAssert}[1]{\ensuremath{\symUp\symAssert~{#1}}}
\newcommand{\coAssume}[1]{\ensuremath{\symUp\symAssume~{#1}}}
\newcommand{\coValidate}{\ensuremath{\symUp\stmtValidate}}

\newcommand{\monus}{\mathbin{\dot-}}

\newcommand{\interpretsimpleState}[1]{\interpretsimple{#1}(\State)}
\newcommand{\interpretsimpleStateSubstBy}[3]{\interpretsimple{#1}(\State\substBy{#2}{#3})}

\newcommand{\evalState}[1]{\eval{#1}(\State)}
\newcommand{\evalStateSubstBy}[3]{\eval{#1}(\State\substBy{#2}{#3})}

\newcommand{\symWp}{\sfsymbol{wp}}
\newcommand{\symVc}{\sfsymbol{vp}}
\newcommand{\symVcB}{\sfsymbol{vc}}

\renewcommand{\wp}[1]{\symWp\llbracket{#1}\rrbracket}

\newcommand{\vc}[1]{\symVc\llbracket{#1}\rrbracket}
\newcommand{\vcB}[1]{\symVcB\llbracket{#1}\rrbracket}

\newcommand{\symWlp}{\sfsymbol{wlp}}
\newcommand{\wlp}[1]{\symWlp\llbracket{#1}\rrbracket}

\newcommand{\embed}[1]{\scalebox{0.85}{\textnormal{\textsf{?}}}({#1})}
\newcommand{\coEmbed}[1]{\scalebox{0.85}{$\neg$\textnormal{\textsf{?}}}({#1})}

\newcommand{\wpPhi}{\prescript{\symWp}{}{\Phi}}
\newcommand{\wlpPhi}{\prescript{\symWlp}{}{\Phi}}

\newcommand{\definedAs}{\triangleq}

\newcommand{\Spec}[2]{[#1, #2]}
\newcommand{\coSpec}[3]{{#1} : \Spec{#2}{#3}_{\scriptstyle\heylogeq}}
\newcommand{\stmtSpec}[3]{{#1} : \Spec{#2}{#3}}

\newcommand{\pChoice}[3]{\{\,#1\,\}~[#2]~\{\,#3\,\}}

\newcommand{\wpenc}[1]{\texttt{enc}_{\symWp}({#1})}

\newcommand{\procenc}[1]{\texttt{enc}({#1})}
\newcommand{\coprocenc}[1]{\texttt{enc}^{\texttt{co}}({#1})}

\newcommand{\specEnc}[1]{\texttt{specEnc}({#1})}
\newcommand{\specEncShort}{\texttt{specEnc}}
\newcommand{\coSpecEnc}[1]{\texttt{specEnc}^{\texttt{co}}({#1})}

\newcommand{\charEnc}[1]{\texttt{charEnc}({#1})}
\newcommand{\charEncShort}{\texttt{charEnc}}
\newcommand{\coCharEnc}[1]{\texttt{charEnc}^{\texttt{co}}({#1})}

\newcommand{\parkEnc}[1]{\texttt{parkEnc}({#1})}
\newcommand{\parkEncShort}{\texttt{parkEnc}}
\newcommand{\coParkEnc}[1]{\texttt{parkEnc}^{\texttt{co}}({#1})}

\newcommand{\kappaUnrollEnc}[1]{\kappa\texttt{-unrollEnc}({#1})}
\newcommand{\kappaUnrollEncShort}{\kappa\texttt{-unrollEnc}}
\newcommand{\unrollEnc}[2]{\texttt{{#1}-unrollEnc}({#2})}

\newcommand{\kappaEnc}[1]{\kappa\texttt{-Enc}({#1})}
\newcommand{\kappaEncShort}{\kappa\texttt{-Enc}}

\newcommand{\omegaEnc}[1]{\omega\texttt{-Enc}({#1})}
\newcommand{\omegaEncShort}{\omega\texttt{-Enc}}

\newcommand{\FLMonVar}{\ell}
\newcommand{\FLAllWords}{\Gamma^\infty}
\newcommand{\FLMonitor}[1]{\ensuremath{\mathsf{monitor}\!\left(#1\right)}}
\newcommand{\FLMonitorPrime}[1]{\ensuremath{\mathsf{monitor}'\!\left(#1\right)}}
\newcommand{\FLUpMonitor}[1]{\ensuremath{\mathsf{monitor}^{\uparrow}\!\left(#1\right)}}
\newcommand{\FLUpMonitorPrime}[1]{\ensuremath{\mathsf{monitor}^{\uparrow\prime}\!\left(#1\right)}}
\newcommand{\FLIn}[1]{\ensuremath{(\FLMonVar \subseteq #1)}}
\newcommand{\FLOut}[1]{\ensuremath{(#1 \subseteq \FLMonVar)}}
\newcommand{\TropicalTrans}[1]{\ensuremath{\mathtt{\trop}\!\left(#1\right)}}
\newcommand{\AccessControlTrans}[1]{\ensuremath{\mathtt{\clear}\!\left(#1\right)}}
\newcommand{\FLType}{\mathtt{\lang}}
\newcommand{\symWhyBool}{\ensuremath{\mathtt{\bool}}}
\newcommand{\WhyBool}[1]{\ensuremath{\symWhyBool\!\left(#1\right)}}
\newcommand{\symWhyTrans}{\ensuremath{\mathsf{tr}}}
\newcommand{\WhyTrans}[1]{\ensuremath{\symWhyTrans\!\left(#1\right)}}
\newcommand{\symVcBool}{\ensuremath{\mathsf{vc}_{\mathtt{\bool}}}}
\newcommand{\symVcTropical}{\ensuremath{\mathsf{vc}_{\mathtt{\trop}}}}
\newcommand{\symVcAccess}{\ensuremath{\mathsf{vc}_{\mathtt{\clear}}}}
\newcommand{\symVcWhy}{\ensuremath{\mathsf{vc}_{\mathtt{\why}}}}
\newcommand{\vcTropical}[1]{\ensuremath{\symVcTropical\!\left(#1\right)}}
\newcommand{\vcAccess}[1]{\ensuremath{\symVcAccess\!\left(#1\right)}}
\newcommand{\vcWhy}[1]{\ensuremath{\symVcWhy\!\left(#1\right)}}
\newcommand{\weightedProcHead}[3]{{#1}~\symProc~{\mathit{#2}}\,\texttt{(}{#3}\texttt{)}}
\newcommand{\weightedCoprocHead}[3]{{#1}~\symcoProc~{\mathit{#2}}\,\texttt{(}{#3}\texttt{)}}
\newcommand{\weightedProcOut}[1]{\hspace{4em}\symReturns~\texttt{(}{#1}\texttt{)}}
\newcommand{\WhyEval}[2]{\ensuremath{\mathsf{eval}_{#1}\!\left(#2\right)}}

\newcommand{\proofheading}[1]{\par\textit{#1.}\ }
\newenvironment{proofcases}{\begin{itemize}}{\end{itemize}}
\newcommand{\proofcase}[1]{\item[-]\textsc{Case} #1.}

\newcommand{\Set}[1]{\left\{#1\right\}}

\newcommand{\dotsqcap}{\sqcap}
\newcommand{\dotsqcup}{\sqcup}

\newcommand{\dotdualimpl}{\coimpl}
\newcommand{\dotcoimpl}{\dotdualimpl}

\newcommand{\myrel}[2]{%
  \mathrel{\overset{\raisebox{.6ex}{\makebox[0pt]{\mbox{\normalfont\tiny\sffamily #1}}}}{#2}}%
}

\newcommand{\itemExplain}[1]{%
  \hfill\text{\small\itshape\color{hintcolor}(#1)}%
}

\newcommand{\weightings}{\mathbb{W}}

\newcommand{\wlpchar}{\Psi_{\wla}}
\newcommand{\wpchar}{\Phi_{\wla}}

\newcommand{\wlpcharhhla}{\Psi_{\hhla}}
\newcommand{\wpcharhhla}{\Phi_{\hhla}}

\newcommand{\betawlpchar}{\wlpchar}

\newcommand{\varsetreach}[1]{\mathrm{Reach}_\beta(#1)}
\newcommand{\statesclosed}{\Gamma}

\newcommand{\leqannotate}[1]{\colheylo{{}^{\heyloleq}\!\!\!\! \fatslash \!\!\! \fatslash \,\,\, {#1}}}
\newcommand{\geqannotate}[1]{\colheylo{{}^{\heylogeq}\!\!\!\! \fatslash \!\!\! \fatslash \,\,\, {#1}}}
\newcommand{\wpannotate}[1]{\, \colheylo{{}^{\symWp}\!\!\!\! \fatslash \!\!\! \fatslash \,\,\, {#1}}}
\newcommand{\wlpannotate}[1]{\, \colheylo{{}^{\symWlp}\!\!\!\! \fatslash \!\!\! \fatslash \,\,\, {#1}}}

\newcommand{\tablespace}{4pt}

\newcommand{\totimpl}{\rightarrow_{\text{tot}}}
\newcommand{\totcoimpl}{\leftarrow_{\text{tot}}}
\newcommand{\matimpl}{\rightarrow_{\text{bool}}}
\newcommand{\matcoimpl}{\leftarrow_{\text{bool}}}

\definecolor{mathcolor}{RGB}{70, 130, 180}
\definecolor{progcolor}{RGB}{34, 139, 34}
\definecolor{logiccolor}{RGB}{220, 20, 60}
\definecolor{ivlcolor}{RGB}{255, 140, 0}
\definecolor{encodingcolor}{RGB}{138, 43, 226}
\definecolor{highlightcolor}{RGB}{178, 34, 34}

\newcommand{\freevars}[1]{\texttt{free}({#1})}

\newenvironment{halfboxl}{%
\noindent
\begin{minipage}[t]{0.4875\linewidth}
}{%
\end{minipage}\hspace{0.025\linewidth}}
\newenvironment{halfboxr}{%
\begin{minipage}[t]{0.4875\linewidth}
\vspace{0pt}
}{%
\end{minipage}\bigbreak}

\DeclareMathAlphabet{\mathpzc}{OT1}{pzc}{m}{it}
\DeclareMathOperator{\Ima}{Im}

\newcommand{\hvcpreweight}{\hhlc}
\newcommand{\hvcpostweight}{\hhlb}
\newcommand{\hpreweight}{\hhlc}

\newcommand{\kk}{\kappa}
\newcommand{\kindsym}{\Lambda}
\newcommand{\kindop}[1]{\kindsym^{#1}}
\newcommand{\kindopk}{\kindsym^{ \kk }}

\newcommand{\ww}{\omega}

\newcommand{\clf}{\wcom_{LF}}

\newcommand{\graphreach}{\wcom_{GR}}

\begin{document}

\title[Towards a Verification Infrastructure for \wGCL]{\texorpdfstring{Towards a Deductive Verification Infrastructure \\ for Weighted Programming}{Towards a Deductive Verification Infrastructure for Weighted Programming}}

\author{Emma Ahrens}
\authornote{Equal contribution.}
\email{ahrens@cs.rwth-aachen.de}
\orcid{0000-0002-6394-3351}
\affiliation{%
  \institution{RWTH Aachen University}
  \city{Aachen}
  \country{Germany}
}
\author{Samuel Rode}
\authornotemark[1]
\email{samuel.rode@rwth-aachen.de}
\orcid{0009-0002-3854-9309}
\affiliation{%
  \institution{RWTH Aachen University}
  \city{Aachen}
  \country{Germany}
}
\author{Philipp Schröer}
\authornotemark[1]
\email{phisch@cs.rwth-aachen.de}
\orcid{0000-0002-4329-530X}
\affiliation{%
  \institution{RWTH Aachen University}
  \city{Aachen}
  \country{Germany}
}
\author{Joost-Pieter Katoen}
\email{katoen@cs.rwth-aachen.de}
\orcid{0000-0002-6143-1926}
\affiliation{%
  \institution{RWTH Aachen University}
  \city{Aachen}
  \country{Germany}
}
\renewcommand{\shortauthors}{Ahrens, Rode, Schröer, and Katoen}

\begin{abstract}
    \emph{Weighted programs} extend guarded commands with trace weights drawn from a \emph{semiring}, or more generally a \emph{monoid-module}.
    Varying this algebra gives one programmatic syntax for a variety of quantitative and symbolic models.
    \emph{Weakest-preweighting semantics} provides a compositional basis for reasoning about those programs.

    We present a deductive verification framework based on a \emph{weighted assertion language} and an \emph{intermediate verification language}.
    Its weight domains are ordered structures with \emph{implication} and \emph{coimplication}, which let verification conditions express lower- and upper-bound obligations internally.
    We prove sound translations of core commands and reusable encodings for various proof rules applying to procedure calls and loops.
    To facilitate automation, we prove soundness of a quantifier elimination procedure for our assertion language.
    A prototype in the \Caesar{} verifier checks case studies for probabilistic queueing costs, recursive database provenance with cyclic dependencies, clearance bounds for networks of arbitrary size, and formal-language reasoning about lock-freedom of a compare-and-swap counter.
\end{abstract}

\keywords{weighted programming, deductive verification, intermediate verification languages, semirings, Heyting algebras}

\maketitle

\section{Introduction}

Quantitative reasoning is becoming increasingly relevant across many different branches of computer science. Recent work ranges from weighted transition systems and automata \cite{HansenLMP18, DrosteK21} to weighted logics \cite{BolligG09,cmp-lg-9705006} and declarative languages \cite{BatzGKKW22, BelleR20,Zilberstein25}. In database theory, provenance analysis uses
semiring semantics to investigate why queries
return the answers they do
\cite{provenance_foundational_paper,gradel2025provenance}.
Weighted semantics have
also been studied for Datalog, a programming language for data reasoning
\cite{BistarelliMS08}, while \cite{GuZ21} considers logic programming with
semiring semantics. Further developments include weighted rewriting
\cite{AhrensKGK25,AvanziniY25}, weighted pushdown automata \cite{Kuich97},
Guarded Kleene Algebra with Tests with semiring semantics \cite{KoeveringR025},
and Weighted NetKAT \cite{weightedNetKAT26}.
Overall, this amount of research raises a natural question: To what extent can we automate quantitative reasoning?

\paragraph{Weighted Programming.}
This work takes a step towards semi-automatic verification of weighted programs, as introduced by \citeauthor{BatzGKKW22}, which is a generalization of classical programs where executions accumulate weights from an algebraic domain.
This domain is usually a semiring, and in the general form a monoid-module with additive aggregation $\oplus$ and scalar multiplication $\otimes$.
The weights can express quantitative properties, such as costs, probabilities, and security clearances, as well as symbolic properties, such as formal trace languages.
\emph{Weigh statements} multiply weight along execution traces, while non-deterministic choice combines traces via the additive operation.

For example, the Boolean semiring, with addition $\oplus = \lor$ and scalar multiplication $\otimes = \land$, yields programs with Boolean assertions and angelic non-determinism.
The \emph{tropical monoid-module} $\modtrop$ over $\NatsX$ yields programs that optimize for minimal costs ($\oplus = \min$), where costs are additive ($\otimes = +$).
The \emph{Ski-rental problem} (\Cref{fig:intro-wgcl-ski-rental}) is a classical example of an optimization problem that can be modeled as a weighted program over the tropical monoid-module.
Given either the choice of renting skis for a day at a cost of 1, or buying skis for a cost of $y$, the goal is to minimize the total cost over $n$ days.
We can also represent a compare-and-swap (CAS) concurrent counter as a weighted program over the monoid-modules of formal languages, shown in \Cref{fig:intro-wgcl-lock-freedom}.
We want to show \emph{lock-freedom}, i.e. that there always exists a thread that makes progress.

\begin{figure*}[t]
    \centering
    \begin{subfigure}[c]{0.48\textwidth}
        \centering
        \[
        \begin{aligned}
            & \ASSIGN{c}{1} \fatsemi \\
            & \WHILE{n > 0} \\
            & \qquad \ASSIGN{n}{n - 1} \fatsemi \\
            & \qquad \{ \\
            & \qquad\qquad \WEIGH{c} \\
            & \qquad \} \mathrel{\BranchSymbol} \{ \\
            & \qquad\qquad \WEIGH{y} \fatsemi \ASSIGN{n}{0}
            & \} ~ \}%
        \end{aligned}
        \]
        \caption{Ski rental problem (tropical monoid-module). For $n$ days of skiing, the program models the choice between renting skis for a day at a cost of 1, or buying skis for a cost of $y$.}
        \label{fig:intro-wgcl-ski-rental}
    \end{subfigure}
    \hfill
    \begin{subfigure}[c]{0.48\textwidth}
        \centering
        \[
        \begin{aligned}
            & \WHILE{\true} \\
            & \qquad \BIGBRANCH{j=1}{N}{\ASSIGN{i}{j}} \fatsemi \\
            & \qquad \IF{\ell[i] = v} \\
            & \qquad\qquad \WEIGH{S_i} \fatsemi \ASSIGN{v}{\ell[i] + 1} \fatsemi \ASSIGN{\ell[i]}{v} \\
            & \qquad \ELSE \\
            & \qquad\qquad \WEIGH{F_i} \fatsemi \ASSIGN{\ell[i]}{v} & \} ~ \} \\
        \end{aligned}
        \]
        \caption{Compare-and-swap counter with $N$ threads (formal languages monoid-module). Each thread $i$ attempts to increment a shared counter $\ell[i]$ from a value $v$ to $v+1$. Failure weighs by $F_i$, success by $S_i$.}
        \label{fig:intro-wgcl-lock-freedom}
    \end{subfigure}

    \caption{Weighted programs for the ski rental problem and a compare-and-swap counter.}
    \label{fig:intro-wgcl-programs}
    \Description{Source wGCL programs for ski rental and lock-freedom of a compare-and-swap counter.}
\end{figure*}

Weakest preweightings generalize Dijkstra's weakest preconditions~\cite{Dijkstra75} and probabilistic weakest preexpectations~\cite{McIverM05} to weighted programs~\cite{BatzGKKW22}.
The transformer $\symWp$ proceeds backwards through the program and computes a \emph{weighting} that maps each program state to the aggregate weight of all executions starting in that state.
For the Boolean semiring, this states whether there exists an execution satisfying all assertions.
For the tropical monoid-module, this is the minimal cost among all non-deterministic execution traces.

\paragraph{Assertions, Assumptions, Proof Rules.}
To automate such reasoning, we extend weighted programs with \emph{verification statements} such as \emph{assert} and \emph{assume}.
These statements allow us to reify \emph{proof rules as programs}: reasoning principles for loops and procedure calls become program encodings that manipulate states and weights.
In the weighted setting, verification statements must be quantitative, i.e., they must support lower- and upper-bound reasoning.
For example, procedure calls can be represented by a program that asserts a preweighting $\hvcpreweight$, forgets (``havocs'') the values of variables $\varset$ modifiable by the callee, and assumes a postweighting $\hvcpostweight$.
Similarly, loop rules such as Park induction and $k$-induction~\cite{k_induction} can be encoded as programs that check quantitative invariants.

By reifying proof rules as programs, one can use a small, reusable backend for many verification techniques.
This is the idea of an \emph{intermediate verification language} (IVL), a simple programming language with verification statements and verification-condition generation.
Successful classical IVL-based infrastructures include Boogie~\cite{BarnettCDJL05} and Why3~\cite{FilliatreP13}.
A program in the source language is translated to the IVL, with the proof rules encoded in the IVL.
Here, generic transformations and optimizations can be applied to the IVL program, and then verification conditions are generated and discharged by an SMT solver.

\paragraph{Goals of this Paper.}
We want to build a verification infrastructure for weighted programming, supporting a wide variety of algebraic domains and verification techniques.
Weakest preweighting semantics exists for weighted programs, but we lack an assertion language, an IVL, and encodings of proof rules for weighted programming.
We therefore generalize the \Caesar{} verification infrastructure~\cite{SchroerBKKM23}, from probabilistic to weighted programming.

Because weighted programming is parametric over the concrete monoid-module, we need to instantiate the assertion language and IVL for different algebraic domains.
In particular, we need notions of quantitative conjunctions and implications, with support for both lower- and upper-bound reasoning.
We address this by parameterizing \wHeyLo{} and \wHeyVL{} over \emph{Heyting algebras}, which provide the conjunction and implication operations,\footnote{Although \Caesar{}'s \HeyLo{} and \HeyVL{} are named after Heyting algebras, prior work only supported the setting specific to expected-value reasoning, not general Heyting algebras~\cite{SchroerBKKM23}.} and further generalizing encodings of proof rules.
Another challenge is automation.
The \wHeyVL{} encodings of proof rules introduce quantitative infimum and supremum quantifiers, which are difficult for SMT solvers to handle directly.
While complete quantifier elimination exists for restricted quantitative assertion fragments~\cite{DBLP:conf/fossacs/BatzKO25}, we also have structural specifications and domain-specific expressions outside of such a fragment. 
We therefore provide a best-effort quantifier-elimination pass for \wHeyLo{} that eliminates supported quantifiers and leaves the rest unchanged.

\begin{figure}[t]
    \centering
    \begin{tikzpicture}[
        box/.style={rectangle, rounded corners=3pt, minimum width=2.5cm, minimum height=0.7cm, text centered, font=\scriptsize\sffamily, text width=2.5cm, inner sep=3pt, very thick},
        arrow/.style={-{Stealth[length=2.5mm]}, thick, draw=Charcoal!50},
        labelarrow/.style={arrow, font=\scriptsize\sffamily, text=Charcoal!50},
        pipelinearrow/.style={ %
            draw=Platinum, font=\scriptsize\sffamily, text=Platinum, line width=20pt},
        highlightarrow/.style={-{Stealth[length=3mm]}, line width=2pt, draw=Charcoal},
        backgroundbox/.style={rectangle, rounded corners=6pt, draw=ivlcolor!30, fill=ivlcolor!5, thick, dashed, inner sep=9pt}
    ]

        \def\height{1}
        \def\width{3.5}

        \node[box, fill=moduleColor!75, opacity=0.75] (semirings) at (1*\width, 3.5*\height)    {$\omega$-Continuous \\Monoid-Modules};
        \node[box, fill=heytingColor!50, draw=heytingColor!75, ]  (heyting)   at (1*\width, 2*\height)  {Heyting Algebras};

        \node[box, fill=weightedColor!75, opacity=0.75] (wp) at (2*\width, 3.5*\height) {Weakest Preweightings};
        \node[box, fill=prepostColor!50, draw=prepostColor!75]  (vp)     at (2*\width, 2*\height)   {\wHeyLo: Verification\\Preweightings};

        \node[box, fill=neutralColor!50, draw=neutralColor!75]  (qe)     at (2.5*\width, 0.75*\height)   {Quantifier\\Elimination};
        \node[font=\scriptsize\sffamily] (smt)    at (3.5*\width, 0.75*\height)  {SMT Solver};

        \node[box, fill=wGCLColor!75, opacity=0.75] (wgcl)   at (3*\width, 3.5*\height)     {\wGCL: Weighted Programming Language};
        \node[box, fill=heyvlColor!50, draw=heyvlColor!75]   (wheyvl) at (3*\width, 2*\height)   {\wHeyVL: Verification Language};

        \node[box, fill=neutralColor!50, draw=neutralColor!75] (loops) at (4*\width, 2.75*\height) {Proof Rules \\[2pt]
            {\tiny (Specification Statements, local Park Induction, local $\kappa$-Induction, $\omega$-Invariants)}
        };

        \draw[labelarrow] (semirings) -- (wp)
            node[midway, above, yshift=0.3cm] {semantics of};
        \draw[labelarrow] (wp) -- (wgcl)
            node[midway, above, yshift=0.3cm] {semantics of};
        \draw[labelarrow] (loops) -- (wgcl.east)
            node[midway, above, xshift=0.7cm] {reasons about};

        \draw[labelarrow] (semirings) -- (heyting)
            node[midway, right, align=left] {adds \\ implications};
        \draw[labelarrow] (wp) -- (vp)
            node[midway, right, align=left] {formalisation};
        \draw[labelarrow] (wgcl) -- (wheyvl)
            node[midway, right, align=left] {encoded in};
        \draw[labelarrow] (loops) -- (wheyvl.east)
            node[midway, below, xshift=0.7cm] {encoded in};

        \draw[labelarrow] (heyting) -- (vp)
            node[midway, above] {};
        \draw[labelarrow] (vp) -- (wheyvl)
            node[midway, above] {};

        \begin{scope}[on background layer]
            \draw[pipelinearrow] plot [smooth, tension=0.5]
            coordinates {(wp.center) (wgcl.center) ([yshift=0pt]loops.north) (wheyvl.north)
                (vp.south) (qe.north west) ([xshift=10pt]smt.east) };
        \end{scope}
    \end{tikzpicture}

    \caption{Overview of our verification infrastructure for weighted programming. The left side shows the algebraic structures and assertion language, while the right side shows the IVL and verification pipeline.}
    \label{fig:overview}
\end{figure}

\paragraph{Overview of Our Approach.}
The first layer in \Cref{fig:overview} is the source language: monoid-modules are the basis for the weakest preweighting semantics for the $\wGCL$ language of weighted programs.
The second layer is our verification infrastructure: based on Heyting algebras for semantics, we define the \wHeyLo{} assertion language.
$\wGCL{}$ programs are translated to $\wHeyVL{}$ programs, using encodings of proof rules for procedure calls and loops.
The semantics of $\wHeyVL{}$ is defined via verification preweightings in $\wHeyLo{}$, and approximates the original $\wGCL{}$ semantics (soundness).
We provide a quantifier elimination procedure for \wHeyLo{} to enable SMT-based reasoning.

\paragraph{Contributions.}
We develop a deductive verification infrastructure for weighted programming.
\begin{itemize}
  \item We introduce \wHeyLo, a weighted assertion language parameterized by bi-Heyting algebras, and instantiate it for totally ordered and Boolean algebras, and the \emph{Why} monoid-module.
  \item We define \wHeyVL, an intermediate verification language for lower- and upper-bound reasoning.
  \item For proof rules, we encode specification statements, Park induction, $\kappa$-induction, and $\omega$-invariants as \wHeyVL programs and address local variables in our soundness proofs.
  \item We formalize a best-effort quantifier-elimination pass for \wHeyLo that preserves validity/covalidity on arbitrary verification conditions and is complete on identified fragments.
  \item We prototype the approach in \Caesar{} and evaluate it on case studies spanning expected costs, provenance, security clearances, and formal-language reasoning about lock-freedom.
\end{itemize}

\section{Weighted Programming and Reasoning via Weakest Preweightings}

\label{sec:weighted_programming}

We define the $\wGCL$ language of weighted programs, named for a \emph{weighted} version of the probabilistic $\pGCL$ language~\cite{McIverM05}, in turn named after Dijkstra's guarded command language $\GCL$~\cite{Dijkstra75}.
$\wGCL$ is based on monoid-modules, providing a general algebraic structure for trace weights.

\subsection{Monoid-Modules}

Monoid-modules consist of a monoid $\Monoid$ and a module $\Module$ over that monoid, and they generalize semirings, which are more commonly used in the literature on weighted automata and optimization.

\begin{definition}[Monoid-Module] \label{def:monoid-module}
    A \emph{monoid} $\Monoid = (\monoid, \odot, \sone)$ is a set $\monoid$ together with an associative \emph{multiplication} $\odot: \monoid\times \monoid\to \monoid$ and the corresponding neutral element $\sone \in \monoid$.

    A \emph{module $\Module = (\module, \oplus, \snull, \otimes)$ over the monoid} $\Monoid$ consists of a commutative monoid $(\module, \oplus, \snull)$ together with a \emph{scalar multiplication} $\otimes: \monoid\times \module \to \module$ such that
    \begin{itemize}
        \item multiplication is associative and distributive, hence for all $v, w \in \monoid$ and $x, y \in \module$ it is
        $
            (v \odot w) \otimes x = v \otimes (w \otimes x) \quad \text{and} \quad v \otimes (x \oplus y) = (v \otimes x) \oplus (v \otimes y),
        $

        \item the neutral element $\sone$ is also neutral w.r.t. the scalar multiplication and the neutral element $\snull$ annihilates, hence for all $v \in \monoid$ and $x \in \module$ it is
        $ \sone \otimes x = x \quad \text{and} \quad v \otimes \snull = \snull. $
    \end{itemize}
\end{definition}

With different choices of monoid-modules, we recover classical and probabilistic programming.
For instance, the \emph{angelic Boolean} monoid-module embeds Boolean best-case reasoning:
\[
    \textsc{Angelic Boolean:}\qquad \monbool = \monboollong \quad \text{and} \quad \modbool = \modboollong~,
\]
where at least one trace ($\oplus = \lor$) is true ($\odot = \otimes = \land$).
We embed probabilistic reasoning via:
\[
    \textsc{Probability:}\qquad \monprob = \monproblong \quad \text{and} \quad \modprob = \modproblong~,
\]
i.e. the monoid multiplies (finite) non-negative reals via the standard multiplication and the module adds them via the standard addition, where $\sone = 1$ and $\snull = 0$\footnote{Throughout this paper, we use the convention $0 \cdot \infty = 0$ and $x \cdot \infty = \infty$ for $x>0$.}.
Each step of the execution trace may be weighted by a finite value from $\PosReals$, and the total weight may even be infinite (from $\PosRealsInf$).

The \emph{tropical} monoid-module is often used to reason about optimization problems, where different actions are possible, choosing between actions corresponds to taking the action with the \emph{minimal} costs, and taking several actions corresponds to taking the \emph{sum} of their costs.
\[
    \textsc{Tropical:}\qquad \montrop = (\NatsX, +, 0) \quad \text{and} \quad \modtrop = (\NatsX, \min, \infty, +)~,
\]

One can also capture traces of program executions as words over an alphabet $\Gamma$.
The \emph{formal languages} monoid-module allows scalar multiplication with (finite) words from $\Gamma^*$, with the module operating on possibly infinite words from $2^{\Gamma^{\infty}}$.
With $\cdot$ denoting concatenation of words, and $\epsilon$ being the empty word:
\[
    \textsc{Formal languages:}\qquad \monlang = (\Gamma^*, \cdot, \epsilon) \quad \text{and} \quad \modlang = (2^{\Gamma^{\infty}}, \cup, \emptyset, \cdot)~.
\]
The additive operation $\oplus$ induces the following binary relation.

\begin{definition}[Natural Order]
    The \emph{natural order} on a monoid-module $\moduledef$ is a binary relation $\sleq\; \subseteq \module\times \module$, where
    $x \sleq y$ if and only if there exists an element $z\in \module$ with $x \oplus z = y. $
\end{definition}

The natural order on the probability module $\modprob$ is the standard order on the reals, since $a \sleq b$ \emph{if and only if} $a + c = b$ for some $c \in \PosRealsInf$ \emph{if and only if} $a \leq b$ for all $a, b \in \PosRealsInf$.
On the tropical module $\modtrop$, the natural order $\sleq$ is the reversed standard order $\geq$ on the extended natural numbers $\NatsX$.
For $x, y\in \NatsX$, $x\sleq y$ \emph{if and only if} there exists an element $z\in \NatsX$ such that $x \oplus z = x \min z = y$ \emph{if and only if} $x \geq y$.
It follows, e.g., that $\infty \sleq 20 \sleq 0$, $\infty$ is the smallest element, and $0$ the largest.

\begin{definition}[$\omega$-Continuous Monoid-Module]
    A module $\moduledef$ over the monoid $\monoiddef$ is $\omega$-continuous, if
    \begin{itemize}
        \item the natural order is a partial order (reflexive, transitive, and antisymmetric), a least element $\bot \in \module$ exists, and every $\omega$-ascending chain $(x_i)_{i\in\Nats} \subseteq \module$ has a supremum $\bigsqcup_{i\in\Nats} x_i \in \module$,\footnote{This defines exactly an $\omega$-complete partial order.}

        \item scalar multiplication and addition are $\omega$-continuous; for every $\omega$-ascending chain $(x_i)_{i\in\Nats} \subseteq \module$ and element $y\in \monoid$, it is
        \begin{gather*}
            y \otimes \bigsqcup_{i\in\Nats} x_i = \bigsqcup_{i\in\Nats} (y \otimes x_i) \qquad \text{and} \qquad
            y \splus \bigsqcup_{i\in\Nats} x_i = \bigsqcup_{i\in\Nats} (y \splus x_i).
        \end{gather*}
    \end{itemize}
\end{definition}

We want to support both lower- and upper-bound reasoning, so we require \emph{$\omega$-bicontinuous monoid modules}, which satisfy both $\omega$-continuity and $\omega$-cocontinuity.
The latter is continuity with respect to the reversed natural order $\sgeq$\footnote{The reversed natural order $\sgeq$ is defined as $x \sgeq y$ if and only if $y \sleq x$ for all $x, y \in \Module$.}.
All our monoid-modules are $\omega$-bicontinuous (\Cref{app:case-studies}).

\subsection{Syntax and Semantics of \texorpdfstring{$\wGCL$}{wGCL}}

The syntax of $\wGCL$ is standard imperative programming with a \emph{weigh} statement $\WEIGH{\mexpr}$, which weighs branches with the monoid multiplication $\stimes$, and a \emph{branching} statement $\BRANCH{\wwcomi}{\wwcomii}$.

\begin{definition}[Syntax of $\wGCL$]
    \label{def:wGCL_syntax}
    For an $\omega$-continuous module $\Module$ over the monoid $\Monoid$ and a countable set of variables $\Vars$ ranging over the natural numbers $\Nats$, the set $\wGCL$ of weighted programs is constructed by the grammar in \Cref{fig:wGCL_syntax}.
\end{definition}

\begin{figure*}[t]
    \centering
    {
    \[
    \renewcommand{\arraystretch}{1.25}%
    \setlength{\arraycolsep}{3pt}
        \begin{array}{@{}cc@{}}
            \begin{array}[t]{@{}rcl@{}}
                \aexpr & \rightarrow
                    & n
                    \mid x
                    \mid \aexpr + \aexpr
                    \mid \aexpr \cdot \aexpr
                    \mid \aexpr \monus \aexpr \\
                \bexpr & \rightarrow
                    & \aexpr < \aexpr
                    \mid \neg \bexpr
                    \mid \bexpr \wedge \bexpr \\
                \mexpr & \rightarrow
                    & v
                    \mid \mexpr \sdot \mexpr
                    \mid \booth{\bexpr}\cdot \mexpr
                    \mid \mexpr^\aexpr
            \end{array}
            \hspace{2em}&\hspace{2em}
            \begin{array}[t]{@{}rcl@{}}
                \wwcom & \rightarrow
                    & \ASSIGN{x}{\aexpr}
                    \mid \wwcomi\semcol \wwcomii
                    \mid \WEIGH{\mexpr} \\
                  & \mid
                    & \ITE{\bexpr}{\wwcomi}{\wwcomii} \\
                  & \mid
                    & \WHILEDO{\bexpr}{\wwcomi}
                    \mid \BRANCH{\wwcomi}{\wwcomii}
            \end{array}
        \end{array}
    \]
    }

    \caption{Syntax of expressions and \wGCL programs, with $x \in \Vars$, $n \in \Nats$, $v \in \monoid$, $\wwcomi,\wwcomii \in \wGCL$, $\aexpr \in \ArithExp$, $\bexpr \in \BExp$, and $\mexpr \in \MonExp$.}
    \label{fig:wGCL_syntax}
    \Description{Grammar for wGCL commands and arithmetic, Boolean, and monoid expressions.}
\end{figure*}

Programs in $\wGCL$ operate on \emph{program states} $\Sigma = \{ \sigma: \Vars \to \Nats \mid \supp(\sigma) \text{ is finite} \}$, i.e. functions mapping variables to natural numbers with finite support.
The semantics $\sem{\aexpr}: \Sigma \to \Nats$ for arithmetic expressions and $\sem{\bexpr}: \Sigma \to \{0,1\}$ for Boolean expressions are standard (c.f. \Cref{fig:semantics_arith_bool}).
In particular, $\monus$ denotes the truncated subtraction, which is defined as $\interpretsimple{\aexpr \monus \aexpr'}(\sigma) = \max\{0, \interpretsimple{\aexpr}(\sigma) - \interpretsimple{\aexpr'}(\sigma)\}$ for all $\sigma\in\Sigma$.
Monoid expressions $\mexpr$ describe elements of monoid $\Monoid$ and have semantics $\sem{\mexpr}: \Sigma \to \Monoid$ with $\sem{v}(\sigma) = v$, $\sem{\mexpr \sdot \mexpr'}(\sigma) = \sem{\mexpr}(\sigma) \sdot \sem{\mexpr'}(\sigma)$.
$\booth{\bexpr}\cdot \mexpr$ is a conditional monoid expression, which evaluates to $\sem{\mexpr}(\sigma)$ if $\sem{\bexpr}(\sigma)$ is true and to $\sone$ otherwise.
The expression $\mexpr^\aexpr$ denotes the $\sem{\aexpr}(\sigma)$-fold monoid multiplication of $\sem{\mexpr}(\sigma)$.

A program execution starts in some state $\sigma \in \Sigma$ with the neutral weight $\sone \in \Monoid$.
With the assignment $\ASSIGN{x}{\aexpr}$, we \emph{update} a variable $x\in\Vars$ to an arithmetic expression $\aexpr$ in a given program state $\sigma\in\Sigma$ by
\(\sigma[x \mapsto \aexpr] = \lambda y.\;\text{if } y = x \text{ then } \interpretsimple{\aexpr}(\sigma) \text{ else } \sigma(y).\)
When weighing with $\WEIGH{\mexpr}$, the previous weight is multiplied by $\sem{\mexpr}(\sigma)$ for the program state $\sigma \in \Sigma$.
The branching statement $\BRANCH{\wwcomi}{\wwcomii}$ combines the weights of the branches $\wwcomi$ and $\wwcomii$ by the module addition $\oplus$.

\begin{example}
    Recall the ski rental example (\Cref{fig:intro-wgcl-ski-rental}), defined on the tropical module $\modtrop$ over the monoid $\montrop$.
    Each iteration of the loop in state $\sigma$ and with current weight $m \in \Monoid$ chooses the minimum cost ($\oplus = \min$) of either renting skis for a day at a cost of 1  (left branch) or buying skis for a cost of $c$ (right branch).
    When renting, the new weight is $1 \otimes w = 1 + w$ and the new state is $\sigma\substBy{x}{x + 1}$.
    When buying, the new weight is $c \otimes w = c + w$ with new state $\sigma\substBy{x}{0}$.

    For the CAS counter example (\Cref{fig:intro-wgcl-lock-freedom}), we use the monoid-module of formal languages $\modlang$ over the alphabet $\Set{S_i, F_i \mid i \in \Nats}$.
    Each of the two if-branches corresponds to either successful or unsuccessful CAS operations, and we weigh the branches with either the success word $S_i$ or failure word $F_i$ for thread $i$.
    Therefore, the weight of a program execution is an infinite word representing the sequence of successful and unsuccessful CAS operations.
    Full operational semantics can be found in \Cref{app:operational-semantics}.
\end{example}

\subsection{Weakest (Liberal) Preweightings}

We reason about weights of program executions via \emph{weightings}.

\begin{definition}[Weighting] \label{def:weighting}
    For a  monoid-module $\Module$ and a set of variables $\Vars$, a \emph{weighting} $\wla \in \Weighting = \left\{\; \Sigma\to\module \;\right\}$ maps each program state to an element, called \emph{weight}, in the module.
\end{definition}
For a fixed $\omega$-bicontinuous module $\Module$ over the monoid $\Monoid$, the set $\Weighting$ of weightings is also an $\omega$-bicontinuous module over $\Monoid$, since all properties are lifted pointwise.
The \emph{Iverson bracket} $[\bexpr]$ filters weightings to states satisfying the Boolean expression $\bexpr$.
Let $\wla\in\Weighting$ and $\sigma\in\Sigma$.
Then:
\[
    ([\bexpr]\cdot \wla)(\sigma) = \begin{cases}
        \wla(\sigma), & \text{ if } \sem{\bexpr}(\sigma), \\
        \snull, & \text{ otherwise.}
    \end{cases}
\]
Substitution of a variable $x\in\Vars$ by an arithmetic expression $\aexpr$ in a weighting $\wla\in\Weighting$ is given by $\wla\substBy{x}{\aexpr} = \lambda \sigma. \wla(\sigma\substBy{x}{\sigma(\aexpr)}).$
We reason about weighted programs using \emph{weakest preweightings}~\cite{BatzGKKW22}, which are the generalization of classical weakest preconditions~\cite{Dijkstra75} and probabilistic weakest preexpectations~\cite{McIverM05} to the setting of weighted programs.
They are defined recursively over the program structure, with both a non-liberal and liberal variant.

\begin{definition}[Weakest Preweightings]
    \label[definition]{def:wp_wlp}
    For an $\omega$-continuous module $\Module$ over monoid $\Monoid$ and program $\wwcom \in \wGCL$, \emph{the weakest (liberal) preweightings} $\wp{\wwcom},\wlp{\wwcom} \colon (\Weighting \to \Weighting)$ are defined in \Cref{fig:weakest_pre}.
\end{definition}

\begin{figure}
    \centering
    \small
    \renewcommand{\arraystretch}{1.15}
    \setlength{\tabcolsep}{4pt}
    \begin{tabular}
        {
            >{\raggedright\arraybackslash}m{0.24\linewidth}
            !{\color{Platinum}\vrule width 1.5pt}
            >{\raggedright\arraybackslash}m{0.335\linewidth}
            !{\color{Platinum}\vrule width 2.5pt}
            >{\raggedright\arraybackslash}m{0.335\linewidth}
        }
        \toprule
        $\wwcom$
            & $\wp{\wwcom}(\wla)$
            & $\wlp{\wwcom}(\wla)$
        \\
        \midrule
        \cellcolor{basicBackground!2}$\ASSIGN{x}{\aexpr}$
            & \cellcolor{dualColor!5}$\wla\substBy{x}{\aexpr}$
            & \cellcolor{primalColor!5}$\wla\substBy{x}{\aexpr}$
        \\
        \cellcolor{basicBackground!2}$\wwcomi \semcol \wwcomii$
            & \cellcolor{dualColor!5}$\wp{\wwcomi}(\wp{\wwcomii}(\wla))$
            & \cellcolor{primalColor!5}$\wlp{\wwcomi}(\wlp{\wwcomii}(\wla))$
        \\
        \cellcolor{basicBackground!2}$\ITE{\bexpr}{\wwcomi}{\wwcomii}$
            & \cellcolor{dualColor!5}
              $[\bexpr]\cdot \wp{\wwcomi}(\wla) \oplus [\neg\bexpr]\cdot \wp{\wwcomii}(\wla)$
            & \cellcolor{primalColor!5}
              $[\bexpr]\cdot \wlp{\wwcomi}(\wla) \oplus [\neg\bexpr]\cdot \wlp{\wwcomii}(\wla)$
        \\
        \cellcolor{basicBackground!2}$\BRANCH{\wwcomi}{\wwcomii}$
            & \cellcolor{dualColor!5}
              $\wp{\wwcomi}(\wla) \oplus \wp{\wwcomii}(\wla)$
            & \cellcolor{primalColor!5}
              $\wlp{\wwcomi}(\wla) \oplus \wlp{\wwcomii}(\wla)$
        \\
        \cellcolor{basicBackground!2}$\WEIGH{\mexpr}$
            & \cellcolor{dualColor!5}$\mexpr \otimes \wla$
            & \cellcolor{primalColor!5}$\mexpr \otimes \wla$
        \\
        \cellcolor{basicBackground!2}$\WHILEDO{\bexpr}{\wwcomi}$
            & \cellcolor{dualColor!5}
              $\lfp{\wX}{\underbrace{[\neg\bexpr]\cdot \wla \oplus [\bexpr]\cdot \wp{\wwcomi}(\wX)}_{=\; \wpchar(\wX)}}$
            & \cellcolor{primalColor!5}
              $\gfp{\wX}{\underbrace{[\neg\bexpr]\cdot \wla \oplus [\bexpr]\cdot \wlp{\wwcomi}(\wX)}_{=\; \wlpchar(\wX)}}$
        \\
        \bottomrule
    \end{tabular}

    \caption{The iterative definition of the weakest (liberal) preweightings $\sfsymbol{w(l)p}\llbracket \wwcom\rrbracket (\wla)$  for $\wwcom \in \wGCL$ and $\wla\in\weightings$.}
    \label{fig:weakest_pre}
    \Description{The rules defining $\symWp$ ($\symWlp$).}
\end{figure}

The rules in \Cref{fig:weakest_pre} define $\wp{\wwcom}(\wla)$ for a program $\wwcom \in \wGCL$ and weighting $\wla \in \Weighting$ by propagating $\wla$ backwards through the program.
Because $\wla$ is evaluated at the end of the program, it is called a \emph{postweighting}, and $\wp{\wwcom}(\wla)$ is called a \emph{preweighting}.

Assignments substitute the assigned expression into the postweighting, while sequential composition composes transformers in reverse program order, yielding $\wp{\wwcomi}(\wp{\wwcomii}(\wla))$ for $\wwcomi \semcol \wwcomii$.
Conditionals select the corresponding branch pointwise: states satisfying $\bexpr$ use $\wp{\wwcomi}(\wla)$, and states satisfying $\neg\bexpr$ use $\wp{\wwcomii}(\wla)$.
Branching combines the two branch preweightings by the module addition $\oplus$, and a weighing statement $\WEIGH{\mexpr}$ scalar-multiplies the postweighting by the monoid expression $\mexpr$.
Loops $\WHILEDO{\bexpr}{\wwcomi}$ are interpreted as the least fixed point of the characteristic function $\wpchar(\wX) = [\neg\bexpr]\cdot \wla \oplus [\bexpr]\cdot \wp{\wwcomi}(\wX)$, which exists by $\omega$-continuity and Kleene's fixed-point theorem~\cite{BatzGKKW22}.
The liberal transformer $\wlp{\wwcom}(\wla)$ is analogous, except that loops use the greatest fixed point of the function $\wlpchar(\wX) = [\neg\bexpr]\cdot \wla \oplus [\bexpr]\cdot \wlp{\wwcomi}(\wX)$.

The resulting difference in both transformers is that $\wp{\wwcom}(\wla)$ collects the weights of all terminating executions with $\wla$ evaluated the final state, while $\wlp{\wwcom}(\wla)$ additionally collects the weights of all non-terminating executions.

\begin{example}
    Recall the Ski rental problem (\Cref{fig:intro-wgcl-ski-rental}) over the tropical monoid-module $\modtrop$.
    For the loop body $\wwcoma$, we obtain the weakest preweighting $\wp{\wwcoma}(\wla) = 1 \otimes \wla\substBy{n}{n-1} \oplus c \otimes \wla\substBy{n}{0}$.
    The characteristic function of the loop is $\wpchar(\wX) = [n = 0] \cdot \wla \oplus [n > 0] \cdot (1 \otimes \wX\substBy{n}{n-1} \oplus c \otimes \wX\substBy{n}{0})$.
    One can prove that the least fixed point evaluates to $n \oplus y$.
\end{example}

\begin{theorem}[\cite{BatzGKKW22}] \label{thm:weakest_pre}
    The weighting transformer $\wp{\wwcom}(\cdot)$ is a well-defined, $\omega$-continuous function. In particular, the least fixed point of the function $\Phi$ exists for all $\wla\in \Weighting$ and $\wwcom \in \wGCL$.
\end{theorem}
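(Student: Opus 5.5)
We argue by structural induction on $\wwcom \in \wGCL$, proving simultaneously that $\wp{\wwcom}$ is a well-defined map $\Weighting \to \Weighting$ and that it is $\omega$-continuous. By continuity we mean: for every $\omega$-chain $\wla_0 \sleq \wla_1 \sleq \cdots$ in $\Weighting$ (ordered pointwise by the natural order), we have $\wp{\wwcom}(\bigsqcup_i \wla_i) = \bigsqcup_i \wp{\wwcom}(\wla_i)$. Since this identity forces monotonicity, we also carry monotonicity through the induction as a by-product.

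First we establish that $\Weighting$ is an $\omega$-cpo with least element $\lambda\sigma.\bot$, and that $\oplus$ and $\otimes$ on $\Weighting$ are $\omega$-continuous. These facts lift pointwise from $\Module$, because suprema of chains of weightings are computed pointwise. Two auxiliary observations follow.
\begin{itemize}
  \item Continuity of $\oplus$ in each argument separately yields joint continuity along chains: $\bigsqcup_i(x_i \oplus y_i) = \bigsqcup_i x_i \oplus \bigsqcup_i y_i$. This uses the standard diagonal argument together with monotonicity of $\oplus$, which holds because the natural order is defined through $\oplus$.
  \item The Iverson filter $[\bexpr]\cdot(-)$ and substitution $(-)\substBy{x}{\aexpr}$ are continuous, since both act pointwise: one picks either the value $\wla(\sigma)$ or $\snull$, the other precomposes with a state update.
\end{itemize}

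The base and non-loop cases then follow directly. Assignment is substitution. Weighing is $\lambda\sigma.\,\sem{\mexpr}(\sigma)\otimes\wla(\sigma)$, which is continuous by continuity of $\otimes$ in the module argument. Sequential composition is a composition of continuous maps, hence continuous by the induction hypothesis. Conditionals and branching combine continuous maps via filters and $\oplus$, so they are continuous by joint continuity of $\oplus$.

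The loop $\WHILEDO{\bexpr}{\wwcomi}$ is the main obstacle. Fix $\wla$. By the induction hypothesis, $\wpchar(\wX) = [\neg\bexpr]\cdot\wla \oplus [\bexpr]\cdot\wp{\wwcomi}(\wX)$ is $\omega$-continuous in $\wX$. Kleene's fixed-point theorem on the $\omega$-cpo $\Weighting$ therefore gives $\lfp{\wX}{\wpchar(\wX)} = \bigsqcup_n \wpchar^n(\bot)$; this settles well-definedness and existence of the least fixed point. For continuity in $\wla$, we write $\Phi(\wla,\wX)$ and note that it is continuous in each argument separately: in $\wla$ via the filter and $\oplus$. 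By induction on $n$, using joint continuity, each iterate $\wla \mapsto \Phi^n_{\wla}(\bot)$ is continuous. It remains to exchange suprema:
\[
\bigsqcup_i \bigsqcup_n \Phi^n_{\wla_i}(\bot) \;=\; \bigsqcup_n \bigsqcup_i \Phi^n_{\wla_i}(\bot).
\]
This exchange is valid because the double family is monotone in both indices, and both sides equal the supremum of the diagonal chain. The delicate point is that suprema of such doubly indexed families exist in an $\omega$-cpo. They do, because the diagonal is cofinal, so no directed-completeness beyond $\omega$-chains is needed. The corresponding statement about $\omega$-cocontinuity and $\symWlp$ would follow dually, but it is not required here.
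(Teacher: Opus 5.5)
Your proof is correct and takes essentially the approach the paper relies on. The paper gives no proof of its own: it cites Batz et al.\ and notes only that the loop's least fixed point exists by $\omega$-continuity and Kleene's fixed-point theorem, which is exactly your structural induction with pointwise lifting to weightings and Kleene for loops. Your extra step, showing the least fixed point is continuous in the postweighting by exchanging the two suprema along the cofinal diagonal, is sound and needs nothing beyond $\omega$-chains.
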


Analogously, we define weakest \emph{liberal} preweightings for $\omega$-cocontinuous modules $\Module$, see \Cref{def:wlp} and \Cref{thm:weakest_liberal_pre} on \cpageref{def:wlp,thm:weakest_liberal_pre}.

\section{Heyting (Co-)Implications for Totally Ordered and Boolean Algebras}
\label{sec:heyting}

Central to the specifications and proofs in the weakest preweighting calculi are \emph{comparisons} between weightings.
In addition, we need to express assumptions during proofs, e.g. assuming that a certain precondition holds, or that some value is bounded by a certain threshold.
We use \emph{Heyting algebras} to internalize the comparison and assumption operations in the logic,  which give us quantitative \emph{implications} and dual \emph{co-implications}.
Although \HeyLo{} and \HeyVL{} of prior work~\cite{SchroerBKKM23} define quantitative (co-)implications, they only do so for the specific setting of probabilistic programs (corresponding to $\modprob$).
Here, we generalize to more general $\omega$-bicontinuous monoid-modules.

In the following, we define instances of Heyting algebras for (1) totally ordered lattices, (2) Boolean algebras, and (3) the specific setting of the why semiring $\modwhy$.
\Cref{fig:modules} gives an overview of the different monoid-modules and their corresponding (co-)implications.

\begin{figure}
    \centering
    \begin{adjustbox}{max width=\linewidth,max totalheight=0.76\textheight,center}
    \begin{minipage}{\linewidth}
    \centering

    \begin{tikzpicture}[overlay, remember picture]
        \node[xshift=-0.5em, yshift=-0.2*\baselineskip] (adown) at (pic cs:a) {};
        \node[xshift=-0.5em, yshift=0.2*\baselineskip] (bup) at (pic cs:b) {};
        \node[xshift=-0.5em, yshift=-0.2*\baselineskip] (bdown) at (pic cs:b) {};
        \node[xshift=-0.5em, yshift=0.2*\baselineskip] (cup) at (pic cs:c) {};
        \node[xshift=-0.5em, yshift=-0.2*\baselineskip] (cdown) at (pic cs:c) {};
        \node[xshift=-1em, yshift=0.2*\baselineskip] (dup) at (pic cs:d) {};
        \node[xshift=-1em, yshift=-0.2*\baselineskip] (ddown) at (pic cs:d) {};
        \node[xshift=-0.3em, yshift=-1.2*\baselineskip] (eup) at (pic cs:e) {};

        \draw [decorate, decoration = {brace, mirror, amplitude=\tablespace},
            transform canvas={yshift=\baselineskip, xshift=-3*\tablespace}, color=hintcolor, thick]
            (adown) -- (bup)
            node[midway, above, xshift=-\tablespace, yshift=\baselineskip, rotate=90]
            {$\substack{\text{\textbf{weightings} } \weightings:\; \Sigma \to \Module \\ \text{map into module } \Module}$};

        \draw [decorate, decoration = {brace, mirror, amplitude=\tablespace},
            transform canvas={yshift=\baselineskip, xshift=-3*\tablespace}, color=hintcolor, thick]
            (bdown) -- (cup)
            node[midway, above, xshift=-\tablespace, rotate=90]
            {$\substack{\text{\textbf{weights} from} \\ \text{monoid } \Monoid}$};

        \draw [decorate, decoration = {brace, mirror, amplitude=\tablespace},
            transform canvas={yshift=\baselineskip, xshift=-3*\tablespace}, color=hintcolor, thick]
            (cdown) -- (dup)
            node[midway, above, xshift=-\tablespace, rotate=90]
            {$\substack{\text{\textbf{bounds} via } \\ \omega-\text{bicomplete} \\ \text{partial order}}$};

        \draw [decorate, decoration = {brace, mirror, amplitude=\tablespace},
            transform canvas={yshift=\baselineskip, xshift=-3*\tablespace}, color=hintcolor, thick]
            (ddown) -- (eup)
            node[midway, above, xshift=-\tablespace, rotate=90]
            {$\substack{\text{\textbf{comparison} via} \\ \text{Heyting algebra}}$};
    \end{tikzpicture}

    \begin{align*}
        \renewcommand{\arraystretch}{1.35}%
        \setlength{\tabcolsep}{4pt}
        \begin{array}{!{\color{black}\vrule width \heavyrulewidth}c!{\color{black}\vrule width \lightrulewidth}}
            \\
            \arrayrulecolor{Platinum}\cmidrule[2.5pt]{1-1} \tikzmark{a}
            \module
            \\
            \oplus
            \\
            \snull
            \\
            \otimes
            \\ \arrayrulecolor{Platinum}\cmidrule[1.5pt]{1-1} \tikzmark{b}
            \monoid
            \\
            \odot
            \\
            \sone
            \\ \arrayrulecolor{Platinum}\cmidrule[2.5pt]{1-1} \tikzmark{c}
            \hleq
            \\
            \bot
            \\
            \top
            \\
            \hand
            \\
            \hor
            \\ \arrayrulecolor{Platinum}\cmidrule[1.5pt]{1-1} \tikzmark{d}
            \null
            \\
            \impl
            \\
            \tikzmark{e}
            \coimpl
        \end{array}
        \begin{array}{c}
            \cellcolor{basicBackground!2}{\mathtt{\trop}}
            \\ \arrayrulecolor{Platinum}\cmidrule[2.5pt]{1-1}
            \cellcolor{basicBackground!2}{\NatsX}
            \\
            \cellcolor{basicBackground!2}{\min}
            \\
            \cellcolor{basicBackground!2}{\infty}
            \\
            \cellcolor{basicBackground!2}{+}
            \\ \arrayrulecolor{Platinum}\cmidrule[1.5pt]{1-1}
            \cellcolor{basicBackground!2}{\NatsX}
            \\
            \cellcolor{basicBackground!2}{+}
            \\
            \cellcolor{basicBackground!2}{0}
            \\ \arrayrulecolor{Platinum}\cmidrule[2.5pt]{1-1}
            \cellcolor{basicBackground!2}{\geq}
            \\
            \cellcolor{basicBackground!2}{\infty}
            \\
            \cellcolor{basicBackground!2}{0}
            \\
            \cellcolor{basicBackground!2}{\max}
            \\
            \cellcolor{basicBackground!2}{\min}
            \\ \arrayrulecolor{Platinum}\cmidrule[1.5pt]{1-1}
            \cellcolor{basicBackground!2}{\substack{\text{totally} \\ \text{ordered}}}
            \\
            \cellcolor{basicBackground!2}{\totimpl}
            \\
            \cellcolor{basicBackground!2}{\totcoimpl}
        \end{array}
        \begin{array}{c}
            \cellcolor{basicBackground!2}{\mathtt{\comb}}
            \\ \arrayrulecolor{Platinum}\cmidrule[2.5pt]{1-1}
            \cellcolor{basicBackground!2}{\NatsX}
            \\
            \cellcolor{basicBackground!2}{+}
            \\
            \cellcolor{basicBackground!2}{0}
            \\
            \cellcolor{basicBackground!2}{\cdot}
            \\ \arrayrulecolor{Platinum}\cmidrule[1.5pt]{1-1}
            \cellcolor{basicBackground!2}{\Nats}
            \\
            \cellcolor{basicBackground!2}{\cdot}
            \\
            \cellcolor{basicBackground!2}{1}
            \\ \arrayrulecolor{Platinum}\cmidrule[2.5pt]{1-1}
            \cellcolor{basicBackground!2}{\leq}
            \\
            \cellcolor{basicBackground!2}{0}
            \\
            \cellcolor{basicBackground!2}{\infty}
            \\
            \cellcolor{basicBackground!2}{\min}
            \\
            \cellcolor{basicBackground!2}{\max}
            \\ \arrayrulecolor{Platinum}\cmidrule[1.5pt]{1-1}
            \cellcolor{basicBackground!2}{\substack{\text{totally} \\ \text{ordered}}}
            \\
            \cellcolor{basicBackground!2}{\totimpl}
            \\
            \cellcolor{basicBackground!2}{\totcoimpl}
        \end{array}
        \begin{array}{c}
            \cellcolor{basicBackground!2}{\mathtt{\prob}}
            \\ \arrayrulecolor{Platinum}\cmidrule[2.5pt]{1-1}
            \cellcolor{basicBackground!2}{\PosRealsInf}
            \\
            \cellcolor{basicBackground!2}{+}
            \\
            \cellcolor{basicBackground!2}{0}
            \\
            \cellcolor{basicBackground!2}{\cdot}
            \\ \arrayrulecolor{Platinum}\cmidrule[1.5pt]{1-1}
            \cellcolor{basicBackground!2}{\PosReals}
            \\
            \cellcolor{basicBackground!2}{\cdot}
            \\
            \cellcolor{basicBackground!2}{1}
            \\ \arrayrulecolor{Platinum}\cmidrule[2.5pt]{1-1}
            \cellcolor{basicBackground!2}{\leq}
            \\
            \cellcolor{basicBackground!2}{0}
            \\
            \cellcolor{basicBackground!2}{\infty}
            \\
            \cellcolor{basicBackground!2}{\min}
            \\
            \cellcolor{basicBackground!2}{\max}
            \\ \arrayrulecolor{Platinum}\cmidrule[1.5pt]{1-1}
            \cellcolor{basicBackground!2}{\substack{\text{totally} \\ \text{ordered}}}
            \\
            \cellcolor{basicBackground!2}{\totimpl}
            \\
            \cellcolor{basicBackground!2}{\totcoimpl}
        \end{array}
        \begin{array}{c}
            \cellcolor{basicBackground!2}{\mathtt{\clear}}
            \\ \arrayrulecolor{Platinum}\cmidrule[2.5pt]{1-1}
            \cellcolor{basicBackground!2}{\clearUniverse}
            \\
            \cellcolor{basicBackground!2}{\min}
            \\
            \cellcolor{basicBackground!2}{\mathsf{TS}}
            \\
            \cellcolor{basicBackground!2}{\max}
            \\ \arrayrulecolor{Platinum}\cmidrule[1.5pt]{1-1}
            \cellcolor{basicBackground!2}{\clearUniverse}
            \\
            \cellcolor{basicBackground!2}{\max}
            \\
            \cellcolor{basicBackground!2}{\mathsf{P}}
            \\ \arrayrulecolor{Platinum}\cmidrule[2.5pt]{1-1}
            \cellcolor{basicBackground!2}{\geq}
            \\
            \cellcolor{basicBackground!2}{\mathsf{TS}}
            \\
            \cellcolor{basicBackground!2}{\mathsf{P}}
            \\
            \cellcolor{basicBackground!2}{\max}
            \\
            \cellcolor{basicBackground!2}{\min}
            \\ \arrayrulecolor{Platinum}\cmidrule[1.5pt]{1-1}
            \cellcolor{basicBackground!2}{\substack{\text{totally} \\ \text{ordered}}}
            \\
            \cellcolor{basicBackground!2}{\totimpl}
            \\
            \cellcolor{basicBackground!2}{\totcoimpl}
            \\
        \end{array}
        \begin{array}{c}
            \cellcolor{basicBackground!2}{\mathtt{\bool}}
            \\ \arrayrulecolor{Platinum}\cmidrule[2.5pt]{1-1}
            \cellcolor{basicBackground!2}{\{0,1\}}
            \\
            \cellcolor{basicBackground!2}{\vee}
            \\
            \cellcolor{basicBackground!2}{0}
            \\
            \cellcolor{basicBackground!2}{\wedge}
            \\ \arrayrulecolor{Platinum}\cmidrule[1.5pt]{1-1}
            \cellcolor{basicBackground!2}{\{0,1\}}
            \\
            \cellcolor{basicBackground!2}{\wedge}
            \\
            \cellcolor{basicBackground!2}{1}
            \\ \arrayrulecolor{Platinum}\cmidrule[2.5pt]{1-1}
            \cellcolor{basicBackground!2}{\leq}
            \\
            \cellcolor{basicBackground!2}{0}
            \\
            \cellcolor{basicBackground!2}{1}
            \\
            \cellcolor{basicBackground!2}{\wedge}
            \\
            \cellcolor{basicBackground!2}{\vee}
            \\ \arrayrulecolor{Platinum}\cmidrule[1.5pt]{1-1}
            \cellcolor{basicBackground!2}{\substack{\text{Boolean} \\ \text{algebra}}}
            \\
            \cellcolor{basicBackground!2}{\matimpl}
            \\
            \cellcolor{basicBackground!2}{\matcoimpl}
        \end{array}
        \begin{array}{c}
            \cellcolor{basicBackground!2}{\mathtt{\lang}}
            \\ \arrayrulecolor{Platinum}\cmidrule[2.5pt]{1-1}
            \cellcolor{basicBackground!2}{2^{\Gamma^{\infty}}}
            \\
            \cellcolor{basicBackground!2}{\cup}
            \\
            \cellcolor{basicBackground!2}{\emptyset}
            \\
            \cellcolor{basicBackground!2}{\cdot}
            \\ \arrayrulecolor{Platinum}\cmidrule[1.5pt]{1-1}
            \cellcolor{basicBackground!2}{\Gamma^*}
            \\
            \cellcolor{basicBackground!2}{\cdot}
            \\
            \cellcolor{basicBackground!2}{\varepsilon}
            \\ \arrayrulecolor{Platinum}\cmidrule[2.5pt]{1-1}
            \cellcolor{basicBackground!2}{\subseteq}
            \\
            \cellcolor{basicBackground!2}{\emptyset}
            \\
            \cellcolor{basicBackground!2}{\Gamma^{\infty}}
            \\
            \cellcolor{basicBackground!2}{\cap}
            \\
            \cellcolor{basicBackground!2}{\cup}
            \\ \arrayrulecolor{Platinum}\cmidrule[1.5pt]{1-1}
            \cellcolor{basicBackground!2}{\substack{\text{Boolean} \\ \text{algebra}}}
            \\
            \cellcolor{basicBackground!2}{\matimpl}
            \\
            \cellcolor{basicBackground!2}{\matcoimpl}
        \end{array}
        \begin{array}{c!{\color{black}\vrule width \heavyrulewidth}}
            \cellcolor{basicBackground!2}{\mathtt{\why}}
            \\ \arrayrulecolor{Platinum}\cmidrule[2.5pt]{1-1}
            \cellcolor{basicBackground!2}{\pDNF}
            \\
            \cellcolor{basicBackground!2}{\vee}
            \\
            \cellcolor{basicBackground!2}{0}
            \\
            \cellcolor{basicBackground!2}{\wedge}
            \\ \arrayrulecolor{Platinum}\cmidrule[1.5pt]{1-1}
            \cellcolor{basicBackground!2}{\pDNF}
            \\
            \cellcolor{basicBackground!2}{\wedge}
            \\
            \cellcolor{basicBackground!2}{1}
            \\ \arrayrulecolor{Platinum}\cmidrule[2.5pt]{1-1}
            \cellcolor{basicBackground!2}{\leq}
            \\
            \cellcolor{basicBackground!2}{0}
            \\
            \cellcolor{basicBackground!2}{1}
            \\
            \cellcolor{basicBackground!2}{\wedge}
            \\
            \cellcolor{basicBackground!2}{\vee}
            \\ \arrayrulecolor{Platinum}\cmidrule[1.5pt]{1-1}
            \cellcolor{basicBackground!2}{\substack{\text{finite} \\ \text{distributive}}}
            \\
            \cellcolor{basicBackground!2}{\impl_{\pDNF}}
            \\
            \cellcolor{basicBackground!2}{\coimpl_{\pDNF}}
        \end{array}
    \end{align*}

    \begin{align*}
        \begin{array}{ll}
            a \totimpl b = \left. \begin{cases}
                \top, &\text{ if } a \hleq b, \\ b, &\text{ otherwise}
            \end{cases} \right\}
            \qquad &
            a \totcoimpl b = \left. \begin{cases}
                \bot, &\text{ if } b \hleq a, \\ b, &\text{ otherwise}
            \end{cases} \right\}
            \\[15pt]
            a \matimpl b = \neg a \hor b
            &
            a \matcoimpl b = \neg a \hand b
        \end{array}
    \end{align*}

    \end{minipage}
    \end{adjustbox}

    \caption{An overview over $\omega$-bicontinuous monoid-modules that are bi-Heyting algebras. All instances from Table~1 in \cite{BatzGKKW22} for weighted programming reappear, though extended by the Heyting (co-)implications. \\
    For $\clear$, $P<C<S<TS$ and the displayed order $\geq$ is the natural order induced by $\min$.
    For $\lang$, $\Gamma^\infty = \Gamma^* \cup \Gamma^\omega$, $L \in 2^{\Gamma^{\infty}}$ is a language of finite and infinite words, $g \in \Gamma^*$ is a finite word, and $g \cdot L = \{ g\cdot l \mid l \in L \}$.
    For $\why$, $\pDNF$ denotes positive DNF formulae over a finite set of variables, ordered by semantic implication; $\impl_{\pDNF}$ and $\coimpl_{\pDNF}$ are the residuals of this finite distributive lattice.}
    \label{fig:modules}
\end{figure}

\subsection{Heyting Algebras}

In our setting, we consider monoid-modules that are bounded lattices\footnote{In a bounded lattice $(\hdom, \hleq)$, for all $a \in \hdom$, we have $\bot \hleq a \hleq \top$.} and equip them with an implication. This yields a Heyting algebra, where the implication $a \impl ( \cdot ): \hdom \impl \hdom$ is right adjoint to the meet $a \hand ( \cdot ): \hdom \impl \hdom$.

\begin{definition}[Bi-Heyting Algebra] \label{def:heyting_algebra}
    A \emph{Heyting algebra} $\hdef$ is a bounded lattice $(\hdom, \hleq)$ together with an \emph{implication} $\impl: \hdom \times \hdom \to \hdom$ such that for all $a, b, x \in \hdom$,
    \begin{gather*}
        (a \hand x) \hleq b \quad\text{ iff }\quad x \hleq (a \impl b).
    \end{gather*}
    For every $a \in \hdom$, let $\neg a := a \impl \bot$ be the \emph{Heyting negation}.
    The dual structure $\hcodef$ is a \emph{co-Heyting algebra} with the \emph{co-implication} $\coimpl: \hdom \times \hdom \to \hdom$ such that for all $a, b, x \in \hdom$,
    \begin{gather*}
        b \hleq (a \hor x) \quad\text{ iff }\quad (a \coimpl b) \hleq x.
    \end{gather*}
    For every $a \in \hdom$, let $\coneg a := a \coimpl \top$ be the \emph{co-negation}.
    A Heyting algebra which is also a co-Heyting algebra is called a \emph{bi-Heyting algebra}.
\end{definition}

The following algebraic form of the \emph{deduction theorem}~\cite{Kleene02} allows us to internalize comparisons in the logic via the implication $\impl$ and co-implication $\coimpl$.

\begin{restatable}[Deduction Theorem]{theorem}{thmDeductionTheorem}
    \label[theorem]{thm:deduction_theorem}
    Let $\hsym$ be a bi-Heyting algebra and $a, b \in \hdom$.
    Then, we have
	\begin{align*}
			a \hheyloleq b \qiff a \impl b = \top
			\qquad\textnormal{and}\qquad
			a \hheylogeq b \qiff a \coimpl b = \bot.
	\end{align*}%
\end{restatable}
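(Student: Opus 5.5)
Both equivalences follow directly from the adjunctions in \Cref{def:heyting_algebra}, instantiated with a suitable choice of the free variable $x$. For the Heyting part, the adjunction $(a \hand x) \hleq b \iff x \hleq (a \impl b)$ is the only tool needed. For the co-Heyting part, the dual adjunction $b \hleq (a \hor x) \iff (a \coimpl b) \hleq x$ plays the same role. We also use the bounded-lattice facts $\top \hand a = a$, $\bot \hor a = a$, $\bot \hleq c \hleq \top$, and antisymmetry of $\hleq$.

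For the implication, instantiate the adjunction with $x = \top$. This gives $(a \hand \top) \hleq b$ iff $\top \hleq (a \impl b)$. The left-hand side simplifies to $a \hleq b$ because $a \hand \top = a$. On the right-hand side, $\top \hleq (a \impl b)$ together with $(a \impl b) \hleq \top$ yields $a \impl b = \top$ by antisymmetry. Conversely, $a \impl b = \top$ trivially gives $\top \hleq a \impl b$. Chaining these equivalences proves $a \hleq b$ iff $a \impl b = \top$.

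For the co-implication, the statement $a \heylogeq b$ means $b \hleq a$. Instantiate the co-Heyting adjunction with $x = \bot$. This gives $b \hleq (a \hor \bot)$ iff $(a \coimpl b) \hleq \bot$. The left-hand side is $b \hleq a$ because $a \hor \bot = a$. The right-hand side is equivalent to $a \coimpl b = \bot$ by antisymmetry, since $\bot$ is the least element. This proves $a \heylogeq b$ iff $a \coimpl b = \bot$.

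No step is a real obstacle. The only point needing care is the orientation in the second equivalence: $a \heylogeq b$ must be unfolded as $b \hleq a$, which matches the argument order of $\coimpl$ in the adjunction, where $b$ is the element bounded above by $a \hor x$. Once that is read correctly, both halves are the same one-line instantiation of the respective adjunction, plus the unit laws and antisymmetry.
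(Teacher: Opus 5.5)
Your proposal is correct and follows the paper's proof: both instantiate the Heyting adjunction at $x=\top$ and the co-Heyting adjunction at $x=\bot$. Both then close the argument with the unit laws $a \hand \top = a$, $a \hor \bot = a$ and the extremality of $\top$ and $\bot$, reading $a \hgeq b$ as $b \hleq a$ just as the paper does.
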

With the \emph{Boolean embedding} $\embed{\cdot} \colon \Bools \to \hdom$ that maps $\true$ to $\top$ and $\false$ to $\bot$, we can internalize the Boolean comparison $\impl$ and $\coimpl$ in the logic as well and even encode if-else statements.

\begin{restatable}{lemma}{lemBoolEmbeddingImplCoimpl}
    \label[lemma]{lem:bool_embedding_impl_coimpl}
    Let $\hsym$ be a bi-Heyting algebra.
    For all $b \in \Bools$ and $x \in \hdom$, we have
    \begin{align*}
        \embed{b} \impl x = \ifThenElse{b}{x}{\top}
        \qquad\textnormal{and}\qquad
        \embed{b} \coimpl x = \ifThenElse{\neg b}{x}{\bot}
    \end{align*}
\end{restatable}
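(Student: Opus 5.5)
The plan is to case-split on $b \in \Bools$ and, in each case, identify the value of the (co-)implication using only the adjunctions of \Cref{def:heyting_algebra}. Since $\embed{b}$ is either $\top$ or $\bot$, there are four small facts to establish: $\top \impl x = x$, $\bot \impl x = \top$, $\top \coimpl x = \bot$ and $\bot \coimpl x = x$. I will identify each right-hand side by showing that the two elements have the same lower bounds (for $\impl$) or the same upper bounds (for $\coimpl$), and then conclude equality by antisymmetry. Alternatively, one can note directly that $\top \hleq (\bot \impl x)$.

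For the implication, take $b = \true$. For every $y$ we have $y \hleq (\top \impl x)$ iff $(\top \hand y) \hleq x$ iff $y \hleq x$, because $\top \hand y = y$ in a bounded lattice. Applying this with $y = x$ and with $y = \top \impl x$ gives the two inequalities, so $\top \impl x = x$. Now take $b = \false$. Since $\bot \hand \top = \bot \hleq x$, the adjunction yields $\top \hleq (\bot \impl x)$, and hence $\bot \impl x = \top$. Both cases together give the stated case distinction.

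For the co-implication, I dualise using the co-Heyting adjunction $x \hleq (a \hor y)$ iff $(a \coimpl x) \hleq y$. If $b = \true$, then $x \hleq \top = \top \hor \bot$, so $(\top \coimpl x) \hleq \bot$, which forces $\top \coimpl x = \bot$; this matches the "otherwise" branch, since $\neg b$ is false. If $b = \false$, then for all $y$ we have $(\bot \coimpl x) \hleq y$ iff $x \hleq \bot \hor y = y$. Taking $y = x$ and $y = \bot \coimpl x$ and applying antisymmetry gives $\bot \coimpl x = x$, matching the branch where $\neg b$ holds.

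None of these steps is a real obstacle. The only points that need care are reading the argument order of $\coimpl$ correctly in the adjunction of \Cref{def:heyting_algebra}, and making sure the cases for $\coimpl$ are matched to the condition $\neg b$ rather than $b$.
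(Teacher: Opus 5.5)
Your proof is correct and takes the paper's approach: reduce to the four identities $\top \impl x = x$, $\bot \impl x = \top$, $\top \coimpl x = \bot$, $\bot \coimpl x = x$, and check each one against the adjunctions of \Cref{def:heyting_algebra}. The only difference is cosmetic: for $\bot \impl x = \top$ and $\top \coimpl x = \bot$ the paper cites \Cref{thm:deduction_theorem}, whereas you inline the same instance of the adjunction.
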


\begin{example}[Ternary Conditionals]
    Let $\hdef$ be a Heyting algebra and $c \in \Bools$.
    Then, for all $a, b \in \hdom$, we have $(\embed{c} \impl a) \hand (\embed{\neg c} \impl b) = \text{ if } c \text{ then } a \text{ else } b$.
    Dually, for co-Heyting algebra $\hcodef$, we have $(\embed{c} \coimpl a) \hor (\embed{\neg c} \coimpl b) = \text{ if } c \text{ then } b \text{ else } a$.
\end{example}

Instead of using a Boolean embedding, we want to express \emph{qualitative} properties directly via our bi-Heyting algebra. To use proof rules such as Park induction, we need to check whether a certain entailment holds.
To connect the quantitative domain $\hsym$ with qualitative reasoning, we define \emph{validation} and \emph{covalidation} operators that "booleanize" statements by mapping them to the extreme values $\bot$ and $\top$ of the bi-Heyting algebra.

\begin{definition}[Validation and Covalidation]
    \label[definition]{def:heyting-validation}
    Let $\hdef$ be a Heyting algebra with $x \in \hdom$, then we define the {validation} $\heylovalidate{x}$ and \emph{covalidation} $\heylocovalidate{x}$ as follows:
    \begin{gather*}
        \heylovalidate{x}
        = \ifThenElse{x = \top}{\top}{\bot}
        \quad \text{and} \quad
        \heylocovalidate{x}
        = \ifThenElseDot{x = \bot}{\bot}{\top}
    \end{gather*}
\end{definition}

To obtain direct definitions of Heyting (co-)implications for totally ordered lattices, Boolean algebras, and the why semiring $\modwhy$, the following characterization helps.

\begin{restatable}[Residual Characterization of (Co-)Implication]{lemma}{lemImplAsResidual}
    \label{lem:impl_as_residual}
Let $(\hdom,\hleq)$ be a bounded lattice. Then, the following statements are equivalent:
\begin{enumerate}
    \item There exists an implication $\impl$ such that $(\hdom,\hleq,\hand,\hor,\impl)$ is a Heyting algebra.
    \item For every $a,b\in \hdom$, the %
    supremum $\hbigor\{x\in \hdom \mid a\hand x \hleq b\}$ is well-defined.
\end{enumerate}
Then, we derive that the Heyting implication is $a \impl b = \hbigor\{x\in \hdom \mid a\hand x \hleq b\}$ for all $a, b \in \hdom$.

\end{restatable}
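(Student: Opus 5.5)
We prove the two directions separately and read off the formula for the implication from the first direction.

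For (1)$\Rightarrow$(2), assume $\impl$ makes $(\hdom,\hleq,\hand,\hor,\impl)$ a Heyting algebra, fix $a,b\in\hdom$, and let $S_{a,b}=\{x\in\hdom \mid a\hand x\hleq b\}$. We claim $a\impl b$ is the greatest element of $S_{a,b}$, hence its supremum.
\begin{itemize}
    \item $a\impl b\in S_{a,b}$: apply the adjunction of \Cref{def:heyting_algebra} to $x := a\impl b$. From $a\impl b\hleq a\impl b$ we get $a\hand(a\impl b)\hleq b$.
    \item $a\impl b$ is an upper bound: for every $x\in S_{a,b}$, the adjunction from left to right gives $x\hleq a\impl b$.
\end{itemize}
A greatest element of a subset is its supremum, so $\hbigor S_{a,b}$ exists and equals $a\impl b$. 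This already proves the concluding formula $a\impl b=\hbigor\{x\mid a\hand x\hleq b\}$, and it shows the implication is uniquely determined by the lattice.

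For (2)$\Rightarrow$(1), define $a\impl b := \hbigor S_{a,b}$, which exists by assumption, and verify the adjunction $(a\hand x)\hleq b$ iff $x\hleq a\impl b$.
\begin{itemize}
    \item Left to right: if $a\hand x\hleq b$ then $x\in S_{a,b}$, so $x\hleq\hbigor S_{a,b}$.
    \item Right to left: this is the main obstacle. Given $x\hleq\hbigor S_{a,b}$, we want $a\hand x\hleq a\hand\hbigor S_{a,b}$ (monotonicity of $\hand$) and then $a\hand\hbigor S_{a,b}\hleq b$. The second step needs $a\hand\hbigor S = \hbigor_{s\in S}(a\hand s)$, an infinite distributive law, which does not hold in an arbitrary bounded lattice.
\end{itemize}

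To fix this, I would read ``well-defined'' in (2) as asserting that the supremum exists \emph{and belongs to the set}, i.e.\ is a maximum; this is the standard residuation condition. Under that reading, $\hbigor S_{a,b}\in S_{a,b}$ gives $a\hand\hbigor S_{a,b}\hleq b$ directly, and the right-to-left direction follows by transitivity. I would state this interpretation explicitly in the proof.

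An alternative is to restrict the lemma to complete lattices satisfying the join-infinite distributive law, where $a\hand\hbigor S=\hbigor(a\hand s)\hleq b$ holds. I would at least remark why some such assumption is needed: in the pentagon lattice $N_5$ all suprema exist but no Heyting implication does.

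The co-implication case is the order dual. We have $a\coimpl b=\hbigand\{x\mid b\hleq a\hor x\}$, and the same argument applies with the order reversed.
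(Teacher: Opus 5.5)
Your proof is correct and matches the paper's argument. The forward direction shows $a \impl b$ is the greatest element of $\{x \mid a \hand x \hleq b\}$, and the backward direction takes that element as $a \impl b$ and checks the adjunction via monotonicity of $\hand$. The paper's proof also implicitly reads condition (2) as saying this maximum exists (``the assumption guarantees the max exists''), so your explicit interpretation is exactly the one used there, and your $N_5$ example correctly shows that the literal ``supremum exists'' reading would make the equivalence false.
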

Dually, the Heyting co-implication can be characterized as the least element satisfying $b \hleq a \hor x$. These extrema exist since the greatest (least) element of a subset of a partially ordered set is always its supremum (infimum).

\subsection{Totally Ordered Bounded Lattices}

\HeyLo from~\cite{SchroerBKKM23} extends the probability module $\modprob$ by the \emph{Gödel implications} to obtain a bi-Heyting algebra. We generalize those implications for any totally ordered bounded lattice $(\hdom, \hleq)$, which yields a bi-Heyting algebra with the (co-)implications $\totimpl$ and $\totcoimpl$.

\Cref{lem:impl_as_residual} provides a constructive characterization of the implications and, since $\hdom$ is totally ordered and bounded, we derive that the greatest element $x \in \hdom$ such that $a \hand x \hleq b$ holds is either $\top$, if $a \hleq b$, or $b$, otherwise.
A dual construction works for the co-implication.

\begin{restatable}{theorem}{thmTotimplTotcoimplBoundedLattice}
    \label{thm:totimpl_totcoimpl_bounded_lattice}
    For any totally ordered bounded lattice $(\hdom, \hleq)$, we obtain a bi-Heyting algebra $\hsym = (\hdom, \hleq, \hand, \hor, \totimpl, \totcoimpl)$, with the (co-)implications
    \begin{align*}
        a \totimpl b = \left. \begin{cases} \top, &\text{ if } a \hleq b, \\
            b, &\text{ otherwise} \end{cases} \right\}
        \quad \text{and} \quad
        a \totcoimpl b = \left. \begin{cases} \bot, &\text{ if } b \hleq a, \\
            b, &\text{ otherwise} \end{cases} \right\}.
    \end{align*}
\end{restatable}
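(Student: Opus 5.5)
We verify the two adjunction conditions of \Cref{def:heyting_algebra} directly, using that in a totally ordered lattice $\hand = \min$ and $\hor = \max$ with respect to $\hleq$. Alternatively one could invoke \Cref{lem:impl_as_residual}, but a direct check avoids arguing about the existence of suprema, since the candidate is given explicitly. A bounded totally ordered set is a bounded lattice, so only the residuation properties need proof; the co-implication part is the order dual of the implication part.

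\textbf{Implication.} Fix $a,b,x\in\hdom$. We show $(a\hand x)\hleq b$ iff $x\hleq (a\totimpl b)$ by cases on whether $a\hleq b$.
\begin{itemize}
\item If $a\hleq b$, then $a\hand x\hleq a\hleq b$ holds for every $x$. Also $a\totimpl b=\top$, so $x\hleq\top$ holds for every $x$. Both sides are true.
\item If $a\not\hleq b$, totality gives $b\hleq a$ with $a\neq b$, and $a\totimpl b=b$. We must show $\min(a,x)\hleq b$ iff $x\hleq b$.
\begin{itemize}
\item If $x\hleq b$, then $\min(a,x)\hleq x\hleq b$.
\item Conversely, suppose $\min(a,x)\hleq b$ and, by totality, split on $x\hleq a$ or $a\hleq x$. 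In the first case $\min(a,x)=x$, so $x\hleq b$. In the second case $\min(a,x)=a\hleq b$, contradicting the assumption.
\end{itemize}
\end{itemize}

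\textbf{Co-implication.} We show $b\hleq (a\hor x)$ iff $(a\totcoimpl b)\hleq x$, again by cases.
\begin{itemize}
\item If $b\hleq a$, then $b\hleq a\hleq a\hor x$ for every $x$. Also $a\totcoimpl b=\bot\hleq x$ for every $x$. Both sides are true.
\item Otherwise $a\hleq b$ strictly and $a\totcoimpl b=b$, so we must show $b\hleq\max(a,x)$ iff $b\hleq x$.
\begin{itemize}
\item If $b\hleq x$, then $b\hleq x\hleq\max(a,x)$.
\item Conversely, if $b\hleq\max(a,x)$, then either $\max(a,x)=x$, giving $b\hleq x$, or $\max(a,x)=a$, giving $b\hleq a$, which contradicts $a<b$.
\end{itemize}
\end{itemize}

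The only delicate point is the second case of each part: there we need totality twice, once to turn $a\not\hleq b$ into $b< a$, and once to evaluate $\min(a,x)$ or $\max(a,x)$ as one of its arguments. Antisymmetry is what makes the strict inequality contradict $a\hleq b$ (respectively $b\hleq a$). Beyond this, the argument is routine, and it yields that $\hsym$ is both Heyting and co-Heyting, hence bi-Heyting.
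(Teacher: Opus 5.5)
Your proof is correct, but it takes a different route from the paper.

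\textbf{The paper's route.} It goes through \Cref{lem:impl_as_residual}. It shows that $a \totimpl b$ is the greatest element of $\{x \mid a \hand x \hleq b\}$: this is $\top$ when $a \hleq b$, and $b$ otherwise, with a short argument that no element strictly above $b$ lies in the set. Dually, it shows that $a \totcoimpl b$ is the least $x$ with $b \hleq a \hor x$, relying on the dual of that lemma, which the paper only states informally.

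\textbf{Your route.} You verify the two adjunctions of \Cref{def:heyting_algebra} directly, for arbitrary $x$, by splitting on totality.

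\textbf{Comparison.} The paper's approach buys uniformity: the same ``identify the greatest residual'' pattern is reused for Boolean algebras and for $\pDNF$. Yours buys self-containedness and a complete case analysis. You prove both directions of each equivalence for every $x$, so you never need to argue that a maximal element of the residual set is the greatest one, nor appeal to the unstated dual lemma. The paper's rather terse ``no greater element'' arguments, for both the implication and the co-implication, become explicit two-case checks.

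\textbf{A minor remark.} In your second cases the contradiction is directly with the case hypothesis $a \not\hleq b$ (resp.\ $b \not\hleq a$). Neither strictness nor antisymmetry is needed there, so your closing comment about antisymmetry is harmless but superfluous.
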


The extended naturals module $\modnats$, the tropical semiring $\modtrop$, and the probability module $\modprob$ are totally ordered bounded lattices, hence we use the previous implications in this work.

\begin{example}
    For the probability module $\modprob$, this yields the Gödel implication $a \totimpl b = \infty$, if $a \leq b$, and $a \totimpl b = b$, otherwise.
    This yields, e.g., $4 \totimpl 5 = \infty$ and $5 \totimpl 4 = 4$.
    The natural order of the tropical semiring $\modtrop$ is reversed, hence we obtain $a \totimpl b = 0$ if $a \geq b$ and $a \totimpl b = b$, otherwise.
    Then, we have $4 \totimpl 5 = 5$ and $5 \totimpl 4 = 0$.
\end{example}

\subsection{Boolean Algebras}

A second source of Heyting algebras are Boolean algebras. Unlike general Heyting algebras, where negation is only a pseudo-complement, Boolean algebras require that for every element $a$ an additive inverse $\neg a$ exists, satisfying $a \hand \neg a = \bot$ and $a \hor \neg a = \top$.

By the characterization in \Cref{lem:impl_as_residual}, $a \matimpl b$ is the greatest element $x \in \hdom$ such that $a \hand x \hleq b$.
The candidate $\neg a \hor b$ satisfies $a \hand (\neg a \hor b) = (a \hand \neg a) \hor (a \hand b) = a \hand b \hleq b$.
This motivates the material implication $a \matimpl b = \neg a \hor b$ from classical logic.

\begin{restatable}{theorem}{thmMatimplMatcoimplBoundedLattice}
    \label{thm:matimpl_matcoimpl_bounded_lattice}
    For any Boolean algebra $(\hdom, \hleq, \neg)$ with $\hand$ and $\hor$ defined, we obtain a bi-Heyting algebra $\hsym = (\hdom, \hleq, \hand, \hor, \matimpl, \matcoimpl)$ with
    $ a \matimpl b = \neg a \hor b
        \text{ and }
        a \matcoimpl b = \neg a \hand b. $
\end{restatable}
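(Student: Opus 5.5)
The plan is to verify the two adjunction conditions of \Cref{def:heyting_algebra} directly, using only the Boolean algebra identities: distributivity of $\hand$ over $\hor$ (and dually), $a \hand \neg a = \bot$, $a \hor \neg a = \top$, and the fact that $\bot$ and $\top$ are the neutral elements of $\hor$ and $\hand$. A Boolean algebra is by definition a bounded distributive lattice with complements, so $(\hdom,\hleq)$ is a bounded lattice as required. Once both residuation laws hold, $\hsym$ is simultaneously a Heyting and a co-Heyting algebra, hence a bi-Heyting algebra.

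\textbf{Implication.} Fix $a,b,x \in \hdom$. For the direction ``$(a \hand x) \hleq b$ implies $x \hleq (\neg a \hor b)$'', I would compute
\[
x = x \hand \top = x \hand (a \hor \neg a) = (x \hand a) \hor (x \hand \neg a) \hleq b \hor \neg a,
\]
using distributivity and monotonicity of $\hor$. For the converse, suppose $x \hleq \neg a \hor b$. Then by monotonicity of $\hand$,
\[
a \hand x \hleq a \hand (\neg a \hor b) = (a \hand \neg a) \hor (a \hand b) = a \hand b \hleq b .
\]
This is exactly the computation sketched before the theorem.

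\textbf{Co-implication.} This part is the order-dual argument. Assume $b \hleq a \hor x$. Then
\[
\neg a \hand b \hleq \neg a \hand (a \hor x) = (\neg a \hand a) \hor (\neg a \hand x) = \neg a \hand x \hleq x .
\]
Conversely, assume $\neg a \hand b \hleq x$. Then
\[
b = b \hand (a \hor \neg a) = (b \hand a) \hor (b \hand \neg a) \hleq a \hor x .
\]
Together these give $b \hleq a \hor x$ if and only if $(a \matcoimpl b) \hleq x$.

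I do not expect a real obstacle. The only point needing care is whether ``Boolean algebra'' includes distributivity. If it is read merely as a complemented lattice, I would note that distributivity is part of the standard definition, since complements are otherwise not unique and $\neg$ would not be a well-defined operation. Alternatively, one could cross-check the result against \Cref{lem:impl_as_residual}: the implication argument above shows that $\neg a \hor b$ is the greatest element of $\{x \mid a \hand x \hleq b\}$, and dually $\neg a \hand b$ is the least element of $\{x \mid b \hleq a \hor x\}$, which confirms that the residuals exist and equal the stated formulas.
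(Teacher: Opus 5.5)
Your proposal is correct and matches the paper's proof essentially line for line: both verify the Heyting and co-Heyting adjunctions directly from $a \hor \neg a = \top$, $a \hand \neg a = \bot$, and distributivity. One small point about your side remark: a non-distributive complemented lattice can still carry a well-defined complementation operation (e.g.\ an ortholattice), so the real reason distributivity must be assumed is that every Heyting algebra is distributive; this does not affect your argument, since distributivity is part of the standard definition of a Boolean algebra.
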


\begin{example}
    Instances of Boolean algebras are the Boolean module $\modbool$\footnote{The Boolean module $\modbool$ is both a totally ordered lattice and a Boolean algebra, and both \Cref{thm:totimpl_totcoimpl_bounded_lattice,thm:matimpl_matcoimpl_bounded_lattice} yield the same (co-)implications.} and the formal languages module $\modlang$ for any alphabet $\Gamma$.
    For the Boolean module, the material implication is the standard Boolean implication, e.g., $(\true \matimpl \false) = \false$ and $(\false \matimpl \true) = \true$.
    For the formal languages module, the material implication is the set-theoretic implication, e.g., $(\Set{a, b} \matimpl \Set{b, c}) = \Set{b, c}$ and $(\Set{a, b} \matimpl \Set{a, b, c}) = \Set{a, b, c}$.
\end{example}

\subsection{The Why Module}

Not all monoid-modules fall into the previous two categories.
One example is the \emph{Why module} $\modwhy$, which tracks \emph{provenance} of computations, i.e., why a certain result was obtained.

Let $A$ be a finite set of variables and $\pDNF = \Set{\bigvee_{j=1}^{n}\bigwedge_{X\in I_j} a
\mid n \ge 0,\; I_j \subseteq A}_{/{\equiv}}$ the set of positive formulae in DNF module logical equivalence.
We order $\pDNF$ by semantic entailment, i.e., $P \hleq Q$ if and only if for all $\rho: A \to \{0,1\}$ it is ($\rho \models P$ implies $\rho \models Q$).
\[
    \textsc{Why:}\qquad
    \monwhy = (\pDNF, \wedge, 1)
    \quad\text{and}\quad
    \modwhy = (\pDNF, \vee, 0, \wedge).
\]
Since $\pDNF$ is defined modulo logical equivalence over the finite set $A$, it is itself finite and forms a bounded distributive lattice with meet $\wedge$, join $\vee$, bottom $0$, and top $1$.
It is neither totally ordered, nor a Boolean algebra, since positive formulae are not closed under negation.
For $P,Q \in \pDNF$,
\[
    P \impl_{\pDNF} Q
    =
    \bigvee \Set{R \in \pDNF \mid P \wedge R \hleq Q}.
\]
Since $\pDNF$ is finite, this join is well-defined and computable; the order check is entailment over finitely many valuations.
The coimplication $\coimpl_{\pDNF}$ is defined dually as the residual of $\vee$.

\begin{restatable}{theorem}{thmWhySemiringBiHeyting}
    \label{thm:why_semiring_bi_heyting}
    For every finite set of variables $A$, the carrier $\pDNF$ of the Why monoid-module $\modwhy$ forms a bi-Heyting algebra
    \(
        (\pDNF, \hleq, \wedge, \vee, \impl_{\pDNF}, \coimpl_{\pDNF}).
    \)
\end{restatable}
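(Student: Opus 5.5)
The plan is to apply \Cref{lem:impl_as_residual} and its dual directly, using finiteness of $\pDNF$ together with distributivity. First I will establish that $(\pDNF, \hleq)$ is a bounded lattice. Antisymmetry of $\hleq$ holds because we quotient by logical equivalence. The meet is $\wedge$ and the join is $\vee$, after distributing $\wedge$ over $\vee$ to return to DNF. Since both operations are taken modulo $\equiv$, the result remains a positive DNF, so the carrier is closed under them. The bottom is $0$ (the empty disjunction) and the top is $1$ (the empty conjunction). Finiteness follows because there are at most $2^{2^{|A|}}$ equivalence classes. Distributivity is inherited from propositional logic, since $\hleq$ is semantic entailment.

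Next, for the implication, I invoke \Cref{lem:impl_as_residual}. Fix $P,Q$ and let $S=\Set{R \mid P\wedge R\hleq Q}$. The set $S$ is finite and contains $0$, so its join $R^\ast=\bigvee S$ exists. The lemma only asks for the supremum to be well-defined, but I expect its proof actually needs the supremum to be the \emph{greatest element} of $S$, i.e.\ $R^\ast\in S$. This membership is where distributivity enters:
\[
P\wedge \bigvee S = \bigvee_{R\in S}(P\wedge R)\hleq Q .
\]
Hence $x\hleq R^\ast$ implies $P\wedge x\hleq P\wedge R^\ast\hleq Q$. Conversely, $P\wedge x\hleq Q$ gives $x\in S$ and so $x\hleq R^\ast$. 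This is exactly the adjunction in \Cref{def:heyting_algebra}, with $P\impl_{\pDNF}Q=R^\ast$.

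Dually, I set $P\coimpl_{\pDNF}Q=\bigwedge\Set{R \mid Q\hleq P\vee R}$. This family is finite and contains $1$. The other distributive law, $P\vee\bigwedge_i R_i=\bigwedge_i(P\vee R_i)$, which holds in any distributive lattice, shows that the meet itself lies in the family. The co-Heyting adjunction then follows symmetrically.

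The main obstacle is the membership $R^\ast\in S$, which fails in non-distributive finite lattices. It relies on the distributive law holding for finite (here even possibly empty) joins and meets in $\pDNF$. I would verify this by induction on the number of joinands from the binary law, with the empty cases handled by $P\wedge 0=0$ and $P\vee 1=1$.
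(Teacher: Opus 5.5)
Your proposal is correct and follows essentially the same route as the paper. Both arguments establish that $\pDNF$ is a finite distributive bounded lattice, take $\bigvee\Set{R \mid P\wedge R\hleq Q}$ and use finite distributivity to show it lies in that set, and handle $\bigwedge\Set{R \mid Q\hleq P\vee R}$ dually. You are also right to verify the adjunction directly instead of relying only on the ``supremum is well-defined'' formulation of \Cref{lem:impl_as_residual}: that lemma's proof actually needs the supremum to be a maximum, and the distributivity step supplies exactly that.
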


Indeed, $P \impl_{\pDNF} Q$ is by construction the greatest $R$ with $P \wedge R \hleq Q$; hence it is the Heyting implication by \Cref{lem:impl_as_residual}, and the co-Heyting implication follows dually.

\section{Verification Infrastructure for Weighted Programming}

In this section, we specify the \emph{quantitative assertion language} \wHeyLo that is parameterized over $\omega$-bicontinuous monoid-modules that are bounded lattices and admit Heyting (co-)implications. From now on, we consider only modules $\Module$ over monoids $\Monoid$ that meet these prerequisites.

Furthermore, we define the \emph{intermediate verification language} \wHeyVL which allows us to encode $\wGCL$ programs, assumptions, and proof rules. Its semantics are \wHeyLo formulae which are computable via the verification preweighting transformer $\symVc$.

\begin{definition}[Quantitative Assertions]
    The set of \wHeyLo formulae is defined by the grammar in \Cref{fig:wheylo_semantics}, where $\aexpr$ is an arithmetic expression \ArithExp, $\bexpr$ is a Boolean expression \BExp, and $\mexpr$ is a monoid expression \MonExp. The semantics $\sem{\hhla}: \Sigma \to \Module$ is defined in \Cref{fig:wheylo_semantics}.
    The notation via Iverson brackets $\iverson{\bexpr}\cdot \hhla$ may be used as syntactic sugar for the \wHeyLo formula $\embed{\neg\bexpr} \coimpl \hhla$.
\end{definition}

\begin{table}
    \centering
    \renewcommand{\arraystretch}{1.35}
    \setlength{\tabcolsep}{4pt}
    \begin{tabular}
        {
            >{\raggedright\arraybackslash}p{0.09\linewidth}
            !{\color{Platinum}\vrule width 1.5pt}
            >{\raggedright\arraybackslash}p{0.36\linewidth}
            !{\color{Platinum}\vrule width 2.5pt}
            >{\raggedright\arraybackslash}p{0.09\linewidth}
            !{\color{Platinum}\vrule width 1.5pt}
            >{\raggedright\arraybackslash}p{0.36\linewidth}
        }
            \toprule
            $\hhla$
                & $\sem{\hhla}(\sigma) \in \Module$
                & $\hhla$
                & $\sem{\hhla}(\sigma) \in \Module$
            \\
            \midrule
            \cellcolor{basicBackground!2}$w \in \Module$
                & \cellcolor{basicBackground!2}$w$
                & \cellcolor{basicBackground!2}
                & \cellcolor{basicBackground!2}
            \\
            \cellcolor{basicBackground!2}$\aexpr \cdot \hhla$
                & \cellcolor{basicBackground!2}
                  {$\bigoplus_{i = 1}^{\sem{\aexpr}(\sigma)} \sem{\hhla}(\sigma)$}
                & \cellcolor{basicBackground!2}
                  \multirow{-2}{=}{$\embed{\bexpr}$}
                & \cellcolor{basicBackground!2}
                  \multirow{-2}{=}
                  {$\ifThenElse{\sem{\varphi}(\sigma)}{\top}{\bot}$}
            \\
            \cellcolor{basicBackground!2}$\mexpr \stimes \hhla$
                & \cellcolor{basicBackground!2}
                  {$\sem{\mexpr}(\sigma) \stimes \sem{\hhla}(\sigma)$}
                & \cellcolor{basicBackground!2}$\hhla \splus \hhlb$
                & \cellcolor{basicBackground!2}
                  {$\sem{\hhla}(\sigma) \splus \sem{\hhlb}(\sigma)$}
            \\
            \cellcolor{primalColor!5}$\hhla \hand \hhlb$
                & \cellcolor{primalColor!5}
                  {$\sem{\hhla}(\sigma) \hand \sem{\hhlb}(\sigma)$}
                & \cellcolor{dualColor!5}$\hhla \hor \hhlb$
                & \cellcolor{dualColor!5}
                  {$\sem{\hhla}(\sigma) \hor \sem{\hhlb}(\sigma)$}
            \\
            \cellcolor{primalColor!5}$\hhla \impl \hhlb$
                & \cellcolor{primalColor!5}
                  {$\sem{\hhla}(\sigma) \impl \sem{\hhlb}(\sigma)$}
                & \cellcolor{dualColor!5}$\hhla \coimpl \hhlb$
                & \cellcolor{dualColor!5}
                  {$\sem{\hhla}(\sigma) \coimpl \sem{\hhlb}(\sigma)$}
            \\
            \cellcolor{primalColor!5}$\hbigand_{x}{\hhla}$
                & \cellcolor{primalColor!5}
                  {$\hbigand \left\{ \sem{\hhla}(\sigma[x \mapsto n]) \mid n \in \Nats \right\}$}
                & \cellcolor{dualColor!5}$\hbigor_{x}{\hhla}$
                & \cellcolor{dualColor!5}
                  {$\hbigor \left\{ \sem{\hhla}(\sigma[x \mapsto n]) \mid n \in \Nats \right\}$}
            \\
            \cellcolor{primalColor!5}$\heylovalidate{\hhla}$
                & \cellcolor{primalColor!5}
                  {$\heylovalidate{w} \text{ with } \sem{\hhla}(\sigma) = w$}
                & \cellcolor{dualColor!5}$\heylocovalidate{\hhla}$
                & \cellcolor{dualColor!5}
                  {$\heylocovalidate{w} \text{ with } \sem{\hhla}(\sigma) = w$}
            \\
            \bottomrule
    \end{tabular}
    \caption{Syntax and semantics of \wHeyLo formulae. Cells colored in \textcolor{primalColor!15}{\rule{0.7em}{0.7em}} and \textcolor{dualColor!15}{\rule{0.7em}{0.7em}} mark order-dual formula constructors.}
    \label[table]{fig:wheylo_semantics}
\end{table}

The following theorem ensures that the semantics of \wHeyLo formulae are well-defined. The main challenge here are the supremum and infimum of - seemingly - arbitrary sets, even though the monoid-modules are only $\omega$-bicontinuous. However, the definition of \wHeyLo ensures that $\sem{\hbigor_x \hhla} = \hbigor_{i \in \Nats} w_i \in \Module$ exists for an $\omega$-ascending chain $(w_i)_{i \in \Nats} \subseteq \Module$ (and dually for the infimum).

\begin{restatable}{theorem}{thmWHeyLoWellDefined}
    \label{thm:wHeyLo_well_defined}
    If the module $\moduledef$ over the monoid $\monoiddef$ is $\omega$-bicontinuous and a bi-Heyting algebra, then the semantics of the \wHeyLo formulae are well-defined.
\end{restatable}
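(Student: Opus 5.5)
We prove the statement by structural induction on the \wHeyLo formula $\hhla$. The claim is that for every $\sigma \in \Sigma$ the value $\sem{\hhla}(\sigma)$ is a well-defined element of $\module$. Most constructors are immediate once the induction hypothesis gives well-defined values for the subformulae:
\begin{itemize}
    \item constants $w$ and Boolean embeddings $\embed{\bexpr}$ yield $w$, $\top$ or $\bot$, which exist because the module is a bounded lattice;
    \item $\splus$, $\mexpr \stimes \cdot$, $\hand$, $\hor$, $\impl$ and $\coimpl$ are total operations on $\module$ (respectively $\monoid \times \module$), by \Cref{def:monoid-module} and \Cref{def:heyting_algebra};
    \item $\aexpr \cdot \hhla$ is a finite $\oplus$-sum over $\sem{\aexpr}(\sigma) \in \Nats$ summands, with the empty sum equal to $\snull$;
    \item $\heylovalidate{\cdot}$ and $\heylocovalidate{\cdot}$ only compare a well-defined value with $\top$ or $\bot$.
\end{itemize}

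The only nontrivial cases are the quantifiers $\hbigor_x \hhla$ and $\hbigand_x \hhla$. Here the module is only guaranteed to be $\omega$-complete, so the supremum of an arbitrary countable set need not a priori exist. Fix $\sigma$ and put $v_n = \sem{\hhla}(\sigma\substBy{x}{n})$, which is well-defined by the induction hypothesis. Following the remark before the theorem, we replace the countable set $\{v_n \mid n \in \Nats\}$ by the chain of finite partial joins
\[
    w_i = v_0 \hor v_1 \hor \cdots \hor v_i .
\]
Each $w_i$ exists because binary joins exist in the lattice, and the sequence is $\omega$-ascending. By $\omega$-completeness of the natural order, $\hbigor_{i} w_i$ exists.

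It remains to show that this chain supremum is the supremum of the original set. It is an upper bound of every $v_n$, since $v_n \hleq w_n$. Conversely, any upper bound $u$ of all $v_n$ satisfies $w_i \hleq u$ for every $i$, by induction on $i$ using the defining property of binary joins. Hence $u$ is also an upper bound of the chain, and so $\hbigor_i w_i \hleq u$. Therefore $\hbigor\{v_n \mid n \in \Nats\} = \hbigor_i w_i$ exists.

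The infimum case is the order dual. We form the descending chain $v_0 \hand \cdots \hand v_i$ and use $\omega$-cocompleteness, which is part of $\omega$-bicontinuity, to obtain its infimum. The same argument shows this is the infimum of $\{v_n\}$.

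The main obstacle is a subtle point about orders. $\omega$-completeness is stated for the \emph{natural order} $\sleq$, while the lattice operations $\hand$ and $\hor$ belong to the bi-Heyting order $\hleq$. Both the previous paragraph and the $\embed{\cdot}$ case use the bounds $\top$, $\bot$ and binary meets and joins of a single order. So I would first state explicitly that $\hleq$ coincides with the natural order $\sleq$, which is the standing assumption of this section and can be checked in \Cref{fig:modules}. If needed, I would add a short argument that a partial order with binary joins in which every $\omega$-chain has a supremum has suprema of all countable sets, which is exactly the construction above.
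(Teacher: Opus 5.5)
Your proof is correct and follows essentially the same route as the paper. The paper isolates the partial-join chain $b_{i+1} = b_i \hor a_{i+1}$ as \Cref{lem:countable_sup_inf} (countable suprema and infima exist) and then notes that each quantifier ranges over the countable set $\{\sem{\hhla}(\sigma[x\mapsto n]) \mid n\in\Nats\}$; your explicit induction over the remaining constructors and your remark that the Heyting order must coincide with the natural order only spell out what the paper leaves implicit.
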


See \Cref{app:heyting} for a proof.
Two \wHeyLo formulae $\hhla$ and $\hhlb$ are \emph{equivalent}, denoted $\hhla \equiv \hhlb$, iff $\interpretsimple{\hhla} = \interpretsimple{\hhlb}$. By lifting the natural order of the module pointwise, we obtain $\hhla \hleq \hhlb$ iff $\hhla \impl \hhlb$ is valid and $\hhla \hgeq \hhlb$ iff $\hhla \coimpl \hhlb$ is covalid. The formula $\hhlc \in \wHeyLo$ is valid iff $\hhlc \equiv \top$ and covalid iff $\hhlc \equiv \bot$.

\begin{table}
    \centering
    \renewcommand{\arraystretch}{1.35}
    \setlength{\tabcolsep}{4pt}
    \begin{tabular}
        {
            >{\raggedright\arraybackslash}m{0.17\linewidth}
            !{\color{Platinum}\vrule width 1.5pt}
            >{\raggedright\arraybackslash}m{0.28\linewidth}
            !{\color{Platinum}\vrule width 2.5pt}
            >{\raggedright\arraybackslash}m{0.17\linewidth}
            !{\color{Platinum}\vrule width 1.5pt}
            >{\raggedright\arraybackslash}m{0.28\linewidth}
        }
            \toprule
            $\sstmt$
                & $\vc{\sstmt}(\hhla)$
                & $\sstmt$
                & $\vc{\sstmt}(\hhla)$
            \\
            \midrule
            \cellcolor{basicBackground!2}
              \raisebox{-0.5\baselineskip}[0pt][0pt]{\stmtRasgn{x}{\aexpr}}
                & \cellcolor{basicBackground!2}
                  \raisebox{-0.5\baselineskip}[0pt][0pt]{$\hhla\substBy{x}{\aexpr}$}
                & \cellcolor{basicBackground!2}
                  \begin{tabular}[c]{@{}l@{}l@{}}
                      \(\symWeightedBranching\) & ~\(\{\) \(S_1 ~\}\) \\
                      \(\symElse\)              & ~\(\{\) \(S_2 ~\}\)
                  \end{tabular}
                & \cellcolor{basicBackground!2}
                  \raisebox{-0.5\baselineskip}[0pt][0pt]
                  {$\vc{\sstmts{1}}(\hhla) + \vc{\sstmts{2}}(\hhla)$}
            \\
            \cellcolor{basicBackground!2}\stmtSeq{\sstmts{1}}{\sstmts{2}}
                & \cellcolor{basicBackground!2}$\vc{\sstmts{1}}(\vc{\sstmts{2}}(\hhla))$
                & \cellcolor{basicBackground!2}\stmtWeigh{\mexpr}
                & \cellcolor{basicBackground!2}$\mexpr \otimes \hhla$
            \\
            \cellcolor{primalColor!5}
                \begin{tabular}[c]{@{}l@{}l@{}}
                    \(\symDemonic\) & ~\(\{\) \(S_1 ~\}\) \\
                    \(\symElse\)    & ~\(\{\) \(S_2 ~\}\)
                \end{tabular}
                & \cellcolor{primalColor!5}
                  \raisebox{-0.5\baselineskip}[0pt][0pt]
                  {$\vc{\sstmts{1}}(\hhla)\dotsqcap \vc{\sstmts{2}}(\hhla)$}
                & \cellcolor{dualColor!5}
                  \begin{tabular}[c]{@{}l@{}l@{}}
                      \(\symAngelic\) & ~\(\{\) \(S_1 ~\}\) \\
                      \(\symElse\)    & ~\(\{\) \(S_2 ~\}\)
                  \end{tabular}
                & \cellcolor{dualColor!5}
                  \raisebox{-0.5\baselineskip}[0pt][0pt]
                  {$\vc{\sstmts{1}}(\hhla)\dotsqcup \vc{\sstmts{2}}(\hhla)$}
            \\
            \cellcolor{primalColor!5}\Assert{\hhlb}
                & \cellcolor{primalColor!5}$\hhlb \sqcap \hhla$
                & \cellcolor{dualColor!5}\coAssert{\hhlb}
                & \cellcolor{dualColor!5}$\hhlb \sqcup \hhla$
            \\
            \cellcolor{primalColor!5}\Assume{\hhlb}
                & \cellcolor{primalColor!5}$\hhlb \impl \hhla$
                & \cellcolor{dualColor!5}\coAssume{\hhlb}
                & \cellcolor{dualColor!5}$\hhlb \coimpl \hhla$
            \\
            \cellcolor{primalColor!5}\Havoc{x}
                & \cellcolor{primalColor!5} $\hbigand_{x}{\hhla}$
                & \cellcolor{dualColor!5}\coHavoc{x}
                & \cellcolor{dualColor!5} $\hbigor_{x}{\hhla}$
            \\
            \cellcolor{primalColor!5}\Validate
                & \cellcolor{primalColor!5}$\heylovalidate{\hhla}$
                & \cellcolor{dualColor!5}\coValidate
                & \cellcolor{dualColor!5}$\heylocovalidate{\hhla}$
            \\
            \bottomrule
    \end{tabular}
    \caption{Syntax and semantics of \wHeyVL. Cells colored in \textcolor{primalColor!15}{\rule{0.7em}{0.7em}} represent primal operators needed for underapproximations, while cells in the complementary color \textcolor{dualColor!15}{\rule{0.7em}{0.7em}} represent their dual counterparts for overapproximations.}
    \label[table]{tab:heyvl-semantics-modern}
\end{table}

The \emph{intermediate verification language} \wHeyVL generalizes the \HeyVL{} language presented by \citet{SchroerBKKM23} for probabilistic programs over the probabilistic module $\modprob$.
Intermediate verification languages are not designed to be executed directly, but allow intermediate representations that encode proof obligations for programs in an imperative style.
$\wHeyVL{}$ is a \emph{weighted} language that can be instantiated for various weighted settings.

\emph{\wHeyVL statements} are closely related to \wGCL commands, but extend them with statements that allow the encoding of proof rules (there are several examples in \Cref{sec:proof_rules}). In \Cref{tab:heyvl-semantics-modern}, we list the syntax of \wHeyVL, where the first statements have a one-on-one correspondence to statements in \wGCL. The remaining statements are the extensions that allow intermediate reasoning.

The \emph{assignment} $\stmtRasgn{x}{\aexpr}$ assigns the value of an arithmetic expression $\aexpr\in\ArithExp$ to a variable $x \in \Vars$. \emph{Sequential composition} $\stmtSeq{\sstmts{1}}{\sstmts{2}}$ executes $\sstmts{1}$ followed by $\sstmts{2}$ and \emph{weighted branching} $\stmtWeighted{\sstmts{1}}{\sstmts{2}}$ executed both substatements and combines them via the additive operation $\oplus$ of the module. The \emph{weighting} $\stmtWeigh{\mexpr}$ weighs $\mexpr \in \MonExp$ via the multiplicative operation $\otimes$ of the module.
\emph{Demonic branching} $\stmtDemonic{\sstmts{1}}{\sstmts{2}}$ executes either $\sstmts{1}$ or $\sstmts{2}$, where the choice is made to minimize the verification preweighting.
The \emph{assertion} $\stmtAssert{\hhlb}$ and \emph{assumption} $\stmtAssume{\hhlb}$ generalize classical assertions and assumptions, where $\hhlb$ is a \wHeyLo formula.
The \emph{havoc} statement $\stmtHavoc{x}$ assigns an arbitrary value to variable $x$ and the \emph{validation} statement $\stmtValidate$ turn quantitative weightings into qualitative ones (c.f. \Cref{tab:heyvl-semantics-modern}).

The previous statements are generally associated with reasoning about lower bounds of the verification preweighting. Dually, there are the \emph{angelic branching} $\stmtAngelic{\sstmts{1}}{\sstmts{2}}$, \emph{coassertion} $\coAssert{\hhlb}$, \emph{coassumption} $\coAssume{\hhlb}$, \emph{cohavoc} $\coHavoc{x}$, and \emph{covalidation} $\coValidate$ statements, which are used for over-approximations.

We specify the semantics via the recursively defined \emph{verification preweighting transformer}
$$\vc{\sstmt}: \wHeyLo \to \wHeyLo \text{ for all } C \in \wHeyVL$$
that is again related to the weakest (liberal) preweighting transformers $\symWp$ and $\symWlp$, see \Cref{def:wp_wlp}.
Let $\hhla \in \wHeyLo$ be a post. For the assignment statement $\stmtAsgn{x}{\aexpr}$, the sequential composition $\stmtSeq{\sstmts{1}}{\sstmts{2}}$, the weighted branching $\stmtWeighted{\sstmts{1}}{\sstmts{2}}$, and the weighting $\stmtWeigh{\aexpr}$, the semantics is defined just like the $\symWp$/$\symWlp$ semantics in \wGCL (see \Cref{sec:weighted_programming}).

The non-deterministic branchings $\stmtDemonic{\sstmts{1}}{\sstmts{2}}$ and $\stmtAngelic{\sstmts{1}}{\sstmts{2}}$ take the infimum $\sqcap$ and supremum $\sqcup$ of the semantics of the branches, respectively. The assertion $\stmtAssert{\hhlb}$ takes the infimum $\sqcap$ of the assertion $\hhlb$ and the post $\hhla$, while the assumption statement $\stmtAssume{\hhlb}$ is defined by the implication $\impl$. Havoc $\stmtHavoc{x}$ takes the infimum $\sqcap$ over all possible values of $x$ and the validation statement $\stmtValidate$ applies the $\heylovalidate{\cdot}$ operator to the post $\hhla$.
Dually, the semantics of the angelic branching $\stmtAngelic{\sstmts{1}}{\sstmts{2}}$, $\coAssert{\hhlb}$, $\coAssume{\hhlb}$, $\coHavoc{x}$, and $\coValidate$ statements are defined by taking the supremum $\sqcup$, the coimplication $\coimpl$, the supremum $\sqcup$ over all possible values of $x$, and the $\heylocovalidate{\cdot}$ operator, respectively. Note that the classical if-else branching corresponds to $ \stmtWeighted{\stmtAssume{\embed{b}}\symSemi \sstmts{1}}{\stmtAssume{\embed{\neg b}}\symSemi \sstmts{2}}$.

\emph{$\wHeyVL{}$ programs} are collections of \emph{procedures} and \emph{coprocedures}, which are again dual constructs. Its syntax is schematically shown in \Cref{fig:wheyvl-proc} and consists of
\begin{itemize}
    \item its \emph{signature} specifying the weighted instance $\semiringPrefix{rig}$ (e.g., \rigTrop, \rigBool, or \rigLang), the keyword $\semiringPrefix{proc}$ or $\semiringPrefix{coproc}$, the name $P$, and typed input and output parameters $\multi{\varin: \typevar}$ and $\multi{\varout: \typevar}$,
    \item its \emph{specification} via a preweighting $\hhla \in \wHeyLo$ and a postweighting $\hhlb \in \wHeyLo$, and
    \item its body $C \in \wHeyVL$ that is a \wHeyVL {statement}.
\end{itemize}
Input parameters are read-only and can only be used in the pre $\hhla$, post $\hhlb$, and body $\sstmt$, while output parameters may be written to, but cannot be used in the pre $\hhla$.

\begin{figure}[t]
    \begin{minipage}{0.45\textwidth}
        \centering
        \colorbox{primalColor!15}{\textbf{Procedure} ($\hhla \hleq \vc{\sstmt}(\hhlb)$)}
        \begin{flalign*}
             & \weightedProc{\semiringPrefix{rig}}{\procname}{\multi{\varin: \typevar}}{\multi{\varout: \typevar}} \\
             & \qquad\Requires{\hhla}                                                             \\
             & \qquad\Ensures{\hhlb}                                                              \\
             & \blockStart~\sstmt~\blockEnd
        \end{flalign*}
    \end{minipage}%
    \hfill%
    \begin{minipage}{0.45\textwidth}
        \centering
        \colorbox{dualColor!15}{\textbf{Coprocedure} ($\hhla \hgeq \vc{\sstmt}(\hhlb)$)}
        \begin{flalign*}
             & \weightedCoproc{\semiringPrefix{rig}}{\procname}{\multi{\varin: \typevar}}{\multi{\varout: \typevar}} \\
             & \qquad\Requires{\hhla}                                                               \\
             & \qquad\Ensures{\hhlb}                                                                \\
             & \blockStart~\sstmt~\blockEnd
        \end{flalign*}
    \end{minipage}

    \caption{\wHeyVL procedures (left) verify lower bounds and coprocedures (right) verify upper bounds.}
    \label{fig:wheyvl-proc}
\end{figure}

A procedure verifies if the pre $\hhla$ is a \emph{lower bound} of the verification preweighting of $\sstmt$ with respect to the post $\hhlb$. Dually, a coprocedure verifies upper bounds.
\begin{definition}
    Consider the (co)procedures from \Cref{fig:wheyvl-proc}. A procedure $P$ (with keyword $\symProc$) \emph{verifies} iff $\hhla \hleq \vc{C}(\hhlb)$.
    A coprocedure $P$ (with keyword $\symcoProc$) verifies iff $\hhla \hgeq \vc{C}(\hhlb)$.
\end{definition}
By the deduction theorem (\Cref{thm:deduction_theorem}), the verification conditions for procedures and coprocedures can be internalized in \wHeyLo as $\top \equiv \hhla \impl \vc{\sstmt}(\hhlb)$ and $\bot \equiv \hhla \dotcoimpl \vc{\sstmt}(\hhlb)$.

\section{Proof Rules \& Encodings in \wHeyVL}
\label{sec:proof_rules}

In this section, we bridge the gap between the theoretical concept \emph{weighted programming} and our verification infrastructure from the previous section. The translation of a loop-free program $\wwcom\in \wGCL$ into the verification language $\wHeyVL$ is straight-forward and depicted in \Cref{tab:encoding_loop_free_spec}. For this translation, the following theorem shows that the correspondence between the weakest (liberal) preweightings and the verification preweightings is exact.

\begin{restatable}[Loop-Free Encodings are Exact]{theorem}{thmWpEncodingExactness}
    \label{thm:wp-encoding-exactness}
    For a loop-free program $\wwcom \in \wGCL$ and postweighting \(\hhla \in \wHeyLo\), we have, for all $\sigma \in \Sigma$, that
    $
        \nosem{\vc{\procenc{\wwcom}}(\hhla)}(\State) = \wp{\wwcom}(\nosem{\hhla})(\State).
    $
\end{restatable}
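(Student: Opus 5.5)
We prove the statement by structural induction on the loop-free program $\wwcom$, simultaneously for all postweightings $\hhla \in \wHeyLo$ and all states $\State$. The translation $\procenc{\cdot}$ of \Cref{tab:encoding_loop_free_spec} is compositional. Assignment, sequential composition, branching and weighing map to their \wHeyVL counterparts. The conditional $\ITE{\bexpr}{\wwcomi}{\wwcomii}$ maps to
\[
\stmtWeighted{\stmtAssume{\embed{\bexpr}}\symSemi \procenc{\wwcomi}}{\stmtAssume{\embed{\neg\bexpr}}\symSemi \procenc{\wwcomii}}.
\]
The proof therefore reduces to comparing the clauses of \Cref{tab:heyvl-semantics-modern} with those of \Cref{fig:weakest_pre}, one constructor at a time. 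A preliminary step is to make explicit the induction hypothesis we need in the sequential case. Since $\vc{\cdot}$ maps formulae to formulae while $\symWp$ maps weightings to weightings, we phrase the claim semantically: $\sem{\vc{\procenc{\wwcom}}(\hhla)} = \wp{\wwcom}(\sem{\hhla})$ for every formula $\hhla$.

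The base cases are immediate. For assignment we need $\sem{\hhla\substBy{x}{\aexpr}} = \sem{\hhla}\substBy{x}{\aexpr}$. This is a substitution lemma for \wHeyLo, proved by a routine induction on $\hhla$; the quantifier cases $\hbigand_y$ and $\hbigor_y$ need the usual capture-avoiding renaming. For weighing, both sides are $\sem{\mexpr}(\State)\otimes\sem{\hhla}(\State)$ by definition.

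For the inductive cases, sequential composition follows by applying the hypothesis to $\wwcomii$ with post $\hhla$, and then to $\wwcomi$ with the formula $\vc{\procenc{\wwcomii}}(\hhla)$ as post. The hypothesis is stated for arbitrary formulae, so this is allowed. Branching is the pointwise sum of the two hypotheses.

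The conditional case is the only nontrivial step. By the $\symAssume$ clause and \Cref{lem:bool_embedding_impl_coimpl}, we have $\sem{\embed{\bexpr}\impl \hhlb}(\State) = \sem{\hhlb}(\State)$ if $\sem{\bexpr}(\State)$ holds, and $\top$ otherwise. The weighted branching then produces $\sem{\vc{\procenc{\wwcomi}}(\hhla)}(\State) \splus \top$ in a state where $\bexpr$ holds. The $\symWp$ side, however, gives $[\bexpr]\cdot \wp{\wwcomi}(\wla) \splus [\neg\bexpr]\cdot \wp{\wwcomii}(\wla)$, which is $\wp{\wwcomi}(\wla)(\State)\splus\snull$ there. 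This mismatch ($\top$ versus $\snull = \bot$) is the \textbf{main obstacle}. I expect the resolution is that the actual encoding of the conditional uses the Iverson sugar $\iverson{\bexpr}\cdot\hhlb \equiv \embed{\neg\bexpr}\coimpl\hhlb$, i.e.\ a $\coAssume{\embed{\neg\bexpr}}$ guard, which by the co-implication half of \Cref{lem:bool_embedding_impl_coimpl} yields $\hhlb$ when $\bexpr$ holds and $\bot$ otherwise. I would check this against \Cref{tab:encoding_loop_free_spec}.

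Once the guard is settled, the case closes with the identification $\bot = \snull$. This holds because the natural order has least element $\snull$: indeed $\snull \splus z = z$, so $\snull \sleq z$ for every $z$. Then $\bot \splus y = y$, and the two sides agree pointwise in both the $\bexpr$ and the $\neg\bexpr$ case.
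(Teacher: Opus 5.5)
Your base cases, the sequencing case and the weighted-branching case are fine and follow the same structural induction as the paper. Your explicit substitution lemma and your remark that the hypothesis must range over all formulae are, if anything, more careful than the paper. The gap is in the conditional case, where you work with the wrong encoding. In \Cref{tab:encoding_loop_free_spec}, $\procenc{\ITE{\bexpr}{\wwcomi}{\wwcomii}}$ is a \emph{demonic} choice $\symDemonic$ between $\stmtAssume{\embed{\bexpr}}\symSemi\procenc{\wwcomi}$ and $\stmtAssume{\embed{\neg\bexpr}}\symSemi\procenc{\wwcomii}$. It is neither a weighted branching nor guarded by $\coAssume{\embed{\neg\bexpr}}$. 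Your proposed resolution therefore proves exactness of a different encoding; your argument via $\bot=\snull$ does work for that one. It says nothing about the $\procenc{\cdot}$ the theorem refers to, so the conditional case remains unproven.

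The missing step is short. By \Cref{tab:heyvl-semantics-modern}, $\symDemonic$ is interpreted by the meet, so
\[
\sem{\vc{\procenc{\ITE{\bexpr}{\wwcomi}{\wwcomii}}}(\hhla)}(\State)
= \bigl(\sem{\embed{\bexpr}}(\State) \impl \sem{\vc{\procenc{\wwcomi}}(\hhla)}(\State)\bigr) \sqcap \bigl(\sem{\embed{\neg\bexpr}}(\State) \impl \sem{\vc{\procenc{\wwcomii}}(\hhla)}(\State)\bigr).
\]
By \Cref{lem:bool_embedding_impl_coimpl}, the disabled guard contributes $\bot\impl(\cdot)=\top$, which is the unit of $\sqcap$. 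The enabled guard gives $\top\impl x = x$. So the value is $\sem{\vc{\procenc{\wwcomi}}(\hhla)}(\State)$ if $\sem{\bexpr}(\State)$ holds, and $\sem{\vc{\procenc{\wwcomii}}(\hhla)}(\State)$ otherwise. After the induction hypothesis this equals $([\bexpr]\cdot\wp{\wwcomi}(\sem{\hhla}) \splus [\neg\bexpr]\cdot\wp{\wwcomii}(\sem{\hhla}))(\State)$, because the Iverson bracket puts $\snull$, the unit of $\splus$, into the other summand. No identification $\bot=\snull$ is needed.

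Your diagnosis of the obstacle is still correct and worth keeping. With weighted branching and $\symAssume$ guards you get $x \splus \top$. Since $\top \sleq x\splus\top$ in the natural order, this is always $\top$. So that variant, suggested by the informal remark in the description of the \wHeyVL statements, is not exact at all. This is why the table uses $\symDemonic$: its unit $\top$ is exactly what an $\symAssume$ yields on the disabled branch.
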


\begin{figure}
    \centering

    \begin{minipage}[t]{0.65\textwidth}
        \centering
        \renewcommand{\arraystretch}{0.95}
        \setlength{\tabcolsep}{4pt}
        \begin{align*}
        &\vcenter{\hbox{%
        \begin{tabular}
            {
                >{\raggedright\arraybackslash}m{0.39\linewidth}
                !{\color{Platinum}\vrule width 1.5pt}
                >{\raggedright\arraybackslash}m{0.53\linewidth}
            }
            \toprule
            $\wwcom$
                & \(\procenc{\wwcom}\) \hfill \(\hint{= \coprocenc{\wwcom}}\)
            \\
            \midrule
            \cellcolor{basicBackground!2}$\ASSIGN{x}{E}$
                & \cellcolor{basicBackground!2}$\stmtRasgn{x}{E}$
            \\
            \cellcolor{basicBackground!2}$\WEIGH{a}$
                & \cellcolor{basicBackground!2}$\stmtWeigh{a}$
            \\
            \cellcolor{basicBackground!2}$\wwcomi \fatsemi~ \wwcomii$
                & \cellcolor{basicBackground!2}
                  $\stmtSeq{\procenc{\wwcomi}}{\procenc{\wwcomii}}$
            \\
            \cellcolor{basicBackground!2}$\BRANCH{\wwcomi}{\wwcomii}$
                & \cellcolor{basicBackground!2}
                  $\stmtWeighted{\procenc{\wwcomi}}{\procenc{\wwcomii}}$
            \\
            \cellcolor{basicBackground!2}
              \begin{tabular}[c]{@{}l@{}}
                  \(\stmtIfStart{b}\wwcomi \}\) \\
                  \(\quad \stmtElseStart\wwcomii \}\)
              \end{tabular}
                & \cellcolor{basicBackground!2}
                  \begin{tabular}[c]{@{}l@{}}
                      \(\stmtDemonicStart
                        {\stmtSeq{\stmtAssume{\embed{b}}}{\procenc{\wwcomi}}} \}\) \\
                      \(\quad \stmtElseStart
                        {\stmtSeq{\stmtAssume{\embed{\neg b}}}{\procenc{\wwcomii}}} \}\)
                  \end{tabular}
            \\
            \cellcolor{basicBackground!2}$\WHILEDO{b}{\wwcomi}$
                & \cellcolor{basicBackground!2}$\explain{without translation}$
            \\
            \bottomrule
        \end{tabular}
        }}
        \end{align*}
    \end{minipage}
    \hfill
    \begin{minipage}[t]{0.3\textwidth}
        {
        \begin{align*}
            & \specEnc{\stmtSpec{\beta}{\hhlc}{\hhlb}} \\
            \midrule
            & \stmtAssert{\hhlc}\symSemi \inlineExplain{filter for $\hhlc$} \\
            & \stmtHavoc{\varset}\symSemi \inlineExplain{quantify modified states} \\
            & \stmtValidate\symSemi \inlineExplain{booleanize the check} \\
            & \stmtAssume{\hhlb} \inlineExplain{entailment check}
        \end{align*}}
    \end{minipage}

    \caption{Translation of loop-free \wGCL statements into \wHeyVL and encoding of the specification statement for states $\varset \subseteq \Vars$ and weightings $\hhlc, \hhlb \in \wHeyLo$.}
    \label{tab:encoding_loop_free_spec}
\end{figure}

For loop-free \wGCL constructs, $\symWp$ and $\symWlp$ coincide. However, there exists no direct translation for programs with loops.
We consider common proof rules, like e.g. Park induction \cite{BatzGKKW22} and $\kappa$-induction \cite{k_induction}.
Furthermore, we modify Park induction and $\kappa$-induction to corresponding \emph{local} proof rules that disregard variables that are not altered during the execution of the loop.
With those modified proof rules, we can use more invariants to show lower (resp. upper) bounds.
Then, we give the encoding of the premises into \wHeyVL-programs, which allows the calculation of lower (resp. upper) bounds of the weakest (liberal) preweightings.
Every proof rule in this section has a version that verifies a lower bound and uses procedures.
Every rule also has a dual version that verifies an upper bound, uses coprocedures, and dual statements, which we omit for brevity.
We write $\hhlb$ instead of $\sem{\hhlb}$ and $\vc{\cdot}(\cdot)(\sigma)$ instead of $\sem{\vc{\cdot}(\cdot)}(\sigma)$ for all $\hhlb \in \wHeyLo$ and $\sigma \in \Sigma$.

\subsection{Specification Statements}
\label{sec:spec_statement}

Before considering proof rules for loops, we look at the encoding of \emph{specification statements} \cite{specificationstatement} in \wHeyVL. Intuitively, such a statement represents a block of code by specifying a preweighting, a postweighting, and which variables are changed during the execution.
Therefore, they enable \emph{implementation abstraction}; instead of a concrete implementation, specification statements describe a program by their effects. This is key to \emph{modular verification}, where we verify components of a program separately and then analyse the derived specification statements to verify the whole program.
Furthermore, they give rise to \emph{program refinement}, where we start with specification statements and incrementally refine them into concrete implementations.

\begin{definition}
    Let $\varset \subseteq \Vars$, $\hhlc, \hhlb \in \wHeyLo$.
    The (under-approximating) specification statement $\stmtSpec{\varset}{\hhlc}{\hhlb}$ is a placeholder for the set of all \wHeyVL programs $\sstmt$ that modify only variables in $\beta$ and satisfy $\hhlc \hleq \vc{\sstmt}(\hhlb)$.
\end{definition}

To achieve a desired post $\hhla\in\wHeyLo$ using the specification $\stmtSpec{\varset}{\hhlc}{\hhlb}$, two conditions must be met. First, the pre $\hhlc$ must hold in the initial state. Second, the specification's post $\hhlb$ must be strong enough to guarantee $\hhla$ after arbitrarily changing the variables in $\varset$ while leaving all other variables untouched. The set $\varset$ contains all \emph{local variables} and, starting in an initial state $\sigma\in \Sigma$, the following definition gives us an over-approximation of all states reachable by modifying the local variables.

\begin{definition} [$\varset$-Reachable States]
    \label[definition]{def:heyvl-varset-reachable-states}
    For any state $\sigma\in\States$ and any $\varset\subseteq\Vars$, we define
    \begin{align*}
        \varsetreach{\sigma}=\{\sigma'\in\Sigma \mid \forall v\in\Vars \setminus \varset:~ \sigma(v)=\sigma'(v) \}.
    \end{align*}
\end{definition}

Now, the previous intuition translates directly into a \wHeyVL encoding. We first $\symAssert$ that the pre $\hhlc$ is a lower bound. Then, arbitrary changes to the local variables $\varset$ are mimicked by $\symHavoc$ and $\symValidate$ booleanizes the result. Finally, we can $\symAssume$ that the postcondition $\hhlb$ holds.

\begin{figure}
    \centering

    \begin{minipage}[t]{0.4\textwidth}
        \begin{align*}
            & \coSpecEnc{\coSpec{\beta}{\hhlc}{\hhlb}} \\
            \midrule
            & \coAssert{\hhlc}\symSemi
            \coHavoc{\varset}\symSemi
            \coValidate\symSemi
            \coAssume{\hhlb} \\[1em]
            & \charEnc{\wwcom, \hinv} \\
            \midrule
            & \stmtIfStart{b} \\
            & \quad \procenc{\wwcom} \symSemi \inlineExplain{encoding of loop body} \\
            & \quad \stmtAssert{\hinv} \symSemi \inlineExplain{filter for invariant} \\
            & \quad \stmtAssume{\embed{\false}} \symSemi \inlineExplain{establish truth} \\
            & \}~\stmtElseStart{} \}
        \end{align*}
    \end{minipage}
    \hfill
    \begin{minipage}[t]{0.3\textwidth}
        \begin{align*}
            & \parkEnc{\wwcom, \beta, \hinv} = \\
            \midrule
            & \tikzmark{spec1} \stmtAssert{\hinv}\symSemi
            \stmtHavoc{\beta}\symSemi \\
            & \tikzmark{spec2} \stmtValidate\symSemi
            \stmtAssume{\hinv}\symSemi \\[0.5em]
            & \tikzmark{char1} \stmtIfStart{b} \\
            & \quad \procenc{\wwcom}\symSemi \\
            & \quad \stmtAssert{\hinv}\symSemi \\
            & \quad \stmtAssume{\embed{\false}}\symSemi \\
            & \tikzmark{char2} \}~\stmtElseStart \}
        \end{align*}
    \end{minipage}

    \begin{tikzpicture}[overlay, remember picture]
        \draw[thick, decorate, decoration={brace, amplitude=5pt}, color=hintcolor] ([xshift=-1.5em]pic cs:spec2) -- node[midway, rotate=90, yshift=2em, xshift=0pt] () {\small \shortstack{$\specEncShort$ \\ $(\stmtSpec{\beta}{\hinv}{\hinv})$}} ([xshift=-1.5em, yshift=5pt]pic cs:spec1);

        \draw[thick, decorate, decoration={brace, amplitude=5pt}, color=hintcolor] ([shift={(-1.5em,0)}]pic cs:char2) -- node[midway, rotate=90, yshift=1.5em, xshift=0cm] () {\small $\charEnc{\wwcom, \hinv}$} ([shift={(-1.5em,5pt)}]pic cs:char1);
    \end{tikzpicture}

    \caption{Encoding of the specification statement, the characteristic function, and Park induction for variables $\varset \subseteq \Vars$, weightings $\hhlc, \hhlb \in \wHeyLo$, loop $\wwcom = \WHILEDO{b}{\wwcoma} \in \wGCL$, and invariant $\hinv \in \wHeyLo$.}
    \label{tab:encoding_spec_char_park}
\end{figure}

\begin{restatable}[Semantics]{lemma}{lemHeyvlSpecificationEncoding}
    \label[lemma]{lem:heyvl-specification-encoding}
    For $\varset \subseteq \Vars$ and $\hhlc, \hhlb, \hhla \in \wHeyLo$, and $\sigma \in \Sigma$, the \emph{specification statement encoding} $\specEnc{\stmtSpec{\varset}{\hhlc}{\hhlb}}$ given in \Cref{tab:encoding_loop_free_spec} satisfies
    \begin{align*}
        \nosem{\vc{\specEnc{\stmtSpec{\varset}{\hhlc}{\hhlb}}}(\hhla)}(\sigma)=
        \begin{cases}
            \nosem{\hhlc}(\sigma), & \text{if } ~\nosem{\hhlb}(\sigma') \hleq \nosem{\hhla}(\sigma')~\forall \sigma' \in \varsetreach{\sigma} \\
            \bot, & \text{otherwise}
        \end{cases}.
    \end{align*}
\end{restatable}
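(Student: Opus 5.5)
The plan is to compute the verification preweighting of the four-statement encoding backwards, using the rules of \Cref{tab:heyvl-semantics-modern} for sequential composition, and then to evaluate the resulting \wHeyLo{} formula pointwise at $\sigma$. Unfolding $\stmtAssert{\hhlc}\symSemi\stmtHavoc{\varset}\symSemi\stmtValidate\symSemi\stmtAssume{\hhlb}$ gives
\[
  \vc{\specEnc{\stmtSpec{\varset}{\hhlc}{\hhlb}}}(\hhla) \;=\; \hhlc \hand \heylovalidate{\hbigand_{\varset}\,(\hhlb \impl \hhla)},
\]
where $\hbigand_{\varset}$ abbreviates the iterated infimum quantifier $\hbigand_{x_1}\cdots\hbigand_{x_k}$ over the variables of $\varset$. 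I will assume $\varset$ is finite, as it must be for $\stmtHavoc{\varset}$ to be a finite sequence of havocs.

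\textbf{Step 1: the havoc.} I will show that $\sem{\hbigand_{\varset}\,\hhlc'}(\sigma) = \hbigand\{\sem{\hhlc'}(\sigma') \mid \sigma' \in \varsetreach{\sigma}\}$ for any formula $\hhlc'$. The proof is by induction on $|\varset|$, using the semantics of $\hbigand_x$ from \Cref{fig:wheylo_semantics}. The inductive step needs two facts. First, nested infima flatten into one infimum over the product index set. Second, the states $\sigma[x_1\mapsto n_1]\cdots[x_k\mapsto n_k]$ range exactly over $\varsetreach{\sigma}$. The second fact holds because states have finite support and only the variables in $\varset$ are changed.

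The existence of these infima is the one delicate point, and I expect it to be the main obstacle. The module is only $\omega$-bicontinuous, so arbitrary infima are not automatically available. I will not reprove existence; I will appeal to \Cref{thm:wHeyLo_well_defined}, which guarantees that each single quantifier is well-defined. Flattening nested infima into a single one is then a standard order-theoretic fact: given the inner infima, $\hbigand_i\hbigand_j a_{ij}$ is the greatest lower bound of all the $a_{ij}$, so the flattened infimum exists and the two agree.

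\textbf{Step 2: validation and implication.} Write $m$ for the infimum from Step 1, applied to $\hhlc' = \hhlb \impl \hhla$. By \Cref{def:heyting-validation}, $\heylovalidate{m} = \top$ if $m = \top$, and $\bot$ otherwise. Since $\top$ is the greatest element, $m = \top$ holds iff every member of the set equals $\top$, that is, iff $\hhlb(\sigma') \impl \hhla(\sigma') = \top$ for all $\sigma' \in \varsetreach{\sigma}$. By the deduction theorem (\Cref{thm:deduction_theorem}), this is equivalent to $\hhlb(\sigma') \hleq \hhla(\sigma')$ for all such $\sigma'$.

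\textbf{Step 3: conclusion.} The assert contributes $\hhlc(\sigma) \hand \heylovalidate{m}$. Because $a \hand \top = a$ and $a \hand \bot = \bot$ in a bounded lattice, this equals $\hhlc(\sigma)$ when the entailment condition holds and $\bot$ otherwise. These are exactly the two cases of the lemma.
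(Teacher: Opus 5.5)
Your overall strategy matches the paper's proof: compute $\symVc$ backwards, use that the havoc ranges over exactly $\varsetreach{\sigma}$, observe that validate-then-assume turns $\hhlb \hleq \hhla$ into a $\top/\bot$ test via the deduction theorem, and meet the result with $\hhlc$. Your treatment of finite $\varset$ and of flattening nested infima is more careful than the paper's sketch.

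\textbf{The unfolding step is wrong.} In $\stmtAssert{\hhlc}\symSemi\stmtHavoc{\varset}\symSemi\stmtValidate\symSemi\stmtAssume{\hhlb}$ the validation comes \emph{after} the havoc in program order. The backward computation therefore gives
\[
  \vc{\specEnc{\stmtSpec{\varset}{\hhlc}{\hhlb}}}(\hhla) \;=\; \hhlc \hand \hbigand_{\varset}\, \heylovalidate{\hhlb \impl \hhla},
\]
with $\heylovalidate{\cdot}$ inside the infimum, not $\hhlc \hand \heylovalidate{\hbigand_{\varset}(\hhlb \impl \hhla)}$.

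The final result is unaffected, because validation commutes with infimum quantifiers. This is the validation clause of \Cref{thm:quantifier-elimination-positive-prenexing}, and it is immediate: both sides are $\top$ iff every $\sem{\hhlb \impl \hhla}(\sigma')$ is $\top$, and $\bot$ otherwise. As written, though, your displayed equation is an unjustified step, not a literal unfolding.

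\textbf{The fix is small.} Keep the correct formula and apply your Step 1 to $\heylovalidate{\hhlb \impl \hhla}$. Each value $\heylovalidate{\sem{\hhlb}(\sigma') \impl \sem{\hhla}(\sigma')}$ lies in $\{\bot,\top\}$. So the infimum over $\varsetreach{\sigma}$ is $\top$ if all of them are $\top$, and $\bot$ otherwise, since some member is then the least element $\bot$. The deduction theorem gives the entailment condition exactly as in your Step 2. Steps 1 and 3 are fine as they stand.
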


By applying $\symVc$ sequentially, we derive
\[
    \vc{\stmtAssert{\hhlc}}(\vc{\stmtHavoc{\beta}}(\vc{\stmtValidate}(\vc{\stmtAssume{\hhlb}}(\hhla))))
    = \hhlc \hand \hbigand_{\beta} \heylovalidate{\hhlb \impl \hhla},
\]
where $\symAssume$ evaluates to $\top$, if $\nosem{\hhlb} \hleq \nosem{\hhla}$, and $\nosem{\hhlb}$, otherwise. Due to $\symValidate$, this yields $\top$, if $\nosem{\hhlb} \hleq \nosem{\hhla}$, and $\bot$, otherwise. The statement $\symHavoc$ checks this condition for all states in $\varsetreach{\State}$ and $\symAssert$ filters for $\hhlc$ (see \Cref{sec:appendix-specification-statements}).
The soundness theorem below states that the encoding $\specEnc{\stmtSpec{\varset}{\hhlc}{\hhlb}}$ under-approximates the semantics of any statement that satisfies the respective specification.

\begin{restatable}[Soundness]{theorem}{thmHeyvlSpecificationSoundness}
    \label[theorem]{thm:heyvl-specification-soundness}
    Let $\varset \subseteq \Vars$, $\hhlc, \hhlb \in \wHeyLo$, and $\sstmt \in \wHeyVL$ be a statement that only modifies variables in $\varset$ and satisfies $\hhlc \hleq \vc{\sstmt}(\hhlb)$.
    Then, for all $\hhla \in \wHeyLo$, we have $\vc{\specEnc{\stmtSpec{\varset}{\hhlc}{\hhlb}}}(\hhla) \hleq \vc{\sstmt}(\hhla)$.
\end{restatable}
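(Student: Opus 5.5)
The plan is to reduce the statement to the semantic characterization in \Cref{lem:heyvl-specification-encoding} together with a \emph{local monotonicity} property of $\symVc$. Fix $\hhla \in \wHeyLo$ and $\sigma \in \Sigma$, and apply \Cref{lem:heyvl-specification-encoding} to $\vc{\specEnc{\stmtSpec{\varset}{\hhlc}{\hhlb}}}(\hhla)(\sigma)$. This gives two cases.

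If there is some $\sigma' \in \varsetreach{\sigma}$ with $\hhlb(\sigma') \not\hleq \hhla(\sigma')$, the left-hand side equals $\bot$. Then $\bot \hleq \vc{\sstmt}(\hhla)(\sigma)$ holds trivially, since the module is a bounded lattice.

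Otherwise $\hhlb(\sigma') \hleq \hhla(\sigma')$ for all $\sigma' \in \varsetreach{\sigma}$, and the left-hand side equals $\hhlc(\sigma)$. The hypothesis gives $\hhlc(\sigma) \hleq \vc{\sstmt}(\hhlb)(\sigma)$. It therefore suffices to show
\[
  \vc{\sstmt}(\hhlb)(\sigma) \hleq \vc{\sstmt}(\hhla)(\sigma),
\]
and then conclude by transitivity.

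This reduces the theorem to a local monotonicity lemma, which I would state and prove separately. For every $\sstmt \in \wHeyVL$ modifying only variables in $\varset$, every $\sigma$, and all $\hhlb, \hhla$:
\[
  \text{if } \hhlb(\sigma') \hleq \hhla(\sigma') \text{ for all } \sigma' \in \varsetreach{\sigma}, \text{ then } \vc{\sstmt}(\hhlb)(\sigma'') \hleq \vc{\sstmt}(\hhla)(\sigma'') \text{ for all } \sigma'' \in \varsetreach{\sigma}.
\]
Stating the conclusion for all $\sigma''$ in the class, not just $\sigma$, is what makes the induction go through. It works because $\varsetreach{\cdot}$ is the equivalence class of ``agreeing outside $\varset$'', so $\varsetreach{\sigma''} = \varsetreach{\sigma}$ for every $\sigma'' \in \varsetreach{\sigma}$.

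I would prove the lemma by structural induction on $\sstmt$, following \Cref{tab:heyvl-semantics-modern}.
\begin{itemize}
\item For an assignment $x \coloneqq \aexpr$, we have $x \in \varset$ by assumption, so $\sigma''[x \mapsto \aexpr] \in \varsetreach{\sigma}$ and the hypothesis applies directly.
\item For $\symHavoc~x$ and $\symUp\symHavoc~x$, again $x \in \varset$, so every $\sigma''[x \mapsto n]$ stays in the class. Pointwise comparison of the family then passes to $\hbigand$ and $\hbigor$, which exist by \Cref{thm:wHeyLo_well_defined}.
\item Sequential composition follows from applying the induction hypothesis twice, using the strengthened (class-wide) form.
\item Weighted, demonic and angelic branching, $\symWeigh$, and (co)assert reduce to monotonicity of $\oplus$, $\otimes$, $\hand$ and $\hor$ with respect to the natural order. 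For $\oplus$ and $\otimes$ this follows from the definition of the natural order and distributivity.
\item (Co)assume requires that $\impl$ and $\coimpl$ are monotone in their second argument. This follows from the adjunctions in \Cref{def:heyting_algebra}: if $b \hleq b'$ then $a \hand (a\impl b) \hleq b \hleq b'$, hence $a \impl b \hleq a \impl b'$; the argument for $\coimpl$ is dual.
\item For $\symValidate$ and $\symUp\symValidate$, $\heylovalidate{\cdot}$ and $\heylocovalidate{\cdot}$ are monotone: if $x = \top$ and $x \hleq y$ then $y = \top$, and dually for $\bot$.
\end{itemize}
Formulas occurring inside $\sstmt$, such as asserted or assumed weightings, are evaluated at the current state and do not depend on the postweighting, so they pose no problem.

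I expect the main obstacle to be making ``$\sstmt$ only modifies variables in $\varset$'' precise. I would take it syntactically: every assigned or (co)havoced variable lies in $\varset$, and this is inherited by substatements. I would then carry the class-wide induction hypothesis carefully through sequential composition and havoc. Everything else is routine monotonicity of the lattice and module operations.
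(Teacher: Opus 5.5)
Your proposal is correct and follows the paper's proof. Both use the same case split on \Cref{lem:heyvl-specification-encoding}, and in the non-trivial case both reduce, by transitivity, to $\vc{\sstmt}(\hhlb)(\sigma) \hleq \vc{\sstmt}(\hhla)(\sigma)$, which holds because $\sstmt$ evaluates its postweighting only at states in $\varsetreach{\sigma}$. The paper merely asserts this local monotonicity in one sentence, whereas your structural induction with the class-wide hypothesis (using that the sets $\varsetreach{\cdot}$ partition $\Sigma$) actually proves it.
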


The proof follows directly from \Cref{lem:heyvl-specification-encoding} and the monotonicity of the $\wHeyVL$ semantics.

We specify \emph{over-approximating} specification statements $\coSpec{\beta}{\hhlc}{\hhlb}$, where we consider upper bounds and $\vc{\sstmt}(\hhlb) \hleq \hhlc$ holds. The corresponding encoding $\coSpecEnc{\coSpec{\beta}{\hhlc}{\hhlb}}$ is obtained by substituting all statements in $\specEnc{\cdot}$ by their dual \texttt{co}-statements, see \Cref{tab:encoding_spec_char_park}. The semantics, the soundness theorem, and the corresponding proofs are exactly dual.

\subsection{Finite Loop Unrolling}
\label{sec:loop_unrolling}

While the encoding of loop-free programs is straightforward, we do not have an exact encoding of programs with loops. We can, however, approximate loops by proof rules.
This leads us to the first proof rule \emph{finite loop unrolling}. We can literally unroll a loop $\wwcom = \WHILEDO{b}{\wwcoma} \in \wGCL$ once, twice, or more times and approximate it via
$$ \IF{b}{~ \wwcoma} ~\}, \quad \IF{b}{~ \wwcoma \semcol \IF{b}{~ \wwcoma}} ~\} ~\}, \quad \ldots $$
For an $n$-unrolling $\wwcomtimes{n}$ of the loop $\wwcom$ and postweighting $\hhla \in \wHeyLo$, the weakest preweighting is $\wp{\wwcomtimes{n}}(\hhla) = \wpcharhhla^n(\bot)$. We know from \Cref{sec:weighted_programming} and Kleene's fixed point theorem \cite{BatzGKKW22} that the weakest preweightings is
$
    \wp{\wwcom}(\hhla) = \lfp X \wpcharhhla(X) = \hbigor_{n \in \Nats} \wpcharhhla^n (\bot)
$
and hence $\wp{\wwcomtimes{n}}(\hhla) \hleq \wp{\wwcom}(\hhla)$ for all $n \in \Nats$. The more times we unroll the loop, the better the approximation becomes.

\begin{restatable}[Finite Loop Unrolling]{lemma}{lemFiniteLoopUnrolling}
    \label[lemma]{lem:loop_unrolling}
    For a loop $\wwcom = \WHILEDO{b}{\wwcoma} \in \wGCL$, weightings $\wla, \wlc \in \weightings$, and the corresponding characteristic function $\wpchar$, we have, for any $n \in \Nats$, that
    $$
        \nosem{\wlc} \hleq \wpchar^n(\bot)
        \quad\text{implies}\quad
        \nosem{\wlc} \hleq \wp{\wwcom}(\nosem{\wla}).
    $$
\end{restatable}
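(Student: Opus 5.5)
By transitivity of the natural order $\hleq$, it suffices to show $\wpchar^n(\bot) \hleq \wp{\wwcom}(\nosem{\wla})$ for every $n \in \Nats$. By the $\symWp$ rule for loops in \Cref{fig:weakest_pre}, $\wp{\wwcom}(\nosem{\wla}) = \lfp{\wX}{\wpchar(\wX)}$. By \Cref{thm:weakest_pre} and Kleene's fixed point theorem, this least fixed point equals $\hbigor_{i \in \Nats} \wpchar^i(\bot)$, the supremum of the Kleene chain $(\wpchar^i(\bot))_{i\in\Nats}$. Any element of a chain lies below its supremum, so $\wpchar^n(\bot) \hleq \wp{\wwcom}(\nosem{\wla})$, and combining this with the hypothesis $\nosem{\wlc} \hleq \wpchar^n(\bot)$ yields the claim.

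I would spell out two points for completeness. First, $(\wpchar^i(\bot))_i$ is indeed an $\omega$-ascending chain. Since $\wpchar$ is monotone (because $\wp{\wwcoma}$ is $\omega$-continuous and hence monotone by \Cref{thm:weakest_pre}, and $\oplus$ and Iverson filtering are monotone in the natural order), an induction starting from $\bot \hleq \wpchar(\bot)$ gives $\wpchar^i(\bot) \hleq \wpchar^{i+1}(\bot)$. The supremum of this chain exists by $\omega$-continuity of $\Weighting$, which is lifted pointwise from $\Module$. Second, all orders here are on weightings and are lifted pointwise from the partial order on $\Module$, so transitivity is immediate.

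As an alternative that avoids naming the supremum, one can show by induction on $i$ that $\wpchar^i(\bot) \hleq \mu$, where $\mu$ is the least fixed point. The base case holds since $\bot$ is least. For the step, monotonicity gives $\wpchar^{i+1}(\bot) = \wpchar(\wpchar^i(\bot)) \hleq \wpchar(\mu) = \mu$. I do not expect a real obstacle. The only subtle point is to justify monotonicity of $\wpchar$, which follows from $\omega$-continuity of $\symWp$ (\Cref{thm:weakest_pre}) together with monotonicity of $\oplus$ in the natural order.
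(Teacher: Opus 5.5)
Your proposal is correct and follows essentially the same argument as the paper: identify $\wp{\wwcom}(\nosem{\wla})$ with the supremum of the Kleene chain $\wpchar^m(\bot)$, note that each approximant lies below it, and conclude by transitivity. Your additional justification that the Kleene sequence is an ascending chain (via monotonicity of $\wpchar$) and your alternative induction against the least fixed point are sound refinements that the paper leaves implicit.
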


The proof of the previous lemma applies Kleene's fixed point theorem and uses the monotonicity of the join $\hor$. For the encoding and a postweighting $\hhla \in \wHeyLo$, we translate the characteristic function $\wpcharhhla(\hinv) = \iverson{b}\cdot \wp{\wwcoma}(\hinv) \oplus \iverson{\neg b}\cdot \hhla$ into the \wHeyVL program $\charEnc{\wwcom, \hinv}$ in \Cref{tab:encoding_spec_char_park}, where $\hinv \in \wHeyLo$. If $b$ holds, this program encodes the loop body $\procenc{\wwcoma}$, $\symAssert$s that $\hinv$ holds, and uses $\stmtAssume{\embed{\false}}$ afterwards to discard the current assertion. If $b$ does not hold, then the postweighting $\hhla$ holds.

\begin{restatable}[Semantics]{lemma}{lemHeyvlCharacteristicEncodingSemantics}
    \label[lemma]{lem:heyvl-characteristic-encoding-semantics}
    For a loop $\wwcom = \WHILEDO{b}{\wwcoma} \in \wGCL$, weightings $\hhla, \hinv \in \wHeyLo$, an encoding $\procenc{\wwcoma}$ with $\vc{\procenc{\wwcoma}}(\hhlb) \hleq \wp{\wwcoma}(\hhlb)$ for all $\hhlb \in \wHeyLo$, and $\sigma\in\Sigma$, the encoding of the characteristic function $\wpcharhhla(\hinv)$ given in \Cref{tab:encoding_spec_char_park} satisfies
    $$
        \nosem{\vc{\charEnc{\wwcom, \hinv}}(\hhla)}(\sigma) =
        \begin{cases}
            \nosem{\vc{\procenc{\wwcoma}}(\hinv)}(\sigma), & \text{if } \sem{\bexpr}(\sigma) \\
            \nosem{\hhla}(\sigma), & \text{otherwise}
        \end{cases}.
    $$
\end{restatable}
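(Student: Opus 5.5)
The plan is to compute $\vc{\charEnc{\wwcom, \hinv}}(\hhla)$ symbolically, working backwards through the encoding in \Cref{tab:encoding_spec_char_park} with the rules of \Cref{tab:heyvl-semantics-modern}, and then to evaluate the resulting \wHeyLo formula pointwise at $\sigma$. The assumed bound $\vc{\procenc{\wwcoma}}(\hhlb) \hleq \wp{\wwcoma}(\hhlb)$ is not needed for the equality itself, because the statement is phrased in terms of $\vc{\procenc{\wwcoma}}(\hinv)$. That hypothesis only matters later, when this lemma is combined with the loop rules.

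\emph{Then-branch.} First I unfold the body $\stmtSeq{\procenc{\wwcoma}}{\stmtSeq{\stmtAssert{\hinv}}{\stmtAssume{\embed{\false}}}}$ from the back.
\begin{itemize}
\item By the assume rule, $\vc{\stmtAssume{\embed{\false}}}(\hhla) = \embed{\false} \impl \hhla$. By \Cref{lem:bool_embedding_impl_coimpl} with $b = \false$, this is $\top$ in every state.
\item The assertion then gives $\hinv \hand \top \equiv \hinv$, since $\top$ is the greatest element of the bounded lattice.
\item The sequential-composition rule then yields $\vc{\procenc{\wwcoma}}(\hinv)$ for the whole then-branch.
\end{itemize}
The else-branch is empty, i.e.\ the identity (skip) statement, so its verification preweighting is just $\hhla$.

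\emph{Conditional.} Next I translate the conditional as in \Cref{tab:encoding_loop_free_spec}: a demonic branching whose branches are guarded by $\stmtAssume{\embed{b}}$ and $\stmtAssume{\embed{\neg b}}$. This gives
\[
\vc{\charEnc{\wwcom, \hinv}}(\hhla) \equiv \bigl(\embed{b} \impl \vc{\procenc{\wwcoma}}(\hinv)\bigr) \hand \bigl(\embed{\neg b} \impl \hhla\bigr).
\]

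\emph{Pointwise evaluation.} Since every \wHeyLo connective is interpreted pointwise (\Cref{fig:wheylo_semantics}), I evaluate at $\sigma$ and apply \Cref{lem:bool_embedding_impl_coimpl} to each conjunct.
\begin{itemize}
\item If $\sem{b}(\sigma)$ holds, the first conjunct equals $\vc{\procenc{\wwcoma}}(\hinv)(\sigma)$ and the second equals $\top$. Their meet is $\vc{\procenc{\wwcoma}}(\hinv)(\sigma)$.
\item Otherwise the first conjunct is $\top$ and the meet is $\hhla(\sigma)$.
\end{itemize}
This is exactly the ternary-conditional example, and it gives the claimed case distinction.

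I do not expect a real obstacle here. The only points requiring care are bookkeeping ones. First, $\embed{b}$ must be read as the state-dependent formula whose value at $\sigma$ is $\top$ or $\bot$ according to $\sem{b}(\sigma)$, so that \Cref{lem:bool_embedding_impl_coimpl} applies pointwise. Second, the empty else-branch must be justified as having identity verification preweighting.
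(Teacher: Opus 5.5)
Your proposal is correct and follows essentially the same route as the paper's proof. Both desugar the conditional into a demonic choice guarded by $\symAssume$ statements, collapse the then-branch to $\vc{\procenc{\wwcoma}}(\hinv)$ via $\embed{\false} \impl \hhla \equiv \top$ and $\hinv \hand \top \equiv \hinv$, and resolve the two guarded implications pointwise as in \Cref{lem:bool_embedding_impl_coimpl}. You are also right that the soundness hypothesis on $\procenc{\wwcoma}$ plays no role in this equality.
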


Hence, the translation $\charEncShort$ encodes exactly the characteristic function $\Phi$ and, via applying the program $n \in \Nats$ times, we use \Cref{lem:loop_unrolling} to derive the following property.

\begin{restatable}[Soundness]{lemma}{lemCharacteristicEncodingSoundness}
    \label[lemma]{lem:characteristic_function_soundness}
    For a loop $\wwcom = \WHILEDO{b}{\wwcoma} \in \wGCL$, weightings $\hhla, \hinv \in \wHeyLo$, $n \in \Nats$, and an encoding $\procenc{\wwcoma}$ with $\vc{\procenc{\wwcoma}}(\hhlb) \hleq \wp{\wwcoma}(\hhlb)$ for all $\hhlb \in \wHeyLo$, the encoding $\charEncShort$ is a sound under-approximation, thus
    $
        \nosem{\vc{\charEnc{\wwcom, \bot}^n}(\hhla)} \hhleq \wp{\wwcom}(\nosem{\hhla}).
    $
\end{restatable}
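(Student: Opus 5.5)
We read $\charEnc{\wwcom, \bot}^n$ as the $n$-fold repetition of the characteristic-function encoding. The innermost copy takes the post $\bot$ and each outer copy takes the preweighting produced by the copy inside it. Concretely, set $\hhlb_0 = \bot$ and $\hhlb_{k+1} = \vc{\charEnc{\wwcom, \hhlb_k}}(\hhla)$. The left-hand side is then $\hhlb_n$. The plan is to prove by induction on $k$ that
\[
\nosem{\hhlb_k} \hhleq \wpcharhhla^k(\bot) \quad\text{for all } k \in \Nats,
\]
and then to conclude with \Cref{lem:loop_unrolling}, instantiated with $\wlc = \hhlb_n$ and $\wla = \hhla$.

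The base case $k=0$ is immediate, since $\hhlb_0 = \bot = \wpcharhhla^0(\bot)$.

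For the inductive step, fix $\sigma$ and apply \Cref{lem:heyvl-characteristic-encoding-semantics} with invariant $\hhlb_k$. If $\sem{\bexpr}(\sigma)$ holds, we get $\hhlb_{k+1}(\sigma) = \vc{\procenc{\wwcoma}}(\hhlb_k)(\sigma)$. The hypothesis on $\procenc{\wwcoma}$ bounds this by $\wp{\wwcoma}(\hhlb_k)(\sigma)$. Monotonicity of $\wp{\wwcoma}$ (it is $\omega$-continuous by \Cref{thm:weakest_pre}, hence monotone) together with the induction hypothesis bounds it further by $\wp{\wwcoma}(\wpcharhhla^k(\bot))(\sigma)$. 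If $\neg\sem{\bexpr}(\sigma)$, we get $\hhlb_{k+1}(\sigma) = \hhla(\sigma)$.

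It remains to compare these values with $\wpcharhhla^{k+1}(\bot)(\sigma) = ([\neg\bexpr]\cdot\hhla \oplus [\bexpr]\cdot\wp{\wwcoma}(\wpcharhhla^k(\bot)))(\sigma)$. When $\bexpr$ holds, the Iverson bracket $[\neg\bexpr]$ contributes $\snull$. When $\bexpr$ fails, $[\bexpr]$ contributes $\snull$. Since $\snull$ is neutral for $\oplus$, the right-hand side is exactly the case-split value, and the pointwise inequality closes the induction.

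The main obstacle is one subtle step: $\wp{\wwcoma}$ is defined on semantic weightings, while the hypothesis on $\procenc{\wwcoma}$ is stated for \wHeyLo formulae $\hhlb$. So we must check that the hypothesis is really used with a formula argument and that monotonicity is applied on the semantic side. This holds here because $\hhlb_k$ is itself a \wHeyLo formula. A second point is that the natural order must coincide with the lattice order $\hleq$ used throughout, so that $\snull = \bot$ and $\oplus$ acts as the join on the case split. We take this as part of the standing assumptions of the section, namely modules that are bounded lattices ordered by the natural order. Finally, \Cref{lem:loop_unrolling} with $\wlc = \hhlb_n$ yields $\hhlb_n \hleq \wp{\wwcom}(\nosem{\hhla})$, which is the claim.
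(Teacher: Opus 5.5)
Your proof is correct and follows the paper's argument. The paper likewise uses \Cref{lem:heyvl-characteristic-encoding-semantics} and the hypothesis on $\procenc{\wwcoma}$ to get the one-step bound $\vc{\charEnc{\wwcom,\hinv}}(\hhla) \hleq \wpcharhhla(\hinv)$, inducts on $n$ to reach $\wpcharhhla^n(\bot)$, and closes with \Cref{lem:loop_unrolling}; you additionally make explicit the nested reading of the $n$-fold encoding and the monotonicity of $\wp{\wwcoma}$ (hence of $\wpcharhhla$), both of which the paper leaves implicit.
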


We use this proof rule by checking $\hhlc \hleq \vc{\charEnc{\wwcom, \bot}^n}(\hhla)$ for some $n \in \Nats$ and $\hhlc, \hhla \in \wHeyLo$. If this holds, we can conclude that $\hhlc \hleq \wp{\wwcom}(\hhla)$ is a lower bound.

The dual proof rule exists for the weakest \emph{liberal} preweightings of loops, where we obtain an upper bound by applying the characteristic function repeatedly. In this case, the encoding $\coCharEnc{\wwcom, \hinv}$ contains the dual \texttt{co}-statements (note that the dual of $\stmtAssume{\embed{\false}}$ is $\coAssume{\embed{\true}}$) and everything follows similarly.

\subsection{Local Park Induction}
\label{sec:local_park_induction}

Park induction allows reasoning about an unbounded number of steps.
Also, loop unrolling gives us a lower bound for the weakest preweighting of a loop, while Park induction yields an upper bound (and the dual holds for weakest \emph{liberal} preweightings).
Instead of applying the characteristic function to the least or greatest weighting, we need to \emph{guess} a suitable loop \emph{invariant} to under-approximate the weakest liberal preweighting of the loop (and over-approximate weakest preweightings).

More specifically, we call $\hinv \in \wHeyLo$ a \emph{subinvariant}, if $\nosem{\hinv} \hleq \wlpchar(\nosem{\hinv})$, and a \emph{superinvariant}, if $\wpchar(\nosem{\hinv}) \hleq \hinv$.
\emph{Park induction} states that the invariant is a lower bound on the weakest liberal preweighting of the loop $\wwcom\in \wGCL$ for postweighting $\hhla \in \wHeyLo$, hence $\nosem{\hinv} \hleq \wlp{\wwcoma}(\hhla)$, if $\hinv$ is a subinvariant \cite{BatzGKKW22}. For $\wp{\wwcom}(\hhla)$, the weighting $\hinv$ is an upper bound if it is a superinvariant.

Here, we define \emph{local Park induction} on a restricted state space that depends variables that are modified during the execution of the loop, so-called \emph{local} variables (see \Cref{sec:spec_statement}). This alteration allows us to prove more (correct) invariants than the original Park induction rule and makes the rule more complete.
In the classical setting of Boolean verification, Boogie/BoogiePL-based encodings implement this idea as part of their encodings \cite{leino2007a} and related loop-cutting transformations have been validated operationally \cite{parthasarathy2021formallyvalidatingpracticalverification}.
In the quantitative setting, the tool \Caesar{} also employs this enhanced rule, although its proof was missing.

Suppose that $\statesclosed \subseteq \Sigma$ is a set of states that is \emph{closed under the $\beta$-reachable states}, more formally, $\varsetreach{\sigma} \subseteq \statesclosed$ for all $\sigma \in \statesclosed$. Then, we alter the original proof rule to \emph{local Park induction} by considering $\statesclosed$-restricted weightings $\hhla_{|\statesclosed}$ and restricted inequalities $\bhleq$.

\begin{definition}
    For weightings $\wla, \wlb \in \weightings$ and a set of states $\statesclosed \subseteq \Sigma$ that is closed under $\beta$-reachability, we consider $\statesclosed$-restricted weightings $\wla_{|\statesclosed}, \wlb_{|\statesclosed}: \statesclosed \to \Module$ with the \emph{$\statesclosed$-restricted order}
    $
        \wla_{|\statesclosed} \bhleq \wlb_{|\statesclosed} \text{ if and only if }
        \wla(\sigma) \hleq \wlb(\sigma) ~\forall \sigma \in \statesclosed.
    $
\end{definition}

This restriction keeps all the previous properties, since the set of restricted weightings is again an $\omega$-bicontinuous monoid-module and in particular also $\omega$-bicomplete partially ordered.

\begin{restatable}[Closedness under Reachability]{lemma}{lemClosedness}
    \label[lemma]{lem:closedness_under_reachability}
    For a program $\wwcom \in \wGCL$ with local variables $\varset \subseteq \Vars$ and a set $\statesclosed \subseteq \Sigma$ of states that is closed under $\varset$-reachability, $\wlp{\wwcom}: \weightings \to \weightings$ is closed under $\statesclosed$-restricted weightings: for all $\wla \in \weightings$, there exists $\wlb \in \weightings$ such that
    $\wlp{\wwcom}(\wla_{|\statesclosed}) = \wlb_{|\statesclosed}$.
\end{restatable}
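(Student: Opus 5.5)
We read the statement with $\wla_{|\statesclosed}\colon \statesclosed \to \Module$ a genuine $\statesclosed$-restricted weighting. We read $\wlp{\wwcom}$ on it as the rules of \Cref{fig:weakest_pre} applied verbatim to functions on $\statesclosed$. The proof then has two parts. First, these rules are well-defined on $\statesclosed$-restricted weightings, so their output is again a function on $\statesclosed$; this is where the behaviour at states outside $\statesclosed$ is settled, since none is ever consulted. Second, the output coincides with a restriction. We take the witness $\wlb := \wlp{\wwcom}(\wla)$, computed on all of $\Sigma$, and prove by structural induction on $\wwcom$ the stronger claim
\[
    \wlp{\wwcom}(\wla_{|\statesclosed}) \;=\; \bigl(\wlp{\wwcom}(\wla)\bigr)_{|\statesclosed} \qquad\text{for all } \wla \in \weightings .
\]
The induction must run over all subprograms of $\wwcom$. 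Each subprogram modifies only variables in $\varset$, so the hypothesis on $\statesclosed$ is inherited.

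The key observation for well-definedness is the assignment case. For $\ASSIGN{x}{\aexpr}$ occurring in $\wwcom$ we have $x \in \varset$. Hence for $\sigma \in \statesclosed$ the state $\sigma\substBy{x}{\aexpr}$ agrees with $\sigma$ outside $\varset$. It therefore lies in $\varsetreach{\sigma} \subseteq \statesclosed$ by closedness. So $\wla_{|\statesclosed}\substBy{x}{\aexpr}$ is a well-defined function on $\statesclosed$, and it equals $(\wla\substBy{x}{\aexpr})_{|\statesclosed}$.

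The remaining constructs act pointwise and never change the evaluation state: $\WEIGH{\mexpr}$ (scalar multiplication by $\sem{\mexpr}(\sigma)$), the conditional (Iverson filters on $\bexpr$), and branching ($\oplus$). Restriction therefore commutes with them trivially. Sequential composition follows by applying the induction hypothesis twice, first to $\wwcomii$ with $\wla$ and then to $\wwcomi$ with $\wlp{\wwcomii}(\wla)$.

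The main obstacle is the loop $\WHILEDO{b}{\wwcomi}$, defined as $\gfp{\wX}{\wlpchar(\wX)}$. Taken over the restricted domain this greatest fixed point must be shown to equal the restriction of the unrestricted one. The restricted weightings form an $\omega$-bicontinuous structure, as noted just before \Cref{lem:closedness_under_reachability}, so the dual Kleene theorem gives
\[
    \gfp{\wX}{\wlpchar(\wX)} = \hbigand_{n} \wlpchar^n(\top)
\]
in both domains; this is the construction behind \Cref{thm:weakest_liberal_pre}. I would then prove by induction on $n$ that the restricted iterate equals $(\wlpchar^n(\top))_{|\statesclosed}$. The case $n = 0$ holds because $\top_{|\statesclosed}$ is the top of the restricted domain. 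The step uses the induction hypothesis for the body $\wwcomi$ together with the pointwise treatment of $\iverson{b}$, $\iverson{\neg b}$ and $\oplus$ above.

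It then remains to check that restriction commutes with infima of $\omega$-descending chains. This holds because both orders are pointwise: the infimum in the restricted domain is the pointwise infimum at each $\sigma \in \statesclosed$. Hence the restricted greatest fixed point equals $(\wlp{\WHILEDO{b}{\wwcomi}}(\wla))_{|\statesclosed}$, which closes the induction and yields the required $\wlb$.
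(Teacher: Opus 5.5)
Your proof is correct. It shares the paper's skeleton: structural induction, the assignment case settled by $x \in \varset$ and closedness of $\statesclosed$, the pointwise constructs trivial, and the loop handled through the Kleene iterates from $\top$ together with pointwise infima of descending chains.

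The difference lies in the invariant. The paper never defines a transformer on functions $\statesclosed \to \Module$. It proves a congruence property for global weightings: $Y_{|\statesclosed} = Z_{|\statesclosed}$ implies $\bigl(\wlp{\wwcom}(Y)\bigr)_{|\statesclosed} = \bigl(\wlp{\wwcom}(Z)\bigr)_{|\statesclosed}$. It then reads $\wlp{\wwcom}(\wla_{|\statesclosed})$ as ``extend, apply, restrict''. In the loop case it therefore only compares the global iterates of $\Psi_Y$ and $\Psi_Z$, so Kleene's theorem is needed only on $\weightings$, where \Cref{thm:weakest_liberal_pre} supplies it.

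You instead define the restricted transformer intrinsically and show that it commutes with restriction. This is slightly stronger. It is also exactly the identification that the proof of \Cref{thm:local-park-induction} relies on, since that proof applies Park induction to the greatest fixed point of the restricted characteristic function.

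It does cost you one obligation the paper avoids. Let $\Psi_{\mathrm{res}}$ be the restricted characteristic function. Writing its greatest fixed point as $\hbigand_n \Psi_{\mathrm{res}}^n(\top)$ requires $\Psi_{\mathrm{res}}$ itself to be $\omega$-cocontinuous. It is not enough that the restricted domain is $\omega$-bicomplete. Note also that the paper derives this cocontinuity only after the lemma, from the lemma itself.

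Your own commuting square $\Psi_{\mathrm{res}}(X_{|\statesclosed}) = \bigl(\wlpchar(X)\bigr)_{|\statesclosed}$ closes this gap directly. First, the restriction of the global greatest fixed point is a fixed point of $\Psi_{\mathrm{res}}$. Second, by monotonicity, any fixed point $h$ of $\Psi_{\mathrm{res}}$ satisfies $h \sqsubseteq \Psi_{\mathrm{res}}^n(\top) = \bigl(\wlpchar^n(\top)\bigr)_{|\statesclosed}$ for every $n$. Hence $h$ lies below that restriction, which is therefore the greatest fixed point.
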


Since the weighting transformer $\betawlpchar$ is $\omega$-cocontinuous, we derive from the previous lemma that its restriction to an set $\statesclosed$ (closed under $\varset$-reachable states) is again $\omega$-bicontinuous. Finally, we can state and prove the local variant of Park induction. Here, we use global weightings $\wla, \iinv \in \weightings$ but consider only the restricted state space $\statesclosed$.

\begin{restatable}[Local Park Induction]{theorem}{thmLocalParkInduction}
    \label[theorem]{thm:local-park-induction}
    For a loop $\wwcom = \WHILEDO{b}{\wwcoma} \in \wGCL$ with local variables $\varset \subseteq \Vars$, a set $\statesclosed \subseteq \Sigma$ of states that is closed under $\varset$-reachability, and weightings $\wla, \iinv \in \weightings$, it is
    $$
        \nosem{\iinv} \bhleq \wlpchar(\nosem{\iinv}) \quad\text{implies}\quad \nosem{\iinv} \bhleq \wlp{\WHILEDO{b}{\wwcoma}}(\nosem{\wla}).
    $$
\end{restatable}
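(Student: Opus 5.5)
Our plan is to reduce the statement to ordinary Park induction (Knaster--Tarski for greatest fixed points) on the restricted domain of $\statesclosed$-restricted weightings. By the preceding remark, this domain is again an $\omega$-bicontinuous monoid-module, in particular an $\omega$-cocomplete partial order under $\bhleq$. The main work is to check that the characteristic function $\wlpchar$ descends to a well-defined, monotone, $\omega$-cocontinuous map on restricted weightings, and that its greatest fixed point there is the restriction of the global greatest fixed point $\wlp{\WHILEDO{b}{\wwcoma}}(\wla)$.

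First, we show that restriction commutes with $\wlpchar$: if $\wlb_{|\statesclosed} = \wlc_{|\statesclosed}$, then $\wlpchar(\wlb)_{|\statesclosed} = \wlpchar(\wlc)_{|\statesclosed}$. For the guard part this is immediate, since $[\neg b]\cdot\wla$ and $[b]\cdot(\cdot)$ act pointwise. For $\wlp{\wwcoma}$ we invoke \Cref{lem:closedness_under_reachability}, applied to the body $\wwcoma$, whose modified variables lie in $\varset$. The intuition is that from any $\sigma\in\statesclosed$ every final state of $\wwcoma$ lies in $\varsetreach{\sigma}\subseteq\statesclosed$. Hence $\wlp{\wwcoma}(\wlb)(\sigma)$ depends only on $\wlb$ restricted to $\statesclosed$. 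If the lemma as stated is too weak for this, we would prove this dependence directly by structural induction on $\wwcoma$, following \Cref{fig:weakest_pre}; the loop case is handled by induction over the cochain $\wlpchar^n(\top)$ together with $\omega$-cocontinuity. Consequently $\wlpchar$ induces $\Psi^{\statesclosed}$ on restricted weightings with $\Psi^{\statesclosed}(\wlb_{|\statesclosed}) = \wlpchar(\wlb)_{|\statesclosed}$. Monotonicity and $\omega$-cocontinuity of $\Psi^{\statesclosed}$ are inherited pointwise from those of $\wlpchar$ (cf.\ \Cref{thm:weakest_liberal_pre}).

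Second, we identify the fixed points. By Kleene's theorem, $\gfp{X}{\wlpchar(X)} = \hbigand_n \wlpchar^n(\top)$ globally. Restriction commutes with pointwise infima of chains and with the iterates, by the first step. Therefore $\bigl(\wlp{\WHILEDO{b}{\wwcoma}}(\wla)\bigr)_{|\statesclosed} = \hbigand_n (\Psi^{\statesclosed})^n(\top_{|\statesclosed}) = \gfp{}{\Psi^{\statesclosed}}$.

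Finally, the hypothesis $\iinv \bhleq \wlpchar(\iinv)$ says exactly that $\iinv_{|\statesclosed} \bhleq \Psi^{\statesclosed}(\iinv_{|\statesclosed})$, i.e.\ $\iinv_{|\statesclosed}$ is a post-fixed point of $\Psi^{\statesclosed}$. Park's principle (or a direct induction, $\iinv_{|\statesclosed} \bhleq (\Psi^{\statesclosed})^n(\top)$ for all $n$, using monotonicity) gives $\iinv_{|\statesclosed} \bhleq \gfp{}{\Psi^{\statesclosed}}$, which is the claim. We expect the main obstacle to be the first step: making precise that $\wlp{\wwcoma}$ at states in $\statesclosed$ only reads values in $\statesclosed$, i.e.\ extracting the right locality statement from \Cref{lem:closedness_under_reachability}. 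This rests on the body modifying only variables in $\varset$.
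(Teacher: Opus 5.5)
Your proposal is correct and takes the paper's route: the proof of the closedness lemma establishes exactly the locality property you need, so the characteristic function becomes a well-defined monotone map on $\statesclosed$-restricted weightings, and Park induction is applied there. Your explicit identification of the restricted greatest fixed point with the restriction of $\wlp{\WHILEDO{b}{\wwcoma}}(\wla)$, via the Kleene iterates from $\top$, is a step the paper's proof uses only implicitly, so your version is if anything more complete.
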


The Ski-rental problem as introduced in \Cref{fig:intro-wgcl-ski-rental} is a simple example where the correctness of the invariant depends on constant variables that are not modified during the execution of the loop. The optimal costs of the program are $\min\{n, y\}$, where $n$ is the number of days and $y$ is the cost of buying skis. However, the original Park induction rule does not suffice to prove the correctness for the invariant $I = \min\{n, y\}$ for postweighting $0 \in \weightings$, since it considers the global state space $\Sigma$: A counterexample is
$ \sigma \in \Sigma \text{ with } \sigma(n) = 4, \sigma(y) = 6, \text{ and } \sigma(c) = 2$, where the optimal costs are $\min\{2\cdot n, y\} = 6$ and the inequality on the left-hand side in \Cref{thm:local-park-induction} does not hold, since
$$ \sigma(I) = 4 \not\hleq \wlpchar(I)(\sigma) = 5 \qquad \hint{\iff  \sigma(I) = 4 \not\geq \wlpchar(I)(\sigma) = 5}. $$
The local proof rule considers only the restricted state space $\Gamma = \{ \sigma \in \Sigma \mid \sigma(c) = 1 \}$. Here, we can show that $\hinv$ is indeed a subinvariant and thus verify the correctness of the program.

The encoding of local Park induction is $\parkEncShort$ in \Cref{tab:encoding_spec_char_park} and \Cref{lem:semantics-local-park-induction-encoding} shows that its semantics satisfies
$$
    \nosem{\vc{\parkEnc{\wwcom, \hinv}}(\hhla)}(\State) \hleq
        \ifThenElseDot{\hinv \hleq_{\varsetreach{\sigma}} \wlpcharhhla(\hinv)}{\nosem{\hinv}(\sigma)}{\bot}
$$

\begin{restatable}[Soundness]{theorem}{thmHeyvlSoundnessLocalParkInductionEncoding}
    \label[theorem]{thm:heyvl-soundness-local-park-induction-encoding}
    For a loop $\wwcom = \WHILEDO{b}{\wwcoma} \in \wGCL$ with local variables $\varset \subseteq \Vars$ and weightings $\hhla, \hinv \in \wHeyLo$, the encoding $\parkEncShort$ is a sound under-approximation, thus
    $$
        \nosem{\vc{\parkEnc{\wwcom, \varset, \hinv}}(\hhla)} \hhleq \wlp{\wwcom}(\nosem{\hhla}).
    $$
\end{restatable}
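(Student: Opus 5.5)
We fix a state $\sigma \in \Sigma$ and show $\vc{\parkEnc{\wwcom, \varset, \hinv}}(\hhla)(\sigma) \hleq \wlp{\wwcom}(\hhla)(\sigma)$ pointwise. The starting point is the semantic bound stated just before the theorem (\Cref{lem:semantics-local-park-induction-encoding}):
\[
\vc{\parkEnc{\wwcom, \hinv}}(\hhla)(\sigma) \hleq \begin{cases} \hinv(\sigma), & \text{if } \hinv \hleq_{\varsetreach{\sigma}} \wlpcharhhla(\hinv),\\ \bot, & \text{otherwise.}\end{cases}
\]
To obtain this bound we would compose $\symVc$ backwards through the encoding. The $\charEncShort$ part is handled by \Cref{lem:heyvl-characteristic-encoding-semantics}, using the hypothesis $\vc{\procenc{\wwcoma}} \hleq \wp{\wwcoma}$; for the loop-free body this is exact by \Cref{thm:wp-encoding-exactness}, and for loop-free bodies $\symWp$ and $\symWlp$ coincide. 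This shows that $\vc{\charEnc{\wwcom,\hinv}}(\hhla) \hleq \wlpcharhhla(\hinv)$. The leading prefix is $\specEnc{\stmtSpec{\varset}{\hinv}{\hinv}}$, and by \Cref{lem:heyvl-specification-encoding} together with monotonicity of $\symVc$ in the post, it produces the displayed case split.

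We then split on the two cases. In the "otherwise" case the left-hand side is $\bot$, and $\bot \hleq \wlp{\wwcom}(\hhla)(\sigma)$ holds trivially. In the remaining case we have $\hinv(\sigma') \hleq \wlpcharhhla(\hinv)(\sigma')$ for all $\sigma' \in \varsetreach{\sigma}$. Set $\statesclosed := \varsetreach{\sigma}$. This set is closed under $\varset$-reachability: if $\sigma'$ agrees with $\sigma$ outside $\varset$, then every $\sigma''$ that agrees with $\sigma'$ outside $\varset$ also agrees with $\sigma$ outside $\varset$. Hence the hypothesis reads $\hinv \bhleq \wlpcharhhla(\hinv)$. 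We apply \Cref{thm:local-park-induction} with $\wla := \sem{\hhla}$ and $\iinv := \sem{\hinv}$, which yields $\hinv \bhleq \wlp{\wwcom}(\hhla)$. Since $\sigma \in \statesclosed$, we get in particular $\hinv(\sigma) \hleq \wlp{\wwcom}(\hhla)(\sigma)$. Chaining this with the semantic bound gives the claim at $\sigma$, and since $\sigma$ was arbitrary, the claim holds globally.

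The main obstacle is making the semantic bound rigorous. The post of the specification part is the $\symVc$ of the characteristic encoding, which is only an under-approximation of $\wlpcharhhla(\hinv)$. The entailment checked by $\symValidate$ is therefore $\hinv \hleq \vc{\charEnc{\wwcom,\hinv}}(\hhla)$ on $\varsetreach{\sigma}$, and we must argue that this entailment implies the one against $\wlpcharhhla(\hinv)$. This follows by transitivity, and it is also why the bound is an inequality rather than an equality. A secondary point is the requirement that $\varset$ contain all variables modified by the loop. \Cref{thm:local-park-induction} presupposes this through \Cref{lem:closedness_under_reachability}, so the theorem statement must be read with $\varset$ being the loop's local variables.
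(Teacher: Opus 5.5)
Your proof is correct and follows the same route as the paper's. The paper also fixes $\sigma$, takes $\statesclosed = \varsetreach{\sigma}$, and splits on whether $\hinv \bhleq \wlpcharhhla(\hinv)$. In the first case it uses \Cref{lem:semantics-local-park-induction-encoding} to bound the encoding by $\hinv(\sigma)$ and then applies \Cref{thm:local-park-induction}; in the other case it bounds the encoding by $\bot$. Your extra remarks fill in details the paper leaves implicit: the transitivity argument moving the entailment from $\vc{\charEnc{\wwcom,\hinv}}(\hhla)$ to $\wlpcharhhla(\hinv)$, the exactness of a loop-free body's encoding for $\symWlp$, and the closedness of $\varsetreach{\sigma}$.
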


The dual proof rule exists for weakest preweightings of loops, where we check whether $\hinv \in \wHeyLo$ is a superinvariant and obtain an upper bound by computing the encoding $\coParkEnc{\wwcom, \beta, \hinv}$ that contains the corresponding \texttt{co}-statements and everything follows analogously.
This allows us to prove programs where the correctness of the invariant depends on constant variables that are not modified during the execution.

\subsection{Local \texorpdfstring{$\kappa$}{kappa}-Induction}
\label{sec:k-induction}

\begin{figure}
    \hspace{4em}
    \begin{minipage}[t]{0.38\textwidth}
        \begin{align*}
            & \kappaEnc{\wwcom, \beta, I} \\
            \midrule
            & \tikzmark{spec21} \stmtAssert{\hinv}\symSemi
            \stmtHavoc{\beta}\symSemi \\
            & \tikzmark{spec22} \stmtValidate\symSemi
            \stmtAssume{\hinv}\symSemi \\[1em]
            & \tikzmark{unroll21} \stmtIfStart{b} \\
            & \quad \procenc{\wwcoma} \symSemi \coAssert{\hinv} \symSemi \\
            & \tikzmark{unroll1} \quad \stmtIfStart{b} \\
            & \quad\quad \procenc{\wwcoma} \symSemi \coAssert{\hinv} \symSemi \\
            & \tikzmark{char31} \quad\quad \stmtIfStart{b} \\
            & \quad\quad\quad \procenc{\wwcoma} \symSemi \stmtAssert{\hinv} \symSemi \\
            & \quad\quad\quad \stmtAssume{\embed{\false}} \symSemi \\
            & \tikzmark{char32} \quad\quad\quad \}~\stmtElseStart{} \} \\
            & \tikzmark{unroll2} \quad\quad \}~\stmtElseStart{} \} \\
            & \tikzmark{unroll22} \quad \}~\stmtElseStart{} \}
        \end{align*}

        \begin{tikzpicture}[overlay, remember picture]
           \draw[thick, decorate, decoration={brace, amplitude=5pt}, color=hintcolor] ([shift={(-1em,0)}]pic cs:spec22) -- node[midway, rotate=90, yshift=2em, xshift=0pt] () {\small \shortstack{$\specEncShort$ \\ $(\stmtSpec{\beta}{\hinv}{\hinv})$}}([shift={(-1em,5pt)}]pic cs:spec21);

           \draw[thick, decorate, decoration={brace, amplitude=5pt}, color=hintcolor] ([shift={(-3.5em,0)}]pic cs:unroll22) -- node[midway, rotate=90, yshift=1.5em, xshift=0cm] () {\small $\kappaUnrollEnc{\wwcom, \varset, \hinv}$} ([shift={(-3.5em,5pt)}]pic cs:unroll21);

           \draw[thick, decorate, decoration={brace, amplitude=5pt}, color=hintcolor] ([shift={(-1em,0)}]pic cs:unroll2) -- node[midway, rotate=90, yshift=1.5em, xshift=0cm] () {\small $\unrollEnc{($\kappa$-1)}{\wwcom, \varset, \hinv}$} ([shift={(-1em,5pt)}]pic cs:unroll1);

           \draw[thick, decorate, decoration={brace, amplitude=5pt}, color=hintcolor] ([shift={(1.5em,0)}]pic cs:char32) -- node[midway, rotate=90, yshift=1.5em, xshift=0cm] () {\small $\charEnc{\wwcom, \varset, \hinv}$} ([shift={(1.5em,5pt)}]pic cs:char31);
        \end{tikzpicture}
    \end{minipage}
    \hfill
    \begin{minipage}[t]{0.45\textwidth}
        \begin{align*}
            & \unrollEnc{0}{\wwcom, \varset, \hinv} = \\
            & \charEnc{\wwcom, \varset, \hinv} \\[2em]
            & \text{for } \kk > 0 : \\
            & \unrollEnc{$\kappa$}{\wwcom, \varset, \hinv} = \\
            & \stmtIfStart{b} \\
            & \quad \procenc{\wwcoma} \symSemi \inlineExplain{encoding of loop body} \\
            & \quad \coAssert{\hinv} \symSemi \inlineExplain{admit invariant} \\
            & \quad \unrollEnc{($\kappa$-1)}{\wwcom, \varset, \hinv} \symSemi \\
            & \}~\stmtElseStart{} \}
        \end{align*}
    \end{minipage}
    \hfill

    \caption{Encoding of $\kk$-induction for the loop $\wwcom = \WHILEDO{b}{\wwcoma} \in \wGCL$, variables $\varset \subseteq \Vars$, weightings $\hhlc, \hhlb \in \wHeyLo$, an invariant $\hinv \in \wHeyLo$, and $\kappa = 2$.}
    \label{fig:encoding_kappa_induction}
\end{figure}

The $\kk$-induction proof rule, formalized by \citeauthor{k_induction}~\cite{k_induction} for complete lattices, extends Park induction by relaxing its strict one-step inductivity requirement. We adapt this technique to our setting of $\omega$-bicomplete partial orders and extend it to \emph{local} $\kk$-induction.

Recall that, for a loop $\wwcom$ and postweighting $\hhla \in \wHeyLo$, the weakest liberal preweighing is the greatest fixed point of the characteristic function and due to Kleene's fixed point theorem \cite{BatzGKKW22} it is $\wlp{\wwcom}(\hhla) = \gfp{X}{\wlpcharhhla(X)} = \hbigand_{n\in\Nats} \wlpcharhhla^n(\top)$. Via local Park induction, we can prove a lower bound $\hinv \in \wHeyLo$ and refine it by iteratively applying the characteristic function, hence
$$ \hinv \bhleq \wlpcharhhla(\hinv) \bhleq \wlpcharhhla^2(\hinv) \bhleq \cdots \bhleq \wlp{\wwcom}(\hhla). $$
However, the reverse direction of Park induction does not hold in general.
Even if $\hinv$ is a valid upper bound with $\hinv \bhleq \wlp{\wwcom}(\hhla)$, the weighting $\hinv$ might \emph{not} be \emph{inductive} in the sense that $\hinv \not\bhleq \wlpcharhhla(\hinv)$.
To tackle this, we create an increasing chain with the following operator.

\begin{definition}
    \label[definition]{def:kappa_operator}
    For a loop $\wwcom = \WHILEDO{b}{\wwcoma} \in \wGCL$ and weightings $\wla, \iinv \in \weightings$, the \emph{$\kk$-induction operator} is defined as $\kindsym: \weightings \to \weightings,~ X \mapsto \iinv \hor \wlpchar(X)$.
\end{definition}

Since we extend $\kk$-induction to \emph{local} $\kk$-induction, we consider a set of states $\statesclosed \subseteq \Sigma$ that is closed under $\varset$-reachability, where $\varset$ are the local variables of the program. Then, $\statesclosed$-restricted weightings $\hhla_{|\statesclosed}, \hhlb_{|\statesclosed}: \statesclosed \to \Module$ consider a smaller state space and allow for a $\statesclosed$-restricted order $\bhleq$ (compare with \Cref{sec:local_park_induction}). We have shown in \Cref{lem:closedness_under_reachability} that the characteristic function $\wlpcharhhla$ is closed under $\statesclosed$-restricted weightings and we show the same for the $\kk$-induction operator in \Cref{lem:closedness_under_reachability_k-induction}.

Now, even if the weighting $\hinv \in \wHeyLo$ is not inductive ($\hinv \not\bhleq \wlpcharhhla(\hinv)$), the operator $\kindsym$ enforces the increasing chain
$ \hinv \bhhleq \kindsym(\hinv) = \hinv \hor \wlpcharhhla(\hinv) \bhhleq \kindsym^2(\hinv) = \hinv \hor \wlpcharhhla(\hinv \hor \wlpcharhhla(\hinv)) \bhhleq \cdots $ \\
Applying $\kindsym$ iteratively can be interpreted as searching for a inductive weighting; there exists a $\kk \in \Nats$ such that ($\kindopk(\hinv) \bhleq \wlpcharhhla(\kindopk(\hinv))$ is a local subinvariant) if and only if ($\nosem{\hinv} \bhleq \wlpcharhhla(\kindopk(\hinv))$). Then we use local Park induction to derive that $\hinv \bhleq \kindopk(\hinv) \bhleq\wlp{\WHILEDO{b}{\wwcoma}}(\nosem{\hhla})$.

\begin{restatable}[$\kappa$-Induction]{theorem}{thmKappaInduction}
    \label[theorem]{thm:kappa_induction}
    For a loop $\wwcom = \WHILEDO{b}{\wwcoma} \in \wGCL$ with local variables $\varset \subseteq \Vars$, a set $\statesclosed \subseteq \Sigma$ of states that is closed under $\varset$-reachability, weightings $\wla, \iinv \in \weightings$, and $\kk \in \Nats$, it is
    $$
        \nosem{\iinv} \bhleq \wlpchar(\kindopk(\nosem{\iinv})) \quad\text{implies}\quad \nosem{\iinv} \bhleq \wlp{\WHILEDO{b}{\wwcoma}}(\nosem{\wla}).
    $$
\end{restatable}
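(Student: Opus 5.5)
The plan is to reduce the statement to local Park induction (\Cref{thm:local-park-induction}) applied to the weighting $\kindopk(\iinv)$ rather than to $\iinv$ itself. Write $\kindsym(X) = \iinv \hor \wlpchar(X)$ as in \Cref{def:kappa_operator}. Throughout we work in the $\statesclosed$-restricted order $\bhleq$. By \Cref{lem:closedness_under_reachability} and its analogue for $\kindsym$ (\Cref{lem:closedness_under_reachability_k-induction}), all iterates $\kindsym^j(\iinv)$ restrict to well-defined $\statesclosed$-restricted weightings, so nothing is lost by reasoning pointwise on $\statesclosed$.

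\textbf{Step 1 (the chain is ascending).} We show that $\kindsym^j(\iinv) \bhleq \kindsym^{j+1}(\iinv)$ for all $j$, by induction on $j$.
\begin{itemize}
\item Base case: $\iinv \bhleq \iinv \hor \wlpchar(\iinv) = \kindsym(\iinv)$ holds trivially.
\item Inductive step: $\kindsym$ is monotone, since $\wlpchar$ is monotone ($\symWlp$ is monotone and $\oplus$, the Iverson filter and $\otimes$ are monotone in the natural order) and $\hor$ is monotone. Hence the step follows from the hypothesis.
\end{itemize}
In particular $\iinv \bhleq \kindopk(\iinv)$, and also $\kindsym^{\kk-1}(\iinv) \bhleq \kindopk(\iinv)$ when $\kk \geq 1$.

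\textbf{Step 2 (inductivity of $\kindopk(\iinv)$).} We show that $\kindopk(\iinv) \bhleq \wlpchar(\kindopk(\iinv))$.
\begin{itemize}
\item Case $\kk = 0$: the claim is exactly the hypothesis $\iinv \bhleq \wlpchar(\iinv)$.
\item Case $\kk \geq 1$: we have $\kindopk(\iinv) = \iinv \hor \wlpchar(\kindsym^{\kk-1}(\iinv))$, so it suffices to bound both joinands by $\wlpchar(\kindopk(\iinv))$.
\begin{itemize}
\item The first joinand is bounded by hypothesis.
\item For the second, monotonicity of $\wlpchar$ together with $\kindsym^{\kk-1}(\iinv) \bhleq \kindopk(\iinv)$ from Step 1 gives $\wlpchar(\kindsym^{\kk-1}(\iinv)) \bhleq \wlpchar(\kindopk(\iinv))$.
\end{itemize}
The join of two elements below a common bound stays below it, which closes this case.
\end{itemize}

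\textbf{Step 3 (conclusion).} Apply \Cref{thm:local-park-induction} with invariant $\kindopk(\iinv)$ to obtain $\kindopk(\iinv) \bhleq \wlp{\WHILEDO{b}{\wwcoma}}(\wla)$. Combining with $\iinv \bhleq \kindopk(\iinv)$ and using transitivity of $\bhleq$ yields the claim.

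The main obstacle is making sure the restricted reasoning is legitimate. Monotonicity of $\wlpchar$ must hold in the restricted order $\bhleq$ and not only globally. Since $\wlpchar(X)(\sigma)$ may depend on values of $X$ at states outside $\statesclosed$, this relies on $\statesclosed$ being closed under $\varset$-reachability and on the loop body modifying only $\varset$. That is precisely what \Cref{lem:closedness_under_reachability} provides, and I would invoke it explicitly at each monotonicity step. A second point is that $\kindopk(\iinv)$ must be an admissible invariant for \Cref{thm:local-park-induction}, i.e.\ a weighting. This holds because the lattice joins exist pointwise, the module being a bounded lattice.
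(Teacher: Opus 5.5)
Your proof is correct and follows the paper's argument. It shows that the chain $(\kindsym^j(\iinv))_j$ is ascending, deduces that $\kindopk(\iinv)$ is a local subinvariant, applies local Park induction (\Cref{thm:local-park-induction}), and concludes via $\iinv \bhleq \kindopk(\iinv)$. The paper avoids your case split on $\kk$ by writing $\kindopk(\iinv) \bhleq \kindop{\kk+1}(\iinv) = \iinv \hor \wlpchar(\kindopk(\iinv)) \bhleq \wlpchar(\kindopk(\iinv))$ directly; your explicit handling of monotonicity in the restricted order and of the final transitivity step fills in points the paper leaves implicit.
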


We encode local $\kk$-induction via the encoding $\kappaEnc{\wwcom, \varset, \hinv} = \specEnc{\stmtSpec{\varset}{\hinv}{\hinv}}\symSemi$ \\ $\kappaUnrollEnc{\wwcom, \varset, \hinv}$ that modifies the variables $\varset$, applies the $\kk$-induction operator $\kk$ times, and then computes the characteristic function $\wlpcharhhla$, see \Cref{fig:encoding_kappa_induction}. The \Cref{lem:kappa_induction_semantics} shows that the semantics of the encoding $\kappaEncShort$ satisfies
$$
    \nosem{\vc{\kappaEnc{\wwcom, \varset, \hinv}}(\hhla)}(\sigma)
    \hleq \ifThenElseDot{\nosem{\hinv} \bhleq \wlpcharhhla(\kindopk(\nosem{\hinv}))}{\nosem{\hinv}(\sigma)}{\bot}
$$
The soundness of the encoding follows from the $\kk$-induction proof rule and the semantics of the encoding. This is formalized in \Cref{thm:kappa_induction_soundness}, which states that $\kappaEncShort$ is a sound under-approximation with
$$
    \nosem{\vc{\kappaEnc{\wwcom, \varset, \hinv}}(\hhla)} \hhleq \wlp{\wwcom}(\nosem{\hhla}).
$$

Again, the dual proof rule exists for proving upper bounds by replacing the encodings in \Cref{fig:encoding_kappa_induction} by their dual \text{co}-statements. The corresponding proofs are completely dual.

\subsection{\texorpdfstring{$\omega$}{omega}-Invariants}
\label{sec:w-invariants}

\begin{figure}
    \begin{minipage}[t]{0.45\textwidth}
        \begin{align*}
            & \omegaEnc{\wwcom, \hinvseq{n}} \\
            \midrule
            & \coHavoc{n} \symSemi \inlineExplain{supremum over $n$} \\
            & \stmtAssert{\hinv(n)} \symSemi \inlineExplain{filter for $\hinvseq{n}$} \\
            & \stmtAssume{\embed{\false}} \inlineExplain{establish truth} \\[1em]
            & \weightedProc{\semiringPrefix{rig}}{base}{\overline{x}_1:\tau_1, \ldots, \overline{x}_k:\tau_k}{} \\
            & ~\Requires{\hinv(0)} \\
            & ~\Ensures{\embed{\false}} \\
            & \blockStart~\blockEnd
        \end{align*}
    \end{minipage}
    \hfill
    \begin{minipage}[t]{0.45\textwidth}
        \begin{align*}
            & \weightedProcHead{\semiringPrefix{rig}}{step}{\overline{x}_1:\tau_1, \ldots, \overline{x}_k:\tau_k} \\
            & \weightedProcOut{x_1:\tau_1, \ldots, x_k:\tau_k} \\
            & ~\Requires{\hinv(n+1)} \\
            & ~\Ensures{\hhla} \\
            & \blockStart~\stmtDeclInit{x_1}{\typevar_1}{\overline{x}_1}\symSemi \ldots\symSemi \stmtDeclInit{x_k}{\typevar_k}{\overline{x}_k} \symSemi \\
            & \quad\stmtIfStart{b} \\
            & \quad\quad \procenc{\wwcoma}\symSemi \stmtAssert{\hinv(n)}\symSemi \stmtAssume{\embed{\false}}\symSemi \\
            & \quad \}~\stmtElseStart \}~ \blockEnd
        \end{align*}
    \end{minipage}

    \caption{Encoding of the $\omega$-rule for the loop $\wwcom = \WHILEDO{b}{\wwcoma} \in \wGCL$, weighting $\hhla \in \wHeyLo$, and an $\omega$-invariant $\hinv(n) \in \wHeyLo$ with free variable $n \in \Vars$.}
    \label[figure]{fig:encoding_omega}
\end{figure}

Until now, loop unrolling is the only proof rule that computes lower bounds for weakest preweightings (and upper bounds for weakest \emph{liberal} preweightings). However, this rule considers exactly $n$ unrollings for a fixed $n \in \Nats$. The $\omega$-rule allows us to prove lower bounds for all $n \in \Nats$ by constructing a chain of weightings $(\hinvseq{n})_{n \in \Nats} \in\wHeyLo^\Nats$, where $\hinvseq{n}$ lower-bounds the $n$-th unrolling, hence $\hinvseq{n} \hleq \wpcharhhla^n(\bot)$ for every $n \in \Nats$. We show that its limit is a sound lower bound of the weakest preweighting of the loop. The dual proof rule exists for upper bounds of weakest liberal preweightings.

We call an $\omega$-ascending chain of weightings $(\hinvseq{n})_{n \in \Nats} \in\wHeyLo^\Nats$ an \emph{$\omega$-subinvariant}, if $\hinvseq{0} = \bot$ and $\hinvseq{n+1} \hleq \wpcharhhla(\hinvseq{n})$ for all $n \in \Nats$ holds. An $\omega$-descending chain $(\hinvseq{n})_{n \in \Nats} \in\wHeyLo^\Nats$ is an \emph{$\omega$-superinvariant}, if $\hinvseq{0} = \top$ and $\wlpcharhhla(\hinvseq{n}) \hleq \hinvseq{n+1}$ for all $n \in \Nats$.
The $\omega$-rule states that, given an $\omega$-subinvariant $(\hinvseq{n})_{n\in\Nats}$, its supremum is a lower bound of the weakest preweighting of the loop. We can show by induction that every term $\hinvseq{n}$ bounds the subsequent Kleene term $\wpcharhhla^{n+1}(\bot)$ and prove that the limit of the sequence $(\hinvseq{n})_{n \in \Nats}$ soundly approximates the fixed point.

\begin{restatable}[$\omega$-Rule]{theorem}{thmOmegaInvariants}
    \label[theorem]{thm:omega-invariants}
    For a loop $\wwcom = \WHILEDO{b}{\wwcoma} \in \wGCL$, a postweighting $\wla \in \weightings$, and an $\omega$-subinvariant $(\iinv_n)_{n\in\Nats}\in\weightings^\Nats$, it is
    $
        \sup_{n \in \Nats} \nosem{\iinv_n} \hleq \wp{\WHILEDO{b}{\wwcoma}}(\nosem{\wla}).
    $
\end{restatable}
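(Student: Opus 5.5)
The plan is to compare the chain $(\iinv_n)_{n\in\Nats}$ pointwise with the Kleene chain $(\wpchar^n(\bot))_{n\in\Nats}$, and then pass to suprema using the fixed-point characterisation $\wp{\WHILEDO{b}{\wwcoma}}(\wla) = \lfp{\wX}{\wpchar(\wX)} = \hbigor_{n\in\Nats}\wpchar^n(\bot)$. That characterisation comes from Kleene's fixed point theorem, which applies because $\wpchar$ is $\omega$-continuous by \Cref{thm:weakest_pre}.

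First I will show by induction on $n$ that $\iinv_n \hleq \wpchar^n(\bot)$ for every $n\in\Nats$.
\begin{itemize}
\item Base case: by the definition of an $\omega$-subinvariant, $\iinv_0 = \bot = \wpchar^0(\bot)$.
\item Step: suppose $\iinv_n \hleq \wpchar^n(\bot)$. The subinvariance condition gives $\iinv_{n+1} \hleq \wpchar(\iinv_n)$. It then suffices to know that $\wpchar$ is monotone, since this yields $\wpchar(\iinv_n) \hleq \wpchar(\wpchar^n(\bot)) = \wpchar^{n+1}(\bot)$.
\end{itemize}
Monotonicity of $\wpchar$ follows from $\omega$-continuity: if $x\hleq y$, apply continuity to the chain $x,y,y,\dots$. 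One can also argue directly. $\wp{\wwcoma}$ is monotone by \Cref{thm:weakest_pre}, and the maps $\iverson{\bexpr}\cdot(\cdot)$ and $(\cdot)\oplus c$ are monotone with respect to the natural order. For the latter, if $x\oplus z = y$ then $(x\oplus c)\oplus z = y\oplus c$.

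Next I pass to the supremum. The family $(\iinv_n)$ is an $\omega$-ascending chain by assumption, and weightings form an $\omega$-complete partial order, so $\sup_n \iinv_n$ exists. Fix $n$. We have $\iinv_n \hleq \wpchar^n(\bot) \hleq \hbigor_m \wpchar^m(\bot) = \wp{\WHILEDO{b}{\wwcoma}}(\wla)$. So $\wp{\WHILEDO{b}{\wwcoma}}(\wla)$ is an upper bound of the whole chain, and by the least-upper-bound property $\sup_n \iinv_n \hleq \wp{\WHILEDO{b}{\wwcoma}}(\wla)$. All orders here are pointwise on $\Weighting$, so the argument can be read statewise if preferred.

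The main obstacle is bookkeeping rather than depth. I have to check that the Kleene iterates really start at $\bot$ and that the index shift matches. The text says $\iinv_n$ bounds $\wpchar^{n+1}(\bot)$, which is weaker than what the induction gives and equally sufficient. I also have to justify monotonicity of $\wpchar$ in the $\omega$-cpo setting, which I will take from \Cref{thm:weakest_pre}.
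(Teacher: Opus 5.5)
Your proposal is correct and takes the same approach as the paper. You prove $\iinv_n \hleq \wpchar^n(\bot)$ by induction from $\iinv_0=\bot$ and $\iinv_{n+1}\hleq\wpchar(\iinv_n)$, then pass to suprema via the Kleene characterisation $\wp{\WHILEDO{b}{\wwcoma}}(\wla)=\hbigor_{n}\wpchar^n(\bot)$. Your derivation of monotonicity of $\wpchar$ from $\omega$-continuity (via the chain $x,y,y,\dots$) spells out a step the paper only attributes to ``$\omega$-continuity of the characteristic function''.
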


For the encoding of the $\omega$-rule, we represent the sequence $(\hinvseq{n})_{n \in \Nats}$ by the formula $\hinv(n) \in \wHeyLo$ with free variable $n \in \Vars$.
Then, we check that $(\hinv(n))_{n \in \Nats}$ is an $\omega$-subinvariant via the two auxiliary programs in \Cref{fig:encoding_omega} that encode the base case and the induction step of a natural induction on $n$.

\begin{restatable}[Checking $\omega$-Invariant]{lemma}{lemCheckInvariant}
    \label[lemma]{lem:semantics-check-invariant}
    For a weighting $\hinv(n) \in \wHeyLo$ with free variable $n \in \Vars$, the programs $base$ and $step$ in \Cref{fig:encoding_omega} encode
    \begin{align*}
        \begin{array}{rlcl}
            &\hinv(0) \hleq \vc{base}(\hhla)
            \quad&\text{if and only if}\quad
            &\hinv(0) = \bot \\
            \text{and}\quad
            &\hinv(n+1) \hleq \vc{step}(\hhla)
            \quad&\text{implies}\quad
            &\forall n \in \Nats: ~ \hinv(n+1) \hleq \wpcharhhla(\hinv(n)).
        \end{array}
    \end{align*}
\end{restatable}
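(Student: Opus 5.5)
The plan is to compute both verification preweightings symbolically with the rules of \Cref{tab:heyvl-semantics-modern} and then use the Deduction Theorem (\Cref{thm:deduction_theorem}) and \Cref{lem:bool_embedding_impl_coimpl} to read off the resulting inequalities.

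\emph{Base case.} The body of $base$ is empty, so by the procedure semantics the obligation is $\hinv(0) \hleq \vc{\blockStart~\blockEnd}(\embed{\false}) = \embed{\false}$. The semantics of $\embed{\cdot}$ gives $\embed{\false} \equiv \bot$, so the obligation reads $\hinv(0) \hleq \bot$. Because $\bot$ is the least element, this holds if and only if $\hinv(0) = \bot$ pointwise, which is exactly the claimed equivalence. The post $\hhla$ written in the statement should be read as the procedure's declared post $\embed{\false}$.

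\emph{Step case.} We unfold $\vc{step}(\hhla)$ from the back.
\begin{itemize}
\item The else-branch is empty and contributes $\hhla$.
\item In the then-branch, $\vc{\stmtAssume{\embed{\false}}}(\hhla) = \embed{\false} \impl \hhla = \top$ by \Cref{lem:bool_embedding_impl_coimpl}.
\item The assertion then gives $\hinv(n) \hand \top = \hinv(n)$.
\item Applying the loop-body encoding gives $\vc{\procenc{\wwcoma}}(\hinv(n))$.
\end{itemize}
The $\stmtIfStart{b}$ construct is the encoding of \Cref{tab:encoding_loop_free_spec}, i.e.\ a demonic choice between $\stmtAssume{\embed{b}}$ and $\stmtAssume{\embed{\neg b}}$. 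Using \Cref{lem:bool_embedding_impl_coimpl} pointwise, its value is
\[
[b]\cdot \vc{\procenc{\wwcoma}}(\hinv(n)) \oplus [\neg b]\cdot \hhla .
\]
This identity holds because, in each state, exactly one summand is nontrivial: one conjunct of the meet equals $\top$ and the other equals the live branch. This is the same computation as in \Cref{lem:heyvl-characteristic-encoding-semantics}, and we reuse that lemma.

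The initial declarations $\stmtDeclInit{x_i}{\typevar_i}{\overline{x}_i}$ are assignments. They substitute $\overline{x}_i$ for $x_i$, which only renames the copied inputs back to the program variables. The verification condition therefore reads, pointwise,
\[
\hinv(n+1) \hleq [b]\cdot \vc{\procenc{\wwcoma}}(\hinv(n)) \oplus [\neg b]\cdot\hhla .
\]

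\emph{Passing to $\wpcharhhla$.} The loop body may itself contain loops, so we use the standing hypothesis $\vc{\procenc{\wwcoma}}(\hhlb) \hleq \wp{\wwcoma}(\hhlb)$. For loop-free bodies this is an equality by \Cref{thm:wp-encoding-exactness}. Monotonicity of the Iverson filter and of $\oplus$ with respect to the natural order then bound the right-hand side above by $\wpcharhhla(\hinv(n))$. Transitivity yields $\hinv(n+1) \hleq \wpcharhhla(\hinv(n))$.

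\emph{Main obstacle: quantification over $n$.} Since $n$ is a free program variable that $step$ neither reads as an input restriction nor modifies, validity of the procedure means the inequality holds in every state $\sigma$, and in particular for every value $\sigma(n) \in \Nats$. To make this rigorous, I would argue that $\vc{\cdot}$ commutes with the substitution $[n \mapsto k]$, because $n$ is not assigned in $step$. The inequality at states with $\sigma(n)=k$ is then exactly the $k$-th instance $\hinv(k+1) \hleq \wpcharhhla(\hinv(k))$, which gives the universally quantified conclusion.
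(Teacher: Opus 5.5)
Your proposal is correct and takes the same route as the paper. You read $base$ as $\hinv(0) \hleq \vc{base}(\embed{\false}) = \bot$, then unfold $\vc{step}(\hhla)$ backwards through $\symAssume$, $\symAssert$, the body encoding and the demonic if, and finally bound the result by $\wpcharhhla(\hinv(n))$ using $\vc{\procenc{\wwcoma}}(\hhlb) \hleq \wp{\wwcoma}(\hhlb)$ and monotonicity.

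Your version is also cleaner than the paper's in a few places. You correctly get $\embed{\false} \impl \hhla = \top$, whereas the paper's displayed chain writes $\bot$ there and recovers $\hinv(n)$ only through a compensating slip. You reuse \Cref{lem:heyvl-characteristic-encoding-semantics} directly, since the body after the declarations is exactly $\charEnc{\wwcom,\hinv(n)}$. And you make explicit the renaming of the copied inputs and the instantiation of the free variable $n$, which tacitly requires $n$ to be fresh for the loop and the post.
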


If the two auxiliary programs are valid, \Cref{thm:omega-invariants} induces that $\sup_{n \in \Nats} \hinvseq{n}$ is a lower bound of the weakest preweighting of the loop. This is encoded in $\omegaEncShort$ in \Cref{fig:encoding_omega} and \Cref{lem:semantics-omega-encoding} shows that
$$
    \nosem{\vc{\omegaEnc{\wwcom, (\hinvseq{n})_{n \in \Nats}}}(\hhla)}(\State)
    = \hbigor_{n \in \Nats} \hinv(n).
$$

If one of the two procedures fail to verify, the proposed sequence is not a valid $\ww$-invariant and the whole verification fails. Otherwise, the encoding guarantees a sound lower bound (\Cref{thm:omega-encoding-soundness}) on the weakest preweighting with
$
    \nosem{\vc{\omegaEnc{\wwcom, (\hinvseq{n})_{n \in \Nats}}}(\hhla)} \hleq \wp{\wwcom}(\nosem{\hhla}).
$

\section{Case Studies}
\label{sec:case-studies}

In this section, we illustrate the application of our framework to verify quantitative properties of weighted programs. We demonstrate the use of the proof rules that we developed in \Cref{sec:proof_rules} such as Park induction and the $\omega$-rule. Moreover, we focus on the pipeline that our framework imposes.

Take a program $\wwcom \in \wGCL$ and a postweighting $\wla \in \weightings$. For each loop, pick a proof rule from \Cref{sec:proof_rules} and guess an invariant $\iinv \in \weightings$. Decide whether to under- or overapproximate.
Transform $\wwcom$ into a \wHeyVL program by encoding loop-free fragments directly and loops via the encoding of the proof rule. Encode the weightings as \wHeyLo formulae.
Apply the verification preweighting transformer $\symVc$ to obtain a preweighting $\hlc \in \wHeyLo$ and eliminate quantifier via the algorithm in \Cref{sec:quantifier-elimination}, before discharging to an SMT solver.

\subsection{Network Security Analysis}
\label{sec:network-security-clearance}

Let us consider a (simplified) example, where an attacker needs $n$ steps in a network to reach a target database with hidden information. At each step, the attacker has the nondeterministic choice between a standard attack (which requires clearance level $\mathsf{Clearance}$ to succeed) and a specific exploit of service that usually requires the high clearance level $\mathsf{TopSecret}$, but has a critical vulnerability at the last step (hence clearance level $\mathsf{Confidential}$ suffices).

Many automated approaches to network security modeling, such as the attack graphs generated by \citeauthor{attack_graph_model_checking} \cite{attack_graph_model_checking}, require a finite-state representation of the network. These methods evaluate fixed topologies and cannot verify properties across networks of unbounded size $n$.
Compositional reasoning via weakest preweightings often avoids these limitations for network topologies that are representable as weighted programs. Instead of generating a graph of size $n$, we compactly encode the rules of the topology using a $\symWhile$-loop.

Our example $C_{AC}$ is encoded in \Cref{fig:case_studies_wGCL_clear} and uses the monoid-module
\[
    \textsc{Clearance:}\qquad \monclear = \monclearlong \quad \text{and} \quad \modclear = \modclearlong~,
\]
with totally-ordered values $\mathsf{Public} < \mathsf{Confidential} < \mathsf{Secret} < \mathsf{Top Secret}$, the additive operation $\min$ (choosing the path with least security clearance) and the multiplicative operation $\max$ (the clearance of a path is the maximum clearance of its steps)

By calculating an upper bound on the weakest preweighting $\wp{C_{AC}}(\mathsf{P}) \sleq \hla$, corresponding to $\wp{C_{AC}}(\mathsf{P}) \geq \hla$, we determine the minimal clearance an attacker needs to successfully compromise the target for an arbitrary network size $n$. \Cref{tab:encoding_spec_char_park} and the invariant $I(n) = \min\{ \iverson{n = 0}\cdot \mathsf{P}, \iverson{n = 1}\cdot \mathsf{C}, \iverson{n \geq 2}\cdot \mathsf{S} \}$ gives us the encoding $\mathit{networkSecurity} = \parkEnc{C_{AC}, \varset, I(n)}$ in \Cref{fig:case_studies_clear_lang}, where $\varset = \{ n \}$. Using \Cref{thm:heyvl-soundness-local-park-induction-encoding}, we derive that
$$\wp{\mathit{networkSecurity}}(\mathsf{P}) \hleq \vc{\mathit{networkSecurity}} \hleq I(n_0).$$ The minimal security clearance is hence $\mathsf{P}$ if the attacker is already at the target ($n = 0$), $\mathsf{C}$ if the target is one step away ($n = 1$), and $\mathsf{S}$, otherwise ($n \geq 2$).

\subsection{Lock-Freedom of the Compare-And-Swap Technique}
\label{sec:cas-counter}

In concurrency theory, we need guarantees on how multiple processes interact when competing for shared resources. In \cite{BatzGKKW22}, the authors analyse starvation-freedom of a mutual exclusion protocol via weakest preweightings. A common non-blocking progress condition is \emph{lock-freedom} \cite[p. 60]{art_of_multiprocessor_programming}, which guarantees \emph{global} progress\footnote{At every step, some thread will make progress within a finite number of steps.}, but individual threads may starve \cite{nature_of_progress}.

Consider a system with $N$ threads trying to increment a shared counter ($v$) using the compare-and-swap technique, where each process $i$ saves a local copy of the counter in $\ell[i]$ and computes the incremented value $\ell[i] + 1$. In each iteration, a process is non-deterministically chosen and updates the variable $v$, if its local copy of $v$ is up-to-date, otherwise, it updates its local copy.

We write infinite words $w$ over the alphabet $\Gamma = \{ \mathsf{S}_1, \mathsf{F}_1, \ldots, \mathsf{S}_N, \mathsf{F}_N \}$ via the formal languages module $\modlang$ to analyse the progress of $N$ processes. E.g., the word $\mathsf{S}_1 \mathsf{F}_2 \mathsf{S}_2 \cdots$ depicts that process $1$ successfully updated the counter $v$, then process $2$ failed and updated its local copy, and afterwards process $2$ successfully updated the counter $v$. The program $C_{CAS} \in \wGCL$ in \Cref{fig:intro-wgcl-lock-freedom} encodes the compare-and-swap technique and the corresponding language of infinite words.

We define a descending $\omega$-chain of languages $(\mathcal{L}_n)_{n \in \Nats}$, where $\mathcal{L}_n$ contains all infinite words with at most $N$ consecutive failures in the first $n$ steps. Then, we show that $\wlp{C_{CAS}}(\bot) \hleq \inf_{n \in \Nats} \mathcal{L}_n$, which implies that every computed word contains at most $N$ consecutive failures and thus the program is lock-free.
We specify auxiliary languages $\mathcal{L}_n^k$, $n \in \Nats$ and $1 \leq k \leq N$,
$$
    \mathcal{L}_0^k = \Gamma^{\omega}, \qquad
    \mathcal{L}_{n+1}^0 = \bigcup_{1 \leq i \leq N} \mathsf{S}_i \cdot \mathcal{L}_n^N, \text{ and} \qquad
    \mathcal{L}_{n+1}^{k+1} = \bigg( \bigcup_{1 \leq i \leq N} \mathsf{F}_i \cdot \mathcal{L}_n^k \bigg) \cup \bigg( \bigcup_{1 \leq i \leq N} \mathsf{S}_i \cdot \mathcal{L}_n^N \bigg).
$$
Again, the subscript $n$ specifies that the prefix of length $n$ contains at most $N$ consecutive failures. The superscript $k$ specifies that the word starts with at most $k$ consecutive failures. As an example, we have $\mathsf{F}_3 \mathsf{F}_1 \mathsf{S}_3 \mathsf{F}_3 \mathsf{F}_2 \mathsf{F}_1 \mathsf{F}_3 \mathsf{S}_1 \in \mathcal{L}_6^2$ for $N = 3$, which starts with exactly $2$ consecutive failures and contains at most $3$ consecutive failures \emph{in the prefix} of length $6$. Now, we define $\mathcal{L}_n = \mathcal{L}_n^N$.

\Cref{thm:omega-invariants} states that if $\wlpPhi_{\bot}(\mathcal{L}_n) \hleq \mathcal{L}_{n+1}$ for all $n \in \Nats$, then $\inf_{n \in \Nats} \mathcal{L}_n$ is an upper bound on the weakest liberal preweighting. The encoding of this proof rule in $\wHeyVL$ is given by $\mathit{baseCAS}$, $\mathit{stepCAS}$, and $\mathit{CAS} = \omegaEnc{C_{CAS}, \mathcal{L}_n}$, see \Cref{fig:case_studies_clear_lang}. Via \Cref{thm:omega-encoding-soundness}, we conclude that $\wlp{C_{CAS}}(\bot) \hleq \vc{\mathit{CAS}} \hleq \inf_{n \in \Nats} \mathcal{L}_n$.

\subsection{Queueing Background Jobs}

We consider a queue of $k$ background jobs (e.g. image processing, mail delivery, report generation). Each job succeeds with probability 0.8 and fails with probability 0.2. Upon failure, the system either performs a simple retry with costs 4 or restarts the system and performs a full retry with costs 6. We want to obtain a bound on the expected total costs of processing the queue.

The probability module $\modprob$ allows us to model this problem as a program $C_Q \in \wGCL_{\prob}$ (see \Cref{fig:case_studies_wGCL_prob}). In $\wGCL_{\prob}$, a program $\BRANCH{\WEIGH{0.8}\fatsemi\, C_1}{\WEIGH{0.2}\fatsemi\, C_2}$ corresponds to the probabilistic choice $\pChoice{C_1}{0.8}{C_2}$ from $\pGCL$. Since this generalizes and all other statements of $\pGCL$ are native in $\wGCL_{\prob}$ (compare with \cite{McIverM05}), weighted programming is a conservative extension of probabilistic programming over non-negative integer variables.

The expected costs of processing the queue are tracked in the variable $e$. Hence, we compute an upper bound of $\wp{C_Q}(e)$ and illustrate this for $\kappa$-induction with $\kappa = 2$ by upper bounding $\wpPhi_e(\wpPhi_e(I) \hand I)$ for some invariant $I \in \wHeyLo$. The encoding is $\mathit{queueing} = \kappaEnc{C_Q, \varset, I} \in \wHeyVL$ for local variables $\varset = \{ k, e, b \}$ and invariant $I = e + 4 \cdot k + 16 \cdot \iverson{b \neq b_0}$ (see \Cref{fig:case_studies_prob_why}). We derive that $\wp{C_Q}(e) \sleq \vc{\mathit{queueing}} \sleq e_0 + 4\cdot k_0$, where $e_0$ and $k_0$ are the initial values of $e$ and $k$. Hence, the additional costs are bounded by $4\cdot k$ for a queue of length $k$.

\subsection{Database Provenance}
\label{sec:database-provenance}

\begin{wrapfigure}{r}{1.5cm}
    \vspace{-0em}
    \centering
    \begin{tabular}{cc}
        a & b \\ \hline
        2 & 3 \\
        2 & 1 \\
        \hint{3} & \hint{2} \\
        \hint{3} & \hint{1} \\
        4 & 2 \\
        \hint{5} & \hint{2} \\
        \hint{6} & \hint{3}
    \end{tabular}
    \vspace{-1em}
\end{wrapfigure}
In database theory, \emph{Why provenance} \cite{provenance_foundational_paper, gradel2025provenance,complexityofwhyprovenance, whysemiring} analyses which entries in a database determine the result of a query. As an example, we consider the organisational hierarchy of a company, where new ideas are discussed with the direct supervisor.
Employee $a$ may discuss an idea with its manager $b$ in the table. This is a simplified version of the \emph{path accessibility problem} \cite{Cook74,complexityofwhyprovenance}. \Cref{fig:case-studies-recursive-query} visualises the organiation. We consider the query:
    Can information travel from employee $4$ to manager $1$?

Why provenance determines \emph{why} that query is correct, hence which employees need to know about the new idea in order to pass it on to the top manager. Or, more formally, which entries in the table lead to the result of the query. Here, one possible explanation for the result are the rows 1, 2, and 5 in the table. However, there exist other explanations (e.g. rows 1, 3, and 5 or rows 1, 2, 3, and 5) and the new idea can be passed back and forth between employees 2 and 3 arbitrarily often, which leads to infinitely many explanations. The Why semiring \cite{whysemiring} provides finite representations of the infinite explanations due to the idempotent addition $\vee$ and we represent that employee $a$ knows about an idea by the atom $X_a$.
\Cref{fig:case_studies_wGCL_why} contains an implementation in $\wGCL_{\why}$ and computes the information flow from $4$ to $1$.

\begin{wrapfigure}{r}{3.5cm}
    \vspace{-1em}
    \begin{tikzpicture}
        \node[draw, circle] (1) at (0,0) {1};
        \node[draw, circle] (2) at (-1,-1) {2};
        \node[draw, circle] (3) at (1,-1) {3};
        \node[draw, circle] (4) at (-1.5,-2) {4};
        \node[draw, circle] (5) at (-0.5,-2) {5};
        \node[draw, circle] (6) at (0.5,-2) {6};
        \node[draw, circle] (7) at (1.5,-2) {7};

        \draw[->] (4) -- (2);
        \draw[->] (5) -- (2);
        \draw[->] (6) -- (3);
        \draw[->] (7) -- (3);
        \draw[<->] (2) -- (3);
        \draw[->] (2) -- (1);
        \draw[->] (3) -- (1);
    \end{tikzpicture}

    \caption{Organizational hierarchy as a graph.}
    \label{fig:case-studies-recursive-query}
\end{wrapfigure}
To obtain the provenance, we are interested in the weakest liberal preweighting $\wlp{C_{GR}}(\top)$, where the postweighting $\hla = \top$ means that we do not impose any additional constraints. We use Park induction to show that $I \hleq \wlpchar(I)$ for the invariant $I = \iverson{e = 1}\cdot X_1 \vee \iverson{e = 2}\cdot (X_1 \wedge X_2 \wedge X_3) \vee \iverson{e = 3}\cdot (X_1 \wedge X_2 \wedge X_3) \vee \iverson{e = 4}\cdot (X_1 \wedge X_2 \wedge X_3 \wedge X_4)$. \Cref{tab:encoding_spec_char_park} gives us the encoding $\mathit{graphReachability} = \parkEnc{C_{GR}, \{e\}, I}$ in \Cref{fig:why-lowering-example-employees}. Using \Cref{thm:heyvl-soundness-local-park-induction-encoding}, we derive that $X_1 \wedge X_2 \wedge X_3 \wedge X_4 \hleq \vc{\mathit{graphReachability}} \hleq \wlp{C_{GR}}(\top)$, which means that the information may travel via the employees 1, 2, 3, and 4.

Note that reasoning via weakest preweightings and the $\symVc$ calculus elegantly avoid looking at infinitely many derivation trees (representing infinitely many explanations), relying on complex, infinite formal power series, or implementing specialized algorithms to prevent infinite loops, since loops are evaluted as fixed points and the idempotent operation algebraically collapses the representation of infinite derivation trees into finite, closed-form expressions. This captures exact provenance without requiring infinite data structures or cycle detection.

\section{Automated Reasoning}
\label[section]{sec:quantifier-elimination}

\subsection{Verification via Caesar}

For the verification of the case studies in the previous section, we give a translation from certain instances of \wHeyVL into the existing tool \Caesar{}~\cite{SchroerBKKM23}.
This is a first step towards full automation of weighted programs.
This is simple for the probability module $\modprob$, since \wGCL is a conservative extension of \pGCL, \wHeyVL is also a conservative extension of \HeyVL.
For the Why semiring $\modwhy$, we translate statements $\WEIGH{m}$ into $\stmtAssume{\embed{m}}$ and statements $\symWeightedBranching$ into $\stmtAngelic{\cdot}{\cdot}$ (which corresponds to booleanizing the formulae).
The Clearance module is translated by replacing $\WEIGH{m}$ with $\coAssert{m}$ and $\symWeightedBranching$ with $\stmtDemonic{\cdot}{\cdot}$.
And the formal languages module considers a variable $\ell$ that contains the language that will be computed in the remainder of the execution. Then, $\WEIGH{m}$ translates to $\stmtAssert{\ell \subseteq m \cdot \Gamma^{\infty}}$, which ensures that all words in $\ell$ start with $m$.
\Cref{sec:appendix-weighted-to-ordinary-heyvl} contains the full translations for the instances that we consider in this work.

\subsection{Quantifier Elimination}

For automated verification, we want to discharge verification conditions to an SMT solver.
The quantitative quantifiers of \wHeyLo{} are an obstacle: as they correspond to infima and suprema, classical techniques for quantifier elimination do not apply. Hence, we develop a technique for quantifier elimination in our setting.
This also covers the previously not formalized implementation in use by \Caesar{}, so far limited to the probabilistic module $\modprob$.
Quantifier elimination is necessary in particular due to the use of \symHavoc/\symUp\symHavoc{}, which induces infima/suprema. For example, the encoding of Park induction (c.f. \Cref{sec:local_park_induction}) has a verification condition of the form $\top \equiv \hpreweight \impl (\hinv \sqcap \bigsqcap_x \heylovalidate{\Psi})$, where $\hinv \in \wHeyLo$ is the loop invariant and $\hpreweight \in \wHeyLo$ is a $\symRequires$, and $\Psi \in \wHeyLo$ is for the loop body.\footnote{Instead of $\hpreweight \heyloleq \hinv \sqcap \bigsqcap_x \heylovalidate{\Psi}$, we use the internalized version $\top \equiv \hpreweight \impl (\hinv \sqcap \bigsqcap_x \heylovalidate{\Psi})$ (recall \Cref{thm:deduction_theorem}).}
The key observation is that, one can first move the quantifier (\emph{prenex}) and then strip leading infima, preserving equivalence with respect to validity.
We assume $x$ is only bound in $\Psi$ and does not occur free in $\hpreweight$ or $\hinv$.\footnote{Variables bound by quantifiers can always be renamed to satisfy this condition.}
\[
  \begin{array}{r@{\quad}c@{\quad}c@{\quad}l}
    \textsc{1. Prenex:} &
    \hpreweight \impl \hinv \sqcap \bigsqcap_x \heylovalidate{\Psi}
    & \equiv &
    \bigsqcap_x (\hpreweight \impl (\hinv \sqcap \heylovalidate{\Psi}))
    \\[0.5em]
    \textsc{2. Strip:} &
    \isValid{\bigsqcap_x \left(\hpreweight \impl (\hinv \sqcap \heylovalidate{\Psi})\right)}
    & \text{iff} &
    \isValid{\hpreweight \impl (\hinv \sqcap \heylovalidate{\Psi})}~.
  \end{array}
\]

\newcommand{\qeTag}[2]{\ifmmode\text{\textcolor{#1}{\scriptsize\textsf{#2}}}\else\textcolor{#1}{\scriptsize\textsf{#2}}\fi}
\newcommand{\qeAlways}{\qeTag{neutralColor}{always}}
\newcommand{\qeWeightedAdd}{\qeTag{weightedColor}{W+}}
\newcommand{\qeWeightedScalar}{\qeTag{weightedColor}{Wx}}
\newcommand{\qeWf}{\qeTag{hintcolor}{wf.}}
\newcommand{\qeUwf}{\qeTag{hintcolor}{uwf.}}
\newcommand{\qeExtra}{\qeTag{hintcolor}{extra}}
\newcommand{\qeReq}[1]{\unskip\nobreak\hfill\mbox{#1}}

\begin{figure}[t]
    \centering

    \begingroup
    \footnotesize
    \renewcommand{\arraystretch}{1.18}
    \begin{adjustbox}{max width=\linewidth}
        \begin{tabularx}{\linewidth}
            {@{}>{\columncolor{basicBackground!2}\raggedright\arraybackslash}p{0.12\linewidth}
            !{\color{Platinum}\vrule width 1.5pt}
            >{\columncolor{basicBackground!2}\raggedright\arraybackslash}X
            !{\color{Platinum}\vrule width 2.5pt}
            >{\columncolor{basicBackground!2}\raggedright\arraybackslash}X@{}}
            \toprule
            \cellcolor{white}Context
             & \cellcolor{white}\makecell[c]{Infimum-prefix rules                                \\[-0.15em]$\bigsqcap_x(\cdot)$}
             & \cellcolor{white}\makecell[c]{Supremum-prefix rules                               \\[-0.15em]$\bigsqcup_x(\cdot)$} \\
            \midrule
            Lattice
             & $(\bigsqcap_x\hhla)\star\hhlb,\ \hhla\star(\bigsqcap_y\hhlb)
                \equiv \bigsqcap(\hhla\star\hhlb)$\qeReq{\qeAlways}
             & $(\bigsqcup_x\hhla)\star\hhlb,\ \hhla\star(\bigsqcup_y\hhlb)
                \equiv \bigsqcup(\hhla\star\hhlb)$\qeReq{\qeAlways}
            \\
            Validation
             & $\heylovalidate{\bigsqcap_x\hhla}
                \equiv \bigsqcap_x\heylovalidate{\hhla}$\qeReq{\qeAlways}
             & $\heylovalidate{\bigsqcup_x\hhla}
                \equiv \bigsqcup_x\heylovalidate{\hhla}$\qeReq{\qeUwf}
            \\
            Covalidation
             & $\heylocovalidate{\bigsqcap_x\hhla}
                \equiv \bigsqcap_x\heylocovalidate{\hhla}$\qeReq{\qeWf}
             & $\heylocovalidate{\bigsqcup_x\hhla}
                \equiv \bigsqcup_x\heylocovalidate{\hhla}$\qeReq{\qeAlways}
            \\
            Implication
             & $(\bigsqcup_x\hhla)\impl\hhlb,\ \hhla\impl(\bigsqcap_y\hhlb)
                \equiv \bigsqcap(\hhla\impl\hhlb)$\qeReq{\qeAlways}
             & $(\bigsqcap_x\hhla)\impl\hhlb,\ \hhla\impl(\bigsqcup_y\hhlb)
                \equiv \bigsqcup(\hhla\impl\hhlb)$\qeReq{\qeUwf}
            \\
            Coimplication
             & $(\bigsqcup_x\hhla)\coimpl\hhlb,\ \hhla\coimpl(\bigsqcap_y\hhlb)
                \equiv \bigsqcap(\hhla\coimpl\hhlb)$\qeReq{\qeWf}
             & $(\bigsqcap_x\hhla)\coimpl\hhlb,\ \hhla\coimpl(\bigsqcup_y\hhlb)
                \equiv \bigsqcup(\hhla\coimpl\hhlb)$\qeReq{\qeAlways}
            \\
            Addition
             & $(\bigsqcap_x\hhla)\splus\hhlb,\ \hhla\splus(\bigsqcap_y\hhlb)
                \equiv \bigsqcap(\hhla\splus\hhlb)$\qeReq{\qeWeightedAdd}
             & $(\bigsqcup_x\hhla)\splus\hhlb,\ \hhla\splus(\bigsqcup_y\hhlb)
                \equiv \bigsqcup(\hhla\splus\hhlb)$\qeReq{\qeWeightedAdd}
            \\
            Scalar
             & $a\stimes(\bigsqcap_x\hhla)
                \equiv \bigsqcap_x(a\stimes\hhla)$\qeReq{\qeExtra}
             & $a\stimes(\bigsqcup_x\hhla)
                \equiv \bigsqcup_x(a\stimes\hhla)$\qeReq{\qeWeightedScalar}
            \\
            \bottomrule
        \end{tabularx}
    \end{adjustbox}
    \endgroup

    \caption{
         Prenexing rules for $\hhla,\hhlb \in \wHeyLo$ and $a \in \MonExp$ where $x$ is not free in $\hhlb$ and $y$ is not free in $\hhla$.
         Comma-separated formulas are separate prenexing laws, one for each displayed position of the quantified subformula.
         On the right, \(\bigsqcap(\cdot)\) and \(\bigsqcup(\cdot)\) bind the same variable as the quantified subformula being moved.
         In the lattice row, $\star\in\{\sqcap,\sqcup\}$. \qeAlways{} rules hold always, \qeWeightedAdd{}/\qeWeightedScalar{}: under extra weighted assumptions, \qeUwf{}/\qeWf{} under (upwards) well-foundedness, and \qeExtra{} under scalar meet preservation.
      }
    \label{fig:qelim-prenex-landscape}
\end{figure}

\Cref{fig:qelim-prenex-landscape} shows the landscape of prenexing rules that hold under various assumptions.
Many hold always (\qeAlways).
For the example above where we eliminate infima, we make use of the rules in the second column and prenex over the validation $\heylovalidate{\cdot}$, binary meet $\sqcap$, and implication $\impl$.

The rules in the second column generate a prefix of infimum quantifiers, and the third column generate suprema.
Generally, we can strip leading infima with respect to validity (equivalence to $\top$), and suprema with respect to covalidity (equivalence to $\bot$):
\begin{restatable}[Quantifier Stripping]{theorem}{thmQelimStrip}
  \label{thm:qelim-strip}
  For every $\hhla \in \wHeyLo$:
  \[
    \linfquant{x}{\hhla} \text{ is valid } \qiff \hhla \text{ is valid}, \qquad
    \lsupquant{x}{\hhla} \text{ is covalid } \qiff \hhla \text{ is covalid}.
  \]
\end{restatable}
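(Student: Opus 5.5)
The plan is to unfold the semantics of the quantifiers from \Cref{fig:wheylo_semantics} and use only lattice-theoretic facts about infima and suprema. Recall that $\hhla$ is valid iff $\sem{\hhla}(\sigma) = \top$ for all $\sigma \in \Sigma$, and covalid iff $\sem{\hhla}(\sigma) = \bot$ for all $\sigma$. Moreover, $\sem{\linfquant{x}{\hhla}}(\sigma) = \hbigand\{\sem{\hhla}(\sigma\substBy{x}{n}) \mid n \in \Nats\}$, and this infimum exists by \Cref{thm:wHeyLo_well_defined}. I would prove the validity statement in detail; the covalidity statement then follows by order duality, exchanging $\hbigand$ with $\hbigor$ and $\top$ with $\bot$.

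\emph{($\Leftarrow$)} Assume $\hhla$ is valid and fix $\sigma$. For every $n$, the state $\sigma\substBy{x}{n}$ lies in $\Sigma$: it agrees with $\sigma$ except at $x$, so it still has finite support. Hence $\sem{\hhla}(\sigma\substBy{x}{n}) = \top$ for every $n$. The infimum of a nonempty set consisting only of $\top$ is $\top$, so $\sem{\linfquant{x}{\hhla}}(\sigma) = \top$.

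\emph{($\Rightarrow$)} Assume $\linfquant{x}{\hhla}$ is valid and fix $\sigma$. Put $n_0 := \sigma(x)$. Then $\sigma\substBy{x}{n_0} = \sigma$, so $\sem{\hhla}(\sigma)$ belongs to the set over which the infimum is taken. Since an infimum is a lower bound of its set, $\top = \sem{\linfquant{x}{\hhla}}(\sigma) \hleq \sem{\hhla}(\sigma)$. As $\top$ is the greatest element, antisymmetry gives $\sem{\hhla}(\sigma) = \top$.

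There is no real obstacle here. The one point that needs care is the ($\Rightarrow$) direction: the argument relies on validity quantifying over \emph{all} states, including states with arbitrary values of $x$, and on the identity $\sigma\substBy{x}{\sigma(x)} = \sigma$. I would also state explicitly that the argument uses only the existence of the infimum, which \Cref{thm:wHeyLo_well_defined} guarantees, and no completeness assumption on $\Module$. The dual case uses the same witness $n_0 = \sigma(x)$, this time with the supremum being an upper bound and $\bot$ being the least element.
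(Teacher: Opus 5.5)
Your proposal is correct and matches the paper's proof: for $(\Leftarrow)$ it uses that every instance $\sem{\hhla}(\sigma\substBy{x}{n})$ is $\top$, and for $(\Rightarrow)$ it uses the witness $n = \sigma(x)$ together with the infimum being a lower bound and $\top$ being greatest, with the supremum/$\bot$ case dual. Your extra remarks on finite support of $\sigma\substBy{x}{n}$ and on the existence of the infimum via \Cref{thm:wHeyLo_well_defined} are harmless refinements that the paper leaves implicit.
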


In contrast to classical logic, we cannot always prenex quantifiers~\cite{baaz2024goedellogicsprenexfragments}.
For example, $\heylocovalidate{\bigsqcap_x\hhla} \equiv \bigsqcap_x\heylocovalidate{\hhla}$ only holds when our monoid-module is \emph{well-founded}, i.e. for all $\State \in \States$, there exists a value $v \in \Vals$ such that $\sem{\linfquant{x}{\hhla}}(\State) = \sem{\hhla\substBy{x}{v}}(\State)$ for all $\hhla \in \wHeyLo$.
Over the probabilistic module $\modprob$, let \(\hhla = (1/2)^x \otimes 1\).
Then \(\bigsqcap_x \hhla = 0\), but no \(x \in \mathbb{N}\) realizes this infimum; hence \(\nabla(\bigsqcap_x \hhla)=0\), whereas \(\bigsqcap_x \nabla(\hhla)=\infty\).
By contrast, the tropical monoid-module $\modtrop$ is \emph{upwards} well-founded: its natural order is reversed on \(\NatsX\), so suprema are ordinary minima and are always attained.
Further, all of our monoid-modules also satisfy the \qeWeightedAdd{}, \qeWeightedScalar{}, \qeExtra{} assumptions, which allow prenexing over addition and scalar multiplication (suprema and infima), respectively.\footnote{For \qeWeightedAdd{}, $\splus$ must preserve binary meets and joins, and for \qeWeightedScalar{}, $a \stimes (\cdot)$ must preserve binary joins, for \qeExtra{}, binary meets.}
From now on, we focus on rules that always hold for the monoid-modules in our paper (\qeAlways, \qeWeightedAdd{}, \qeWeightedScalar{}, \qeExtra{} but not \qeWf{} nor \qeUwf{}).
The constructions can be easily extended for when other assumptions hold.

\begin{figure}[t]
  \centering
  \renewcommand{\arraystretch}{1.24}
  \begin{minipage}[t]{0.47\linewidth}
    \vspace{0pt}
    \centering
    \textbf{Elimination for Validity} \\
    $\mathsf{qelim}^{\downarrow} \colon \wHeyLo \to \wHeyLo$:
    \[
      \begin{array}{@{}r@{\,\mapsto\,}l@{}}
        \alpha & \alpha \quad \text{for atomic $\alpha$, $\top$, $\bot$}
        \\[0.35em]
        \heylovalidate{\hhla} & \heylovalidate{\mathsf{qelim}^{\downarrow}(\hhla)}
        \\[0.35em]
        \hhla \mathbin{\star} \hhlb
        & \mathsf{qelim}^{\downarrow}(\hhla) \mathbin{\star} \mathsf{qelim}^{\downarrow}(\hhlb)
        \\[0.05em]
        \multicolumn{2}{@{}l@{}}{\qquad \text{for $\star \in \{\sqcap,\sqcup,\splus\}$}}
        \\[0.35em]
        a \stimes \hhla & a \stimes \mathsf{qelim}^{\downarrow}(\hhla)
        \\[0.35em]
        \hhla \impl \hhlb
        & \mathsf{qelim}^{\uparrow}(\hhla) \impl \mathsf{qelim}^{\downarrow}(\hhlb)
        \\[0.35em]
        \bigsqcap_x \hhla & \mathsf{fresh}_x(\mathsf{qelim}^{\downarrow}(\hhla))
        \\[0.35em]
        \hhla & \hhla \quad \text{otherwise}
      \end{array}
    \]
  \end{minipage}
  \hfill
  \begin{minipage}[t]{0.47\linewidth}
    \vspace{0pt}
    \centering
    \textbf{Elimination for Covalidity} \\
    $\mathsf{qelim}^{\uparrow} \colon \wHeyLo \to \wHeyLo$:
    \[
      \begin{array}{@{}r@{\,\mapsto\,}l@{}}
        \alpha & \alpha \quad \text{for atomic $\alpha$, $\top$, $\bot$}
        \\[0.35em]
        \heylocovalidate{\hhla} & \heylocovalidate{\mathsf{qelim}^{\uparrow}(\hhla)}
        \\[0.35em]
        \hhla \mathbin{\star} \hhlb
        & \mathsf{qelim}^{\uparrow}(\hhla) \mathbin{\star} \mathsf{qelim}^{\uparrow}(\hhlb)
        \\[0.05em]
        \multicolumn{2}{@{}l@{}}{\qquad \text{for $\star \in \{\sqcap,\sqcup,\splus\}$}}
        \\[0.35em]
        a \stimes \hhla & a \stimes \mathsf{qelim}^{\uparrow}(\hhla)
        \\[0.35em]
        \hhla \coimpl \hhlb
        & \mathsf{qelim}^{\downarrow}(\hhla) \coimpl \mathsf{qelim}^{\uparrow}(\hhlb)
        \\[0.35em]
        \bigsqcup_x \hhla & \mathsf{fresh}_x(\mathsf{qelim}^{\uparrow}(\hhla))
        \\[0.35em]
        \hhla & \hhla \quad \text{otherwise}
      \end{array}
    \]
  \end{minipage}
  \caption{
    Quantifier elimination procedures $\mathsf{qelim}^{\downarrow}$ and $\mathsf{qelim}^{\uparrow}$ for validity and covalidity, respectively.
    The function $\mathsf{fresh}_x(\cdot) \colon \wHeyLo \to \wHeyLo$ replaces the eliminated bound variable $x$ by a fresh free variable.}
  \label{fig:qelim-algorithm}
\end{figure}

In the above, we used the rules in \Cref{fig:qelim-prenex-landscape} to shift quantifiers bottom-up and then strip a prefix.
However, when the shifted quantifier cannot be stripped (e.g. a supremum w.r.t. validity), we want to avoid quantifiers capturing larger formulas than before.
Therefore, we present a \emph{top-down} approach in \Cref{fig:qelim-algorithm}.
Starting from the root, we directly remove eliminatable quantifiers while possible and stop recursing at quantifiers that cannot be eliminated.
When a quantifier is removed, we replace bound variables $x$ by fresh ones to avoid wrong capture.
We have two entry points, $\mathsf{qelim}^{\downarrow}$ and $\mathsf{qelim}^{\uparrow}$, corresponding to elimination with respect to validity and covalidity, respectively.
When the pattern of the current formula matches, then we may stop, recurse, or switch to the dual procedure.
For example, $\mathsf{qelim}^{\downarrow}(\heylovalidate{\linfquant{x}{x}}) = \heylovalidate{\mathsf{qelim}^{\downarrow}(x)} = \heylovalidate{x}$ and $\mathsf{qelim}^{\downarrow}(\lsupquant{x}{x} \impl \linfquant{y}{y}) = x \impl y$.
For soundness, we prove the equivalence to prenexing and stripping leading quantifiers via \Cref{thm:qelim-strip}.

\noindent
\begin{restatable}{theorem}{thmQelimSoundness}
  \label[theorem]{thm:qelim-soundness}
  Assume \qeAlways, \qeWeightedAdd{}, \qeWeightedScalar{}, and \qeExtra{} hold.
  Then, for every $\hhla \in \wHeyLo$,
  \[
    \isValid{\hhla} \qiff \isValid{\downQelim{\hhla}}~,
    \qquad
    \isCovalid{\hhla} \qiff \isCovalid{\upQelim{\hhla}}~.
  \]
\end{restatable}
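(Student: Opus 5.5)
The plan is to prove, by simultaneous structural induction on $\hhla$, a stronger \emph{prenex invariant} for both procedures, and then to obtain the theorem by repeatedly applying \Cref{thm:qelim-strip}. Concretely, the claim is the following. For every $\hhla \in \wHeyLo$ there are finite lists $\bar y$ and $\bar z$ of variables, namely exactly the fresh variables introduced by the respective procedure, such that
\[
  \hhla \equiv \hbigand_{\bar y}\, \downQelim{\hhla}
  \qquad\text{and}\qquad
  \hhla \equiv \hbigor_{\bar z}\, \upQelim{\hhla},
\]
where $\hbigand_{\bar y}$ denotes a block of infimum quantifiers, one per variable in $\bar y$, and dually for $\hbigor_{\bar z}$. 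Given this invariant, the theorem follows by peeling the block one quantifier at a time. Each step uses the first equivalence of \Cref{thm:qelim-strip} for validity and the second for covalidity, so a finite induction on the length of $\bar y$ (respectively $\bar z$) suffices.

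Before the main induction I would record a routine renaming lemma. It says that $\hbigand_x \hhla \equiv \hbigand_{x'} \hhla\substBy{x}{x'}$ for any $x'$ not occurring in $\hhla$, and dually for suprema. It is proved by structural induction using the substitution semantics, and it justifies the $\mathsf{fresh}_x$ steps. It also guarantees the side conditions of \Cref{fig:qelim-prenex-landscape}: because all introduced variables are fresh and pairwise distinct, a variable bound in the prefix of one operand never occurs free in the other operand. Multi-variable blocks are then moved outward by iterating the single-variable rules.

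The inductive cases for $\mathsf{qelim}^{\downarrow}$ run as follows.
\begin{itemize}
  \item \emph{Atoms, $\top$, $\bot$, and the ``otherwise'' case.} Here $\bar y$ is empty.
  \item \emph{Validation, $\heylovalidate{\hhla}$.} The induction hypothesis gives $\heylovalidate{\hbigand_{\bar y}\downQelim{\hhla}}$, and the validation rule of \Cref{fig:qelim-prenex-landscape} (\qeAlways) pushes the block outward.
  \item \emph{Binary $\star \in \{\sqcap,\sqcup\}$.} Apply the hypothesis to both operands, then prenex both blocks with the lattice rules (\qeAlways).
  \item \emph{Addition $\splus$.} As for the lattice connectives, but using \qeWeightedAdd.
  \item \emph{Scalar multiplication, $a \stimes \hhla$.} Use the infimum scalar rule, which holds under \qeExtra.
  \item \emph{Implication, $\hhla \impl \hhlb$.} This is the key mixed case. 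For the antecedent use the \emph{covalidity} hypothesis $\hhla \equiv \hbigor_{\bar z}\upQelim{\hhla}$, and for the consequent the validity hypothesis $\hhlb \equiv \hbigand_{\bar y}\downQelim{\hhlb}$. The implication rules of \Cref{fig:qelim-prenex-landscape}, which hold always, convert the supremum in the antecedent into an infimum and pass the infimum out of the consequent. This yields $\hbigand_{\bar z}\hbigand_{\bar y}\bigl(\upQelim{\hhla}\impl\downQelim{\hhlb}\bigr)$, which matches the output of $\mathsf{qelim}^{\downarrow}$.
  \item \emph{Quantifier, $\hbigand_x \hhla$.} Combine the hypothesis with the renaming lemma, prefixing the fresh variable to $\bar y$.
\end{itemize}
The cases for $\mathsf{qelim}^{\uparrow}$ are exactly dual: covalidation, the lattice and addition rules, the supremum scalar rule under \qeWeightedScalar, co-implication (which mixes $\mathsf{qelim}^{\downarrow}$ of the left operand with $\mathsf{qelim}^{\uparrow}$ of the right), and the supremum quantifier.

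I expect the main obstacle to be bookkeeping rather than mathematics. The first issue is that the two procedures are mutually recursive, so the induction must prove both halves of the invariant together for every subformula. The second is guaranteeing freshness: I would make $\mathsf{fresh}$ draw from a global supply disjoint from all variables of the input formula. With that in place, every application of a prenex rule meets its ``not free in'' side condition, and reordering adjacent quantifiers of the same kind is harmless.

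A semantic subtlety also needs checking. Quantifiers range over $\Nats$, and the infinitary infimum must exist when blocks are merged. This holds because, after renaming, every block is again a well-formed \wHeyLo formula, so \Cref{thm:wHeyLo_well_defined} applies. Finally, the cases that would need \qeWf{} or \qeUwf{} never arise, because $\mathsf{qelim}^{\downarrow}$ does not recurse into $\heylocovalidate{\cdot}$, into $\coimpl$, or under a supremum; these fall into the ``otherwise'' case with an empty prefix.
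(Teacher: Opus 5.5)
Your proposal is correct and follows essentially the paper's argument. The paper proves the prenex equivalence through auxiliary $\mathsf{prenex}$ and $\mathsf{strip}$ procedures plus a factorization lemma $\mathsf{qelim} = \mathsf{strip}\circ\mathsf{prenex}$; you instead fuse these into the single mutual invariant $\hhla \equiv \hbigand_{\bar y}\downQelim{\hhla}$ (and its dual), then peel the block off using \Cref{thm:qelim-strip}, with your explicit fresh-variable and renaming bookkeeping being, if anything, more careful than the paper's implicit renaming convention.
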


Because the procedures traverse every sub-formula at most once, they run in linear time.\footnote{$\downQelim{\cdot}$ and $\upQelim{\cdot}$ can be combined with the variable renaming of $\mathsf{fresh}$ to avoid redundant traversals.}
The resulting formulas are of the same sizes as before, only with quantifiers removed and fresh variables introduced for every bound variable.
For the following fragment, \emph{all} quantifiers can be eliminated.

\begin{restatable}{theorem}{thmQelimCompleteFragment}
  \label[theorem]{thm:qelim-complete-fragment}
  Assume \qeAlways, \qeWeightedAdd{}, \qeWeightedScalar{}, and \qeExtra{} hold.
  Let $\downQFrag, \upQFrag \subseteq \wHeyLo$ be given by the following mutually recursive grammars, where $\hhla_{\mathsf{qf}} \in \wHeyLo$ is quantifier-free.
  \begin{align*}
    \hhla_\downarrow ~::=~&
    \hhla_{\mathsf{qf}}
    \mid \heylovalidate{\hhla_\downarrow}
    \mid \bigsqcap_x \hhla_\downarrow
    \mid \hhla_\downarrow \sqcap \hhla_\downarrow
    \mid \hhla_\downarrow \sqcup \hhla_\downarrow
    \mid \hhla_\downarrow \splus \hhla_\downarrow
    \mid a \stimes \hhla_\downarrow
    \mid \hhla_\uparrow \impl \hhla_\downarrow,
    \\
    \hhla_\uparrow ~::=~&
    \hhla_{\mathsf{qf}}
    \mid \heylocovalidate{\hhla_\uparrow}
    \mid \bigsqcup_x \hhla_\uparrow
    \mid \hhla_\uparrow \sqcap \hhla_\uparrow
    \mid \hhla_\uparrow \sqcup \hhla_\uparrow
    \mid \hhla_\uparrow \splus \hhla_\uparrow
    \mid a \stimes \hhla_\uparrow
    \mid \hhla_\downarrow \coimpl \hhla_\uparrow.
  \end{align*}
  Let $\hhla \in \downQFrag$.
  Then, $\downQelim{\hhla} \in \qfFrag$.
  For $\hhla \in \upQFrag,$ we have $\upQelim{\hhla} \in \qfFrag$.
\end{restatable}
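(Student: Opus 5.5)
We prove the claim by simultaneous structural induction over the mutually recursive grammars for $\downQFrag$ and $\upQFrag$. The claim to establish is the conjunction: for every $\hhla \in \downQFrag$ we have $\downQelim{\hhla} \in \qfFrag$, and for every $\hhla \in \upQFrag$ we have $\upQelim{\hhla} \in \qfFrag$. The induction is on the size of the formula, not separately on each grammar. This matters because the implication case of $\downQFrag$ calls $\mathsf{qelim}^{\uparrow}$ on a strictly smaller subformula, and dually for co-implication. The assumptions \qeAlways, \qeWeightedAdd{}, \qeWeightedScalar{} and \qeExtra{} play no role in this syntactic statement. They matter only for the soundness in \Cref{thm:qelim-soundness}.

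First, we need that both procedures act as the identity on quantifier-free formulas, or at least map them into $\qfFrag$. We show this by a separate small structural induction on $\hhla_{\mathsf{qf}}$. Each clause of \Cref{fig:qelim-algorithm} that can match a quantifier-free formula either returns it unchanged (atoms, $\top$, $\bot$, and the ``otherwise'' clause) or rebuilds it from recursive results using the same connective. The ``otherwise'' clause covers, for example, $\heylocovalidate{\cdot}$ under $\mathsf{qelim}^{\downarrow}$ and $\coimpl$ under $\mathsf{qelim}^{\downarrow}$. Since no quantifier is present, none is introduced. A subtlety is that the grammar alternatives overlap: for instance, $\heylovalidate{\hhla_{\mathsf{qf}}}$ is both quantifier-free and of the form $\heylovalidate{\hhla_\downarrow}$. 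This causes no trouble, because the procedures are defined by the syntactic shape of the formula, not by the derivation in the grammar.

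For the inductive step of $\mathsf{qelim}^{\downarrow}$, we go through the grammar alternatives one by one.
\begin{itemize}
\item For $\heylovalidate{\hhla_\downarrow}$, the result is $\heylovalidate{\downQelim{\hhla_\downarrow}}$, which is quantifier-free by the induction hypothesis.
\item For $\hhla_\downarrow \star \hhlb_\downarrow$ with $\star\in\{\sqcap,\sqcup,\splus\}$, and for $a\stimes\hhla_\downarrow$, the argument is the same: the connective is kept and the induction hypothesis applies to each argument.
\item For $\hhla_\uparrow\impl\hhlb_\downarrow$, the result is $\upQelim{\hhla_\uparrow}\impl\downQelim{\hhlb_\downarrow}$. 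Here the left operand lies in $\upQFrag$, so the induction hypothesis for the $\uparrow$ side applies.
\item For $\bigsqcap_x \hhla_\downarrow$, the result is $\mathsf{fresh}_x(\downQelim{\hhla_\downarrow})$. It suffices to note that $\mathsf{fresh}_x$ is a variable renaming, so it preserves quantifier-freeness.
\end{itemize}
The $\uparrow$ case is exactly dual: $\heylocovalidate{\cdot}$, the binary connectives, scalar multiplication, $\hhla_\downarrow\coimpl\hhla_\uparrow$ (which switches to $\mathsf{qelim}^{\downarrow}$ on the left operand), and $\bigsqcup_x$ (handled by $\mathsf{fresh}_x$).

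The main obstacle is not mathematical depth but checking that the clause matching is coherent. Every top-level constructor allowed in $\downQFrag$ must hit a recursing clause of $\mathsf{qelim}^{\downarrow}$, never the ``otherwise'' clause, and likewise for $\upQFrag$. Otherwise a quantifier buried inside the formula would be left in place. This reduces to inspecting \Cref{fig:qelim-algorithm} and confirming that the constructor sets of the two grammars coincide with the recursing clauses of the respective procedures. We also need a mild well-definedness fact: $\mathsf{fresh}_x$ replaces $x$ only at its binding scope, so it neither creates nor removes quantifiers elsewhere. We would state this as a one-line lemma about renaming.
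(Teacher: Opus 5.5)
Your proof is correct, but it follows a different route from the paper's. The paper does not induct on $\mathsf{qelim}$ directly. It reuses the machinery built for \Cref{thm:qelim-soundness} in three steps:
\begin{itemize}
\item by \Cref{lem:qelim-is-strip-prenex}, $\downQelim{\hhla}=\downStrip{\vec x}{\psi}$ where $\downPrenex{\hhla}=\langle \vec x,\psi\rangle$;
\item by \Cref{thm:prenex-complete-fragment}, the matrix $\psi$ is quantifier-free; that theorem is a mutual induction over the two fragment grammars, essentially your case analysis carried out for $\mathsf{prenex}$ instead of $\mathsf{qelim}$;
\item by \Cref{lem:qelim-strip-soundness}, stripping a leading quantifier block off a quantifier-free matrix leaves a formula in $\qfFrag$.
\end{itemize}

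The paper's route makes completeness a short corollary of lemmas it needs anyway for soundness, and it shows that on the fragment every formula has a prenex form with quantifier-free matrix. Your direct induction is more elementary and self-contained. It avoids the factorization lemma, whose proof rests on the strip equations of \Cref{lem:qelim-strip-helpers} and an implicit freshness convention. It also makes explicit that the weighted-prenexing hypotheses play no role in this purely syntactic claim.

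Your separate lemma that both procedures are the identity on quantifier-free inputs spells out a step the paper's base case treats as immediate ($\downPrenex{\hhla}=\langle\epsilon,\hhla\rangle$ for $\hhla\in\qfFrag$), although it too needs a small mutual induction. Your renaming remark can be simplified. By the induction hypothesis the argument of $\mathsf{fresh}_x$ is already quantifier-free, so there is no binding scope left to respect: $\mathsf{fresh}_x$ is just a variable-for-variable substitution.
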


As a consequence, we can completely eliminate the quantifiers introduced by \symHavoc{} statements in procedures for encodings of the specification statement, Park and $k$-induction proof rules (c.f. \Cref{sec:proof_rules}).
Dually, the same holds for the dual encodings using \symUp\symHavoc{} statements in coprocedures.

\section{Related Work}
\label{sec:related-work}

Because we generalize using monoid-modules and not just semirings, our work subsumes probabilistic programming verification, in particular the \HeyVL{} language~\cite{SchroerBKKM23}.\footnote{This work only subsumes probabilistic programs without \symTick{} statements, as those are still missing in weighted programming as well.}
We address several missing details, including soundness of \emph{local} proof rules, and quantifier elimination.
Although both are internally used in the \Caesar{} implementation, they were previously not formally addressed.

We build a general deductive verification framework for weighted programs, extending weakest preweighting semantics~\cite{BatzGKKW22} with an assertion language and an intermediate verification language.
For proof rules, we newly generalize the proof rules for specification statements~\cite{specificationstatement}, $\kappa$-induction~\cite{k_induction}, and $\omega$-invariants, introducing \emph{local} variants.

Weighted NetKAT~\cite{weightedNetKAT26} provides a semiring-parametric extension of NetKAT and verifies quantitative network properties by compiling policies to weighted automata.
This is complementary to our work: Weighted NetKAT exploits the finite packet model of NetKAT to obtain decision procedures, while our framework provides a general deductive verification infrastructure for weighted programs.
However, Weighted NetKAT policies can be encoded as \wGCL{} programs, and the corresponding scalar properties can be stated as weakest-preweighting bounds and verified deductively.
Outcome Logic~\cite{Zilberstein25} is a Hoare-style logic with algebraic effects based on semirings.

\citeauthor{DBLP:conf/fossacs/BatzKO25} study quantifier elimination for piecewise-linear quantities over the extended rationals~\cite{DBLP:conf/fossacs/BatzKO25}.
In contrast, we also support e.g. formulas with exponentials (c.f. \Cref{sec:quantifier-elimination}) and other monoid-modules.
Whereas they target full logical equivalence, we only target validity/covalidity, which is sufficient for verification conditions.
For Gödel logics, \citeauthor{baaz2024goedellogicsprenexfragments} study prenex fragments~\cite{baaz2024goedellogicsprenexfragments} and show that prenex normal forms are not always available.
In contrast to both of the above, we also degrade gracefully when quantifiers cannot be eliminated.

\section{Conclusion}

We presented a deductive verification infrastructure for weighted programs, parameterized by $\omega$-bicontinuous monoid-modules whose natural orders form bi-Heyting algebras.
Its assertion language \wHeyLo{} and intermediate language \wHeyVL{} uniformly support lower- and upper-bound reasoning and encode specification statements, local Park and $\kappa$-induction, and $\omega$-invariants.
We proved these encodings sound and presented a linear-time, best-effort quantifier-elimination procedure that preserves validity or covalidity and is complete on identified fragments.
Sound lowerings to \Caesar{} enabled the verification of expected costs, security clearances, database provenance, and formal-language properties.
Future work includes a native implementation, automated invariant generation, and support for higher moments and multiple objectives.

\bibliographystyle{ACM-Reference-Format}
\bibliography{references}

@article{SchroerBKKM23,
  author       = {Philipp Schr{\"{o}}er and
                  Kevin Batz and
                  Benjamin Lucien Kaminski and
                  Joost{-}Pieter Katoen and
                  Christoph Matheja},
  title        = {A Deductive Verification Infrastructure for Probabilistic Programs},
  journal      = {Proc. {ACM} Program. Lang.},
  volume       = {7},
  number       = {{OOPSLA2}},
  pages        = {2052--2082},
  year         = {2023},
  url          = {https://doi.org/10.1145/3622870},
  doi          = {10.1145/3622870},
  bibsource    = {dblp computer science bibliography, https://dblp.org}
}

@book{McIverM05,
  author       = {Annabelle McIver and
                  Carroll Morgan},
  title        = {Abstraction, Refinement and Proof for Probabilistic Systems},
  series       = {Monographs in Computer Science},
  publisher    = {Springer},
  address      = {New York, NY},
  year         = {2005},
  url          = {https://doi.org/10.1007/b138392},
  doi          = {10.1007/B138392},
  isbn         = {978-0-387-40115-7},
  bibsource    = {dblp computer science bibliography, https://dblp.org}
}

@article{BatzGKKW22,
  author       = {Kevin Batz and
                  Adrian Gallus and
                  Benjamin Lucien Kaminski and
                  Joost{-}Pieter Katoen and
                  Tobias Winkler},
  title        = {Weighted programming: a programming paradigm for specifying mathematical
                  models},
  journal      = {Proc. {ACM} Program. Lang.},
  volume       = {6},
  number       = {{OOPSLA1}},
  pages        = {1--30},
  year         = {2022},
  url          = {https://doi.org/10.1145/3527310},
  doi          = {10.1145/3527310},
  bibsource    = {dblp computer science bibliography, https://dblp.org}
}

@inproceedings{BarnettCDJL05,
  author       = {Michael Barnett and
                  Bor{-}Yuh Evan Chang and
                  Robert DeLine and
                  Bart Jacobs and
                  K. Rustan M. Leino},
  editor       = {Frank S. de Boer and
                  Marcello M. Bonsangue and
                  Susanne Graf and
                  Willem P. de Roever},
  title        = {Boogie: {A} Modular Reusable Verifier for Object-Oriented Programs},
  booktitle    = {Formal Methods for Components and Objects, 4th International Symposium,
                  {FMCO} 2005, Amsterdam, The Netherlands, November 1-4, 2005, Revised
                  Lectures},
  series       = {Lecture Notes in Computer Science},
  volume       = {4111},
  pages        = {364--387},
  publisher    = {Springer},
  address      = {Berlin, Heidelberg},
  year         = {2005},
  url          = {https://doi.org/10.1007/11804192\_17},
  doi          = {10.1007/11804192\_17},
  bibsource    = {dblp computer science bibliography, https://dblp.org}
}

@inproceedings{FilliatreP13,
  author       = {Jean{-}Christophe Filli{\^{a}}tre and
                  Andrei Paskevich},
  editor       = {Matthias Felleisen and
                  Philippa Gardner},
  title        = {Why3 - Where Programs Meet Provers},
  booktitle    = {Programming Languages and Systems - 22nd European Symposium on Programming,
                  {ESOP} 2013, Held as Part of the European Joint Conferences on Theory
                  and Practice of Software, {ETAPS} 2013, Rome, Italy, March 16-24,
                  2013. Proceedings},
  series       = {Lecture Notes in Computer Science},
  volume       = {7792},
  pages        = {125--128},
  publisher    = {Springer},
  address      = {Berlin, Heidelberg},
  year         = {2013},
  url          = {https://doi.org/10.1007/978-3-642-37036-6\_8},
  doi          = {10.1007/978-3-642-37036-6\_8},
  bibsource    = {dblp computer science bibliography, https://dblp.org}
}

@article{weightedNetKAT26,
  author       = {Emmanuel Su{\'a}rez Acevedo and
                  Tiago Ferreira and
                  Kevin Batz and
                  Oliver B{\o}ving and
                  Nate Foster and
                  Alexandra Silva},
  title        = {Weighted {NetKAT}: A Programming Language For Quantitative Network Verification},
  journal      = {Proc. {ACM} Program. Lang.},
  volume       = {10},
  number       = {{PLDI}},
  articleno    = {240},
  numpages     = {73},
  year         = {2026},
  url          = {https://doi.org/10.1145/3808318},
  doi          = {10.1145/3808318}
}

@article{Dijkstra75,
  author       = {Edsger W. Dijkstra},
  title        = {Guarded Commands, Nondeterminacy and Formal Derivation of Programs},
  journal      = {Commun. {ACM}},
  volume       = {18},
  number       = {8},
  pages        = {453--457},
  year         = {1975},
  url          = {https://doi.org/10.1145/360933.360975},
  doi          = {10.1145/360933.360975},
  bibsource    = {dblp computer science bibliography, https://dblp.org}
}

@book{art_of_multiprocessor_programming,
  title={The Art of Multiprocessor Programming},
  author={Herlihy, M. and Shavit, N. and Luchangco, V. and Spear, M.},
  isbn={9780123914064},
  url={https://books.google.de/books?id=7MqcBAAAQBAJ},
  year={2020},
  edition={2},
  publisher={Morgan Kaufmann},
  address={Cambridge, MA},
  doi={10.1016/C2011-0-06993-4}
}

@inproceedings{nature_of_progress,
  author = {Herlihy, Maurice and Shavit, Nir},
  editor = {Fern{\'a}ndez Anta, Antonio and Lipari, Giuseppe and Roy, Matthieu},
  title = {On the Nature of Progress},
  booktitle = {Principles of Distributed Systems},
  series = {Lecture Notes in Computer Science},
  volume = {7109},
  pages = {313--328},
  publisher = {Springer},
  address = {Berlin, Heidelberg},
  year = {2011},
  isbn = {978-3-642-25872-5},
  doi = {10.1007/978-3-642-25873-2_22},
  url = {https://doi.org/10.1007/978-3-642-25873-2_22}
}

@article{specificationstatement,
  author       = {Carroll Morgan},
  title        = {The Specification Statement},
  journal      = {{ACM} Trans. Program. Lang. Syst.},
  volume       = {10},
  number       = {3},
  pages        = {403--419},
  year         = {1988},
  url          = {https://doi.org/10.1145/44501.44503},
  doi          = {10.1145/44501.44503},
  bibsource    = {dblp computer science bibliography, https://dblp.org}
}

@InProceedings{k_induction,
  author       = {Kevin Batz and
                  Mingshuai Chen and
                  Benjamin Lucien Kaminski and
                  Joost{-}Pieter Katoen and
                  Christoph Matheja and
                  Philipp Schr{\"{o}}er},
  editor       = {Alexandra Silva and
                  K. Rustan M. Leino},
  title        = {Latticed k-Induction with an Application to Probabilistic Programs},
  booktitle    = {Computer Aided Verification - 33rd International Conference, {CAV}
                  2021, Virtual Event, July 20-23, 2021, Proceedings, Part {II}},
  series       = {Lecture Notes in Computer Science},
  volume       = {12760},
  pages        = {524--549},
  publisher    = {Springer},
  address      = {Cham},
  year         = {2021},
  url          = {https://doi.org/10.1007/978-3-030-81688-9\_25},
  doi          = {10.1007/978-3-030-81688-9\_25},
  bibsource    = {dblp computer science bibliography, https://dblp.org}
}

@inproceedings{whysemiring,
  author    = {Katrin M. Dannert and Erich Gr{\"a}del and Matthias Naaf and Val Tannen},
  editor    = {Christel Baier and Jean Goubault{-}Larrecq},
  title     = {Semiring Provenance for Fixed-Point Logic},
  booktitle = {29th EACSL Annual Conference on Computer Science Logic (CSL 2021)},
  series    = {Leibniz International Proceedings in Informatics (LIPIcs)},
  volume    = {183},
  pages     = {17:1--17:22},
  publisher = {Schloss Dagstuhl -- Leibniz-Zentrum f{\"u}r Informatik},
  address   = {Dagstuhl, Germany},
  year      = {2021},
  isbn      = {978-3-95977-175-7},
  doi       = {10.4230/LIPIcs.CSL.2021.17},
  url       = {https://doi.org/10.4230/LIPIcs.CSL.2021.17}
}

@article{complexityofwhyprovenance,
author = {Calautti, Marco and Livshits, Ester and Pieris, Andreas and Schneider, Markus},
title = {The Complexity of Why-Provenance for Datalog Queries},
year = {2024},
issue_date = {May 2024},
publisher = {Association for Computing Machinery},
address = {New York, NY, USA},
volume = {2},
number = {2},
url = {https://doi.org/10.1145/3651146},
doi = {10.1145/3651146},
journal = {Proc. ACM Manag. Data},
month = may,
articleno = {83},
numpages = {16}
}

@incollection{gradel2025provenance,
  author    = {Erich Gr{\"a}del and Val Tannen},
  title     = {Provenance Analysis and Semiring Semantics for First-Order Logic},
  booktitle = {Model Theory, Computer Science, and Graph Polynomials: Festschrift in Honor of Johann A. Makowsky},
  editor    = {Klaus Meer and Alexander Rabinovich and Elena Ravve and Andr{\'e}s Villaveces},
  series    = {Trends in Mathematics},
  pages     = {351--401},
  publisher = {Springer Nature Switzerland},
  address   = {Cham},
  year      = {2025},
  doi       = {10.1007/978-3-031-86319-6\_21},
  url       = {https://doi.org/10.1007/978-3-031-86319-6\_21}
}

@inproceedings{provenance_foundational_paper,
author = {Green, Todd J. and Karvounarakis, Grigoris and Tannen, Val},
title = {Provenance semirings},
year = {2007},
isbn = {9781595936851},
publisher = {Association for Computing Machinery},
address = {New York, NY, USA},
url = {https://doi.org/10.1145/1265530.1265535},
doi = {10.1145/1265530.1265535},
booktitle = {Proceedings of the Twenty-Sixth ACM SIGMOD-SIGACT-SIGART Symposium on Principles of Database Systems},
pages = {31--40},
numpages = {10},
location = {Beijing, China},
series = {PODS '07}
}

@INPROCEEDINGS{attack_graph_model_checking,
  author={Sheyner, O. and Haines, J. and Jha, S. and Lippmann, R. and Wing, J.M.},
  booktitle={Proceedings 2002 IEEE Symposium on Security and Privacy},
  title={Automated generation and analysis of attack graphs},
  year={2002},
  pages={273--284},
  publisher={IEEE Computer Society},
  address={Los Alamitos, CA},
  doi={10.1109/SECPRI.2002.1004377},
  url={https://doi.org/10.1109/SECPRI.2002.1004377}
}

@incollection{leino2007a,
  author = {Leino, K. Rustan M. and Schulte, Wolfram},
  editor = {Broy, Manfred and Gr{\"u}nbauer, Johannes and Hoare, Tony},
  title = {A Verifying Compiler for a Multi-Threaded Object-Oriented Language},
  booktitle = {Software System Reliability and Security},
  series = {NATO Security through Science Series D: Information and Communication Security},
  volume = {9},
  pages = {351--416},
  publisher = {IOS Press},
  address = {Amsterdam},
  year = {2007},
  isbn = {978-1-58603-731-4},
  url = {https://www.microsoft.com/en-us/research/publication/a-verifying-compiler-for-a-multi-threaded-object-oriented-language/}
}

@misc{baaz2024goedellogicsprenexfragments,
      title={Goedel logics: Prenex fragments},
      author={Matthias Baaz and Mariami Gamsakhurdia},
      year={2024},
      eprint={2407.16683},
      archivePrefix={arXiv},
      primaryClass={cs.LO},
      url={https://arxiv.org/abs/2407.16683},
}

@inproceedings{parthasarathy2021formallyvalidatingpracticalverification,
  author    = {Gaurav Parthasarathy and Peter M{\"{u}}ller and Alexander J. Summers},
  title     = {Formally Validating a Practical Verification Condition Generator},
  booktitle = {Computer Aided Verification},
  series    = {Lecture Notes in Computer Science},
  pages     = {704--727},
  publisher = {Springer},
  address   = {Cham},
  year      = {2021},
  doi       = {10.1007/978-3-030-81688-9\_33},
  url       = {https://doi.org/10.1007/978-3-030-81688-9\_33},
}

@book{StoneSpaces,
  title={Stone Spaces},
  author={Johnstone, P.T.},
  isbn={9780521337793},
  lccn={lc82004506},
  series={Cambridge Studies in Advanced Mathematics},
  url={https://books.google.de/books?id=CiWwoLNbpykC},
  year={1982},
  publisher={Cambridge University Press},
  address={Cambridge}
}

@book{Kleene02,
  author = {Stephen Cole Kleene},
  title = {Mathematical Logic},
  publisher = {Dover Publications},
  address = {Mineola, NY},
  edition = {Dover ed.},
  year = {2002},
  isbn = {978-0-486-42533-7},
  note = {Originally published by Wiley in 1967}
}

@inproceedings{DBLP:conf/fossacs/BatzKO25,
  author       = {Kevin Batz and
                  Joost{-}Pieter Katoen and
                  Nora Orhan},
  editor       = {Parosh Aziz Abdulla and
                  Delia Kesner},
  title        = {Quantifier Elimination and Craig Interpolation: The Quantitative Way},
  booktitle    = {Foundations of Software Science and Computation Structures - 28th
                  International Conference, FoSSaCS 2025, Held as Part of the International
                  Joint Conferences on Theory and Practice of Software, {ETAPS} 2025,
                  Hamilton, ON, Canada, May 3-8, 2025, Proceedings},
  series       = {Lecture Notes in Computer Science},
  volume       = {15691},
  pages        = {176--197},
  publisher    = {Springer},
  address      = {Cham},
  year         = {2025},
  url          = {https://doi.org/10.1007/978-3-031-90897-2\_9},
  doi          = {10.1007/978-3-031-90897-2\_9},
  bibsource    = {dblp computer science bibliography, https://dblp.org}
}

@article{Cook74,
  author       = {Stephen A. Cook},
  title        = {An Observation on Time-Storage Trade Off},
  journal      = {J. Comput. Syst. Sci.},
  volume       = {9},
  number       = {3},
  pages        = {308--316},
  year         = {1974},
  url          = {https://doi.org/10.1016/S0022-0000(74)80046-2},
  doi          = {10.1016/S0022-0000(74)80046-2},
  bibsource    = {dblp computer science bibliography, https://dblp.org}
}

@inproceedings{cmp-lg-9705006,
  author       = {Stefan Riezler},
  title        = {Quantitative Constraint Logic Programming for Weighted Grammar Applications},
  editor       = {Christian Retor{\'e}},
  booktitle    = {Logical Aspects of Computational Linguistics},
  series       = {Lecture Notes in Computer Science},
  volume       = {1328},
  pages        = {346--365},
  publisher    = {Springer},
  address      = {Berlin, Heidelberg},
  year         = {1997},
  doi          = {10.1007/BFb0052166},
  url          = {https://doi.org/10.1007/BFb0052166},
  bibsource    = {dblp computer science bibliography, https://dblp.org}
}

@article{GuZ21,
  author       = {Tao Gu and
                  Fabio Zanasi},
  title        = {Coalgebraic Semantics for Probabilistic Logic Programming},
  journal      = {Log. Methods Comput. Sci.},
  volume       = {17},
  number       = {2},
  pages        = {2:1--2:35},
  year         = {2021},
  url          = {https://lmcs.episciences.org/7365},
  doi          = {10.23638/LMCS-17(2:2)2021},
  bibsource    = {dblp computer science bibliography, https://dblp.org}
}

@article{Zilberstein25,
  author       = {Noam Zilberstein},
  title        = {Outcome Logic: {A} Unified Approach to the Metatheory of Program Logics
                  with Branching Effects},
  journal      = {{ACM} Trans. Program. Lang. Syst.},
  volume       = {47},
  number       = {3},
  pages        = {14:1--14:71},
  year         = {2025},
  url          = {https://doi.org/10.1145/3743131},
  doi          = {10.1145/3743131},
  bibsource    = {dblp computer science bibliography, https://dblp.org}
}

@inproceedings{KoeveringR025,
  author       = {Spencer Van Koevering and
                  Wojciech Rozowski and
                  Alexandra Silva},
  editor       = {Keren Censor{-}Hillel and
                  Fabrizio Grandoni and
                  Jo{\"{e}}l Ouaknine and
                  Gabriele Puppis},
  title        = {Weighted {GKAT:} Completeness and Complexity},
  booktitle    = {52nd International Colloquium on Automata, Languages, and Programming,
                  {ICALP} 2025, Aarhus, Denmark, July 8-11, 2025},
  series       = {LIPIcs},
  volume       = {334},
  pages        = {172:1--172:18},
  publisher    = {Schloss Dagstuhl - Leibniz-Zentrum f{\"{u}}r Informatik},
  address      = {Dagstuhl, Germany},
  year         = {2025},
  url          = {https://doi.org/10.4230/LIPIcs.ICALP.2025.172},
  doi          = {10.4230/LIPICS.ICALP.2025.172},
  bibsource    = {dblp computer science bibliography, https://dblp.org}
}

@article{BelleR20,
  author       = {Vaishak Belle and
                  Luc De Raedt},
  title        = {Semiring programming: {A} semantic framework for generalized sum product
                  problems},
  journal      = {Int. J. Approx. Reason.},
  volume       = {126},
  pages        = {181--201},
  year         = {2020},
  url          = {https://doi.org/10.1016/j.ijar.2020.08.001},
  doi          = {10.1016/J.IJAR.2020.08.001},
  bibsource    = {dblp computer science bibliography, https://dblp.org}
}

@article{HansenLMP18,
  author       = {Mikkel Hansen and
                  Kim Guldstrand Larsen and
                  Radu Mardare and
                  Mathias Ruggaard Pedersen},
  title        = {Reasoning About Bounds in Weighted Transition Systems},
  journal      = {Log. Methods Comput. Sci.},
  volume       = {14},
  number       = {4},
  pages        = {19:1--19:32},
  year         = {2018},
  url          = {https://doi.org/10.23638/LMCS-14(4:19)2018},
  doi          = {10.23638/LMCS-14(4:19)2018},
  bibsource    = {dblp computer science bibliography, https://dblp.org}
}

@inproceedings{BistarelliMS08,
  author       = {Stefano Bistarelli and
                  Fabio Martinelli and
                  Francesco Santini},
  editor       = {Chunming Rong and
                  Martin Gilje Jaatun and
                  Frode Eika Sandnes and
                  Laurence Tianruo Yang and
                  Jianhua Ma},
  title        = {A Semantic Foundation for Trust Management Languages with Weights:
                  An Application to the RTFamily},
  booktitle    = {Autonomic and Trusted Computing, 5th International Conference, {ATC}
                  2008, Oslo, Norway, June 23-25, 2008, Proceedings},
  series       = {Lecture Notes in Computer Science},
  volume       = {5060},
  pages        = {481--495},
  publisher    = {Springer},
  address      = {Berlin, Heidelberg},
  year         = {2008},
  url          = {https://doi.org/10.1007/978-3-540-69295-9\_38},
  doi          = {10.1007/978-3-540-69295-9\_38},
  bibsource    = {dblp computer science bibliography, https://dblp.org}
}

@incollection{DrosteK21,
  author       = {Manfred Droste and
                  Dietrich Kuske},
  editor       = {Jean{-}{\'{E}}ric Pin},
  title        = {Weighted automata},
  booktitle    = {Handbook of Automata Theory},
  pages        = {113--150},
  publisher    = {European Mathematical Society Publishing House},
  address      = {Z{\"{u}}rich, Switzerland},
  year         = {2021},
  url          = {https://doi.org/10.4171/Automata-1/4},
  doi          = {10.4171/AUTOMATA-1/4},
  bibsource    = {dblp computer science bibliography, https://dblp.org}
}

@inproceedings{BolligG09,
  author       = {Benedikt Bollig and
                  Paul Gastin},
  editor       = {Volker Diekert and
                  Dirk Nowotka},
  title        = {Weighted versus Probabilistic Logics},
  booktitle    = {Developments in Language Theory, 13th International Conference, {DLT}
                  2009, Stuttgart, Germany, June 30 - July 3, 2009. Proceedings},
  series       = {Lecture Notes in Computer Science},
  volume       = {5583},
  pages        = {18--38},
  publisher    = {Springer},
  address      = {Berlin, Heidelberg},
  year         = {2009},
  url          = {https://doi.org/10.1007/978-3-642-02737-6\_2},
  doi          = {10.1007/978-3-642-02737-6\_2},
  bibsource    = {dblp computer science bibliography, https://dblp.org}
}

@inproceedings{AhrensKGK25,
  author       = {Emma Ahrens and
                  Jan{-}Christoph Kassing and
                  J{\"{u}}rgen Giesl and
                  Joost{-}Pieter Katoen},
  editor       = {Maribel Fern{\'{a}}ndez},
  title        = {Weighted Rewriting: Semiring Semantics for Abstract Reduction Systems},
  booktitle    = {10th International Conference on Formal Structures for Computation
                  and Deduction, {FSCD} 2025, Birmingham, UK, July 14-20, 2025},
  series       = {LIPIcs},
  volume       = {337},
  pages        = {6:1--6:21},
  publisher    = {Schloss Dagstuhl - Leibniz-Zentrum f{\"{u}}r Informatik},
  address      = {Dagstuhl, Germany},
  year         = {2025},
  url          = {https://doi.org/10.4230/LIPIcs.FSCD.2025.6},
  doi          = {10.4230/LIPICS.FSCD.2025.6},
  bibsource    = {dblp computer science bibliography, https://dblp.org}
}

@inproceedings{AvanziniY25,
  author       = {Martin Avanzini and
                  Akihisa Yamada},
  editor       = {Ren{\'{e}} Thiemann and
                  Christoph Weidenbach},
  title        = {Weighted Rewriting},
  booktitle    = {Frontiers of Combining Systems - 15th International Symposium, FroCoS
                  2025, Reykjavik, Iceland, September 29 - October 1, 2025, Proceedings},
  series       = {Lecture Notes in Computer Science},
  volume       = {15979},
  pages        = {191--208},
  publisher    = {Springer},
  address      = {Cham},
  year         = {2025},
  url          = {https://doi.org/10.1007/978-3-032-04167-8\_11},
  doi          = {10.1007/978-3-032-04167-8\_11},
  bibsource    = {dblp computer science bibliography, https://dblp.org}
}

@incollection{Kuich97,
  author       = {Werner Kuich},
  editor       = {Grzegorz Rozenberg and
                  Arto Salomaa},
  title        = {Semirings and Formal Power Series: Their Relevance to Formal Languages
                  and Automata},
  booktitle    = {Handbook of Formal Languages, Volume 1: Word, Language, Grammar},
  pages        = {609--677},
  publisher    = {Springer},
  address      = {Berlin, Heidelberg},
  year         = {1997},
  url          = {https://doi.org/10.1007/978-3-642-59136-5\_9},
  doi          = {10.1007/978-3-642-59136-5\_9},
  bibsource    = {dblp computer science bibliography, https://dblp.org}
}

\clearpage
\newwrite\mainMatterLastPageFile
\immediate\openout\mainMatterLastPageFile=\jobname-main-last-page.tex
\immediate\write\mainMatterLastPageFile{\string\def\string\submissionMainLastPage{\number\numexpr\value{page}-1\relax}}
\immediate\closeout\mainMatterLastPageFile
\appendix
\allowdisplaybreaks[4]

\section{\titleFullref{Additional Theory of }{sec:weighted_programming}}
\label{app:additional-theory}

\begin{definition}[Semiring]
    A \emph{semiring} $\Semiringlong$ is the set $S$ together with two operations and corresponding neutral elements such that
    \begin{itemize}
        \item the structure $(S, \oplus, \snull)$ is a commutative monoid,

        \item the structure $(S, \odot, \sone)$ is a monoid,

        \item the multiplication is distributive over the addition, hence for all $a,b,c \in S$
        $$ a \odot (b \oplus c) = (a \odot b) \oplus (a \odot c) \quad \text{and} \quad (a \oplus b) \odot c = (a \odot c) \oplus (b \odot c), $$

        \item and multiplication by zero annihilates, hence for all $a \in S$
        $$ a \odot \snull = \snull \odot a = \snull. $$
    \end{itemize}
\end{definition}

\begin{lemma} \label{lem:semiring_special_module}
    A semiring $\Semiringlong$ is a special instance of a module $\moduledef$ over the monoid $\monoiddef$, where
    \begin{itemize}
        \item the monoid $\monoiddef$ satisfies $W = S$, $\odot_{\Monoid} = \odot_{\Semiring}$, and $\sone_{\Monoid} = \sone_{\Semiring}$, and

        \item the module $\moduledef$ satisfies $M = S$, $\oplus_{\Module} = \oplus_{\Semiring}$, $\snull_{\Module} = \snull_{\Semiring}$, and $\otimes_{\Module} = \odot$.
    \end{itemize}
\end{lemma}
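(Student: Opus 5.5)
The plan is to check the axioms of \Cref{def:monoid-module} one by one for the data listed in the lemma, reading it with the evident typing: the monoid and module carriers are both $S$, the monoid multiplication is $\odot$, the module addition is $\oplus$, and the scalar multiplication is $\otimes := \odot\colon S\times S\to S$. No earlier result beyond the definitions is needed. Everything follows from the semiring axioms, so the proof is a direct unfolding of definitions.

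First, I will show that $(S,\odot,\sone)$ is a monoid. This is literally the second semiring axiom. Second, I will show that $(S,\oplus,\snull)$ is a commutative monoid. This is the first semiring axiom.

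Third, I will verify the module laws for $\otimes = \odot$.
\begin{itemize}
\item Compatibility $(v\odot w)\otimes x = v\otimes(w\otimes x)$ becomes $(v\odot w)\odot x = v\odot(w\odot x)$. This is associativity of the semiring multiplication.
\item Distributivity $v\otimes(x\oplus y) = (v\otimes x)\oplus(v\otimes y)$ is left distributivity of the semiring.
\item $\sone\otimes x = \sone\odot x = x$ because $\sone$ is the neutral element of $(S,\odot)$.
\item $v\otimes\snull = v\odot\snull = \snull$ is the annihilation axiom.
\end{itemize}

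Only the left-handed laws are used. Right distributivity and $\snull\odot a=\snull$ are not needed, which is consistent with a module being a one-sided structure.

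I do not expect a genuine obstacle. The only point needing care is bookkeeping: the lemma's statement swaps the letters $W$ and $M$ for the carriers relative to \Cref{def:monoid-module}. Since both carriers equal $S$, this is harmless. I will state explicitly that the monoid carrier and the module carrier are both $S$, so the typing $\otimes\colon M\times W\to W$ is satisfied.
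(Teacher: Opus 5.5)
Your proposal is correct and follows the same route as the paper: you check each axiom of \Cref{def:monoid-module} directly against the corresponding semiring axiom. Your bookkeeping is also cleaner than the paper's written proof, which assigns the roles of monoid and commutative monoid to $(S,\odot,\sone)$ and $(S,\oplus,\snull)$ the wrong way round; you assign them correctly and note that only left distributivity and right annihilation $a\odot\snull=\snull$ are needed.
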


\begin{proof}
    Due to the definition of semirings, we know that $\moduledef$ is a monoid and $\monoiddef$ is a commutative monoid.
    The scalar multiplication $\otimes: W \times M \to M$ is associative,  distributive, and has the neutral element $\sone$, since the multiplicative operation $\odot_{\Semiring}$ is associative, distributive, and has $\sone$ as neutral element. Furthermore, $\snull$ annihilates, since it also annihilates in the semiring.
\end{proof}

\begin{lemma}[Monotonicity of Scalar Multiplication]
    \label[lemma]{lem:scalar-multiplication-monotone}
    Let $\moduledef$ be a monoid-module over a monoid $\Monoid$.
    For all $a\in\monoid$ and $p,q\in\module$,
    \[
        p\sleq q \qimplies a\stimes p\sleq a\stimes q.
    \]
\end{lemma}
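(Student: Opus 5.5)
Since $p \sleq q$ is defined through the natural order, I will unfold that definition and then use distributivity of scalar multiplication over addition from \Cref{def:monoid-module}. Concretely, assume $p \sleq q$. By definition of the natural order there is some $z \in \module$ with $p \splus z = q$. The candidate witness for $a \stimes p \sleq a \stimes q$ is then $a \stimes z \in \module$, which exists because scalar multiplication maps $\monoid \times \module \to \module$.

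The key step is the computation
\[
    (a \stimes p) \splus (a \stimes z) = a \stimes (p \splus z) = a \stimes q,
\]
where the first equality is exactly the distributivity axiom $v \otimes (x \oplus y) = (v \otimes x) \oplus (v \otimes y)$ of a module over a monoid, and the second is the substitution $p \splus z = q$. This exhibits an element $z' = a \stimes z$ with $(a\stimes p) \splus z' = a \stimes q$, which is precisely the definition of $a \stimes p \sleq a \stimes q$.

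No continuity, partial-order properties (antisymmetry etc.), or Heyting structure is needed, so the argument holds for arbitrary monoid-modules, as the statement claims. There is no real obstacle; the only point to be careful about is using the \emph{natural} order induced by $\splus$ rather than some externally given order. For instance, in $\modtrop$ that natural order is $\geq$, and the argument is indifferent to this.
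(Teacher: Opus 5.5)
Your proof is correct and is essentially the paper's own argument. You unfold the natural order to obtain $m$ with $p \splus m = q$, apply distributivity of $\stimes$ over $\splus$, and read off $a \stimes m$ as the witness for $a \stimes p \sleq a \stimes q$.
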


\begin{proof}
    \begin{align*}
        p\sleq q
        &\qimplies \exists m\in\module.\ p\splus m=q
          \inlineExplain{definition of the natural order} \\
        &\qimplies a \stimes (p\splus m) = a \stimes q
          \inlineExplain{scalar multiplication} \\
        &\qimplies (a\stimes p)\splus(a\stimes m)=a\stimes q
          \inlineExplain{distributivity of $\stimes$ over $\splus$} \\
        &\qimplies a\stimes p\sleq a\stimes q
          \inlineExplain{definition of the natural order}.
    \end{align*}
\end{proof}

\begin{lemma}[Monotonicity of Weighted Addition]
    \label[lemma]{lem:weighted-addition-monotone}
    Let $\moduledef$ be a monoid-module over a monoid $\Monoid$.
    For all $c,p,q\in\module$,
    \[
        p\sleq q \qimplies p\splus c\sleq q\splus c.
    \]
\end{lemma}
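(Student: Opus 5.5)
The plan is to mirror the proof of \Cref{lem:scalar-multiplication-monotone} directly, since the claim is again an immediate consequence of the definition of the natural order together with the commutative monoid axioms of $(\module, \splus, \snull)$. No continuity, lattice, or Heyting structure is needed, so the argument works for an arbitrary monoid-module and does not even require $\sleq$ to be a partial order.

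First, unfold the hypothesis $p \sleq q$ via the definition of the natural order. This yields a witness $m \in \module$ with $p \splus m = q$. Adding $c$ on both sides gives $(p \splus m) \splus c = q \splus c$.

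Next, rearrange the left-hand side using associativity and commutativity of $\splus$:
\[
  (p \splus m) \splus c = p \splus (m \splus c) = p \splus (c \splus m) = (p \splus c) \splus m .
\]
Hence $(p \splus c) \splus m = q \splus c$. Reading this against the definition of the natural order with the same witness $m$ gives $p \splus c \sleq q \splus c$.

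The only step needing any care is the rearrangement, which uses that $(\module,\splus,\snull)$ is a commutative monoid as required in \Cref{def:monoid-module}. Without commutativity the witness could not be moved past $c$. I expect no real obstacle beyond stating that step explicitly. If desired, the symmetric version $c \splus p \sleq c \splus q$ follows by one more application of commutativity.
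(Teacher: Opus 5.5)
Your proof is correct and matches the paper's argument: unfold $p \sleq q$ to a witness $m$ with $p \splus m = q$, then use associativity and commutativity of $\splus$ to obtain $(p \splus c) \splus m = q \splus c$, reusing the same witness $m$.
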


\begin{proof}
    Assume $p\sleq q$.
    By definition of the natural order, there exists $m\in\module$ such that $p\splus m=q$.
    Hence
    \[
        (p\splus c)\splus m
        =
        (p\splus m)\splus c
        =
        q\splus c,
    \]
    using associativity and commutativity of $\splus$.
    Therefore $p\splus c\sleq q\splus c$.
\end{proof}

\subsection{Operational Semantics}
\label{app:operational-semantics}

We use program states to construct configurations and define the semantics of weighted programs via a weighted relation on the set of configurations.

\begin{definition}[Configuration]
    A configuration is a tuple $(C, \sigma) \in (\wGCL \cup \{\downarrow\}) \times \Sigma =: Q$, %
    where $\downarrow$ indicates termination.
\end{definition}

For a specific configuration $(C, \sigma)$, the weighted program $C$ contains the remaining commands and $\sigma$ is the current program state. %
Now, we define the semantics of a weighted program as a weighted relation between configurations.

\begin{definition}[Operational Semantics of $\wGCL$] \label{def:wGCL_semantics_states}
    The computation relation $\vdash\; \subset Q \times \Monoid \times Q$ is the smallest set satisfying the rules in \Cref{fig:semantics}.
\end{definition}

\begin{figure}
    \begin{minipage}{0.3\textwidth}
        \begin{align*}
        \renewcommand{\arraystretch}{1.3}%
            \begin{array}{l!{\color{Platinum}\vrule width 1.5pt}l}
                \toprule
                \aexpr & \sem{\aexpr}(\sigma) \in \Nats \\ \midrule
                \cellcolor{basicBackground!2}{n \in \Nats} & \cellcolor{basicBackground!2}{n} \\
                \cellcolor{basicBackground!2}{x \in \Vars} & \cellcolor{basicBackground!2}{\sigma(x)} \\
                \cellcolor{basicBackground!2}{\aexpr + \aexpr'} & \cellcolor{basicBackground!2}{\sem{\aexpr}(\sigma) + \sem{\aexpr'}(\sigma)} \\
                \cellcolor{basicBackground!2}{\aexpr \cdot \aexpr'} & \cellcolor{basicBackground!2}{\sem{\aexpr}(\sigma) \cdot \sem{\aexpr'}(\sigma)} \\
                \cellcolor{basicBackground!2}{\aexpr \monus \aexpr'} & \cellcolor{basicBackground!2}{\sem{\aexpr}(\sigma) \monus \sem{\aexpr'}(\sigma)} \\ \bottomrule
            \end{array}
        \end{align*}
    \end{minipage}
    \begin{minipage}{0.5\textwidth}
        \begin{align*}
        \renewcommand{\arraystretch}{1.2}%
            \begin{array}{l!{\color{Platinum}\vrule width 1.5pt}l}
                \toprule
                \bexpr & \sem{\bexpr}(\sigma) \in \{0,1\} \\ \midrule
                \cellcolor{basicBackground!2}{\aexpr < \aexpr'} & \cellcolor{basicBackground!2}{\ifThenElse{\sem{\aexpr}(\sigma) < \sem{\aexpr'}(\sigma)}{1}{0}} \\[1em]
                \cellcolor{basicBackground!2}{\neg \bexpr} & \cellcolor{basicBackground!2}{\ifThenElse{\sem{\bexpr}(\sigma) = 0}{1}{0}} \\[1em]
                \cellcolor{basicBackground!2}{\bexpr \wedge \bexpr'} & \cellcolor{basicBackground!2}{\ifThenElse{\sem{\bexpr}(\sigma) + \sem{\bexpr'}(\sigma) \geq 1}{1}{0}} \\ \bottomrule
            \end{array}
        \end{align*}

    \end{minipage}

    \caption{Semantics of arithmetic and Boolean expressions \ArithExp and \BExp, where monus $\monus$ is subtraction truncated at 0.}
    \label{fig:semantics_arith_bool}
\end{figure}

\begin{figure}
    { \footnotesize
    \begin{minipage}{0.49\textwidth}
        \centering
        \begin{prooftree}
            \hypo{ \sigma' = \sigma [ x \mapsto \sigma(\bexpr)] }
            \infer1[(assign)]{\config{
                \ASSIGN{x}{\bexpr}, \sigma
            } \vdash_{\sonetiny} \config{
                \downarrow, \sigma'
            }}
        \end{prooftree}\\[5pt]

        \begin{prooftree}
            \hypo{\config{
                C_1, \sigma
            } \vdash_u \config{
                \downarrow, \sigma'
            }}
            \infer1[(seq. 1)]{\config{
                C_1 \semcol C_2, \sigma
            } \vdash_u \config{
                C_2, \sigma
            }}
        \end{prooftree}\\[5pt]

        \begin{prooftree}
            \hypo{ \sigma \models \bexpr }
            \infer1[(if)]{\config{
                \ITE{\bexpr}{C_1}{C_2}, \sigma
            } \vdash_{\sonetiny} \config{
                C_1, \sigma
            }}
        \end{prooftree}\\[10pt]

        \begin{prooftree}
            \hypo{}
            \infer1[(left branch)]{\config{
                \BRANCH{C_1}{C_2}, \sigma
            } \vdash_{\sonetiny} \config{
                C_1, \sigma
            }}
        \end{prooftree}\\[5pt]

        \begin{prooftree}
            \hypo{ \sigma \models \bexpr}
            \infer1[(while)]{\config{
                \WHILEDO{\bexpr}{C}, \sigma
            } \vdash_{\sonetiny} \config{
                C \semcol \WHILEDO{\bexpr}{C}, \sigma
            }}
        \end{prooftree}
    \end{minipage}
    \begin{minipage}{0.49\textwidth}
        \centering
        \begin{prooftree}
            \hypo{u = \sigma(\mexpr)}
            \infer1[(weight)]{\config{
                \WEIGH{\mexpr}, \sigma
            } \vdash_u \config{
                \downarrow, \sigma
            }}
        \end{prooftree}\\[5pt]
        \begin{prooftree}
            \hypo{\config{
                C_1, \sigma
            } \vdash_u \config{
                C_1', \sigma'
            } \text{ and } C_1' \neq \downarrow}
            \infer1[(seq. 2)]{\config{
                C_1 \semcol C_2, \sigma
            } \vdash_u \config{
                C_1' \semcol C_2, \sigma
            }}
        \end{prooftree}\\[5pt]
        \begin{prooftree}
            \hypo{ \sigma \models \neg\bexpr}
            \infer1[(else)]{\config{
                \ITE{\bexpr}{C_1}{C_2}, \sigma
            } \vdash_{\sonetiny} \config{
                C_2, \sigma
            }}
        \end{prooftree}\\[10pt]
        \begin{prooftree}
            \hypo{}
            \infer1[(right branch)]{\config{
                \BRANCH{C_1}{C_2}, \sigma
            } \vdash_{\sonetiny} \config{
                C_2, \sigma
            }}
        \end{prooftree}\\[5pt]
        \begin{prooftree}
            \hypo{ \sigma \models \neg\bexpr}
            \infer1[(break)]{\config{
                \WHILEDO{\bexpr}{C}, \sigma
            } \vdash_{\sonetiny} \config{
                \downarrow, \sigma
            }}
        \end{prooftree}
    \end{minipage}
    }

    \caption{The structural operational semantics for $\wGCL$-programs from \cite{BatzGKKW22} without the computation counter and the branching history and extended by weighted expressions.}
    \label{fig:semantics}
    \Description{The structural operational semantics defined via proof rules.}
\end{figure}

\begin{definition}[Weakest Liberal Preweightings] \label{def:wlp}
    For an $\omega$-cocontinuous module $\Module$ over the monoid $\Monoid$, we define the weakest \emph{liberal} preweightings iteratively as the weighting transformer $\symWlp: \wGCL \to (\Weighting \to \Weighting)$
    via the rules in \Cref{fig:weakest_pre}.
\end{definition}

The weakest liberal preweighting of a loop is defined via the greatest fixed point of the characteristic function $\wlpPhi: \Weighting \to \Weighting, X \mapsto [\neg\bexpr] \wla \oplus [\bexpr] \wlp{C}(X)$.

\begin{theorem} \label{thm:weakest_liberal_pre}
    The weighting transformer $\wlp{C}(\cdot)$ is a well-defined, $\omega$-cocontinuous function. In particular, the greatest fixed point of the function $\wlpPhi$ exists for all $w\in \Weighting$ and $C \in \wGCL$.
\end{theorem}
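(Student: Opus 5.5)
The plan is to prove the statement by structural induction on $C \in \wGCL$, mirroring the argument of \cite{BatzGKKW22} for \Cref{thm:weakest_pre} but in the reversed natural order $\sgeq$. First, I record that $\Weighting$, ordered pointwise, is again $\omega$-cocontinuous: every $\omega$-descending chain has a pointwise infimum, a greatest element $\top$ exists, and $\oplus$ and $\otimes$ commute with infima of descending chains pointwise. I also note that every $\omega$-cocontinuous map is monotone. Indeed, for $f \sgeq g$, the chain $f, g, g, \dots$ is descending with infimum $g$, so $F(g) = F(f) \hand F(g)$, hence $F(g) \sleq F(f)$. I will use this throughout to turn cocontinuity of the induction hypotheses into monotonicity, complementing \Cref{lem:scalar-multiplication-monotone,lem:weighted-addition-monotone}.

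For the loop-free constructs, well-definedness is immediate, and I check $\omega$-cocontinuity case by case on a descending chain $(f_i)_i$. Assignment $f \mapsto f\substBy{x}{\aexpr}$ is evaluation at a shifted state, so it preserves pointwise infima. Weighing $\mexpr \otimes f$ is cocontinuous by cocontinuity of $\otimes$, applied pointwise. Sequential composition is a composition of cocontinuous maps. The Iverson filter $[\bexpr]\cdot f$ is pointwise either the identity or the constant $\snull$, so it is cocontinuous. For branching and conditionals, I need joint cocontinuity of $\oplus$: $\hbigand_i (x_i \oplus y_i) = \hbigand_i x_i \oplus \hbigand_i y_i$ for descending chains $(x_i)_i$ and $(y_i)_i$. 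I plan to obtain this by a diagonal argument. Using cocontinuity in each argument separately, together with commutativity, gives $\hbigand_i x_i \oplus \hbigand_j y_j = \hbigand_i \hbigand_j (x_i \oplus y_j)$. By monotonicity, this double infimum equals the diagonal infimum $\hbigand_i (x_i \oplus y_i)$.

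The loop case is the main work. Fix $f$. By the induction hypothesis, $\wlp{C_1}$ is cocontinuous, and by the previous paragraph so is $\wlpPhi_f(X) = [\neg\bexpr]\cdot f \oplus [\bexpr]\cdot \wlp{C_1}(X)$ in $X$. The dual Kleene fixed-point theorem then yields $\gfp{X}{\wlpPhi_f(X)} = \hbigand_n \wlpPhi_f^n(\top)$, which establishes existence and well-definedness. The sequence $\wlpPhi_f^n(\top)$ is descending by monotonicity, since $\wlpPhi_f(\top) \sleq \top$.

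For cocontinuity in $f$, take a descending chain $(f_i)_i$. I will show by induction on $n$ that $f \mapsto \wlpPhi_f^n(\top)$ is cocontinuous. The key step uses the joint cocontinuity of $(f, X) \mapsto \wlpPhi_f(X)$, proved with the same diagonal trick. Finally, I swap the two infima, $\hbigand_i \hbigand_n = \hbigand_n \hbigand_i$, which is legitimate because both indices give descending chains and everything is monotone in both. The step I expect to need the most care is the diagonal argument, since only $\omega$-chains rather than arbitrary infima are available. My first attempt is to prove the generic lemma once: for a map cocontinuous and monotone in each argument separately, the double infimum over a doubly-indexed descending family equals its diagonal infimum. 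I would then reuse this lemma for $\oplus$, for $\wlpPhi$, and for the final interchange of infima.
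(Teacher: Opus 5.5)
Your proposal is correct and is essentially the argument the paper relies on by citing \cite{BatzGKKW22}. That argument is a structural induction on $C$ in the reversed natural order: the dual Kleene theorem gives $\gfp{X}{\wlpPhi_f(X)} = \hbigand_{n} \wlpPhi_f^{n}(\top)$, and cocontinuity in $f$ follows by exchanging the two infima, which your diagonal lemma for doubly descending families makes rigorous when only infima of $\omega$-chains are available.
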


The proof of this theorem can be found in \cite{BatzGKKW22}.

\section{\titleFullref{Theory and Proofs of }{sec:heyting}}

\label{app:heyting}

\begin{definition}[Bounded Lattice] \label{def:bounded_lattice}
    A partially ordered set $(S, \hleq)$ is a \emph{bounded lattice}, if
    \begin{itemize}
        \item it is a \emph{lattice}, hence there exist greatest lower bounds $a \hand b \in S$ and least upper bounds $a \hor b \in S$ for all $a, b\in S$ with regard to the order $\hleq$, and

        \item there exists a greatest element $\top \in S$ and least element $\bot \in S$ such that $a \hand \top = a$ and $a \hor \bot = a$ for all $a \in S$.
    \end{itemize}
\end{definition}

Both the meet $\hand$ and join $\hor$ are commutative ($a \hand b = b \hand a$), associative ($a \hand (b \hand c) = (a \hand b) \hand c$), and absorbing ($a \hand (a \hor b) = a$ for $a, b, c \in S$).

\begin{lemma}
    Meet and join are distributive in Heyting algebras.
\end{lemma}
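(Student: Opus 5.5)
We need to prove that in a Heyting algebra, meet distributes over join: $a \hand (b \hor c) = (a \hand b) \hor (a \hand c)$, and dually join distributes over meet. The plan is to derive distributivity from the adjunction in \Cref{def:heyting_algebra}, namely that $a \hand (\cdot)$ is left adjoint to $a \impl (\cdot)$, so that it preserves all existing joins.

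For the first law, the inequality $(a \hand b) \hor (a \hand c) \hleq a \hand (b \hor c)$ holds in any lattice. Both $a \hand b$ and $a \hand c$ lie below $a$, and also below $b \hor c$, so their join lies below $a \hand (b \hor c)$.

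For the converse, set $d := (a \hand b) \hor (a \hand c)$. From $a \hand b \hleq d$ the adjunction gives $b \hleq a \impl d$, and likewise $c \hleq a \impl d$. Hence $b \hor c \hleq a \impl d$, and applying the adjunction once more yields $a \hand (b \hor c) \hleq d$. Antisymmetry then gives equality.

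For the dual law, $a \hor (b \hand c) = (a \hor b) \hand (a \hor c)$, there are two routes. One is the purely lattice-theoretic fact that a lattice satisfying one distributive law satisfies the other. Assuming the first law, we compute
\begin{align*}
(a \hor b) \hand (a \hor c) &= ((a \hor b) \hand a) \hor ((a \hor b) \hand c) = a \hor (a \hand c) \hor (b \hand c) = a \hor (b \hand c),
\end{align*}
using commutativity and absorption. Alternatively, in a bi-Heyting algebra one can argue exactly dually with the co-implication adjunction $b \hleq a \hor x$ iff $a \coimpl b \hleq x$. The lattice-theoretic route is preferable because it needs only the Heyting implication.

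The only delicate point is the direction of the adjunction. We must use \enquote{$(a \hand x) \hleq b$ iff $x \hleq (a \impl b)$} in both directions, once to move $b, c$ under $a \impl d$ and once to return. Everything else is routine lattice algebra using the commutativity, associativity and absorption listed after \Cref{def:bounded_lattice}.
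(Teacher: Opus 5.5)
Your proof is correct and is essentially the standard argument: the paper gives no proof of its own but cites Johnstone's \emph{Stone Spaces} \cite{StoneSpaces}, where distributivity is derived the same way ($a \hand (\cdot)$ has the right adjoint $a \impl (\cdot)$ and therefore preserves binary joins, and the dual law follows because the two distributive laws are equivalent in any lattice). One small correction: in your computation of $(a \hor b) \hand (a \hor c)$, rewriting $(a \hor b) \hand c$ as $(a \hand c) \hor (b \hand c)$ uses the first distributive law a second time, not only commutativity and absorption.
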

\begin{proof}
    See \cite{StoneSpaces}.
\end{proof}

\begin{lemma}
    Let $\Module$ be an $\omega$-bicontinuous monoid-module. If the natural order
    $\sleq$ on $\module$ admits binary meets and joins, then $(\module,\sleq)$ is a
    bounded lattice.
\end{lemma}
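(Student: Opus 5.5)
Since the natural order $\sleq$ is assumed to admit binary meets $\hand$ and joins $\hor$, $(\module,\sleq)$ is already a lattice. What remains is to exhibit a greatest element $\top$ and a least element $\bot$ and to check $a \hand \top = a$ and $a \hor \bot = a$. By $\omega$-continuity, $\sleq$ is a partial order, so $\hand$ and $\hor$ are well-defined as greatest lower and least upper bounds.

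For the least element I will use $\snull$. For every $a \in \module$ we have $\snull \splus a = a$, so $\snull \sleq a$ by the definition of the natural order. Hence $\snull$ is least; it coincides with the $\bot$ given by $\omega$-continuity, by antisymmetry. Then $a \hor \snull = a$, because $a$ is an upper bound of $\{a,\snull\}$ and any upper bound of $\{a,\snull\}$ lies above $a$.

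For the greatest element I will use $\omega$-cocontinuity, i.e.\ continuity with respect to the reversed order $\sgeq$. Its first clause asserts that $(\module,\sgeq)$ is an $\omega$-cpo with a least element. A least element with respect to $\sgeq$ is an element $t$ with $t \sgeq a$, i.e.\ $a \sleq t$, for all $a$; so $\top := t$ is greatest for $\sleq$. Then $a \hand \top = a$ follows exactly as in the dual argument. This is the step I consider the main obstacle: the statement only works because $\omega$-cocontinuity is read as also requiring a bottom element in the reversed order. I would make that reading explicit, citing the definition of $\omega$-continuity applied verbatim to $\sgeq$.

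If one prefers not to rely on that clause, I would attempt a fallback: build $\top$ as the supremum of an ascending chain, which requires $\omega$-continuity and a chain that dominates everything. Without a countability assumption this does not work in general, so the cocontinuity clause is the intended route. Combining the two elements with the lattice structure gives Definition~\ref{def:bounded_lattice}.
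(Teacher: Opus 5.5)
Your proposal is correct and follows the paper's proof. The lattice structure comes from the assumed binary meets and joins, and the greatest element comes from the least-element clause of $\omega$-cocontinuity read in the reversed order $\sgeq$; this is exactly the paper's reading, so the fallback paragraph can be dropped. Your direct identification of the least element with $\snull$ via $\snull \splus a = a$ is a harmless refinement of the paper's appeal to $\omega$-continuity for $\bot$.
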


\begin{proof}
    By $\omega$-continuity, $\sleq$ is a partial order with a least element, which
    we denote by $\bot$. By $\omega$-cocontinuity, the reversed order $\sgeq$ has a
    least element. Equivalently, $\sleq$ has a greatest element, which we denote by
    $\top$.

    By assumption, all binary meets and joins with respect to $\sleq$ exist. Hence
    $(\module,\sleq)$ is a lattice with least element $\bot$ and greatest element
    $\top$, and therefore a bounded lattice.
\end{proof}

\thmDeductionTheorem*
\begin{proof}
\leavevmode\par\smallskip
    \begin{enumerate}
        \item For the first statement, instantiate the Heyting adjunction with $x=\top$:
        \[
            a \hand \top \hleq b \quad\text{iff}\quad \top \hleq a \impl b .
        \]
        Since $a \hand \top = a$ and $\top \hleq y$ holds iff $y=\top$, this yields
        $a \hleq b$ iff $a \impl b = \top$.
        \item Dually, instantiate the co-Heyting adjunction with $x=\bot$:
        \[
            b \hleq a \hor \bot \quad\text{iff}\quad a \coimpl b \hleq \bot .
        \]
        Since $a \hor \bot = a$ and $y \hleq \bot$ holds iff $y=\bot$, this yields
        $a \hgeq b$ iff $a \coimpl b = \bot$.
    \end{enumerate}
\end{proof}

\lemBoolEmbeddingImplCoimpl*
\begin{proof}
    By the adjunction laws of bi-Heyting algebras (\abCref{def:heyting_algebra}), for every $x \in \hdom$,
    \[
        \top \impl x = x,\qquad
        \bot \impl x = \top,\qquad
        \top \coimpl x = \bot,\qquad
        \bot \coimpl x = x .
    \]
    This follows immediately, e.g. $a \hleq \top \impl x$ iff $a=\top\hand a\hleq x$, and
    $(\bot \coimpl x)\hleq a$ iff $x\hleq \bot\hor a=a$. The other two
    identities are exactly given in \Cref{thm:deduction_theorem}.
\end{proof}

Recall \Cref{lem:impl_as_residual} from \cpageref{lem:impl_as_residual}:
\lemImplAsResidual*
\begin{proof}
\leavevmode
\noindent($\Rightarrow$)
\begin{quote}
    For $a, b \in \hdom$, we show $a \impl b$ is the greatest element of $S := \Set{x\in\hdom \mid a\hand x\hleq b}$.
    \begin{enumerate}
        \item Since $a \impl b \hleq a \impl b$, we have $a \hand (a \impl b) \hleq b$ by \fullref{def:heyting_algebra}, hence $a \impl b \in S$.
        \item Let $x \in S$ be arbitrary, i.e. $ a \hand x \hleq b$. By \Cref{def:heyting_algebra}, this gives $x \hleq a \impl b$.
    \end{enumerate}
    Since $a \impl b \in S$ and every $x \in S$ satisfies $x \hleq a \impl b$, we conclude that $a \impl b$ is the greatest element of $S$.
    Since a greatest element of a subset of a poset is always its supremum, $\bigsqcup S$ exists and $\bigsqcup S = a\impl b$, as claimed.
\end{quote}

\noindent($\Leftarrow$)
\begin{quote}
    For $a,b\in\hdom$, define
    \(
        a\impl b := \max\{x\in\hdom \mid a\hand x \hleq b\},
    \)
    which is well-defined because the assumption guarantees the max exists, and it is unique since $\hleq$ is
    antisymmetric. We check that $\impl$ satisfies the defining adjunction: for all
    $a,b,x\in\hdom$,
    \[
        a\hand x \hleq b \iff x\hleq a\impl b.
    \]
    \begin{description}[style=unboxed,leftmargin=2.5em,labelsep=0.4em,font=\normalfont]
        \item[($\Rightarrow$)] If $a\hand x\hleq b$, then $x\in\{x\mid a\hand x\hleq b\}$, so $x\hleq a\impl b$, as $a\impl b$ is an upper bound of this set.
        \item[($\Leftarrow$)] If $x\hleq a\impl b$, then since $a\impl b$ itself lies in $\{x\mid a\hand x\hleq b\}$, we have $a\hand(a\impl b)\hleq b$. As $\hand$ is monotone, $x\hleq a\impl b$ gives $a\hand x \hleq a\hand(a\impl b) \hleq b$.
    \end{description}

    Hence $\impl$ satisfies the adjunction property, so $(\hdom,\hleq,\hand,\hor,\impl)$ is a Heyting algebra
\end{quote}
\end{proof}

\begin{lemma}[Join Below Weighted Addition]
    \label[lemma]{lem:join-below-weighted-addition}
    Let $\moduledef$ be a monoid-module over a monoid $\Monoid$ whose natural order admits the binary join $p\hor q$.
    Then
    \[
        p\hor q \sleq p\splus q.
    \]
\end{lemma}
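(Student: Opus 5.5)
We have to show $p \hor q \sleq p \splus q$ for the natural order. The plan is to prove that $p \splus q$ is an upper bound of both $p$ and $q$; the claim then follows immediately, since $p \hor q$ is by definition the \emph{least} upper bound of $\{p, q\}$ with respect to $\sleq$.

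First, $p \sleq p \splus q$ holds directly by the definition of the natural order. Take the witness $z = q$; then $p \splus z = p \splus q$.

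Second, $q \sleq p \splus q$ follows the same way with witness $z = p$. Here we use commutativity of $\splus$ in the commutative monoid $(\module, \splus, \snull)$ to get $q \splus p = p \splus q$. Equivalently, we can invoke \Cref{lem:weighted-addition-monotone} with $\snull \sleq p$ (witness $p$) to get $\snull \splus q \sleq p \splus q$, and then use that $\snull$ is neutral.

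Since $p \hor q$ is assumed to exist and $p \splus q$ is an upper bound of $p$ and $q$, the defining property of the join gives $p \hor q \sleq p \splus q$.

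There is no real obstacle here. The only point to watch is that the join $\hor$ is taken with respect to the same order $\sleq$ as the natural order, so its least-upper-bound property can be applied directly. No antisymmetry or $\omega$-continuity is needed; we only use the existence of the binary join.
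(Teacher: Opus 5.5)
Your proposal is correct and follows the paper's proof: both show that $p \splus q$ is an upper bound of $p$ and $q$ directly from the definition of the natural order, and then apply the least-upper-bound property of $p \hor q$. Your explicit appeal to commutativity of $\splus$ for the witness of $q \sleq p \splus q$ spells out a step the paper leaves implicit.
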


\begin{proof}
    We have $p\sleq p\splus q$ and $q\sleq p\splus q$ by the definition of the natural order.
    Hence $p\splus q$ is an upper bound of $p$ and $q$, so the \emph{least} upper bound must satisfy $p\hor q \sleq p\splus q$.
\end{proof}

Recall \Cref{thm:totimpl_totcoimpl_bounded_lattice} from \cpageref{thm:totimpl_totcoimpl_bounded_lattice}:

\thmTotimplTotcoimplBoundedLattice*

\begin{proof}
    We show that $\hsym = (\hdom, \hleq, \hand, \hor, \totimpl, \totcoimpl)$ is a bi-Heyting algebra by showing that $(\hdom, \hleq, \hand, \hor, \totimpl)$ and $(\hdom, \hleq, \hand, \hor, \totcoimpl)$ are (co-)Heyting algebras via the statement in \Cref{lem:impl_as_residual}.

    For all $a, b \in \hdom$ with $a \hleq b$, we observe that $x := (a \totimpl b) = \top$. This is the greatest element in $\hdom$ and satisfies $a \hand x = a \hand \top = a \hleq b$. For $a \not\hleq b$, we obtain $x := (a \totimpl b) = b$ and the inequality $a \hand x = a \hand b = b \hleq b$ is satisfied. There exists no greater element $b' \in \hdom$ such that $a \hand b' \hleq b$, since then $b' \not\hleq b$ and $b \not\hleq b'$ hold, which is a contradiction. Thus $(\hdom, \hleq, \hand, \hor, \totimpl)$ is a Heyting algebra.

    Now we consider $a, b \in \hdom$ with $b \hleq a$ and observe that $x := (a \totcoimpl b) = \bot$. This is the least element in $\hdom$ and satisfies $b \hleq a = a \hor \bot = a \hor x$. For $b \not\hleq a$, we obtain $x := (a \totcoimpl b) = b$, which satisfies $b \hleq b = a \hor b = a \hor x$. Assume that there exists a smaller element $b' \in \hdom$ that satisfies the inequality $b \hleq b' = a \hor b'$. This is a contradiction with $b \not\hleq b'$ and hence $a \totcoimpl b$ always yields the least element that satisfies $b \hleq a \hor x$ and $(\hdom, \hleq, \hand, \hor, \totcoimpl)$ is a co-Heyting algebra.
\end{proof}

\thmMatimplMatcoimplBoundedLattice*

\begin{proof}
    We verify the Heyting and co-Heyting adjunctions directly.

    \begin{description}[style=unboxed,leftmargin=0pt,labelsep=0.5em,font=\normalfont]
        \item[Implication.]
        Define $a \matimpl b := \neg a \hor b$.
        We show that, for all $a,b,x\in\hdom$,
        \[
            a\hand x \hleq b
            \quad\Longleftrightarrow\quad
            x \hleq \neg a \hor b .
        \]

        First assume $a\hand x \hleq b$. Since $a\hor\neg a=\top$, distributivity gives
        \[
            x
            = x\hand\top
            = x\hand(a\hor\neg a)
            = (x\hand a)\hor(x\hand\neg a)
            \hleq b\hor\neg a
            = \neg a\hor b .
        \]
        Conversely, assume $x\hleq \neg a\hor b$. Then
        \[
            a\hand x
            \hleq a\hand(\neg a\hor b)
            = (a\hand\neg a)\hor(a\hand b)
            = \bot\hor(a\hand b)
            = a\hand b
            \hleq b .
        \]
        Hence $a\matimpl b=\neg a\hor b$ satisfies the Heyting adjunction.

        \item[Co-implication.]
        Define $a \matcoimpl b := \neg a \hand b$.
        We show that, for all $a,b,x\in\hdom$,
        \[
            b \hleq a\hor x
            \quad\Longleftrightarrow\quad
            \neg a\hand b \hleq x .
        \]

        First assume $b\hleq a\hor x$. Then
        \[
            \neg a\hand b
            \hleq \neg a\hand(a\hor x)
            = (\neg a\hand a)\hor(\neg a\hand x)
            = \bot\hor(\neg a\hand x)
            = \neg a\hand x
            \hleq x .
        \]
        Conversely, assume $\neg a\hand b\hleq x$. Since $a\hor\neg a=\top$, distributivity gives
        \[
            b
            = b\hand\top
            = b\hand(a\hor\neg a)
            = (b\hand a)\hor(b\hand\neg a)
            \hleq a\hor x .
        \]
        Hence $a\matcoimpl b=\neg a\hand b$ satisfies the co-Heyting adjunction.
    \end{description}

    Therefore $\hsym=(\hdom,\hleq,\hand,\hor,\matimpl,\matcoimpl)$ is a bi-Heyting algebra.
\end{proof}

\thmWhySemiringBiHeyting*

\begin{proof}
    Since $\pDNF(A)$ is defined modulo logical equivalence over the finite atom set $A$, it is finite.
    The propositional operations $\wedge$ and $\vee$ respect logical equivalence, so they are well-defined on $\pDNF(A)$.
    Ordered by entailment, $\pDNF(A)$ is a finite distributive lattice with meet $\wedge$, join $\vee$, bottom $0$, and top $1$.

    In particular, $X = \bigvee \Set{R \in \pDNF(A) \mid P \wedge R \hleq Q}$ exists for all $P, Q \in \pDNF(A)$, since $\pDNF(A)$ is finite.
    What remains to be shown is that $X$ is indeed the greatest element satisfying the condition, i.e. $P \wedge X \hleq Q$. By finite distributivity,
    \[
        P \wedge X
        =
        \bigvee \Set{P \wedge R \mid R \in \pDNF(A),\; P \wedge R \hleq Q}
        \hleq Q.
    \]
    Thus $X$ is the greatest element $Z$ satisfying $P \wedge Z \hleq Q$. This proofs that $(\pDNF(A), \hleq, \wedge, \vee, \impl)$ is a Heyting algebra by \Cref{lem:impl_as_residual}.

    To show the dual statement, define
    \(
        Y = \bigwedge \Set{R \in \pDNF(A) \mid Q \hleq P \vee R}.
    \)
    The same argument as above shows that $Y$ is the least element satisfying $Q \hleq P \vee Y$, so $(\pDNF(A), \hleq, \wedge, \vee, \coimpl)$ is a co-Heyting algebra.
    Combining both statements, we conclude that $(\pDNF(A), \hleq, \wedge, \vee, \impl, \coimpl)$ is a bi-Heyting algebra.
\end{proof}

\begin{lemma} \label[lemma]{lem:countable_sup_inf}
    Let the $\omega$-bicontinuous module $\moduledef$ over the monoid $\monoiddef$ be also a bi-Heyting algebra. Then for all countable $A \subseteq \Module$, the supremum $\hbigor A \in \Module$ and infimum $\hbigand A \in \Module$ exists.
\end{lemma}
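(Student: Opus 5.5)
The plan is to reduce the countable case to the $\omega$-chain case that $\omega$-bicontinuity already provides, using the binary lattice operations of the bi-Heyting algebra to build the chain.

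Let $A \subseteq \module$ be countable. If $A = \emptyset$, set $\hbigor A = \bot$ and $\hbigand A = \top$. Both exist because $\module$ is a bounded lattice: the least element comes from $\omega$-continuity and the greatest from $\omega$-cocontinuity.

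Otherwise, enumerate $A = \{a_0, a_1, a_2, \ldots\}$, repeating elements if $A$ is finite. Define partial joins $s_n = a_0 \hor \cdots \hor a_n$. These exist since a bi-Heyting algebra is in particular a lattice.

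\begin{itemize}
\item \emph{The partial joins form a chain.} We have $s_n \hleq s_{n+1}$, so $(s_n)_{n\in\Nats}$ is an $\omega$-ascending chain in the natural order. By $\omega$-continuity its supremum $s := \hbigor_{n\in\Nats} s_n$ exists in $\module$.
\item \emph{$s$ is an upper bound of $A$.} Each $a_n \hleq s_n \hleq s$.
\item \emph{$s$ is the least upper bound.} Let $u$ be any upper bound of $A$. By induction on $n$, using that $\hor$ is the least upper bound of two elements, $s_n \hleq u$ for all $n$. Since $s$ is the least upper bound of the chain, $s \hleq u$.
\end{itemize}

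Hence $\hbigor A = s$. The infimum is exactly dual: take partial meets $t_n = a_0 \hand \cdots \hand a_n$. These form an $\omega$-descending chain, i.e. an ascending chain for the reversed order $\hgeq$. By $\omega$-cocontinuity this chain has a supremum with respect to $\hgeq$, which is an infimum with respect to $\hleq$. The same three steps then show this infimum equals $\hbigand A$.

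The only step needing care is checking that the supremum of the chain of partial joins is the supremum of $A$ itself, not merely of the chain. The reason is that upper bounds of $A$ and upper bounds of $\{s_n\}$ coincide: every $s_n$ is a finite join of elements of $A$, and every element of $A$ lies below some $s_n$. I do not expect any real obstacle. The Heyting implications themselves are not used; only the lattice structure together with $\omega$-(co)completeness is needed.
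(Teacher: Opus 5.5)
Your proposal is correct and follows essentially the same route as the paper: handle $A=\emptyset$ via boundedness, build the ascending chain of partial joins $b_{i+1}=b_i\hor a_{i+1}$, and take its supremum by $\omega$-continuity. As in the paper, you then show by induction that any upper bound of $A$ bounds every $b_i$, and treat infima dually via $\omega$-cocontinuity. Your explicit remark about repeating elements when $A$ is finite makes precise a point the paper leaves implicit in ``choose an enumeration $(a_i)_{i\in\Nats}$''.
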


\begin{proof}
    If $A=\emptyset$, then $\hbigor A=\bot$ and $\hbigand A=\top$ exist because the bi-Heyting algebra is bounded.
    Hence assume $A$ is nonempty and choose an enumeration $(a_i)_{i \in \Nats}$ of $A$.
    We construct a sequence $(b_i)_{i \in \Nats} \subseteq \Module$ such that
    $$ \hbigor A = \hbigor_{i \in \Nats} a_i = \hbigor_{i \in \Nats} b_i. $$

    Define $b_0 := a_0$ and $b_{i + 1} := b_i \hor a_{i + 1} \in \Module$, which is well-defined since the supremum of two elements exists in a Heyting algebra. We observe that $(b_i)_{i \in \Nats}$ is an ascending $\omega$-chain and hence $L = \hbigor_{i \in \Nats} b_i \in \Module$ exists. This implies that $\hbigor A \in \Module$ exists.

    We further show that $L$ is the supremum of $A$. First, $L$ is an upper bound of all $a_i$ since $a_i \hleq a_0 \hor \ldots \hor a_i = b_i \hleq L$ for all $i \in \Nats$. Second, $L$ is the least such upper bound. Let $L'$ be an upper bound of $A$, so $a_i \hleq L'$ for all $i \in \Nats$. We show by induction that $b_n \hleq L'$ for all $n \in \Nats$. As a base case, we have $b_0 = a_0 \hleq L'$. For the induction step, we assume that $b_n \hleq L'$ and further note that by definition $a_{n + 1} \hleq L'$. Hence $L'$ is an upper bound of both $b_n$ and $a_{n + 1}$, so $b_{n + 1} = b_n \hor a_{n + 1} \hleq L'$. Thus, $L'$ is an upper bound of the chain $(b_i)_{i \in \Nats}$. Since $L$ is the least upper bound of this chain, we conclude that $L \hleq L'$. So $L$ is the least upper bound of $A$. Consequently, $\hbigor_{i \in \Nats} a_i = L$.
    The proof for infima is analogous.
\end{proof}

\thmWHeyLoWellDefined*

\begin{proof}
    To show well-definedness, we need to prove that the formulae $\hbigand_{x}{\hhla}$ (and $\hbigor_{x}{\hhla}$) evaluate to elements in the module $\Module$ for all formulae $\hhla \in \wHeyLo$. Since the module is $\omega$-bicontinuous, we only know that suprema (and infima) of ascending (and descending) $\omega$-chains are well-defined.

    However, \Cref{lem:countable_sup_inf} states that suprema (and infima) of countable sets exist in $\omega$-bicontinuous monoid-module that are bi-Heyting algebras. Hence we need to show that the set
    $$
        \left\{ \sem{\hhla}(\sigma[x \mapsto n]) \mid n \in \Nats \right\}
    $$
    is countable for all $\hhla\in \wHeyLo$. This holds directly, since the natural numbers $\Nats$ are countable.
\end{proof}

Heyting algebras can also be defined in an axiomatic way.
\begin{theorem}[Heyting Algebra Axioms]
    \label[theorem]{thm:heyting2}
    A bounded lattice $(H, \sqsubseteq ,\sqcap, \sqcup)$ with a binary implication $\to: H \times H \to H$ is a Heyting algebra $\hdef$ if and only if the following conditions hold for all $a, b, c \in H$:
    \begin{enumerate}
        \item $ (a \to a) = \top$,
        \item $ a \sqcap (a \to b) = a \sqcap b$,
        \item $ b \sqcap (a \to b) = b$,
        \item $ a \to (b \sqcap c) = (a \to b) \sqcap (a \to c)$.
    \end{enumerate}
\end{theorem}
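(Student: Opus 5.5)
We prove the two directions separately. Throughout, we use only the bounded-lattice laws, the adjunction $(a \sqcap x) \sqsubseteq b$ iff $x \sqsubseteq (a \to b)$ from \Cref{def:heyting_algebra}, and the fact that $\sqcap$ is monotone.

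\emph{($\Rightarrow$).} Assume $\to$ is a Heyting implication and check the four axioms in turn.
\begin{enumerate}
    \item Since $a \sqcap \top \sqsubseteq a$, the adjunction gives $\top \sqsubseteq a \to a$, so $a \to a = \top$.
    \item Instantiating the adjunction with $x = a \to b$ gives $a \sqcap (a \to b) \sqsubseteq b$. Hence $a \sqcap (a \to b) \sqsubseteq a \sqcap b$. Conversely, $a \sqcap b \sqsubseteq b$ yields $b \sqsubseteq a \to b$ by the adjunction, so $a \sqcap b \sqsubseteq a \sqcap (a \to b)$.
    \item The same fact $b \sqsubseteq a \to b$ gives $b \sqcap (a \to b) = b$.
    \item For every $x$, the adjunction gives the chain
    \[
        x \sqsubseteq a \to (b \sqcap c)
        \;\text{iff}\; a \sqcap x \sqsubseteq b \sqcap c
        \;\text{iff}\; (a \sqcap x \sqsubseteq b \text{ and } a \sqcap x \sqsubseteq c)
        \;\text{iff}\; x \sqsubseteq (a \to b) \sqcap (a \to c).
    \]
    The two sides therefore have the same lower bounds, so they are equal by antisymmetry (a Yoneda-style argument).
\end{enumerate}

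\emph{($\Leftarrow$).} Assume axioms (1)–(4) hold. First we derive monotonicity of $\to$ in its second argument. If $b \sqsubseteq c$, then $b = b \sqcap c$, and axiom (4) gives
\[
    a \to b = a \to (b \sqcap c) = (a \to b) \sqcap (a \to c) \sqsubseteq a \to c.
\]
Next we show the two halves of the adjunction.
\begin{itemize}
    \item Suppose $x \sqsubseteq a \to b$. Then, using axiom (2) at the equality,
    \[
        a \sqcap x \sqsubseteq a \sqcap (a \to b) = a \sqcap b \sqsubseteq b.
    \]
    \item Suppose $a \sqcap x \sqsubseteq b$. Axiom (3) applied to $x$ (namely $x \sqcap (a \to x) = x$) gives $x \sqsubseteq a \to x$. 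Now compute
    \[
        x \sqsubseteq a \to x \sqsubseteq a \to (a \sqcap x) \sqsubseteq a \to b.
    \]
    The last step is monotonicity in the second argument. The middle step needs $a \to x = a \to (a \sqcap x)$, which follows from axiom (4) and axiom (1):
    \[
        a \to (a \sqcap x) = (a \to a) \sqcap (a \to x) = \top \sqcap (a \to x) = a \to x.
    \]
\end{itemize}
This establishes the adjunction, so $\to$ is a Heyting implication.

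The main obstacle is the second half of ($\Leftarrow$), deriving $x \sqsubseteq a \to b$ from $a \sqcap x \sqsubseteq b$. It is the one step that combines all four axioms, and it only goes through in the order given above: first extract monotonicity from axiom (4), and first rewrite $a \to x$ as $a \to (a \sqcap x)$ using axioms (1) and (4), before the monotonicity chain can be applied.
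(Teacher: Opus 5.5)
Your proof is correct. The paper gives no argument of its own and only cites Johnstone's \emph{Stone Spaces}; your proof is essentially the standard argument from that source. For the hard half of the adjunction you first extract monotonicity in the second argument from axiom (4), then chain $x \sqsubseteq a \to x = a \to (a \sqcap x) \sqsubseteq a \to b$ using axioms (3), (1) and (4), which is exactly the textbook route.
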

\begin{proof}
    See \cite{StoneSpaces}.
\end{proof}

\begin{restatable}[Properties of Bi-Heyting Algebras]{proposition}{propBiHeytingAlgebraProperties}
    \label[prop]{prop:bi-heyting:properties}
    Let $\hbidef$ be a bi-Heyting algebra. Then the following properties hold for all $a, b, c \in \hdom$:
    \begin{enumerate}
        \item $a \hleq b \iff a \to b = \top$,
        \itemExplain{$\impl$ internalizes $\hleq$}
        \label[statement]{prop:bi-heyting:1}
        \item $a \hleq b \iff b \coimpl a = \bot$, \itemExplain{$\coimpl$ internalizes $\hleq$}
        \label[statement]{prop:bi-heyting:1b}
        \item $b \hleq c \implies a \to b \hleq a \to c$, \itemExplain{$a\to(-)$ is monotone} \label[implication]{prop:bi-heyting:2}
        \item $b \hleq c \implies a \coimpl b \hleq a \coimpl c$, \itemExplain{$a\coimpl(-)$ is monotone}
        \label[implication]{prop:bi-heyting:2b}
        \item $b \hleq c \implies c \impl a \hleq b \impl a$, \itemExplain{$(-)\impl a$ is antitone}
        \label[implication]{prop:bi-heyting:3}
        \item $b \hleq c \implies c \coimpl a \hleq b \coimpl a$, \itemExplain{$(-)\coimpl a$ is antitone}
        \label[implication]{prop:bi-heyting:3b}
    \end{enumerate}
\end{restatable}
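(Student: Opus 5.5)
We prove the six items directly from the two adjunctions in \Cref{def:heyting_algebra}, pairing each Heyting statement with its co-Heyting dual. Items (1) and (2) follow from the deduction theorem. Items (3)--(6) follow from the unit/counit inequalities of the adjunctions together with monotonicity of $\hand$ and $\hor$.

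For (1), we invoke \Cref{thm:deduction_theorem} directly: $a \hleq b$ iff $a \impl b = \top$. For (2), note the argument order. \Cref{thm:deduction_theorem} gives $x \hgeq y$ iff $x \coimpl y = \bot$. Instantiating $x := b$ and $y := a$ yields $b \hgeq a$, i.e. $a \hleq b$, iff $b \coimpl a = \bot$. To be safe, we can also rederive it from the co-Heyting adjunction $a \hleq b \hor x$ iff $(b \coimpl a) \hleq x$ with $x := \bot$. This uses $b \hor \bot = b$ and the fact that $y \hleq \bot$ iff $y = \bot$.

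For (3), we assume $b \hleq c$. The adjunction applied to $a \impl b \hleq a \impl b$ gives the counit $a \hand (a \impl b) \hleq b$. Chaining with $b \hleq c$ yields $a \hand (a \impl b) \hleq c$, and the adjunction with $x := a \impl b$ gives $a \impl b \hleq a \impl c$.

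For (5), we again assume $b \hleq c$. Monotonicity of $\hand$ and the counit for $c \impl a$ give $b \hand (c \impl a) \hleq c \hand (c \impl a) \hleq a$. The adjunction then yields $c \impl a \hleq b \impl a$.

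Items (4) and (6) are the order duals, argued through the co-Heyting adjunction $y \hleq z \hor x$ iff $(z \coimpl y) \hleq x$. The relevant unit is $y \hleq z \hor (z \coimpl y)$, obtained by setting $x := z \coimpl y$. For (4), with $b \hleq c$, we get $b \hleq c \hleq a \hor (a \coimpl c)$, hence $a \coimpl b \hleq a \coimpl c$. For (6), we get $a \hleq b \hor (b \coimpl a) \hleq c \hor (b \coimpl a)$ by monotonicity of $\hor$, hence $c \coimpl a \hleq b \coimpl a$.

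None of these steps is deep. The only real pitfall is bookkeeping of argument order in the co-implication, specifically which side carries $\hor$ in the co-Heyting adjunction. Item (2) is where a mismatch with \Cref{thm:deduction_theorem} would show up, so we check it carefully against the adjunction rather than relying on the theorem's phrasing.
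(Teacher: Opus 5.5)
Your proof is correct and follows essentially the same route as the paper. Both reduce every item to the (co-)Heyting adjunction, which the paper phrases via the residual characterization of \Cref{lem:impl_as_residual}: items (1)--(2) are the deduction-theorem computation, and items (3)--(6) are the counit/unit inequality combined with monotonicity of $\hand$ and $\hor$.

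Your explicit treatment of (4) and (6) fills in what the paper only asserts ``dually''. Your counit step $c \hand (c \impl a) \hleq a$ in (5) is also cleaner than the paper's write-up. At that step the paper writes $c \hand (c \impl a) = c \impl a$, which should be $c \hand (c \impl a) = c \hand a$.
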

\begin{proof}
\leavevmode
\begin{description}[style=unboxed,leftmargin=0pt,labelsep=0.4em,font=\normalfont]
    \item[\Cref{prop:bi-heyting:1}.] \emph{Proof that \( a \hleq b \qiff a \rightarrow b = \top \):}
    \begin{description}[style=unboxed,leftmargin=0pt,labelsep=0.4em,font=\normalfont]
        \item[] ($\Rightarrow$) Assume \( a \hleq b \). Then \( a \sqcap \top = a \hleq b \).
        Using \Cref{lem:impl_as_residual}
        \[
        a \rightarrow b = \sup \big\{ x \in H \mid a \sqcap x \hleq b \big\} = \top
        \]
        holds, since $\top$ is in the set and is the greatest element in $H$.

        \item[] ($\Leftarrow$) Assume \( a \rightarrow b = \top \). By the definition of the implication
        \[
        a \sqcap \top \hleq b \implies a \hleq b.
        \]
    \end{description}
    The proofs for all the co-Heyting properties follow dually. As an example, we show the proof for \Cref{prop:bi-heyting:1b}, but omit the rest.
    \item[\Cref{prop:bi-heyting:1b}.] \emph{Proof that \( a \hleq b  \qiff  b \leftarrow a = \bot \):}
    \begin{description}[style=unboxed,leftmargin=0pt,labelsep=0.4em,font=\normalfont]
        \item[] ($\Rightarrow$) Assume \( a \hleq b \). Then \( a \hleq \bot \hor b \).
        Using the dual of \Cref{lem:impl_as_residual}:
        \[
        b \leftarrow a = \inf \big\{ x \in H \mid a \hleq x \sqcup b \big\} = \bot
        \]
        holds, since $\bot$ is in the set and is the least element in $H$.

        \item[] ($\Leftarrow$) Assume \( b \leftarrow a = \bot \). By definition of the co-implication:
        \[
        b \leftarrow a \hleq \bot \implies a \hleq b \sqcup \bot \implies a \hleq b.
        \]
    \end{description}
    \item[\Cref{prop:bi-heyting:2,prop:bi-heyting:2b}.] \emph{Proof that \( b \hleq c \implies a \to b \hleq a \to c \):}
    \begin{description}[style=unboxed,leftmargin=0pt,labelsep=0.4em,font=\normalfont]
        \item[] ($\Rightarrow$) Assume \( b \hleq c \). Then
        \(
            a \hand (a \impl b) = a \hand b \hleq b \hleq c.
        \)
        Together with \Cref{lem:impl_as_residual}, we have:
        \[
            a \impl c = \sup \big\{ x \in H \mid a \hand x \hleq c \big\} \hgeq a \impl b.
        \]
    \end{description}
    \item[\Cref{prop:bi-heyting:3,prop:bi-heyting:3b}.] \emph{Proof that \( b \hleq c \implies c \impl a \hleq b \impl a \):}
    \begin{description}[style=unboxed,leftmargin=0pt,labelsep=0.4em,font=\normalfont]
         \item[] ($\Rightarrow$) Assume \( b \hleq c \). Then
        \(
            b \hand (c \impl a) \hleq c \hand (c \impl a) \myrel{\abCref{thm:heyting2}}{=} c \impl a \hleq a.
        \)
        Again, we can use \Cref{lem:impl_as_residual}:
        \[
            b \impl a = \sup \big\{ x \in H \mid b \hand x \hleq a \big\} \hgeq c \impl a.
        \]
    \end{description}
\end{description}
\end{proof}

\begin{lemma}[Order Characterization of Existing Suprema and Infima]
    \label[lemma]{lem:existing-extrema-order-characterization}
    Let $(\hdom,\hleq)$ be a partially ordered set, let $I$ be an index set, and let $\{a_i\}_{i \in I}$ be a family of elements in $\hdom$.
    If the supremum $\hbigor_{i \in I} a_i$ exists, then for every $c \in \hdom$,
    \[
        \hbigor_{i \in I} a_i \hleq c \qqiff \forall i \in I.\ a_i \hleq c.
    \]
    If the infimum $\hbigand_{i \in I} a_i$ exists, then for every $c \in \hdom$,
    \[
        c \hleq \hbigand_{i \in I} a_i \qqiff \forall i \in I.\ c \hleq a_i.
    \]
\end{lemma}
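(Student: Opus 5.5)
This is a purely order-theoretic fact, so I will argue directly from the definitions of supremum and infimum in the poset $(\hdom,\hleq)$, without using the lattice or Heyting structure. The two statements are order-dual, so I prove the supremum case in full and obtain the infimum case by reading the same argument in the reversed order $\hgeq$.

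For the supremum case, let $s := \hbigor_{i \in I} a_i$, which exists by assumption. By definition, $s$ is an upper bound of $\{a_i\}_{i\in I}$, and $s \hleq u$ for every upper bound $u$ of this family. Fix $c \in \hdom$.

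For the direction from left to right, assume $s \hleq c$. For each $i \in I$ we have $a_i \hleq s$, because $s$ is an upper bound. Transitivity of $\hleq$ then gives $a_i \hleq c$.

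For the converse, assume $a_i \hleq c$ for all $i \in I$. Then $c$ is an upper bound of the family. Since $s$ is the least upper bound, $s \hleq c$.

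For the infimum case, let $t := \hbigand_{i \in I} a_i$. If $c \hleq t$, then for each $i$ we have $t \hleq a_i$ because $t$ is a lower bound, so transitivity gives $c \hleq a_i$. Conversely, if $c \hleq a_i$ for all $i$, then $c$ is a lower bound of the family, and the greatest-lower-bound property of $t$ yields $c \hleq t$.

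There is no real obstacle here. The only point requiring care is that existence of the extremum is a hypothesis and is not automatic, since $I$ may be uncountable and the module is only $\omega$-bicontinuous. The argument therefore uses only that hypothesis together with transitivity, and needs no completeness assumption. This is also what makes the lemma usable in later proofs, such as the prenexing and stripping arguments, where extrema exist by \Cref{thm:wHeyLo_well_defined}.
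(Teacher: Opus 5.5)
Your proof is correct and is the same as the paper's: the upper-bound property with transitivity gives one direction, and the least-upper-bound property gives the other. The only difference is that you write out the infimum case explicitly, while the paper just says it is dual.
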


\begin{proof}
    We prove the supremum case; the infimum case is dual.
    Let $s := \hbigor_{i \in I} a_i$ and let $c \in \hdom$.
    If $s \hleq c$, then $a_i \hleq s \hleq c$ for every $i \in I$, because $s$ is an upper bound of the family.
    Conversely, if $a_i \hleq c$ for every $i \in I$, then $c$ is an upper bound of the family.
    Since $s$ is the least upper bound, $s \hleq c$.
\end{proof}

\begin{lemma}[Lattice Preservation of Existing Extrema]
    \label[lemma]{lem:lattice-preservation-existing-extrema}
    Let $\hbidef$ be a bi-Heyting algebra.
    Let $I$ be an index set, let $\{a_i\}_{i \in I}$ be a family of elements in $\hdom$, and let $c \in \hdom$.
    If the joins $\hbigor_{i \in I} a_i$ and $\hbigor_{i \in I}(c \hand a_i)$ exist, then
    \[
        c \hand \hbigor_{i \in I} a_i = \hbigor_{i \in I}(c \hand a_i).
    \]
    If the meets $\hbigand_{i \in I} a_i$ and $\hbigand_{i \in I}(c \hor a_i)$ exist, then
    \[
        c \hor \hbigand_{i \in I} a_i = \hbigand_{i \in I}(c \hor a_i).
    \]
\end{lemma}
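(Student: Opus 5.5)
We prove the first identity, $c \hand \hbigor_{i \in I} a_i = \hbigor_{i \in I}(c \hand a_i)$; the second follows by the order-dual argument, with the co-Heyting adjunction in place of the Heyting adjunction. We show the two inequalities separately and compare both sides through \Cref{lem:existing-extrema-order-characterization}. Write $s := \hbigor_{i \in I} a_i$ and $t := \hbigor_{i \in I}(c \hand a_i)$, both of which exist by assumption.

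For $t \hleq c \hand s$ we only need monotonicity of meet. For every $i \in I$ we have $a_i \hleq s$, hence $c \hand a_i \hleq c \hand s$. So $c \hand s$ is an upper bound of the family $\{c \hand a_i\}_{i \in I}$. By \Cref{lem:existing-extrema-order-characterization} (supremum case), this gives $t \hleq c \hand s$. This direction holds in any lattice.

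For $c \hand s \hleq t$ we use the Heyting implication, and this is the step where the bi-Heyting structure is essential. For every $i$ we have $c \hand a_i \hleq t$. The adjunction of \Cref{def:heyting_algebra} turns this into $a_i \hleq c \impl t$ for all $i$. Applying \Cref{lem:existing-extrema-order-characterization} again gives $s \hleq c \impl t$, and the adjunction in the other direction yields $c \hand s \hleq t$. Antisymmetry then gives equality.

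For the dual statement, write $s' := \hbigand_{i \in I} a_i$ and $t' := \hbigand_{i \in I}(c \hor a_i)$. The inequality $c \hor s' \hleq t'$ is pure monotonicity, using the infimum case of \Cref{lem:existing-extrema-order-characterization}. For the other inequality, $t' \hleq c \hor a_i$ gives $c \coimpl t' \hleq a_i$ for all $i$ by the co-Heyting adjunction. Hence $c \coimpl t' \hleq s'$, and so $t' \hleq c \hor s'$.

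No real obstacle is expected. The only subtlety is that we must not assume that arbitrary joins exist or that infinite distributivity holds a priori. We rely only on the joins and meets given in the hypotheses, together with the residuation, which makes $c \hand (\cdot)$ a left adjoint and $c \hor (\cdot)$ a right adjoint. Therefore $c \hand (\cdot)$ preserves every existing join and $c \hor (\cdot)$ preserves every existing meet.
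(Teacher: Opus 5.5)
Your proposal is correct and follows essentially the same approach as the paper: both rest on the Heyting adjunction (respectively the co-Heyting adjunction) combined with the order characterization of existing suprema and infima. The only difference is presentation: the paper runs a single chain of equivalences over an arbitrary test element $d$ and concludes by antisymmetry, whereas you prove the two inequalities separately, with the easy direction obtained from monotonicity alone.
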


\begin{proof}
    We first prove preservation of existing joins by meet.
    For every $d \in \hdom$,
    \begin{align*}
        c \hand \hbigor_{i \in I} a_i \hleq d
        &\qiff \hbigor_{i \in I} a_i \hleq c \impl d
            \inlineExplain{adjunction for $\impl$} \\
        &\qiff \forall i \in I.\ a_i \hleq c \impl d
            \inlineExplain{\Cref{lem:existing-extrema-order-characterization}} \\
        &\qiff \forall i \in I.\ c \hand a_i \hleq d
            \inlineExplain{adjunction for $\impl$} \\
        &\qiff \hbigor_{i \in I}(c \hand a_i) \hleq d
            \inlineExplain{\Cref{lem:existing-extrema-order-characterization}}.
    \end{align*}
    We now prove preservation of existing meets by join.
    For every $d \in \hdom$,
    \begin{align*}
        d \hleq c \hor \hbigand_{i \in I} a_i
        &\qiff c \coimpl d \hleq \hbigand_{i \in I} a_i
            \inlineExplain{adjunction for $\coimpl$} \\
        &\qiff \forall i \in I.\ c \coimpl d \hleq a_i
            \inlineExplain{\Cref{lem:existing-extrema-order-characterization}} \\
        &\qiff \forall i \in I.\ d \hleq c \hor a_i
            \inlineExplain{adjunction for $\coimpl$} \\
        &\qiff d \hleq \hbigand_{i \in I}(c \hor a_i)
            \inlineExplain{\Cref{lem:existing-extrema-order-characterization}}.
    \end{align*}
    Since both statements hold for every $d \in \hdom$, reflexivity and antisymmetry yields the claim.
\end{proof}

\begin{restatable}[Meet Preservation of Implication]{lemma}{lemImplPreservesMeet}
    \label[lemma]{lem:impl_preserves_meet}
    Let $\hdef$ be a Heyting algebra. Let $I$ be an arbitrary index set and let $\{b_i\}_{i \in I}$ be a family of elements in $\hdom$ indexed by $I$. For any element $a \in \hdom$, if the meet $\hbigand_{i \in I} b_i$ and $\hbigand_{i \in I} (a \impl b_i)$ exist, then
    \[
        a \impl \hbigand_{i \in I} b_i = \hbigand_{i \in I} (a \impl b_i).
    \]
\end{restatable}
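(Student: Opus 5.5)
We prove the equality by the same test-element technique as in \Cref{lem:lattice-preservation-existing-extrema}: we show that every $d \in \hdom$ lies below the left-hand side if and only if it lies below the right-hand side. Antisymmetry of $\hleq$ then gives the equality. Both meets exist by assumption, so \Cref{lem:existing-extrema-order-characterization} applies to each of them.

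Fix $d \in \hdom$ and run the chain of equivalences
\begin{align*}
    d \hleq a \impl \hbigand_{i \in I} b_i
    &\qiff a \hand d \hleq \hbigand_{i \in I} b_i
        \inlineExplain{adjunction of \Cref{def:heyting_algebra}} \\
    &\qiff \forall i \in I.\ a \hand d \hleq b_i
        \inlineExplain{\Cref{lem:existing-extrema-order-characterization}} \\
    &\qiff \forall i \in I.\ d \hleq a \impl b_i
        \inlineExplain{adjunction, pointwise in $i$} \\
    &\qiff d \hleq \hbigand_{i \in I}(a \impl b_i)
        \inlineExplain{\Cref{lem:existing-extrema-order-characterization}}.
\end{align*}

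Since this holds for all $d$, we may instantiate it twice. With $d := a \impl \hbigand_i b_i$, reflexivity gives $\hleq$. With $d := \hbigand_i (a \impl b_i)$, reflexivity gives $\hgeq$. Antisymmetry then yields the claim.

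There is no real obstacle here. The only subtle point is that the lattice is not assumed complete, so we must not silently use the existence of the meets. This is why the hypothesis that both meets exist is needed, and why we use the order characterization of \Cref{lem:existing-extrema-order-characterization} rather than any completeness property. A dual statement for $\coimpl$ and joins follows by the same argument using the co-Heyting adjunction.
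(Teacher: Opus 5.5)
Your proposal is correct and follows the paper's proof essentially verbatim: the same four-step chain of equivalences for an arbitrary test element (adjunction, order characterization of the existing meet, adjunction pointwise, order characterization again), closed by reflexivity and antisymmetry. Your remark that only the existence of the two meets is used, not completeness, matches how the paper states and applies the lemma.
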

\begin{proof}
    For every $c \in \hdom$,
    \begin{align*}
        c \hleq a \impl \hbigand_{i \in I} b_i
        &\qiff a \hand c \hleq \hbigand_{i \in I} b_i
            \inlineExplain{adjunction for $\impl$} \\
        &\qiff \forall i \in I.\ a \hand c \hleq b_i
            \inlineExplain{\Cref{lem:existing-extrema-order-characterization}} \\
        &\qiff \forall i \in I.\ c \hleq a \impl b_i
            \inlineExplain{adjunction for $\impl$} \\
        &\qiff c \hleq \hbigand_{i \in I} (a \impl b_i)
            \inlineExplain{\Cref{lem:existing-extrema-order-characterization}}.
    \end{align*}
    Since this holds for every $c \in \hdom$, reflexivity and antisymmetry yields the claim.
\end{proof}

\begin{lemma}[Join Preservation of Coimplication]
    \label[lemma]{lem:coimpl_preserves_join}
    Let $\hbidef$ be a bi-Heyting algebra.
    Let $I$ be an index set, let $\{b_i\}_{i \in I}$ be a family of elements in $\hdom$, and let $a \in \hdom$.
    If the joins $\hbigor_{i \in I} b_i$ and $\hbigor_{i \in I} (a \coimpl b_i)$ exist, then
    \[
        a \coimpl \hbigor_{i \in I} b_i = \hbigor_{i \in I} (a \coimpl b_i).
    \]
\end{lemma}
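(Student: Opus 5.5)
The plan is to dualize the proof of \Cref{lem:impl_preserves_meet}, replacing the Heyting adjunction by the co-Heyting adjunction from \Cref{def:heyting_algebra}, namely $a \coimpl d \hleq x$ iff $d \hleq a \hor x$. Because $a \coimpl (\cdot)$ is a left adjoint, it should preserve every join that exists. As in the earlier lemmas, I would characterize both sides by their sets of upper bounds and then conclude equality by antisymmetry.

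I fix an arbitrary $c \in \hdom$ and run the chain of equivalences
\begin{align*}
    a \coimpl \hbigor_{i \in I} b_i \hleq c
    &\qiff \hbigor_{i \in I} b_i \hleq a \hor c
        \inlineExplain{adjunction for $\coimpl$} \\
    &\qiff \forall i \in I.\ b_i \hleq a \hor c
        \inlineExplain{\Cref{lem:existing-extrema-order-characterization}} \\
    &\qiff \forall i \in I.\ a \coimpl b_i \hleq c
        \inlineExplain{adjunction for $\coimpl$} \\
    &\qiff \hbigor_{i \in I} (a \coimpl b_i) \hleq c
        \inlineExplain{\Cref{lem:existing-extrema-order-characterization}}.
\end{align*}
Each step uses only one direction-preserving fact. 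The first and third steps are instances of the co-Heyting adjunction with $x = c$. The second and fourth steps use the supremum half of \Cref{lem:existing-extrema-order-characterization}, which applies because both joins are assumed to exist.

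The chain shows that the two elements have exactly the same upper bounds. I then instantiate $c$ with each side in turn and use reflexivity to obtain the two inequalities, and antisymmetry gives equality.

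I do not expect a real obstacle here. The only point needing care is the orientation of the co-Heyting adjunction, so that $a \coimpl (\cdot)$ is the map being moved across $\hleq$. This matches the form used in the proof of \Cref{lem:lattice-preservation-existing-extrema}.
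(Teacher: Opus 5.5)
Your proposal is correct and is essentially the paper's proof: the same chain of four equivalences (the co-Heyting adjunction $a \coimpl d \hleq c$ iff $d \hleq a \hor c$, combined with \Cref{lem:existing-extrema-order-characterization}), which you run in the opposite direction, followed by antisymmetry. Your orientation of the adjunction matches \Cref{def:heyting_algebra}, so nothing further is needed.
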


\begin{proof}
    For every $c \in \hdom$,
    \begin{align*}
        \hbigor_{i \in I} (a \coimpl b_i) \hleq c
        &\qiff \forall i \in I.\ a \coimpl b_i \hleq c
            \inlineExplain{\Cref{lem:existing-extrema-order-characterization}} \\
        &\qiff \forall i \in I.\ b_i \hleq a \hor c
            \inlineExplain{adjunction for $\coimpl$} \\
        &\qiff \hbigor_{i \in I} b_i \hleq a \hor c
            \inlineExplain{\Cref{lem:existing-extrema-order-characterization}} \\
        &\qiff a \coimpl \hbigor_{i \in I} b_i \hleq c
            \inlineExplain{adjunction for $\coimpl$}.
    \end{align*}
    Since this holds for every $c \in \hdom$, reflexivity and antisymmetry yields the claim.
\end{proof}

\begin{lemma}[Implication Turns Joins into Meets]
    \label[lemma]{lem:quant-implication-distributes-over-inf}
    Let $\hdef$ be a Heyting algebra. Let $I$ be an arbitrary index set, let $\{a_i\}_{i \in I}$ be a family of elements in $\hdom$, and let $b \in \hdom$.
    If the join $\hbigor_{i \in I} a_i$ and the meet $\hbigand_{i \in I}(a_i \impl b)$ exist, then
    \[
        \left(\hbigor_{i \in I} a_i\right) \impl b = \hbigand_{i \in I} (a_i \impl b).
    \]
\end{lemma}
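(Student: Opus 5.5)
We prove the claimed equality by the same Yoneda-style argument used for \Cref{lem:impl_preserves_meet,lem:coimpl_preserves_join}. Both sides exist by assumption, so it suffices to show that for every $c \in \hdom$,
\[
    c \hleq \Bigl(\hbigor_{i \in I} a_i\Bigr) \impl b
    \qiff
    c \hleq \hbigand_{i \in I} (a_i \impl b).
\]
Reflexivity and antisymmetry of $\hleq$ then yield the equality.

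We proceed by a chain of equivalences. First, the Heyting adjunction of \Cref{def:heyting_algebra}, together with commutativity of $\hand$, turns the left-hand side into $\bigl(\hbigor_{i} a_i\bigr) \hand c \hleq b$. The central step is to rewrite $c \hand \hbigor_i a_i$ as $\hbigor_i (c \hand a_i)$. For this we use that $c \hand (\cdot)$ is a left adjoint and therefore preserves all existing joins. We would cite \Cref{lem:lattice-preservation-existing-extrema} here, but that lemma assumes that $\hbigor_i (c\hand a_i)$ exists. So we avoid the rewrite and argue directly: for any $d$, the adjunction gives $c \hand \hbigor_i a_i \hleq d$ iff $\hbigor_i a_i \hleq c \impl d$. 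By \Cref{lem:existing-extrema-order-characterization}, this holds iff $a_i \hleq c \impl d$ for all $i$, which by the adjunction again holds iff $c \hand a_i \hleq d$ for all $i$. Instantiating $d := b$ gives
\[
    c \hand \hbigor_i a_i \hleq b \qiff \forall i \in I.\ c \hand a_i \hleq b,
\]
and this does not require the join $\hbigor_i(c \hand a_i)$ to exist.

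Next, applying the adjunction pointwise, $c \hand a_i \hleq b$ iff $c \hleq a_i \impl b$, using commutativity of the meet. Finally, \Cref{lem:existing-extrema-order-characterization} applies to the meet $\hbigand_i (a_i \impl b)$, which exists by hypothesis, and gives that $\forall i.\ c \hleq a_i \impl b$ iff $c \hleq \hbigand_i(a_i \impl b)$. This closes the chain.

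The only delicate point is the existence issue: we must not invoke a join whose existence is not guaranteed. Routing the argument through the universal property with the specific test element $d = b$ sidesteps this. Everything else is a routine application of the adjunction and of the order characterization of existing extrema.
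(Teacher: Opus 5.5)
Your proof is correct and matches the paper's argument essentially step for step. You use the same chain of equivalences: the Heyting adjunction and commutativity of $\hand$ to pass to $\hbigor_{i \in I} a_i \hleq c \impl b$, then \Cref{lem:existing-extrema-order-characterization} on the join, the adjunction back to $c \hleq a_i \impl b$, \Cref{lem:existing-extrema-order-characterization} on the meet $\hbigand_{i \in I}(a_i \impl b)$, and reflexivity plus antisymmetry. Like you, the paper never needs $\hbigor_{i \in I}(c \hand a_i)$ to exist and does not invoke \Cref{lem:lattice-preservation-existing-extrema} here.
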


\begin{proof}
    For every $c \in \hdom$,
    \begin{align*}
        c \hleq \left(\hbigor_{i \in I} a_i\right) \impl b
        &\qiff \left(\hbigor_{i \in I} a_i\right) \hand c \hleq b
            \inlineExplain{adjunction for $\impl$} \\
        &\qiff c \hand \left(\hbigor_{i \in I} a_i\right) \hleq b
            \inlineExplain{commutativity of $\hand$} \\
        &\qiff \hbigor_{i \in I} a_i \hleq c \impl b
            \inlineExplain{adjunction for $\impl$} \\
        &\qiff \forall i \in I.\ a_i \hleq c \impl b
            \inlineExplain{\Cref{lem:existing-extrema-order-characterization}} \\
        &\qiff \forall i \in I.\ c \hand a_i \hleq b
            \inlineExplain{adjunction for $\impl$} \\
        &\qiff \forall i \in I.\ a_i \hand c \hleq b
            \inlineExplain{commutativity of $\hand$} \\
        &\qiff \forall i \in I.\ c \hleq a_i \impl b
            \inlineExplain{adjunction for $\impl$} \\
        &\qiff c \hleq \hbigand_{i \in I}(a_i \impl b)
            \inlineExplain{\Cref{lem:existing-extrema-order-characterization}}.
    \end{align*}
    Since $c \in \hdom$ was arbitrary, reflexivity and antisymmetry yields the claim.
\end{proof}

\begin{lemma}[Coimplication Turns Meets into Joins]
    \label[lemma]{lem:coimpl_turns_meets_into_joins}
    Let $\hbidef$ be a bi-Heyting algebra.
    Let $I$ be an index set, let $\{a_i\}_{i \in I}$ be a family of elements in $\hdom$, and let $b \in \hdom$.
    If the meet $\hbigand_{i \in I} a_i$ and the join $\hbigor_{i \in I} (a_i \coimpl b)$ exist, then
    \[
        \left(\hbigand_{i \in I} a_i\right) \coimpl b = \hbigor_{i \in I} (a_i \coimpl b).
    \]
\end{lemma}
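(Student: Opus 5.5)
We argue exactly as in the preceding proofs of \Cref{lem:coimpl_preserves_join} and \Cref{lem:quant-implication-distributes-over-inf}. We fix an arbitrary $c \in \hdom$ and show that
\[
    \left(\hbigand_{i \in I} a_i\right) \coimpl b \hleq c
    \qiff
    \hbigor_{i \in I} (a_i \coimpl b) \hleq c.
\]
Taking $c$ to be either side, reflexivity and antisymmetry then give the equality. Since we compare from below against an arbitrary $c$, the natural tool is the co-Heyting adjunction $a \coimpl b \hleq x$ iff $b \hleq a \hor x$, together with \Cref{lem:existing-extrema-order-characterization}.

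First we unfold the left-hand side. By the adjunction for $\coimpl$, $\left(\hbigand_{i} a_i\right) \coimpl b \hleq c$ holds iff $b \hleq \left(\hbigand_{i} a_i\right) \hor c$. Commutativity of $\hor$ rewrites this as $b \hleq c \hor \hbigand_{i} a_i$. By the adjunction again (with $c$ now in the left argument), this holds iff $c \coimpl b \hleq \hbigand_{i} a_i$. \Cref{lem:existing-extrema-order-characterization} applies because $\hbigand_i a_i$ exists by assumption, so this is equivalent to $\forall i \in I.\ c \coimpl b \hleq a_i$.

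Next we run the adjunction back pointwise. For each $i$, $c \coimpl b \hleq a_i$ holds iff $b \hleq c \hor a_i = a_i \hor c$, which holds iff $a_i \coimpl b \hleq c$. Finally, \Cref{lem:existing-extrema-order-characterization} applied to the join $\hbigor_i (a_i \coimpl b)$, which exists by assumption, shows that $\forall i.\ a_i \coimpl b \hleq c$ is equivalent to $\hbigor_i (a_i \coimpl b) \hleq c$. This completes the chain of equivalences.

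This is the order dual of \Cref{lem:quant-implication-distributes-over-inf}, so nothing is genuinely difficult. The main point of care is the middle step, where we swap the roles of $\hbigand_i a_i$ and $c$ inside the join. This is what turns the meet from something in the first argument of $\coimpl$ into an upper bound we can test against via the adjunction. Note also that we never need the meet $\hbigand_i (c \hor a_i)$ from \Cref{lem:lattice-preservation-existing-extrema}; only the two extrema assumed in the statement are used.
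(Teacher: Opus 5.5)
Your proof is correct. It differs from the paper's in one small but useful respect.

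The paper runs the chain starting from the join side. It goes pointwise to $\forall i.\ b \hleq a_i \hor c$. It then reassembles this as $b \hleq \hbigand_{i}(a_i \hor c)$ via \Cref{lem:existing-extrema-order-characterization}. Next it invokes \Cref{lem:lattice-preservation-existing-extrema} to rewrite that meet as $\left(\hbigand_i a_i\right) \hor c$. Only then does it apply the adjunction a final time.

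You instead inline the argument of \Cref{lem:lattice-preservation-existing-extrema}. You apply the co-Heyting adjunction a second time with $c$ in the first argument of $\coimpl$, turning $b \hleq c \hor \hbigand_i a_i$ into $c \coimpl b \hleq \hbigand_i a_i$. As a result, the only extrema you ever test against are $\hbigand_i a_i$ and $\hbigor_i (a_i \coimpl b)$, exactly the two assumed to exist.

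The paper's route is more modular. However, both of its steps through $\hbigand_i (a_i \hor c)$ require that meet to exist, and this is not among the lemma's hypotheses. The meet does exist: $c \hor (-)$ is right adjoint to $c \coimpl (-)$, so it preserves all existing meets. The paper leaves this implicit. Your chain avoids the issue and is self-contained under exactly the stated hypotheses.
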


\begin{proof}
    For every $c \in \hdom$,
    \begin{align*}
        \hbigor_{i \in I} (a_i \coimpl b) \hleq c
        &\qiff \forall i \in I.\ a_i \coimpl b \hleq c
            \inlineExplain{\Cref{lem:existing-extrema-order-characterization}} \\
        &\qiff \forall i \in I.\ b \hleq a_i \hor c
            \inlineExplain{adjunction for $\coimpl$} \\
        &\qiff b \hleq \hbigand_{i \in I} (a_i \hor c)
            \inlineExplain{\Cref{lem:existing-extrema-order-characterization}} \\
        &\qiff b \hleq \left(\hbigand_{i \in I} a_i\right) \hor c
            \inlineExplain{\Cref{lem:lattice-preservation-existing-extrema}} \\
        &\qiff \left(\hbigand_{i \in I} a_i\right) \coimpl b \hleq c
            \inlineExplain{adjunction for $\coimpl$}.
    \end{align*}
    Since this holds for every $c \in \hdom$, reflexivity and antisymmetry yield the claim.
\end{proof}

\subsection{An Exemplary Lightweight Encoding of Weighted NetKAT}
\label{sec:appendix-weighted-netkat-lightweight}
\providecommand{\wnlight}[1]{\mathcal{T}(#1)}

Weighted NetKAT~\cite{weightedNetKAT26} is parametric in an $\omega$-continuous semiring.
By \Cref{lem:semiring_special_module}, every such semiring induces an instance of \wGCL{}.
We represent a Weighted NetKAT packet by a \wGCL{} program state: each packet field becomes one program variable. Since Weighted NetKAT assumes finite field domains, symbolic values such as switches, hosts, and ports can be encoded by natural numbers.
The following lightweight translation is not supposed to reproduce the Weighted NetKAT automata construction.
Instead, it aims to show that the scalar verification problems expressible by their policies can be phrased as weakest-preweighting bounds of \wGCL{} programs, which we can verify in our framework.

For every Weighted NetKAT test $t$, let $B_t$ be the corresponding Boolean expression over the packet variables.
We write $\wnlight{p}$ for the \wGCL{} command obtained by translating the Weighted NetKAT policy $p$.
The structural translation is:
\[
\begin{array}{@{}l>{\raggedright\arraybackslash}p{0.28\linewidth}l@{}}
    \toprule
    \text{wNetKAT policy} & meaning & \wGCL{} \text{ encoding} \\
    \midrule
    t
        & filter packets by test $t$
        & \ITE{B_t}{\WEIGH{\sone}}{\WEIGH{\snull}}
        \\
    f \leftarrow n
        & update field $f$
        & \ASSIGN{f}{n}
        \\
    r \odot p
        & run $p$ with weight $r$
        & \WEIGH{r}\fatsemi \wnlight{p}
        \\
    p \oplus q
        & choose between policies
        & \BRANCH{\wnlight{p}}{\wnlight{q}}
        \\
    p;q
        & run $p$, then $q$
        & \wnlight{p}\fatsemi \wnlight{q}
        \\
    p^*
        & finitely many repetitions
        & \text{see below}
        \\
    \mathsf{dup}
        & log packet history
        & \WEIGH{\sone}
        \\
    \bottomrule
\end{array}
\]

The only nontrivial clause next to $\mathsf{dup}$ is Kleene star $p^*$.
For every occurrence of $p^*$, choose a fresh Boolean control variable $b$ not used in $\wnlight{p}$ and define
\[
    \wnlight{p^*}
    \definedAs
    \BRANCH{\ASSIGN{b}{0}}{\ASSIGN{b}{1}}
    \fatsemi
    \WHILEDO{b=1}{
        \wnlight{p}
        \fatsemi
        \BRANCH{\ASSIGN{b}{0}}{\ASSIGN{b}{1}}
    } .
\]

The initial branch chooses whether to execute zero iterations.
After every executed copy of $\wnlight{p}$, the program again chooses whether to stop or continue.
Thus the loop encoding reproduces the semantic role of Weighted NetKAT star: it aggregates all finite repetitions of $p$.
By the Kleene-iteration semantics of \wGCL{} loops and the definition of
countable sums in $\omega$-continuous modules
\cite[Thm.~4.7, Def.~A.15 and Eq.~(6)]{BatzGKKW22}, the resulting weakest
preweighting is
\[
    \wp{\wnlight{p^*}}(w)
    =
    \sbigadd_{n\in\Nats}\wp{\wnlight{p}^n}(w),
\]
for postweightings $w$ independent of the fresh variable $b$, where $\wnlight{p}^0$ is the empty command and $\wnlight{p}^{n+1}=\wnlight{p}\fatsemi\wnlight{p}^n$.

The table maps $\mathsf{dup}$ to the neutral weight because this is enough for scalar safety and reachability properties.
The role of $\mathsf{dup}$ in Weighted NetKAT is to record complete packet histories.
To also track the packet history in our setting, we could add a ghost history variable to the program state, or abstractly use a monoid-module of $S$-weighted languages over guarded strings, i.e. functions from guarded strings to the semiring carrier $S$ that assign a weight to each packet-history trace.
However, we do not need this heavier construction for scalar verification problems.

\paragraph{Example.}
Consider the first switch in Figure~2 of Weighted NetKAT \cite{weightedNetKAT26}.
The policy $p_1$ is the forwarding-table policy: it chooses an outgoing port from the destination field.
The policy $\ell_1$ is the link policy: it follows the link attached to the chosen port.
Latencies are therefore attached to $\ell_1$, not to $p_1$.
For packets destined to $\mathsf{H}_2$, the relevant fragment is $p_1;\ell_1^{\mathit{lat}}$, where
\[
\begin{aligned}
    p_1
    \definedAs{}&
    \mathsf{if}\ \mathit{dst}=\mathsf{H}_2\
        \mathsf{then}\left(
            \mathit{pt}\leftarrow 3 \oplus \mathit{pt}\leftarrow 4
        \right)
    \\
    &\mathsf{else\ if}\ \mathit{dst}=\mathsf{H}_1\
        \mathsf{then}\ \mathit{pt}\leftarrow 1
    \\
    &\mathsf{else}\ \mathsf{drop},
    \\
    \ell_1^{\mathit{lat}}
    \definedAs{}&
    \mathsf{if}\ \mathit{pt}=3\
        \mathsf{then}\ 2\,\mathrm{ms}\odot
            (\mathit{sw}\leftarrow \mathsf{S}_3;\mathit{pt}\leftarrow 1)
    \\
    &\mathsf{else\ if}\ \mathit{pt}=4\
        \mathsf{then}\ 4\,\mathrm{ms}\odot
            (\mathit{sw}\leftarrow \mathsf{S}_4;\mathit{pt}\leftarrow 1)
    \\
    &\mathsf{else}\ \mathit{pt}=1 .
\end{aligned}
\]
Writing weights in milliseconds, the translation is
\[
\begin{array}{@{}r@{\ }c@{\ }l@{}}
    \wnlight{p_1;\ell_1^{\mathit{lat}}}
    &=
    &\wnlight{p_1}\fatsemi\wnlight{\ell_1^{\mathit{lat}}}
    \\[0.4em]
    \wnlight{p_1}
    &\definedAs
    &\begin{array}[t]{@{}l@{}}
        \mathsf{if}\ \mathit{dst}=\mathsf{H}_2\ \mathsf{then} \\
        \quad
        \BRANCH{\ASSIGN{\mathit{pt}}{3}}{\ASSIGN{\mathit{pt}}{4}}
        \\
        \mathsf{else\ if}\ \mathit{dst}=\mathsf{H}_1\ \mathsf{then} \\
        \quad
        \ASSIGN{\mathit{pt}}{1}
        \\
        \mathsf{else} \\
        \quad
        \WEIGH{\snull}
    \end{array}
    \\[1.0em]
    \wnlight{\ell_1^{\mathit{lat}}}
    &\definedAs
    &\begin{array}[t]{@{}l@{}}
        \mathsf{if}\ \mathit{pt}=3\ \mathsf{then} \\
        \quad
        \WEIGH{2}\fatsemi
        \ASSIGN{\mathit{sw}}{\mathsf{S}_3}\fatsemi
        \ASSIGN{\mathit{pt}}{1}
        \\
        \mathsf{else\ if}\ \mathit{pt}=4\ \mathsf{then} \\
        \quad
        \WEIGH{4}\fatsemi
        \ASSIGN{\mathit{sw}}{\mathsf{S}_4}\fatsemi
        \ASSIGN{\mathit{pt}}{1}
        \\
        \mathsf{else} \\
        \quad
        \ITE{\mathit{pt}=1}{\WEIGH{\sone}}{\WEIGH{\snull}}
    \end{array}
\end{array}
\]
If a packet at switch $\mathsf{S}_1$ is destined for host $\mathsf{H}_2$, then $\wnlight{p_1}$ chooses between ports $3$ and $4$, and $\wnlight{\ell_1^{\mathit{lat}}}$ contributes the corresponding link weight.
Thus the translated program exposes exactly the branch weights $2$ and $4$.
With the tropical instantiation this yields the best-case latency
\[
    2 \splus 4 = \min\{2,4\}=2,
\]
whereas with the arctic instantiation it yields the worst-case latency $ 2 \splus 4 = \max\{2,4\}=4. $

This example illustrates that scalar Weighted NetKAT analyses can be expressed in \wGCL{}.
Weighted NetKAT obtains decision procedures by compiling finite packet-processing policies to finite automata~\cite[Sec.~3.1, Sec.~5, Thm.~1]{weightedNetKAT26}.
Our deductive infrastructure has a different scope: using invariants, it can also reason about parameterized or infinite-state weighted programs, such as the unbounded network-security case study.

\begingroup
\setlength{\parindent}{0pt}
\setlength{\parskip}{0.35\baselineskip plus 0.1\baselineskip minus 0.05\baselineskip}

\section{\titleFullref{Proofs of }{sec:proof_rules}}
\label{sec:appendix-proof-rules}

\thmWpEncodingExactness*

\begin{proof}
    We proceed by structural induction on \(\wwcom\).
    \proofheading{Base cases}
    \begin{proofcases}
        \proofcase{$\wwcom = x := E$}
            \begin{align*}
                &\interpretsimple{\vc{\procenc{\wwcom}}(\hhla)} &&\\
                &= \interpretsimple{\vc{\stmtRasgn{x}{E}}(\hhla)} \inlineExplain{Def. of $\procenc{\cdot}$} &&\\
                &= \interpretsimple{\hhla\substBy{x}{E}} \inlineExplain{Def. of $\symRasgn$} &&\\
                &= \lambda\sigma: \, \interpretsimple{\hhla}(\sigma[x \mapsto \sigma(E)]) &&\\
                &= \wp{x := E}(\interpretsimple{\hhla})
            \end{align*}
        \proofcase{$\wwcom = \odot\,a$}
            \begin{align*}
                &\interpretsimple{\vc{\procenc{\wwcom}}(\hhla)} &&\\
                &= \interpretsimple{\vc{\stmtWeigh{a}}(\hhla)} \inlineExplain{Def. of $\procenc{\cdot}$} &&\\
                &= \interpretsimple{a \cdot \hhla} \inlineExplain{Def. of $\symWeigh$} &&\\
                &= a \odot \interpretsimple{\hhla} &&\\
                &= \wp{\odot\,a}(\interpretsimple{\hhla})
            \end{align*}
    \end{proofcases}
    \proofheading{Inductive cases}
    Assume the induction hypothesis for arbitrary but fixed \(\ccs{1}, \ccs{2} \in \wGCL\).
    \begin{proofcases}
        \proofcase{$\wwcom = \ccs{1} ; \ccs{2}$}
            \begin{align*}
                &\interpretsimple{\vc{\procenc{\wwcom}}(\hhla)} &&\\
                &= \interpretsimple{\vc{\stmtSeq{\procenc{\ccs{1}}}{\procenc{\ccs{2}}}}(\hhla)} \inlineExplain{Def. of $\procenc{\cdot}$} &&\\
                &= \interpretsimple{\vc{\procenc{\ccs{1}}}(\vc{\procenc{\ccs{2}}}(\hhla))} &&\\
                &= \wp{\ccs{1}}(\interpretsimple{\vc{\procenc{\ccs{2}}}(\hhla)}) \inlineExplain{I.H.} &&\\
                &= \wp{\ccs{1}}(\wp{\ccs{2}}(\interpretsimple{\hhla})) &&\\
                &= \wp{\wwcom}(\interpretsimple{\hhla})
            \end{align*}
        \proofcase{$\wwcom = \{\,\ccs{1}\,\}\oplus\{\,\ccs{2}\,\}$}
            \begin{align*}
                &\interpretsimple{\vc{\procenc{\wwcom}}(\hhla)} &&\\
                &= \interpretsimple{\vc{\stmtWeighted{\procenc{\ccs{1}}}{\procenc{\ccs{2}}}}(\hhla)} &&\\
                &= \interpretsimple{\vc{\procenc{\ccs{1}}}(\hhla) + \vc{\procenc{\ccs{2}}}(\hhla)} \inlineExplain{Def. $\symWeightedBranching$} &&\\
                &= \interpretsimple{\vc{\procenc{\ccs{1}}}(\hhla)} \oplus \interpretsimple{\vc{\wpenc{\ccs{2}}}(\hhla)} &&\\
                &= \wp{\ccs{1}}(\interpretsimple{\hhla}) \oplus \wp{\ccs{2}}(\interpretsimple{\hhla}) \inlineExplain{I.H.} &&\\
                &= \wp{\wwcom}(\interpretsimple{\hhla})
            \end{align*}
        \proofcase{$\wwcom = \symIf\;(\bexpr)\; \{\,\ccs{1}\,\}\; \symElse\; \{\,\ccs{2}\,\}$. For all $\State \in \States$}
            \begin{align*}
                &\interpretsimple{\vc{\wpenc{\wwcom}}(\hhla)}(\State) && \\
                &= \llbracket \symVc\llbracket\symDemonic \{\stmtSeq{\stmtAssume{\embed{\bexpr}}}{\wpenc{\ccs{1}}} \} \\
                &\qquad \text{else } \{\stmtSeq{\stmtAssume{\embed{\neg\bexpr}}}{\wpenc{\ccs{2}}}\}\rrbracket(\varphi)\rrbracket(\sigma) && \\
                &=  \llbracket\vc{
                        \stmtSeq{\stmtAssume{\embed{\bexpr}}}{\wpenc{\ccs{1}}}
                    } (\hhla) \\
                &\qquad \dotsqcap
                    \vc{
                        \stmtSeq{\stmtAssume{\embed{\neg\bexpr}}}{\wpenc{\ccs{2}}}
                    } (\hhla)
                    \rrbracket (\State)
                    &&\inlineExplain{Def. \symDemonic}\\
                &= \interpretsimpleState{\vc{
                        \stmtSeq{\stmtAssume{\embed{\bexpr}}}{\wpenc{\ccs{1}}}
                    } (\hhla)} \\
                &\qquad \sqcap
                    \interpretsimpleState{\vc{
                        \stmtSeq{\stmtAssume{\embed{\neg\bexpr}}}{\wpenc{\ccs{2}}}
                    } (\hhla)} &&\inlineExplain{Def. $\dotsqcap$} \\
                &= \interpretsimpleState{
                    \vc{
                        \stmtAssume{\embed{\bexpr}}
                    } (\vc{\wpenc{\ccs{1}}}(\hhla)) } \\
                &\qquad \sqcap \interpretsimpleState{
                    \vc{
                        \stmtAssume{\embed{\neg\bexpr}}
                    } (\vc{\wpenc{\ccs{2}}}(\hhla))} &&\inlineExplain{Def. \seq}\\
                &= \interpretsimpleState{\embed{\bexpr} \impl \vc{\wpenc{\ccs{1}}}(\hhla)} \\
                &\qquad \sqcap \interpretsimpleState{\embed{\neg\bexpr} \impl \vc{\wpenc{\ccs{2}}}(\hhla)}
                &&\inlineExplain{Def. \symAssume} \\
                &= (\interpretsimpleState{\embed{\bexpr}} \impl \interpretsimpleState{\vc{\wpenc{\ccs{1}}}(\hhla)}) \\
                &\qquad \sqcap (\interpretsimpleState{\embed{\neg\bexpr}} \impl \interpretsimpleState{\vc{\wpenc{\ccs{2}}}(\hhla)}) && \\
                &= (\interpretsimpleState{\embed{\bexpr}} \impl \wp{\ccs{1}}(\interpretsimple{\hhla})(\State))\\
                &\qquad \sqcap (\interpretsimpleState{\embed{\neg\bexpr}} \impl \wp{\ccs{2}}(\interpretsimple{\hhla})(\State)) \displaybreak[3] &&\inlineExplain{I.H.} \\
                &= \ifThenElse{\interpretsimpleState{b} = \true}{\top \impl \wp{\ccs{1}}(\interpretsimple{\hhla})(\State) \hand \bot \impl \wp{\ccs{2}}(\interpretsimple{\hhla})(\State)}{\bot \impl \wp{\ccs{1}}(\interpretsimple{\hhla})(\State) \hand \top \impl \wp{\ccs{2}}(\interpretsimple{\hhla})(\State)} && \\
                &= \ifThenElse{\interpretsimpleState{b} = \true}{(\top \impl \wp{\ccs{1}}(\interpretsimple{\hhla})(\State)) \hand \top}{\top \hand (\top \impl \wp{\ccs{2}}(\interpretsimple{\hhla})(\State))} &&\inlineExplain{\Cref{def:heyting_algebra}} \\
                &= \ifThenElse{\interpretsimpleState{b} = \true}{\top \impl \wp{\ccs{1}}(\interpretsimple{\hhla})(\State)}{\top \impl \wp{\ccs{2}}(\interpretsimple{\hhla})(\State)} &&\inlineExplain{\Cref{def:bounded_lattice}} \\
                &= \ifThenElse{\interpretsimpleState{b} = \true}{\wp{\ccs{1}}(\interpretsimple{\hhla})(\State)}{\wp{\ccs{2}}(\interpretsimple{\hhla})(\State)} &&\inlineExplain{\Cref{def:heyting_algebra}} \\
                &= (\iverson{b} \wp{\ccs{1}}(\interpretsimple{\hhla}))(\State) \oplus (\iverson{\neg b} \wp{\ccs{2}}(\interpretsimple{\hhla}))(\State) && \\
                &= \wp{\wwcom}(\interpretsimple{\hhla})(\State) \qedhere
            \end{align*}
    \end{proofcases}
\end{proof}

\subsection{\titleFullref{Proofs of }{sec:spec_statement}}
\label[section]{sec:appendix-specification-statements}

\lemHeyvlSpecificationEncoding*

\begin{proof}
    By the semantics in \Cref{tab:heyvl-semantics-modern} and sequential composition,
    \begin{align*}
        &\vc{\procenc{\stmtSpec{\beta}{\hlc}{\hlb}}}(\hhla)(\State) \\
        &= \vc{\stmtAssert{\hlc}\symSemi\stmtHavoc{\beta}\symSemi\stmtValidate\symSemi\stmtAssume{\hlb}}(\hhla)(\State) \\
        &= \vc{\stmtAssert{\hlc}}(\vc{\stmtHavoc{\beta}\symSemi\stmtValidate\symSemi\stmtAssume{\hlb}}(\hhla))(\State) \\
        &= \evalState{\hlc} \sqcap \vc{\stmtHavoc{\beta}}(\vc{\stmtValidate\symSemi\stmtAssume{\hlb}}(\hhla))(\State).
    \end{align*}
    The suffix $\stmtValidate\symSemi\stmtAssume{\hvcpostweight}$ checks whether $\hvcpostweight \hleq \hhla$ holds and turns this qualitative test into a boolean-valued weighting. Since $\stmtHavoc{\beta}$ ranges over exactly the states in $\varsetreach{\State}$, the resulting value is $\top$ iff
    \[
        \forall \State' \in \varsetreach{\State}.\ \eval{\hvcpostweight}(\State') \hleq \eval{\hhla}(\State'),
    \]
    and $\bot$ otherwise. Meeting this result with $\evalState{\hvcpreweight}$ yields the claim.
\end{proof}

\thmHeyvlSpecificationSoundness*

\begin{proof}
    Let $\State \in \States$ be arbitrary.

    First assume that
    \[
        \forall \State' \in \varsetreach{\State}.\ \eval{\hvcpostweight}(\State') \hleq \eval{\hhla}(\State').
    \]
    By \Cref{lem:heyvl-specification-encoding},
    \[
        \vc{\procenc{\stmtSpec{\beta}{\hvcpreweight}{\hvcpostweight}}}(\hhla)(\State)=\evalState{\hvcpreweight}.
    \]
    Since $\sstmt$ adheres to the specification,
    \[
        \evalState{\hvcpreweight} \hleq \vc{\sstmt}(\hvcpostweight)(\State).
    \]
    Moreover, because $\sstmt$ may modify only variables in $\beta$, every state that is relevant for evaluating $\vc{\sstmt}(\cdot)(\State)$ is contained in $\varsetreach{\State}$. Hence the assumption above implies that replacing $\hvcpostweight$ by $\hhla$ can only increase the value of the verification preweighting, i.e.
    \[
        \vc{\sstmt}(\hvcpostweight)(\State) \hleq \vc{\sstmt}(\hhla)(\State).
    \]
    Therefore,
    \[
        \vc{\procenc{\stmtSpec{\beta}{\hvcpreweight}{\hvcpostweight}}}(\hhla)(\State)
        \hleq
        \vc{\sstmt}(\hhla)(\State).
    \]

    Otherwise, \Cref{lem:heyvl-specification-encoding} yields
    \[
        \vc{\procenc{\stmtSpec{\beta}{\hvcpreweight}{\hvcpostweight}}}(\hhla)(\State)=\bot \hleq \vc{\sstmt}(\hhla)(\State). \qedhere
    \]
\end{proof}

\subsection{\titleFullref{Proofs of }{sec:loop_unrolling}}

\lemFiniteLoopUnrolling*

\begin{proof}
    Let $\wpchar$ be the characteristic function of $\wp{\wwcom}(\nosem{\wla})$.
    By the loop semantics and Kleene's fixed-point theorem,
    \[
        \wp{\wwcom}(\nosem{\wla})
        =
        \lfp{\wX}{\wpchar(\wX)}
        =
        \hbigor_{m \in \Nats} \wpchar^m(\bot).
    \]
    For every $n \in \Nats$, the approximant $\wpchar^n(\bot)$ is below this supremum.
    Hence $\nosem{\wlc} \hleq \wpchar^n(\bot)$ implies
    $\nosem{\wlc} \hleq \wp{\wwcom}(\nosem{\wla})$ by transitivity.
\end{proof}

\lemHeyvlCharacteristicEncodingSemantics*

\begin{proof}
    \begin{align*}
        &\vc{\charEnc{\wwcom}{\hinv}}(\hhla)(\State) \\
        &= \vc{\symDemonic \{\stmtSeq{\stmtAssume{\embed{\bexpr}}}{\procenc{\cctick}}\symSemi \Assert{\hinv}\symSemi \Assume{\embed{\false}} \} \text{else } \{\stmtAssume{\embed{\neg\bexpr}}\}}(\hhla)(\State) \\
        &= (\interpretsimpleState{\embed{\bexpr}} \impl \vc{\procenc{\cctick}\symSemi \Assert{\hinv}\symSemi \Assume{\embed{\false}}}(\hhla)(\State)) \sqcap (\interpretsimpleState{\embed{\neg\bexpr}} \impl \interpretsimpleState{\hhla}) \\
        &= (\interpretsimpleState{\embed{\bexpr}} \impl \vc{\procenc{\cctick}}(\vc{\Assert{\hinv}\symSemi \Assume{\embed{\false}}}(\hhla))(\State)) \\
            &\qquad \sqcap (\interpretsimpleState{\embed{\neg\bexpr}} \impl \interpretsimpleState{\hhla}) \\
        &= (\interpretsimpleState{\embed{\bexpr}} \impl \vc{\procenc{\cctick}}(\hinv)(\State)) \sqcap (\interpretsimpleState{\embed{\neg\bexpr}} \impl \interpretsimpleState{\hhla}) \\
        &= \ifThenElse{\interpretsimpleState{\bexpr} = \true}{\vc{\procenc{\cctick}}(\hinv)(\State)}{\interpretsimpleState{\hhla}} \qedhere
    \end{align*}
\end{proof}

\lemCharacteristicEncodingSoundness*

\begin{proof}
    By \Cref{lem:heyvl-characteristic-encoding-semantics} and the soundness assumption on $\procenc{\wwcoma}$, each characteristic encoding under-approximates the semantic characteristic function:
    \[
        \nosem{\vc{\charEnc{\wwcom, \hinv}}(\hhla)}
        \hhleq
        \wpcharhhla(\nosem{\hinv}).
    \]
    Therefore, by induction on $n$,
    \[
        \nosem{\vc{\charEnc{\wwcom, \bot}^n}(\hhla)}
        \hhleq
        \wpcharhhla^n(\bot).
    \]
    The claim follows from \Cref{lem:loop_unrolling}.
\end{proof}

\subsection{\titleFullref{Proofs of }{sec:local_park_induction}}

\lemClosedness*

\begin{proof}
    We prove by structural induction on $C$ that for all global weightings
    $Y,Z \in \weightings$ with $Y(\State)=Z(\State)$ for all
    $\State\in\statesclosed$, it holds that
    \(\wlp{C}(Y)(\State)=\wlp{C}(Z)(\State)\) for all
    $\State \in \statesclosed$. The corresponding statement for $\wp{C}$
    is analogous, but we concentrate on $\wlp{C}$ here.

    \proofheading{Base cases}
    \begin{proofcases}
        \proofcase{$C = \ASSIGN{x}{E}$}
            Since $x\in\beta$ and $\statesclosed$ is closed under
            $\beta$-reachability, every successor state
            $\State[x\mapsto\eval{E}(\State)]$ is also in
            $\statesclosed$. Hence, for all $\State \in \statesclosed$,
            \begin{align*}
                \wlp{\ASSIGN{x}{E}}(Y)(\State)
                &= Y(\State[x\mapsto\eval{E}(\State)]) \\
                &= Z(\State[x\mapsto\eval{E}(\State)]) \\
                &= \wlp{\ASSIGN{x}{E}}(Z)(\State).
            \end{align*}

        \proofcase{$C = \WEIGH{a}$}
            Immediate from the definition of the weight multiplication:
            $\mexpr \otimes Y$ depends only on $Y(\State)$, and therefore on
            $Z(\State)$ when $Y$ and $Z$ agree on $\statesclosed$.
    \end{proofcases}

    \proofheading{Inductive cases}
    Assume the statement holds for $C_1$ and $C_2$.
    \begin{proofcases}
        \proofcase{$C = C_1 ; C_2$}
            For all $\State\in\statesclosed$, the induction hypothesis gives
            $\wlp{C_2}(Y)(\State)=\wlp{C_2}(Z)(\State)$.
            Applying the induction hypothesis to $C_1$ then yields the
            desired equality for $C$.

        \proofcase{$C = \BRANCH{C_1}{C_2}$}
            The semantics of branching is pointwise addition, so the claim
            follows directly from the induction hypothesis on both branches.

        \proofcase{$C = \ITE{b}{C_1}{C_2}$}
            For $\State\in\statesclosed$, either $\nosem{b}(\State)$ or
            $\nosem{\neg b}(\State)$ holds. In the former case, the result is
            determined by $C_1$; in the latter by $C_2$. The induction
            hypothesis therefore implies the desired equality for $C$.

        \proofcase{$C = \WHILEDO{b}{C_1}$}
            Let
            \(\Psi_Y(X)= [\neg b]\cdot Y \oplus [b]\cdot \wlp{C_1}(X)\)
            and
            \(\Psi_Z(X)= [\neg b]\cdot Z \oplus [b]\cdot \wlp{C_1}(X)\).
            By the induction hypothesis on $C_1$, any two weightings that
            agree on $\statesclosed$ are mapped by $\wlp{C_1}$ to weightings
            that also agree on $\statesclosed$. Hence $\Psi_Y$ and
            $\Psi_Z$ agree on $\statesclosed$ whenever $Y$ and $Z$ do.

            Define the approximants
            $X_0 = \top$ and $X_{n+1} = \Psi_Y(X_n)$, and analogously
            $X'_0 = \top$ and $X'_{n+1} = \Psi_Z(X'_n)$. By induction on
            $n$, each pair $X_n,X'_n$ agrees on $\statesclosed$. Therefore
            the greatest fixed points $\gfp\,\Psi_Y$ and
            $\gfp\,\Psi_Z$ also agree on $\statesclosed$. This proves
            the claim for $\wlp{C}$.
    \end{proofcases}

    Finally, closure under $\statesclosed$-restricted weightings follows
    from the fact that for any restricted weighting $\hla_{|\statesclosed}$
    and any two global extensions $Y,Z$ of it, the values of
    $\wlp{C}(Y)$ on $\statesclosed$ agree with those of
    $\wlp{C}(Z)$. Thus we may choose $\hla' = \wlp{C}(Y)$ as a global
    extension of the result. The analogous argument for $\wp{C}$ is
    straightforward.
\end{proof}

\thmLocalParkInduction*

\begin{proof}
    We assume that the inequality $ \nosem{I_{|\statesclosed}} \bhleq \betawlpchar(\nosem{I_{|\statesclosed}})$ holds for an invariant $I \in \weightings$.
    Due to \Cref{lem:closedness_under_reachability}, the transformed weighting $\betawlpchar(\nosem{I_{|\statesclosed}}) : \statesclosed \to \Module$ is again $\statesclosed$-restricted. Hence, we can use monotonicity of $\betawlpchar$ and obtain via natural induction on $n \in \Nats$ that
    $$ \nosem{I_{|\statesclosed}} \bhleq \betawlpchar^n(\nosem{I_{|\statesclosed}}). $$

    Furthermore, we derive via Park induction that $\nosem{I_{|\statesclosed}} \bhleq \gfp \betawlpchar = \wlp{C}(\hla)$, since the set of $\statesclosed$-restricted weightings is an $\omega$-bicomplete partially ordered set and the weighting transformer $\betawlpchar$ is $\omega$-bicontinuous.
\end{proof}

\begin{restatable}[Semantics of Encoding]{lemma}{lemSemanticsPark}
    \label[lemma]{lem:semantics-local-park-induction-encoding}
    For a loop $\wwcom = \WHILEDO{b}{\wwcoma} \in \wGCL$ with local variables $\varset \subseteq \Vars$, weightings $\hhla, \hinv \in \wHeyLo$, and an encoding $\procenc{\wwcoma}$ with $\vc{\procenc{\wwcoma}}(\hhlb) \hleq \wp{\wwcoma}(\hhlb)$ for all $\hhlb \in \wHeyLo$, the encoding of Park induction in \Cref{tab:encoding_spec_char_park} satisfies
    $$
        \nosem{\vc{\parkEnc{\wwcom, \hinv}}(\hhla)}(\State) \hleq
         \ifThenElseDot{\hinv \hleq_{\varsetreach{\sigma}} \betawlpchar(\hinv)}{\nosem{\hinv}(\sigma)}{\bot}
    $$
\end{restatable}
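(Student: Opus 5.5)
The plan is to evaluate $\vc{\parkEnc{\wwcom, \hinv}}(\hhla)$ backwards through its statements, as in \Cref{lem:heyvl-specification-encoding}, and then do a case split on whether the local subinvariance condition holds. The hypothesis on $\procenc{\wwcoma}$ is phrased with $\symWp$, while the target condition uses $\betawlpchar$. I therefore first bridge $\symWp$ to $\symWlp$ with an auxiliary fact:
\[
  \wp{\wwcom}(\wla) \hleq \wlp{\wwcom}(\wla) \quad\text{for all } \wwcom \in \wGCL,\ \wla \in \weightings .
\]
I prove this by structural induction on $\wwcom$.
\begin{itemize}
  \item The loop-free cases follow from the identical rules in \Cref{fig:weakest_pre}, together with monotonicity of $\splus$ (\Cref{lem:weighted-addition-monotone}), of $\stimes$ (\Cref{lem:scalar-multiplication-monotone}) and of substitution.
  \item For the loop case, the induction hypothesis gives $\wpchar(X) \hleq \wlpchar(X)$ for every $X$. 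Put $L = \lfp{\wX}{\wpchar(\wX)}$. Then $L = \wpchar(L) \hleq \wlpchar(L)$, so $L$ is a post-fixed point of $\wlpchar$.
  \item By monotonicity and induction on $n$, this gives $L \hleq \wlpchar^n(\top)$ for all $n$. Hence $L \hleq \hbigand_n \wlpchar^n(\top)$, which is the greatest fixed point by \Cref{thm:weakest_liberal_pre} and Kleene.
\end{itemize}
Combining this with the hypothesis gives $\vc{\procenc{\wwcoma}}(\hhlb) \hleq \wlp{\wwcoma}(\hhlb)$ for every $\hhlb$.

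Next, I unfold the encoding. By the rules in \Cref{tab:heyvl-semantics-modern},
\[
  \vc{\parkEnc{\wwcom, \hinv}}(\hhla)(\sigma)
  = \hinv(\sigma) \hand \hbigand_{\sigma' \in \varsetreach{\sigma}} \heylovalidate{\hinv \impl \vc{\charEnc{\wwcom, \hinv}}(\hhla)}(\sigma') .
\]
The havoc over the finitely many variables in $\varset$ ranges exactly over $\varsetreach{\sigma}$; this is the same observation used in \Cref{lem:heyvl-specification-encoding}. By \Cref{lem:heyvl-characteristic-encoding-semantics}, whose hypothesis is exactly ours, $\vc{\charEnc{\wwcom, \hinv}}(\hhla)(\sigma')$ equals $\vc{\procenc{\wwcoma}}(\hinv)(\sigma')$ if $\sem{b}(\sigma')$, and $\hhla(\sigma')$ otherwise. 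I then compare this with
\[
  \betawlpchar(\hinv)(\sigma') = [b]\cdot\wlp{\wwcoma}(\hinv)(\sigma') \splus [\neg b]\cdot\hhla(\sigma') .
\]
The Iverson brackets select a single summand because $\snull$ is neutral for $\splus$. The bridge from the first step then gives
\[
  \vc{\charEnc{\wwcom, \hinv}}(\hhla)(\sigma') \hleq \betawlpchar(\hinv)(\sigma') \quad\text{for every } \sigma' .
\]

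Finally, the case split.
\begin{itemize}
  \item If $\hinv \hleq_{\varsetreach{\sigma}} \betawlpchar(\hinv)$, the meet is bounded above by its first operand $\hinv(\sigma)$, which is the required bound.
  \item Otherwise, pick $\sigma' \in \varsetreach{\sigma}$ with $\hinv(\sigma') \not\hleq \betawlpchar(\hinv)(\sigma')$. By the previous inequality and transitivity, also $\hinv(\sigma') \not\hleq \vc{\charEnc{\wwcom, \hinv}}(\hhla)(\sigma')$.
  \item \Cref{thm:deduction_theorem} then gives $\hinv(\sigma') \impl \vc{\charEnc{\wwcom, \hinv}}(\hhla)(\sigma') \neq \top$, so validation yields $\bot$ at $\sigma'$. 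The infimum over $\varsetreach{\sigma}$ is therefore $\bot$, and so is the whole meet.
\end{itemize}
The statement only asks for $\hleq$, so I do not need the converse direction.

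I expect the main obstacle to be the first step, the loop case of $\symWp \hleq \symWlp$. There I must argue with the Kleene characterisation of the greatest fixed point in the merely $\omega$-cocontinuous setting rather than with complete-lattice Knaster–Tarski. A secondary point needing care is the identification of iterated $\symHavoc$ over $\varset$ with the infimum over $\varsetreach{\sigma}$.
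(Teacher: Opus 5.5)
Your proposal is correct and follows essentially the paper's route: combine the specification-statement lemma with the characteristic-function lemma, then compare $\vc{\charEnc{\wwcom, \hinv}}(\hhla)$ with $\betawlpchar(\hinv)$ pointwise on $\varsetreach{\sigma}$. Your auxiliary fact $\symWp \hleq \symWlp$ usefully makes explicit a step the paper's final inequality uses without justification, since the hypothesis only bounds $\procenc{\wwcoma}$ by $\symWp$ while the claimed bound is stated via $\betawlpchar$.
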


\begin{proof}
    Since the encoding of local Park induction is the sequential composition of the specification statement encoding and the characteristic function encoding, we calculate the semantics using \Cref{lem:heyvl-specification-encoding} and \Cref{lem:heyvl-characteristic-encoding-semantics} by
    \begin{align*}
        & ~ \vc{\parkEnc{C, I}}(\hla)(\sigma) \\
        = & ~ \vc{\specEnc{\stmtSpec{\varset}{I}{I}}}(\vc{\charEnc{C, I}(\hla)})(\sigma) \\
        = & ~ \begin{cases}
            I(\sigma), & \text{if } I \hleq_{\varsetreach{\sigma}} \vc{\charEnc{C, I}}(\hla) \\
            \bot(\sigma), & \text{otherwise }
        \end{cases} \\
        \hleq & ~ \begin{cases}
            I(\sigma), & \text{if } I(\sigma') \hleq_{\varsetreach{\sigma}} \Psi_{\hla}(I)(\sigma') \\
            \bot(\sigma), & \text{otherwise.}
        \end{cases}
    \end{align*}
\end{proof}

\thmHeyvlSoundnessLocalParkInductionEncoding*

\begin{proof}
    For an arbitrary state $\sigma \in \Sigma$, the set $\statesclosed = \varsetreach{\sigma} \subseteq \Sigma$ is the smallest superset of states that is closed under $\varset$-reachability. We distinguish two cases.
    \begin{itemize}
        \item Assume that $I \bhleq \betawlpchar(I)$ holds. Then $\vc{\parkEnc{C, \beta, I}}(\varphi)(\sigma) \hleq I(\sigma)$ due to \Cref{lem:semantics-local-park-induction-encoding}. Due to local Park induction and \Cref{thm:local-park-induction}, we know that $I \bhleq \wlp{C}(\hla)$ and hence also $I(\sigma) \hleq \wlp{C}(\hla)(\sigma)$. Then inequality follows by
        $$ \vc{\parkEnc{C, \beta, I}}(\hla)(\sigma) \hleq I(\sigma) \hleq \wlp{C}(\hla)(\sigma). $$

        \item Assume that $I \not\bhleq \betawlpchar(I)$ holds. Then $\vc{\parkEnc{C, \beta, I}}(\varphi)(\sigma) = \bot$ due to \Cref{lem:semantics-local-park-induction-encoding}. We cannot apply local Park induction and \Cref{thm:local-park-induction}. However, the inequality follows by
        $$ \vc{\parkEnc{C, \beta, I}}(\hla)(\sigma) = \bot \hleq \wlp{C}(\hla)(\sigma). $$
    \end{itemize}
    This concludes the proof.
\end{proof}

\subsection{\titleFullref{Proofs of }{sec:k-induction}}

\begin{restatable}[Closedness under Reachability]{lemma}{lemClosednessKInduction}
    \label[lemma]{lem:closedness_under_reachability_k-induction}
    For a program $\wwcom \in \wGCL$ with local variables $\varset \subseteq \Vars$ and a set $\statesclosed \subseteq \Sigma$ of states that is closed under $\varset$-reachability, the $\kk$-induction operator $\kindsym$ is closed under $\statesclosed$-restricted weightings, hence for all $\wla \in \weightings$ exists $\wlb \in \weightings$ such that
    $$ \kindopk(\wla_{|\statesclosed}) = \wlb_{|\statesclosed}. $$
\end{restatable}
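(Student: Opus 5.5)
We reduce the claim to \Cref{lem:closedness_under_reachability}. Concretely, we show that for any two global weightings $Y, Z \in \weightings$ that agree on $\statesclosed$, the iterates $\kindopk(Y)$ and $\kindopk(Z)$ also agree on $\statesclosed$. Once this agreement property holds, closure under restricted weightings follows exactly as at the end of the proof of \Cref{lem:closedness_under_reachability}. Given a restricted weighting $\wla_{|\statesclosed}$, we pick any global extension $Y$ of it and set $\wlb := \kindopk(Y)$. The restriction $\wlb_{|\statesclosed}$ does not depend on the choice of extension, so $\kindopk(\wla_{|\statesclosed}) = \wlb_{|\statesclosed}$ is well-defined.

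For the agreement property we argue by induction on $\kk$. For $\kk = 0$ the iterate $\kindop{0}$ is the identity, so there is nothing to show. For the step, suppose $\kindopk(Y)$ and $\kindopk(Z)$ agree on $\statesclosed$. We must show that
\[
    \kindsym(\kindopk(Y)) = \iinv \hor \wlpchar(\kindopk(Y))
\]
agrees on $\statesclosed$ with the corresponding expression for $Z$. This reduces to a single-step claim: if $X, X'$ agree on $\statesclosed$, then $\kindsym(X)$ and $\kindsym(X')$ agree on $\statesclosed$. We check the two parts of $\kindsym$ separately.

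\textbf{The characteristic function.} By definition,
\[
    \wlpchar(X) = [\neg\bexpr]\cdot \wla \oplus [\bexpr]\cdot \wlp{\wwcoma}(X).
\]
The summand $[\neg\bexpr]\cdot \wla$ does not depend on $X$ at all. For the other summand, the induction on program structure inside the proof of \Cref{lem:closedness_under_reachability}, applied to the loop body $\wwcoma$, gives that $\wlp{\wwcoma}(X)$ and $\wlp{\wwcoma}(X')$ agree on $\statesclosed$. This requires that $\wwcoma$ modifies only variables in $\varset$, which holds because $\varset$ are the local variables of the loop. Iverson filtering and $\oplus$ act pointwise on states, so $\wlpchar(X)$ and $\wlpchar(X')$ agree on $\statesclosed$.

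\textbf{The join with $\iinv$.} The join $\hor$ on weightings is also computed pointwise, and $\iinv$ is a fixed weighting. Hence joining with $\iinv$ preserves agreement on $\statesclosed$.

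The only step that needs care is the first one: we need the pointwise-agreement form of \Cref{lem:closedness_under_reachability} rather than its existential statement. The paper's proof establishes exactly this stronger fact, via structural induction over programs whose assignments stay within $\varset$. So we cite that argument and apply it to $\wwcoma$. Everything else is pointwise bookkeeping.
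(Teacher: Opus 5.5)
Your proof is correct and follows the same route as the paper. Both argue by induction on $\kappa$, use \Cref{lem:closedness_under_reachability} for the characteristic function $\wlpchar$, and use the fact that joining with $\iinv$ preserves $\statesclosed$-restricted weightings. You are more explicit than the paper in two ways: you work with the agreement form (weightings that coincide on $\statesclosed$ are mapped to weightings that coincide on $\statesclosed$), which is what the structural induction for \Cref{lem:closedness_under_reachability} actually proves and what makes iterating $\kindsym$ on restricted weightings well-defined, and you correctly note that the join step is just pointwise and needs no structural induction.
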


\begin{proof}
    We know that $\betawlpchar$ is closed under $\varset$-reachability. Furthermore, we can show by structural induction that for all $\wla, \wlb \in \weightings$ there exists $\wlc \in \weightings$ such that $\wla_{|\statesclosed} \hor \wlb_{|\statesclosed} = \wlc_{|\statesclosed}$. This implies the statement by induction on $\kk$.
\end{proof}

\thmKappaInduction*

\begin{proof}
    We show that $\kindopk(\wlb) \bhleq \kindop{\kk+1}(\wlb)$ for any $\wlb \in \weightings$ by natural induction on $\kk \in \Nats$: It is $\wlb \bhleq I \hor \betawlpchar(I) = \kindsym(I)$ and $\kindopk(\wlb) \bhleq \kindop{\kk+1}(\wlb) \iff I \hor \betawlpchar(\kindop{\kk-1}) \bhleq I \hor \betawlpchar(\kindopk(I))$, if we assume that $\kindop{\kk-1}(\wlb) \bhleq \kindopk(I)$ for an arbitrary but fixed $\kk > 0$ as the induction hypothesis.

    The previous statement together with the premise $\nosem{I} \bhleq \betawlpchar(\kindopk(\nosem{I}))$ show that $\kindopk(I)$ is a local subinvariant by
    \begin{align*}
        \kindopk(I) \bhleq \kindop{\kk+1}(I) = I \hor \betawlpchar(\kindopk(I)) \bhleq \betawlpchar(\kindopk(I)).
    \end{align*}
    We apply local Park induction (\Cref{thm:local-park-induction}) to obtain that $\nosem{I} \bhleq \wlp{C}(\wla)$.
\end{proof}

\begin{restatable}[Semantics of Encoding]{lemma}{lemSemanticsKappaEncoding}
    \label[lemma]{lem:kappa_induction_semantics}
    For a loop $\wwcom = \WHILEDO{b}{\wwcoma} \in \wGCL$ with local variables $\varset \subseteq \Vars$, weightings $\hhla, \hinv \in \wHeyLo$, an encoding $\procenc{\wwcoma}$ with $\vc{\procenc{\wwcoma}}(\hhlb) \hleq \wp{\wwcoma}(\hhlb)$ for all $\hhlb \in \wHeyVL$, and $\sigma \in \Sigma$, the encoding of $\kappa$-induction in \Cref{fig:encoding_kappa_induction} satisfies
    $$
        \nosem{\vc{\kappaEnc{\wwcom, \varset, \hinv}}(\hhla)}(\sigma)
        \hleq \ifThenElseDot{\nosem{\hinv} \bhleq \betawlpchar(\kindopk(\nosem{\hinv}))}{\nosem{\hinv}(\sigma)}{\bot}
    $$
\end{restatable}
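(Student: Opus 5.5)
We mirror the proof of \Cref{lem:semantics-local-park-induction-encoding}. The encoding is the sequential composition $\kappaEnc{\wwcom,\varset,\hinv} = \specEnc{\stmtSpec{\varset}{\hinv}{\hinv}}\symSemi\kappaUnrollEnc{\wwcom,\varset,\hinv}$. By the semantics of sequential composition and \Cref{lem:heyvl-specification-encoding}, we therefore have, for every $\sigma$,
\[
\nosem{\vc{\kappaEnc{\wwcom,\varset,\hinv}}(\hhla)}(\sigma) =
\begin{cases}
\nosem{\hinv}(\sigma), & \text{if } \nosem{\hinv} \hleq_{\varsetreach{\sigma}} \vc{\kappaUnrollEnc{\wwcom,\varset,\hinv}}(\hhla),\\
\bot, & \text{otherwise}.
\end{cases}
\]
It thus suffices to show the pointwise bound
\[
\vc{\unrollEnc{$\kappa$}{\wwcom,\varset,\hinv}}(\hhla) \hleq \betawlpchar(\kindopk(\nosem{\hinv})) \qquad \text{for every } \kk \in \Nats.
\]
Once this holds, the guard on the left implies the guard on the right with $\statesclosed = \varsetreach{\sigma}$, which is closed under $\varset$-reachability. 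In the case where the left guard fails, the value $\bot$ is trivially below the right-hand side, and in the remaining mixed case we still have $\bot \hleq \nosem{\hinv}(\sigma)$.

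The bound is proved by induction on $\kk$. For $\kk = 0$, $\unrollEnc{0}{\cdot}$ is $\charEnc{\wwcom,\hinv}$, and $\kindop{0}(\hinv) = \hinv$. By \Cref{lem:heyvl-characteristic-encoding-semantics} the value is $\vc{\procenc{\wwcoma}}(\hinv)$ on $b$-states and $\hhla$ on $\neg b$-states. The assumption $\vc{\procenc{\wwcoma}} \hleq \wp{\wwcoma}$ together with the fact that $\symWp$ and $\symWlp$ coincide on loop-free bodies gives $\hleq \betawlpchar(\hinv)$. For the step $\kk > 0$, we unfold the if-else exactly as in the proof of \Cref{lem:heyvl-characteristic-encoding-semantics} via $\embed{b}\impl\cdot$ and \Cref{lem:bool_embedding_impl_coimpl}. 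On $\neg b$-states this gives $\hhla$. On $b$-states it gives $\vc{\procenc{\wwcoma}}\bigl(\hinv \hor \vc{\unrollEnc{($\kappa$-1)}{\cdot}}(\hhla)\bigr)$, since $\coAssert{\hinv}$ has semantics $\hinv\hor(\cdot)$. By the induction hypothesis and monotonicity of $\hor$, the argument is $\hleq \hinv \hor \betawlpchar(\kindop{\kk-1}(\hinv)) = \kindopk(\hinv)$. Monotonicity of $\vc{\procenc{\wwcoma}}$ (every $\symVc$ construct is monotone in the post, using \Cref{prop:bi-heyting:properties}) and the soundness assumption then give $\hleq \wlp{\wwcoma}(\kindopk(\hinv))$. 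Reassembling the two cases yields $\betawlpchar(\kindopk(\hinv))$.

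The main obstacle is the matching of indices. We must check that the nesting depth of $\unrollEnc{$\kappa$}{\cdot}$ produces exactly $\betawlpchar\circ\kindopk$ rather than an off-by-one variant; the equation $\kindopk = \hinv \hor \betawlpchar\circ\kindop{\kk-1}$ resolves this. A second point is monotonicity of $\symVc$ on $\procenc{\wwcoma}$, which requires a short structural induction over the loop-free constructs. Finally, the restricted order is only needed pointwise on $\varsetreach{\sigma}$, and there \Cref{lem:closedness_under_reachability_k-induction} guarantees that $\kindopk$ is well-behaved.
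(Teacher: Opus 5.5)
Your proposal is correct and follows the paper's proof. Both arguments prove by induction on $\kappa$ that $\vc{\kappaUnrollEnc{\wwcom,\varset,\hinv}}(\hhla) \hleq \betawlpchar(\kindopk(\hinv))$, and then combine this with \Cref{lem:heyvl-specification-encoding} for the prefix $\specEnc{\stmtSpec{\varset}{\hinv}{\hinv}}$. Your version is slightly more careful than the paper's in three places: you state the final step as an inequality (the paper writes an equality there, which the induction does not justify), and you make explicit both the monotonicity of $\symVc$ and the passage from $\symWp$ to $\symWlp$. One caveat on that last point: if the body $\wwcoma$ may contain loops, use the general fact $\wp{\wwcoma}(\hhlb) \hleq \wlp{\wwcoma}(\hhlb)$ instead of appealing to their coincidence on loop-free bodies.
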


\begin{proof}
    We show by induction that for a loop $C \in \wHeyVL$ with local variables $\varset \subseteq \Sigma$, weightings $\hla, I \in \wHeyLo$, and state $\sigma\in\Sigma$, the semantics of $\kappaUnrollEncShort$ is
    $$\nosem{\vc{\kappaUnrollEnc{C, \varset, I}}(\hla)}(\sigma) \hleq \betawlpchar(\kindopk(I))(\sigma).$$

    The base case is $\kk = 0$ with $\unrollEnc{0}{C, \varset, I} = \charEnc{C, \varset, I}$ and hence \newline $\nosem{\vc{\unrollEnc{0}{C, \varset, I}}}(\hla)(\sigma) \hleq \betawlpchar(I)(\sigma)$.

    We assume that the statement holds for an arbitrary, but fixed $\kk \in \Nats$ and obtain
    \begin{align*}
        & ~ \vc{\unrollEnc{($\kk$+1)}{C, \varset, I}}(\hla) \\
        = & ~ \vc{\stmtAssume{\embed{b}}\symSemi \procenc{C'}\symSemi
            \coAssert{I}\symSemi \kappaUnrollEnc{C, \varset, I}}(\hla) \hand \vc{\stmtAssume{\embed{\neg b}}}(\hla) \\
        = & ~ (\embed{b} \impl \vc{\procenc{C'}}(\vc{\coAssert{I}\symSemi \kappaUnrollEnc{C, \varset, I}}(\hla))) \hand \vc{\stmtAssume{\embed{\neg b}}}(\hla) \\
        = & ~ (\embed{b} \impl \vc{\procenc{C'}}(I \hor \kappaUnrollEnc{C, \varset, I}(\hla)))
            \hand (\embed{\neg b} \impl \hla) \\
        \substack{\text{(IH)} \\ \hleq} & ~ (\embed{b} \impl \vc{\procenc{C'}}(I \hor \betawlpchar(\kindopk(I)))) \hand (\embed{\neg b} \impl \hla) \\
        \hleq & ~ (\embed{b} \impl \vc{\procenc{C'}}(\kindop{\kk + 1}(I))) \hand (\embed{\neg b} \impl \hla) \\
        \hleq & ~ \betawlpchar(\kindop{\kk + 1}(I)).
    \end{align*}
    This shows the statement by induction.

    We know that $\kappaEnc{C, \varset, I} = \specEnc{C, \varset, I}\symSemi \kappaUnrollEnc{C, \varset, I}$ and hence
    \begin{align*}
        & ~ \nosem{\vc{\kappaEnc{C, \varset, I}}(\hla)}(\sigma) \\
        = & ~ \nosem{\vc{\specEnc{C, \varset, I}\symSemi \kappaUnrollEnc{C, \varset, I}}(\hla)}(\sigma) \\
        = & ~ \nosem{\vc{\specEnc{C, \varset, I}}(\vc{\kappaUnrollEnc{C, \varset, I}}(\hla))}(\sigma) \\
        = & ~ \begin{cases}
            \nosem{I}(\sigma), & \text{if } \nosem{I} \bhleq \nosem{\vc{\kappaUnrollEnc{C, \varset, I}}(\hla)} \\
            \bot, & \text{otherwise,}
        \end{cases} \\
        = & ~ \begin{cases}
            \nosem{I}(\sigma), & \text{if } \nosem{I} \bhleq \betawlpchar(\kindopk(\nosem{I})) \\
            \bot, & \text{otherwise.}
        \end{cases}
    \end{align*}
\end{proof}

\begin{restatable}[Soundness]{theorem}{thmSoundnessKappaEncoding}
    \label[theorem]{thm:kappa_induction_soundness}
    For a loop $\wwcom = \WHILEDO{b}{\wwcoma} \in \wGCL$ with local variables $\varset \subseteq \Vars$, weightings $\hhla, \hinv \in \wHeyLo$, and $\kk \in \Nats$, the encoding $\kappaEncShort$ is a sound under-approximation, thus
    $$
        \nosem{\vc{\kappaEnc{\wwcom, \varset, \hinv}}(\hhla)} \hhleq \wlp{\wwcom}(\nosem{\hhla}).
    $$
\end{restatable}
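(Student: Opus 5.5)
This soundness result mirrors the one for local Park induction (\Cref{thm:heyvl-soundness-local-park-induction-encoding}): I would combine the semantic bound of the encoding with the proof rule itself. Fix an arbitrary state $\sigma \in \Sigma$ and set $\statesclosed = \varsetreach{\sigma}$. By \Cref{def:heyvl-varset-reachable-states} this set is closed under $\varset$-reachability: if $\sigma' \in \varsetreach{\sigma}$ and $\sigma'' \in \varsetreach{\sigma'}$, then $\sigma''$ agrees with $\sigma$ outside $\varset$. So both \Cref{lem:kappa_induction_semantics} and \Cref{thm:kappa_induction} may be instantiated with this $\statesclosed$. The loop-body encoding $\procenc{\wwcoma}$ used in $\kappaEncShort$ satisfies the side condition $\vc{\procenc{\wwcoma}}(\hhlb) \hleq \wp{\wwcoma}(\hhlb)$ required by \Cref{lem:kappa_induction_semantics}. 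For loop-free $\wwcoma$ this follows from \Cref{thm:wp-encoding-exactness}; otherwise it is the inductive soundness assumption on nested encodings.

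I would then split into two cases according to the guard of \Cref{lem:kappa_induction_semantics}.

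\emph{Case 1:} $\nosem{\hinv} \bhleq \wlpcharhhla(\kindopk(\nosem{\hinv}))$. The lemma gives $\vc{\kappaEnc{\wwcom, \varset, \hinv}}(\hhla)(\sigma) \hleq \nosem{\hinv}(\sigma)$. The same inequality is exactly the premise of \Cref{thm:kappa_induction}, which yields $\nosem{\hinv} \bhleq \wlp{\wwcom}(\nosem{\hhla})$. Since $\sigma \in \varsetreach{\sigma} = \statesclosed$, evaluating at $\sigma$ gives $\nosem{\hinv}(\sigma) \hleq \wlp{\wwcom}(\nosem{\hhla})(\sigma)$, and transitivity closes the case.

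\emph{Case 2:} the guard fails. The lemma bounds the encoding at $\sigma$ by $\bot$, and $\bot$ lies below $\wlp{\wwcom}(\nosem{\hhla})(\sigma)$ trivially.

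Since $\sigma$ was arbitrary, the pointwise order gives the claim.

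The main obstacle is not the case analysis but making sure the two cited results really fit together. \Cref{lem:kappa_induction_semantics} is stated for the restricted order $\bhleq$, and I must check that its $\statesclosed$ may be taken to be $\varsetreach{\sigma}$. This is what the $\symHavoc~\varset$ in $\specEncShort$ quantifies over, by \Cref{lem:heyvl-specification-encoding}. Its conclusion then has to be read for that specific $\statesclosed$. I would also confirm that the unrolling bound inside \Cref{lem:kappa_induction_semantics} uses the same $\wlpcharhhla$ that appears in the premise of \Cref{thm:kappa_induction}, so that no $\symWp$/$\symWlp$ mismatch arises. The concern is that the body is encoded via $\procenc{\cdot}$, whose assumption is stated with $\symWp$; for a loop-free body, $\symWp$ and $\symWlp$ coincide, which settles it. If a mismatch remained for nested loops, I would strengthen the body assumption to $\vc{\procenc{\wwcoma}}(\hhlb) \hleq \wlp{\wwcoma}(\hhlb)$ and use monotonicity of $\wlpcharhhla$.
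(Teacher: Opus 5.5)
Your proposal is correct and is essentially the paper's proof: fix $\sigma$, take $\statesclosed = \varsetreach{\sigma}$, and split on the guard of \Cref{lem:kappa_induction_semantics}, invoking \Cref{thm:kappa_induction} when the guard holds and the trivial bound $\bot$ otherwise. Your worry about a $\symWp$/$\symWlp$ mismatch is harmless: $\wp{\wwcoma}(\cdot) \hleq \wlp{\wwcoma}(\cdot)$ holds by a routine structural induction, so the $\symWp$-side condition already gives the $\symWlp$-bound the unrolling argument needs, and the $\symWlp$-version you propose is the weaker hypothesis, not a stronger one.
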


\begin{proof}
    For an arbitrary state $\sigma \in \Sigma$, the set $\statesclosed = \varsetreach{\beta} \subseteq \Sigma$ is the smallest superset of states that is closed under $\varset$-reachability. We distinguish two cases.
    \begin{itemize}
        \item Assume that $I \bhleq \betawlpchar(\kindopk(I))$ holds. Then $\vc{\kappaEnc{C, \beta, I}}(\varphi)(\sigma) \hleq I(\sigma)$ due to \Cref{lem:kappa_induction_semantics}. Due to local $\kk$-induction and \Cref{thm:kappa_induction}, we know that $I \bhleq \wlp{C}(\hla)$ and hence also $I(\sigma) \hleq \wlp{C}(\hla)(\sigma)$. Then inequality follows by
        $$ \vc{\kappaEnc{C, \beta, I}}(\hla)(\sigma) \hleq I(\sigma) \hleq \wlp{C}(\hla)(\sigma). $$

        \item Assume that $I \not\bhleq \betawlpchar(\kindopk(I))$ holds. Then $\vc{\kappaEnc{C, \beta, I}}(\varphi)(\sigma) = \bot$ due to \Cref{lem:kappa_induction_semantics}. We cannot apply local $\kk$-induction and \Cref{thm:kappa_induction}. However, the inequality follows by
        $$ \vc{\kappaEnc{C, \beta, I}}(\hla)(\sigma) = \bot \hleq \wlp{C}(\hla)(\sigma). $$
    \end{itemize}
    This concludes the proof.
\end{proof}

\subsection{\titleFullref{Proofs of }{sec:w-invariants}}

\thmOmegaInvariants*

\begin{proof}
    We show via natural induction that $I_{n} \hleq \wpchar^n(\bot)$ holds for all $n \in \Nats$.
    \begin{itemize}
        \item Suppose that $n = 0$. Then $I(0) = \bot$ and $\bot \hleq \wpchar^0(\bot) = \bot$.
        \item Assume that $I_n \hleq \wpchar^n(\bot)$ holds for an arbitrary, but fixed number $n \in \Nats$.
            Then $I_{n+1} \hleq \wpchar(I_n) \hleq \wpchar(\wpchar^n(\bot))$ follows from the induction hypothesis, from the conditions on subinvariants, and from the $\omega$-continuity of the characteristic function $\wpchar$.
    \end{itemize}
    This concludes the statement via natural induction.

    Since $I_n \hleq \wpchar^n(\bot)$ holds for all $n \in \Nats$, we derive that
    $ \sup_{n \in \Nats} I_n \hleq \hbigor_{n \in \Nats} \wpchar^n(\bot) = \wp{C}(\hla). $
\end{proof}

\lemCheckInvariant*

\begin{proof}
    By applying the rules in \Cref{tab:heyvl-semantics-modern}, we obtain the semantics of $base$ and $step$ by
    \begin{align*}
        & I(0) \hleq \vc{base}(\embed{\false}) \tag{definition of \texttt{proc}} \\
        \text{iff}~ & I(0) \hleq \embed{\false} = \bot.
    \end{align*}
    and, again $I(n+1) \hleq \vc{step}(\hla)$, and
    \begin{align*}
        &~ \vc{step}(\hla) \\
        = &~ \vc{\stmtDeclInit{x_1}{\typevar_1}{\overline{x}_1}\symSemi \ldots\symSemi
            \stmtDemonic{\stmtAssume{\embed{b}}\symSemi \ldots}{\stmtAssume{\embed{\neg b}}}}(\hla) \\
        = &~ \symVc \llbracket
            \stmtDeclInit{x_1}{\typevar_1}{\overline{x}_1}\symSemi \ldots\symSemi \stmtDeclInit{x_k}{\typevar_k}{\overline{x}_k} \rrbracket \\
        & \quad (\vc{\stmtAssume{\embed{b}}\symSemi \procenc{C'}\symSemi
            \stmtAssert{I(n)}\symSemi \stmtAssume{\embed{\false}}}(\hla) \hand \vc{\stmtAssume{\embed{\neg b}}}(\hla) ) \\
        = &~ \symVc \llbracket
            \stmtDeclInit{x_1}{\typevar_1}{\overline{x}_1}\symSemi \ldots\symSemi \stmtDeclInit{x_k}{\typevar_k}{\overline{x}_k} \rrbracket \\
        & \quad (\vc{\stmtAssume{\embed{b}}\symSemi \procenc{C'}\symSemi
            \stmtAssert{I(n)}}(\embed{\false} \impl \hla) \hand (\embed{\neg b} \impl \hla)) \\
        = &~ \symVc \llbracket
            \stmtDeclInit{x_1}{\typevar_1}{\overline{x}_1}\symSemi \ldots\symSemi \stmtDeclInit{x_k}{\typevar_k}{\overline{x}_k} \rrbracket \\
        & \quad (\vc{\stmtAssume{\embed{b}}\symSemi \procenc{C'}\symSemi
            \stmtAssert{I(n)}}(\bot) \hand (\embed{\neg b} \impl \hla)) \\
        = &~ \symVc \llbracket
            \stmtDeclInit{x_1}{\typevar_1}{\overline{x}_1}\symSemi \ldots\symSemi \stmtDeclInit{x_k}{\typevar_k}{\overline{x}_k} \rrbracket \\
        & \quad (\vc{\stmtAssume{\embed{b}}\symSemi \procenc{C'}}(I(n)) \hand (\embed{\neg b} \impl \hla)) \\
        = &~ \symVc \llbracket
            \stmtDeclInit{x_1}{\typevar_1}{\overline{x}_1}\symSemi \ldots\symSemi \stmtDeclInit{x_k}{\typevar_k}{\overline{x}_k} \rrbracket \\
        & \quad (\vc{\stmtAssume{\embed{b}}}(\vc{\procenc{C'}}(I(n))) \hand (\embed{\neg b} \impl \hla)) \\
        = &~ \symVc \llbracket
            \stmtDeclInit{x_1}{\typevar_1}{\overline{x}_1}\symSemi \ldots\symSemi \stmtDeclInit{x_k}{\typevar_k}{\overline{x}_k} \rrbracket \\
        & \quad (\embed{b} \impl \vc{\procenc{C'}}(I(n)) \hand (\embed{\neg b} \impl \hla)) \\
        = &~ \embed{b} \impl \vc{\procenc{C'}}(I(n)) \hand (\embed{\neg b} \impl \hla) \\
        \hleq &~ \wpchar(I(n)),
    \end{align*}
    if $\vc{\procenc{C'}}(\hlb) \hleq \wp{\procenc{C'}}(\hlb)$ for all $\hlb \in \wHeyLo$. Hence, $I(n+1) \hleq \vc{\procenc{C'}}(\hla) \hleq \wp{\procenc{C'}}(\hla)$ follows for all $n \in \Nats$.
\end{proof}

\begin{restatable}[Semantics of Encoding]{lemma}{lemSemanticsOmega}
    \label[lemma]{lem:semantics-omega-encoding}
    For a loop $\wwcom = \WHILEDO{b}{\wwcoma} \in \wGCL$, a postweighting $\hhla \in \wHeyLo$, and a weighting $\hinv(n) \in \wHeyLo$ with free variable $n \in \Vars$, the encoding of the $\omega$-rule in \Cref{fig:encoding_omega} satisfies
    $$
        \nosem{\vc{\omegaEnc{\wwcom, (\hinv(n))_{n \in \Nats}}}(\hhla)}(\State)
        = \hbigor_{n \in \Nats} \hinv(n).
    $$
\end{restatable}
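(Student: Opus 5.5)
The plan is to compute $\symVc$ backwards through the three statements of $\omegaEnc{\wwcom,\hinvseq{n}}$ from \Cref{fig:encoding_omega}, using the rules of \Cref{tab:heyvl-semantics-modern} and sequential composition, and then simplify pointwise in a fixed state $\State$.

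First, for the innermost statement I apply the rule for $\symAssume$: $\vc{\stmtAssume{\embed{\false}}}(\hhla) = \embed{\false} \impl \hhla$. By \Cref{lem:bool_embedding_impl_coimpl} (the case $b=\false$), this evaluates to $\top$ in every state, regardless of $\hhla$. So the postweighting $\hhla$ drops out entirely, which is why the claimed right-hand side does not mention $\hhla$.

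Next, the rule for $\symAssert$ gives $\vc{\stmtAssert{\hinv(n)}}(\top) = \hinv(n) \hand \top$, and this equals $\hinv(n)$ because $\top$ is the greatest element of the bounded lattice (\Cref{def:bounded_lattice}). Finally, $\symUp\symHavoc~n$ yields $\hbigor_n \hinv(n)$. Its semantics in $\State$ is $\hbigor\{\sem{\hinv(n)}(\State\substBy{n}{m}) \mid m\in\Nats\}$, and this supremum exists by \Cref{thm:wHeyLo_well_defined}, which relies on \Cref{lem:countable_sup_inf}.

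The only point needing care is reading the right-hand side $\hbigor_{n\in\Nats}\hinv(n)$ correctly. It must be interpreted as the supremum over all values substituted for the free variable $n$, evaluated at $\State$; with that reading the two sides coincide. I do not expect a genuine obstacle. The main subtlety is just making the convention explicit that the bound $n$ ranges over $\Nats$ via $\State\substBy{n}{m}$, and noting that $n$ is treated as a fresh variable not occurring in $\hhla$. Even if it did occur, $\hhla$ is discarded by the $\stmtAssume{\embed{\false}}$ step, so this causes no issue.
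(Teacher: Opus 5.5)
Your proposal is correct and matches the paper's proof: both compute $\symVc$ backwards through the three statements. In that computation, $\stmtAssume{\embed{\false}}$ turns the post into $\top$, then $\stmtAssert{\hinv(n)}$ gives $\hinv(n)\hand\top = \hinv(n)$, and finally $\coHavoc{n}$ takes the supremum over $n$. Your extra remarks on why the supremum exists (via \Cref{thm:wHeyLo_well_defined}) and on reading the right-hand side as a supremum over substituted values of $n$ in $\State$ are sensible clarifications that the paper leaves implicit.
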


\begin{proof}
    By applying the rules in \Cref{tab:heyvl-semantics-modern}, we obtain the semantics of $\omegaEnc{C, I(n)}$ by
    \begin{align*}
        &~ \vc{\omegaEnc{C, I(n)}}(\hla) \\
        = &~ \vc{\coHavoc{n}\symSemi \stmtAssert{I(n)}}(\vc{\stmtAssume{\embed{\false}}}(\hla)) \\
        = &~ \vc{\coHavoc{n}\symSemi \stmtAssert{I(n)}}(\embed{\false} \impl \hla) \\
        = &~ \vc{\coHavoc{n}\symSemi \stmtAssert{I(n)}}(\top) \\
        = &~ \vc{\coHavoc{n}}(\vc{\stmtAssert{I(n)}}(\top)) \\
        = &~ \vc{\coHavoc{n}}(I(n) \hand \top) \\
        = &~ \vc{\coHavoc{n}}(I(n)) \\
        = &~ \hbigor_{n \in \Nats} I(n).
    \end{align*}
\end{proof}

\begin{restatable}[Soundness]{theorem}{thmOmegaEncodingSoundness}
    \label[theorem]{thm:omega-encoding-soundness}
    For a loop $\wwcom = \WHILEDO{b}{\wwcoma} \in \wGCL$, a postweighting $\hhla \in \wHeyLo$, and a weighting $\hinv(n) \in \wHeyLo$, the encoding $\omegaEncShort$ is a sound under-approximation, if the auxiliary procedures $base$ and $step$ in \Cref{fig:encoding_omega} are valid for an $n \in \Nats$. Then
    $$
        \nosem{\vc{\omegaEnc{\wwcom, \hinv}}(\hhla)} \hhleq \wp{\wwcom}(\nosem{\hhla}).
    $$
\end{restatable}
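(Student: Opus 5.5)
The plan is to combine the semantics of the encoding (\Cref{lem:semantics-omega-encoding}), the invariant check (\Cref{lem:semantics-check-invariant}), and the $\omega$-rule (\Cref{thm:omega-invariants}), in the same way the soundness theorems for local Park and $\kappa$-induction were derived from their respective semantics lemmas. First, I would note that \Cref{lem:semantics-omega-encoding} gives, for every $\sigma \in \Sigma$,
\[
    \nosem{\vc{\omegaEnc{\wwcom, \hinv}}(\hhla)}(\sigma) = \Big(\hbigor_{n \in \Nats} \hinv(n)\Big)(\sigma),
\]
where $\hinv(n)$ is read as the weighting obtained by fixing the free variable $n$. The supremum exists by \Cref{lem:countable_sup_inf}, as in \Cref{thm:wHeyLo_well_defined}. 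It therefore suffices to show that $\hbigor_{n} \hinv(n) \hleq \wp{\wwcom}(\nosem{\hhla})$ pointwise.

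Second, I would use the assumed validity of $base$ and $step$ together with \Cref{lem:semantics-check-invariant}. Validity of $base$ yields $\hinv(0) = \bot$. Validity of $step$, that is $\hinv(n+1) \hleq \vc{step}(\hhla)$ as a statement about all values of the (input) variable $n$, yields $\hinv(n+1) \hleq \wpcharhhla(\hinv(n))$ for all $n \in \Nats$. This step needs the standing hypothesis that $\vc{\procenc{\wwcoma}}(\hhlb) \hleq \wp{\wwcoma}(\hhlb)$ for the loop body. That hypothesis holds with equality when $\wwcoma$ is loop-free by \Cref{thm:wp-encoding-exactness}; otherwise I take it as inherited from the encoding of the body. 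I would also check that the $\symVar$ declarations copying $\overline{x}_i$ into $x_i$ only rename, so that $\vc{step}(\hhla)$ evaluates to $[\bexpr]\cdot\vc{\procenc{\wwcoma}}(\hinv(n)) \oplus [\neg\bexpr]\cdot \hhla$ at the original state, as the computation in the proof of \Cref{lem:semantics-check-invariant} indicates.

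Third, the two facts say that $(\hinv(n))_{n\in\Nats}$ satisfies the defining inequalities of an $\omega$-subinvariant. \Cref{thm:omega-invariants} then gives $\sup_n \hinv(n) \hleq \wp{\wwcom}(\nosem{\hhla})$, and combining this with the first step completes the proof.

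I expect the main obstacle to be the subinvariant definition's requirement that the chain be $\omega$-ascending, which the $base$/$step$ procedures do not explicitly check. I would first try to avoid needing it: the induction in the proof of \Cref{thm:omega-invariants} only uses $\hinv(0)=\bot$, the step inequality, and monotonicity of $\wpcharhhla$ to obtain $\hinv(n) \hleq \wpcharhhla^n(\bot)$. Since the right-hand sides form an ascending chain with supremum $\wp{\wwcom}(\nosem{\hhla})$, the bound $\hbigor_n \hinv(n) \hleq \hbigor_n \wpcharhhla^n(\bot)$ follows from \Cref{lem:existing-extrema-order-characterization} alone, with no monotonicity of $\hinv$ required. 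I would therefore reprove this short induction inline rather than cite \Cref{thm:omega-invariants} literally. A secondary subtlety is that the theorem says the procedures verify ``for an $n$''; I would read verification of a procedure with $n$ as a free input as holding for all states, and hence for all $n$, which is exactly what the pointwise order in the procedure definition provides.
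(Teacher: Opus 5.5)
Your proposal is correct and follows the paper's proof: \Cref{lem:semantics-check-invariant} turns validity of $base$ and $step$ into $I(0)=\bot$ and $I(n+1)\sqsubseteq\Phi(I(n))$ for all $n$, and then \Cref{thm:omega-invariants} combined with \Cref{lem:semantics-omega-encoding} gives the bound. Your choice to redo the short induction $I(n)\sqsubseteq\Phi^n(\bot)$ inline, instead of citing the $\omega$-rule verbatim, is a sound refinement and not a different route. It avoids the $\omega$-ascending requirement in the definition of $\omega$-subinvariant, which $base$ and $step$ never establish and which the paper's proof asserts without justification.
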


\begin{proof}
    Suppose that the programs $base$ and $step$ are valid, then the sequence $(I(n))_{n \in \Nats}$ is an $\omega$-subinvariant with $I(0) = \bot$ and $I_{n+1} \hleq \wpchar(I_n)$ for all $n \in \Nats$ due to \Cref{lem:semantics-check-invariant}. Then, we apply \Cref{thm:omega-invariants} and obtain $\vc{\omegaEnc{C,I(n)}}(\hla) = \hbigor_{n \in \Nats} I(n) \hleq \wp{C}(\hla)$.
\end{proof}

\endgroup

\section{\titleFullref{Proofs of }{sec:quantifier-elimination}}
\label{sec:appendix-automated-reasoning}

\thmQelimStrip*

\begin{proof}
  Let $\State \in \States$.
  If $\hhla$ is valid, then $\evalStateSubstBy{\hhla}{x}{v}=\top$ for all $v \in \Vals$, and hence
  \[
    \evalState{\linfquant{x}{\hhla}}
    = \inf \Set{\evalStateSubstBy{\hhla}{x}{v} \mid v \in \Vals}
    = \inf \Set{\top \mid v \in \Vals}
    = \top.
  \]
  Thus $\linfquant{x}{\hhla}$ is valid.

  Conversely, assume $\linfquant{x}{\hhla}$ is valid. Then
  \[
    \top
    = \evalState{\linfquant{x}{\hhla}}
    = \inf \Set{\evalStateSubstBy{\hhla}{x}{v} \mid v \in \Vals}
    \hleq \evalStateSubstBy{\hhla}{x}{\State(x)}
    = \evalState{\hhla}.
  \]
  Since $\top$ is greatest, $\evalState{\hhla}=\top$. As $\State$ was arbitrary, $\hhla$ is valid.

  If $\hhla$ is covalid, then $\evalStateSubstBy{\hhla}{x}{v}=\bot$ for all $v \in \Vals$, and hence
  \[
    \evalState{\lsupquant{x}{\hhla}}
    = \sup \Set{\evalStateSubstBy{\hhla}{x}{v} \mid v \in \Vals}
    = \sup \Set{\bot \mid v \in \Vals}
    = \bot.
  \]
  Thus $\lsupquant{x}{\hhla}$ is covalid.

  Conversely, assume $\lsupquant{x}{\hhla}$ is covalid. Then
  \[
    \evalStateSubstBy{\hhla}{x}{\State(x)}
    \hleq \evalState{\lsupquant{x}{\hhla}}
    = \bot.
  \]
  Since $\bot$ is least, $\evalState{\hhla}=\bot$. As $\State$ was arbitrary, $\hhla$ is covalid.
\end{proof}

\subsection{Prenexing Rules}

We first collect the formula-level prenexing equivalences used by the quantifier-elimination procedure.

\subsubsection{Logical Prenexing}

\begin{theorem}[Positive Prenexing]
  \label[theorem]{thm:quantifier-elimination-positive-prenexing}
  Let $\hhla,\hhlb \in \wHeyLo$ and $x,y \in \Vars$ where $x \notin \freevars{\hhlb}$ and $y \notin \freevars{\hhla}$. Then the following equivalences hold:

  \begin{halfboxl}
    \vspace{-\baselineskip}
    \begin{align*}
      \bigsqcap_x \hhla \sqcap \hhlb \midequiv \bigsqcap_x (\hhla \sqcap \hhlb) \\
      \hhla \sqcap \bigsqcap_y \hhlb \midequiv \bigsqcap_y (\hhla \sqcap \hhlb) \\[1em]
      \bigsqcap_x \hhla \sqcup \hhlb \midequiv \bigsqcap_x (\hhla \sqcup \hhlb) \\
      \hhla \sqcup \bigsqcap_y \hhlb \midequiv \bigsqcap_y (\hhla \sqcup \hhlb) \\[1em]
      \hhla \impl \bigsqcap_y \hhlb \midequivL \bigsqcap_y (\hhla \impl \hhlb) \\
      \heylovalidate{\bigsqcap_y \hhlb} \midequiv \bigsqcap_y \heylovalidate{\hhlb}
    \end{align*}
  \end{halfboxl}%
  \begin{halfboxr}
    \vspace{-\baselineskip}
    \begin{align*}
      \bigsqcup_x \hhla \sqcup \hhlb \midequiv \bigsqcup_x (\hhla \sqcup \hhlb) \\
      \hhla \sqcup \bigsqcup_y \hhlb \midequiv \bigsqcup_y (\hhla \sqcup \hhlb) \\[1em]
      \bigsqcup_x \hhla \sqcap \hhlb \midequiv \bigsqcup_x (\hhla \sqcap \hhlb) \\
      \hhla \sqcap \bigsqcup_y \hhlb \midequiv \bigsqcup_y (\hhla \sqcap \hhlb) \\[1em]
      \hhla \coimpl \bigsqcup_y \hhlb \midequivL \bigsqcup_y (\hhla \coimpl \hhlb) \\
      \heylocovalidate{\bigsqcup_y \hhlb} \midequiv \bigsqcup_y \heylocovalidate{\hhlb}
    \end{align*}
  \end{halfboxr}
\end{theorem}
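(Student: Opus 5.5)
We prove every equivalence pointwise: fix $\State \in \States$ and compare $\evalState{\cdot}$ on both sides. By \Cref{thm:wHeyLo_well_defined} all quantified formulas denote countable infima/suprema that exist (\Cref{lem:countable_sup_inf}). The side condition $x \notin \freevars{\hhlb}$ gives $\evalStateSubstBy{\hhlb}{x}{n} = \evalState{\hhlb}$ for all $n \in \Nats$, and symmetrically for $y$ and $\hhla$. Hence each left-hand side becomes an expression of the form $c \star \hbigand_{n} a_n$ (or with the quantifier on the other argument). Each right-hand side becomes $\hbigand_n (c \star a_n)$ with $a_n = \evalStateSubstBy{\hhla}{x}{n}$ and $c$ constant, and dually for suprema. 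Each row of the statement then reduces to an identity about existing extrema in a bi-Heyting algebra. Commutativity of $\sqcap,\sqcup$ lets us treat only one argument position in the lattice rows.

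For the lattice rows we argue as follows. $c \sqcap \hbigand_n a_n = \hbigand_n(c \sqcap a_n)$ and its dual $c \sqcup \hbigor_n a_n = \hbigor_n (c\sqcup a_n)$ hold in any lattice by associativity of extrema, via \Cref{lem:existing-extrema-order-characterization}: both sides are the greatest lower bound of $\{c\}\cup\{a_n\}$, where the case of an empty index set does not arise because $\Nats$ is nonempty. The mixed identities $c \sqcup \hbigand_n a_n = \hbigand_n(c\sqcup a_n)$ and $c \sqcap \hbigor_n a_n = \hbigor_n(c \sqcap a_n)$ are exactly \Cref{lem:lattice-preservation-existing-extrema}, whose hypotheses are satisfied since all countable extrema exist.

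For the implication rows, $c \impl \hbigand_n a_n = \hbigand_n (c \impl a_n)$ is \Cref{lem:impl_preserves_meet}, and $c \coimpl \hbigor_n a_n = \hbigor_n (c \coimpl a_n)$ is \Cref{lem:coimpl_preserves_join}.

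For validation we show $\heylovalidate{\hbigand_n a_n} = \hbigand_n \heylovalidate{a_n}$ by two cases. If every $a_n = \top$, both sides are $\top$. Otherwise some $a_m \neq \top$. Then $\hbigand_n a_n \hleq a_m \ne \top$, so $\hbigand_n a_n \neq \top$ and the left side is $\bot$. On the right side, $\heylovalidate{a_m} = \bot$ forces the infimum to be $\bot$. Conversely, if all $a_n = \top$, the infimum is $\top$. Covalidation is dual: $\heylocovalidate{\hbigor_n a_n}=\bot$ iff all $a_n=\bot$ iff $\hbigor_n\heylocovalidate{a_n}=\bot$, using that $\heylocovalidate{\cdot}$ takes only the values $\bot$ and $\top$.

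The main care point is not algebraic but bookkeeping. We must check that substituting $x \mapsto n$ commutes correctly with the semantics of the quantifier on the right-hand side. In particular, the bound variable of $\bigsqcap_x(\hhla \sqcap \hhlb)$ ranges over $\sigma[x\mapsto n]$, and $\hhlb$ evaluated there equals $\evalState{\hhlb}$ by the freeness assumption. This requires a small substitution/coincidence lemma: the value of a formula depends only on its free variables. We would prove it by routine structural induction over \Cref{fig:wheylo_semantics} and then invoke it uniformly in all rows.
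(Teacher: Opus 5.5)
Your proposal is correct and follows the paper's proof essentially step by step. Both arguments evaluate pointwise and use $x \notin \freevars{\hhlb}$ to treat the other argument as a constant. Both then invoke \Cref{lem:lattice-preservation-existing-extrema}, \Cref{lem:impl_preserves_meet}, and \Cref{lem:coimpl_preserves_join} for the mixed lattice and (co)implication rows, and use a direct case split for (co)validation. Your explicit remarks about nonemptiness of $\Nats$ in the same-type lattice rows and the coincidence lemma for free variables spell out two steps the paper uses only implicitly.
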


\begin{proof}
  Let $\hhla,\hhlb \in \wHeyLo$ and $x,y \in \Vars$ where $x \notin \freevars{\hhlb}$ and $y \notin \freevars{\hhla}$, and $\State \in \States$.
  Recall that $\hhla \equiv \hhlb$ iff $\evalState{\hhla} = \evalState{\hhlb}$ for all $\State \in \States$.
  \begin{proofcases}
    \proofcase{$\bigsqcap_x \hhla \sqcap \hhlb$}
    \begin{align*}
      &\evalState{\bigsqcap_x \hhla \sqcap \hhlb} \\
      &= \inf \Set{ \evalState{\bigsqcap_x \hhla},~ \evalState{\hhlb} } \inlineExplain{definition of $\sqcap$} \\
      &= \inf \Set{ \inf \Set{ \evalStateSubstBy{\hhla}{x}{v} \mid v \in \Vals },~ \evalState{\hhlb} } \inlineExplain{definition of $\bigsqcap_x$} \\
      &= \inf \Set{ \evalStateSubstBy{\hhla}{x}{v},~ \evalState{\hhlb} \mid v \in \Vals } \inlineExplain{merge $\inf$} \\
      &= \inf \Set{ \evalStateSubstBy{\hhla}{x}{v},~ \evalStateSubstBy{\hhlb}{x}{v} \mid v \in \Vals } \inlineExplain{$x \notin \freevars{\hhlb}$} \\
      &= \inf \Set{ \inf \Set{\evalStateSubstBy{\hhla}{x}{v},~ \evalStateSubstBy{\hhlb}{x}{v}} \mid v \in \Vals } \inlineExplain{introduce $\inf$} \\
      &= \inf \Set{ \evalStateSubstBy{\hhla \sqcap \hhlb}{x}{v} \mid v \in \Vals } \inlineExplain{definition of $\sqcap$} \\
      &= \evalState{\bigsqcap_x (\hhla \sqcap \hhlb)} \inlineExplain{definition of $\bigsqcap_x$}
    \end{align*}
    The case $\hhla \sqcap \bigsqcap_y \hhlb \equiv \bigsqcap_y (\hhla \sqcap \hhlb)$ is analogous.

    \proofcase{$\bigsqcap_x \hhla \sqcup \hhlb$}
    \begin{align*}
      &\evalState{\bigsqcap_x \hhla \sqcup \hhlb} \\
      &= \sup \Set{ \evalState{\bigsqcap_x \hhla},~ \evalState{\hhlb} } \inlineExplain{definition of $\sqcup$} \\
      &= \sup \Set{ \inf \Set{ \evalStateSubstBy{\hhla}{x}{v} \mid v \in \Vals },~ \evalState{\hhlb} } \inlineExplain{definition of $\bigsqcap_x$} \\
      &= \inf \Set{ \sup \Set{ \evalStateSubstBy{\hhla}{x}{v},~ \evalState{\hhlb} } \mid v \in \Vals } \inlineExplain{\Cref{lem:lattice-preservation-existing-extrema}} \\
      &= \inf \Set{ \sup \Set{ \evalStateSubstBy{\hhla}{x}{v},~ \evalStateSubstBy{\hhlb}{x}{v} } \mid v \in \Vals } \inlineExplain{$x \notin \freevars{\hhlb}$} \\
      &= \inf \Set{ \evalStateSubstBy{\hhla \sqcup \hhlb}{x}{v} \mid v \in \Vals } \inlineExplain{definition of $\sqcup$} \\
      &= \evalState{\bigsqcap_x (\hhla \sqcup \hhlb)} \inlineExplain{definition of $\bigsqcap_x$}
    \end{align*}
    The case $\hhla \sqcup \bigsqcap_y \hhlb \equiv \bigsqcap_y (\hhla \sqcup \hhlb)$ is analogous.

    \proofcase{$\hhla \impl \bigsqcap_y \hhlb$}
    \begin{align*}
      &\evalState{\hhla \impl \bigsqcap_y \hhlb} \\
      &= \evalState{\hhla} \impl \evalState{\linfquant{y}{\hhlb}} \inlineExplain{definition of $\impl$} \\
      &= \evalState{\hhla} \impl \inf \Set{\evalStateSubstBy{\hhlb}{y}{v} \mid v \in \Vals} \inlineExplain{definition of $\bigsqcap_y$} \\
      &= \inf \Set{\evalState{\hhla} \impl \evalStateSubstBy{\hhlb}{y}{v} \mid v \in \Vals} \inlineExplain{\abCref{lem:impl_preserves_meet}} \\
      &= \inf \Set{\evalStateSubstBy{\hhla}{y}{v} \impl \evalStateSubstBy{\hhlb}{y}{v} \mid v \in \Vals} \inlineExplain{$y \not\in \freevars{\hhla}$} \\
      &= \evalState{\bigsqcap_y (\hhla \impl \hhlb)} \inlineExplain{definitions of $\impl$ and $\bigsqcap_y$}
    \end{align*}

    \proofcase{$\heylovalidate{\bigsqcap_y \hhlb}$}
    \begin{align*}
      &\evalState{\heylovalidate{\bigsqcap_y \hhlb}} \\
      &= \ifThenElse{\evalState{\bigsqcap_y \hhlb} = \top}{\top}{\bot}
      \inlineExplain{definition of $\heylovalidate{-}$} \\
      &= \ifThenElse{
        \evalStateSubstBy{\hhlb}{y}{v} = \top \text{ for all } v \in \Vals
      }{\top}{\bot} \inlineExplain{definition of $\bigsqcap$} \\
      &= \inf \Set{
        \ifThenElse{\evalStateSubstBy{\hhlb}{y}{v} = \top}{\top}{\bot}
        \mid v \in \Vals
      } \inlineExplain{infimum of $\{\top, \bot\}$} \\
      &= \evalState{\bigsqcap_y \heylovalidate{\hhlb}} \inlineExplain{definition of $\heylovalidate{-}$ and $\bigsqcap_y$}.
    \end{align*}
    The covalidation case $\heylocovalidate{\bigsqcup_y \hhlb} \equiv \bigsqcup_y \heylocovalidate{\hhlb}$ is dual.

    \proofcase{$\bigsqcup_x \hhla \sqcup \hhlb$}
    \begin{align*}
      &\evalState{\bigsqcup_x \hhla \sqcup \hhlb} \\
      &= \sup \Set{ \evalState{\bigsqcup_x \hhla},~ \evalState{\hhlb} } \inlineExplain{definition of $\sqcup$} \\
      &= \sup \Set{ \sup \Set{ \evalStateSubstBy{\hhla}{x}{v} \mid v \in \Vals },~ \evalState{\hhlb} } \inlineExplain{definition of $\bigsqcup_x$} \\
      &= \sup \Set{ \evalStateSubstBy{\hhla}{x}{v},~ \evalState{\hhlb} \mid v \in \Vals } \inlineExplain{flatten suprema} \\
      &= \sup \Set{ \evalStateSubstBy{\hhla}{x}{v},~ \evalStateSubstBy{\hhlb}{x}{v} \mid v \in \Vals } \inlineExplain{$x \notin \freevars{\hhlb}$} \\
      &= \sup \Set{ \sup \Set{\evalStateSubstBy{\hhla}{x}{v},~ \evalStateSubstBy{\hhlb}{x}{v}} \mid v \in \Vals } \inlineExplain{introduce $\sup$} \\
      &= \sup \Set{ \evalStateSubstBy{\hhla \sqcup \hhlb}{x}{v} \mid v \in \Vals } \inlineExplain{definition of $\sqcup$} \\
      &= \evalState{\bigsqcup_x (\hhla \sqcup \hhlb)} \inlineExplain{definition of $\bigsqcup_x$}
    \end{align*}
    The case $\hhla \sqcup \bigsqcup_y \hhlb \equiv \bigsqcup_y(\hhla \sqcup \hhlb)$ is analogous.

    \proofcase{$\bigsqcup_x \hhla \sqcap \hhlb$}
    \begin{align*}
      &\evalState{\bigsqcup_x \hhla \sqcap \hhlb} \\
      &= \inf \Set{ \evalState{\bigsqcup_x \hhla},~ \evalState{\hhlb} } \inlineExplain{definition of $\sqcap$} \\
      &= \inf \Set{ \sup \Set{ \evalStateSubstBy{\hhla}{x}{v} \mid v \in \Vals },~ \evalState{\hhlb} } \inlineExplain{definition of $\bigsqcup_x$} \\
      &= \sup \Set{ \inf \Set{ \evalStateSubstBy{\hhla}{x}{v},~ \evalState{\hhlb} } \mid v \in \Vals } \inlineExplain{\Cref{lem:lattice-preservation-existing-extrema}} \\
      &= \sup \Set{ \inf \Set{ \evalStateSubstBy{\hhla}{x}{v},~ \evalStateSubstBy{\hhlb}{x}{v} } \mid v \in \Vals } \inlineExplain{$x \notin \freevars{\hhlb}$} \\
      &= \sup \Set{ \evalStateSubstBy{\hhla \sqcap \hhlb}{x}{v} \mid v \in \Vals } \inlineExplain{definition of $\sqcap$} \\
      &= \evalState{\bigsqcup_x (\hhla \sqcap \hhlb)} \inlineExplain{definition of $\bigsqcup_x$}
    \end{align*}
    The case $\hhla \sqcap \bigsqcup_y \hhlb \equiv \bigsqcup_y (\hhla \sqcap \hhlb)$ is analogous.

    \proofcase{$\hhla \coimpl \bigsqcup_y \hhlb$}
    \begin{align*}
      &\evalState{\hhla \coimpl \bigsqcup_y \hhlb} \\
      &= \evalState{\hhla} \coimpl \evalState{\lsupquant{y}{\hhlb}} \inlineExplain{definition of $\coimpl$} \\
      &= \evalState{\hhla} \coimpl \sup \Set{ \evalStateSubstBy{\hhlb}{y}{v} \mid v \in \Vals } \inlineExplain{definition of $\bigsqcup_y$} \\
      &= \sup \Set{ \evalState{\hhla} \coimpl \evalStateSubstBy{\hhlb}{y}{v} \mid v \in \Vals } \inlineExplain{\Cref{lem:coimpl_preserves_join}} \\
      &= \sup \Set{ \evalStateSubstBy{\hhla}{y}{v} \coimpl \evalStateSubstBy{\hhlb}{y}{v} \mid v \in \Vals } \inlineExplain{$y \notin \freevars{\hhla}$} \\
      &= \evalState{\bigsqcup_y (\hhla \coimpl \hhlb)} \inlineExplain{definitions of $\coimpl$ and $\bigsqcup_y$}
    \end{align*}

  \end{proofcases}
\end{proof}

\begin{theorem}[Negative Prenexing]
  \label[theorem]{thm:quantifier-elimination-negative-prenexing}
  Let $\hhla,\hhlb \in \wHeyLo$ and $x,y \in \Vars$ where $x \notin \freevars{\hhlb}$ and $y \notin \freevars{\hhla}$. Then the following equivalences hold:

  \begin{halfboxl}
    \vspace{-\baselineskip}
    \begin{align*}
      (\bigsqcup_x \hhla) \impl \hhlb \midequiv \bigsqcap_x (\hhla \impl \hhlb) \\[1em]
      \neg(\bigsqcup_x \hhla) \midequiv \bigsqcap_x (\neg\hhla)
    \end{align*}
  \end{halfboxl}%
  \begin{halfboxr}
    \vspace{-\baselineskip}
    \begin{align*}
      (\bigsqcap_x \hhla) \coimpl \hhlb \midequiv \bigsqcup_x (\hhla \coimpl \hhlb) \\[1em]
      \coneg(\bigsqcap_x \hhla) \midequiv \bigsqcup_x (\coneg\hhla)
    \end{align*}
  \end{halfboxr}
\end{theorem}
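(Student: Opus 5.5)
The plan is to prove each equivalence pointwise: fix $\State \in \States$, compute both sides directly from the semantics in \Cref{fig:wheylo_semantics}, and reduce the claim to one of the order-theoretic lemmas already established in the appendix. All extrema involved exist by \Cref{lem:countable_sup_inf}, because every family is indexed by $v \in \Vals$. Hence the side conditions of those lemmas, which ask for the relevant joins and meets to exist, are always satisfied.

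For the first equivalence, unfold to
\[
\evalState{(\bigsqcup_x \hhla) \impl \hhlb} = \Bigl(\hbigor_{v \in \Vals} \evalStateSubstBy{\hhla}{x}{v}\Bigr) \impl \evalState{\hhlb}.
\]
Then apply \Cref{lem:quant-implication-distributes-over-inf} with $a_v := \evalStateSubstBy{\hhla}{x}{v}$ and $b := \evalState{\hhlb}$. This yields $\hbigand_{v} (a_v \impl b)$. Because $x \notin \freevars{\hhlb}$, we may rewrite $b = \evalStateSubstBy{\hhlb}{x}{v}$ for every $v$. Folding back with the definitions of $\impl$ and $\bigsqcap_x$ then gives $\evalState{\bigsqcap_x(\hhla \impl \hhlb)}$. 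The coimplication equivalence is handled the same way, using \Cref{lem:coimpl_turns_meets_into_joins} in place of \Cref{lem:quant-implication-distributes-over-inf}.

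The negation cases follow as instances. Since $\neg\hhla$ is shorthand for $\hhla \impl \bot$, the first equivalence with $\hhlb := \bot$ gives $\neg(\bigsqcup_x \hhla) \equiv \bigsqcap_x(\hhla \impl \bot) = \bigsqcap_x \neg\hhla$, because $\bot$ has no free variables. Similarly, $\coneg\hhla$ is shorthand for $\hhla \coimpl \top$, so the co-negation case follows from the coimplication equivalence with $\hhlb := \top$.

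I do not anticipate a real obstacle. The one point needing attention is the justification that the extrema exist, so that the hypotheses of \Cref{lem:quant-implication-distributes-over-inf} and \Cref{lem:coimpl_turns_meets_into_joins} are met, and this is supplied by \Cref{thm:wHeyLo_well_defined}. The substitution step relying on $x \notin \freevars{\hhlb}$ is routine.
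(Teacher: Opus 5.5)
Your proposal is correct and matches the paper's proof essentially step for step. Both fix a state and unfold the semantics. Both use \Cref{lem:quant-implication-distributes-over-inf} and \Cref{lem:coimpl_turns_meets_into_joins} for the two (co)implication clauses, use $x \notin \freevars{\hhlb}$ to move $\hhlb$ under the quantifier, and obtain the (co-)negation clauses as the instances $\hhlb := \bot$ and $\hhlb := \top$. Your explicit check that all countable extrema exist by \Cref{lem:countable_sup_inf}, so the side conditions of those lemmas hold, is a small justification the paper leaves implicit.
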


\begin{proof}
    Let $\hhla,\hhlb \in \wHeyLo$ and $x,y \in \Vars$ where $x \notin \freevars{\hhlb}$ and $y \notin \freevars{\hhla}$.
  \begin{proofcases}
    \proofcase{$(\bigsqcup_x \hhla) \impl \hhlb$}
    \begin{align*}
      &\evalState{(\lsupquant{x}{\hhla}) \impl \hhlb} \\
      &= \evalState{\lsupquant{x}{\hhla}} \impl \evalState{\hhlb} \inlineExplain{definition of $\impl$} \\
      &= \sup \Set{\evalStateSubstBy{\hhla}{x}{v} \mid v \in \Vals} \impl \evalState{\hhlb} \inlineExplain{definition of $\bigsqcup_x$} \\
      &= \inf \Set{\evalStateSubstBy{\hhla}{x}{v} \impl \evalState{\hhlb} \mid v \in \Vals} \inlineExplain{\Cref{lem:quant-implication-distributes-over-inf}} \\
      &= \inf \Set{\evalStateSubstBy{\hhla}{x}{v} \impl \evalStateSubstBy{\hhlb}{x}{v} \mid v \in \Vals} \inlineExplain{$x \not\in \freevars{\hhlb}$} \\
      &= \evalState{\bigsqcap_x (\hhla \impl \hhlb)}
    \end{align*}

    \proofcase{$(\bigsqcap_x \hhla) \coimpl \hhlb$}
    \begin{align*}
      &\evalState{(\linfquant{x}{\hhla}) \coimpl \hhlb} \\
      &= \evalState{\linfquant{x}{\hhla}} \coimpl \evalState{\hhlb} \inlineExplain{definition of $\coimpl$} \\
      &= \inf \Set{\evalStateSubstBy{\hhla}{x}{v} \mid v \in \Vals} \coimpl \evalState{\hhlb} \inlineExplain{definition of $\bigsqcap_x$} \\
      &= \sup \Set{\evalStateSubstBy{\hhla}{x}{v} \coimpl \evalState{\hhlb} \mid v \in \Vals} \inlineExplain{\Cref{lem:coimpl_turns_meets_into_joins}} \\
      &= \sup \Set{\evalStateSubstBy{\hhla}{x}{v} \coimpl \evalStateSubstBy{\hhlb}{x}{v} \mid v \in \Vals} \inlineExplain{$x \notin \freevars{\hhlb}$} \\
      &= \evalState{\bigsqcup_x (\hhla \coimpl \hhlb)} \inlineExplain{definitions of $\coimpl$ and $\bigsqcup_x$}
    \end{align*}

    \proofcase{$\neg(\bigsqcup_x \hhla)$}
    Using $\neg\hhla \equiv \hhla \impl \bot$, we compute:
    \begin{align*}
      \evalState{\neg\bigsqcup_x \hhla}
      &= \evalState{(\bigsqcup_x \hhla) \impl \bot} \inlineExplain{definition of $\neg$} \\
      &= \evalState{\bigsqcap_x (\hhla \impl \bot)} \inlineExplain{previous proofcase with $\hhlb = \bot$} \\
      &= \evalState{\bigsqcap_x \neg\hhla} \inlineExplain{definition of $\neg$}.
    \end{align*}

    \proofcase{$\coneg(\bigsqcap_x \hhla)$}
    Using $\coneg\hhla \equiv \hhla \coimpl \top$, we compute:
    \begin{align*}
      \evalState{\coneg\bigsqcap_x \hhla}
      &= \evalState{(\bigsqcap_x \hhla) \coimpl \top} \inlineExplain{definition of $\coneg$} \\
      &= \evalState{\bigsqcup_x (\hhla \coimpl \top)} \inlineExplain{previous proofcase with $\hhlb = \top$} \\
      &= \evalState{\bigsqcup_x \coneg\hhla} \inlineExplain{definition of $\coneg$}.
    \end{align*}
  \end{proofcases}
\end{proof}

The remaining prenexing directions require the relevant quantified extremum to be realized by an actual program value.

\begin{definition}[Well-Foundedness]
  \label[definition]{def:max-min-quantifier-domain}
  $\wHeyLo$ is \textit{well-founded} if, for all $\hhla \in \wHeyLo$, $x \in \Vars$, and $\State \in \States$,
  \[
    \exists v^- \in \Vals \,.\,
    \evalStateSubstBy{\hhla}{x}{v^-} = \evalState{\linfquant{x}{\hhla}}.
  \]
  $\wHeyLo$ is \textit{upwards well-founded} if, for all $\hhla \in \wHeyLo$, $x \in \Vars$, and $\State \in \States$,
  \[
    \exists v^+ \in \Vals \,.\,
    \evalStateSubstBy{\hhla}{x}{v^+} = \evalState{\lsupquant{x}{\hhla}}.
  \]
\end{definition}

Finite totally ordered domains, such as the Boolean and clearance semirings, satisfy both clauses. Domains ordered like $\Nats$, such as the counting semiring, satisfy well-foundedness but not upwards well-foundedness. Reversed natural-number orders, such as the tropical semiring over $\NatsX$, satisfy upwards well-foundedness but not well-foundedness. Dense orders such as Viterbi or bottleneck, and Boolean-algebra domains such as formal languages, satisfy neither clause in general.

\begin{lemma}[Well-Founded Quantifiers under Monotone Maps]
  \label[lemma]{lem:well-founded-quantifiers-monotone}
  Let $\hhla \in \wHeyLo$, $x \in \Vars$, and $\State \in \States$.

  If $\wHeyLo$ is well-founded, then for every monotone map
  $F : \hdom \to \hdom$,
  \[
    F(\evalState{\bigsqcap_x \hhla})
    =
    \inf \Set{
      F(\evalStateSubstBy{\hhla}{x}{v})
      \mid v \in \Vals
    },
  \]
  and for every antitone map $F : \hdom \to \hdom$,
  \[
    F(\evalState{\bigsqcap_x \hhla})
    =
    \sup \Set{
      F(\evalStateSubstBy{\hhla}{x}{v})
      \mid v \in \Vals
    }.
  \]

  Dually, if $\wHeyLo$ is upwards well-founded, then for every monotone map
  $F : \hdom \to \hdom$,
  \[
    F(\evalState{\bigsqcup_x \hhla})
    =
    \sup \Set{
      F(\evalStateSubstBy{\hhla}{x}{v})
      \mid v \in \Vals
    },
  \]
  and for every antitone map $F : \hdom \to \hdom$,
  \[
    F(\evalState{\bigsqcup_x \hhla})
    =
    \inf \Set{
      F(\evalStateSubstBy{\hhla}{x}{v})
      \mid v \in \Vals
    }.
  \]
\end{lemma}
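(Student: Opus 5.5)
The plan is to use well-foundedness to turn the quantified extremum into a single attained value, and then get each equality by bounding in both directions. I treat the well-founded case in detail; the upwards well-founded case is order-dual, exchanging $\inf$/$\sup$ and $\bigsqcap_x$/$\bigsqcup_x$. Fix $\hhla$, $x$ and $\State$, and abbreviate $a_v := \evalStateSubstBy{\hhla}{x}{v}$ for $v \in \Vals$ and $m := \evalState{\bigsqcap_x \hhla} = \inf\Set{a_v \mid v \in \Vals}$. The infimum exists by \Cref{thm:wHeyLo_well_defined}. Well-foundedness (\Cref{def:max-min-quantifier-domain}) gives a witness $v^- \in \Vals$ with $a_{v^-} = m$, so the infimum is actually a minimum.

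\textbf{Monotone $F$.} I show that $F(m)$ is the greatest lower bound of $\Set{F(a_v) \mid v \in \Vals}$.
\begin{enumerate}
\item It is a lower bound: since $m \hleq a_v$ for every $v$, monotonicity of $F$ gives $F(m) \hleq F(a_v)$.
\item It is the greatest one: if $c$ is any lower bound, then in particular $c \hleq F(a_{v^-}) = F(m)$.
\end{enumerate}
This argument also shows that the infimum on the right-hand side exists, without needing completeness. I would state this explicitly, because $F$ is an arbitrary monotone map and no continuity assumption is available.

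\textbf{Antitone $F$.} The same argument works with the order reversed. From $m \hleq a_v$ we get $F(a_v) \hleq F(m)$, so $F(m)$ is an upper bound. Any upper bound $c$ satisfies $F(m) = F(a_{v^-}) \hleq c$, so $F(m)$ is the least upper bound.

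\textbf{Upwards well-founded case.} Let $s := \evalState{\bigsqcup_x \hhla}$ and pick $v^+$ with $a_{v^+} = s$. For monotone $F$, $F(s)$ is an upper bound of the $F(a_v)$ and equals $F(a_{v^+})$, hence is the least one. For antitone $F$, $F(s)$ is a lower bound and equals $F(a_{v^+})$, hence is the greatest one.

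The only subtle point is that the quantifier ranges over the substituted states $\State\substBy{x}{v}$ while the witness is a value $v^-$. This matches \Cref{def:max-min-quantifier-domain} exactly, so no renaming argument is needed. There is no real obstacle beyond stating carefully that an extremum attained by a member of the family is automatically the supremum or infimum.
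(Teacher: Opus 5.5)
Your proposal is correct and follows essentially the same route as the paper's proof. Both use the well-foundedness witness to make the extremum attained, observe that $F$ applied to the attained value is a lower (resp.\ upper) bound of the image family by (anti)monotonicity, and conclude because it is itself a member of that family and hence its least (resp.\ greatest) element; the upwards well-founded case is handled dually.
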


\begin{proof}
  We prove the well-founded clauses; the upwards well-founded clauses are dual.

  By well-foundedness, choose $v_0 \in \Vals$ such that
  \[
    \evalStateSubstBy{\hhla}{x}{v_0} =\evalState{\bigsqcap_x \hhla}=
    \inf \Set{
      \evalStateSubstBy{\hhla}{x}{v}
      \mid v \in \Vals
    }.
  \]
  Hence
  \(
    \evalStateSubstBy{\hhla}{x}{v_0} \hleq \evalStateSubstBy{\hhla}{x}{v}
  \)
  for all $v \in \Vals$.
  If $F$ is monotone, then
  \(
    F(\evalStateSubstBy{\hhla}{x}{v_0}) \hleq F(\evalStateSubstBy{\hhla}{x}{v})
  \)
  for all $v \in \Vals$. Thus
  $F(\evalStateSubstBy{\hhla}{x}{v_0})$ is a lower bound of
  \(
    \Set{
      F(\evalStateSubstBy{\hhla}{x}{v})
      \mid v \in \Vals
    }.
  \)
  Since it is also an element of this set, it is its least element and hence
  its infimum. Therefore
  \[
    F(\evalState{\bigsqcap_x \hhla}) =
    \inf \Set{
      F(\evalStateSubstBy{\hhla}{x}{v})
      \mid v \in \Vals
    }.
  \]

  If $F$ is antitone, then
  \(
    F(\evalStateSubstBy{\hhla}{x}{v}) \hleq F(\evalStateSubstBy{\hhla}{x}{v_0})
  \)
  for all $v \in \Vals$. Thus
  $F(\evalStateSubstBy{\hhla}{x}{v_0})$ is an upper bound of
  \(
    \Set{
      F(\evalStateSubstBy{\hhla}{x}{v})
      \mid v \in \Vals
    }.
  \)
  Since it is also an element of this set, it is its greatest element and hence
  its supremum. Therefore
  \[
    F(\evalState{\bigsqcap_x \hhla})
    =
    \sup \Set{
      F(\evalStateSubstBy{\hhla}{x}{v})
      \mid v \in \Vals
    }.
  \]
\end{proof}

\begin{theorem}[Additional Logical Prenexing from Above]
  \label[theorem]{thm:quantifier-elimination-attained-prenexing-above}
  Assume $\wHeyLo$ is upwards well-founded in the sense of \Cref{def:max-min-quantifier-domain}. Let $\hhla,\hhlb \in \wHeyLo$ and $x,y \in \Vars$ where $x \notin \freevars{\hhlb}$ and $y \notin \freevars{\hhla}$. Then the following additional equivalences hold:

  \begin{halfboxl}
    \vspace{-\baselineskip}
    \begin{align*}
      \hhla \impl \bigsqcup_y \hhlb \midequiv \bigsqcup_y (\hhla \impl \hhlb) \\[1em]
      \heylovalidate{\bigsqcup_y \hhlb} \midequiv \bigsqcup_y \heylovalidate{\hhlb}
    \end{align*}
  \end{halfboxl}%
  \begin{halfboxr}
    \vspace{-\baselineskip}
    \begin{align*}
      (\bigsqcup_x \hhla) \coimpl \hhlb \midequiv \bigsqcap_x (\hhla \coimpl \hhlb) \\[1em]
      \coneg(\bigsqcup_x \hhla) \midequiv \bigsqcap_x (\coneg\hhla)
      \end{align*}
  \end{halfboxr}
\end{theorem}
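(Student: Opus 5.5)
The four equivalences all follow one pattern: fix $\State \in \States$, unfold the semantics, and apply \Cref{lem:well-founded-quantifiers-monotone} to a suitable monotone or antitone map $F$ on $\hdom$. The only work is choosing $F$ for each equivalence and checking its monotonicity. The side condition on free variables then lets us move the unquantified operand under the substitution, exactly as in the proofs of \Cref{thm:quantifier-elimination-positive-prenexing,thm:quantifier-elimination-negative-prenexing}.

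For $\hhla \impl \bigsqcup_y \hhlb$, I take $F(z) := \evalState{\hhla} \impl z$. This map is monotone by \Cref{prop:bi-heyting:2}. The upwards well-founded, monotone clause of \Cref{lem:well-founded-quantifiers-monotone} then gives
\[\evalState{\hhla} \impl \evalState{\bigsqcup_y\hhlb} = \sup\Set{\evalState{\hhla}\impl\evalStateSubstBy{\hhlb}{y}{v} \mid v\in\Vals}.\]
Since $y\notin\freevars{\hhla}$, we may replace $\evalState{\hhla}$ by $\evalStateSubstBy{\hhla}{y}{v}$, and the right-hand side becomes $\evalState{\bigsqcup_y(\hhla\impl\hhlb)}$.

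For the validation case, I take $F(z) := \heylovalidate{z}$. This map is monotone: if $z \hleq z'$ and $z=\top$, then $z'=\top$. The same monotone clause applies directly.

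For $(\bigsqcup_x\hhla)\coimpl\hhlb$, I take $F(z) := z \coimpl \evalState{\hhlb}$, which is antitone by \Cref{prop:bi-heyting:3b}. The antitone clause turns the supremum into an infimum, giving $\bigsqcap_x(\hhla\coimpl\hhlb)$ after the $x\notin\freevars{\hhlb}$ step. The co-negation case is the instance $\hhlb=\top$, since $\coneg z = z\coimpl\top$, just as in the proof of \Cref{thm:quantifier-elimination-negative-prenexing}.

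I expect no real obstacle, because the attainment argument is already packaged in \Cref{lem:well-founded-quantifiers-monotone}. The one point to check is that \Cref{prop:bi-heyting:2,prop:bi-heyting:3b} give exactly the monotonicity and antitonicity needed; their proofs were only sketched, so if necessary I would re-derive them from the adjunctions. For \Cref{prop:bi-heyting:3b}: from $b\hleq c$ and $a\hleq c\hor(c\coimpl a)$ we get $a \hleq b\hor(\dots)$ only in the wrong direction, so instead I would argue as follows. Suppose $b\hleq c$. We have $a \hleq b\hor(b\coimpl a)$, and $b \hleq c$ with $c$ in place of $b$ yields $a\hleq c\hor(b\coimpl a)$. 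Hence $c\coimpl a\hleq b\coimpl a$.
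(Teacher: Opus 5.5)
Your proposal is correct and matches the paper's proof: you feed the same maps $F(z)=\evalState{\hhla}\impl z$, $F(z)=\heylovalidate{z}$ and $F(z)=z\coimpl\evalState{\hhlb}$ into \Cref{lem:well-founded-quantifiers-monotone}, apply the free-variable step, and obtain co-negation as the instance $\hhlb=\top$. Your re-derivation of the antitonicity of $(-)\coimpl a$ from the co-Heyting adjunction, via $a\hleq b\hor(b\coimpl a)\hleq c\hor(b\coimpl a)$, is valid, though the false start before it should be dropped.
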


\begin{proof}
  Let $\hhla,\hhlb \in \wHeyLo$ and $x,y \in \Vars$ where $x \notin \freevars{\hhlb}$ and $y \notin \freevars{\hhla}$.
  \begin{proofcases}
    \proofcase{$\hhla \impl \bigsqcup_y \hhlb$}
    \begin{align*}
      &\evalState{\hhla \impl \bigsqcup_y \hhlb} \\
      &= \evalState{\hhla} \impl \evalState{\bigsqcup_y \hhlb} \inlineExplain{definition of $\impl$} \\
      &= \sup \Set{\evalState{\hhla} \impl \evalStateSubstBy{\hhlb}{y}{v} \mid v \in \Vals} \inlineExplain{\Cref{lem:well-founded-quantifiers-monotone} with $F(z)=\evalState{\hhla} \impl z$} \\
      &= \sup \Set{\evalStateSubstBy{\hhla}{y}{v} \impl \evalStateSubstBy{\hhlb}{y}{v} \mid v \in \Vals} \inlineExplain{$y \not\in \freevars{\hhla}$} \\
      &= \evalState{\bigsqcup_y (\hhla \impl \hhlb)} \inlineExplain{definition of $\bigsqcup_y$}.
    \end{align*}
    The validation clause is obtained by the same argument with $F(z)=\heylovalidate{z}$.

    \proofcase{$(\bigsqcup_x \hhla) \coimpl \hhlb$}
    \begin{align*}
      &\evalState{(\bigsqcup_x \hhla) \coimpl \hhlb} \\
      &= \evalState{\bigsqcup_x \hhla} \coimpl \evalState{\hhlb} \inlineExplain{definition of $\coimpl$} \\
      &= \inf \Set{\evalStateSubstBy{\hhla}{x}{v} \coimpl \evalState{\hhlb} \mid v \in \Vals} \inlineExplain{\Cref{lem:well-founded-quantifiers-monotone} with $F(z)=z \coimpl \evalState{\hhlb}$} \\
      &= \inf \Set{\evalStateSubstBy{\hhla}{x}{v} \coimpl \evalStateSubstBy{\hhlb}{x}{v} \mid v \in \Vals} \inlineExplain{$x \not\in \freevars{\hhlb}$} \\
      &= \evalState{\bigsqcap_x (\hhla \coimpl \hhlb)} \inlineExplain{definition of $\bigsqcap_x$}.
    \end{align*}
    The co-negation clause is the special case $\coneg\hhla \equiv \hhla \coimpl \top$.

  \end{proofcases}
\end{proof}

\begin{theorem}[Additional Logical Prenexing from Below]
  \label[theorem]{thm:quantifier-elimination-attained-prenexing-below}
  Assume $\wHeyLo$ is well-founded in the sense of \Cref{def:max-min-quantifier-domain}. Let $\hhla,\hhlb \in \wHeyLo$ and $x,y \in \Vars$ where $x \notin \freevars{\hhlb}$ and $y \notin \freevars{\hhla}$. Then the following additional equivalences hold:

  \begin{halfboxl}
    \vspace{-\baselineskip}
    \begin{align*}
      (\bigsqcap_x \hhla) \impl \hhlb \midequiv \bigsqcup_x (\hhla \impl \hhlb) \\[1em]
      \neg(\bigsqcap_x \hhla) \midequiv \bigsqcup_x (\neg\hhla)
    \end{align*}
  \end{halfboxl}%
  \begin{halfboxr}
    \vspace{-\baselineskip}
    \begin{align*}
      \hhla \coimpl \bigsqcap_y \hhlb \midequiv \bigsqcap_y (\hhla \coimpl \hhlb) \\[1em]
      \heylocovalidate{\bigsqcap_y \hhlb} \midequiv \bigsqcap_y \heylocovalidate{\hhlb}
      \end{align*}
  \end{halfboxr}
\end{theorem}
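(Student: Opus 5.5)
The plan is to mirror the proof of \Cref{thm:quantifier-elimination-attained-prenexing-above}, now using the well-founded clauses of \Cref{lem:well-founded-quantifiers-monotone}. Throughout, fix $\State \in \States$ and show that both sides of each equivalence evaluate to the same element.

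\emph{First equivalence.} Unfold $\evalState{(\bigsqcap_x \hhla) \impl \hhlb} = \evalState{\bigsqcap_x \hhla} \impl \evalState{\hhlb}$. Apply \Cref{lem:well-founded-quantifiers-monotone} with the map $F(z) = z \impl \evalState{\hhlb}$. This map is antitone by \Cref{prop:bi-heyting:3}, so the lemma's antitone clause for infima gives
\[
  F\bigl(\evalState{\textstyle\bigsqcap_x \hhla}\bigr) = \sup \Set{ \evalStateSubstBy{\hhla}{x}{v} \impl \evalState{\hhlb} \mid v \in \Vals }.
\]
Since $x \notin \freevars{\hhlb}$, replace $\evalState{\hhlb}$ by $\evalStateSubstBy{\hhlb}{x}{v}$. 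Folding the definitions of $\impl$ and $\bigsqcup_x$ then yields $\evalState{\bigsqcup_x(\hhla \impl \hhlb)}$.

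\emph{Negation clause.} This is the instance $\hhlb = \bot$, using $\neg \hhla \equiv \hhla \impl \bot$.

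\emph{Coimplication clause.} Unfold $\evalState{\hhla} \coimpl \evalState{\bigsqcap_y \hhlb}$ and take $F(z) = \evalState{\hhla} \coimpl z$. This map is monotone by \Cref{prop:bi-heyting:2b}, so the monotone clause of the lemma turns it into $\inf \Set{ \evalState{\hhla} \coimpl \evalStateSubstBy{\hhlb}{y}{v} \mid v \in \Vals }$. Using $y \notin \freevars{\hhla}$ and folding the definitions gives $\evalState{\bigsqcap_y(\hhla \coimpl \hhlb)}$.

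\emph{Covalidation clause.} Use $F(z) = \heylocovalidate{z}$, whose image lies in $\{\bot, \top\}$. It is monotone: if $z \hleq z'$ and $z' = \bot$, then $z = \bot$, so $\heylocovalidate{z} \hleq \heylocovalidate{z'}$ in every case. The monotone clause of the lemma then gives the claim immediately.

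The only real obstacle is checking the monotonicity and antitonicity side conditions of the four maps. The lemma requires a map on $\hdom$, and the properties of $\impl$ and $\coimpl$ come from \Cref{prop:bi-heyting:properties}. The one subtle case is $\heylocovalidate{\cdot}$, where monotonicity has to be argued by hand as above. Well-definedness of all the suprema and infima involved is ensured by \Cref{thm:wHeyLo_well_defined}, and in fact the lemma shows they are attained.
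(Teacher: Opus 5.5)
Your proposal is correct and follows the paper's proof essentially verbatim. Both instantiate \Cref{lem:well-founded-quantifiers-monotone} with the same maps $F(z)=z\impl\evalState{\hhlb}$, $F(z)=\evalState{\hhla}\coimpl z$ and $F(z)=\heylocovalidate{z}$, use freeness of the quantified variable to move the other operand inside, and obtain the negation clause as the instance $\hhlb=\bot$. The one difference is that you explicitly verify the (anti)monotonicity side conditions, including the hand check for $\heylocovalidate{\cdot}$, which the paper leaves implicit.
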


\begin{proof}
  Let $\hhla,\hhlb \in \wHeyLo$ and $x,y \in \Vars$ where $x \notin \freevars{\hhlb}$ and $y \notin \freevars{\hhla}$.
  \begin{proofcases}
    \proofcase{$(\bigsqcap_x \hhla) \impl \hhlb$}
    \begin{align*}
      &\evalState{(\bigsqcap_x \hhla) \impl \hhlb} \\
      &= \evalState{\bigsqcap_x \hhla} \impl \evalState{\hhlb} \inlineExplain{definition of $\impl$} \\
      &= \sup \Set{\evalStateSubstBy{\hhla}{x}{v} \impl \evalState{\hhlb} \mid v \in \Vals} \inlineExplain{\Cref{lem:well-founded-quantifiers-monotone} with $F(z)=z \impl \evalState{\hhlb}$} \\
      &= \sup \Set{\evalStateSubstBy{\hhla}{x}{v} \impl \evalStateSubstBy{\hhlb}{x}{v} \mid v \in \Vals} \inlineExplain{$x \not\in \freevars{\hhlb}$} \\
      &= \evalState{\bigsqcup_x (\hhla \impl \hhlb)} \inlineExplain{definition of $\bigsqcup_x$}.
    \end{align*}
    The negation clause is the special case $\neg \hhla \equiv \hhla \impl \bot$.

    \proofcase{$\hhla \coimpl \bigsqcap_y \hhlb$}
    \begin{align*}
      &\evalState{\hhla \coimpl \bigsqcap_y \hhlb} \\
      &= \evalState{\hhla} \coimpl \evalState{\bigsqcap_y \hhlb} \inlineExplain{definition of $\coimpl$} \\
      &= \inf \Set{\evalState{\hhla} \coimpl \evalStateSubstBy{\hhlb}{y}{v} \mid v \in \Vals} \inlineExplain{\Cref{lem:well-founded-quantifiers-monotone} with $F(z)=\evalState{\hhla} \coimpl z$} \\
      &= \inf \Set{\evalStateSubstBy{\hhla}{y}{v} \coimpl \evalStateSubstBy{\hhlb}{y}{v} \mid v \in \Vals} \inlineExplain{$y \not\in \freevars{\hhla}$} \\
      &= \evalState{\bigsqcap_y (\hhla \coimpl \hhlb)} \inlineExplain{definition of $\bigsqcap_y$}.
    \end{align*}
    The covalidation clause is obtained by the same argument with $F(z)=\heylocovalidate{z}$.
  \end{proofcases}
\end{proof}

\subsubsection{Weighted Prenexing}

The weighted prenexing laws below are not consequences of the bare monoid-module setup alone, let alone the bi-Heyting algebra structure.
We first isolate the laws that remain valid in the general $\omega$-bicontinuous monoid-module setting, and then state the additional preservation assumptions under which they become formula equivalences.

We first want to identify which prenexing rules hold in the general framework of $\omega$-bicontinuous bi-Heyting algebras.
The crucial observation is that the monoid-module operations $\splus$ and $\stimes$ are inherently tied to the bottom element $\bot$ through definitions like $p \splus \bot = p$ and $a \stimes \bot = \bot$.

\begin{lemma}[Bottom Facts for Weighted Addition]
  \label[lemma]{lem:weighted-addition-bottom-facts}
  Let $\moduledef$ be a monoid-module over a monoid $\Monoid$.
  For all $p,q \in \module$,
  \begin{equation}
    p \splus q = \bot \qiff p=\bot \text{ and } q=\bot.
    \label{eq:binary-splus-bottom-fact}
  \end{equation}
  Moreover, for every family $(r_i)_{i\in I}$ whose supremum exists,
  \begin{equation}
    \hbigor_{i\in I} r_i = \bot \qiff \forall i\in I.\ r_i=\bot.
    \label{eq:supremum-bottom-fact}
  \end{equation}
  Finally, for every nonempty family $(s_i)_{i\in I}$ and every $c \in \module$ for which the displayed extrema exist,
  \begin{align}
    \left(\hbigor_{i\in I}s_i\right)\splus c=\bot
    &\qiff
    \hbigor_{i\in I}(s_i\splus c)=\bot,
    \label{eq:supremum-addition-bottom-fact}
    \\
    \left(\hbigand_{i\in I}s_i\right)\splus c=\bot
    &\qiff
    \hbigand_{i\in I}(s_i\splus c)=\bot.
    \label{eq:infimum-addition-bottom-fact}
  \end{align}
\end{lemma}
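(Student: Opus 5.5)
The plan is to reduce everything to two elementary facts about the natural order: the additive unit $\snull$ is its least element, and $p \sleq p \splus q$ and $q \sleq p \splus q$ always hold. Here $\bot$ is the least element of the natural order, which the statement implicitly assumes is a partial order with a least element, as in the $\omega$-continuous setting.

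\emph{Least element.} For every $x$ we have $\snull \splus x = x$, so $\snull \sleq x$. Hence $\snull$ is least, and antisymmetry gives $\bot = \snull$.

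\emph{The binary fact \eqref{eq:binary-splus-bottom-fact}.} Assume $p \splus q = \bot$. From $p \sleq p \splus q$ we get $p \sleq \bot$, so $p = \bot$ by antisymmetry, and symmetrically $q = \bot$. Conversely, $\bot \splus \bot = \snull \splus \snull = \snull = \bot$.

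\emph{The supremum fact \eqref{eq:supremum-bottom-fact}.} Assume $\hbigor_{i} r_i = \bot$. Each $r_i \sleq \hbigor_i r_i = \bot$, so each $r_i = \bot$. Conversely, if every $r_i = \bot$ (or $I$ is empty), then $\bot$ is an upper bound of the family and is below every upper bound. It is therefore the supremum, as \Cref{lem:existing-extrema-order-characterization} also shows.

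\emph{Equivalence \eqref{eq:supremum-addition-bottom-fact}.} I would chain the two previous facts:
\[
\Bigl(\hbigor_i s_i\Bigr)\splus c=\bot
\iff \hbigor_i s_i=\bot \text{ and } c=\bot
\iff (\forall i.\ s_i=\bot) \text{ and } c=\bot .
\]
Since $I$ is nonempty, the last condition is equivalent to $\forall i.\ s_i\splus c=\bot$, again by \eqref{eq:binary-splus-bottom-fact}. By \eqref{eq:supremum-bottom-fact} that is equivalent to $\hbigor_i(s_i\splus c)=\bot$. Nonemptiness is exactly what lets us recover $c=\bot$ from the right-hand side.

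\emph{Equivalence \eqref{eq:infimum-addition-bottom-fact}.} This is the only step that is not a pure chain, because infima do not interact with $\splus$ as directly.

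For the forward direction, \eqref{eq:binary-splus-bottom-fact} gives $c=\bot$ and $\hbigand_i s_i=\bot$. Then $s_i\splus c=s_i\splus\snull=s_i$ for every $i$, so $\hbigand_i(s_i\splus c)=\hbigand_i s_i=\bot$.

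For the backward direction, I would first extract $c=\bot$. Since $c \sleq s_i \splus c$ for every $i$, $c$ is a lower bound of the family $(s_i \splus c)_i$, so $c \sleq \hbigand_i(s_i\splus c) = \bot$ and hence $c = \bot$. With $c=\bot$, again $s_i \splus c = s_i$, so $\hbigand_i s_i = \bot$, and then $(\hbigand_i s_i)\splus c = \bot\splus\bot = \bot$.

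The main (mild) obstacle is this extraction of $c=\bot$ through the infimum. It is resolved by using that $c$ is a lower bound of all the sums, so no distributivity of $\splus$ over meets is needed.
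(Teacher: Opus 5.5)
Your proposal is correct and follows the paper's proof essentially step for step. Both use $p,q \sleq p \splus q$ with $\bot = \snull$ for the binary fact, the upper-bound characterization for the supremum fact, and the same chain of equivalences for \eqref{eq:supremum-addition-bottom-fact}. For the infimum case, both extract $c=\bot$ via $c \sleq \hbigand_i(s_i \splus c)$. Your explicit argument that $\snull$ is the least element, and your remark on where nonemptiness of $I$ is needed, make explicit two points the paper leaves implicit.
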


\begin{proof}
  For the \Cref{eq:binary-splus-bottom-fact},
  \begin{align*}
    p\splus q=\bot
    &\qimplies p=\bot \text{ and } q=\bot
      \inlineExplain{$p,q\hleq p\splus q=\bot$} \\
    &\qimplies p\splus q=\bot
      \inlineExplain{$\bot=\snull$ is neutral for $\splus$}.
  \end{align*}
  The converse follows, cause $\bot=\snull$ is neutral for $\splus$.
  For \Cref{eq:supremum-bottom-fact},
  \begin{align*}
    \hbigor_{i\in I}r_i=\bot
    &\qimplies \forall i\in I.\ r_i\hleq\bot
      \inlineExplain{$r_i\hleq\hbigor_{i\in I}r_i$ and \Cref{lem:existing-extrema-order-characterization}} \\
    &\qimplies \forall i\in I.\ r_i=\bot,
  \end{align*}
  and the converse follows because $\bot$ is the smallest upper bound if all $r_i$ are $\bot$.

  The \Cref{eq:supremum-addition-bottom-fact} equivalence now follows immediately:
  \begin{align*}
    \left(\hbigor_{i\in I}s_i\right)\splus c=\bot
    &\qiff \hbigor_{i\in I}s_i=\bot \text{ and } c=\bot
      \inlineExplain{\Cref{eq:binary-splus-bottom-fact}} \\
    &\qiff \forall i\in I.\ s_i=\bot \text{ and } c=\bot
      \inlineExplain{\Cref{eq:supremum-bottom-fact}} \\
    &\qiff \forall i\in I.\ s_i\splus c=\bot
      \inlineExplain{\Cref{eq:binary-splus-bottom-fact}} \\
    &\qiff \hbigor_{i\in I}(s_i\splus c)=\bot
      \inlineExplain{\Cref{eq:supremum-bottom-fact}}.
  \end{align*}

  For infima, both sides are equivalent to $\hbigand_{i\in I}s_i=\bot$ and $c=\bot$:
  \begin{alignat*}{3}
    1. \quad & \left(\hbigand_{i \in I} s_i\right) \splus c = \bot
    && \qiff \hbigand_{i \in I} s_i = \bot \text{ and } c = \bot
    && \inlineExplain{\Cref{eq:binary-splus-bottom-fact}}, \\
    2. \quad & \hbigand_{i \in I} (s_i \splus c) = \bot
    && \qimplies c \hleq \hbigand_{i \in I} (s_i \splus c)
    && \inlineExplain{$c \hleq s_i \splus c$ for all $i$} \\
    & && \qimplies c = \bot \qimplies \hbigand_{i \in I} s_i = \hbigand_{i \in I} (s_i \splus c) = \bot
    && \inlineExplain{$\bot = \snull$}.
  \end{alignat*}
  Conversely, $\hbigand_{i\in I}s_i=\bot$ and $c=\bot$ implies
  $\hbigand_{i\in I}(s_i\splus c)=\hbigand_{i\in I}s_i=\bot$.
\end{proof}

For the remainder, we (still) assume the standing \wHeyLo setting of an $\omega$-bicontinuous monoid-module over a monoid whose natural-order lattice is a bi-Heyting algebra. The following scalar lemma and prenexing laws hold in this general setting, without any additional assumptions on the monoid-module structure.

\begin{lemma}[Bottom Facts for Scalar Multiplication over Suprema]
  \label[lemma]{lem:scalar-multiplication-supremum-bottom}
  Let $a\in\monoid$.
  For every family $(p_v)_{v\in\Vals}$ in $\module$,
  \[
    a\stimes\hbigor_{v\in\Vals}p_v=\bot
    \qiff
    \hbigor_{v\in\Vals}(a\stimes p_v)=\bot.
  \]
\end{lemma}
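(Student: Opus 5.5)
The plan is to prove both directions separately, working with the countable supremum from \Cref{lem:countable_sup_inf}, which exists because the index set $\Vals=\Nats$ is countable. Throughout I use that the least element $\bot$ of the natural order coincides with $\snull$, since $\snull\splus x=x$ gives $\snull\sleq x$ for all $x$. I also use that $a\stimes\snull=\snull$ by the module axioms.

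For ($\Rightarrow$), assume $a\stimes\hbigor_{v}p_v=\bot$. For each $v$ we have $p_v\sleq\hbigor_{v}p_v$. Monotonicity of scalar multiplication (\Cref{lem:scalar-multiplication-monotone}) then gives $a\stimes p_v\sleq a\stimes\hbigor_v p_v=\bot$, so $a\stimes p_v=\bot$ for every $v$. By \Cref{eq:supremum-bottom-fact} of \Cref{lem:weighted-addition-bottom-facts}, the supremum $\hbigor_v(a\stimes p_v)$ is $\bot$.

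For ($\Leftarrow$), assume $\hbigor_v(a\stimes p_v)=\bot$. By \Cref{eq:supremum-bottom-fact}, every $a\stimes p_v=\bot$. I will not try to commute $a\stimes(\cdot)$ past an arbitrary supremum. Instead I pass to the $\omega$-chain $b_n:=p_0\hor\cdots\hor p_n$. The proof of \Cref{lem:countable_sup_inf} shows $\hbigor_v p_v=\hbigor_n b_n$.

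The key step, and the main obstacle, is showing $a\stimes b_n=\bot$, because $a\stimes(\cdot)$ need not preserve binary joins. To get around this, I bound the join by the sum: by \Cref{lem:join-below-weighted-addition} and induction, $b_n\sleq p_0\splus\cdots\splus p_n$. Monotonicity and distributivity of $\stimes$ over $\splus$ then give
\[
  a\stimes b_n \sleq (a\stimes p_0)\splus\cdots\splus(a\stimes p_n)=\bot\splus\cdots\splus\bot=\bot .
\]
Hence $a\stimes b_n=\bot$ for all $n$.

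Finally, I apply $\omega$-continuity of scalar multiplication to the ascending chain $(b_n)_n$. Its supremum in the natural order is the lattice supremum, since the lattice is built on $\sleq$. This yields
\[
  a\stimes\hbigor_v p_v=a\stimes\hbigor_n b_n=\hbigor_n(a\stimes b_n)=\hbigor_n\bot=\bot,
\]
which completes the equivalence.
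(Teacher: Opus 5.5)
Your proof is correct and follows essentially the same route as the paper. The forward direction uses monotonicity of $\stimes$ together with \Cref{eq:supremum-bottom-fact}. The backward direction uses the finite-join chain, bounds joins by sums via \Cref{lem:join-below-weighted-addition} so that distributivity of $\stimes$ over $\splus$ applies, and concludes with $\omega$-continuity. The only cosmetic difference is that you bound the whole join $b_n$ by $p_0\splus\cdots\splus p_n$ before scaling, whereas the paper proves $a\stimes u_n=\bot$ by induction one join at a time.
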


\begin{proof}
  \begin{align*}
    a\stimes\hbigor_{v\in\Vals}p_v=\bot
    &\qimplies \forall v\in\Vals.\ a\stimes p_v\hleq a\stimes\hbigor_{v\in\Vals}p_v = \bot
      \inlineExplain{$p_v\hleq\hbigor_{v\in\Vals}p_v$ and \Cref{lem:scalar-multiplication-monotone}} \\
    &\qimplies \forall v\in\Vals.\ a\stimes p_v=\bot
      \inlineExplain{def. of $\bot$} \\
    &\qimplies \hbigor_{v\in\Vals}(a\stimes p_v)=\bot
      \inlineExplain{\Cref{eq:supremum-bottom-fact}}.
  \end{align*}

  Conversely, assume $\hbigor_{v\in\Vals}(a\stimes p_v)=\bot$.
  Then $a\stimes p_v=\bot$ for all $v\in\Vals$ by \Cref{eq:supremum-bottom-fact}.
  Enumerate $\Vals$ as $(v_n)_{n\in\Nats}$ and define
  \[
    u_0 \definedAs p_{v_0},
    \qquad
    u_{n+1}\definedAs u_n\hor p_{v_{n+1}}.
  \]
  Then $\hbigor_{n\in\Nats}u_n=\hbigor_{v\in\Vals}p_v$ by the construction in \Cref{lem:countable_sup_inf}, and $a\stimes u_n=\bot$ for all $n\in\Nats$ by induction:
  \begin{alignat*}{2}
    \text{Base:} \quad & a\stimes u_0
    &&= a\stimes p_{v_0}
     = \bot, \\
    \text{Step:} \quad & a\stimes u_{n+1}
    &&= a\stimes(u_n\hor p_{v_{n+1}}) \\
    &&&\hleq a\stimes(u_n\splus p_{v_{n+1}})
      \inlineExplain{\Cref{lem:join-below-weighted-addition} and \Cref{lem:scalar-multiplication-monotone}} \\
    &&&= (a\stimes u_n)\splus(a\stimes p_{v_{n+1}})
      \inlineExplain{distributivity of $\stimes$ over $\splus$} \\
    &&&= \bot\splus\bot
      \inlineExplain{induction hypothesis and assumption} \\
    &&&= \bot.
  \end{alignat*}
  Hence
  \begin{align*}
    a\stimes\hbigor_{v\in\Vals}p_v
    &= a\stimes\hbigor_{n\in\Nats}u_n
      \inlineExplain{join construction} \\
    &= \hbigor_{n\in\Nats}(a\stimes u_n)
      \inlineExplain{$\omega$-continuity of $\stimes$} \\
    &= \bot
      \inlineExplain{\Cref{eq:supremum-bottom-fact}}.
  \end{align*}
\end{proof}

\begin{theorem}[General Weighted Prenexing for Covalidity]
  \label[theorem]{thm:general-weighted-prenexing-covalidity}
  Let $a \in \monoid$ be a constant scalar, and let $\hhla,\hhlb \in \wHeyLo$ with $x \notin \freevars{\hhlb}$ and $y \notin \freevars{\hhla}$.
  Then the following covalidity equivalences hold:
  \begin{align*}
    \isCovalid{(\lsupquant{x}{\hhla}) \splus \hhlb}
    &\qiff
    \isCovalid{\lsupquant{x}{(\hhla \splus \hhlb)}},
    &
    \isCovalid{\hhla \splus (\lsupquant{y}{\hhlb})}
    &\qiff
    \isCovalid{\lsupquant{y}{(\hhla \splus \hhlb)}},
    \\
    \isCovalid{(\linfquant{x}{\hhla}) \splus \hhlb}
    &\qiff
    \isCovalid{\linfquant{x}{(\hhla \splus \hhlb)}},
    &
    \isCovalid{\hhla \splus (\linfquant{y}{\hhlb})}
    &\qiff
    \isCovalid{\linfquant{y}{(\hhla \splus \hhlb)}},
    \\
    \isCovalid{a \stimes (\lsupquant{x}{\hhla})}
    &\qiff
    \isCovalid{\lsupquant{x}{(a \stimes \hhla)}}.
  \end{align*}
\end{theorem}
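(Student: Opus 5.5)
The plan is to reduce each covalidity statement to a pointwise statement about $\bot$ and then apply the bottom facts in \Cref{lem:weighted-addition-bottom-facts} and \Cref{lem:scalar-multiplication-supremum-bottom}. By definition, a formula is covalid iff it evaluates to $\bot$ in every state. So it suffices to fix an arbitrary $\State \in \States$ and show that the left-hand side evaluates to $\bot$ in $\State$ iff the right-hand side does. Here we unfold the quantifiers by their semantics. The needed extrema over $\Vals$ exist by \Cref{thm:wHeyLo_well_defined}.

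For the first addition law, set $s_v := \evalStateSubstBy{\hhla}{x}{v}$ and $c := \evalState{\hhlb}$. Since $x \notin \freevars{\hhlb}$, we have $\evalStateSubstBy{\hhlb}{x}{v} = c$ for every $v$. Hence the right-hand side evaluates to $\hbigor_{v}(s_v \splus c)$, and the left-hand side evaluates to $(\hbigor_v s_v)\splus c$. Their equivalence at $\bot$ is exactly \Cref{eq:supremum-addition-bottom-fact}, with the nonempty index family $\Vals$. The infimum law for $x$ is the same argument with \Cref{eq:infimum-addition-bottom-fact}. The mirrored laws, with the quantifier on the right operand $\hhlb$ bound by $y$, follow from commutativity of $\splus$ together with the same two equations, using $y \notin \freevars{\hhla}$.

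For the scalar law, set $p_v := \evalStateSubstBy{\hhla}{x}{v}$. The only subtlety is that $a$ is a constant scalar, so it does not depend on $x$, and $\evalStateSubstBy{a \stimes \hhla}{x}{v} = a \stimes p_v$. The left-hand side evaluates to $a \stimes \hbigor_v p_v$ and the right-hand side to $\hbigor_v (a \stimes p_v)$. \Cref{lem:scalar-multiplication-supremum-bottom} gives precisely that one is $\bot$ iff the other is. If $a$ were a monoid expression mentioning $x$, this step would fail, which is why the statement restricts to constants.

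Since $\State$ was arbitrary, each pointwise equivalence lifts to the stated covalidity equivalence. The substantive work already sits in the two cited lemmas. In particular, the converse direction of the scalar lemma needs the chain construction from \Cref{lem:countable_sup_inf} and $\omega$-continuity of $\stimes$. The only step that requires care here is therefore the bookkeeping of free variables and the substitution behaviour of the semantics, so that the quantified family on each side is literally the same family.
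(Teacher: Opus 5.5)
Your proposal is correct and matches the paper's proof essentially step for step. Like the paper, you fix an arbitrary state, unfold the quantifier semantics using the freeness side conditions, and reduce each addition law to \Cref{eq:supremum-addition-bottom-fact} or \Cref{eq:infimum-addition-bottom-fact} and the scalar law to \Cref{lem:scalar-multiplication-supremum-bottom}, then lift the pointwise equivalences to covalidity; your explicit mentions of nonemptiness of the index family and of commutativity of $\splus$ for the mirrored laws simply spell out what the paper uses implicitly.
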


\begin{proof}
  For an arbitrary state $\State \in \States$: %
  \begin{proofcases}
    \proofcase{Addition over supremum}
    \begin{align*}
      \evalState{(\lsupquant{x}{\hhla}) \splus \hhlb}=\bot
      &\qiff \left(\hbigor_{v\in\Vals}\evalStateSubstBy{\hhla}{x}{v}\right)\splus \evalState{\hhlb}=\bot
        \inlineExplain{definition of $\lsupquant{x}{}$} \\
      &\qiff \hbigor_{v\in\Vals}(\evalStateSubstBy{\hhla}{x}{v}\splus \evalState{\hhlb})=\bot
        \inlineExplain{\Cref{lem:weighted-addition-bottom-facts}} \\
      &\qiff \hbigor_{v\in\Vals}(\evalStateSubstBy{\hhla}{x}{v}\splus \evalStateSubstBy{\hhlb}{x}{v})=\bot
        \inlineExplain{$x\notin\freevars{\hhlb}$} \\
      &\qiff \evalState{\lsupquant{x}{(\hhla \splus \hhlb)}}=\bot
        \inlineExplain{definition of $\lsupquant{x}{}$}.
    \end{align*}
    \begin{align*}
      \evalState{\hhla \splus (\lsupquant{y}{\hhlb})}=\bot
      &\qiff \evalState{\hhla}\splus\left(\hbigor_{v\in\Vals}\evalStateSubstBy{\hhlb}{y}{v}\right)=\bot
        \inlineExplain{definition of $\lsupquant{y}{}$} \\
      &\qiff \hbigor_{v\in\Vals}(\evalState{\hhla}\splus \evalStateSubstBy{\hhlb}{y}{v})=\bot
        \inlineExplain{\Cref{lem:weighted-addition-bottom-facts}} \\
      &\qiff \hbigor_{v\in\Vals}(\evalStateSubstBy{\hhla}{y}{v}\splus \evalStateSubstBy{\hhlb}{y}{v})=\bot
        \inlineExplain{$y\notin\freevars{\hhla}$} \\
      &\qiff \evalState{\lsupquant{y}{(\hhla \splus \hhlb)}}=\bot
        \inlineExplain{definition of $\lsupquant{y}{}$}.
    \end{align*}

    \proofcase{Addition over infimum}
    \begin{align*}
      \evalState{(\linfquant{x}{\hhla}) \splus \hhlb}=\bot
      &\qiff \left(\hbigand_{v\in\Vals}\evalStateSubstBy{\hhla}{x}{v}\right)\splus \evalState{\hhlb}=\bot
        \inlineExplain{definition of $\linfquant{x}{}$} \\
      &\qiff \hbigand_{v\in\Vals}(\evalStateSubstBy{\hhla}{x}{v}\splus \evalState{\hhlb})=\bot
        \inlineExplain{\Cref{lem:weighted-addition-bottom-facts}} \\
      &\qiff \hbigand_{v\in\Vals}(\evalStateSubstBy{\hhla}{x}{v}\splus \evalStateSubstBy{\hhlb}{x}{v})=\bot
        \inlineExplain{$x\notin\freevars{\hhlb}$} \\
      &\qiff \evalState{\linfquant{x}{(\hhla \splus \hhlb)}}=\bot
        \inlineExplain{definition of $\linfquant{x}{}$}.
    \end{align*}
    \begin{align*}
      \evalState{\hhla \splus (\linfquant{y}{\hhlb})}=\bot
      &\qiff \evalState{\hhla}\splus\left(\hbigand_{v\in\Vals}\evalStateSubstBy{\hhlb}{y}{v}\right)=\bot
        \inlineExplain{definition of $\linfquant{y}{}$} \\
      &\qiff \hbigand_{v\in\Vals}(\evalState{\hhla}\splus \evalStateSubstBy{\hhlb}{y}{v})=\bot
        \inlineExplain{\Cref{lem:weighted-addition-bottom-facts}} \\
      &\qiff \hbigand_{v\in\Vals}(\evalStateSubstBy{\hhla}{y}{v}\splus \evalStateSubstBy{\hhlb}{y}{v})=\bot
        \inlineExplain{$y\notin\freevars{\hhla}$} \\
      &\qiff \evalState{\linfquant{y}{(\hhla \splus \hhlb)}}=\bot
        \inlineExplain{definition of $\linfquant{y}{}$}.
    \end{align*}

    \proofcase{Scalar multiplication over supremum}
    \begin{align*}
      \evalState{a\stimes(\lsupquant{x}{\hhla})}=\bot
      &\qiff a\stimes\hbigor_{v\in\Vals}\evalStateSubstBy{\hhla}{x}{v}=\bot
        \inlineExplain{definition of $\lsupquant{x}{}$} \\
      &\qiff \hbigor_{v\in\Vals}(a\stimes \evalStateSubstBy{\hhla}{x}{v})=\bot
        \inlineExplain{\Cref{lem:scalar-multiplication-supremum-bottom}} \\
      &\qiff \evalState{\lsupquant{x}{(a\stimes\hhla)}}=\bot
        \inlineExplain{definitions of $\stimes$ and $\lsupquant{x}{}$}.
    \end{align*}
  \end{proofcases}
\end{proof}

The analogous scalar rule for infimum quantifiers is not included here: in the general monoid-module setting, $a\stimes\bigsqcap_x\hhla$ does not need to preserve covalidity in both directions without additional assumptions.

The preceding theorem gives judgment-preservation facts rather than formula equivalences.
For the goal-directed algorithm, these general covalidity-preservation facts would in principle be enough for the weighted rules in the upper branch.
However, the lower branch also moves weighted addition across infimum quantifiers and there we need validity preservation.
To obtain a soundness statement for both branches, and to understand when the stronger formula equivalences are available, we now demand that the monoid-module fulfills additional assumptions.

Conceptually, the approach is based on the idea that similarly to \Cref{lem:scalar-multiplication-supremum-bottom,lem:countable_sup_inf} we would like to turn the family we quantify over into a countable chain to utilize the $\omega$-bicontinuity assumptions. The current construction lets us turn a countable family $(p_v)_{v\in\Vals}$ into an ascending or descending chain $u_n$ with the same supremum or infimum, e.g. $\hbigor_n u_n = \hbigor_{v\in\Vals} p_v$.
However, by $\omega$-continuity of scalar multiplication we get $a\stimes\hbigor_{v\in\Vals} p_v=a\stimes\hbigor_n u_n=\hbigor_n(a\stimes u_n)$, but the prenexed expression would be $\hbigor_{v\in\Vals}(a\stimes p_v)$. These are not the same unless scalar multiplication also preserves the finite joins used to build $u_n$, i.e. unless we know $a\stimes(p\hor q) = (a\stimes p)\hor(a\stimes q)$.

\begin{lemma}[Countable Extrema Preservation for Weighted Maps]
  \label[lemma]{lem:weighted-map-countable-extrema-preservation}
  Let $\moduledef$ be an $\omega$-bicontinuous monoid-module over a monoid $\Monoid$ whose natural-order lattice is a bi-Heyting algebra, and let $(p_v)_{v\in\Vals}$ be a family in $\module$.
  Then:
  \begin{enumerate}[label=(\roman*)]
    \item If a fixed $c\in\module$ satisfies
    \[
      (p\hor q)\splus c = (p\splus c)\hor(q\splus c)
      \qquad
      \text{for all } p,q\in\module,
    \]
    then
    \[
      \left(\hbigor_{v\in\Vals}p_v\right)\splus c
      =
      \hbigor_{v\in\Vals}(p_v\splus c).
    \]
    \item If a fixed $c\in\module$ satisfies
    \[
      (p\hand q)\splus c = (p\splus c)\hand(q\splus c)
      \qquad
      \text{for all } p,q\in\module,
    \]
    then
    \[
      \left(\hbigand_{v\in\Vals}p_v\right)\splus c
      =
      \hbigand_{v\in\Vals}(p_v\splus c).
    \]
    \item If a fixed $a\in\monoid$ satisfies
    \[
      a\stimes(p\hor q) = (a\stimes p)\hor(a\stimes q)
      \qquad
      \text{for all } p,q\in\module,
    \]
    then
    \[
      a\stimes\hbigor_{v\in\Vals}p_v
      =
      \hbigor_{v\in\Vals}(a\stimes p_v).
    \]
    \item If a fixed $a\in\monoid$ satisfies
    \[
      a\stimes(p\hand q) = (a\stimes p)\hand(a\stimes q)
      \qquad
      \text{for all } p,q\in\module,
    \]
    then
    \[
      a\stimes\hbigand_{v\in\Vals}p_v
      =
      \hbigand_{v\in\Vals}(a\stimes p_v).
    \]
  \end{enumerate}
\end{lemma}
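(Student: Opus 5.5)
The four items share one template, so I will prove (iii) in detail and then describe the modifications needed for the others. For (iii), fix $a$ with $a\stimes(p\hor q)=(a\stimes p)\hor(a\stimes q)$.

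\textbf{Step 1: reduce to a chain.} Enumerate $\Vals=\Nats$ as $(v_n)_{n\in\Nats}$. Following the construction in the proof of \Cref{lem:countable_sup_inf}, define the ascending chain
\[
u_0\definedAs p_{v_0},\qquad u_{n+1}\definedAs u_n\hor p_{v_{n+1}}.
\]
That proof gives $\hbigor_n u_n=\hbigor_{v}p_v$.

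\textbf{Step 2: push $a$ through the finite joins.} By induction on $n$, using the binary join-preservation hypothesis,
\[
a\stimes u_n=\hbigor_{k\le n}(a\stimes p_{v_k}).
\]
The right-hand sides form an ascending chain, and they are exactly the chain that \Cref{lem:countable_sup_inf} builds from the family $(a\stimes p_v)_v$. Hence
\[
\hbigor_n (a\stimes u_n)=\hbigor_v (a\stimes p_v),
\]
and this supremum exists by \Cref{lem:countable_sup_inf}.

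\textbf{Step 3: use $\omega$-continuity.} $\omega$-continuity of $\stimes$ in its module argument along ascending chains gives $a\stimes\hbigor_n u_n=\hbigor_n(a\stimes u_n)$. Combining this with Steps 1 and 2 yields the claim.

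\textbf{Item (i).} The argument is identical with $(\cdot)\splus c$ in place of $a\stimes(\cdot)$. Here we use $\omega$-continuity of addition, $y\splus\hbigor x_i=\hbigor(y\splus x_i)$, after commuting $\splus$.

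\textbf{Items (ii) and (iv).} These are the order duals. Build the descending chain $d_0\definedAs p_{v_0}$, $d_{n+1}\definedAs d_n\hand p_{v_{n+1}}$, whose infimum equals $\hbigand_v p_v$ by the infimum half of \Cref{lem:countable_sup_inf}. Push the map through the finite meets using the binary meet-preservation hypothesis. Finally, use $\omega$-cocontinuity of $\splus$ and $\stimes$, which is part of the standing $\omega$-bicontinuity assumption, i.e., continuity with respect to $\sgeq$, so that infima of descending chains are preserved.

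\textbf{Main obstacle.} The delicate point is Step 2's identification of the two chains' limits. One must check that $\hbigor_n\hbigor_{k\le n} q_{v_k}=\hbigor_v q_v$ for the new family $q_v=a\stimes p_v$ (or $p_v\splus c$). This is precisely the content of \Cref{lem:countable_sup_inf} applied to $(q_v)$, so I would invoke that lemma rather than reprove it. A second point to verify is that the paper's cocontinuity really states preservation of infima of descending chains in the form needed for (ii) and (iv), rather than only the supremum version. I would read it as the dual statement under $\sgeq$, as the paper's definition of $\omega$-bicontinuity suggests.
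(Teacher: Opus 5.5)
Your proposal is correct and matches the paper's proof essentially step for step. The paper likewise enumerates $\Nats$ and builds the ascending chain $u_n$ (or the descending chain $d_n$) from the proof of the countable-extrema lemma. It then shows by induction, using the binary preservation hypothesis, that the chain generated by the image family coincides with $u_n \oplus c$ (or $a \otimes d_n$), and concludes with $\omega$-continuity (or $\omega$-cocontinuity). Your second concern is settled as you expected: the paper takes $\omega$-cocontinuity to mean continuity with respect to the reversed natural order, i.e.\ preservation of infima of descending chains, and that is exactly what it uses for (ii) and (iv).
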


\begin{proof}
  We prove the first and fourth case; the second and third is identical, replacing ascending chains by descending chains or addition by scalar multiplication.
  For (i), enumerate $\Vals$ as $(v_n)_{n\in\Nats}$ and define
  \[
    u_0\definedAs p_{v_0},
    \qquad
    u_{n+1}\definedAs u_n\hor p_{v_{n+1}}.
  \]
  Then $(u_n)_{n\in\Nats}$ is ascending and
  $\hbigor_{n\in\Nats}u_n=\hbigor_{v\in\Vals}p_v$ according to the proof of \Cref{lem:countable_sup_inf}.
  To see that $(u_n\splus c)_{n\in\Nats}$ is the corresponding chain for $(p_v\splus c)_{v\in\Vals}$, define
  \[
    w_0\definedAs p_{v_0}\splus c,
    \qquad
    w_{n+1}\definedAs w_n\hor(p_{v_{n+1}}\splus c).
  \]
  We show by induction that $w_n=u_n\splus c$ for all $n$.
  The base case is immediate.
  For the induction step, using the binary join-preservation assumption,
  \[
    w_{n+1}
    =
    (u_n\splus c)\hor(p_{v_{n+1}}\splus c)
    =
    (u_n\hor p_{v_{n+1}})\splus c
    =
    u_{n+1}\splus c.
  \]
  Therefore, again by the construction in \Cref{lem:countable_sup_inf},
  \[
    \hbigor_{n\in\Nats}(u_n\splus c)
    =
    \hbigor_{n\in\Nats}w_n
    =
    \hbigor_{v\in\Vals}(p_v\splus c).
  \]
  Hence
  \[
    \left(\hbigor_{v\in\Vals}p_v\right)\splus c
    =
    \left(\hbigor_{n\in\Nats}u_n\right)\splus c
    =
    \hbigor_{n\in\Nats}(u_n\splus c)
    =
    \hbigor_{v\in\Vals}(p_v\splus c),
  \]
  where the middle equality uses $\omega$-continuity of weighted addition.

  For (iv), enumerate $\Vals$ as $(v_n)_{n\in\Nats}$ and define
  \[
    d_0\definedAs p_{v_0},
    \qquad
    d_{n+1}\definedAs d_n\hand p_{v_{n+1}}.
  \]
  Then $(d_n)_{n\in\Nats}$ is descending and
  $\hbigand_{n\in\Nats}d_n=\hbigand_{v\in\Vals}p_v$ according to the proof of \Cref{lem:countable_sup_inf}.
  To identify $(a\stimes d_n)_{n\in\Nats}$ with the chain generated by $(a\stimes p_v)_{v\in\Vals}$, define
  \[
    e_0\definedAs a\stimes p_{v_0},
    \qquad
    e_{n+1}\definedAs e_n\hand(a\stimes p_{v_{n+1}}).
  \]
  Again $e_n=a\stimes d_n$ follows by induction: the base case is immediate, and
  \[
    e_{n+1}
    =
    (a\stimes d_n)\hand(a\stimes p_{v_{n+1}})
    =
    a\stimes(d_n\hand p_{v_{n+1}})
    =
    a\stimes d_{n+1}
  \]
  by the binary meet-preservation assumption.
  Thus
  \[
    \hbigand_{n\in\Nats}(a\stimes d_n)
    =
    \hbigand_{v\in\Vals}(a\stimes p_v).
  \]
  Thus
  \[
    a\stimes\hbigand_{v\in\Vals}p_v
    =
    a\stimes\hbigand_{n\in\Nats}d_n
    =
    \hbigand_{n\in\Nats}(a\stimes d_n)
    =
    \hbigand_{v\in\Vals}(a\stimes p_v),
  \]
  where the middle equality uses $\omega$-cocontinuity of scalar multiplication.
\end{proof}

The next theorem brings these preservation properties into the form of the prenexing laws used by the prenexing algorithm.
\begin{theorem}[Weighted Prenexing]
  \label[theorem]{thm:arithmetic-prenexing}
  Assume that for every $c\in\module$ and every $a\in\monoid$,
  \begin{align}
    (p\hor q)\splus c &= (p\splus c)\hor(q\splus c),
    &
    (p\hand q)\splus c &= (p\splus c)\hand(q\splus c),
    \label{eq:weighted-addition-binary-extrema-preservation}
    \\
    a\stimes(p\hor q) &= (a\stimes p)\hor(a\stimes q)
    \label{eq:scalar-binary-join-preservation}
  \end{align}
  for all $p,q\in\module$.
  Let $a \in \monoid$ be a constant scalar, and let $\hhla,\hhlb \in \wHeyLo$ with $x \notin \freevars{\hhlb}$ and $y \notin \freevars{\hhla}$.
  Then:
  \begin{align*}
    (\lsupquant{x}{\hhla}) \splus \hhlb
    &\equiv
    \lsupquant{x}{(\hhla \splus \hhlb)},
    &
    \hhla \splus (\lsupquant{y}{\hhlb})
    &\equiv
    \lsupquant{y}{(\hhla \splus \hhlb)},
    \\
    (\linfquant{x}{\hhla}) \splus \hhlb
    &\equiv
    \linfquant{x}{(\hhla \splus \hhlb)},
    &
    \hhla \splus (\linfquant{y}{\hhlb})
    &\equiv
    \linfquant{y}{(\hhla \splus \hhlb)},
    \\
    a \stimes (\lsupquant{x}{\hhla})
    &\equiv
    \lsupquant{x}{(a \stimes \hhla)}.
  \end{align*}
  If, additionally, this scalar $a$ preserves binary meets, i.e.
  \[
    a\stimes(p\hand q)=(a\stimes p)\hand(a\stimes q)
    \qquad
    \text{for all } p,q\in\module,
  \]
  then also
  \[
    a\stimes(\linfquant{x}{\hhla})
    \equiv
    \linfquant{x}{(a\stimes\hhla)}.
  \]
\end{theorem}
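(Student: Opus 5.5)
The plan is to reduce every claimed equivalence to a pointwise identity between countable extrema, and then apply \Cref{lem:weighted-map-countable-extrema-preservation}. Fix an arbitrary state $\State \in \States$. By the semantics in \Cref{fig:wheylo_semantics}, $\evalState{\lsupquant{x}{\hhla}} = \hbigor_{v\in\Vals}\evalStateSubstBy{\hhla}{x}{v}$. This supremum exists by \Cref{thm:wHeyLo_well_defined} and \Cref{lem:countable_sup_inf}, and the same holds for infima. The side condition $x\notin\freevars{\hhlb}$ gives $\evalState{\hhlb}=\evalStateSubstBy{\hhlb}{x}{v}$ for every $v$. So for the first law it suffices to show
\[
  \left(\hbigor_{v\in\Vals}p_v\right)\splus c = \hbigor_{v\in\Vals}(p_v\splus c),
\]
where $p_v := \evalStateSubstBy{\hhla}{x}{v}$ and $c := \evalState{\hhlb}$. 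This is exactly item (i) of \Cref{lem:weighted-map-countable-extrema-preservation}, instantiated with the hypothesis \Cref{eq:weighted-addition-binary-extrema-preservation} for this $c$. Rewriting back through the semantics of $\splus$ and $\bigsqcup_x$ then yields $\evalState{\lsupquant{x}{(\hhla\splus\hhlb)}}$.

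The remaining laws follow the same pattern. The infimum law for addition uses item (ii) together with the meet half of \Cref{eq:weighted-addition-binary-extrema-preservation}. The scalar law over suprema uses item (iii) with \Cref{eq:scalar-binary-join-preservation}, where $a$ is a constant scalar independent of $\State$. The conditional scalar law over infima uses item (iv) with the additional meet-preservation assumption on $a$.

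For the right-hand-position laws, such as $\hhla\splus(\lsupquant{y}{\hhlb})$, I first apply commutativity of $\splus$ to reduce them to the left-position form with $c := \evalState{\hhla}$. The condition $y\notin\freevars{\hhla}$ supplies the needed substitution invariance. Since $\State$ is arbitrary, each pointwise equality lifts to a formula equivalence $\equiv$.

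The only delicate step is making sure \Cref{lem:weighted-map-countable-extrema-preservation} applies as stated. That lemma enumerates $\Vals=\Nats$ and relies on $\omega$-continuity of $\splus$ and $\stimes$ along the ascending chain $u_n$ (respectively $\omega$-cocontinuity along the descending chain $d_n$). I expect the main obstacle to be the descending case for addition in item (ii). There one needs $\omega$-cocontinuity of $\splus$, which is supplied by the standing $\omega$-bicontinuity assumption, so I would explicitly invoke cocontinuity at that point.

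The remaining bookkeeping is to note that $\splus$ is commutative on the module. This lets the binary preservation hypotheses, stated only with $c$ on the right, also cover the $\hhla\splus(\cdot)$ forms.
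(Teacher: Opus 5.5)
Your proposal is correct and matches the paper's proof. Like the paper, you fix a state, unfold the quantifier semantics, use the free-variable side condition to make the unquantified operand constant in $v$, and discharge each law with the matching item (i)--(iv) of \Cref{lem:weighted-map-countable-extrema-preservation}, obtaining the right-position laws by symmetry via commutativity of $\splus$ (and your observation that item (ii) relies on $\omega$-cocontinuity of $\splus$ is exactly what that lemma uses).
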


\begin{proof}
  We show equality of denotations in an arbitrary state $\State\in\States$.
  For the first addition-over-supremum clause:
  \begin{align*}
    \evalState{(\lsupquant{x}{\hhla})\splus\hhlb}
    &=
    \left(\hbigor_{v\in\Vals}\evalStateSubstBy{\hhla}{x}{v}\right)\splus \evalState{\hhlb} \\
    &=
    \hbigor_{v\in\Vals}(\evalStateSubstBy{\hhla}{x}{v}\splus \evalState{\hhlb})
    \inlineExplain{\Cref{lem:weighted-map-countable-extrema-preservation}} \\
    &=
    \hbigor_{v\in\Vals}(\evalStateSubstBy{\hhla}{x}{v}\splus\evalStateSubstBy{\hhlb}{x}{v})
      \inlineExplain{$x\notin\freevars{\hhlb}$} \\
    &=
    \evalState{\lsupquant{x}{(\hhla\splus\hhlb)}}.
  \end{align*}
  The right-hand addition-over-supremum clause is symmetric.

  The two addition-over-infimum clauses are proved in the same way, using the meet-preservation part of \Cref{lem:weighted-map-countable-extrema-preservation}.

  For scalar multiplication over supremum:
  \begin{align*}
    \evalState{a\stimes(\lsupquant{x}{\hhla})}
    &=
    a\stimes\hbigor_{v\in\Vals}\evalStateSubstBy{\hhla}{x}{v} \\
    &=
    \hbigor_{v\in\Vals}(a\stimes \evalStateSubstBy{\hhla}{x}{v})
      \inlineExplain{\Cref{lem:weighted-map-countable-extrema-preservation}} \\
    &=
    \evalState{\lsupquant{x}{(a\stimes\hhla)}}.
  \end{align*}
  The scalar-over-infimum clause follows analogously from the additional binary meet-preservation assumption for $a$.
\end{proof}

We now connect the preservation assumptions in \Cref{thm:arithmetic-prenexing} to the main classes of domains used in this paper, as summarized in \Cref{fig:modules}.
The point is that totality, Booleanity of the natural-order lattice, and idempotence of weighted addition are distinct assumptions: Booleanity gives the logical connectives, whereas idempotence identifies weighted addition with lattice join.

\begin{corollary}[Totally Ordered Weighted Prenexing]
  \label[corollary]{cor:totally-ordered-weighted-prenexing}
  If the natural order of the monoid-module is total, then all weighted prenexing equivalences from \Cref{thm:arithmetic-prenexing}, including scalar multiplication over infimum quantifiers, hold.
  This covers the totally ordered weighted domains considered in \Cref{fig:modules}.
\end{corollary}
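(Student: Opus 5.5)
The plan is to reduce the corollary to \Cref{thm:arithmetic-prenexing}. It suffices to verify its hypotheses \eqref{eq:weighted-addition-binary-extrema-preservation} and \eqref{eq:scalar-binary-join-preservation}, plus binary meet preservation for every scalar $a\in\monoid$, under the single assumption that the natural order $\sleq$ is total. In a total order, binary joins and meets are just maxima and minima. So for any $p,q\in\module$ we may assume without loss of generality $p\sleq q$, which gives $p\hor q=q$ and $p\hand q=p$. Each preservation law then becomes a statement that a unary map is monotone.

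The key steps run in this order.

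\begin{enumerate}
\item For a fixed $c\in\module$, consider the map $t\mapsto t\splus c$. It is monotone by \Cref{lem:weighted-addition-monotone}. With $p\sleq q$ we get $p\splus c\sleq q\splus c$, and by totality
\[
(p\splus c)\hor(q\splus c)=q\splus c=(p\hor q)\splus c.
\]
Symmetrically,
\[
(p\splus c)\hand(q\splus c)=p\splus c=(p\hand q)\splus c.
\]
This establishes \eqref{eq:weighted-addition-binary-extrema-preservation}.
\item For a fixed $a\in\monoid$, the map $t\mapsto a\stimes t$ is monotone by \Cref{lem:scalar-multiplication-monotone}. The same case distinction gives both $a\stimes(p\hor q)=(a\stimes p)\hor(a\stimes q)$ and $a\stimes(p\hand q)=(a\stimes p)\hand(a\stimes q)$. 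This yields \eqref{eq:scalar-binary-join-preservation} together with the additional meet-preservation hypothesis of \Cref{thm:arithmetic-prenexing}.
\item Invoke \Cref{thm:arithmetic-prenexing} to obtain all the listed equivalences, including the clause $a\stimes(\linfquant{x}{\hhla})\equiv\linfquant{x}{(a\stimes\hhla)}$.
\item For coverage, observe from \Cref{fig:modules} that $\modtrop$, $\modnats$, $\modprob$, $\modclear$ and $\modbool$ have total natural orders ($\geq$ or $\leq$ on a chain). The corollary therefore applies to them.
\end{enumerate}

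The one point to watch is the general fact used in steps 1 and 2: a monotone map $f$ on a chain satisfies $f(\max(p,q))=\max(f(p),f(q))$. This needs only that $f(p)$ and $f(q)$ are comparable in the right direction, which monotonicity provides; totality is used solely to reduce to $p\sleq q$. I do not expect a real obstacle, since \Cref{lem:weighted-map-countable-extrema-preservation} already handles the passage from binary to countable extrema inside \Cref{thm:arithmetic-prenexing}. The only care needed is that the assumptions are stated uniformly for all $c$ and $a$, which monotonicity provides.
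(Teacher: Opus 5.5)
Your proposal is correct and matches the paper's proof: use totality to assume $p \sleq q$, apply monotonicity of $t \mapsto t \splus c$ and $t \mapsto a \stimes t$ (\Cref{lem:weighted-addition-monotone,lem:scalar-multiplication-monotone}) to get preservation of binary joins and meets (including the extra scalar meet condition), and then invoke \Cref{thm:arithmetic-prenexing}. Including $\modbool$ in your coverage list is also consistent with the paper, whose footnote notes that $\modbool$ is both totally ordered and a Boolean algebra.
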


\begin{proof}
  Fix $c\in\module$, $a\in\monoid$, and $p,q\in\module$.
  Since the natural order is total, assume without loss of generality that $p\sleq q$.
  Then $p\hor q=q$ and $p\hand q=p$.
  By \Cref{lem:weighted-addition-monotone,lem:scalar-multiplication-monotone}, weighted addition and scalar multiplication are monotone, so
  $p\splus c\sleq q\splus c$ and $a\stimes p\sleq a\stimes q$.
  Hence
  \[
    (p\hor q)\splus c=(p\splus c)\hor(q\splus c),
    \qquad
    (p\hand q)\splus c=(p\splus c)\hand(q\splus c),
  \]
  and likewise
  \[
    a\stimes(p\hor q)=(a\stimes p)\hor(a\stimes q),
    \qquad
    a\stimes(p\hand q)=(a\stimes p)\hand(a\stimes q).
  \]
  Thus the addition assumptions, the scalar join assumption, and the additional scalar meet-preservation assumption of \Cref{thm:arithmetic-prenexing} all hold.
  Hence \Cref{thm:arithmetic-prenexing} applies.
\end{proof}

\begin{lemma}[Idempotent Addition Is Lattice Join]
  \label[lemma]{lem:idempotent-addition-is-join}
  Let $\moduledef$ be a monoid-module whose natural order is a partial order and admits binary joins.
  Then
  \[
    \forall p\in\module.\ p\splus p=p
    \qquad\qiff\qquad
    \forall p,q\in\module.\ p\splus q=p\hor q.
  \]
\end{lemma}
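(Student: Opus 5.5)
For the direction from left to right, assume $p\splus p=p$ for all $p\in\module$, fix $p,q\in\module$, and show $p\splus q=p\hor q$ by antisymmetry of the natural order. One inequality, $p\hor q\sleq p\splus q$, is exactly \Cref{lem:join-below-weighted-addition} and needs no idempotence. For the other, I would show that $p\hor q$ is an upper bound of $p\splus q$. Write $j\definedAs p\hor q$. Since $p\sleq j$ and $q\sleq j$, \Cref{lem:weighted-addition-monotone} applied twice (with commutativity of $\splus$) gives
\[
p\splus q\sleq j\splus q\sleq j\splus j = j,
\]
where the last step is idempotence. Transitivity of the partial order then yields $p\splus q\sleq p\hor q$, and antisymmetry gives equality.

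For the converse, assume $p\splus q=p\hor q$ for all $p,q$. Instantiating $q\definedAs p$ gives $p\splus p=p\hor p=p$, since binary joins in a poset are idempotent: $p$ is an upper bound of $\{p\}$ and is below every upper bound.

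The only point needing care is the chain $p\splus q\sleq j\splus q\sleq j\splus j$. It uses monotonicity of $\splus$ in each argument separately, which \Cref{lem:weighted-addition-monotone} provides only in the left argument; commutativity of $\splus$ transfers it to the right argument. I would also note that antisymmetry is used here, which is why the statement assumes that the natural order is a partial order rather than merely a preorder. Beyond this, the argument is routine.
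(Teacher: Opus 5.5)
Your proof is correct, and it takes a slightly different route from the paper's. Your converse direction matches the paper's. For the forward direction, the paper does not split into two inequalities or invoke \Cref{lem:join-below-weighted-addition}. Instead it shows directly that $p\splus q$ is the \emph{least} upper bound. It takes an arbitrary upper bound $r$, unfolds witnesses $p\splus m=r$ and $q\splus n=r$, and uses idempotence to obtain the exact absorption identities $p\splus r=r=q\splus r$, hence $(p\splus q)\splus r=r$ and $p\splus q\sleq r$.

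You compare only against the given join $j=p\hor q$. You use \Cref{lem:weighted-addition-monotone} twice (with commutativity) together with $j\splus j=j$, and close with antisymmetry. Your version is shorter and reuses existing lemmas. The paper's version is self-contained and shows slightly more: idempotence alone makes $p\splus q$ a join, without presupposing that binary joins exist. Your chain $p\splus q\sleq r\splus q\sleq r\splus r=r$ would give the same result with $j$ replaced by an arbitrary upper bound $r$. Both arguments ultimately rely on antisymmetry; the paper uses it implicitly when identifying the least upper bound $p\splus q$ with $p\hor q$.
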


\begin{proof}
  The $\Leftarrow$ direction is immediate because $p\hor p=p$.
  For the $\Rightarrow$ direction, assume that $\splus$ is idempotent.
  The element $p\splus q$ is an upper bound of $p$ and $q$ by the definition of the natural order.
  Let $r$ be any upper bound of $p$ and $q$.
  Then $p\sleq r$ and $q\sleq r$, so choose $m,n\in\module$ with $p\splus m=r$ and $q\splus n=r$.
  By idempotence,
  \[
    p\splus r=p\splus p\splus m=r,
    \qquad
    q\splus r=q\splus q\splus n=r.
  \]
  Hence $(p\splus q)\splus r=p\splus(q\splus r)=p\splus r=r$, and therefore $p\splus q\sleq r$.
  Thus $p\splus q$ is the least upper bound of $p$ and $q$.
\end{proof}

\begin{corollary}[Idempotent Weighted Prenexing]
  \label[corollary]{cor:idempotent-weighted-prenexing}
  If weighted addition is idempotent, then the addition rules and the scalar-over-supremum rule from \Cref{thm:arithmetic-prenexing} hold.
  This applies in particular when $\splus$ is disjunction or union, as in the Boolean semiring and the formal-language module.
  If, additionally, each scalar map $p\mapsto a\stimes p$ preserves binary meets, then scalar multiplication also prenexes over infimum quantifiers.
\end{corollary}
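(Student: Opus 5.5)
The plan is to reduce the corollary to \Cref{thm:arithmetic-prenexing} by checking its preservation hypotheses \eqref{eq:weighted-addition-binary-extrema-preservation} and \eqref{eq:scalar-binary-join-preservation}, using \Cref{lem:idempotent-addition-is-join} to rewrite $\splus$ as $\hor$. Since the natural order is a partial order with binary joins in our standing setting, idempotence of $\splus$ gives $p\splus q = p\hor q$ for all $p,q\in\module$. From then on every addition law becomes a statement about the lattice $(\module,\sleq)$, which is a bi-Heyting algebra and hence distributive.

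For the addition hypotheses, fix $c\in\module$. The join clause $(p\hor q)\splus c = (p\splus c)\hor(q\splus c)$ becomes $(p\hor q)\hor c = (p\hor c)\hor(q\hor c)$, which follows from associativity, commutativity and idempotence of $\hor$. The meet clause becomes $(p\hand q)\hor c = (p\hor c)\hand(q\hor c)$. This is distributivity of join over meet, available because Heyting algebras are distributive (the lemma citing \cite{StoneSpaces}). Alternatively, one can invoke \Cref{lem:lattice-preservation-existing-extrema} with a two-element family.

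For the scalar join clause, fix $a\in\monoid$. Distributivity of $\stimes$ over $\splus$ from \Cref{def:monoid-module} gives $a\stimes(p\splus q) = (a\stimes p)\splus(a\stimes q)$. Rewriting both additions as joins yields $a\stimes(p\hor q) = (a\stimes p)\hor(a\stimes q)$, which is exactly \eqref{eq:scalar-binary-join-preservation}. With these hypotheses in place, \Cref{thm:arithmetic-prenexing} delivers the four addition prenexing laws and the scalar-over-supremum law. Under the additional assumption that each $p\mapsto a\stimes p$ preserves binary meets, the final clause of \Cref{thm:arithmetic-prenexing} gives scalar-over-infimum.

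For the named instances, $\splus=\lor$ on $\modbool$ and $\splus=\cup$ on $\modlang$ are visibly idempotent, so the general argument applies to both. I expect no real obstacle. The only point needing care is the meet clause for addition, since it is the one step that relies on distributivity of the lattice rather than on the monoid-module axioms. I would justify it explicitly via the bi-Heyting structure rather than assume it silently.
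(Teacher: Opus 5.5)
Your proposal is correct and follows essentially the same route as the paper: rewrite $\splus$ as $\hor$ via \Cref{lem:idempotent-addition-is-join}, obtain the join and meet preservation hypotheses for $p\mapsto p\splus c$ from the lattice laws and distributivity of the bi-Heyting lattice, derive scalar join preservation from distributivity of $\stimes$ over $\splus$, and conclude with \Cref{thm:arithmetic-prenexing}. Your version is slightly more careful than the paper's one-line justification. You note that the join clause also needs commutativity and idempotence of $\hor$, not only associativity. You also check that the standing assumptions (partial order, binary joins) are what license the lemma.
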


\begin{proof}
  By \Cref{lem:idempotent-addition-is-join}, weighted addition coincides with the lattice join.
  Thus $p\mapsto p\splus c$ is the map $p\mapsto p\hor c$, which preserves binary joins by associativity of $\hor$ and binary meets by distributivity of the bi-Heyting lattice:
  \[
    (p\hand q)\hor c = (p\hor c)\hand(q\hor c).
  \]
  Moreover,
  \[
    a\stimes(p\hor q)
    =
    a\stimes(p\splus q)
    =
    (a\stimes p)\splus(a\stimes q)
    =
    (a\stimes p)\hor(a\stimes q),
  \]
  so every scalar map preserves binary joins.
  The claims follow from \Cref{thm:arithmetic-prenexing}.
\end{proof}

\paragraph{A Note on Boolean natural-order lattices.}
A Boolean natural-order lattice gives the Boolean implication and coimplication, but it does not by itself demand the required behavior of $\splus$ or $\stimes$.
The concrete Boolean examples in \Cref{fig:modules} satisfy the stronger idempotent-addition assumption above: in the Boolean semiring addition is disjunction, and in the formal-language module addition is union.
In these examples scalar meet preservation also holds, because Boolean multiplication and fixed-word prefixing preserve intersections.

\begin{example}[Counterexample: Scalar Infimum Prenexing]
  \label{example:counterexample-arithmetic-prenexing}
  The previous formal-language counterexample used a language such as $\{c,cc\}$ as a scalar, but in the formal-language monoid-module $\modlang$ scalars are words $g \in \Gamma^*$, not languages.
  Thus that construction was not a counterexample for our monoid-module framework.

  Consider the expectation monoid-module $\modprob$ with its componentwise natural order.
  Let the scalar be $a=(1,1)$, and define
  \[
    \hhla \definedAs (\embed{x<1}\hand (0,100)) \hor (\embed{\neg(x<1)}\hand (1,0)).
  \]
  Then $\interpretsimpleStateSubstBy{\hhla}{x}{0}=(0,100)$ and $\interpretsimpleStateSubstBy{\hhla}{x}{n}=(1,0)$ for all $n\ge 1$, hence
  \[
    a\otimes\interpretsimpleState{\bigsqcap_x\hhla}
    = (1,1)\otimes(0,0)
    = (0,0),
  \]
  \begin{align*}
    \interpretsimpleState{\bigsqcap_x(a\otimes\hhla)}
    &= ((1,1)\otimes(0,100)) \sqcap ((1,1)\otimes(1,0)) \\
    &= (0,100) \hand (1, 1) \\
    &= (0,1).
  \end{align*}

  Therefore $a\stimes\bigsqcap_x\hhla \not\equiv \bigsqcap_x(a\stimes\hhla)$.
  Thus scalar multiplication does not prenex over infimum quantifiers in the general monoid-module setting.
\end{example}

\begin{lemma}[Conditional Weighted Prenexing]
  \label[lemma]{lem:conditional-arithmetic-prenexing}
  Let $a \in \monoid$ and $\hhla \in \wHeyLo$.
  If $\hhla$ is boolean-valued (i.e., $\Ima \interpretsimple{\hhla} \subseteq \{\bot, \top\}$), then scalar multiplication distributes over the infimum:
  \[
    a \stimes \left(\bigsqcap_x \hhla\right) \equiv \bigsqcap_x (a \stimes \hhla).
  \]
\end{lemma}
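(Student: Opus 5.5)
The plan is to prove the equivalence pointwise, by a case split on whether the infimum over $x$ is attained at $\bot$ or is $\top$. The point is that a boolean-valued family over $\Vals$ always has an \emph{attained} infimum. So the counterexample in \Cref{example:counterexample-arithmetic-prenexing}, which relies on a non-attained infimum, cannot arise, and we need neither the meet-preservation assumption of \Cref{thm:arithmetic-prenexing} nor any chain construction as in \Cref{lem:weighted-map-countable-extrema-preservation}.

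Fix $\State \in \States$ and write $p_v \definedAs \evalStateSubstBy{\hhla}{x}{v}$ for $v \in \Vals$. By assumption, each $p_v \in \{\bot,\top\}$. By the semantics of \wHeyLo (\Cref{fig:wheylo_semantics}), it then suffices to show
\[
  a \stimes \hbigand_{v\in\Vals} p_v = \hbigand_{v\in\Vals} (a \stimes p_v).
\]
Both infima exist by \Cref{thm:wHeyLo_well_defined}.

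First, record a preliminary fact: the least element $\bot$ of the natural order coincides with the module zero $\snull$. Indeed, $\snull \splus z = z$ gives $\snull \sleq z$ for every $z$. Together with the module axiom $v \stimes \snull = \snull$, this yields $a \stimes \bot = \bot$.

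\textbf{Case 1:} $p_v = \top$ for all $v$. Then the left-hand side is $a \stimes \top$. The right-hand side is the infimum of the constant family $a \stimes \top$, which is again $a \stimes \top$.

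\textbf{Case 2:} there is some $v_0$ with $p_{v_0} = \bot$. Then $\hbigand_v p_v = \bot$, so the left-hand side is $a \stimes \bot = \bot$. On the right-hand side, the family contains the element $a \stimes p_{v_0} = \bot$. Hence its infimum lies below $\bot$ and therefore equals $\bot$.

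Since $\State$ was arbitrary, the two formulas have equal denotations, and so they are equivalent.

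The only step needing any care is the identification $\bot = \snull$, which is what lets us apply annihilation. I would state it explicitly, deriving it directly from the definition of the natural order. Everything else is a direct two-case computation, with no continuity argument required.
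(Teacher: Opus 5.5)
Your proposal is correct and follows the paper's own proof essentially verbatim. Both argue pointwise with a two-case split: if some value of the boolean-valued family is $\bot$, the infimum is $\bot$ on both sides by annihilation; if all values are $\top$, both sides are the constant $a \stimes \top$. The only difference is that you make the identification $\bot = \snull$ (via the natural order) explicit, whereas the paper uses it tacitly when it invokes the annihilation axiom.
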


\begin{proof}
  Let $\State \in \States$. Let $S = \{ \interpretsimpleStateSubstBy{\hhla}{x}{v} \mid v \in \Vals \}$ be the set of values over which the infimum is taken. Since $\hhla$ is boolean-valued, $S \subseteq \{\bot, \top\}$.
  We distinguish two cases based on the value of the quantifier:

  \begin{proofcases}
    \proofcase{1: $\interpretsimpleState{\bigsqcap_x \hhla} = \bot$}
    By the definition of infimum, this implies that $\bot \in S$, because $\Ima \interpretsimple{\hhla} \subseteq \{\bot, \top\}$.
    \begin{itemize}
      \item LHS: $a \stimes \bot = \bot$ by the annihilation axiom of monoid-modules.
      \item RHS: Since $\bot \in S$, the set of distributed values $\{ a \stimes s \mid s \in S \}$ contains the element $a \stimes \bot$. Since $a \stimes \bot = \bot$ and $\bot$ is the least element, the infimum of this set is $\bot$.
    \end{itemize}
    Thus, LHS = RHS = $\bot$.

    \proofcase{2: $\interpretsimpleState{\bigsqcap_x \hhla} = \top$}
    This implies that for all $v$, $\interpretsimpleStateSubstBy{\hhla}{x}{v} = \top$. Thus, $S = \{\top\}$.
    \begin{itemize}
      \item LHS: $a \stimes \top$.
      \item RHS: The set of distributed values is $\{ a \stimes \top \}$. The infimum is trivially $a \stimes \top$.
    \end{itemize}
    Thus, LHS = RHS = $a \stimes \top$.
  \end{proofcases}
\end{proof}

\paragraph{Summary of weighted prenexing assumptions.}
In the table below, $\checkmark$ means that the corresponding formula equivalence is guaranteed, $\mathrm{cov}$ means that only the covalidity-preservation statement from \Cref{thm:general-weighted-prenexing-covalidity} is guaranteed, and -- means that the property is not forced by the assumption.
The hook arrows indicate strengthening implications between the assumptions, i.e. the second row implies the first.
\begin{center}
  \small
  \renewcommand{\arraystretch}{1.14}
  \begin{tabularx}{\linewidth}{@{}>{\raggedright\arraybackslash}Xcccc@{}}
    \toprule
    Assumption
      & \makecell{$\splus$\\over $\bigsqcup$}
      & \makecell{$\splus$\\over $\bigsqcap$}
      & \makecell{$\stimes$\\over $\bigsqcup$}
      & \makecell{$\stimes$\\over $\bigsqcap$} \\
    \midrule
    General $\omega$-bicontinuous module
      & $\mathrm{cov}$ & $\mathrm{cov}$ & $\mathrm{cov}$ & -- \\
    Total natural order
      & $\checkmark$ & $\checkmark$ & $\checkmark$ & $\checkmark$ \\
    Boolean natural-order lattice
      & -- & -- & -- & -- \\
    \(\hookrightarrow\) Idempotent weighted addition
      & $\checkmark$ & $\checkmark$ & $\checkmark$ & -- \\
    \(\hookrightarrow\;\hookrightarrow\) Idempotent addition and scalar meet preservation
      & $\checkmark$ & $\checkmark$ & $\checkmark$ & $\checkmark$ \\
    \bottomrule
  \end{tabularx}
\end{center}
The total-order row covers the totally ordered examples from \Cref{fig:modules}, such as \(\modtrop\), \(\modnats\), and \(\modprob\).
The last row covers the concrete Boolean examples there, such as \(\modbool\) and \(\modlang\).

\subsection{Goal-Directed \texorpdfstring{$\mathsf{qelim}$}{qelim}}

In this subsection, we fix the standing assumptions from \Cref{sec:weighted_programming,sec:heyting} together with the \wHeyLo semantics from \Cref{fig:wheylo_semantics}: the underlying domain is an $\omega$-bicontinuous monoid-module over a monoid, together with an implication and coimplication such that the induced weighting domain forms a bi-Heyting algebra. Thus, for every $\hhla \in \wHeyLo$,
\[
  \isValid{\hhla} \qiff \forall \State \in \States.~\evalState{\hhla} = \top,
  \qquad
  \isCovalid{\hhla} \qiff \forall \State \in \States.~\evalState{\hhla} = \bot.
\]

The proof of \Cref{thm:qelim-soundness} proceeds in three steps. First, auxiliary procedures $\downPrenex{\cdot}$ and $\upPrenex{\cdot}$ move quantifiers to the front while preserving formula equivalence. Second, auxiliary procedures $\downStrip{\cdot}{\cdot}$ and $\upStrip{\cdot}{\cdot}$ remove the resulting leading quantifier block. Third, we show that the displayed procedure in \Cref{fig:qelim-algorithm} is exactly $\mathsf{strip}\circ\mathsf{prenex}$. Completeness is then treated separately on a syntactic fragment where the prenexed matrix is quantifier-free.

\noindent
\begin{definition}[Quantifier-Free Matrices and Prenex Outputs]
  \label[definition]{def:qelim-prenex-fragments}
  We write $\psi \in \qfFrag$ iff $\psi$ contains no occurrences of $\bigsqcap_x$ or $\bigsqcup_x$. We write $\epsilon$ for the empty variable list, $x\vec x$ for prefixing a variable to a list, and $\vec x\vec y$ for concatenation.

  For a variable list $\vec x$, define iterated quantification by
  \[
    \bigsqcap_{\epsilon} \psi \definedAs \psi,
    \qquad
    \bigsqcap_{x\vec x} \psi \definedAs \bigsqcap_x \bigsqcap_{\vec x} \psi,
  \]
  and dually
  \[
    \bigsqcup_{\epsilon} \psi \definedAs \psi,
    \qquad
    \bigsqcup_{x\vec x} \psi \definedAs \bigsqcup_x \bigsqcup_{\vec x} \psi.
  \]

  We identify lower prenex outputs with formulas by
  \[
    \langle \vec x,\psi\rangle \in \downPFrag \quad\widehat{=}\quad \bigsqcap_{\vec x}\psi,
  \]
  and dually upper prenex outputs by
  \[
    \langle \vec x,\psi\rangle \in \upPFrag \quad\widehat{=}\quad \bigsqcup_{\vec x}\psi.
  \]
  Accordingly, we write
  \[
    \downPFrag \definedAs \{ \langle \vec x, \psi \rangle \mid \psi \in \wHeyLo \},
    \qquad
    \upPFrag \definedAs \{ \langle \vec x, \psi \rangle \mid \psi \in \wHeyLo \}.
  \]
\end{definition}

\noindent
\begin{definition}[Algorithm \texorpdfstring{$\mathsf{prenex}$}{prenex}]
  \label[definition]{def:qelim-prenex}
  \Cref{fig:prenex-algorithm} defines functions $\downPrenex{\cdot} : \wHeyLo \to \downPFrag$ and $\upPrenex{\cdot} : \wHeyLo \to \upPFrag$. If $\downPrenex{\hhla}=\langle \vec x,\psi\rangle$, then $\hhla \equiv \bigsqcap_{\vec x}\psi$; if $\upPrenex{\hhla}=\langle \vec x,\psi\rangle$, then $\hhla \equiv \bigsqcup_{\vec x}\psi$. Moreover, by \Cref{thm:prenex-complete-fragment},
  \begin{align*}
    \hhla \in \downQFrag &\Longrightarrow
      \downPrenex{\hhla}=\langle \vec x,\psi\rangle \text{ with } \psi \in \qfFrag,\\
    \hhla \in \upQFrag &\Longrightarrow
      \upPrenex{\hhla}=\langle \vec x,\psi\rangle \text{ with } \psi \in \qfFrag.
  \end{align*}
\end{definition}

\begin{figure}
  \centering
  \renewcommand{\arraystretch}{1.24}
  \begin{minipage}[t]{0.49\linewidth}
    \vspace{0pt}
    \centering
    \textbf{Lower Branch}\\[-0.1em]
    $\downPrenex{\cdot}$ (prenexing)
    \[
      \begin{array}{@{}r@{\,\mapsto\,}l@{}}
        \alpha & \langle \epsilon,\alpha \rangle \quad \text{for atomic $\alpha$, $\top$, $\bot$}
        \\[0.35em]
        \heylovalidate{\Psi} & \langle \vec x,\heylovalidate{\psi}\rangle
        \\[0.05em]
        \multicolumn{2}{@{}l@{}}{\qquad \text{where $\downPrenex{\Psi}=\langle \vec x,\psi\rangle$}}
        \\[0.35em]
        \hhla_1 \mathbin{\star} \hhla_2 & \langle \vec x \vec y,\psi \mathbin{\star} \chi\rangle
        \\[0.05em]
        \multicolumn{2}{@{}l@{}}{\qquad \text{where $\downPrenex{\hhla_1}=\langle \vec x,\psi\rangle$}}
        \\[0.05em]
        \multicolumn{2}{@{}l@{}}{\qquad \text{and $\downPrenex{\hhla_2}=\langle \vec y,\chi\rangle$}}
        \\[0.05em]
        \multicolumn{2}{@{}l@{}}{\qquad \text{for $\star \in \{\sqcap,\sqcup,\splus\}$}}
        \\[0.35em]
        a \stimes \Psi & \langle \vec x,a \stimes \psi\rangle
        \\[0.05em]
        \multicolumn{2}{@{}l@{}}{\qquad \text{where $\downPrenex{\Psi}=\langle \vec x,\psi\rangle$}}
        \\[0.35em]
        \hhla_1 \impl \hhla_2 & \langle \vec x \vec y,\psi \impl \chi\rangle
        \\[0.05em]
        \multicolumn{2}{@{}l@{}}{\qquad \text{where $\upPrenex{\hhla_1}=\langle \vec x,\psi\rangle$}}
        \\[0.05em]
        \multicolumn{2}{@{}l@{}}{\qquad \text{and $\downPrenex{\hhla_2}=\langle \vec y,\chi\rangle$}}
        \\[0.35em]
        \bigsqcap_x \Psi & \langle x\vec x,\psi\rangle
        \\[0.05em]
        \multicolumn{2}{@{}l@{}}{\qquad \text{where $\downPrenex{\Psi}=\langle \vec x,\psi\rangle$}}
        \\[0.35em]
        \hhla & \langle \epsilon,\hhla\rangle \quad \text{otherwise}
      \end{array}
    \]
  \end{minipage}
  \hfill
  \begin{minipage}[t]{0.49\linewidth}
    \vspace{0pt}
    \centering
    \textbf{Upper Branch}\\[-0.1em]
    $\upPrenex{\cdot}$ (prenexing)
    \[
      \begin{array}{@{}r@{\,\mapsto\,}l@{}}
        \alpha & \langle \epsilon,\alpha \rangle \quad \text{for atomic $\alpha$, $\top$, $\bot$}
        \\[0.35em]
        \heylocovalidate{\Psi} & \langle \vec x,\heylocovalidate{\psi}\rangle
        \\[0.05em]
        \multicolumn{2}{@{}l@{}}{\qquad \text{where $\upPrenex{\Psi}=\langle \vec x,\psi\rangle$}}
        \\[0.35em]
        \hhla_1 \mathbin{\star} \hhla_2 & \langle \vec x \vec y,\psi \mathbin{\star} \chi\rangle
        \\[0.05em]
        \multicolumn{2}{@{}l@{}}{\qquad \text{where $\upPrenex{\hhla_1}=\langle \vec x,\psi\rangle$}}
        \\[0.05em]
        \multicolumn{2}{@{}l@{}}{\qquad \text{and $\upPrenex{\hhla_2}=\langle \vec y,\chi\rangle$}}
        \\[0.05em]
        \multicolumn{2}{@{}l@{}}{\qquad \text{for $\star \in \{\sqcap,\sqcup,\splus\}$}}
        \\[0.35em]
        a \stimes \Psi & \langle \vec x,a \stimes \psi\rangle
        \\[0.05em]
        \multicolumn{2}{@{}l@{}}{\qquad \text{where $\upPrenex{\Psi}=\langle \vec x,\psi\rangle$}}
        \\[0.35em]
        \hhla_1 \coimpl \hhla_2 & \langle \vec x \vec y,\psi \coimpl \chi\rangle
        \\[0.05em]
        \multicolumn{2}{@{}l@{}}{\qquad \text{where $\downPrenex{\hhla_1}=\langle \vec x,\psi\rangle$}}
        \\[0.05em]
        \multicolumn{2}{@{}l@{}}{\qquad \text{and $\upPrenex{\hhla_2}=\langle \vec y,\chi\rangle$}}
        \\[0.35em]
        \bigsqcup_x \Psi & \langle x\vec x,\psi\rangle
        \\[0.05em]
        \multicolumn{2}{@{}l@{}}{\qquad \text{where $\upPrenex{\Psi}=\langle \vec x,\psi\rangle$}}
        \\[0.35em]
        \hhla & \langle \epsilon,\hhla\rangle \quad \text{otherwise}
      \end{array}
    \]
  \end{minipage}
  \caption{Algorithm $\mathsf{prenex}$. The lower branch produces a prefix of infimum quantifiers, and the upper branch a prefix of supremum quantifiers. In binary cases, prefixes are concatenated in left-to-right order.}
  \label{fig:prenex-algorithm}
\end{figure}

\noindent
\begin{lemma}[Validation and Covalidation Reflection]
  \label[lemma]{lem:qelim-validate-covalidate}
  For every $\hhla \in \wHeyLo$:
  \[
    \isValid{\heylovalidate{\hhla}}
    \qiff
    \isValid{\hhla},
    \qquad
    \isCovalid{\heylocovalidate{\hhla}}
    \qiff
    \isCovalid{\hhla}.
  \]
\end{lemma}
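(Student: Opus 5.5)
The plan is to unfold the pointwise semantics of validation and covalidation from \Cref{def:heyting-validation}, together with the definitions of validity and covalidity, and to argue separately in each state $\State \in \States$. No algebraic structure beyond the bounded lattice is needed, since $\heylovalidate{\cdot}$ only tests whether a value equals $\top$, and $\heylocovalidate{\cdot}$ only tests whether it equals $\bot$.

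For the validation clause, fix $\State$. By the semantics in \Cref{fig:wheylo_semantics}, $\evalState{\heylovalidate{\hhla}} = \heylovalidate{\evalState{\hhla}}$. By definition this equals $\top$ if $\evalState{\hhla} = \top$ and equals $\bot$ otherwise. It follows that $\evalState{\heylovalidate{\hhla}} = \top$ iff $\evalState{\hhla} = \top$. The one subtle point is the degenerate lattice with $\top = \bot$. In that case both sides hold trivially, since every element equals $\top$, so the equivalence survives. Quantifying over all $\State$ then gives $\heylovalidate{\hhla}$ valid iff $\hhla$ valid.

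The covalidation clause is exactly dual. Pointwise, $\evalState{\heylocovalidate{\hhla}} = \bot$ holds iff $\evalState{\hhla} = \bot$, because the only way $\heylocovalidate{w}$ returns $\bot$ is the case $w = \bot$. The degenerate case $\top = \bot$ is again trivial. Quantifying over all states gives the claim.

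I do not expect any real obstacle here. The only care needed is the degenerate case, where the ``otherwise'' branch might coincidentally produce $\top$ (respectively $\bot$). This is exactly what forces $\top = \bot$, and then every formula is both valid and covalid, so the equivalence holds anyway.
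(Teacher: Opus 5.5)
Your proposal is correct and is essentially the paper's argument: unfold $\heylovalidate{\cdot}$ and $\heylocovalidate{\cdot}$ pointwise to get $\evalState{\heylovalidate{\hhla}} = \top$ iff $\evalState{\hhla} = \top$ (and dually for $\bot$), then quantify over all states. Your remark on the degenerate case $\top = \bot$ is a small extra point of care that the paper leaves implicit; it is harmless, since in that case every element is both $\top$ and $\bot$.
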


\begin{proof}
  Let $\State \in \States$.
  By definition, $\evalState{\heylovalidate{\hhla}} = \top$ iff $\evalState{\hhla} = \top$. Hence $\isValid{\heylovalidate{\hhla}} \qiff \isValid{\hhla}$. Dually, $\evalState{\heylocovalidate{\hhla}} = \bot$ iff $\evalState{\hhla} = \bot$, so $\isCovalid{\heylocovalidate{\hhla}} \qiff \isCovalid{\hhla}$.
\end{proof}

\noindent
\begin{theorem}[Soundness of $\mathsf{prenex}$]
  \label[theorem]{thm:prenex-soundness}
  Assume the hypotheses of \Cref{thm:arithmetic-prenexing}. Then:
  \begin{enumerate}[label=(\roman*)]
    \item If $\downPrenex{\hhla}=\langle \vec x, \psi \rangle$, then $\hhla \equiv \bigsqcap_{\vec x} \psi$.
    \item If $\upPrenex{\hhla}=\langle \vec x, \psi \rangle$, then $\hhla \equiv \bigsqcup_{\vec x} \psi$.
  \end{enumerate}
\end{theorem}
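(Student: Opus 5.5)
The plan is a simultaneous structural induction on $\hhla$, proving (i) and (ii) together. The two are mutually recursive: the implication case of $\downPrenex{\cdot}$ calls $\upPrenex{\cdot}$ on the antecedent, and the coimplication case of $\upPrenex{\cdot}$ calls $\downPrenex{\cdot}$. Since $\equiv$ is pointwise equality of denotations, it is a congruence for every \wHeyLo{} constructor, so I may freely replace subformulas by equivalent ones. Throughout I assume bound variables have been renamed apart, as the paper does. In particular, in a binary case $\hhla_1 \star \hhla_2$ the variables in the prefix $\vec x$ of $\hhla_1$ do not occur free in $\hhla_2$ or in its matrix $\chi$, and symmetrically for $\vec y$. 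This is exactly what is needed to meet the side conditions $x \notin \freevars{\hhlb}$ of the prenexing lemmas.

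\textbf{Trivial and unary cases.} The atomic, $\top$, $\bot$ and fallback cases return $\langle\epsilon,\hhla\rangle$ with $\bigsqcap_\epsilon \hhla = \hhla$. The quantifier case $\bigsqcap_x\Psi$ follows from the IH $\Psi \equiv \bigsqcap_{\vec x}\psi$ and congruence, and dually for $\bigsqcup_x$. For $\heylovalidate{\Psi}$, the IH gives $\heylovalidate{\Psi}\equiv\heylovalidate{\bigsqcap_{\vec x}\psi}$. I then push $\heylovalidate{\cdot}$ inside one quantifier at a time, by induction on the length of $\vec x$, using the validation clause of \Cref{thm:quantifier-elimination-positive-prenexing}. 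The covalidation case in the upper branch uses the dual clause of the same theorem. For $a\stimes\Psi$, the same iteration uses the scalar rules of \Cref{thm:arithmetic-prenexing}. In the upper branch this needs only the join preservation \eqref{eq:scalar-binary-join-preservation}. In the lower branch it needs scalar meet preservation, the \qeExtra{} assumption of the standing hypotheses. I would state explicitly that (i) uses this additional hypothesis of \Cref{thm:arithmetic-prenexing}.

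\textbf{Binary cases.} For $\hhla_1\star\hhla_2$ with $\star\in\{\sqcap,\sqcup,\splus\}$, the IH gives $\hhla_1\star\hhla_2 \equiv (\bigsqcap_{\vec x}\psi)\star(\bigsqcap_{\vec y}\chi)$. I first pull out $\vec x$ one quantifier at a time, innermost-last, by an inner induction on $|\vec x|$. Then I pull out $\vec y$ in the same way, arriving at $\bigsqcap_{\vec x\vec y}(\psi\star\chi)$. The lattice steps use \Cref{thm:quantifier-elimination-positive-prenexing}, and the $\splus$ steps use \Cref{thm:arithmetic-prenexing}. The same pattern handles implication. The IH gives $(\bigsqcup_{\vec x}\psi)\impl(\bigsqcap_{\vec y}\chi)$. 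I pull each supremum out of the antecedent as an infimum using \Cref{thm:quantifier-elimination-negative-prenexing}, and each infimum out of the consequent using \Cref{thm:quantifier-elimination-positive-prenexing}. Dually, coimplication uses $(\bigsqcap_{\vec x}\psi)\coimpl(\bigsqcup_{\vec y}\chi) \equiv \bigsqcup_{\vec x\vec y}(\psi\coimpl\chi)$, again via \Cref{thm:quantifier-elimination-negative-prenexing} and \Cref{thm:quantifier-elimination-positive-prenexing}.

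\textbf{Expected obstacle.} The main difficulty is bookkeeping, not algebra: making the inner induction on prefix length rigorous while tracking freshness. When pulling out $x_i$, the other side must contain neither $x_i$ nor variables that the remaining quantifiers capture. I would therefore formalize a small auxiliary lemma: for a single-quantifier law $Q_x\alpha\star\beta \equiv Q'_x(\alpha\star\beta)$ valid under $x\notin\freevars{\beta}$, the iterated version $Q_{\vec x}\alpha\star\beta\equiv Q'_{\vec x}(\alpha\star\beta)$ holds whenever no variable of $\vec x$ is free in $\beta$. I would prove it once by induction on $\vec x$ and then instantiate it in every case above.
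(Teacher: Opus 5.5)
Your proposal is correct and is essentially the paper's proof: a simultaneous induction on the mutually recursive clauses of $\mathsf{prenex}^{\downarrow}$ and $\mathsf{prenex}^{\uparrow}$, where each case is discharged by repeated application of the positive, negative, and weighted prenexing theorems under the renaming-apart convention; your iterated-quantifier lemma simply makes the paper's ``repeated application'' explicit. You are also right that the lower-branch case $a \stimes \Psi$ needs scalar meet preservation (the \qeExtra{} assumption, i.e.\ the ``if, additionally'' clause of \Cref{thm:arithmetic-prenexing}), which the paper's proof uses without stating.
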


\begin{proof}
  This is a direct mutual induction: each recursive clause of \Cref{fig:prenex-algorithm} is justified by the corresponding prenexing theorem stated above.
  We prove both clauses simultaneously by induction on the recursive evaluation of \Cref{fig:prenex-algorithm}.

  \paragraph{Lower branch.}
  \begin{proofcases}
    \proofcase{$\alpha$, $\top$, or $\bot$}
    Immediate.

    \proofcase{$\heylovalidate{\Psi}$}
    Let $\downPrenex{\Psi}=\langle \vec x, \psi \rangle$. IH: $\Psi \equiv \bigsqcap_{\vec x} \psi$.
    By repeated application of the validation clause of \Cref{thm:quantifier-elimination-positive-prenexing},
    \[
      \heylovalidate{\Psi}
      \equiv
      \heylovalidate{\bigsqcap_{\vec x} \psi}
      \equiv
      \bigsqcap_{\vec x} \heylovalidate{\psi}.
    \]

    \proofcase{$\hhla_1 \star \hhla_2$, $\star \in \{\sqcap,\sqcup\}$}
    Let $\downPrenex{\hhla_i}=\langle \vec x_i, \psi_i \rangle$. IH: $\hhla_i \equiv \bigsqcap_{\vec x_i} \psi_i$.
    By repeated application of the $\star$-clauses of \Cref{thm:quantifier-elimination-positive-prenexing},
    \[
      \hhla_1 \star \hhla_2
      \equiv
      (\bigsqcap_{\vec x_1} \psi_1) \star (\bigsqcap_{\vec x_2} \psi_2)
      \equiv
      \bigsqcap_{\vec x_1 \vec x_2} (\psi_1 \star \psi_2).
    \]

    \proofcase{$\hhla_1 \splus \hhla_2$}
    Let $\downPrenex{\hhla_i}=\langle \vec x_i, \psi_i \rangle$. IH: $\hhla_i \equiv \bigsqcap_{\vec x_i} \psi_i$.
    By repeated application of \Cref{thm:arithmetic-prenexing},
    \[
      \hhla_1 \splus \hhla_2
      \equiv
      (\bigsqcap_{\vec x_1} \psi_1) \splus (\bigsqcap_{\vec x_2} \psi_2)
      \equiv
      \bigsqcap_{\vec x_1 \vec x_2} (\psi_1 \splus \psi_2).
    \]

    \proofcase{$a \stimes \Psi$}
    Let $\downPrenex{\Psi}=\langle \vec x, \psi \rangle$. IH: $\Psi \equiv \bigsqcap_{\vec x} \psi$.
    By repeated application of the scalar-infimum clause of \Cref{thm:arithmetic-prenexing},
    \[
      a \stimes \Psi
      \equiv
      a \stimes \bigsqcap_{\vec x} \psi
      \equiv
      \bigsqcap_{\vec x} (a \stimes \psi).
    \]

    \proofcase{$\hhla_1 \impl \hhla_2$}
    Let $\upPrenex{\hhla_1}=\langle \vec x, \psi \rangle$ and $\downPrenex{\hhla_2}=\langle \vec y, \chi \rangle$. IH: $\hhla_1 \equiv \bigsqcup_{\vec x} \psi$ and $\hhla_2 \equiv \bigsqcap_{\vec y} \chi$.
    By repeated application of \Cref{thm:quantifier-elimination-negative-prenexing,thm:quantifier-elimination-positive-prenexing},
    \[
      \hhla_1 \impl \hhla_2
      \equiv
      (\bigsqcup_{\vec x} \psi) \impl (\bigsqcap_{\vec y} \chi)
      \equiv
      \bigsqcap_{\vec x \vec y} (\psi \impl \chi).
    \]

    \proofcase{$\bigsqcap_x \Psi$}
    Let $\downPrenex{\Psi}=\langle \vec x, \psi \rangle$. IH: $\Psi \equiv \bigsqcap_{\vec x} \psi$. Since $\bigsqcap_x \Psi \equiv \bigsqcap_{x \vec x} \psi$, immediate.

    \proofcase{otherwise}
    Immediate from $\downPrenex{\hhla}=\langle \epsilon, \hhla \rangle$ and $\bigsqcap_{\epsilon} \hhla = \hhla$.
  \end{proofcases}

  \paragraph{Upper branch.}
  \begin{proofcases}
    \proofcase{$\alpha$, $\top$, or $\bot$}
    Immediate.

    \proofcase{$\heylocovalidate{\Psi}$}
    Let $\upPrenex{\Psi}=\langle \vec x, \psi \rangle$. IH: $\Psi \equiv \bigsqcup_{\vec x} \psi$.
    By repeated application of the covalidation clause of \Cref{thm:quantifier-elimination-positive-prenexing},
    \[
      \heylocovalidate{\Psi}
      \equiv
      \heylocovalidate{\bigsqcup_{\vec x} \psi}
      \equiv
      \bigsqcup_{\vec x} \heylocovalidate{\psi}.
    \]

    \proofcase{$\hhla_1 \star \hhla_2$, $\star \in \{\sqcap,\sqcup\}$}
    Let $\upPrenex{\hhla_i}=\langle \vec x_i, \psi_i \rangle$. IH: $\hhla_i \equiv \bigsqcup_{\vec x_i} \psi_i$.
    By repeated application of the upper clauses of \Cref{thm:quantifier-elimination-positive-prenexing},
    \[
      \hhla_1 \star \hhla_2
      \equiv
      (\bigsqcup_{\vec x_1} \psi_1) \star (\bigsqcup_{\vec x_2} \psi_2)
      \equiv
      \bigsqcup_{\vec x_1 \vec x_2} (\psi_1 \star \psi_2).
    \]

    \proofcase{$\hhla_1 \splus \hhla_2$}
    Let $\upPrenex{\hhla_i}=\langle \vec x_i, \psi_i \rangle$. IH: $\hhla_i \equiv \bigsqcup_{\vec x_i} \psi_i$.
    By repeated application of \Cref{thm:arithmetic-prenexing},
    \[
      \hhla_1 \splus \hhla_2
      \equiv
      (\bigsqcup_{\vec x_1} \psi_1) \splus (\bigsqcup_{\vec x_2} \psi_2)
      \equiv
      \bigsqcup_{\vec x_1 \vec x_2} (\psi_1 \splus \psi_2).
    \]

    \proofcase{$a \stimes \Psi$}
    Let $\upPrenex{\Psi}=\langle \vec x, \psi \rangle$. IH: $\Psi \equiv \bigsqcup_{\vec x} \psi$.
    By repeated application of the scalar clause of \Cref{thm:arithmetic-prenexing},
    \[
      a \stimes \Psi
      \equiv
      a \stimes \bigsqcup_{\vec x} \psi
      \equiv
      \bigsqcup_{\vec x} (a \stimes \psi).
    \]

    \proofcase{$\hhla_1 \coimpl \hhla_2$}
    Let $\downPrenex{\hhla_1}=\langle \vec x, \psi \rangle$ and $\upPrenex{\hhla_2}=\langle \vec y, \chi \rangle$. IH: $\hhla_1 \equiv \bigsqcap_{\vec x} \psi$ and $\hhla_2 \equiv \bigsqcup_{\vec y} \chi$.
    By repeated application of \Cref{thm:quantifier-elimination-negative-prenexing,thm:quantifier-elimination-positive-prenexing},
    \[
      \hhla_1 \coimpl \hhla_2
      \equiv
      (\bigsqcap_{\vec x} \psi) \coimpl (\bigsqcup_{\vec y} \chi)
      \equiv
      \bigsqcup_{\vec x \vec y} (\psi \coimpl \chi).
    \]

    \proofcase{$\bigsqcup_x \Psi$}
    Let $\upPrenex{\Psi}=\langle \vec x, \psi \rangle$. IH: $\Psi \equiv \bigsqcup_{\vec x} \psi$. Since $\bigsqcup_x \Psi \equiv \bigsqcup_{x \vec x} \psi$, immediate.

    \proofcase{otherwise}
    Immediate from $\upPrenex{\hhla}=\langle \epsilon, \hhla \rangle$ and $\bigsqcup_{\epsilon} \hhla = \hhla$.
  \end{proofcases}
\end{proof}

\noindent
\begin{theorem}[Prenex Completeness on the \texorpdfstring{$\mathsf{qelim}$}{qelim} Fragment]
  \label[theorem]{thm:prenex-complete-fragment}
  Assume the hypotheses of \Cref{thm:arithmetic-prenexing}. Then:
  \begin{enumerate}[label=(\roman*)]
    \item If $\hhla \in \downQFrag$, then $\downPrenex{\hhla}=\langle \vec x,\psi\rangle$ for some $\vec x,\psi$, and every such result satisfies $\psi \in \qfFrag$.
    \item If $\hhla \in \upQFrag$, then $\upPrenex{\hhla}=\langle \vec x,\psi\rangle$ for some $\vec x,\psi$, and every such result satisfies $\psi \in \qfFrag$.
  \end{enumerate}
\end{theorem}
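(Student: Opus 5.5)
The argument is a simultaneous structural induction on the grammars of $\downQFrag$ and $\upQFrag$ from \Cref{thm:qelim-complete-fragment}. For each production we inspect which clause of \Cref{fig:prenex-algorithm} fires, and check that the returned matrix is quantifier-free provided the matrices returned by the recursive calls are. Existence of a result is immediate: $\mathsf{prenex}$ is defined by structural recursion, every formula matches some clause, and the ``otherwise'' clause catches everything else. Termination and totality are therefore routine. The substance is the statement that \emph{every} result has $\psi \in \qfFrag$. I would read this as concerning the unique output of the recursion, taking the first matching clause as given by the order in \Cref{fig:prenex-algorithm}.

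\textbf{Base case} $\hhla_{\mathsf{qf}}$. Every subformula of $\hhla_{\mathsf{qf}}$ is quantifier-free, so I would prove an auxiliary claim by inner induction on formulas: for $\hhla \in \qfFrag$ we have $\downPrenex{\hhla} = \langle \epsilon, \psi\rangle$ with $\psi \in \qfFrag$, and likewise for $\upPrenex{\hhla}$. Atomic formulas yield $\langle\epsilon,\alpha\rangle$. The compositional clauses only rebuild the same connective from the matrices of the subformulas, so the matrix stays quantifier-free. The otherwise-clause returns $\hhla$ itself, which is quantifier-free. Thus the output matrix is always quantifier-free. This covers the case where a quantifier-free formula happens to match a structural clause in the ``wrong'' branch, for example $\hhla_1 \coimpl \hhla_2$ under $\mathsf{prenex}^{\downarrow}$, which falls to the otherwise-clause.

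\textbf{Inductive cases, lower grammar.} For $\heylovalidate{\hhla_\downarrow}$, the IH gives $\downPrenex{\hhla_\downarrow} = \langle \vec x,\psi\rangle$ with $\psi\in\qfFrag$, so the output $\heylovalidate{\psi}$ is quantifier-free. For $\bigsqcap_x \hhla_\downarrow$ the quantifier is absorbed into the prefix, giving matrix $\psi$. For $\star\in\{\sqcap,\sqcup,\splus\}$ both recursive calls stay in the lower branch, and $\psi\star\chi$ is quantifier-free. For $a\stimes\hhla_\downarrow$ the output is $a\stimes\psi$. For $\hhla_\uparrow \impl \hhla_\downarrow$ the algorithm calls $\mathsf{prenex}^{\uparrow}$ on the antecedent, and this is exactly where the mutual IH for $\upQFrag$ enters, yielding $\psi\impl\chi \in \qfFrag$. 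The upper grammar is handled dually: $\heylocovalidate{\cdot}$, $\bigsqcup_x$, the binary connectives, and scalars stay in $\mathsf{prenex}^{\uparrow}$, while $\hhla_\downarrow \coimpl \hhla_\uparrow$ invokes $\mathsf{prenex}^{\downarrow}$ on its first argument.

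\textbf{Main obstacle.} The delicate point is the interaction between the grammar and the clause matching of \Cref{fig:prenex-algorithm}. A formula may be generated by several productions, for instance as $\hhla_{\mathsf{qf}}$ and also as a compound. It may also syntactically match a clause whose recursive call sends a subformula outside the fragment on which the IH is available, for example $\hhla_1\impl\hhla_2$ with a quantifier-free $\hhla_1$ that is also in $\upQFrag$. I would resolve this by phrasing the induction over the formula's syntax, not over derivations: show that if $\hhla\in\downQFrag$ and $\hhla$ has outermost connective $\impl$ (respectively $\star$, $\stimes$, $\heylovalidate{\cdot}$, $\bigsqcap_x$), then its immediate subformulas lie in the fragments demanded by the corresponding clause. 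This inversion holds because either $\hhla$ is quantifier-free, in which case all subformulas are quantifier-free and hence in both fragments via the $\hhla_{\mathsf{qf}}$ production, or $\hhla$ arises from the unique compound production with that head symbol. With this inversion lemma in place, each case above goes through. The hypotheses of \Cref{thm:arithmetic-prenexing} play no role in this syntactic claim; they are needed only for the soundness statement of \Cref{thm:prenex-soundness}.
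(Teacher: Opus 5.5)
Your proposal is correct and follows the paper's proof: a simultaneous induction over the two fragment grammars, where each production matches the corresponding clause of the $\mathsf{prenex}$ algorithm and the $\impl$/$\coimpl$ cases use the mutual hypothesis. The paper inducts on derivations, so the subderivations supply the needed fragment memberships directly and your inversion lemma is not needed; it also calls the quantifier-free case ``immediate'', which your explicit inner induction justifies more carefully.
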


\begin{proof}
  This is again a straightforward mutual induction over the fragment grammar: every constructor in the fragment maps quantifier-free recursive results to a quantifier-free matrix.
  We prove both clauses simultaneously by induction on the derivation of $\hhla \in \downQFrag$ and $\hhla \in \upQFrag$.

  \begin{proofcases}
    \proofcase{$\hhla \in \qfFrag$}
    Immediate, since $\downPrenex{\hhla}=\langle \epsilon,\hhla \rangle$ and $\upPrenex{\hhla}=\langle \epsilon,\hhla \rangle$.

    \proofcase{$\heylovalidate{\Psi} \in \downQFrag$}
    Let $\downPrenex{\Psi}=\langle \vec x,\psi \rangle$. By IH, $\psi \in \qfFrag$, hence $\heylovalidate{\psi} \in \qfFrag$.

    \proofcase{$\heylocovalidate{\Psi} \in \upQFrag$}
    Dual.

    \proofcase{$\bigsqcap_x \Psi \in \downQFrag$}
    Let $\downPrenex{\Psi}=\langle \vec x,\psi \rangle$. By IH, $\psi \in \qfFrag$.

    \proofcase{$\bigsqcup_x \Psi \in \upQFrag$}
    Dual.

    \proofcase{$\hhla_1 \star \hhla_2 \in \downQFrag$, $\star \in \{\sqcap,\sqcup,\splus\}$}
    Let $\downPrenex{\hhla_i}=\langle \vec x_i,\psi_i \rangle$. By IH, $\psi_1,\psi_2 \in \qfFrag$, hence $\psi_1 \star \psi_2 \in \qfFrag$.

    \proofcase{$\hhla_1 \star \hhla_2 \in \upQFrag$, $\star \in \{\sqcap,\sqcup,\splus\}$}
    Dual.

    \proofcase{$\hhla_1 \impl \hhla_2 \in \downQFrag$}
    Let $\upPrenex{\hhla_1}=\langle \vec x,\psi \rangle$ and $\downPrenex{\hhla_2}=\langle \vec y,\chi \rangle$. By IH, $\psi,\chi \in \qfFrag$, hence $\psi \impl \chi \in \qfFrag$.

    \proofcase{$a \stimes \Psi \in \downQFrag$}
    Let $\downPrenex{\Psi}=\langle \vec x,\psi \rangle$. By IH, $\psi \in \qfFrag$, hence $a \stimes \psi \in \qfFrag$.

    \proofcase{$\hhla_1 \coimpl \hhla_2 \in \upQFrag$}
    Let $\downPrenex{\hhla_1}=\langle \vec x,\psi \rangle$ and $\upPrenex{\hhla_2}=\langle \vec y,\chi \rangle$. By IH, $\psi,\chi \in \qfFrag$, hence $\psi \coimpl \chi \in \qfFrag$.

    \proofcase{$a \stimes \Psi \in \upQFrag$}
    Let $\upPrenex{\Psi}=\langle \vec x,\psi \rangle$. By IH, $\psi \in \qfFrag$, hence $a \stimes \psi \in \qfFrag$.
  \end{proofcases}
\end{proof}

\noindent
\begin{definition}[Algorithm \texorpdfstring{$\mathsf{strip}$}{strip}]
  \label[definition]{def:qelim-strip}
  The functions $\downStrip{\cdot}{\cdot} : \downPFrag \to \wHeyLo$ and $\upStrip{\cdot}{\cdot} : \upPFrag \to \wHeyLo$ are defined by the following recursive equations:
  \begin{align*}
    \downStrip{x \vec x}{\psi}
    &\definedAs \downStrip{\vec x}{\mathsf{fresh}_x(\psi)},
    &
    \upStrip{x \vec x}{\psi}
    &\definedAs \upStrip{\vec x}{\mathsf{fresh}_x(\psi)},
    \\
    \downStrip{\epsilon}{\psi}
    &\definedAs \psi,
    &
    \upStrip{\epsilon}{\psi}
    &\definedAs \psi.
  \end{align*}
  By \Cref{lem:qelim-strip-soundness},
  \[
    \isValid{\bigsqcap_{\vec x}\psi} \qiff \isValid{\downStrip{\vec x}{\psi}},
    \qquad
    \isCovalid{\bigsqcup_{\vec x}\psi} \qiff \isCovalid{\upStrip{\vec x}{\psi}};
  \]
  if $\psi \in \qfFrag$, then $\downStrip{\vec x}{\psi}, \upStrip{\vec x}{\psi} \in \qfFrag$ whenever the corresponding branch is defined.
\end{definition}

\noindent
\begin{lemma}[Strip Equations]
  \label[lemma]{lem:qelim-strip-helpers}
  For all variable lists $\vec x,\vec y$ and formulas $\phi,\psi$:
  \begin{align*}
    \downStrip{\vec x}{\heylovalidate{\psi}}
    &= \heylovalidate{\downStrip{\vec x}{\psi}},
    \\
    \downStrip{\vec x \vec y}{\phi \star \psi}
    &= \downStrip{\vec x}{\phi} \star \downStrip{\vec y}{\psi}
    && \text{for } \star \in \{\sqcap,\sqcup,\splus\},
    \\
    \downStrip{\vec x}{a \stimes \psi}
    &= a \stimes \downStrip{\vec x}{\psi},
    \\
    \downStrip{\vec x \vec y}{\phi \impl \psi}
    &= \upStrip{\vec x}{\phi} \impl \downStrip{\vec y}{\psi},
    \\
    \upStrip{\vec x}{\heylocovalidate{\psi}}
    &= \heylocovalidate{\upStrip{\vec x}{\psi}},
    \\
    \upStrip{\vec x \vec y}{\phi \star \psi}
    &= \upStrip{\vec x}{\phi} \star \upStrip{\vec y}{\psi}
    && \text{for } \star \in \{\sqcap,\sqcup,\splus\},
    \\
    \upStrip{\vec x}{a \stimes \psi}
    &= a \stimes \upStrip{\vec x}{\psi},
    \\
    \upStrip{\vec x \vec y}{\phi \coimpl \psi}
    &= \downStrip{\vec x}{\phi} \coimpl \upStrip{\vec y}{\psi}.
  \end{align*}
\end{lemma}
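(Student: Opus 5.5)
The equations are purely syntactic facts about $\mathsf{strip}$ and $\mathsf{fresh}$, so I will prove them by induction on the length of the stripped variable list. No semantic reasoning is involved. The key auxiliary fact is that $\mathsf{fresh}_x$ is a homomorphism on formula syntax. It replaces the (bound) variable $x$ by a fresh free variable, so it commutes with every constructor:
\[
  \mathsf{fresh}_x(\heylovalidate{\psi}) = \heylovalidate{\mathsf{fresh}_x(\psi)},
  \qquad
  \mathsf{fresh}_x(\phi \star \psi) = \mathsf{fresh}_x(\phi) \star \mathsf{fresh}_x(\psi),
\]
and likewise for $a \stimes (\cdot)$, $\impl$, $\coimpl$ and $\heylocovalidate{\cdot}$.

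For the unary equations (validation, covalidation, scalar), I induct on $\vec x$. The base case $\vec x = \epsilon$ is immediate from $\downStrip{\epsilon}{\psi} = \psi$. For $x\vec x$, unfold the definition to get
\[
  \downStrip{x\vec x}{\heylovalidate{\psi}} = \downStrip{\vec x}{\mathsf{fresh}_x(\heylovalidate{\psi})} = \downStrip{\vec x}{\heylovalidate{\mathsf{fresh}_x(\psi)}}.
\]
The induction hypothesis, applied to the formula $\mathsf{fresh}_x(\psi)$, turns this into $\heylovalidate{\downStrip{\vec x}{\mathsf{fresh}_x(\psi)}} = \heylovalidate{\downStrip{x\vec x}{\psi}}$. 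This is why the induction must be stated for all $\psi$ simultaneously.

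For the binary equations, I induct on the concatenated list $\vec x\vec y$, peeling the head variable either from $\vec x$ or, once $\vec x = \epsilon$, from $\vec y$. In each step I push $\mathsf{fresh}$ through $\star$ (resp.\ $\impl$, $\coimpl$) and apply the induction hypothesis. Here freshness matters for the first time. Each variable of $\vec x$ is bound only in the left matrix $\phi$, and each variable of $\vec y$ only in $\psi$; prenexing ensures this after the renaming convention that bound variables are distinct. Hence $\mathsf{fresh}_x(\psi) = \psi$ when $x \in \vec x$, and $\mathsf{fresh}_y(\phi) = \phi$ when $y \in \vec y$. So renaming a variable of $\vec x$ acts only on the left component, and it accumulates exactly into $\downStrip{\vec x}{\phi}$ (resp.\ $\upStrip{\vec x}{\phi}$).

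The only real subtlety is the mixed cases $\phi \impl \psi$ and $\phi \coimpl \psi$, where the left side switches polarity. Since $\downStrip{\cdot}{\cdot}$ and $\upStrip{\cdot}{\cdot}$ are defined by literally the same recursion (iterated $\mathsf{fresh}$), they coincide as functions on pairs $\langle \vec x, \phi\rangle$. So $\upStrip{\vec x}{\phi}$ equals the iterated renaming of $\phi$ along $\vec x$, and the equation follows exactly as in the symmetric case.

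I expect this disjointness of bound-variable lists, together with the requirement that the chosen fresh names are globally distinct, to be the main obstacle. I would handle it by fixing the standing assumption, already used in the paper, that all bound variables are pairwise distinct and disjoint from free variables. Then each $\mathsf{fresh}_x$ is deterministic and acts only on the subformula in which $x$ occurs.
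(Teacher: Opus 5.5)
Your proposal is correct and takes the same route as the paper. The paper's proof is a single line: induction on the leftmost variable list, using the definition of $\mathsf{strip}$ and the freshness convention of the prenexing algorithm. You spell out the steps that line leaves implicit: generalizing over the matrix, letting $\mathsf{fresh}_x$ commute with each connective, and noting that renamings for $\vec x$ and $\vec y$ touch only their own component. You also make two implicit points explicit. First, $\mathsf{strip}^{\downarrow}$ and $\mathsf{strip}^{\uparrow}$ are literally the same recursion, which handles the polarity switch in the $\impl$/$\coimpl$ cases. Second, the binary equations really do need the disjointness convention on bound variables.
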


\begin{proof}
  All clauses follow by induction on the leftmost variable list, using \Cref{def:qelim-strip} and the freshness convention built into \Cref{fig:prenex-algorithm}.
\end{proof}

\noindent
\begin{lemma}[Stripping a Prenex Output]
  \label[lemma]{lem:qelim-strip-soundness}
  For every $\langle \vec x, \psi \rangle \in \downPFrag$,
  \[
    \isValid{\bigsqcap_{\vec x} \psi}
    \qiff
    \isValid{\downStrip{\vec x}{\psi}}.
  \]
  For every $\langle \vec x, \psi \rangle \in \upPFrag$,
  \[
    \isCovalid{\bigsqcup_{\vec x} \psi}
    \qiff
    \isCovalid{\upStrip{\vec x}{\psi}}.
  \]
  If additionally $\psi \in \qfFrag$, then
  \[
    \downStrip{\vec x}{\psi} \in \qfFrag,
    \qquad
    \upStrip{\vec x}{\psi} \in \qfFrag.
  \]
\end{lemma}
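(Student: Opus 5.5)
We argue by induction on the length of the variable list $\vec x$, separately for the lower and the upper branch; the two arguments are exact duals, so we describe the lower one. For $\vec x = \epsilon$ we have $\bigsqcap_{\epsilon}\psi = \psi = \downStrip{\epsilon}{\psi}$ by \Cref{def:qelim-prenex-fragments,def:qelim-strip}, so there is nothing to show.

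For the step $\vec x = x\vec x'$, unfold the definitions: $\bigsqcap_{x\vec x'}\psi = \bigsqcap_x \bigsqcap_{\vec x'}\psi$ and $\downStrip{x\vec x'}{\psi} = \downStrip{\vec x'}{\mathsf{fresh}_x(\psi)}$. Write $z$ for the fresh variable that $\mathsf{fresh}_x$ introduces. The step proceeds in three moves.

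\begin{enumerate}
  \item Apply \Cref{thm:qelim-strip}: $\bigsqcap_x \bigsqcap_{\vec x'}\psi$ is valid iff $\bigsqcap_{\vec x'}\psi$ is valid, where $x$ is now free.
  \item Rename the free $x$ to $z$. Validity quantifies over all states $\State$, so renaming a free variable to one not occurring in the formula preserves validity: a state witnessing a counterexample for one version is transported to a counterexample for the other by swapping the values of $x$ and $z$. Freshness of $z$ also ensures that $z$ does not coincide with any variable in $\vec x'$. Hence $\bigsqcap_{\vec x'}\psi$ is valid iff $\bigsqcap_{\vec x'}\mathsf{fresh}_x(\psi)$ is valid, because $\mathsf{fresh}_x$ commutes with the remaining quantifiers over the distinct variables $\vec x'$.
  \item Apply the induction hypothesis to $\langle \vec x', \mathsf{fresh}_x(\psi)\rangle$, which is legitimate since the hypothesis ranges over all matrices. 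This yields validity of $\downStrip{\vec x'}{\mathsf{fresh}_x(\psi)}$, which is the desired formula.
\end{enumerate}

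The upper branch is identical, using the covalidity half of \Cref{thm:qelim-strip} and the same renaming argument with $\bot$ in place of $\top$.

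For the quantifier-freeness claim, induct on $\vec x$ again. Renaming a free variable does not introduce quantifiers, so $\mathsf{fresh}_x(\psi)\in\qfFrag$ whenever $\psi\in\qfFrag$. The base case returns $\psi$ itself.

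The main obstacle is the renaming step. It relies on a substitution lemma for the semantics in \Cref{fig:wheylo_semantics}: for $z$ not occurring in $\psi$, $\evalState{\psi[x\mapsto z]} = \eval{\psi}(\State[x\mapsto\State(z)])$. This is proved by a routine structural induction over the formula grammar, with the only care needed at the binders $\bigsqcap_y$ and $\bigsqcup_y$, where $y\neq x,z$ holds by the freshness convention. We would state this lemma explicitly and then quote it in the renaming step.
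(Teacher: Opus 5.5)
Your proof is correct and follows the paper's argument. Both induct on the prefix $\vec x$, peel off the leading quantifier with \Cref{thm:qelim-strip}, apply the induction hypothesis to $\langle \vec x', \mathsf{fresh}_x(\psi)\rangle$, and get quantifier-freeness because stripping only removes the prefix. The one difference is that you make the variable-renaming step explicit, via a substitution lemma and the freshness convention for bound variables, whereas the paper folds it silently into its appeal to \Cref{thm:qelim-strip}.
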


\begin{proof}
  We prove both clauses simultaneously by induction on the recursive evaluation of \Cref{def:qelim-strip}.

  \begin{proofcases}
    \proofcase{$\downStrip{x \vec x}{\psi}$}
    By \Cref{thm:qelim-strip},
    \[
      \isValid{\bigsqcap_{x \vec x} \psi}
      \qiff
      \isValid{\bigsqcap_{\vec x} \mathsf{fresh}_x(\psi)}.
    \]
    Then apply IH to $(\vec x,\mathsf{fresh}_x(\psi))$.

    \proofcase{$\upStrip{x \vec x}{\psi}$}
    Dual, using \Cref{thm:qelim-strip}.

    \proofcase{$\downStrip{\epsilon}{\psi}$ or $\upStrip{\epsilon}{\psi}$}
    Immediate.
  \end{proofcases}
  The quantifier-freeness clause follows by the same induction: the recursive equations of \Cref{def:qelim-strip} only remove leading quantifiers, and the base case returns the unchanged matrix.
\end{proof}

\noindent
\begin{lemma}[\texorpdfstring{$\mathsf{qelim}$}{qelim} Factors Through $\mathsf{strip}\circ\mathsf{prenex}$]
  \label[lemma]{lem:qelim-is-strip-prenex}
  If $\downPrenex{\hhla}=\langle \vec x, \psi \rangle$, then
  \[
    \downQelim{\hhla}=\downStrip{\vec x}{\psi}.
  \]
  Dually, if $\upPrenex{\hhla}=\langle \vec x, \psi \rangle$, then
  \[
    \upQelim{\hhla}=\upStrip{\vec x}{\psi}.
  \]
\end{lemma}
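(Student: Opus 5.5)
We prove both claims simultaneously by induction on the recursive evaluation of $\mathsf{prenex}$ (equivalently, on the structure of $\hhla$), comparing each clause of \Cref{fig:qelim-algorithm} with the corresponding clause of \Cref{fig:prenex-algorithm}. In every case we rewrite $\downStrip{\vec x}{\psi}$ (resp.\ $\upStrip{\vec x}{\psi}$) using the strip equations of \Cref{lem:qelim-strip-helpers}, and then close the case with the induction hypothesis for the immediate subformulas. Throughout we rely on the freshness convention: bound variables are renamed apart, so that $\mathsf{fresh}_x$ acting on a combined matrix only affects the subformula in which $x$ was bound.

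\textbf{Atomic and fall-through cases.} For atomic $\alpha$, $\top$, $\bot$, and for the ``otherwise'' clauses, prenex returns $\langle\epsilon,\hhla\rangle$. Since $\mathsf{strip}(\epsilon,\hhla)=\hhla$, this agrees with $\mathsf{qelim}$, which returns $\hhla$ unchanged. Here we must check that the ``otherwise'' patterns coincide in the two figures. They do: both figures list the same constructors for each branch, including $\bigsqcup_x$ in the lower branch and $\impl$ in the upper branch, which neither procedure handles.

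\textbf{Homomorphic cases.} For $\heylovalidate{\Psi}$, prenex gives $\langle\vec x,\heylovalidate{\psi}\rangle$, and the strip equation yields $\heylovalidate{\downStrip{\vec x}{\psi}}$. By the induction hypothesis this equals $\heylovalidate{\downQelim{\Psi}}$, which is exactly the $\mathsf{qelim}$ clause. The same argument handles $a\stimes\Psi$ and the binary connectives $\star\in\{\sqcap,\sqcup,\splus\}$, where the prefix is the concatenation $\vec x\vec y$ and the strip equation splits it along the two operands. For $\hhla_1\impl\hhla_2$ the strip equation gives $\upStrip{\vec x}{\psi}\impl\downStrip{\vec y}{\chi}$, and the mutual induction hypothesis (upper branch for $\hhla_1$, lower branch for $\hhla_2$) yields $\upQelim{\hhla_1}\impl\downQelim{\hhla_2}$. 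The upper-branch cases, including $\coimpl$, are handled dually.

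\textbf{Quantifier case.} For $\bigsqcap_x\Psi$, prenex returns $\langle x\vec x,\psi\rangle$ where $\downPrenex{\Psi}=\langle\vec x,\psi\rangle$. By \Cref{def:qelim-strip},
\[
\downStrip{x\vec x}{\psi}=\downStrip{\vec x}{\mathsf{fresh}_x(\psi)},
\]
while $\mathsf{qelim}$ gives $\mathsf{fresh}_x(\downQelim{\Psi})=\mathsf{fresh}_x(\downStrip{\vec x}{\psi})$ by the induction hypothesis. It therefore suffices to show that $\mathsf{fresh}_x$ commutes with $\downStrip{\vec x}{\cdot}$, which is the main obstacle of the proof. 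We would establish this as an auxiliary lemma by induction on $\vec x$, using that $x\notin\vec x$ (bound variables are distinct after renaming) and that each step of strip merely renames a different variable. Renamings of distinct variables to fresh names commute, so the two sides are syntactically equal; if one prefers, one may instead work up to $\alpha$-equivalence of fresh names. The $\bigsqcup_x$ case in the upper branch is dual.
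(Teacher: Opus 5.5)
Your proposal is correct and follows the paper's proof: a simultaneous induction that matches each clause of $\mathsf{qelim}$ against the corresponding clause of $\mathsf{prenex}$, and closes each case with the induction hypothesis and the strip equations of \Cref{lem:qelim-strip-helpers}. In the quantifier case you explicitly prove that $\mathsf{fresh}_x$ commutes with $\mathsf{strip}$; the paper skips this step when it rewrites $\mathsf{fresh}_x(\downStrip{\vec x}{\psi})$ directly as $\downStrip{x\vec x}{\psi}$, so your version is slightly more careful than the paper's.
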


\begin{proof}
  We prove both clauses simultaneously by induction on the recursive evaluation of \Cref{fig:qelim-algorithm}.

  \paragraph{Lower branch.}
  \begin{proofcases}
    \proofcase{$\alpha$, $\top$, or $\bot$}
    Immediate.

    \proofcase{$\heylovalidate{\Psi}$}
    Let $\downPrenex{\Psi}=\langle \vec x, \psi \rangle$.
    By IH and \Cref{lem:qelim-strip-helpers},
    \[
      \downQelim{\heylovalidate{\Psi}}
      =
      \heylovalidate{\downQelim{\Psi}}
      =
      \heylovalidate{\downStrip{\vec x}{\psi}}
      =
      \downStrip{\vec x}{\heylovalidate{\psi}}.
    \]

    \proofcase{$\hhla_1 \star \hhla_2$, $\star \in \{\sqcap,\sqcup\}$}
    Let $\downPrenex{\hhla_i}=\langle \vec x_i, \psi_i \rangle$.
    By IH and \Cref{lem:qelim-strip-helpers},
    \[
      \downQelim{\hhla_1 \star \hhla_2}
      =
      \downQelim{\hhla_1} \star \downQelim{\hhla_2}
      =
      \downStrip{\vec x_1}{\psi_1} \star \downStrip{\vec x_2}{\psi_2}
      =
      \downStrip{\vec x_1 \vec x_2}{\psi_1 \star \psi_2}.
    \]

    \proofcase{$\hhla_1 \splus \hhla_2$}
    Let $\downPrenex{\hhla_i}=\langle \vec x_i, \psi_i \rangle$.
    By IH and \Cref{lem:qelim-strip-helpers},
    \[
      \downQelim{\hhla_1 \splus \hhla_2}
      =
      \downQelim{\hhla_1} \splus \downQelim{\hhla_2}
      =
      \downStrip{\vec x_1}{\psi_1} \splus \downStrip{\vec x_2}{\psi_2}
      =
      \downStrip{\vec x_1 \vec x_2}{\psi_1 \splus \psi_2}.
    \]

    \proofcase{$a \stimes \Psi$}
    Let $\downPrenex{\Psi}=\langle \vec x, \psi \rangle$.
    By IH and \Cref{lem:qelim-strip-helpers},
    \[
      \downQelim{a \stimes \Psi}
      =
      a \stimes \downQelim{\Psi}
      =
      a \stimes \downStrip{\vec x}{\psi}
      =
      \downStrip{\vec x}{a \stimes \psi}.
    \]

    \proofcase{$\hhla_1 \impl \hhla_2$}
    Let $\upPrenex{\hhla_1}=\langle \vec x, \psi \rangle$ and $\downPrenex{\hhla_2}=\langle \vec y, \chi \rangle$.
    By IH and \Cref{lem:qelim-strip-helpers},
    \[
      \downQelim{\hhla_1 \impl \hhla_2}
      =
      \upQelim{\hhla_1} \impl \downQelim{\hhla_2}
      =
      \upStrip{\vec x}{\psi} \impl \downStrip{\vec y}{\chi}
      =
      \downStrip{\vec x \vec y}{\psi \impl \chi}.
    \]

    \proofcase{$\bigsqcap_x \Psi$}
    Let $\downPrenex{\Psi}=\langle \vec x, \psi \rangle$.
    By IH and \Cref{def:qelim-strip},
    \[
      \downQelim{\bigsqcap_x \Psi}
      =
      \mathsf{fresh}_x(\downQelim{\Psi})
      =
      \mathsf{fresh}_x(\downStrip{\vec x}{\psi})
      =
      \downStrip{x \vec x}{\psi}.
    \]

    \proofcase{otherwise}
    Immediate, since $\downPrenex{\hhla}=\langle \epsilon,\hhla \rangle$ and $\downStrip{\epsilon}{\hhla}=\hhla=\downQelim{\hhla}$.
  \end{proofcases}

  \paragraph{Upper branch.}
  \begin{proofcases}
    \proofcase{$\alpha$, $\top$, or $\bot$}
    Immediate.

    \proofcase{$\heylocovalidate{\Psi}$}
    Let $\upPrenex{\Psi}=\langle \vec x, \psi \rangle$.
    By IH and \Cref{lem:qelim-strip-helpers},
    \[
      \upQelim{\heylocovalidate{\Psi}}
      =
      \heylocovalidate{\upQelim{\Psi}}
      =
      \heylocovalidate{\upStrip{\vec x}{\psi}}
      =
      \upStrip{\vec x}{\heylocovalidate{\psi}}.
    \]

    \proofcase{$\hhla_1 \star \hhla_2$, $\star \in \{\sqcap,\sqcup\}$}
    Let $\upPrenex{\hhla_i}=\langle \vec x_i, \psi_i \rangle$.
    By IH and \Cref{lem:qelim-strip-helpers},
    \[
      \upQelim{\hhla_1 \star \hhla_2}
      =
      \upQelim{\hhla_1} \star \upQelim{\hhla_2}
      =
      \upStrip{\vec x_1}{\psi_1} \star \upStrip{\vec x_2}{\psi_2}
      =
      \upStrip{\vec x_1 \vec x_2}{\psi_1 \star \psi_2}.
    \]

    \proofcase{$\hhla_1 \splus \hhla_2$}
    Let $\upPrenex{\hhla_i}=\langle \vec x_i, \psi_i \rangle$.
    By IH and \Cref{lem:qelim-strip-helpers},
    \[
      \upQelim{\hhla_1 \splus \hhla_2}
      =
      \upQelim{\hhla_1} \splus \upQelim{\hhla_2}
      =
      \upStrip{\vec x_1}{\psi_1} \splus \upStrip{\vec x_2}{\psi_2}
      =
      \upStrip{\vec x_1 \vec x_2}{\psi_1 \splus \psi_2}.
    \]

    \proofcase{$a \stimes \Psi$}
    Let $\upPrenex{\Psi}=\langle \vec x, \psi \rangle$.
    By IH and \Cref{lem:qelim-strip-helpers},
    \[
      \upQelim{a \stimes \Psi}
      =
      a \stimes \upQelim{\Psi}
      =
      a \stimes \upStrip{\vec x}{\psi}
      =
      \upStrip{\vec x}{a \stimes \psi}.
    \]

    \proofcase{$\hhla_1 \coimpl \hhla_2$}
    Let $\downPrenex{\hhla_1}=\langle \vec x, \psi \rangle$ and $\upPrenex{\hhla_2}=\langle \vec y, \chi \rangle$.
    By IH and \Cref{lem:qelim-strip-helpers},
    \[
      \upQelim{\hhla_1 \coimpl \hhla_2}
      =
      \downQelim{\hhla_1} \coimpl \upQelim{\hhla_2}
      =
      \downStrip{\vec x}{\psi} \coimpl \upStrip{\vec y}{\chi}
      =
      \upStrip{\vec x \vec y}{\psi \coimpl \chi}.
    \]

    \proofcase{$\bigsqcup_x \Psi$}
    Let $\upPrenex{\Psi}=\langle \vec x, \psi \rangle$.
    By IH and \Cref{def:qelim-strip},
    \[
      \upQelim{\bigsqcup_x \Psi}
      =
      \mathsf{fresh}_x(\upQelim{\Psi})
      =
      \mathsf{fresh}_x(\upStrip{\vec x}{\psi})
      =
      \upStrip{x \vec x}{\psi}.
    \]

    \proofcase{otherwise}
    Immediate, since $\upPrenex{\hhla}=\langle \epsilon,\hhla \rangle$ and $\upStrip{\epsilon}{\hhla}=\hhla=\upQelim{\hhla}$.
  \end{proofcases}
\end{proof}

\thmQelimSoundness*

\begin{proof}
  \begin{proofcases}
    \proofcase{$\downQelim{\hhla}=\hhlb$}
    Let $\downPrenex{\hhla}=\langle \vec x, \psi \rangle$.
    By \Cref{lem:qelim-is-strip-prenex}, $\hhlb=\downStrip{\vec x}{\psi}$.
    By \Cref{thm:prenex-soundness}, $\hhla \equiv \bigsqcap_{\vec x} \psi$.
    Hence
    \begin{align*}
      \isValid{\hhla}
      &\qiff \isValid{\bigsqcap_{\vec x} \psi}
      \inlineExplain{formula equivalence}
      \\
      &\qiff \isValid{\downStrip{\vec x}{\psi}}
      \inlineExplain{\Cref{lem:qelim-strip-soundness}}
      \\
      &\qiff \isValid{\hhlb}.
    \end{align*}

    \proofcase{$\upQelim{\hhla}=\hhlb$}
    Let $\upPrenex{\hhla}=\langle \vec x, \psi \rangle$.
    By \Cref{lem:qelim-is-strip-prenex}, $\hhlb=\upStrip{\vec x}{\psi}$.
    By \Cref{thm:prenex-soundness}, $\hhla \equiv \bigsqcup_{\vec x} \psi$.
    Hence
    \begin{align*}
      \isCovalid{\hhla}
      &\qiff \isCovalid{\bigsqcup_{\vec x} \psi}
      \inlineExplain{formula equivalence}
      \\
      &\qiff \isCovalid{\upStrip{\vec x}{\psi}}
      \inlineExplain{\Cref{lem:qelim-strip-soundness}}
      \\
      &\qiff \isCovalid{\hhlb}.
    \end{align*}
    \qedhere
  \end{proofcases}
\end{proof}

\thmQelimCompleteFragment*

\begin{proof}
  \begin{proofcases}
    \proofcase{$\hhla \in \downQFrag$}
    Let $\downPrenex{\hhla}=\langle \vec x,\psi \rangle$. By \Cref{thm:prenex-complete-fragment}, $\psi \in \qfFrag$. By \Cref{lem:qelim-is-strip-prenex}, $\downQelim{\hhla}=\downStrip{\vec x}{\psi}$. Hence $\downQelim{\hhla} \in \qfFrag$ by \Cref{lem:qelim-strip-soundness}.

    \proofcase{$\hhla \in \upQFrag$}
    Let $\upPrenex{\hhla}=\langle \vec x,\psi \rangle$. By \Cref{thm:prenex-complete-fragment}, $\psi \in \qfFrag$. By \Cref{lem:qelim-is-strip-prenex}, $\upQelim{\hhla}=\upStrip{\vec x}{\psi}$. Hence $\upQelim{\hhla} \in \qfFrag$ by \Cref{lem:qelim-strip-soundness}. \qedhere
  \end{proofcases}
\end{proof}

\clearpage
\ifarxivPreprint
\begin{landscape}
\fi
    \begingroup
    \footnotesize
    \captionsetup{font=footnotesize,position=bottom,skip=8pt,width=0.84\linewidth,justification=justified,singlelinecheck=false}
    \setlength{\tabcolsep}{5pt}
    \setlength{\extrarowheight}{1.1pt}
    \renewcommand{\arraystretch}{1.2}
    \colorlet{sourceStmtBg}{black!6}
    \newcommand{\asIs}{\textit{as is}}
    \newcommand{\noTrans}{\textemdash}
    \newcommand{\lax}[1]{\ensuremath{#1^{\dagger}}}
    \newcommand{\choiceplus}[2]{\ensuremath{\{#1\}\splus\{#2\}}}
    \newcommand{\choicemeet}[2]{\ensuremath{\{#1\}\hand\{#2\}}}
    \newcommand{\choicejoin}[2]{\ensuremath{\{#1\}\hor\{#2\}}}
    \newcommand{\seqtrans}{\ensuremath{T(S_1)\fatsemi T(S_2)}}
    \newcommand{\dualembed}[1]{\ensuremath{\widehat{#1}}}
    \newcommand{\mathcell}[1]{\ensuremath{\begin{array}[t]{@{}l@{}}#1\end{array}}}
    \newcommand{\compactmathcell}[1]{\begingroup\scriptsize\renewcommand{\arraystretch}{0.96}\ensuremath{\begin{array}[t]{@{}l@{}}#1\end{array}}\endgroup}
    \newcommand{\Tlow}{\ensuremath{T_\downarrow}}
    \newcommand{\Tupr}{\ensuremath{T_\uparrow}}
    \newcommand{\TlowLax}{\ensuremath{T_\downarrow^\dagger}}
    \newcommand{\TuprLax}{\ensuremath{T_\uparrow^\dagger}}
    \newlength{\sourceStmtColWidth}
    \setlength{\sourceStmtColWidth}{0.073\linewidth}
    \newsavebox{\loweringOverviewTable}
    \newcolumntype{S}{>{\raggedright\arraybackslash}p{\sourceStmtColWidth}}
    \newcolumntype{Y}{>{\raggedright\arraybackslash}X}
    \ifarxivPreprint
        \newcommand{\overviewPageCenter}{\hspace*{\dimexpr(\paperheight-\linewidth)/2 - 1in - \topmargin - \headheight - \headsep\relax}}
    \else
        \newcommand{\overviewPageCenter}{}
    \fi
    \null
    \vfill
    \noindent\overviewPageCenter
    \begin{minipage}{\linewidth}
    \centering
    \sbox{\loweringOverviewTable}{%
    \begin{tabularx}{\linewidth}{@{}
        S
        Y
        Y
        Y
        Y
        Y
        @{}}
        \toprule
        \makecell[l]{Source\\statement}
        & Tropical
        & \makecell[l]{Access\\control}
        & \makecell[l]{Why/\\provenance}
        & \makecell[l]{Formal\\languages:\\lower monitor}
        & \makecell[l]{Formal\\languages:\\upper monitor} \\
        \midrule
        $\stmtRasgn{x}{e}$
        & \asIs
        & \asIs
        & \asIs
        & \asIs
        & \asIs \\

        $S_1 \fatsemi S_2$
        & \seqtrans
        & \seqtrans
        & \seqtrans
        & $\Tlow(S_1)\fatsemi \Tlow(S_2)$
        & $\Tupr(S_1)\fatsemi \Tupr(S_2)$ \\

        \choiceplus{S_1}{S_2}
        & \choicemeet{T(S_1)}{T(S_2)}
        & \choicemeet{T(S_1)}{T(S_2)}
        & \choicejoin{T(S_1)}{T(S_2)}
        & \choiceplus{\TlowLax(S_1)}{\TlowLax(S_2)}
        & \choicemeet{\Tupr(S_1)}{\Tupr(S_2)} \\

        \choicemeet{S_1}{S_2}
        & \choicejoin{T(S_1)}{T(S_2)}
        & \choicejoin{T(S_1)}{T(S_2)}
        & \choicemeet{T(S_1)}{T(S_2)}
        & \choicemeet{\Tlow(S_1)}{\Tlow(S_2)}
        & \choicejoin{\TuprLax(S_1)}{\TuprLax(S_2)} \\

        \choicejoin{S_1}{S_2}
        & \choicemeet{T(S_1)}{T(S_2)}
        & \choicemeet{T(S_1)}{T(S_2)}
        & \choicejoin{T(S_1)}{T(S_2)}
        & \choicejoin{\TlowLax(S_1)}{\TlowLax(S_2)}
        & \choicemeet{\Tupr(S_1)}{\Tupr(S_2)} \\
        \addlinespace[0.22em]

        $\stmtWeigh{a}$
        & $\stmtTick{\dualembed{a}}$
        & $\coAssert{\dualembed{a}}$
        & $\stmtAssert{\WhyBool{a}}$
        & \compactmathcell{
          \stmtAssert{\FLMonVar \subseteq a\cdot\FLAllWords}\fatsemi{}\\
          \stmtRasgn{\FLMonVar}{a^{-1}\cdot\FLMonVar}}
        & $\stmtRasgn{\FLMonVar}{a^{-1}\cdot\FLMonVar}$ \\
        \addlinespace[0.22em]

        $\stmtAssert{\hlb}$
        & $\coAssert{\dualembed{\hlb}}$
        & $\coAssert{\dualembed{\hlb}}$
        & $\stmtAssert{\WhyBool{\hlb}}$
        & $\stmtAssert{\FLMonVar \subseteq \hlb}$
        & $\stmtRasgn{\FLMonVar}{\FLMonVar\cup(\FLAllWords\setminus\hlb)}$ \\

        $\stmtAssume{\hlb}$
        & $\coAssume{\dualembed{\hlb}}$
        & $\coAssume{\dualembed{\hlb}}$
        & $\stmtAssume{\WhyBool{\hlb}}$
        & $\stmtRasgn{\FLMonVar}{\FLMonVar\cap\hlb}$
        & $\stmtAssert{(\FLAllWords\setminus\hlb)\subseteq\FLMonVar}$ \\

        \mathcell{\stmtValidate\fatsemi{}\\\stmtAssume{\hlb}}
        & $\coValidate\fatsemi\coAssume{\dualembed{\hlb}}$
        & $\coValidate\fatsemi\coAssume{\dualembed{\hlb}}$
        & \compactmathcell{
          \stmtHavoc{p_{a_1},\ldots,p_{a_n}}\fatsemi{}\\
          \stmtValidate\fatsemi\stmtAssume{\WhyBool{\hlb}}}
        & $\Tlow(\stmtValidate)\fatsemi\Tlow(\stmtAssume{\hlb})$
        & $\TuprLax(\stmtValidate)\fatsemi\Tupr(\stmtAssume{\hlb})$ \\

        $\coAssert{\hlb}$
        & $\stmtAssert{\dualembed{\hlb}}$
        & $\stmtAssert{\dualembed{\hlb}}$
        & $\coAssert{\WhyBool{\hlb}}$
        & $\stmtRasgn{\FLMonVar}{\FLMonVar\setminus\hlb}$
        & $\stmtAssert{\hlb\subseteq\FLMonVar}$ \\

        $\coAssume{\hlb}$
        & $\stmtAssume{\dualembed{\hlb}}$
        & $\stmtAssume{\dualembed{\hlb}}$
        & \noTrans
        & $\stmtAssert{\FLMonVar\subseteq\FLAllWords\setminus\hlb}$
        & $\stmtRasgn{\FLMonVar}{\FLMonVar\cup\hlb}$ \\
        \addlinespace[0.22em]

        $\stmtHavoc{x}$
        & $\coHavoc{x}$
        & $\coHavoc{x}$
        & \asIs
        & \asIs
        & \lax{\coHavoc{x}} \\

        $\coHavoc{x}$
        & $\stmtHavoc{x}$
        & $\stmtHavoc{x}$
        & \asIs
        & \lax{\coHavoc{x}}
        & $\stmtHavoc{x}$ \\
        \addlinespace[0.22em]

        $\stmtValidate$
        & $\coValidate$
        & $\coValidate$
        & \noTrans
        & \compactmathcell{
          \stmtAssume{\FLMonVar\neq\emptyset}\fatsemi{}\\
          \stmtRasgn{\FLMonVar}{\FLAllWords}}
        & \lax{\stmtAssert{\FLAllWords\subseteq\FLMonVar}} \\

        $\coValidate$
        & $\stmtValidate$
        & $\stmtValidate$
        & \noTrans
        & \lax{\stmtAssert{\FLMonVar\subseteq\emptyset}}
        & \compactmathcell{
          \coAssert{\FLAllWords\subseteq\FLMonVar}\fatsemi{}\\
          \stmtRasgn{\FLMonVar}{\emptyset}\fatsemi{}\\
          \coValidate} \\
        \bottomrule
    \end{tabularx}
    }%
    \begin{adjustbox}{max width=0.99\linewidth,max totalheight=0.83\textheight,center}
    \makebox[0pt][l]{%
        \raisebox{-\dp\loweringOverviewTable}{%
            \textcolor{sourceStmtBg}{%
                \rule{\dimexpr\sourceStmtColWidth+\tabcolsep\relax}{\dimexpr\ht\loweringOverviewTable+\dp\loweringOverviewTable\relax}%
            }%
        }%
    }%
    \usebox{\loweringOverviewTable}
    \end{adjustbox}
    \par\addvspace{0.35\baselineskip}
    \captionof{table}{%
        Statement-level translations proved in \Cref{sec:appendix-weighted-to-ordinary-heyvl}.
        Unmarked entries are exact equivalences from \Cref{thm:tropical-translation,thm:clearance-translation,thm:why-translation,thm:residual-language-monitor,thm:residual-language-monitor-upper}.
        In the tropical and clearance columns, $\widehat{\psi}$ denotes the translated source formula in ordinary $\eureal$.
        Tropical costs are embedded by inclusion, while clearance levels use $\widehat{\mathsf{P}}=0$, $\widehat{\mathsf{C}}=1$, $\widehat{\mathsf{S}}=2$, and $\widehat{\mathsf{TS}}=\infty$.
        The superscript $\dagger$ marks one-sided lax rules from \Cref{lem:residual-language-monitor-lax-joins,lem:residual-language-monitor-upper-lax-meets}.
        A dash means that this appendix gives no compositional exact lowering for that statement form.
        In the Why/provenance column, $\WhyBool{\psi}$ is defined only for crisp $\psi$, i.e. $\psi(\sigma)\in\{\bot,\top\}$ for every state~$\sigma$; ordinary state transformers must not write the fresh provenance variables, except that the displayed $\stmtValidate\fatsemi\stmtAssume{\hlb}$ block explicitly havocs them to quantify over all provenance valuations.
    }
    \label{tab:wheyvl-lowering-overview}
    \end{minipage}
    \vfill
    \ifarxivPreprint
    \else
        \null
    \fi
    \endgroup
\ifarxivPreprint
\end{landscape}
\fi
\clearpage

\section{Lowerings of \texorpdfstring{\wHeyVL{}}{wHeyVL}}
\label{sec:appendix-weighted-to-ordinary-heyvl}

\Cref{tab:wheyvl-lowering-overview} summarizes the statement-level translations proved in this appendix.

\subsection{Residual-Language Monitor Translation}
\label{sec:appendix-formal-language-monitors}

We can translate (fragments of) \wHeyVLLang{} over formal languages into \wHeyVLBool{} over predicates on residual languages.
We present two translations, one for lower bounds and one for upper bounds.
They are based on \emph{residual languages}, which we explain first.

\subsubsection{Residual Languages}

The translation maintains a language variable~$\FLMonVar$ that stores the \emph{residual language} after the current execution prefix.
That is, $\FLMonVar$ contains exactly those continuations that would still complete the current prefix to a word in the original language.

Formally, for a language $L \subseteq \FLAllWords$ and a word $u \in \Gamma^*$, the residual language $u^{-1} \cdot L$ is defined by
$u^{-1} \cdot L = \{\, w \mid uw \in L \,\}$.
The upper- and lower-bound translations rely on \Cref{eq:residual-language-law-upper,eq:residual-language-law-lower}.
\begin{subequations}\label{eq:residual-language-laws}
    \begin{align}
        a \cdot \hla \subseteq \FLMonVar
         & \qiff \hla \subseteq a^{-1} \cdot \FLMonVar
        \label{eq:residual-language-law-upper}
        \\
        \FLMonVar \subseteq a \cdot \hla
         & \qiff (a^{-1} \cdot \FLMonVar \subseteq \hla)
        \land
        (\FLMonVar \subseteq a \cdot \FLAllWords)
        \label{eq:residual-language-law-lower}
\end{align}
\end{subequations}
\Cref{eq:residual-language-law-lower} includes the additional condition $\FLMonVar \subseteq a \cdot \FLAllWords$ because quotienting alone is insufficient.
For example, if $\FLMonVar = \{ b \}$ for some $b \neq a$, then $a^{-1} \cdot \FLMonVar = \emptyset$, and hence $a^{-1} \cdot \FLMonVar \subseteq \hla$ holds for every $\hla$.
Nevertheless, $\FLMonVar \subseteq a \cdot \hla$ is false, since the words in $\FLMonVar$ do not start with~$a$.

\subsubsection{Lower-Bound Translation}

The translation rewrites each statement into a sequence of \wHeyVLBool{} statements that manipulate the residual language $\FLMonVar$.
The translated postcondition is $\stmtAssert{(\FLMonVar \subseteq \hla)}$.

The statement $\stmtWeigh{a}$ now means that the next symbol must be $a$.
To implement this, we first check that every word in the current residual starts with $a$ (i.e., $\FLMonVar \subseteq a \cdot \FLAllWords$), and then update the residual by taking the quotient with $a$ (i.e., $\FLMonVar := a^{-1} \cdot \FLMonVar$).
Therefore, $\stmtWeigh{a}$ is translated to $$\stmtAssert{(\FLMonVar \subseteq a \cdot \FLAllWords)} \fatsemi \stmtRasgn{\FLMonVar}{a^{-1} \cdot \FLMonVar}~.$$

The statement $\stmtAssert{\hlb}$ checks that all possible continuations of the current execution (i.e., $\FLMonVar$) are included in $\hlb$.
Thus, $\stmtAssert{\hlb}$ is translated to $\stmtAssert{(\FLMonVar \subseteq \hlb)}$.

The statement $\stmtAssume{\hlb}$ restricts the current residual to continuations contained in $\hlb$.
Thus, $\stmtAssume{\hlb}$ is translated to $\stmtRasgn{\FLMonVar}{\FLMonVar \cap \hlb}$.

The following definition formalizes the translation for a fragment of \wHeyVLLang{} statements.

\begin{definition}[Residual-language monitor translation]\label{def:residual-language-monitor}
    Let $\FLMonVar$ be fresh and range over languages over $\FLAllWords$.
    For
    \[
        \begin{aligned}
            S ::= {} & \stmtRasgn{x}{e}
            \mid \stmtAssert{\hlb}
            \mid \stmtAssume{\hlb}
            \mid \coAssert{\hlb}
            \mid \coAssume{\hlb}        \\
            \mid {}  & \stmtValidate
            \mid \stmtWeigh{a}
            \mid \stmtHavoc{x}
            \mid \stmtDemonic{S}{S}
            \mid S \fatsemi S
        \end{aligned}
    \]
    define $\FLMonitor{-}$ by
    \begin{align*}
        \FLMonitor{\stmtAssert{\hlb}}
         & = \stmtAssert{(\FLMonVar \subseteq \hlb)}                       \\
        \FLMonitor{\stmtAssume{\hlb}}
         & = \stmtRasgn{\FLMonVar}{\FLMonVar \cap \hlb}                    \\
        \FLMonitor{\coAssert{\hlb}}
         & = \stmtRasgn{\FLMonVar}{\FLMonVar \setminus \hlb}               \\
        \FLMonitor{\coAssume{\hlb}}
         & = \stmtAssert{(\FLMonVar \subseteq \FLAllWords \setminus \hlb)} \\
        \FLMonitor{\stmtValidate}
         & = \stmtAssume{(\FLMonVar \neq \emptyset)}
        \fatsemi
        \stmtRasgn{\FLMonVar}{\FLAllWords}                                 \\
        \FLMonitor{\stmtWeigh{a}}
         & = \stmtAssert{(\FLMonVar \subseteq a \cdot \FLAllWords)}
        \fatsemi
        \stmtRasgn{\FLMonVar}{a^{-1} \cdot \FLMonVar}                       \\
        \FLMonitor{\stmtDemonic{S_1}{S_2}}
         & = \stmtDemonic{\FLMonitor{S_1}}{\FLMonitor{S_2}}                 \\
        \FLMonitor{S_1 \fatsemi S_2}
         & = \FLMonitor{S_1} \fatsemi \FLMonitor{S_2}.
    \end{align*}
    It leaves $\stmtHavoc{x}$ and $\stmtRasgn{x}{e}$ unchanged.
    In the Boolean algebra of languages, $\hlb \coimpl \hla = \hla \setminus \hlb$.
\end{definition}

The next theorem states that the encoding is sound for the lower-bound query $\FLIn{\vc{S}(\hla)}$, i.e., the monitor correctly tracks the residual language of the initial language $\hla$ after executing $S$.
\Cref{def:residual-language-upper-monitor} gives the dual translation for the upper-bound query $\FLOut{\vc{S}(\hla)}$.

\begin{theorem}
    \label{thm:residual-language-monitor}
    Assume that $\FLMonVar$ is fresh, i.e., $\FLMonVar$ does not occur in $S$ or~$\hla$.
    Then, for every statement~$S$ in the fragment above and language~$\hla$,
    \[
        \FLIn{\vc{S}(\hla)}
        \qiff
        \vcB{\FLMonitor{S}}(\FLIn{\hla}).
    \]
\end{theorem}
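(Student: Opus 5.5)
Induction on the structure of $S$, with the statement strengthened to be uniform in the postweighting $\hla$ and in the state. Both sides are Boolean-valued predicates over states that additionally assign a value to the fresh variable $\FLMonVar$. The claim is therefore a pointwise equivalence: for every state $\sigma$ and every language $L=\sigma(\FLMonVar)$,
\[
  L \subseteq \vc{S}(\hla)(\sigma) \qiff \vcB{\FLMonitor{S}}(\FLIn{\hla})(\sigma).
\]
Because $\FLMonVar$ is fresh, evaluating $\vc{S}(\hla)$ never reads $\FLMonVar$. The monitor statements, in turn, only write $\FLMonVar$. Hence substitutions into $\FLMonVar$ on the right-hand side correspond exactly to replacing $L$ by the updated residual.

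\emph{Compound statements.} Sequential composition is immediate. Apply the induction hypothesis to $S_2$ with postweighting $\hla$. This gives the formula equivalence $\FLIn{\vc{S_2}(\hla)} \equiv \vcB{\FLMonitor{S_2}}(\FLIn{\hla})$. Then apply the hypothesis to $S_1$ with postweighting $\vc{S_2}(\hla)$, and use that $\symVcB$ respects equivalence of posts. Demonic choice uses $L \subseteq A\cap B$ iff $L\subseteq A$ and $L\subseteq B$, which matches $\sqcap=\land$ in $\bool$.

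\emph{Assignment and havoc.} Assignment commutes with substitution, since $\FLMonVar\neq x$. Havoc uses $L\subseteq \bigcap_n A_n$ iff $\forall n.\,L\subseteq A_n$. Here $L$ does not depend on $x$, because $\FLMonVar$ is not havocked.

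\emph{Atomic statements.} Each one reduces to a Boolean-algebra identity on languages:
\begin{itemize}
  \item \texttt{assert}: $L\subseteq \hlb\cap\hla$ iff $L\subseteq\hlb \land L\subseteq\hla$.
  \item \texttt{assume}: by \Cref{thm:matimpl_matcoimpl_bounded_lattice}, $\hlb\impl\hla=\overline{\hlb}\cup\hla$, and $L\subseteq \overline{\hlb}\cup\hla$ iff $L\cap\hlb\subseteq\hla$. This is exactly the post after the substitution $\FLMonVar\mapsto\FLMonVar\cap\hlb$.
  \item \texttt{coassert}: $L\subseteq\hlb\cup\hla$ iff $L\setminus\hlb\subseteq\hla$.
  \item \texttt{coassume}: $\hlb\coimpl\hla=\hla\setminus\hlb$, and $L\subseteq \hla\setminus\hlb$ iff $L\cap\hlb=\emptyset$ and $L\subseteq\hla$.
  \item \texttt{weigh}: the left side is $L\subseteq a\cdot\hla$, which by \Cref{eq:residual-language-law-lower} is equivalent to $L\subseteq a\cdot\FLAllWords$ and $a^{-1}\cdot L\subseteq\hla$. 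This is precisely what the monitor's assert-then-quotient computes.
\end{itemize}
I will prove \Cref{eq:residual-language-laws} inline if not already available. The lower law needs $a\cdot(a^{-1}\cdot L)=L$ whenever $L\subseteq a\cdot\FLAllWords$, which follows since concatenation with a finite word is injective.

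\emph{The main obstacle is \texttt{validate}.} The post $\heylovalidate{\hla}$ is $\FLAllWords$ if $\hla=\FLAllWords$ and $\emptyset$ otherwise. So $L\subseteq\heylovalidate{\hla}$ iff $L=\emptyset$ or $\hla=\FLAllWords$. The monitor yields $(L\neq\emptyset)\to(\FLAllWords\subseteq\hla)$, which is the same disjunction. The care needed here is to confirm that the Boolean assume is material implication, and that substituting $\FLAllWords$ for $\FLMonVar$ in $\FLIn{\hla}$ is legitimate because $\hla$ does not mention $\FLMonVar$.

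The remaining point of care is the induction invariant itself. The hypothesis must be stated for all postweightings $\hla$ not containing $\FLMonVar$, and closure must be checked: every $\vc{S_2}(\hla)$ again avoids $\FLMonVar$. This holds because $S_2$ does not mention $\FLMonVar$.
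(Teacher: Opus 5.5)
Your proposal is correct and follows the paper's proof essentially step for step. Both use structural induction on $S$, uniformly in the postweighting $\hla$, and reduce each atomic case to a set identity on languages: \Cref{eq:residual-language-law-lower} for \texttt{weigh}, and the disjunction $(\FLMonVar = \emptyset) \lor (\FLAllWords \subseteq \hla)$ for \texttt{validate}. Freshness of $\FLMonVar$ is carried through $\vc{S_2}(\hla)$ in the sequencing case. The only addition is that you justify the residual law inline via left cancellation of finite words, which the paper takes as given.
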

\begin{proof}
        We prove by structural induction on~$S$.

    \begin{itemize}
        \proofcase{$S = \stmtAssert{\hlb}$}
        \begin{align*}
             & \FLIn{\vc{\stmtAssert{\hlb}}(\hla)}                                                 &  & \\
             & \qiff \FLIn{\hlb \cap \hla} \inlineExplain{Def. \symVc}                             &  & \\
             & \qiff \FLIn{\hlb} \land \FLIn{\hla} \inlineExplain{Defs. subset/intersection}       &  & \\
             & \qiff \vcB{\stmtAssert{\FLIn{\hlb}}}(\FLIn{\hla}) \inlineExplain{Def. \symVcB}      &  & \\
             & \qiff \vcB{\FLMonitor{\stmtAssert{\hlb}}}(\FLIn{\hla}) \inlineExplain{Def. monitor}
        \end{align*}

        \proofcase{$S = \stmtAssume{\hlb}$}
        \begin{align*}
             & \FLIn{\vc{\stmtAssume{\hlb}}(\hla)}                                                              &  & \\
             & \qiff \FLIn{\hlb \impl \hla} \inlineExplain{Def. \symVc}                                         &  & \\
             & \qiff \FLMonVar \cap \hlb \subseteq \hla \inlineExplain{Def. implication}                        &  & \\
             & \qiff \FLIn{\hla}\substBy{\FLMonVar}{\FLMonVar \cap \hlb} \inlineExplain{Def. substBy}           &  & \\
             & \qiff \vcB{\stmtRasgn{\FLMonVar}{\FLMonVar \cap \hlb}}(\FLIn{\hla}) \inlineExplain{Def. \symVcB} &  & \\
             & \qiff \vcB{\FLMonitor{\stmtAssume{\hlb}}}(\FLIn{\hla}) \inlineExplain{Def. monitor}
        \end{align*}

        \proofcase{$S = \coAssert{\hlb}$}
        \begin{align*}
             & \FLIn{\vc{\coAssert{\hlb}}(\hla)}                                                                     &  & \\
             & \qiff \FLIn{\hlb \cup \hla} \inlineExplain{Def. \symVc}                                               &  & \\
             & \qiff \FLMonVar \setminus \hlb \subseteq \hla \inlineExplain{Set difference}                          &  & \\
             & \qiff \FLIn{\hla}\substBy{\FLMonVar}{\FLMonVar \setminus \hlb} \inlineExplain{Def. substBy}           &  & \\
             & \qiff \vcB{\stmtRasgn{\FLMonVar}{\FLMonVar \setminus \hlb}}(\FLIn{\hla}) \inlineExplain{Def. \symVcB} &  & \\
             & \qiff \vcB{\FLMonitor{\coAssert{\hlb}}}(\FLIn{\hla}) \inlineExplain{Def. monitor}
        \end{align*}

        \proofcase{$S = \coAssume{\hlb}$}
        \begin{align*}
                                                       & \FLIn{\vc{\coAssume{\hlb}}(\hla)}                                                                    &   & \\
                                                       & \qiff \FLIn{\hla \setminus \hlb} \inlineExplain{Def. \symVc}                                         &   & \\
                                                       & \qiff \FLIn{\FLAllWords \setminus \hlb}
            \land
            \FLIn{\hla} \inlineExplain{Set difference} &                                                                                                      &     \\
                                                       & \qiff \vcB{\stmtAssert{\FLIn{\FLAllWords \setminus \hlb}}}(\FLIn{\hla}) \inlineExplain{Def. \symVcB} &   & \\
                                                       & \qiff \vcB{\FLMonitor{\coAssume{\hlb}}}(\FLIn{\hla}) \inlineExplain{Def. monitor}
        \end{align*}

        \proofcase{$S = \stmtValidate$}
        \begin{align*}
             & \FLIn{\vc{\stmtValidate}(\hla)}                                                                                                                   &  & \\
             & \qiff \FLIn{\heylovalidate{\hla}} \inlineExplain{Def. \symVc}                                                                                     &  & \\
             & \qiff (\FLMonVar = \emptyset) \lor (\FLAllWords \subseteq \hla) \inlineExplain{Def. \stmtValidate}                                                &  & \\
             & \qiff (\FLMonVar = \emptyset) \lor \FLIn{\hla}\substBy{\FLMonVar}{\FLAllWords} \inlineExplain{Def. substBy}                                       &  & \\
             & \qiff (\FLMonVar = \emptyset) \lor \vcB{\stmtRasgn{\FLMonVar}{\FLAllWords}}(\FLIn{\hla}) \inlineExplain{Def. \symVcB}                             &  & \\
             & \qiff \vcB{\stmtAssume{(\FLMonVar \neq \emptyset)}}\bigl(\vcB{\stmtRasgn{\FLMonVar}{\FLAllWords}}(\FLIn{\hla})\bigr) \inlineExplain{Def. \symVcB} &  & \\
             & \qiff \vcB{\stmtAssume{(\FLMonVar \neq \emptyset)} \fatsemi \stmtRasgn{\FLMonVar}{\FLAllWords}}(\FLIn{\hla}) \inlineExplain{Def. \symVcB}         &  & \\
             & \qiff \vcB{\FLMonitor{\stmtValidate}}(\FLIn{\hla}) \inlineExplain{Def. monitor}
        \end{align*}

        \proofcase{$S = \stmtWeigh{a}$}
        \begin{align*}
                                                                                                          & \FLIn{\vc{\stmtWeigh{a}}(\hla)}                                                                                                                              &   & \\
                                                                                                          & \qiff \FLIn{a \cdot \hla} \inlineExplain{Def. \symVc}                                                                                                        &   & \\
                                                                                                          & \qiff \FLIn{a \cdot \FLAllWords}
            \land
            a^{-1} \cdot \FLMonVar \subseteq \hla \inlineExplain{\Cref{eq:residual-language-law-lower}} &                                                                                                                                                              &     \\
                                                                                                          & \qiff \FLIn{a \cdot \FLAllWords}
            \land
            \FLIn{\hla}\substBy{\FLMonVar}{a^{-1} \cdot \FLMonVar} \inlineExplain{Def. substBy}          &                                                                                                                                                              &     \\
                                                                                                          & \qiff \FLIn{a \cdot \FLAllWords}
            \land
            \vcB{\stmtRasgn{\FLMonVar}{a^{-1} \cdot \FLMonVar}}(\FLIn{\hla}) \inlineExplain{Def. \symVcB} &                                                                                                                                                              &     \\
                                                                                                          & \qiff \vcB{\stmtAssert{\FLIn{a \cdot \FLAllWords}}}\bigl(\vcB{\stmtRasgn{\FLMonVar}{a^{-1} \cdot \FLMonVar}}(\FLIn{\hla})\bigr) \inlineExplain{Def. \symVcB} &   & \\
                                                                                                          & \qiff \vcB{\stmtAssert{\FLIn{a \cdot \FLAllWords}} \fatsemi \stmtRasgn{\FLMonVar}{a^{-1} \cdot \FLMonVar}}(\FLIn{\hla}) \inlineExplain{Def. \symVcB}         &   & \\
                                                                                                          & \qiff \vcB{\FLMonitor{\stmtWeigh{a}}}(\FLIn{\hla}) \inlineExplain{Def. monitor}
        \end{align*}

        \proofcase{$S = \stmtRasgn{x}{e}$}
        \begin{align*}
             & \FLIn{\vc{\stmtRasgn{x}{e}}(\hla)}                                                 &  & \\
             & \qiff \FLIn{\hla\substBy{x}{e}} \inlineExplain{Def. \symVc}                        &  & \\
             & \qiff \FLIn{\hla}\substBy{x}{e} \inlineExplain{Freshness}                          &  & \\
             & \qiff \vcB{\stmtRasgn{x}{e}}(\FLIn{\hla}) \inlineExplain{Def. \symVcB}             &  & \\
             & \qiff \vcB{\FLMonitor{\stmtRasgn{x}{e}}}(\FLIn{\hla}) \inlineExplain{Def. monitor}
        \end{align*}

        \proofcase{$S = S_1 \fatsemi S_2$}
        Since $\FLMonVar$ is fresh and does not occur in $S_2$ or~$\hla$, it also does not occur in $\vc{S_2}(\hla)$.
        \begin{align*}
             & \FLIn{\vc{S_1 \fatsemi S_2}(\hla)}                                                                       &  & \\
             & \qiff \FLIn{\vc{S_1}(\vc{S_2}(\hla))} \inlineExplain{Def. \symVc}                                        &  & \\
             & \qiff \vcB{\FLMonitor{S_1}}(\FLIn{\vc{S_2}(\hla)}) \inlineExplain{I.H. for $S_1$}                        &  & \\
             & \qiff \vcB{\FLMonitor{S_1}}\bigl(\vcB{\FLMonitor{S_2}}(\FLIn{\hla})\bigr) \inlineExplain{I.H. for $S_2$} &  & \\
             & \qiff \vcB{\FLMonitor{S_1} \fatsemi \FLMonitor{S_2}}(\FLIn{\hla}) \inlineExplain{Def. \symVcB}           &  & \\
             & \qiff \vcB{\FLMonitor{S_1 \fatsemi S_2}}(\FLIn{\hla}) \inlineExplain{Def. monitor}
        \end{align*}

        \proofcase{$S = \stmtHavoc{x}$}
        This is immediate from the definition of~$\symVc$, the definition
        of~$\symVcB$, and the fact that $\FLMonitor{\stmtHavoc{x}} = \stmtHavoc{x}$.

        \proofcase{$S = \stmtDemonic{S_1}{S_2}$}
        This is immediate from the induction hypotheses and
        \(
        \FLIn{X \cap Y}
        \iff
        \FLIn{X} \land \FLIn{Y}
        \).
    \end{itemize}
\end{proof}

\begin{lemma}[Lax lower-bound translation]
    \label{lem:residual-language-monitor-lax-joins}
    Let $S_0$ range over the fragment of \Cref{thm:residual-language-monitor}. For
    \[
        S ::= S_0
        \mid \stmtWeighted{S}{S}
        \mid \stmtAngelic{S}{S}
        \mid \coHavoc{x}
        \mid \coValidate
        \mid S \fatsemi S
    \]
    define $\FLMonitorPrime{-}$ by
    \begin{align*}
        \FLMonitorPrime{S_0}
         & = \FLMonitor{S_0}                                   \\
        \FLMonitorPrime{\stmtWeighted{S_1}{S_2}}
         & = \stmtWeighted{\FLMonitorPrime{S_1}}{\FLMonitorPrime{S_2}} \\
        \FLMonitorPrime{\stmtAngelic{S_1}{S_2}}
         & = \stmtAngelic{\FLMonitorPrime{S_1}}{\FLMonitorPrime{S_2}} \\
        \FLMonitorPrime{\coHavoc{x}}
         & = \coHavoc{x}                                              \\
        \FLMonitorPrime{\coValidate}
         & = \stmtAssert{(\FLMonVar \subseteq \emptyset)}              \\
        \FLMonitorPrime{S_1 \fatsemi S_2}
         & = \FLMonitorPrime{S_1} \fatsemi \FLMonitorPrime{S_2}.
    \end{align*}
    Then, for every such statement~$S$ and every language~$\hla$,
    \[
        \vcB{\FLMonitorPrime{S}}(\FLIn{\hla})
        \Rightarrow
        \FLIn{\vc{S}(\hla)}.
    \]
\end{lemma}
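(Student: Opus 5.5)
The proof is a structural induction on $S$ following the grammar of the lemma, with the inductive hypothesis
\[
    \vcB{\FLMonitorPrime{S_i}}(\FLIn{\psi}) \Rightarrow \FLIn{\vc{S_i}(\psi)}
\]
for all languages $\psi$ not containing $\FLMonVar$. Freshness of $\FLMonVar$ is used throughout, exactly as in \Cref{thm:residual-language-monitor}.

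\emph{Base fragment.} For $S = S_0$ the claim follows directly from \Cref{thm:residual-language-monitor}. That theorem gives an equivalence, and we only need the right-to-left direction.

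\emph{Sequential composition.} For $S = S_1 \fatsemi S_2$ I would first apply monotonicity of $\symVcB$ in the Boolean setting. The hypothesis for $S_2$ turns $\vcB{\FLMonitorPrime{S_2}}(\FLIn{\hla}) \Rightarrow \FLIn{\vc{S_2}(\hla)}$ into
\[
    \vcB{\FLMonitorPrime{S_1}}\bigl(\vcB{\FLMonitorPrime{S_2}}(\FLIn{\hla})\bigr) \Rightarrow \vcB{\FLMonitorPrime{S_1}}\bigl(\FLIn{\vc{S_2}(\hla)}\bigr).
\]
Then the hypothesis for $S_1$, applied to the post $\vc{S_2}(\hla)$ (which is $\FLMonVar$-free), yields $\FLIn{\vc{S_1}(\vc{S_2}(\hla))}$. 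The induction therefore has to carry the hypothesis for all posts, and I need monotonicity of $\vcB{\cdot}$, which is standard.

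\emph{Weighted and angelic branching.} In $\modlang$ both $\splus$ and $\hor$ are $\cup$, so $\vc{S}(\hla) = \vc{S_1}(\hla) \cup \vc{S_2}(\hla)$. On the Boolean side, $\vcB{\cdot}$ of either branching is the disjunction of the two branch preconditions. If one disjunct holds, the hypothesis gives $\FLIn{\vc{S_i}(\hla)}$, and $\vc{S_i}(\hla) \subseteq$ the union closes the case. This is exactly where laxness enters, since $\FLMonVar \subseteq X \cup Y$ does not imply $\FLMonVar \subseteq X$ or $\FLMonVar \subseteq Y$.

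\emph{Cohavoc.} Similarly, $\vcB{\coHavoc{x}}$ is an existential over values of $x$. A witness $n$ gives $\FLMonVar \subseteq \vc{S}(\hla)$ evaluated with $x := n$, which lies below $\bigcup_n$, and that union is the $\symVc$ semantics of $\coHavoc{x}$. Since $\FLMonVar$ is fresh, $x \neq \FLMonVar$, so havocking $x$ does not disturb $\FLMonVar$.

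\emph{Covalidation.} The translation is $\stmtAssert{(\FLMonVar \subseteq \emptyset)}$, so $\vcB{\cdot}(\FLIn{\hla})$ equals $\FLMonVar = \emptyset \land \FLIn{\hla}$. This implies $\FLMonVar = \emptyset$, and $\emptyset$ is contained in $\heylocovalidate{\hla}$ trivially.

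\emph{Main obstacle.} The sequential case looks the most delicate. It requires stating the hypothesis universally over $\FLMonVar$-free posts and checking that $\vc{S_2}(\hla)$ stays $\FLMonVar$-free. The remaining work is bookkeeping about which $\symVcB$ clauses give disjunctions and existentials.
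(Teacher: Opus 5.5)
Your proof is correct and follows the paper's proof essentially step for step. The $S_0$ cases are inherited from \Cref{thm:residual-language-monitor}. Branching and cohavoc are handled by a disjunct or existential witness that lands in one member of the union, and covalidation by $\FLMonVar = \emptyset$. Sequencing uses the hypothesis for $S_2$, then monotonicity of $\symVcB$, then the hypothesis for $S_1$ at the $\FLMonVar$-free post $\vc{S_2}(\hla)$. One small slip: in the cohavoc case the witness $n$ gives $\FLMonVar \subseteq \hla\substBy{x}{n}$, not $\vc{S}(\hla)$ evaluated at $x := n$.
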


\begin{proof}
    The cases from \Cref{thm:residual-language-monitor} are identical. We only
    show the additional lax cases and sequencing.

    \begin{itemize}
        \proofcase{$S = \stmtWeighted{S_1}{S_2}$ or $S = \stmtAngelic{S_1}{S_2}$}
        \begin{align*}
             & \vcB{\FLMonitorPrime{S}}(\FLIn{\hla})                                                                                                  &  & \\
             & \qiff \vcB{\FLMonitorPrime{S_1}}(\FLIn{\hla}) \lor \vcB{\FLMonitorPrime{S_2}}(\FLIn{\hla}) \inlineExplain{Defs. monitor'/\symVcB} &  & \\
             & \qimplies \FLIn{\vc{S_1}(\hla)} \lor \FLIn{\vc{S_2}(\hla)} \inlineExplain{I.H.}                                                 &  & \\
             & \qimplies \FLIn{\vc{S_1}(\hla) \cup \vc{S_2}(\hla)} \inlineExplain{Union}                                                       &  & \\
             & \qiff \FLIn{\vc{S}(\hla)} \inlineExplain{Def. \symVc}
        \end{align*}

        \proofcase{$S = \coHavoc{x}$}
        \begin{align*}
             & \vcB{\FLMonitorPrime{\coHavoc{x}}}(\FLIn{\hla})                            &  & \\
             & \qiff \vcB{\coHavoc{x}}(\FLIn{\hla}) \inlineExplain{Def. monitor'}         &  & \\
             & \qiff \exists v.\, \FLIn{\hla\substBy{x}{v}} \inlineExplain{Def. \symVcB} &  & \\
             & \qimplies \FLIn{\bigcup_v \hla\substBy{x}{v}} \inlineExplain{Witness}   &  & \\
             & \qiff \FLIn{\vc{\coHavoc{x}}(\hla)} \inlineExplain{Def. \symVc}
        \end{align*}

        \proofcase{$S = \coValidate$}
        \begin{align*}
             & \vcB{\FLMonitorPrime{\coValidate}}(\FLIn{\hla})                                      &  & \\
             & \qiff \FLIn{\emptyset} \land \FLIn{\hla} \inlineExplain{Defs. monitor'/\symVcB}      &  & \\
             & \qimplies \FLIn{\emptyset}                                                          &  & \\
             & \qimplies \FLIn{\vc{\coValidate}(\hla)} \inlineExplain{$\emptyset \subseteq X$}
        \end{align*}

        \proofcase{$S = S_1 \fatsemi S_2$}
        Since $\FLMonVar$ is fresh and does not occur in $S_2$ or~$\hla$, it also
        does not occur in $\vc{S_2}(\hla)$.
        \begin{align*}
             & \vcB{\FLMonitorPrime{S_1 \fatsemi S_2}}(\FLIn{\hla})                                                                         &  & \\
             & \qiff \vcB{\FLMonitorPrime{S_1}}\bigl(\vcB{\FLMonitorPrime{S_2}}(\FLIn{\hla})\bigr) \inlineExplain{Defs. \symVcB/monitor'} &  & \\
             & \qimplies \vcB{\FLMonitorPrime{S_1}}(\FLIn{\vc{S_2}(\hla)}) \inlineExplain{I.H. for $S_2$, monotonicity}                     &  & \\
             & \qimplies \FLIn{\vc{S_1}(\vc{S_2}(\hla))} \inlineExplain{I.H. for $S_1$}                                           &  & \\
             & \qiff \FLIn{\vc{S_1 \fatsemi S_2}(\hla)} \inlineExplain{Def. \symVc}
        \end{align*}
    \end{itemize}
\end{proof}

The left-hand side of \Cref{fig:cas-counter-livelock-monitor} shows the modified CAS livelock example $\clf'$ from \Cref{sec:cas-counter}, encoded by Park induction (cf.~\Cref{sec:local_park_induction}) in the formal-language instance of \wHeyVL{}, together with its translated monitor version in the Boolean instance over residual-language predicates according to \Cref{lem:residual-language-monitor-lax-joins}.
Because a procedure precondition is implemented as an initial $\symAssume$ and the monitor translation maps $\symAssume$ to an assignment on~$\FLMonVar$, the figure internalizes the original pre/postconditions as explicit wrapper statements rather than translating them through the procedure header.
The schematic branch $\BIGBRANCH{j=1}{N}{\ASSIGN{i}{j}}$ on the left is rendered on the right by the existential explicit choice $\coHavoc{i} \fatsemi \coAssume{\coEmbed{1 \le i \le N}}$.
Here $\coAssume{\coEmbed{1 \le i \le N}}$ keeps exactly the valid thread indices, since $\coEmbed{b}$ denotes the Boolean complement of the embedding of~$b$ and $\coAssume{\psi}$ excludes the states satisfying~$\psi$.
The lowered procedure verifies.
Thus, by soundness of local Park induction (\Cref{thm:heyvl-soundness-local-park-induction-encoding}) together with \Cref{lem:residual-language-monitor-lax-joins}, this implies $I_k \subseteq \wlp{\clf'}(\szero)$. In \Caesar{}, this example still has to be instantiated to a concrete finite nondeterministic choice.

\begin{figure}[H]
    \begingroup
    \small
    \let\qquad\quad
    \setlength{\tabcolsep}{0pt}
    \centering
    \begin{adjustbox}{max width=\linewidth}
        \begin{tabular}{@{}p{0.49\linewidth}@{\hspace{0.2em}}p{0.49\linewidth}@{}}
            \toprule
            \centering\makecell{Source\\$\wHeyVLLang{}$} &
            \centering\makecell{Lowered\\$\wHeyVLBool{}$}
            \tabularnewline
            \midrule
            \begin{minipage}[t]{\linewidth}
                \vspace{0pt}
                \[
                \begin{aligned}
                    & \weightedProcHead{\rigLang}{casLivelock}{\typeof{k}{\Nats}} \\
                    & \weightedProcOut{\typeof{i,v}{\Nats}, \typeof{l_v}{\listtype}} \\
                    & \qquad \Requires{\top} \\
                    & \qquad \Ensures{\top} \\
                    & \blockStart \\
                    & \qquad \Assume{I_k} \fatsemi \\
                    & \qquad \Assert{I_k} \fatsemi \\
                    & \qquad \Havoc{i, v, l_v} \fatsemi \\
                    & \qquad \Validate \fatsemi \\
                    & \qquad \Assume{I_k} \fatsemi \\
                    & \qquad \IF{\true} \\
                    & \qquad\qquad \BIGBRANCH{j=1}{N}{\ASSIGN{i}{j}} \fatsemi \\
                    & \qquad\qquad \IF{l_v[i] = v} \\
                    & \qquad\qquad\qquad \ASSIGN{v}{v+1} \fatsemi \\
                    & \qquad\qquad\qquad \stmtWeigh{S_i} \fatsemi \\
                    & \qquad\qquad\qquad \ASSIGN{l_v[i]}{v} \\
                    & \qquad\qquad \ELSE \\
                    & \qquad\qquad\qquad \stmtWeigh{F_i} \\
                    & \qquad\qquad \} \fatsemi \\
                    & \qquad\qquad \Assert{I_k} \fatsemi \Assume{\embed{\false}} \\
                    & \qquad \ELSE~\} \fatsemi \\
                    & \qquad \Assert{\emptyset} \\
                    & \blockEnd
                \end{aligned}
                \]
            \end{minipage}
            &
            \begin{minipage}[t]{\linewidth}
                \vspace{0pt}
                \[
                \begin{aligned}
                    & \weightedProcHead{\rigBool}{casLivelock}{\typeof{k}{\Nats}} \\
                    & \weightedProcOut{\typeof{i,v}{\Nats}, \typeof{l_v}{\listtype}, \typeof{\FLMonVar}{\FLType}} \\
                    & \qquad \Requires{\top} \\
                    & \qquad \Ensures{\top} \\
                    & \blockStart \\
                    & \qquad \ASSIGN{\FLMonVar}{\FLMonVar \cap I_k} \fatsemi \\
                    & \qquad \Assert{\FLIn{I_k}} \fatsemi \\
                    & \qquad \Havoc{i, v, l_v} \fatsemi \\
                    & \qquad \Assume{(\FLMonVar \neq \emptyset)} \fatsemi
                        \ASSIGN{\FLMonVar}{\FLAllWords} \fatsemi \\
                    & \qquad \ASSIGN{\FLMonVar}{\FLMonVar \cap I_k} \fatsemi \\
                    & \qquad \IF{\true} \\
                    & \qquad\qquad \coHavoc{i} \fatsemi \coAssume{\coEmbed{1 \le i \le N}} \fatsemi
                        \vphantom{\BIGBRANCH{j=1}{N}{\ASSIGN{i}{j}}} \\
                    & \qquad\qquad \IF{l_v[i] = v} \\
                    & \qquad\qquad\qquad \ASSIGN{v}{v+1} \fatsemi \\
                    & \qquad\qquad\qquad \Assert{\FLIn{S_i \cdot \FLAllWords}} \fatsemi
                        \ASSIGN{\FLMonVar}{S_i^{-1} \cdot \FLMonVar} \fatsemi \\
                    & \qquad\qquad\qquad \ASSIGN{l_v[i]}{v} \\
                    & \qquad\qquad \ELSE \\
                    & \qquad\qquad\qquad \Assert{\FLIn{F_i \cdot \FLAllWords}} \fatsemi
                        \ASSIGN{\FLMonVar}{F_i^{-1} \cdot \FLMonVar} \\
                    & \qquad\qquad \} \fatsemi \\
                    & \qquad\qquad \Assert{\FLIn{I_k}} \fatsemi
                        \ASSIGN{\FLMonVar}{\emptyset} \\
                    & \qquad \ELSE~\} \fatsemi \\
                    & \qquad \Assert{\FLIn{\emptyset}} \\
                    & \blockEnd
                \end{aligned}
                \]
            \end{minipage}
            \tabularnewline
            \bottomrule
        \end{tabular}
    \end{adjustbox}
    \caption{Modified CAS livelock witness from \Cref{sec:cas-counter} for $I_k \definedAs \iverson{l_v[k] \neq v} \cdot \Set{F_k^\omega}$ and its monitor translation into the Boolean instance over residual-language predicates, following \Cref{lem:residual-language-monitor-lax-joins}. Because assumptions lower to assignments, the original pre/post are made explicit as wrapper statements, so both procedures use the trivial outer specification. Here $\FLType$ denotes languages over $\Gamma^\infty$.}
    \label{fig:cas-counter-livelock-monitor}
    \endgroup
\end{figure}

\subsubsection{Upper-Bound Translation}

\begin{definition}[Residual-language upper-bound monitor translation]
    \label{def:residual-language-upper-monitor}
    Let $\FLMonVar$ be fresh and range over languages over $\FLAllWords$.
    For
    \[
        \begin{aligned}
            S ::= {} & \stmtRasgn{x}{e}
            \mid \stmtAssert{\hlb}
            \mid \stmtAssume{\hlb}
            \mid \coAssert{\hlb}
            \mid \coAssume{\hlb}
            \mid \stmtWeigh{a}          \\
            \mid {}  & \coValidate
            \mid \stmtWeighted{S}{S}
            \mid \stmtAngelic{S}{S}
            \mid \coHavoc{x}
            \mid S \fatsemi S
        \end{aligned}
    \]
    define $\FLUpMonitor{-}$ by
    \begin{align*}
        \FLUpMonitor{\stmtWeighted{S_1}{S_2}}
         & = \stmtDemonic{\FLUpMonitor{S_1}}{\FLUpMonitor{S_2}}                 \\
        \FLUpMonitor{\stmtAngelic{S_1}{S_2}}
         & = \stmtDemonic{\FLUpMonitor{S_1}}{\FLUpMonitor{S_2}}                 \\
        \FLUpMonitor{\coHavoc{x}}
         & = \stmtHavoc{x}                                                      \\
        \FLUpMonitor{S_1 \fatsemi S_2}
         & = \FLUpMonitor{S_1} \fatsemi \FLUpMonitor{S_2}                       \\
        \FLUpMonitor{\stmtAssert{\hlb}}
         & = \stmtRasgn{\FLMonVar}{\FLMonVar \cup (\FLAllWords \setminus \hlb)} \\
        \FLUpMonitor{\stmtAssume{\hlb}}
         & = \stmtAssert{((\FLAllWords \setminus \hlb) \subseteq \FLMonVar)}    \\
        \FLUpMonitor{\coAssert{\hlb}}
         & = \stmtAssert{(\hlb \subseteq \FLMonVar)}                            \\
        \FLUpMonitor{\coAssume{\hlb}}
         & = \stmtRasgn{\FLMonVar}{\FLMonVar \cup \hlb}                         \\
        \FLUpMonitor{\stmtWeigh{a}}
         & = \stmtRasgn{\FLMonVar}{a^{-1} \cdot \FLMonVar}                      \\
        \FLUpMonitor{\coValidate}
         & = \coAssert{(\FLAllWords \subseteq \FLMonVar)}
        \fatsemi
        \stmtRasgn{\FLMonVar}{\emptyset}
        \fatsemi
        \coValidate.
    \end{align*}
    It leaves $\stmtRasgn{x}{e}$ unchanged.
\end{definition}

\begin{theorem}
    \label{thm:residual-language-monitor-upper}
    Assume that $\FLMonVar$ is fresh, i.e., it does not occur in $S$ or~$\hla$.
    Then, for every statement~$S$ in the fragment above and language~$\hla$,
    \[
        \FLOut{\vc{S}(\hla)}
        \qiff
        \vcB{\FLUpMonitor{S}}(\FLOut{\hla})~.
    \]
\end{theorem}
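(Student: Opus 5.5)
The proof goes by structural induction on $S$, mirroring the proof of \Cref{thm:residual-language-monitor} but with the dual query $\FLOut{\cdot}$. For each base statement, I would unfold $\symVc$ on the left-hand side and rewrite $\hla \subseteq \FLMonVar$-style conditions into a condition on $\FLOut{\hla}$ after a substitution of $\FLMonVar$, or conjoined with a side condition, and then read that off as $\symVcB$ of the translated statement. Freshness of $\FLMonVar$ ensures that substituting into $\FLOut{\hla}$ only affects the $\FLMonVar$ occurrence, and that ordinary assignments $\stmtRasgn{x}{e}$ commute with $\FLOut{\cdot}$.

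The base cases are short set-algebra calculations.
\begin{itemize}
\item For $\stmtAssert{\hlb}$: $\hlb\cap\hla\subseteq\FLMonVar$ iff $\hla\subseteq\FLMonVar\cup(\FLAllWords\setminus\hlb)$, which is a substitution for $\FLMonVar$.
\item For $\stmtAssume{\hlb}$: $\neg\hlb\cup\hla\subseteq\FLMonVar$ splits into the conjunction $(\FLAllWords\setminus\hlb)\subseteq\FLMonVar \land \FLOut{\hla}$, i.e.\ an assert.
\item For $\coAssert{\hlb}$: $\hlb\cup\hla\subseteq\FLMonVar$ gives an assert of $\hlb\subseteq\FLMonVar$.
\item For $\coAssume{\hlb}$: $\hla\setminus\hlb\subseteq\FLMonVar$ iff $\hla\subseteq\FLMonVar\cup\hlb$.
\item For $\stmtWeigh{a}$: use \Cref{eq:residual-language-law-upper}, which needs no side condition.
\item For $\coValidate$: $\heylocovalidate{\hla}$ is $\emptyset$ if $\hla=\emptyset$ and $\FLAllWords$ otherwise. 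So the left side holds iff $\FLAllWords\subseteq\FLMonVar$ or $\hla\subseteq\emptyset$, i.e.\ $\FLAllWords\subseteq\FLMonVar \lor \FLOut{\hla}\substBy{\FLMonVar}{\emptyset}$. I then check that the Boolean semantics of $\coAssert{\cdot}$ (disjunction) followed by the assignment yields exactly this. The trailing Boolean $\coValidate$ is harmless because its argument is already crisp, so it acts as the identity.
\end{itemize}

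For the compound cases, both $\stmtWeighted{S_1}{S_2}$ and $\stmtAngelic{S_1}{S_2}$ have $\symVc$ equal to a union, since in $\modlang$ addition and join are both $\cup$. Then $X\cup Y\subseteq\FLMonVar$ iff both inclusions hold, which matches the conjunction produced by the demonic branch. For $\coHavoc{x}$, $\bigcup_v\hla\substBy{x}{v}\subseteq\FLMonVar$ iff $\forall v$ the inclusion holds, which is $\symVcB$ of $\stmtHavoc{x}$ because $\FLMonVar\neq x$ by freshness. Sequencing follows the same chain as before: apply the induction hypothesis for $S_1$ with postcondition $\vc{S_2}(\hla)$, which is $\FLMonVar$-free. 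Then rewrite the inner $\FLOut{\vc{S_2}(\hla)}$ via the induction hypothesis for $S_2$; this needs equivalence rather than just implication, and the induction hypothesis provides it.

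I expect the main obstacle to be the sequencing case. There the induction hypothesis for $S_1$ is stated for a postcondition of the literal form $\FLOut{\psi}$, but after rewriting we must apply $\vcB{\FLUpMonitor{S_1}}$ to the Boolean formula $\vcB{\FLUpMonitor{S_2}}(\FLOut{\hla})$, which is no longer syntactically of that form. I would handle this by the pointwise equivalence obtained from the induction hypothesis for $S_2$: the two Boolean predicates agree at every state, including every value of $\FLMonVar$. Combined with the fact that $\symVcB$ is extensional on semantically equal postconditions, this justifies the replacement. The $\coValidate$ case is the other place needing care with the exact Boolean semantics of the three-statement translation.
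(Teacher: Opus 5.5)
Your proposal is correct and follows essentially the same route as the paper: a structural induction using the same set-algebraic calculation in each base case, including \Cref{eq:residual-language-law-upper} for $\stmtWeigh{a}$ and the disjunction $(\FLAllWords\subseteq\FLMonVar)\lor(\hla\subseteq\emptyset)$ for $\coValidate$. It also shares the paper's union-to-conjunction correspondence for both branching constructs and $\coHavoc{x}$, the same two-step use of the induction hypotheses for sequencing, and your extra remarks (the trailing Boolean $\coValidate$ is the identity on crisp predicates, and $\symVcB$ respects semantically equivalent postconditions) only make explicit what the paper's calculation uses implicitly.
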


\begin{proof}
        The proof is by structural induction.

    \begin{itemize}
        \proofcase{$S = \stmtAssert{\hlb}$}
        \begin{align*}
             & \FLOut{\vc{\stmtAssert{\hlb}}(\hla)}                                                                                      &  & \\
             & \qiff \FLOut{\hlb \cap \hla} \inlineExplain{Def. \symVc}                                                                  &  & \\
             & \qiff \hla \subseteq \FLMonVar \cup (\FLAllWords \setminus \hlb) \inlineExplain{Set difference}                           &  & \\
             & \qiff \FLOut{\hla}\substBy{\FLMonVar}{\FLMonVar \cup (\FLAllWords \setminus \hlb)} \inlineExplain{Def. substBy}           &  & \\
             & \qiff \vcB{\stmtRasgn{\FLMonVar}{\FLMonVar \cup (\FLAllWords \setminus \hlb)}}(\FLOut{\hla}) \inlineExplain{Def. \symVcB} &  & \\
             & \qiff \vcB{\FLUpMonitor{\stmtAssert{\hlb}}}(\FLOut{\hla}) \inlineExplain{Def. monitor}
        \end{align*}

        \proofcase{$S = \stmtAssume{\hlb}$}
        \begin{align*}
                                                        & \FLOut{\vc{\stmtAssume{\hlb}}(\hla)}                                                                                   &   & \\
                                                        & \qiff \FLOut{\hlb \impl \hla} \inlineExplain{Def. \symVc}                                                              &   & \\
                                                        & \qiff (\FLAllWords \setminus \hlb) \subseteq \FLMonVar
            \land
            \FLOut{\hla} \inlineExplain{Set difference} &                                                                                                                        &     \\
                                                        & \qiff \vcB{\stmtAssert{((\FLAllWords \setminus \hlb) \subseteq \FLMonVar)}}(\FLOut{\hla}) \inlineExplain{Def. \symVcB} &   & \\
                                                        & \qiff \vcB{\FLUpMonitor{\stmtAssume{\hlb}}}(\FLOut{\hla}) \inlineExplain{Def. monitor}
        \end{align*}

        \proofcase{$S = \coAssert{\hlb}$}
        \begin{align*}
                                               & \FLOut{\vc{\coAssert{\hlb}}(\hla)}                                                             &   & \\
                                               & \qiff \FLOut{\hlb \cup \hla} \inlineExplain{Def. \symVc}                                       &   & \\
                                               & \qiff \hlb \subseteq \FLMonVar
            \land
            \FLOut{\hla} \inlineExplain{Union} &                                                                                                &     \\
                                               & \qiff \vcB{\stmtAssert{(\hlb \subseteq \FLMonVar)}}(\FLOut{\hla}) \inlineExplain{Def. \symVcB} &   & \\
                                               & \qiff \vcB{\FLUpMonitor{\coAssert{\hlb}}}(\FLOut{\hla}) \inlineExplain{Def. monitor}
        \end{align*}

        \proofcase{$S = \coAssume{\hlb}$}
        \begin{align*}
             & \FLOut{\vc{\coAssume{\hlb}}(\hla)}                                                                &  & \\
             & \qiff \FLOut{\hla \setminus \hlb} \inlineExplain{Def. \symVc}                                     &  & \\
             & \qiff \hla \subseteq \FLMonVar \cup \hlb \inlineExplain{Set difference}                           &  & \\
             & \qiff \FLOut{\hla}\substBy{\FLMonVar}{\FLMonVar \cup \hlb} \inlineExplain{Def. substBy}           &  & \\
             & \qiff \vcB{\stmtRasgn{\FLMonVar}{\FLMonVar \cup \hlb}}(\FLOut{\hla}) \inlineExplain{Def. \symVcB} &  & \\
             & \qiff \vcB{\FLUpMonitor{\coAssume{\hlb}}}(\FLOut{\hla}) \inlineExplain{Def. monitor}
        \end{align*}

        \proofcase{$S = \stmtWeigh{a}$}
        \begin{align*}
             & \FLOut{\vc{\stmtWeigh{a}}(\hla)}                                                                     &  & \\
             & \qiff \FLOut{a \cdot \hla} \inlineExplain{Def. \symVc}                                               &  & \\
             & \qiff \hla \subseteq a^{-1} \cdot \FLMonVar \inlineExplain{\Cref{eq:residual-language-law-upper}} &  & \\
             & \qiff \FLOut{\hla}\substBy{\FLMonVar}{a^{-1} \cdot \FLMonVar} \inlineExplain{Def. substBy}           &  & \\
             & \qiff \vcB{\stmtRasgn{\FLMonVar}{a^{-1} \cdot \FLMonVar}}(\FLOut{\hla}) \inlineExplain{Def. \symVcB} &  & \\
             & \qiff \vcB{\FLUpMonitor{\stmtWeigh{a}}}(\FLOut{\hla}) \inlineExplain{Def. monitor}
        \end{align*}

        \proofcase{$S = \coValidate$}
        \begin{align*}
                                                                                                                       & \FLOut{\vc{\coValidate}(\hla)}                                                                                                                                             &   & \\
                                                                                                                       & \qiff \FLOut{\heylocovalidate{\hla}} \inlineExplain{Def. \symVc}                                                                                                           &   & \\
                                                                                                                       & \qiff (\FLAllWords \subseteq \FLMonVar)
            \lor
            \heylocovalidate{\left(\FLOut{\hla}\substBy{\FLMonVar}{\emptyset}\right)} \inlineExplain{Def. \coValidate} &                                                                                                                                                                            &     \\
                                                                                                                       & \qiff (\FLAllWords \subseteq \FLMonVar)
            \lor
            \vcB{\coValidate}(\FLOut{\hla}\substBy{\FLMonVar}{\emptyset}) \inlineExplain{Def. \symVcB}                 &                                                                                                                                                                            &     \\
                                                                                                                       & \qiff \vcB{\coAssert{(\FLAllWords \subseteq \FLMonVar)}}\bigl(\vcB{\coValidate}(\FLOut{\hla}\substBy{\FLMonVar}{\emptyset})\bigr) \inlineExplain{Def. \symVcB}             &   & \\
                                                                                                                       & \qiff \vcB{\coAssert{(\FLAllWords \subseteq \FLMonVar)}}\bigl(\vcB{\stmtRasgn{\FLMonVar}{\emptyset} \fatsemi \coValidate}(\FLOut{\hla})\bigr) \inlineExplain{Def. \symVcB} &   & \\
                                                                                                                       & \qiff \vcB{\coAssert{(\FLAllWords \subseteq \FLMonVar)} \fatsemi \stmtRasgn{\FLMonVar}{\emptyset} \fatsemi \coValidate}(\FLOut{\hla}) \inlineExplain{Def. \symVcB}         &   & \\
                                                                                                                       & \qiff \vcB{\FLUpMonitor{\coValidate}}(\FLOut{\hla}) \inlineExplain{Def. monitor}
        \end{align*}

        \proofcase{$S = \stmtWeighted{S_1}{S_2}$ or $S = \stmtAngelic{S_1}{S_2}$}
        This is immediate from the induction hypotheses and
        \(
        \FLOut{X \cup Y}
        \iff
        \FLOut{X} \land \FLOut{Y}
        \).

        \proofcase{$S = \coHavoc{x}$}
        This is immediate from the definitions and
        \(
        \FLOut{\bigcup_{v} X_v}
        \iff
        \forall v.\ \FLOut{X_v}
        \).

        \proofcase{$S = \stmtRasgn{x}{e}$}
        This follows exactly as in the proof of \Cref{thm:residual-language-monitor}, since the translation leaves $\stmtRasgn{x}{e}$ unchanged.

        \proofcase{$S = S_1 \fatsemi S_2$}
        This follows exactly as in the proof of \Cref{thm:residual-language-monitor}, using sequential composition and the induction hypotheses.
    \end{itemize}
\end{proof}

\begin{lemma}[Lax upper-bound translation]
    \label{lem:residual-language-monitor-upper-lax-meets}
    Let $S_0$ range over the fragment of \Cref{thm:residual-language-monitor-upper}. For
    \[
        S ::= S_0
        \mid \stmtDemonic{S}{S}
        \mid \stmtHavoc{x}
        \mid \stmtValidate
        \mid S \fatsemi S
    \]
    define $\FLUpMonitorPrime{-}$ by
    \begin{align*}
        \FLUpMonitorPrime{S_0}
         & = \FLUpMonitor{S_0}                                      \\
        \FLUpMonitorPrime{\stmtDemonic{S_1}{S_2}}
         & = \stmtAngelic{\FLUpMonitorPrime{S_1}}{\FLUpMonitorPrime{S_2}} \\
        \FLUpMonitorPrime{\stmtHavoc{x}}
         & = \coHavoc{x}                                            \\
        \FLUpMonitorPrime{\stmtValidate}
         & = \stmtAssert{(\FLAllWords \subseteq \FLMonVar)}          \\
        \FLUpMonitorPrime{S_1 \fatsemi S_2}
         & = \FLUpMonitorPrime{S_1} \fatsemi \FLUpMonitorPrime{S_2}.
    \end{align*}
    Then, for every such statement~$S$ and every language~$\hla$,
    \[
        \vcB{\FLUpMonitorPrime{S}}(\FLOut{\hla})
        \Rightarrow
        \FLOut{\vc{S}(\hla)}.
    \]
\end{lemma}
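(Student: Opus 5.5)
We argue by structural induction on $S$, mirroring the proof of \Cref{lem:residual-language-monitor-lax-joins} with the roles of $\subseteq$ reversed. For every base statement $S_0$ from the fragment of \Cref{thm:residual-language-monitor-upper}, we have $\FLUpMonitorPrime{S_0}=\FLUpMonitor{S_0}$. Hence the required implication follows directly from the exact equivalence in \Cref{thm:residual-language-monitor-upper}. It remains to treat the three new constructs and sequencing. Throughout, freshness of $\FLMonVar$ guarantees that $\FLMonVar$ does not occur in $\vc{S_2}(\hla)$, so the induction hypothesis may be applied to intermediate postconditions.

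For demonic choice, $\vcB{\stmtAngelic{\FLUpMonitorPrime{S_1}}{\FLUpMonitorPrime{S_2}}}(\FLOut{\hla})$ is the disjunction of the two branch conditions. By the induction hypothesis, this disjunction implies $\FLOut{\vc{S_1}(\hla)}\lor\FLOut{\vc{S_2}(\hla)}$. Since $X\cap Y\subseteq X$ and $X\cap Y\subseteq Y$, either disjunct yields $\FLOut{\vc{S_1}(\hla)\cap\vc{S_2}(\hla)}$, which is $\FLOut{\vc{S}(\hla)}$.

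The remaining cases are handled as follows.
\begin{itemize}
\item For $\stmtHavoc{x}$, the translation $\coHavoc{x}$ gives $\exists v.\,\FLOut{\hla\substBy{x}{v}}$. Such a witness bounds the intersection $\bigcap_v \hla\substBy{x}{v}$ from above, so $\FLOut{\vc{\stmtHavoc{x}}(\hla)}$ holds.
\item For $\stmtValidate$, the translation yields $\FLAllWords\subseteq\FLMonVar$ (conjoined with the postcondition). Then every language, and in particular $\heylovalidate{\hla}$, is contained in $\FLMonVar$.
\item For sequencing, the induction hypothesis for $S_2$ gives $\vcB{\FLUpMonitorPrime{S_2}}(\FLOut{\hla})\Rightarrow\FLOut{\vc{S_2}(\hla)}$. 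Monotonicity of $\symVcB$ then gives $\vcB{\FLUpMonitorPrime{S_1}}(\vcB{\FLUpMonitorPrime{S_2}}(\FLOut{\hla}))\Rightarrow\vcB{\FLUpMonitorPrime{S_1}}(\FLOut{\vc{S_2}(\hla)})$. Finally, the induction hypothesis for $S_1$, instantiated with postcondition $\vc{S_2}(\hla)$, finishes this case.
\end{itemize}

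The main obstacle is the sequencing case. It relies on monotonicity of $\symVcB$ for the Boolean instance. This holds because every \wHeyVL{} transformer is monotone: each construct is built from monotone lattice operations, and $\impl$ and $\coimpl$ are monotone in their second argument by \Cref{prop:bi-heyting:properties}. A further subtlety is that the induction hypothesis must be stated for arbitrary postconditions $\hla$ not mentioning $\FLMonVar$, so that it can be applied to $\vc{S_2}(\hla)$.
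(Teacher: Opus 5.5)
Your proposal is correct and follows the paper's proof essentially step for step. Both argue by structural induction: $S_0$ is discharged by the exact equivalence of \Cref{thm:residual-language-monitor-upper}; demonic choice follows from the induction hypotheses plus $X\cap Y\subseteq X,Y$; havoc uses a witness $v$ to bound $\bigcap_v \hla\substBy{x}{v}$; validation follows from $\FLAllWords\subseteq\FLMonVar$; and sequencing combines the hypothesis for $S_2$, monotonicity of $\symVcB$, and the hypothesis for $S_1$ applied to the $\FLMonVar$-free postcondition $\vc{S_2}(\hla)$, which your write-up simply makes more explicit than the paper does.
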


\begin{proof}
        The cases from \Cref{thm:residual-language-monitor-upper} are identical. We only
    show the additional lax cases and sequencing.

    \begin{itemize}
        \proofcase{$S = \stmtDemonic{S_1}{S_2}$}
        \begin{align*}
             & \vcB{\FLUpMonitorPrime{S}}(\FLOut{\hla})                                                                                                  &  & \\
             & \qiff \vcB{\FLUpMonitorPrime{S_1}}(\FLOut{\hla}) \lor \vcB{\FLUpMonitorPrime{S_2}}(\FLOut{\hla}) \inlineExplain{Defs. $\mathsf{monitor}^{\uparrow\prime}$/\symVcB} &  & \\
             & \qimplies \FLOut{\vc{S_1}(\hla)} \lor \FLOut{\vc{S_2}(\hla)} \inlineExplain{I.H.}                                                 &  & \\
             & \qimplies \FLOut{\vc{S_1}(\hla) \cap \vc{S_2}(\hla)} \inlineExplain{Intersection}                                                       &  & \\
             & \qiff \FLOut{\vc{S}(\hla)} \inlineExplain{Def. \symVc}
        \end{align*}

        \proofcase{$S = \stmtHavoc{x}$}
        \begin{align*}
             & \vcB{\FLUpMonitorPrime{\stmtHavoc{x}}}(\FLOut{\hla})                          &  & \\
             & \qiff \vcB{\coHavoc{x}}(\FLOut{\hla}) \inlineExplain{Def. $\mathsf{monitor}^{\uparrow\prime}$} &  & \\
             & \qiff \exists v.\, \FLOut{\hla\substBy{x}{v}} \inlineExplain{Def. \symVcB} &  & \\
             & \qimplies \FLOut{\bigcap_v \hla\substBy{x}{v}} \inlineExplain{Witness} &  & \\
             & \qiff \FLOut{\vc{\stmtHavoc{x}}(\hla)} \inlineExplain{Def. \symVc}
        \end{align*}

        \proofcase{$S = \stmtValidate$}
        \begin{align*}
             & \vcB{\FLUpMonitorPrime{\stmtValidate}}(\FLOut{\hla})                                      &  & \\
             & \qiff (\FLAllWords \subseteq \FLMonVar) \land \FLOut{\hla} \inlineExplain{Defs. $\mathsf{monitor}^{\uparrow\prime}$/\symVcB} &  & \\
             & \qimplies \FLAllWords \subseteq \FLMonVar                                                  &  & \\
             & \qimplies \FLOut{\vc{\stmtValidate}(\hla)} \inlineExplain{$\vc{\stmtValidate}(\hla) \subseteq \FLAllWords$}
        \end{align*}

        \proofcase{$S = S_1 \fatsemi S_2$}
        Since $\FLMonVar$ is fresh and does not occur in $S_2$ or~$\hla$, it also
        does not occur in $\vc{S_2}(\hla)$.
        \begin{align*}
             & \vcB{\FLUpMonitorPrime{S_1 \fatsemi S_2}}(\FLOut{\hla})                                                                         &  & \\
             & \qiff \vcB{\FLUpMonitorPrime{S_1}}\bigl(\vcB{\FLUpMonitorPrime{S_2}}(\FLOut{\hla})\bigr) \inlineExplain{Defs. \symVcB/$\mathsf{monitor}^{\uparrow\prime}$} &  & \\
             & \qimplies \vcB{\FLUpMonitorPrime{S_1}}(\FLOut{\vc{S_2}(\hla)}) \inlineExplain{I.H. for $S_2$, monotonicity}                     &  & \\
             & \qimplies \FLOut{\vc{S_1}(\vc{S_2}(\hla))} \inlineExplain{I.H. for $S_1$}                                           &  & \\
             & \qiff \FLOut{\vc{S_1 \fatsemi S_2}(\hla)} \inlineExplain{Def. \symVc}
        \end{align*}
    \end{itemize}
\end{proof}

\subsection{Tropical Translation}
\label{sec:appendix-tropical-lowering}

The tropical monoid-module $\modtrop$ has weights in $\NatsX$, with $\oplus=\min$ and scalar multiplication $+$.
We embed weights into $\eureal$ by inclusion, so $\stmtWeighted{S_1}{S_2}$ maps to demonic choice and $\stmtWeigh{c}$ maps to $\stmtTick{c}$.
A source lower-bound check maps to a target upper-bound check, hence source procedures map to target coprocedures.
The verification statements map as in the definition below: assert, assume, havoc, and validate map to their \texttt{co} variants, and conversely.

\begin{definition}[Tropical translation]
    \label{def:tropical-translation}
    Define a translation
    \[
        \TropicalTrans{-} : \wHeyVLTrop{} \to \wHeyVLEUReal{}
    \]
    recursively by
    \begin{align*}
        \TropicalTrans{\stmtRasgn{x}{e}}
         & = \stmtRasgn{x}{e}                                                     \\
        \TropicalTrans{\stmtWeigh{c}}
         & = \stmtTick{c}                                                         \\
        \TropicalTrans{\stmtWeighted{S_1}{S_2}}
         & = \stmtDemonic{\TropicalTrans{S_1}}{\TropicalTrans{S_2}}               \\
        \TropicalTrans{S_1 \fatsemi S_2}
         & = \TropicalTrans{S_1} \fatsemi \TropicalTrans{S_2}                     \\
        \TropicalTrans{\stmtDemonic{S_1}{S_2}}
         & = \stmtAngelic{\TropicalTrans{S_1}}{\TropicalTrans{S_2}}               \\
        \TropicalTrans{\stmtAngelic{S_1}{S_2}}
         & = \stmtDemonic{\TropicalTrans{S_1}}{\TropicalTrans{S_2}}               \\
        \TropicalTrans{\stmtAssert{\hlb}}
         & = \coAssert{\widehat{\hlb}}                                            \\
        \TropicalTrans{\stmtAssume{\hlb}}
         & = \coAssume{\widehat{\hlb}}                                            \\
        \TropicalTrans{\coAssert{\hlb}}
         & = \stmtAssert{\widehat{\hlb}}                                          \\
        \TropicalTrans{\coAssume{\hlb}}
         & = \stmtAssume{\widehat{\hlb}}                                          \\
        \TropicalTrans{\stmtHavoc{x}}
         & = \coHavoc{x}                                                          \\
        \TropicalTrans{\coHavoc{x}}
         & = \stmtHavoc{x}                                                        \\
        \TropicalTrans{\stmtValidate}
         & = \coValidate                                                          \\
        \TropicalTrans{\coValidate}
         & = \stmtValidate.
    \end{align*}
    On the right-hand side, each weight $c \in \NatsX$ is read as an element of $\eureal$.
    For formulae, $\widehat{\hlb}$ embeds constants pointwise, swaps $\sqcap$ with $\sqcup$, swaps $\impl$ with $\coimpl$, and swaps quantified infima with suprema.
    In particular, Boolean embeddings are dualized as $\widehat{\embed{b}} = \embed{\neg b}$.
    Any derived statement is translated after expanding it into this language.
\end{definition}

\begin{theorem}
    \label{thm:tropical-translation}
    For every statement $S$ of \wHeyVLTrop{}, every postweighting $r : \States \to \NatsX$, and every state $\State \in \States$,
    \[
        \vc{\TropicalTrans{S}}(\widehat{r})(\State)
        =
        \widehat{\vcTropical{S}(r)}(\State).
    \]
\end{theorem}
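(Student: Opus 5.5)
The proof is a structural induction on $S$, done simultaneously with a formula-level lemma: for every tropical \wHeyLo{} formula $\hlb$ and every state $\State$, $\widehat{\hlb}(\State) = \iota(\hlb(\State))$, where $\iota : \NatsX \to \eureal$ is the inclusion. I will prove this lemma first, by induction on formulae. The key observation is that $\iota$ is an order anti-isomorphism from the tropical natural order $\sleq$ (which is $\geq$ on $\NatsX$) onto its image in $(\eureal,\leq)$. It sends the tropical $\bot=\infty$ to $\infty$, the top of $\eureal$, and the tropical $\top=0$ to $0$.

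Consequently $\iota$ maps meets to joins and joins to meets:
\[
\iota(a \hand_{\trop} b) = \iota(a) \hor \iota(b), \qquad \iota(a \hor_{\trop} b) = \iota(a) \hand \iota(b).
\]
Countable infima become suprema and vice versa, since $\NatsX$ is closed in $\eureal$ under countable $\inf$ and $\sup$. For the implications, \Cref{thm:totimpl_totcoimpl_bounded_lattice} applies on both sides because both domains are totally ordered. Since $a \totimpl_{\trop} b$ equals $\top_{\trop}=0$ if $a\geq b$ and $b$ otherwise, and $\iota(a)\totcoimpl\iota(b)$ equals $\bot=0$ if $\iota(b)\leq\iota(a)$ and $\iota(b)$ otherwise, the two agree. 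This justifies swapping $\impl$ with $\coimpl$, and symmetrically $\coimpl$ with $\impl$. Validation maps to covalidation: $\iota(\heylovalidate{x})$ is $0$ iff $x=0$, and $\infty$ otherwise, which is exactly $\heylocovalidate{\iota(x)}$. The Boolean embedding $\embed{b}$ is $0$ when $b$ holds and $\infty$ otherwise, which matches $\embed{\neg b}$ in $\eureal$. Module addition $\min$ on $\NatsX$ becomes $\min=\hand$ in $\eureal$, and scalar multiplication $+$ is preserved literally, since $+$ on $\NatsX$ coincides with $+$ on $\eureal$ (with $x+\infty=\infty$).

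For the statement-level induction, write $\vcTropical{S}(r)$ for the tropical $\symVc$ and compare it clause by clause with \Cref{tab:heyvl-semantics-modern} in $\eureal$.
\begin{itemize}
\item Assignment commutes with substitution.
\item Sequencing follows directly from the induction hypothesis.
\item $\stmtWeigh{c}$ gives $c+r$; I take $\stmtTick{c}$ in $\eureal$ to have \HeyVL{} semantics $\vc{\stmtTick{c}}(f)=c+f$ from \cite{SchroerBKKM23}, so the two agree.
\item Weighted choice gives $\min$ in the source and demonic choice $\hand=\min$ in the target.
\item Demonic and angelic choice are swapped because $\iota$ swaps meet and join.
\item Assert and assume become coassert and coassume via the formula lemma: $\iota(\hlb\hand r)=\widehat{\hlb}\hor\widehat r$ and $\iota(\hlb\impl r)=\widehat{\hlb}\coimpl\widehat r$, and dually for the \texttt{co} statements.
\item Havoc becomes cohavoc because infima over $\Nats$ become suprema.
\item Validate becomes covalidate by the computation above.
\end{itemize}

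The step I expect to be the main obstacle is the reward statement. The definitions here must be pinned down carefully against \HeyVL{}'s semantics, in particular how $\stmtTick{c}$ behaves with $\infty$ and the convention $0\cdot\infty=0$, which must not interfere, since only addition is involved. A second delicate point is that the implication swap needs the specific Gödel forms of the implications, not just some residual, so I will state the anti-isomorphism argument explicitly there. Everything else is routine bookkeeping.
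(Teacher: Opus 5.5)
Your proposal is correct and follows essentially the same route as the paper. The paper likewise does a structural induction on $S$ whose only semiring-specific cases are weighted choice (tropical $\min$ becomes demonic choice) and $\stmtWeigh{c}$ (the value $c + r$ matches the \HeyVL{} reward semantics), and it discharges all other cases via the order-reversing swaps $\sqcap/\sqcup$, $\impl/\coimpl$, infima/suprema and validation/covalidation. Your explicit lemma that inclusion is an order anti-isomorphism spells out what the paper compresses into ``the remaining cases use the same equations''. This includes your observation that the formula constructors $\splus$ and $\stimes$ must go to target meet and target addition rather than being copied literally.
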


\begin{proof}
    We prove the claim by structural induction on $S$.
    \proofheading{Semiring-specific cases}
    \begin{align*}
        & \widehat{\vcTropical{\stmtWeighted{S_1}{S_2}}(r)}(\State) \\
        &= \min\!\left(\widehat{\vcTropical{S_1}(r)}(\State), \widehat{\vcTropical{S_2}(r)}(\State)\right)
            \inlineExplain{Def. \symVcTropical} \\
        &= \vc{\TropicalTrans{S_1}}(\widehat{r})(\State) \hand \vc{\TropicalTrans{S_2}}(\widehat{r})(\State)
            \inlineExplain{I.H., target meet is $\min$} \\
        &= \vc{\stmtDemonic{\TropicalTrans{S_1}}{\TropicalTrans{S_2}}}(\widehat{r})(\State)
            \inlineExplain{Def. \symVc} \\
        &= \vc{\TropicalTrans{\stmtWeighted{S_1}{S_2}}}(\widehat{r})(\State)
            \inlineExplain{Def. translation}
    \end{align*}
    For $S = \stmtWeigh{c}$, we calculate
    \begin{align*}
        & \widehat{\vcTropical{\stmtWeigh{c}}(r)}(\State) \\
        &= c + r(\State)
            \inlineExplain{Def. \symVcTropical} \\
        &= \vc{\stmtTick{c}}(\widehat{r})(\State)
            \inlineExplain{Def. \symVc} \\
        &= \vc{\TropicalTrans{\stmtWeigh{c}}}(\widehat{r})(\State)
            \inlineExplain{Def. translation.}
    \end{align*}
    \proofheading{Remaining cases}
    Sequential composition and assignment follow from the induction hypotheses and substitution commuting with $\widehat{-}$.
    The remaining cases use the same equations for $\sqcap/\sqcup$, $\impl/\coimpl$, quantified infima/suprema, and validation/covalidation.
\end{proof}

\Cref{fig:tropical-lowering-example} shows the tropical ski-rental example and its translation.

\begin{figure}[t]
    \centering
    \begin{adjustbox}{max width=\linewidth}
        \begin{tabular}{@{}p{0.49\linewidth}@{\hspace{0.2em}}p{0.49\linewidth}@{}}
            \toprule
            \centering\makecell{Source\\$\wHeyVLTrop{}$} &
            \centering\makecell{Lowered\\$\wHeyVLEUReal{}$}
            \tabularnewline
            \midrule
            \begin{minipage}[t]{\linewidth}
                \vspace{0pt}
                \[
                \begin{aligned}
                    & \weightedProcHead{\rigTrop}{skiRental}{\typeof{n_0}{\Nats}, \typeof{c_B}{\Nats}} \\
                    & \weightedProcOut{\typeof{n}{\Nats}} \\
                    & \qquad \Requires{I(n_0)} \\
                    & \qquad \Ensures{0} \\
                    & \blockStart \\
                    & \qquad \ASSIGN{n}{n_0} \fatsemi \Assert{I(n)} \fatsemi \\
                    & \qquad \Havoc{n} \fatsemi \Validate \fatsemi \Assume{I(n)} \fatsemi \\
                    & \qquad \IF{n > 0} \\
                    & \qquad\qquad \ASSIGN{n}{n - 1} \fatsemi \\
                    & \qquad\qquad \symWeightedBranching~\{ \\
                    & \qquad\qquad\qquad \WEIGH{1} \\
                    & \qquad\qquad \}~\stmtElseStart \\
                    & \qquad\qquad\qquad \WEIGH{c_B} \fatsemi \ASSIGN{n}{0} \\
                    & \qquad\qquad \} \fatsemi \\
                    & \qquad\qquad \Assert{I(n)} \fatsemi \Assume{\embed{\false}} \\
                    & \qquad \ELSE~\} \\
                    & \blockEnd
                \end{aligned}
                \]
            \end{minipage}
            &
            \begin{minipage}[t]{\linewidth}
                \vspace{0pt}
                \[
                \begin{aligned}
                    & \weightedCoprocHead{\rigEUReal}{skiRental}{\typeof{n_0}{\Nats}, \typeof{c_B}{\Nats}} \\
                    & \weightedProcOut{\typeof{n}{\Nats}} \\
                    & \qquad \Requires{\widehat{I}(n_0)} \\
                    & \qquad \Ensures{0} \\
                    & \blockStart \\
                    & \qquad \ASSIGN{n}{n_0} \fatsemi \coAssert{\widehat{I}(n)} \fatsemi \\
                    & \qquad \coHavoc{n} \fatsemi \coValidate \fatsemi \coAssume{\widehat{I}(n)} \fatsemi \\
                    & \qquad \IF{n > 0} \\
                    & \qquad\qquad \ASSIGN{n}{n - 1} \fatsemi \\
                    & \qquad\qquad \stmtDemonicStart \\
                    & \qquad\qquad\qquad \stmtTick{1} \\
                    & \qquad\qquad \}~\stmtElseStart \\
                    & \qquad\qquad\qquad \stmtTick{c_B} \fatsemi \ASSIGN{n}{0} \\
                    & \qquad\qquad \} \fatsemi \\
                    & \qquad\qquad \coAssert{\widehat{I}(n)} \fatsemi \coAssume{\embed{\true}} \\
                    & \qquad \ELSE~\} \\
                    & \blockEnd
                \end{aligned}
                \]
            \end{minipage}
            \tabularnewline
            \bottomrule
        \end{tabular}
    \end{adjustbox}
    \caption{%
        Tropical ski-rental lower-bound proof and its translation to the $\eureal$ instance according to \Cref{def:tropical-translation}.
        The source procedure is translated to a target coprocedure, so \Caesar{} checks the source lower bound as a target upper bound.
        Here $I(n) \definedAs \mathit{offlineCost}(n, c_B)$ and $\widehat{I}$ is its embedded target formula.
    }
    \label{fig:tropical-lowering-example}
\end{figure}

\subsection{Clearance Translation}
\label{sec:appendix-clearance-lowering}

The clearance semiring $\Semiring_{\clear} = (\clearUniverse, \min, \max, \mathsf{TS}, \mathsf{P})$ uses levels $\mathsf{P}<\mathsf{C}<\mathsf{S}<\mathsf{TS}$.
We embed these levels into ordinary $\eureal$ by
\[
    \widehat{\mathsf{P}}=0, \qquad \widehat{\mathsf{C}}=1, \qquad \widehat{\mathsf{S}}=2, \qquad \widehat{\mathsf{TS}}=\infty.
\]
Thus the top clearance level $\mathsf{TS}$ maps to the top element $\infty$ of the $\eureal$ instance.
Under this embedding, $\stmtWeighted{S_1}{S_2}$ maps to demonic choice and $\stmtWeigh{c}$ maps to $\coAssert{\widehat{c}}$:
\[
    \vc{\coAssert{\widehat{c}}}(\widehat{r})(\State)
    =
    \widehat{c} \hor \widehat{r}(\State)
    =
    \widehat{\max(c, r(\State))}.
\]
A source lower-bound check maps to a target upper-bound check, hence source procedures map to target coprocedures.
Dually, a source upper-bound check maps to a target lower-bound check, hence source coprocedures map to target procedures.
The verification statements map as in the definition below: assert, assume, havoc, and validate map to their \texttt{co} variants, and conversely.
\begin{definition}[Clearance translation]
    \label{def:clearance-translation}
    Define a translation
    \[
        \AccessControlTrans{-} : \wHeyVLClearance{} \to \wHeyVLEUReal{}
    \]
    recursively by
    \begin{align*}
        \AccessControlTrans{\stmtRasgn{x}{e}}
         & = \stmtRasgn{x}{e}                                               \\
        \AccessControlTrans{\stmtWeighted{S_1}{S_2}}
         & = \stmtDemonic{\AccessControlTrans{S_1}}{\AccessControlTrans{S_2}} \\
        \AccessControlTrans{\stmtWeigh{c}}
         & = \coAssert{\widehat{c}}                                          \\
        \AccessControlTrans{S_1 \fatsemi S_2}
         & = \AccessControlTrans{S_1} \fatsemi \AccessControlTrans{S_2}      \\
        \AccessControlTrans{\stmtDemonic{S_1}{S_2}}
         & = \stmtAngelic{\AccessControlTrans{S_1}}{\AccessControlTrans{S_2}} \\
        \AccessControlTrans{\stmtAngelic{S_1}{S_2}}
         & = \stmtDemonic{\AccessControlTrans{S_1}}{\AccessControlTrans{S_2}} \\
        \AccessControlTrans{\stmtAssert{\hlb}}
         & = \coAssert{\widehat{\hlb}}                                      \\
        \AccessControlTrans{\stmtAssume{\hlb}}
         & = \coAssume{\widehat{\hlb}}                                      \\
        \AccessControlTrans{\coAssert{\hlb}}
         & = \stmtAssert{\widehat{\hlb}}                                    \\
        \AccessControlTrans{\coAssume{\hlb}}
         & = \stmtAssume{\widehat{\hlb}}                                    \\
        \AccessControlTrans{\stmtHavoc{x}}
         & = \coHavoc{x}                                                    \\
        \AccessControlTrans{\coHavoc{x}}
         & = \stmtHavoc{x}                                                  \\
        \AccessControlTrans{\stmtValidate}
         & = \coValidate                                                    \\
        \AccessControlTrans{\coValidate}
         & = \stmtValidate.
    \end{align*}
    On the right-hand side, clearance constants and formulae are read via the embedding above into~$\eureal$.
    For formulae, $\widehat{\hlb}$ embeds constants pointwise, swaps $\sqcap$ with $\sqcup$, swaps $\impl$ with $\coimpl$, and swaps quantified infima with suprema.
    In particular, Boolean embeddings are dualized as $\widehat{\embed{b}} = \embed{\neg b}$.
    Any derived statement is translated after expanding it into this language.
\end{definition}

\begin{theorem}
    \label{thm:clearance-translation}
    For every statement $S$ of \wHeyVLClearance{}, every clearance postweighting $r : \States \to \clearUniverse$, and every state $\State \in \States$,
    \[
        \vc{\AccessControlTrans{S}}(\widehat{r})(\State)
        =
        \widehat{\vcAccess{S}(r)}(\State).
    \]
\end{theorem}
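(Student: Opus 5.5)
The proof is a structural induction on $S$, mirroring \Cref{thm:tropical-translation}. First we fix what the embedding $\widehat{-}$ does order-theoretically. The clearance natural order is $\geq$ on $\mathsf{P}<\mathsf{C}<\mathsf{S}<\mathsf{TS}$, so the clearance meet $\hand$ is $\max$, the join $\hor$ is $\min$, $\bot=\mathsf{TS}$ and $\top=\mathsf{P}$. The map $\mathsf{P}\mapsto 0$, $\mathsf{C}\mapsto 1$, $\mathsf{S}\mapsto 2$, $\mathsf{TS}\mapsto\infty$ is an order-\emph{reversing} injection into the total order $(\eureal,\leq)$, and it sends bottom to top and top to bottom. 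Hence on values
\[
\widehat{x\hand y}=\widehat x\hor\widehat y,\qquad \widehat{x\hor y}=\widehat x\hand\widehat y .
\]
Countable infima and suprema are swapped in the same way, because the clearance carrier is finite: every extremum is attained and the image is closed. The embedding also exchanges the Gödel implications: $\widehat{x\totimpl y}=\widehat x\totcoimpl\widehat y$ and symmetrically. This follows by case distinction on \Cref{thm:totimpl_totcoimpl_bounded_lattice}. The case $x\hleq y$ in the clearance order gives $\top=\mathsf{P}\mapsto 0=\bot$, which corresponds to $\widehat y\leq\widehat x$ in $\eureal$. The other case returns $y$ on both sides. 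Consequently $\heylovalidate{\cdot}$ and $\heylocovalidate{\cdot}$ are exchanged, and $\widehat{\embed b}=\embed{\neg b}$ holds as stipulated. Combining these facts, we prove a formula-level lemma by induction on $\hlb$: $\widehat{\sem{\hlb}}=\sem{\widehat\hlb}$ pointwise, and $\widehat{-}$ commutes with substitution.

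The statement cases then follow one by one.

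\textbf{Weighted branching.} The clearance $\oplus=\min$ is the clearance join, so its image is $\eureal$'s $\min$. That is exactly $\vc{\stmtDemonic{\cdot}{\cdot}}$ applied to the induction hypotheses.

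\textbf{Weigh.} The displayed calculation $\vc{\coAssert{\widehat c}}(\widehat r)=\max(\widehat c,\widehat r)=\widehat{\max(c,r)}$ applies, since $\widehat{-}$ is monotone increasing in the \emph{level} order, so it preserves $\max$ over levels.

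\textbf{Demonic and angelic choice.} These swap by the meet/join exchange.

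\textbf{Assert and assume.} $\hlb\hand\hla$ maps to $\widehat\hlb\hor\widehat\hla$, which is coassert. $\hlb\impl\hla$ maps to $\widehat\hlb\coimpl\widehat\hla$, which is coassume. The co-variants are symmetric.

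\textbf{Havoc.} The infimum over $\Nats$ maps to the supremum, which is cohavoc.

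\textbf{Validate.} Validate maps to covalidate and conversely.

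\textbf{Assignment and sequencing.} These follow from commutation with substitution and from two applications of the induction hypothesis. The postweighting $r$ may be taken as an arbitrary function into the clearance levels, since $\symVc$ is applied to its embedding.

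The main obstacle is the implication/coimplication exchange, together with the care needed at $\mathsf{TS}\mapsto\infty$ in \emph{both} directions. The map must be a bijection onto its image that preserves the comparison test. It must also produce a value that again lies in the image, so that later constructors still see embedded values; we handle this by checking that every $\eureal$ operation on values in $\{0,1,2,\infty\}$ stays inside this set. Infinite infima and suprema are the other place where something could fail, but finiteness of the image disposes of them.
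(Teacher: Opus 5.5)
Your proposal is correct and follows the paper's own structural induction on $S$. The only semiring-specific cases are weighted branching, where clearance $\min$ becomes the $\eureal$ meet and hence demonic choice, and $\symWeigh$, where monotonicity of $\widehat{-}$ in the level order turns $\max$ into the $\eureal$ join and hence $\symUp\symAssert$. All other cases reduce to the order-reversing embedding swapping $\sqcap/\sqcup$, $\impl/\coimpl$, quantified extrema, and validation/covalidation; you verify these explicitly (Gödel case analysis, attainment of extrema by finiteness) where the paper only asserts them.

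One small precision point. Your formula-level lemma $\widehat{\sem{\hlb}}=\sem{\widehat{\hlb}}$ requires source $\splus$ and $m\stimes(\cdot)$ inside $\hlb$ to be translated to $\sqcap$ and $\widehat{m}\sqcup(\cdot)$, not to $\eureal$'s $+$ and $\cdot$; the paper's definition of $\widehat{-}$ leaves this implicit. Accordingly, your closure remark holds only for the lattice-theoretic operations, not for every $\eureal$ operation.
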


\begin{proof}
    We prove the claim by structural induction on $S$.
    \proofheading{Semiring-specific cases}
    For $S = \stmtWeighted{S_1}{S_2}$, we calculate
    \begin{align*}
        & \widehat{\vcAccess{\stmtWeighted{S_1}{S_2}}(r)}(\State) \\
        &= \min\!\left(\widehat{\vcAccess{S_1}(r)}(\State), \widehat{\vcAccess{S_2}(r)}(\State)\right)
            \inlineExplain{Def. \symVcAccess} \\
        &= \vc{\AccessControlTrans{S_1}}(\widehat{r})(\State) \hand \vc{\AccessControlTrans{S_2}}(\widehat{r})(\State)
            \inlineExplain{I.H., target meet is $\min$} \\
        &= \vc{\stmtDemonic{\AccessControlTrans{S_1}}{\AccessControlTrans{S_2}}}(\widehat{r})(\State)
            \inlineExplain{Def. \symVc} \\
        &= \vc{\AccessControlTrans{\stmtWeighted{S_1}{S_2}}}(\widehat{r})(\State)
            \inlineExplain{Def. translation.}
    \end{align*}

    For $S = \stmtWeigh{c}$, we calculate
    \begin{align*}
        & \widehat{\vcAccess{\stmtWeigh{c}}(r)}(\State) \\
        &= \widehat{\max(c, r(\State))}
            \inlineExplain{Def. \symVcAccess} \\
        &= \widehat{c} \hor \widehat{r}(\State)
            \inlineExplain{definition of $\widehat{-}$} \\
        &= \vc{\coAssert{\widehat{c}}}(\widehat{r})(\State)
            \inlineExplain{Def. \symVc} \\
        &= \vc{\AccessControlTrans{\stmtWeigh{c}}}(\widehat{r})(\State)
            \inlineExplain{Def. translation.}
    \end{align*}

    \proofheading{Remaining cases}
    Sequential composition and assignment follow from the induction hypotheses and substitution commuting with $\widehat{-}$.
    The remaining cases use the same equations for $\sqcap/\sqcup$, $\impl/\coimpl$, quantified infima/suprema, and validation/covalidation.
\end{proof}

\Cref{fig:clearance-lowering-example} shows the network-security example and its translation.

\begin{figure}[t]
    \centering
    \begin{adjustbox}{max width=\linewidth}
        \begin{tabular}{@{}p{0.47\linewidth}@{\hspace{1.8em}}p{0.47\linewidth}@{}}
            \toprule
            \centering\makecell{Source\\$\wHeyVLClearance{}$} &
            \centering\makecell{Lowered\\$\wHeyVLEUReal{}$}
            \tabularnewline
            \midrule
            \begin{minipage}[t]{\linewidth}
                \vspace{0pt}
                \[
                \begin{aligned}
                    & \weightedCoprocHead{\rigClear}{networkSecurity}{\typeof{n_0}{\Nats}} \\
                    & \weightedProcOut{\typeof{n}{\Nats}} \\
                    & \qquad \Requires{I(n_0)} \\
                    & \qquad \Ensures{P} \\
                    & \blockStart \\
                    & \qquad \ASSIGN{n}{n_0} \fatsemi \coAssert{I(n)} \fatsemi \\
                    & \qquad \coHavoc{n} \fatsemi \coValidate \fatsemi \coAssume{I(n)} \fatsemi \\
                    & \qquad \IF{n > 0} \\
                    & \qquad\qquad \ASSIGN{n}{n - 1} \fatsemi \\
                    & \qquad\qquad \symWeightedBranching~\{ \\
                    & \qquad\qquad\qquad \WEIGH{S} \\
                    & \qquad\qquad \}~\stmtElseStart \\
                    & \qquad\qquad\qquad \IF{n = 0} \\
                    & \qquad\qquad\qquad\qquad \WEIGH{C} \\
                    & \qquad\qquad\qquad \ELSE \\
                    & \qquad\qquad\qquad\qquad \WEIGH{TS} \\
                    & \qquad\qquad\qquad \} \\
                    & \qquad\qquad \} \fatsemi \\
                    & \qquad\qquad \coAssert{I(n)} \fatsemi \coAssume{\embed{\true}} \\
                    & \qquad \ELSE~\} \\
                    & \blockEnd
                \end{aligned}
                \]
            \end{minipage}
            &
            \begin{minipage}[t]{\linewidth}
                \vspace{0pt}
                \[
                \begin{aligned}
                    & \weightedProcHead{\rigEUReal}{networkSecurity}{\typeof{n_0}{\Nats}} \\
                    & \weightedProcOut{\typeof{n}{\Nats}} \\
                    & \qquad \Requires{\widehat{I}(n_0)} \\
                    & \qquad \Ensures{0} \\
                    & \blockStart \\
                    & \qquad \ASSIGN{n}{n_0} \fatsemi \Assert{\widehat{I}(n)} \fatsemi \\
                    & \qquad \Havoc{n} \fatsemi \Validate \fatsemi \Assume{\widehat{I}(n)} \fatsemi \\
                    & \qquad \IF{n > 0} \\
                    & \qquad\qquad \ASSIGN{n}{n - 1} \fatsemi \\
                    & \qquad\qquad \stmtDemonicStart \\
                    & \qquad\qquad\qquad \coAssert{2} \\
                    & \qquad\qquad \}~\stmtElseStart \\
                    & \qquad\qquad\qquad \IF{n = 0} \\
                    & \qquad\qquad\qquad\qquad \coAssert{1} \\
                    & \qquad\qquad\qquad \ELSE \\
                    & \qquad\qquad\qquad\qquad \coAssert{\infty} \\
                    & \qquad\qquad\qquad \} \\
                    & \qquad\qquad \} \fatsemi \\
                    & \qquad\qquad \Assert{\widehat{I}(n)} \fatsemi \Assume{\embed{\false}} \\
                    & \qquad \ELSE~\} \\
                    & \blockEnd
                \end{aligned}
                \]
            \end{minipage}
            \tabularnewline
            \bottomrule
        \end{tabular}
    \end{adjustbox}
    \caption{%
        Network-security example from \Cref{sec:network-security-clearance} and its translation to the $\eureal$ instance according to \Cref{def:clearance-translation}.
        The source coprocedure is translated to a target procedure, so \Caesar{} checks the source upper bound as a target lower bound under the order-reversing embedding.
        On the right, we use the embedding $\widehat{\mathsf{P}}=0$, $\widehat{\mathsf{C}}=1$, $\widehat{\mathsf{S}}=2$, and $\widehat{\mathsf{TS}}=\infty$.
        Here $I(n) \definedAs \iverson{n = 0} \stimes P \ssplus \iverson{n = 1} \stimes C \ssplus \iverson{n \ge 2} \stimes S$.
    }
    \label{fig:clearance-lowering-example}
\end{figure}

\subsection{Why-Semiring Translation}
\label{sec:appendix-why-lowering}

Let $A$ be a finite set of provenance atoms.
The carrier of the Why semiring is $\pDNF(A)$.
Source assertions therefore have type $H : \Sigma \to \pDNF(A)$.

To lower into Boolean \wHeyVL{}, we add one fresh Boolean variable $p_a$ for each provenance atom $a \in A$, and write $\widehat{\Sigma} \definedAs \Sigma \times (A \to \{\false,\true\})$ for the extended target state space.
An extended state is written $\widehat{\sigma} = (\sigma,\rho)$, where $\rho(a)$ is the value of~$p_a$.

For a Why formula $W \in \pDNF(A)$, define $\WhyBool{W}$ to be the Boolean assertion on~$\widehat{\Sigma}$ obtained by replacing each atom $a$ with~$p_a$, i.e., $\WhyBool{W}(\sigma,\rho) \definedAs \WhyEval{\rho}{W}$.
For assertions $H : \Sigma \to \pDNF(A)$, define $\WhyBool{H}(\sigma,\rho) \definedAs \WhyEval{\rho}{H(\sigma)}$.
The target verification calculus is still the ordinary Boolean calculus $\symVcBool$, now interpreted over assertions on~$\widehat{\Sigma}$.
For source \wHeyVLWhy{} programs in this fragment, $\stmtWeigh{W}$ with $W \in \pDNF(A)$ maps the postweight $H$ to $\sigma \mapsto W \land H(\sigma)$, and weighted choice combines branches by disjunction.
The pointwise Booleanization below preserves finite conjunctions and disjunctions of provenance formulas.
It does not, in general, preserve Heyting implication, coimplication, validity, or covalidity.
Thus, arbitrary $\stmtAssume{-}$, $\coAssume{-}$, $\stmtValidate$, and $\coValidate$ are not part of the structural Why lowering theorem.
The special fragment $\stmtValidate \fatsemi \stmtAssume{P}$, which is used in the proof rules, is still exact.
For Why assertions $P,H : \Sigma \to \pDNF(A)$, the original Why-\wHeyVL source validates the implication $P(\sigma) \impl_{\pDNF} H(\sigma)$, and therefore only checks whether this value is $\top$.
In a Heyting algebra this is equivalent to an order check, and for $\pDNF(A)$ the order is entailment of the represented predicates:
\begin{align*}
    P(\sigma) \impl_{\pDNF} H(\sigma) = \top
    &\Longleftrightarrow
    P(\sigma) \hleq H(\sigma) \\
    &\Longleftrightarrow
    \forall \rho : A \to \Bools.\; \WhyEval{\rho}{P(\sigma)} \Rightarrow \WhyEval{\rho}{H(\sigma)}.
\end{align*}
Thus, Boolean implication is used only for this entailment check, not to represent the Why implication itself.

\begin{definition}[Why-semiring translation]
    \label{def:why-translation}
    Fix an enumeration $A = \{a_1,\ldots,a_n\}$.
    On the supported fragment, the statement translation $\symWhyTrans : \wHeyVLWhy{} \to \wHeyVLBool{}$ over~$\widehat{\Sigma}$ is defined by
    \[
        \begin{aligned}
            \WhyTrans{\stmtWeigh{W}}
             & = \stmtAssert{\WhyBool{W}}, \\
            \WhyTrans{\stmtWeighted{S_1}{S_2}}
             & = \stmtAngelic{\WhyTrans{S_1}}{\WhyTrans{S_2}}, \\
            \WhyTrans{S_1 \fatsemi S_2}
             & = \WhyTrans{S_1} \fatsemi \WhyTrans{S_2}, \\
            \WhyTrans{\stmtAssert{H}}
             & = \stmtAssert{\WhyBool{H}}, \\
            \WhyTrans{\coAssert{H}}
             & = \coAssert{\WhyBool{H}}, \\
            \WhyTrans{\stmtAssume{G}}
             & = \stmtAssume{\WhyBool{G}}
             \qquad\text{if } G(\sigma) \in \{\bot,\top\} \text{ for all } \sigma, \\
            \WhyTrans{\stmtValidate \fatsemi \stmtAssume{P}}
             & =
             \stmtHavoc{p_{a_1}} \fatsemi \cdots \fatsemi
             \stmtHavoc{p_{a_n}} \fatsemi
             \stmtValidate \fatsemi
             \stmtAssume{\WhyBool{P}}, \\
            \WhyTrans{\stmtDemonic{S_1}{S_2}}
             & = \stmtDemonic{\WhyTrans{S_1}}{\WhyTrans{S_2}}, \\
            \WhyTrans{\stmtAngelic{S_1}{S_2}}
             & = \stmtAngelic{\WhyTrans{S_1}}{\WhyTrans{S_2}},
        \end{aligned}
    \]
    for weights $W \in \pDNF(A)$ and assertions $P,H,G : \Sigma \to \pDNF(A)$.
    State transformers that leave the fresh Boolean variables $p_a$ unchanged are translated unchanged.
\end{definition}

\begin{theorem}
    \label{thm:why-translation}
    For every statement $S$ in the supported fragment of \Cref{def:why-translation} and every Why assertion $H : \Sigma \to \pDNF(A)$,
    \[
        \vcB{\WhyTrans{S}}(\WhyBool{H})
        =
        \WhyBool{\vcWhy{S}(H)}
    \]
    as Boolean assertions on~$\widehat{\Sigma}$.
\end{theorem}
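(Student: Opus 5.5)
We prove the claim by structural induction on the supported fragment of \Cref{def:why-translation}. The claim is an equality of Boolean assertions on $\widehat{\Sigma}$, so in every case we fix an extended state $(\sigma,\rho)$ and compare $\vcB{\WhyTrans{S}}(\WhyBool{H})(\sigma,\rho)$ with $\WhyEval{\rho}{\vcWhy{S}(H)(\sigma)}$.

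The key auxiliary fact is that $\WhyBool{-}$ is a lattice homomorphism pointwise in $\rho$. Since the order on $\pDNF(A)$ is semantic entailment, evaluating under $\rho$ commutes with $\wedge$, $\vee$, $0$ and $1$:
\[
  \WhyEval{\rho}{W\wedge H} = \WhyEval{\rho}{W}\land\WhyEval{\rho}{H},
  \qquad
  \WhyEval{\rho}{W\vee H} = \WhyEval{\rho}{W}\lor\WhyEval{\rho}{H}.
\]
A second fact is that crisp assertions $G$ with $G(\sigma)\in\{\bot,\top\}$ are mapped to constants independent of $\rho$.

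\textbf{Structural cases.} These follow from the two facts above.
\begin{itemize}
\item For $\stmtWeigh{W}$, the source gives $W\wedge H$ and the target gives $\WhyBool{W}\land\WhyBool{H}$.
\item For weighted and angelic choice, both sides are disjunctions of the branch results, and the induction hypothesis applies to each branch. Demonic choice is the same with conjunction.
\item For $\stmtAssert{H'}$ and $\coAssert{H'}$, both sides are a meet or a join with $H'$, respectively.
\item For sequential composition, apply the induction hypothesis to $S_2$ first and then to $S_1$.
\item For assignments and other state transformers that leave the $p_a$ untouched, substitution on $\sigma$ commutes with evaluation under $\rho$.
\item For $\stmtAssume{G}$ with crisp $G$, use \Cref{lem:bool_embedding_impl_coimpl} on both sides. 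If $G(\sigma)=\top$, both sides return the postassertion (respectively its Booleanization). If $G(\sigma)=\bot$, both sides return $\top$.
\end{itemize}

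\textbf{The main obstacle} is the block $\stmtValidate\fatsemi\stmtAssume{P}$. Here Heyting implication in $\pDNF(A)$ is not preserved pointwise by $\WhyBool{-}$, so the argument must go through validity rather than the formula itself. On the source side, $\vcWhy{\stmtValidate\fatsemi\stmtAssume{P}}(H)(\sigma) = \heylovalidate{P(\sigma)\impl_{\pDNF}H(\sigma)}$. By \Cref{thm:deduction_theorem} this equals $\top$ iff $P(\sigma)\hleq H(\sigma)$, and it equals $\bot$ otherwise. By the definition of the order this is equivalent to $\forall\rho'.\ \WhyEval{\rho'}{P(\sigma)}\Rightarrow\WhyEval{\rho'}{H(\sigma)}$.

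On the target side we compute $\vcB{}$ backwards through the block:
\begin{itemize}
\item $\stmtAssume{\WhyBool{P}}$ yields the Boolean implication $\WhyBool{P}\Rightarrow\WhyBool{H}$.
\item $\stmtValidate$ is the identity on Boolean values.
\item The havocs of $p_{a_1},\dots,p_{a_n}$ then take the infimum over all values of these variables, i.e.\ a universal quantification over $\rho'$.
\end{itemize}
The result is $\forall\rho'.\ (\WhyEval{\rho'}{P(\sigma)}\Rightarrow\WhyEval{\rho'}{H(\sigma)})$. This value no longer depends on $\rho$, and it matches the crisp source value. Applying $\WhyBool{-}$ to a crisp value $\top$ or $\bot$ gives exactly that constant, so the two sides agree.

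In this case we treat the block as a single syntactic unit of the translation. We must also keep the source postassertion independent of the $p_a$; this is why the translation forbids ordinary statements from writing them.
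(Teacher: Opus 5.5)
Your proposal is correct and follows essentially the same route as the paper: a pointwise structural induction that uses the fact that evaluation under $\rho$ preserves finite $\wedge$ and $\vee$, a direct check for crisp assumptions, and, for the block $\stmtValidate\fatsemi\stmtAssume{P}$, the observation that havocing the provenance variables turns the Boolean implication into semantic entailment, which matches $\heylovalidate{P(\sigma)\impl_{\pDNF}H(\sigma)}$ via the deduction theorem. One detail to state explicitly: for $\stmtHavoc{x}$ and $\coHavoc{x}$ on ordinary variables, the relevant fact is not commutation with substitution but that evaluation under $\rho$ preserves the infimum (resp.\ supremum) over all values of $x$; this holds because $\pDNF(A)$ is finite, so these quantifiers range over only finitely many distinct formulas.
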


\begin{proof}
        We prove the claim pointwise by induction on the syntax of the supported fragment.
    Fix an extended state $\widehat{\sigma} = (\sigma,\rho) \in \widehat{\Sigma}$.
    It suffices to show
    \[
        \vcB{\WhyTrans{S}}(\WhyBool{H})(\widehat{\sigma})
        \qiff
        \WhyEval{\rho}{\vcWhy{S}(H)(\sigma)}.
    \]

    \begin{itemize}
        \proofcase{$S = \stmtWeigh{W}$}
        \begin{align*}
            & \vcB{\WhyTrans{\stmtWeigh{W}}}(\WhyBool{H})(\widehat{\sigma}) \\
                &\qiff \vcB{\stmtAssert{\WhyBool{W}}}(\WhyBool{H})(\widehat{\sigma})
                    && \inlineExplain{Def. translation} \\
                &\qiff \WhyBool{W}(\widehat{\sigma}) \land \WhyBool{H}(\widehat{\sigma})
                    && \inlineExplain{Def. \symVcBool} \\
                &\qiff \WhyEval{\rho}{W} \land \WhyEval{\rho}{H(\sigma)}
                    && \inlineExplain{Def. Booleanization} \\
                &\qiff \WhyEval{\rho}{W \land H(\sigma)}
                    && \inlineExplain{Evaluation of positive DNF} \\
                &\qiff \WhyEval{\rho}{\vcWhy{\stmtWeigh{W}}(H)(\sigma)}
                    && \inlineExplain{Def. \symVcWhy}
        \end{align*}

        \proofcase{$S = \stmtWeighted{S_1}{S_2}$}
        \begin{align*}
            & \vcB{\WhyTrans{\stmtWeighted{S_1}{S_2}}}(\WhyBool{H})(\widehat{\sigma}) \\
                &\qiff \vcB{\stmtAngelic{\WhyTrans{S_1}}{\WhyTrans{S_2}}}(\WhyBool{H})(\widehat{\sigma})
                    && \inlineExplain{Def. translation} \\
                &\qiff \vcB{\WhyTrans{S_1}}(\WhyBool{H})(\widehat{\sigma}) \lor \vcB{\WhyTrans{S_2}}(\WhyBool{H})(\widehat{\sigma})
                    && \inlineExplain{Def. \symVcBool} \\
                &\qiff \WhyEval{\rho}{\vcWhy{S_1}(H)(\sigma)} \lor \WhyEval{\rho}{\vcWhy{S_2}(H)(\sigma)}
                    && \inlineExplain{I.H.} \\
                &\qiff \WhyEval{\rho}{\vcWhy{S_1}(H)(\sigma) \lor \vcWhy{S_2}(H)(\sigma)}
                    && \inlineExplain{Evaluation of positive DNF} \\
                &\qiff \WhyEval{\rho}{\vcWhy{\stmtWeighted{S_1}{S_2}}(H)(\sigma)}
                    && \inlineExplain{Def. \symVcWhy}
        \end{align*}

        \proofcase{$S = \stmtAssume{G}$ with $G(\sigma) \in \{\bot,\top\}$ for all $\sigma$}
        If $G(\sigma) = \top$, then both sides are equivalent to $\WhyEval{\rho}{H(\sigma)}$:
        \[
            \WhyBool{G}(\sigma,\rho) \Rightarrow \WhyBool{H}(\sigma,\rho)
            \qiff
            \true \Rightarrow \WhyEval{\rho}{H(\sigma)}
            \qiff
            \WhyEval{\rho}{\top \impl_{\pDNF} H(\sigma)}.
        \]
        If $G(\sigma) = \bot$, then both sides are true:
        \[
            \WhyBool{G}(\sigma,\rho) \Rightarrow \WhyBool{H}(\sigma,\rho)
            \qiff
            \false \Rightarrow \WhyEval{\rho}{H(\sigma)}
            \qiff
            \WhyEval{\rho}{\bot \impl_{\pDNF} H(\sigma)}.
        \]

        \proofcase{$S = \stmtValidate \fatsemi \stmtAssume{P}$}
        The added havoc statements range over all valuations of the provenance variables, hence
        \begin{align*}
             & \vcB{\WhyTrans{\stmtValidate \fatsemi \stmtAssume{P}}}(\WhyBool{H})(\sigma,\rho) \\
            &\qiff
                \forall \rho' : A \to \{\false,\true\}.\;
                \WhyBool{P}(\sigma,\rho') \Rightarrow \WhyBool{H}(\sigma,\rho')
                \inlineExplain{Def. translation, \symVcBool} \\
            &\qiff
                \forall \rho' : A \to \{\false,\true\}.\;
                \WhyEval{\rho'}{P(\sigma)} \Rightarrow \WhyEval{\rho'}{H(\sigma)}
                \inlineExplain{Def. Booleanization} \\
            &\qiff
                P(\sigma) \hleq H(\sigma)
                \inlineExplain{Semantic order reflection} \\
            &\qiff
                P(\sigma) \impl_{\pDNF} H(\sigma) = \top
                \inlineExplain{Heyting implication} \\
            &\qiff
                \WhyEval{\rho}{\heylovalidate{P(\sigma) \impl_{\pDNF} H(\sigma)}}
                \inlineExplain{Def. validation} \\
            &\qiff
                \WhyEval{\rho}{\vcWhy{\stmtValidate \fatsemi \stmtAssume{P}}(H)(\sigma)}
                \inlineExplain{Def. \symVcWhy}
        \end{align*}

        The cases $S_1 \fatsemi S_2$, $\stmtDemonic{S_1}{S_2}$, $\stmtAngelic{S_1}{S_2}$, $\symAssert$, and $\symUp\symAssert$ are immediate from the induction hypotheses and the fact that $\WhyEval{\rho}{-}$ preserves conjunction and disjunction of positive DNF formulas.
        The ordinary state-transformer cases $\symAssign$, $\symHavoc$, and $\symUp\symHavoc$ are immediate when they do not update the fresh Boolean variables $p_a$.
    \end{itemize}
\end{proof}

\Cref{fig:why-lowering-example} shows the graph-reachability example and its translation.
The positive commands, crisp guard assumptions, and the Park-induction check are all covered by \Cref{def:why-translation,thm:why-translation}.

\begin{figure}[t]
    \centering
    \begingroup
    \setlength{\tabcolsep}{0pt}
    \begin{adjustbox}{max width=\linewidth}
        \begin{tabular}{@{}p{0.485\linewidth}@{\hspace{0.8em}}p{0.485\linewidth}@{}}
            \toprule
            \centering\makecell{Source\\$\wHeyVLWhy{}$} &
            \centering\makecell{Translated\\$\wHeyVLBool{}$}
            \tabularnewline
            \midrule
            \begin{minipage}[t]{\linewidth}
                \vspace{0pt}
                \[
                \begin{aligned}
                    & \weightedProcHead{\rigWhy}{graphReachability}{} \\
                    & \weightedProcOut{\typeof{node}{\Nats}} \\
                    & \qquad \Requires{X_{AB} \lor X_{AA}} \\
                    & \qquad \Ensures{\top} \\
                    & \blockStart \\
                    & \qquad \ASSIGN{node}{0} \fatsemi \Assert{I} \fatsemi \\
                    & \qquad \Havoc{node} \fatsemi \Validate \fatsemi \Assume{I} \fatsemi \\
                    & \qquad \IF{node \neq 1} \\
                    & \qquad\qquad \symWeightedBranching~\{ \\
                    & \qquad\qquad\qquad \ASSIGN{node}{1} \fatsemi \WEIGH{X_{AB}} \\
                    & \qquad\qquad \}~\stmtElseStart \\
                    & \qquad\qquad\qquad \ASSIGN{node}{0} \fatsemi \WEIGH{X_{AA}} \\
                    & \qquad\qquad \} \fatsemi \\
                    & \qquad\qquad \Assert{I} \fatsemi \Assume{\embed{\false}} \\
                    & \qquad \ELSE~\} \\
                    & \blockEnd
                \end{aligned}
                \]
            \end{minipage}
            &
            \begin{minipage}[t]{\linewidth}
                \vspace{0pt}
                \[
                \begin{aligned}
                    & \weightedProcHead{\rigBool}{graphReachability}{\typeof{p^0_{AB}, p^0_{AA}}{\Bools}} \\
                    & \weightedProcOut{\typeof{node}{\Nats}, \typeof{p_{AB}, p_{AA}}{\Bools}} \\
                    & \qquad \Requires{p^0_{AB} \lor p^0_{AA}} \\
                    & \qquad \Ensures{\top} \\
                    & \blockStart \\
                    & \qquad \ASSIGN{node}{0} \fatsemi \ASSIGN{p_{AB}}{p^0_{AB}} \fatsemi \ASSIGN{p_{AA}}{p^0_{AA}} \fatsemi \\
                    & \qquad \Assert{I_{\mathrm{bool}}} \fatsemi \\
                    & \qquad \Havoc{node} \fatsemi \Havoc{p_{AB}, p_{AA}} \fatsemi \\
                    & \qquad \Validate \fatsemi \Assume{I_{\mathrm{bool}}} \fatsemi \\
                    & \qquad \IF{node \neq 1} \\
                    & \qquad\qquad \stmtAngelicStart \\
                    & \qquad\qquad\qquad \ASSIGN{node}{1} \fatsemi \Assert{p_{AB}} \\
                    & \qquad\qquad \}~\stmtElseStart \\
                    & \qquad\qquad\qquad \ASSIGN{node}{0} \fatsemi \Assert{p_{AA}} \\
                    & \qquad\qquad \} \fatsemi \\
                    & \qquad\qquad \Assert{I_{\mathrm{bool}}} \fatsemi \Assume{\embed{\false}} \\
                    & \qquad \ELSE~\} \\
                    & \blockEnd
                \end{aligned}
                \]
            \end{minipage}
            \tabularnewline
            \bottomrule
        \end{tabular}
    \end{adjustbox}
    \endgroup
    \caption{Graph-reachability program $\graphreach$ from \Cref{fig:case_studies_prob_why} and its lowering to the Boolean instance on the extended state space. The provenance atoms $X_{AB}$ and $X_{AA}$ become mutable Boolean variables $p_{AB}$ and $p_{AA}$, initialized from read-only inputs $p^0_{AB}$ and $p^0_{AA}$. We encode the two nodes as $0$ and $1$, and write $I \definedAs \iverson{node = 1}\top \lor \iverson{node \neq 1}(X_{AB} \lor X_{AA})$ and $I_{\mathrm{bool}} \definedAs \WhyBool{I} = \iverson{node = 1}\top \lor \iverson{node \neq 1}(p_{AB} \lor p_{AA})$.}
    \label{fig:why-lowering-example}
\end{figure}

\begin{figure}[t]
    \centering
    \begingroup
    \setlength{\tabcolsep}{0pt}
    \begin{adjustbox}{max width=\linewidth}
        \begin{tabular}{@{}p{0.485\linewidth}@{\hspace{0.8em}}p{0.485\linewidth}@{}}
            \toprule
            \centering\makecell{Source\\$\wHeyVLWhy{}$} &
            \centering\makecell{Translated\\$\wHeyVLBool{}$}
            \tabularnewline
            \midrule
            \begin{minipage}[t]{\linewidth}
                \vspace{0pt}
                \[
                \begin{aligned}
                    & \weightedProcHead{\rigWhy}{graphReachability}{} \\
                    & \weightedProcOut{\typeof{e}{\Nats}} \\
                    & \qquad \Requires{X_1 \land X_2 \land X_3 \land X_4} \\
                    & \qquad \Ensures{\top} \\
                    & \blockStart \\
                    & \qquad \WEIGH{X_4}\ \fatsemi \ASSIGN{e}{4} \fatsemi \Assert{I} \fatsemi \\
                    & \qquad \Havoc{e} \fatsemi \Validate \fatsemi \Assume{I} \fatsemi \\[0.5em]
                    & \qquad \IF{e \neq 1} \\
                    & \qquad\qquad \IF{e = 2} \\
                    & \qquad\qquad\qquad \symWeightedBranching~\{ \\
                    & \qquad\qquad\qquad\qquad \WEIGH{X_3} \fatsemi \ASSIGN{e}{3} \\
                    & \qquad\qquad\qquad \}~\stmtElseStart \\
                    & \qquad\qquad\qquad\qquad \WEIGH{X_1} \fatsemi \ASSIGN{e}{1} \\
                    & \qquad\qquad\qquad \} \fatsemi \\
                    & \qquad\qquad \}~\stmtElseStart \\
                    & \qquad\qquad\qquad \IF{e = 3} \\
                    & \qquad\qquad\qquad\qquad \symWeightedBranching~\{ \\
                    & \qquad\qquad\qquad\qquad\qquad \WEIGH{X_2} \fatsemi \ASSIGN{e}{2} \\
                    & \qquad\qquad\qquad\qquad \}~\stmtElseStart \\
                    & \qquad\qquad\qquad\qquad\qquad \WEIGH{X_1} \fatsemi \ASSIGN{e}{1} \\
                    & \qquad\qquad\qquad\qquad \} \fatsemi \\
                    & \qquad\qquad\qquad \}~\stmtElseStart \\
                    & \qquad\qquad\qquad\qquad \IF{(e = 4) \vee (e = 5)} \\
                    & \qquad\qquad\qquad\qquad\qquad \WEIGH{X_2} \fatsemi \ASSIGN{e}{2} \\
                    & \qquad\qquad\qquad\qquad \}~\stmtElseStart \\
                    & \qquad\qquad\qquad\qquad\qquad \WEIGH{X_3} \fatsemi \ASSIGN{e}{3} \\
                    & \qquad\qquad\qquad\qquad \} \\
                    & \qquad\qquad\qquad \} \\
                    & \qquad\qquad \} \\[0.5em]
                    & \qquad\qquad \Assert{I} \fatsemi \Assume{\embed{\false}} \\
                    & \qquad \ELSE~\} \\
                    & \blockEnd
                \end{aligned}
                \]
            \end{minipage}
            &
            \begin{minipage}[t]{\linewidth}
                \vspace{0pt}
                \[
                \begin{aligned}
                    & \weightedProcHead{\rigBool}{graphReachability}{\typeof{p^0_1, p^0_2, p^0_3, p^0_4}{\Bools}} \\
                    & \weightedProcOut{\typeof{e}{\Nats}, \typeof{p_1, p_2, p_3, p_4}{\Bools}} \\
                    & \qquad \Requires{p^0_1 \land p^0_2 \land p^0_3 \land p^0_4} \\
                    & \qquad \Ensures{\top} \\
                    & \blockStart \\
                    & \qquad \ASSIGN{p_1}{p^0_1} \fatsemi \ASSIGN{p_2}{p^0_2} \fatsemi \ASSIGN{p_3}{p^0_3} \fatsemi \ASSIGN{p_4}{p^0_4} \fatsemi \\
                    & \qquad \stmtAssert {p_4}\ \fatsemi \ASSIGN{e}{4} \fatsemi \Assert{I_{bool}} \fatsemi \\
                    & \qquad \Havoc{e} \fatsemi \Havoc{p_1, p_2, p_3, p_4} \fatsemi \\
                    & \qquad \Validate \fatsemi \Assume{I_{bool}} \fatsemi \\[0.5em]
                    & \qquad \IF{e \neq 1} \\
                    & \qquad\qquad \IF{e = 2} \\
                    & \qquad\qquad\qquad \symAngelic~\{ \\
                    & \qquad\qquad\qquad\qquad \stmtAssert {p_3} \fatsemi \ASSIGN{e}{3} \\
                    & \qquad\qquad\qquad \}~\stmtElseStart \\
                    & \qquad\qquad\qquad\qquad \stmtAssert {p_1} \fatsemi \ASSIGN{e}{1} \\
                    & \qquad\qquad\qquad \} \fatsemi \\
                    & \qquad\qquad \}~\stmtElseStart \\
                    & \qquad\qquad\qquad \IF{e = 3} \\
                    & \qquad\qquad\qquad\qquad \symAngelic~\{ \\
                    & \qquad\qquad\qquad\qquad\qquad \stmtAssert {p_2} \fatsemi \ASSIGN{e}{2} \\
                    & \qquad\qquad\qquad\qquad \}~\stmtElseStart \\
                    & \qquad\qquad\qquad\qquad\qquad \stmtAssert {p_1} \fatsemi \ASSIGN{e}{1} \\
                    & \qquad\qquad\qquad\qquad \} \fatsemi \\
                    & \qquad\qquad\qquad \}~\stmtElseStart \\
                    & \qquad\qquad\qquad\qquad \IF{(e = 4) \vee (e = 5)} \\
                    & \qquad\qquad\qquad\qquad\qquad \stmtAssert {p_2} \fatsemi \ASSIGN{e}{2} \\
                    & \qquad\qquad\qquad\qquad \}~\stmtElseStart \\
                    & \qquad\qquad\qquad\qquad\qquad \stmtAssert {p_3} \fatsemi \ASSIGN{e}{3} \\
                    & \qquad\qquad\qquad\qquad \} \\
                    & \qquad\qquad\qquad \} \\
                    & \qquad\qquad \} \\[0.5em]
                    & \qquad\qquad \Assert{I_{bool}} \fatsemi \Assume{\embed{\false}} \\
                    & \qquad \ELSE~\} \\
                    & \blockEnd
                \end{aligned}
                \]
            \end{minipage}
            \tabularnewline
            \bottomrule
        \end{tabular}
    \end{adjustbox}
    \endgroup
    \caption{Employee graph-reachability program from \Cref{sec:database-provenance} and its lowering to the Boolean instance on the extended state space. The provenance atoms $X_i$ become mutable Boolean variables $p_i$, initialized from read-only inputs $p^0_i$. We write $I_{\mathrm{bool}} \definedAs \WhyBool{I}$.}
    \label{fig:why-lowering-example-employees}
\end{figure}

\begin{figure}[t]
    \centering
    \begingroup
    \setlength{\tabcolsep}{0pt}
    \begin{adjustbox}{max width=\linewidth}
        \begin{tabular}{@{}p{0.485\linewidth}@{\hspace{0.8em}}p{0.485\linewidth}@{}}
            \toprule
            \centering\makecell{$\wHeyVLEUReal{}$} &
            \centering\makecell{$\wHeyVLWhy{}$}
            \tabularnewline
            \midrule
            \begin{minipage}[t]{\linewidth}
                \vspace{0pt}
                \[
                \begin{aligned}
                    & \weightedCoprocHead{\rigEUReal}{queueing}{\typeof{k_{0}}{\Nats}, \typeof{e_{0}}{\Nats}, \typeof{b_{0}}{\Bools}} \\
                    & \weightedProcOut{\typeof{k}{\Nats}, \typeof{e}{\Nats}, \typeof{b}{\Bools}} \\
                    & \quad \Requires{e_0 + 4 \cdot k_0} \\
                    & \quad \Ensures{e} \\
                    & \blockStart \\
                    & \quad \ASSIGN{k}{k_0} \fatsemi\, \ASSIGN{e}{e_0} \fatsemi\, \ASSIGN{b}{b_0} \fatsemi \\[5pt]
                    & \quad \coAssert{e + 4 \cdot k + 16 \cdot \iverson{b \neq b_0}} \fatsemi \\
                    & \quad \coHavoc{k, e, b} \fatsemi \coValidate \fatsemi \\
                    & \quad \coAssume{e + 4 \cdot k + 16 \cdot \iverson{b \neq b_0}} \fatsemi \\[5pt]
                    & \quad\IF{k > 0} \\
                    & \quad\quad \symWeightedBranching~\{\, \WEIGH{0.8} \} \\
                    & \quad\quad \stmtElseStart \, \WEIGH{0.2} \fatsemi \\
                    & \quad\quad\quad \IF{b} \,\ASSIGN{e}{e + 4}\, \} \\
                    & \quad\quad\quad \stmtElseStart \,\ASSIGN{e}{e + 10}\, \} \fatsemi \\
                    & \quad\quad\quad \ASSIGN{b}{\neg b} \\
                    & \quad\quad \} \fatsemi \, \ASSIGN{k}{k-1} \fatsemi \\[5pt]
                    & \quad\quad \Assert{e + 4 \cdot k + 16 \cdot \iverson{b \neq b_0}} \\[5pt]
                    & \quad\quad \IF{k > 0} \\
                    & \quad\quad\quad \symWeightedBranching~\{\, \WEIGH{0.8} \} \\
                    & \quad\quad\quad \stmtElseStart \, \WEIGH{0.2} \fatsemi \\
                    & \quad\quad\quad\quad \IF{b} \,\ASSIGN{e}{e + 4}\, \} \\
                    & \quad\quad\quad\quad \stmtElseStart \,\ASSIGN{e}{e + 10}\, \} \fatsemi \\
                    & \quad\quad\quad\quad \ASSIGN{b}{\neg b} \\
                    & \quad\quad\quad \} \fatsemi \, \ASSIGN{k}{k-1} \fatsemi \\[5pt]
                    & \quad\quad\quad \coAssert{e + 4 \cdot k + 16 \cdot \iverson{b \neq b_0}} \fatsemi \\
                    & \quad\quad\quad \coAssume{\embed{\true}} \\
                    & \quad\quad \ELSE~\} \\
                    & \quad \ELSE~\} \\
                    & \blockEnd
                \end{aligned}
                \]
            \end{minipage}
            &
            \begin{minipage}[t]{\linewidth}
                \vspace{0pt}
                \[
                \begin{aligned}
                    & \weightedProcHead{\rigWhy}{graphReachability}{} \\
                    & \weightedProcOut{\typeof{node}{\Nats}} \\
                    & \quad \Requires{X_{AB} \lor X_{AA}} \\
                    & \quad \Ensures{\top} \\
                    & \blockStart \\
                    & \quad \ASSIGN{node}{0} \fatsemi \\[5pt]
                    & \quad \Assert{I} \fatsemi\, \Havoc{node} \fatsemi \\
                    & \quad \Validate \fatsemi\, \Assume{I} \fatsemi \\[5pt]
                    & \quad \IF{node \neq 1} \\
                    & \quad\quad \symWeightedBranching~\{ \\
                    & \quad\quad\quad \ASSIGN{node}{1} \fatsemi \WEIGH{X_{AB}} \\
                    & \quad\quad \}~\stmtElseStart \\
                    & \quad\quad\quad \ASSIGN{node}{0} \fatsemi \WEIGH{X_{AA}} \\
                    & \quad\quad \} \fatsemi \\[5pt]
                    & \quad\quad \Assert{I} \fatsemi \Assume{\embed{\false}} \\
                    & \quad \ELSE~\} \\
                    & \blockEnd
                \end{aligned}
                \]
            \end{minipage}
            \tabularnewline
            \bottomrule
        \end{tabular}
    \end{adjustbox}
    \endgroup
    \caption{The job scheduling program $queueing$ uses $k$-induction with $k = 2$ to prove $\wp{C_{Q}}(e) \heyloleq e_0 + 4 * k_0$. The program $graphReachability$ uses Park induction to prove $\wlp{C_{GR}}(\top) \heylogeq X_{AB} \vee X_{AA}$.}
    \label{fig:case_studies_prob_why}
\end{figure}

\begin{figure}[t]
    \centering
    \begingroup
    \setlength{\tabcolsep}{0pt}
    \begin{adjustbox}{max width=\linewidth}
        \begin{tabular}{@{}p{0.485\linewidth}@{\hspace{0.8em}}p{0.485\linewidth}@{}}
            \toprule
            \centering\makecell{$\wHeyVLClearance{}$} &
            \centering\makecell{$\wHeyVLLang{}$}
            \tabularnewline
            \midrule
            \begin{minipage}[t]{\linewidth}
                \vspace{0pt}
                \[
                \begin{aligned}
                    & \weightedCoprocHead{\rigClear}{networkSecurity}{\typeof{n_0}{\Nats}} \\
                    & \weightedProcOut{\typeof{n}{\Nats}} \\
                    & \quad \Requires{\min\{\, \iverson{n_0 = 0}\cdot \mathsf{P}, \iverson{n_0 = 1}\cdot \mathsf{C},} \\
                    & \quad\quad\quad\quad\quad {\color{prepostColor} \iverson{n_0 \geq 2}\cdot \mathsf{S} \,\} \quad\quad = I(n_0)} \\
                    & \quad \Ensures{\mathsf{P}} \\
                    & \blockStart \\
                    & \quad \ASSIGN{n}{n_0} \fatsemi \\[5pt]
                    & \quad \coAssert{I(n)} \fatsemi\, \coHavoc{n} \fatsemi \\
                    & \quad \coValidate \fatsemi \coAssume{I(n)} \fatsemi \\[5pt]
                    & \quad \IF{n > 0} \\
                    & \quad\quad \ASSIGN{n}{n - 1} \fatsemi \\
                    & \quad\quad \symWeightedBranching~\{ \\
                    & \quad\quad\quad \WEIGH{\mathsf{S}} \\
                    & \quad\quad \}~\stmtElseStart \\
                    & \quad\quad\quad \IF{n = 0} \\
                    & \quad\quad\quad\quad \WEIGH{\mathsf{C}} \\
                    & \quad\quad\quad \ELSE \\
                    & \quad\quad\quad\quad \WEIGH{\mathsf{TS}} \\
                    & \quad\quad\quad \} \\
                    & \quad\quad \} \fatsemi \\[5pt]
                    & \quad\quad \coAssert{I(n)} \fatsemi\, \coAssume{\embed{\true}} \\
                    & \quad \ELSE~\} \\
                    & \blockEnd
                \end{aligned}
                \]
            \end{minipage}
            &
            \begin{minipage}[t]{\linewidth}
                \vspace{0pt}
                \[
                \begin{aligned}
                    & \weightedCoprocHead{\rigClear}{base}{\typeof{\ell_0}{\listtype}} \\
                    & \weightedProcOut{\typeof{\ell}{\listtype}} \\
                    & \quad \Requires{I(0)} \\
                    & \quad \Ensures{\bot} \\
                    & \blockStart \, \coAssert{\embed{\true}} \, \blockEnd \\[15pt]
                    & \weightedCoprocHead{\rigClear}{step}{\typeof{\ell_0}{\listtype}} \\
                    & \weightedProcOut{\typeof{\ell}{\listtype}} \\
                    & \quad \Requires{I(n+1)} \\
                    & \quad \Ensures{\bot} \\
                    & \blockStart \\
                    & \quad \ASSIGN{\ell}{\ell_0} \fatsemi \\[5pt]
                    & \quad \IF{\true} \\
                    & \quad\quad \symWeightedBranching~\{ \, \ASSIGN{i}{1} \,\} \\
                    & \quad\quad \stmtElseStart \symWeightedBranching~\{ \, \ASSIGN{i}{2} \,\} \,\ldots \\
                    & \quad\quad\quad\quad \ldots\, \stmtElseStart \, \ASSIGN{i}{N} \,\} \,\} \fatsemi \\[5pt]
                    & \quad\quad \IF{\ell[i] = v} \\
                    & \quad\quad\quad \WEIGH{S_i}\fatsemi\, \ASSIGN{v}{\ell[i] + 1} \fatsemi\, \ASSIGN{\ell[i]}{v} \\
                    & \quad\quad \ELSE \\
                    & \quad\quad\quad \WEIGH{F_i}\fatsemi\, \ASSIGN{\ell[i]}{v} \\
                    & \quad\quad \} \\[5pt]
                    & \quad\quad \coAssert{I(n)} \fatsemi \coAssume{\embed{\true}} \\
                    & \quad \ELSE~\} \\
                    & \blockEnd
                \end{aligned}
                \]
            \end{minipage}
            \tabularnewline
            \bottomrule
        \end{tabular}
    \end{adjustbox}
    \endgroup
    \caption{The program $networkSecurity$ uses Park induction to prove $\wp{C_{AC}}(P) \geq I(n_0)$ with $I(n) = \min\{ \iverson{n = 0}\cdot \mathsf{P}, \iverson{n = 1}\cdot \mathsf{C}, \iverson{n \geq 2}\cdot \mathsf{S}\}$. The programs $base$ and $step$ show $\wlp{C_{CAS}}(\bot) \heyloleq \inf_{n\in\Nats} I(n)$ via $\omega$-superinvariants.}
    \label{fig:case_studies_clear_lang}
\end{figure}

\section{\titleFullref{Theory and Proofs of }{sec:case-studies}}
\label{app:case-studies}

\begin{figure}[t]
    \centering
    \[
    \begin{aligned}
        & C_Q \in \wGCL_{\prob}
        \qquad \leqannotate{e + 4 \cdot k} \\
        & \WHILE{k > 0} \\
        & \qquad \{ \\
        & \qquad\qquad \WEIGH{0.8} \\
        & \qquad \} \, \BranchSymbol \, \{ \\
        & \qquad\qquad \WEIGH{0.2} \fatsemi \\
        & \qquad\qquad \IF{b} \\
        & \qquad\qquad\qquad \ASSIGN{e}{e + 4} \\
        & \qquad\qquad \ELSE \\
        & \qquad\qquad\qquad \ASSIGN{e}{e + 10} \\
        & \qquad\qquad \} \fatsemi \\
        & \qquad\qquad \ASSIGN{b}{\neg b} \\
        & \qquad \} \fatsemi \\
        & \qquad \ASSIGN{k}{k-1} \fatsemi \\
        & \} \\
        & \wpannotate{e}
    \end{aligned}
    \]
    \caption{The job scheduling program $C_Q$ satisfies $\wp{C_{Q}}(e) \heyloleq e_0 + 4 * k_0$.}
    \label{fig:case_studies_wGCL_prob}
\end{figure}

\begin{figure}[t]
    \centering

    \begin{align*}
        & C_{GR} \in \wGCL_{\why} \qquad \geqannotate{X_1 \wedge X_2 \wedge X_3 \wedge X4} \\
        & \WEIGH{X_4} \fatsemi \ASSIGN{e}{4} \fatsemi \\[0.5em]
        & \WHILE{e \neq 1} \\
        & \qquad \IF{e = 2} \\
        & \qquad\qquad \BRANCH{\WEIGH{X_3} \fatsemi \ASSIGN{e}{3}}{\WEIGH{X_1} \fatsemi \ASSIGN{e}{1}} \\
        & \qquad \ELSE \\
        & \qquad\qquad \IF{e = 3} \\
        & \qquad\qquad\qquad \BRANCH{\WEIGH{X_2} \fatsemi \ASSIGN{e}{2}}{\WEIGH{X_1} \fatsemi \ASSIGN{e}{1}} \\
        & \qquad\qquad \ELSE \\
        & \qquad\qquad\qquad \IF{(e = 4) \vee (e = 5)} \\
        & \qquad\qquad\qquad\qquad \WEIGH{X_2} \fatsemi \ASSIGN{e}{2} \\
        & \qquad\qquad\qquad \ELSE \\
        & \qquad\qquad\qquad\qquad \WEIGH{X_3} \fatsemi \ASSIGN{e}{3} \\
        & \qquad\qquad\qquad \} \\
        & \qquad\qquad \} \\
        & \qquad \} \\
        & \} \\
        & \wlpannotate{\top}
    \end{align*}

    \caption{The recursive database provenance program $C_{GR}$ satisfies $\wlp{C_{GR}}(\top) \heylogeq X_{AB} \vee X_{AA}$.}
    \label{fig:case_studies_wGCL_why}
\end{figure}

\begin{figure}[t]
    \centering
    \begin{adjustbox}{max width=\linewidth}
        \(
        \begin{aligned}
            & \makecell{C_{AC}\in\wGCL_{\clear}} \\
            & \geqannotate{I(n)} \\
            & \WHILE{n > 0} \\
            & \qquad \ASSIGN{n}{n - 1} \fatsemi \\
            & \qquad \{ \\
            & \qquad\qquad \WEIGH{S} \\
            & \qquad \} \mathrel{\BranchSymbol} \{ \\
            & \qquad\qquad \IF{n = 0} \\
            & \qquad\qquad\qquad \WEIGH{C} \\
            & \qquad\qquad \ELSE \\
            & \qquad\qquad\qquad \WEIGH{TS} \\
            & \qquad\qquad \} \\
            & \qquad \} \\
            & \} \\
            & \wlpannotate{P}
        \end{aligned}
        \)
    \end{adjustbox}
    \caption{The program $C_{AC}$ describes network security and satisfies $\wlp{C_{AC}}(P) \heylogeq I(n)$.}
    \label{fig:case_studies_wGCL_clear}
\end{figure}

\subsection{Probabilities: Queueing Background Jobs}
\label{app:case-studies-probabilities}

We show that the probability module $\modprob$ over the monoid $\monprob$ is an $\omega$-bicontinuous monoid-module with Heyting (co-)implications.

\begin{lemma}
    The probability module $\modprob = \modproblong$ over the monoid $\monprob = \monproblong$ is a monoid-module (see \Cref{def:monoid-module}).
\end{lemma}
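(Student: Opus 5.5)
The claim is that $\modprob = \modproblong$ over $\monprob = \monproblong$ satisfies \Cref{def:monoid-module}. I would check the axioms one at a time, using the convention $0\cdot\infty = 0$ and $x\cdot\infty=\infty$ for $x>0$.

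\textbf{Monoid $\monprob$.} The multiplication $\cdot$ maps $\PosReals\times\PosReals$ into $\PosReals$, is associative, and has neutral element $1$. This is standard real arithmetic.

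\textbf{Commutative monoid $(\PosRealsInf,+,0)$.} Extend addition by $x+\infty=\infty+x=\infty$. Associativity and commutativity hold for finite values by real arithmetic. When some argument is $\infty$, both sides of each law equal $\infty$. Finally, $0$ is neutral: $0+\infty=\infty$.

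\textbf{Scalar multiplication $\otimes=\cdot\colon\PosReals\times\PosRealsInf\to\PosRealsInf$.} I split on whether the module argument is finite.
\begin{itemize}
\item \emph{Associativity} $(v\cdot w)\cdot x = v\cdot(w\cdot x)$. For finite $x$ this is real associativity. For $x=\infty$, both sides equal $\infty$ iff $v,w>0$ (equivalently $v\cdot w>0$), and equal $0$ otherwise.
\item \emph{Distributivity} $v\cdot(x+y)=v\cdot x+v\cdot y$. If $v=0$, both sides are $0$. If $v>0$ and one of $x,y$ is $\infty$, both sides are $\infty$. In the remaining case everything is finite and real distributivity applies.
\item \emph{Unit} $1\cdot x=x$ holds for all $x$, including $1\cdot\infty=\infty$.
\item \emph{Annihilation} $v\cdot 0=0$ holds since $v$ is finite.
\end{itemize}

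The main obstacle is only the bookkeeping around $\infty$. This is why the monoid is restricted to finite scalars $\PosReals$: it avoids products like $\infty\cdot 0$ on the scalar side and keeps $v\cdot 0=0$ unproblematic. With the stated convention, each $\infty$ case reduces to checking whether a scalar is zero. No further machinery from the paper is needed.
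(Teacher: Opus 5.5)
Your proof is correct and takes the same route as the paper: a direct axiom-by-axiom check of \Cref{def:monoid-module}, with the $\infty$ case analysis made explicit where the paper leaves it implicit. One caveat on your closing remark: the monoid-module axioms do not need the scalars restricted to $\PosReals$, since with $0\cdot\infty=\infty\cdot 0=0$ they also hold for scalars in $\PosRealsInf$; the restriction is needed for $\omega$-cocontinuity of scalar multiplication (used in the subsequent bicontinuity lemma), because $\infty\cdot\inf_{n}\frac{1}{n+1} = \infty\cdot 0 = 0$ while $\inf_{n}\bigl(\infty\cdot\frac{1}{n+1}\bigr) = \infty$.
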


\begin{proof}
    We observe that $\monproblong$ and $(\PosRealsInf, +, 0)$ are monoids, where the latter is also commutative due to the addition $+$. We check the properties of the scalar multiplication $\cdot: \PosReals \times \PosRealsInf \to \PosRealsInf$, $(p, c) \mapsto p \cdot c$.
    It is associative ($(p \cdot q)\cdot c = p\cdot (q \cdot c)$) and distributive ($p \cdot (c + d) = p\cdot c + p\cdot d$) for $p, q \in \PosReals$ and $c \in \PosRealsInf$. Furthermore, the neutral element $1$ is neutral ($1 \cdot c = c$) and the neutral element $0$ is annihilating ($p \cdot 0 = 0$), hence all properties are satisfied.
\end{proof}

\begin{lemma}
    The probability module $\modprob$ over the monoid $\monprob$ is $\omega$-bicontinuous.
\end{lemma}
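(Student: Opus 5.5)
We must show three things: the natural order on $\PosRealsInf$ induced by $+$ is an $\omega$-complete partial order with least element; it has a greatest element and every descending chain has an infimum; and scalar multiplication and addition are $\omega$-continuous and $\omega$-cocontinuous. The plan is to identify the natural order first and then reduce everything to facts about the extended reals.

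\emph{Step 1: identify the order.} As observed in \Cref{sec:weighted_programming}, $a \sleq b$ iff $a + c = b$ for some $c \in \PosRealsInf$ iff $a \leq b$. For finite $a\le b$ take $c=b-a$; for $b=\infty$ take $c=\infty$. Conversely, $a+c\ge a$. Hence $\sleq$ is the usual order $\leq$ on $[0,\infty]$. This is a total order with $\bot = 0$ and $\top = \infty$. Every ascending chain $(x_i)$ has supremum $\sup_i x_i \in [0,\infty]$ and every descending chain has infimum $\inf_i x_i$, by completeness of the extended reals. Reversing the order gives the cocontinuity data, with least element $\infty$.

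\emph{Step 2: continuity of addition.} For $y \in \PosRealsInf$ and an ascending chain $(x_i)$ we need $y + \sup_i x_i = \sup_i (y + x_i)$.
\begin{itemize}
  \item If $y=\infty$, both sides are $\infty$.
  \item If $y$ is finite and $\sup_i x_i$ is finite, this is continuity of translation on $\mathbb{R}$.
  \item If $y$ is finite and $\sup_i x_i=\infty$, then $y+x_i$ is unbounded, so both sides are $\infty$.
\end{itemize}
The descending case $y + \inf_i x_i = \inf_i (y+x_i)$ splits the same way. The only delicate subcase is a descending chain with all $x_i = \infty$: then the infimum is $\infty$ on both sides.

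\emph{Step 3: scalar multiplication.} For $p \in \PosReals$ (finite) we need $p\cdot\sup_i x_i = \sup_i p x_i$ and $p\cdot \inf_i x_i = \inf_i p x_i$.
\begin{itemize}
  \item If $p=0$, both sides are $0$ by the convention $0\cdot\infty=0$.
  \item If $p>0$, the map $x\mapsto px$ is an order isomorphism of $[0,\infty]$ with inverse $x\mapsto x/p$. It therefore preserves all existing suprema and infima.
\end{itemize}

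I expect the main obstacle to be the bookkeeping of the $\infty$ cases. These are the chain containing $\infty$ eventually, the constant-$\infty$ descending chain, and $p=0$ against a supremum of $\infty$. I would handle each of them explicitly with the stated conventions, and otherwise rely on the order-isomorphism argument so as to avoid $\varepsilon$-reasoning.
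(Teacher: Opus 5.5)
Your proposal is correct and follows the same route as the paper. You identify the natural order with the usual order on $[0,\infty]$, read off $\bot=0$, $\top=\infty$ and the existence of chain suprema and infima from completeness, and then check $\omega$-(co)continuity of addition and scalar multiplication; your explicit handling of the $\infty$ cases ($p=0$ via $0\cdot\infty=0$, and the order isomorphism $x\mapsto px$ for $p>0$) makes rigorous what the paper compresses into ``the usual multiplication of non-negative reals is continuous.''
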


\begin{proof}
    The natural order is defined as $c \sleq d$ if and only if there exists $e \in \PosRealsInf$ such that $c + e = d$ if and only if $c \leq d$ in the usual order of the non-negative reals. Hence, the natural order is a partial (even total) order with least element $0$, greatest element $\infty$, where suprema and infima of $\omega$-chains are the usual supremum and infimum in the non-negative reals (since the non-negative reals are complete, they always exist).

    We show that the scalar multiplication $\cdot$ is $\omega$-bicontinuous in the second argument and assume that $p \in \PosReals$ and $(c_i)_{i \in \Nats} \subseteq \PosRealsInf$ is an ascending (descending) $\omega$-chain. Then
    $$
        p \cdot \sup_{i \in \Nats} c_i = p \cdot \sup_{i \in \Nats} c_i = \sup_{i \in \Nats} p \cdot c_i \quad \text{and} \quad p \cdot \inf_{i \in \Nats} c_i = p \cdot \inf_{i \in \Nats} c_i = \inf_{i \in \Nats} p \cdot c_i,
    $$
    since the scalar multiplication is just the usual multiplication of non-negative reals, which is continuous. The addition is also $\omega$-bicontinuous using the same argument.
\end{proof}

\begin{lemma}
    The probability module $\modprob$ over the monoid $\monprob$ admits Heyting (co-)implications.
\end{lemma}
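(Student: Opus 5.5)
We derive this directly from \Cref{thm:totimpl_totcoimpl_bounded_lattice}. That theorem yields a bi-Heyting algebra with the Gödel (co-)implications $\totimpl$ and $\totcoimpl$ for any totally ordered bounded lattice. It therefore suffices to show that $(\PosRealsInf,\sleq)$, with $\sleq$ the natural order of $\modprob$, is such a lattice.

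The first step reuses the previous lemma. There we showed that $c \sleq d$ iff $c \leq d$ in the usual order on $\PosRealsInf$: given $c \leq d$, take $e = d - c$ (or $e = \infty$ if $d = \infty$), and conversely $c \leq c + e$. The usual order on the extended non-negative reals is total, so the natural order is total. It has least element $\bot = 0$ and greatest element $\top = \infty$, hence it is bounded. Binary meets and joins exist in any total order and are $\min$ and $\max$; this matches the entries in \Cref{fig:modules}.

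The second step applies \Cref{thm:totimpl_totcoimpl_bounded_lattice}. This gives the implication $a \totimpl b = \infty$ if $a \leq b$ and $b$ otherwise, and the co-implication $a \totcoimpl b = 0$ if $b \leq a$ and $b$ otherwise. For concreteness, and to make the adjunctions explicit, we can check $\min(a,x) \leq b \iff x \leq a \totimpl b$ by splitting on $a \leq b$. If $a \leq b$, both sides hold for every $x$, since $\min(a,x) \leq a \leq b$ and $x \leq \infty$. If $a > b$, then $\min(a,x) \leq b$ iff $x \leq b$. The co-implication case is dual, with $\max$ in place of $\min$.

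No step is a real obstacle. The only point needing care is that $\infty$ genuinely serves as the top element, so that $\top = \infty$ in the implication. This is settled by working over $\PosRealsInf$ rather than $\PosReals$, together with the convention $0 \cdot \infty = 0$, which the monoid-module axioms already use.
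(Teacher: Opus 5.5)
Your proposal is correct and follows the paper's proof: both identify the natural order on $\PosRealsInf$ with the usual total order, bounded by $0$ and $\infty$, and then invoke \Cref{thm:totimpl_totcoimpl_bounded_lattice}. Your explicit case check of the two adjunctions is redundant given that theorem, and the remark about the convention $0 \cdot \infty = 0$ plays no role in the lattice argument, but neither affects correctness.
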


\begin{proof}
    Since the module $\modprob$ is totally ordered, \Cref{thm:totimpl_totcoimpl_bounded_lattice} states that the implications
    $$
        c \totimpl d = \left. \begin{cases}
            \infty & \text{if } c \leq d, \\
            d & \text{otherwise}
        \end{cases} \right\}
        \quad \text{and} \quad
        c \totcoimpl d = \left. \begin{cases}
            0 & \text{if } d \leq c, \\
            d & \text{otherwise}
        \end{cases} \right\}.
    $$
    specify a bi-Heyting algebra $(\PosRealsInf, \sleq, \min, \max, \totimpl, \totcoimpl)$ for $c, d \in \PosRealsInf$.
\end{proof}

\subsection{Why: Database Provenance}
\label{app:case-studies-why}

We show that the Why module $\modwhy$ over the monoid $\monwhy$ is an $\omega$-bicontinuous monoid-module with Heyting (co-)implications. The set of positive disjunctive normal form propositional formulae $\pDNF$ is the set of propositional formulae in disjunctive normal form, where all literals are positive and we consider formulae up to logical equivalence.

The Why module $\modwhy$ is a semiring, hence we call it Why semiring in the main text.

\begin{lemma}
    The Why module $\modwhy = \modwhylong$ over the monoid $\monwhy = \monwhylong$ is a monoid-module (see \Cref{def:monoid-module}).
\end{lemma}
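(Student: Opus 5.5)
We verify the axioms of \Cref{def:monoid-module} directly, working throughout on equivalence classes of positive DNF formulae over the finite atom set $A$. Because every operation will be shown to respect logical equivalence, each axiom reduces to a propositional tautology.

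\emph{Step 1: well-definedness.} The first point is that $\wedge$ and $\vee$ descend to $\pDNF$. Take $P\equiv P'$ and $Q\equiv Q'$. Then every valuation $\rho$ satisfies $P\wedge Q$ exactly when it satisfies $P'\wedge Q'$, and likewise for $\vee$. We must also check that the results are again positive DNF up to equivalence.
\begin{itemize}
  \item For $\vee$ this is immediate: concatenate the disjuncts.
  \item For $\wedge$, distribute the conjunction over the disjunctions. This gives $\bigvee_{j,k}\bigwedge_{X\in I_j\cup J_k}X$, which is again positive and in DNF.
\end{itemize}
Hence $\wedge,\vee\colon\pDNF\times\pDNF\to\pDNF$ are well-defined maps, and the constants $0$ (the empty disjunction) and $1$ (the empty conjunction) are elements of $\pDNF$.

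\emph{Step 2: monoid structure.}
\begin{itemize}
  \item $(\pDNF,\wedge,1)$ is a monoid: associativity of $\wedge$ and $1\wedge P\equiv P$ are tautologies.
  \item $(\pDNF,\vee,0)$ is a commutative monoid: associativity and commutativity of $\vee$ and $P\vee 0\equiv P$ are tautologies.
\end{itemize}

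\emph{Step 3: scalar axioms.} With scalar multiplication $\otimes=\wedge$, we check the four required identities:
\begin{itemize}
  \item associativity, $(v\wedge w)\wedge x\equiv v\wedge(w\wedge x)$;
  \item distributivity, $v\wedge(x\vee y)\equiv(v\wedge x)\vee(v\wedge y)$;
  \item neutrality, $1\wedge x\equiv x$;
  \item annihilation, $v\wedge 0\equiv 0$.
\end{itemize}
Each is a classical propositional tautology, so it holds on equivalence classes.

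Alternatively, once Step 1 is done, $(\pDNF,\vee,\wedge,0,1)$ is a commutative semiring (indeed a distributive lattice), and \Cref{lem:semiring_special_module} then yields the monoid-module structure directly.

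The only step needing care is Step 1, specifically closure of $\pDNF$ under $\wedge$ via distribution. Everything after that is routine.
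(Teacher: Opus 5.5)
Your proof is correct and matches the paper's: both show that $(\pDNF,\wedge,1)$ is a monoid and $(\pDNF,\vee,0)$ a commutative monoid, and then verify associativity, distributivity, neutrality of $1$ and annihilation by $0$ for the scalar multiplication $\wedge$ as propositional identities. Your Step~1 (compatibility with logical equivalence, and closure of positive DNF under $\wedge$ by distributing) spells out what the paper only asserts in passing in the proof of \Cref{thm:why_semiring_bi_heyting}.
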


\begin{proof}
    We observe that $\monwhylong$ and $(\pDNF, \vee, 0)$ are monoids, where the latter is also commutative due to the disjunction $\vee$. We check the properties of the scalar multiplication $\wedge: \pDNF \times \pDNF \to \pDNF$, $(p, q) \mapsto p \wedge q$.
    It is associative ($(p \wedge q)\wedge r = p\wedge (q \wedge r)$) and distributive ($p \wedge (q \vee r) = (p\wedge q) \vee (p\wedge r)$) for $p, q, r\in \pDNF$. Furthermore, the neutral element $1$ is neutral ($1 \wedge p = p$) and the neutral element $0$ is annihilating ($p \wedge 0 = 0$), hence all properties are satisfied.
\end{proof}

\begin{lemma}
    The Why module $\modwhy$ over the monoid $\monwhy$ is $\omega$-bicontinuous.
\end{lemma}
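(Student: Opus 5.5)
We show that the natural order of $\modwhy$ coincides with semantic entailment on $\pDNF$, and then use finiteness of $\pDNF$ (the finite set $A$ fixed in the definition) to reduce every continuity requirement to eventually constant chains.

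First, the natural order. If $P \vee R = Q$ for some $R$, then $P$ entails $Q$. Conversely, if $P$ entails $Q$, then $P \vee Q \equiv Q$, so $R := Q$ witnesses $P \sleq Q$. Hence $\sleq$ is entailment. Since $\pDNF$ is taken modulo logical equivalence, entailment is reflexive, transitive and antisymmetric, hence a partial order. Its least element is $0$ and its greatest element is $1$. This establishes the order-theoretic part of both $\omega$-continuity and $\omega$-cocontinuity, the latter for the reversed order $\sgeq$.

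Second, chains and extrema. Since $A$ is finite, $\pDNF$ is finite, as already used in \Cref{thm:why_semiring_bi_heyting}. An $\omega$-ascending chain $(x_i)_{i\in\Nats}$ therefore takes only finitely many values. By antisymmetry it is eventually constant, say $x_i = x_N$ for all $i \ge N$. Its supremum exists and equals $x_N$. Dually, every $\omega$-descending chain is eventually constant with infimum its eventual value.

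Third, continuity of the operations. Fix $y \in \pDNF$ and an ascending chain with eventual value $x_N$. By monotonicity of $\wedge$ and $\vee$ (\Cref{lem:scalar-multiplication-monotone,lem:weighted-addition-monotone}), the chains $(y \wedge x_i)_i$ and $(y \vee x_i)_i$ are ascending. They are eventually constant with values $y \wedge x_N$ and $y \vee x_N$, so
\[
y \wedge \hbigor_i x_i = y \wedge x_N = \hbigor_i (y \wedge x_i)
\qquad\text{and}\qquad
y \vee \hbigor_i x_i = \hbigor_i (y \vee x_i).
\]
The descending case is identical with infima, which gives $\omega$-cocontinuity.

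The only step needing care is the identification of the natural order with entailment, so that antisymmetry holds in the quotient. Everything else is trivial because the carrier is finite.
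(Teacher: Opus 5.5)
Your proof is correct and follows the paper's argument. Both identify the natural order with entailment, use finiteness of $\pDNF$ to make every $\omega$-chain eventually constant, and read off $\omega$-(co)continuity of $\wedge$ and $\vee$ from the eventual values. You are slightly more explicit than the paper, namely in the witness $R := Q$ and in using monotonicity to see that $(y \wedge x_i)_i$ and $(y \vee x_i)_i$ are again chains.
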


\begin{proof}
    The natural order is defined as $p \sleq q$ if and only if there exists $r \in \pDNF$ such that $p \vee r = q$ if and only if $p \leq q$ in the usual order of propositional formulae (where $0$ is the least element and $1$ is the greatest element). Hence, the natural order is a partial order with least element $0$, greatest element $1$, where suprema and infima of $\omega$-chains exist, since the set of propositional formulae is finite (up to logical equivalence).

    We show that the scalar multiplication $\wedge$ is $\omega$-bicontinuous in the second argument and assume that $p \in \pDNF$ and $(q_i)_{i \in \Nats} \subseteq \pDNF$ is an ascending (descending) $\omega$-chain. Since the set of propositional formulae is finite, the $\omega$-chain is eventually constant, i.e., there exists $j \in \Nats$ such that $q_i = q_j$ for all $i \geq j$. Hence,
    $$
        p \wedge \sup_{i \leq j} q_i = \sup_{i \leq j} (p \wedge q_i)
        \quad \text{and} \quad
        p \wedge \inf_{i \leq j} q_i = \inf_{i \leq j} (p \wedge q_i).
    $$
    The additive operation $\vee$ is also $\omega$-bicontinuous using the same argument.
\end{proof}

For the Heyting (co-)implications of the Why module $\modwhy$ over the monoid $\monwhy$, look at the remark in \Cref{sec:heyting}.

\subsection{Clearance: Network Security Analysis}
\label{app:case-studies-clearance}

We show that the clearance module $\modclear$ over the monoid $\monclear$ is an $\omega$-bicontinuous monoid-module with Heyting (co-)implications.

\begin{lemma}
    The clearance module $\modclear$ over the monoid $\monclear$ is a monoid-module (see \Cref{def:monoid-module}).
\end{lemma}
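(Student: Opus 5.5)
The plan is to check each clause of \Cref{def:monoid-module} directly for $\monclear = \monclearlong$ and $\modclear = \modclearlong$, in the order monoid, commutative monoid, then the scalar-multiplication axioms. All operations are $\min$ and $\max$ on the finite chain $\mathsf{P}<\mathsf{C}<\mathsf{S}<\mathsf{TS}$, so every identity reduces to an elementary fact about extrema in a total order. If needed, each can be confirmed by case distinction on the relative order of the at most three elements involved.

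\textbf{Monoids.}
\begin{itemize}
  \item $\max$ is associative on a total order, and $\mathsf{P}$ is the least level. Hence $\max(\mathsf{P},v)=v$, so $(\clearUniverse,\max,\mathsf{P})$ is a monoid.
  \item $\min$ is associative and commutative, and $\mathsf{TS}$ is the greatest level. Hence $\min(\mathsf{TS},x)=x$, so $(\clearUniverse,\min,\mathsf{TS})$ is a commutative monoid.
\end{itemize}

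\textbf{Scalar multiplication.} The scalar action $\otimes=\max\colon \clearUniverse\times\clearUniverse\to\clearUniverse$ must satisfy four laws.
\begin{itemize}
  \item Associativity, $(v\odot w)\otimes x=\max(\max(v,w),x)=\max(v,\max(w,x))=v\otimes(w\otimes x)$, is again associativity of $\max$.
  \item Distributivity, $\max(v,\min(x,y))=\min(\max(v,x),\max(v,y))$, is distributivity of $\max$ over $\min$, which holds in every chain (indeed in every distributive lattice).
  \item Neutrality of the monoid unit, $\max(\mathsf{P},x)=x$, follows because $\mathsf{P}$ is least.
  \item Annihilation by the module zero, $v\otimes\snull=\max(v,\mathsf{TS})=\mathsf{TS}=\snull$, follows because $\mathsf{TS}$ is greatest.
\end{itemize}

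The only point that needs care is distributivity. We can either cite distributivity of chains or do a three-case split on where $v$ lies relative to $\min(x,y)$ and $\max(x,y)$; in every case both sides equal the median-type value. No real obstacle is expected, and the proof will mirror the corresponding lemmas for $\modprob$ and $\modwhy$ above.
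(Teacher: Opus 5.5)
Your proposal is correct and is essentially the paper's proof. Both show directly that $(\clearUniverse,\max,\mathsf{P})$ is a monoid and $(\clearUniverse,\min,\mathsf{TS})$ a commutative monoid, then check associativity of $\max$, distributivity of $\max$ over $\min$, neutrality of $\mathsf{P}$, and annihilation by $\mathsf{TS}$ as elementary facts about extrema on the chain $\mathsf{P}<\mathsf{C}<\mathsf{S}<\mathsf{TS}$. One small point: in your optional case split, the two sides of the distributive law are not always a ``median'' (when $v$ lies above both $x$ and $y$, they both equal $v$), but since the law holds in every chain this does not affect the argument.
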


\begin{proof}
    Recall that $\modclear = \modclearlong$ and $\monclear = \monclearlong$.
    We observe that $\monclearlong$ and $(\clearUniverse, \min, \mathsf{TS})$ are monoids, where the latter is also commutative due to the minimum $\min$. We check the properties of the scalar multiplication $\max: \clearUniverse \times \clearUniverse \to \clearUniverse$, $(p, q) \mapsto \max\{p, q\}$.
    It is associative and distributive over $\min$, since for all $p,q,r \in \clearUniverse$,
    \[
        (p \max q)\max r = p\max(q \max r),
        \qquad
        \max\{p,\min\{q,r\}\} =
        \min\{\max\{p,q\}, \max\{p,r\}\}.
    \]
    Furthermore, the neutral element $\mathsf{P}$ is neutral ($\max\{\mathsf{P}, p\} = p$) and $\mathsf{TS}$ is annihilating ($\max\{p, \mathsf{TS}\} = \mathsf{TS}$), hence all properties are satisfied.
\end{proof}

\begin{lemma}
    The clearance module $\modclear$ over the monoid $\monclear$ is $\omega$-bicontinuous.
\end{lemma}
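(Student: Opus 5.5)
The statement to prove is that $\modclear$ is $\omega$-bicontinuous. I would mirror the argument given above for $\modprob$, using the fact that the carrier $\clearUniverse$ is finite and totally ordered.

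First, I determine the natural order. We have $p \sleq q$ iff there exists $r$ with $\min\{p,r\} = q$. For the forward direction, $q=\min\{p,r\}$ forces $q \le p$ in the order $\mathsf{P}<\mathsf{C}<\mathsf{S}<\mathsf{TS}$. For the converse, if $q \le p$, choose $r := q$. Hence $\sleq$ is the reversed order $\geq$ on clearance levels, as recorded in \Cref{fig:modules}. This is a total, and in particular partial, order with least element $\bot = \mathsf{TS} = \snull$ and greatest element $\top = \mathsf{P}$.

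Second, I handle the chain conditions using finiteness. Every $\omega$-ascending chain in a four-element total order is eventually constant: there is some $j$ with $x_i = x_j$ for all $i \ge j$. Its supremum therefore exists and equals $x_j$, and dually every descending chain has its eventual value as infimum. This gives the $\omega$-complete and $\omega$-cocomplete partial order.

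Third, I check continuity of the operations, following the same pattern as the Why module proof. Take an ascending (or descending) chain $(x_i)$ that stabilizes at index $j$, and fix $y$. Then $(\max\{y,x_i\})_i$ and $(\min\{y,x_i\})_i$ are again chains, since both maps are monotone by \Cref{lem:scalar-multiplication-monotone,lem:weighted-addition-monotone}. They also stabilize at $j$, so
\[
    \max\{y, \textstyle\hbigor_i x_i\} = \max\{y,x_j\} = \hbigor_i \max\{y,x_i\},
\]
and likewise for $\min$ and for infima of descending chains. This yields both $\omega$-continuity and $\omega$-cocontinuity of scalar multiplication and addition.

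No step is a real obstacle. The only care needed is the orientation of the natural order, so that suprema in $\sleq$ are correctly identified as minima of clearance levels; the eventual-constancy argument does not depend on that orientation.
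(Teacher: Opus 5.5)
Your proof is correct and follows essentially the same route as the paper: identify the natural order as the reversed clearance order $\geq$ with $\bot = \mathsf{TS}$ and $\top = \mathsf{P}$, use finiteness to make every $\omega$-chain eventually constant, and read off continuity of $\max$ and $\min$ from that eventual constancy. Your appeal to \Cref{lem:scalar-multiplication-monotone,lem:weighted-addition-monotone} is a small improvement on the paper's somewhat loose display: it guarantees that the image sequences are again chains, so their supremum (infimum) really is their eventual value.
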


\begin{proof}
    The natural order is defined as $p \sleq q$ if and only if there exists $r \in \clearUniverse$ such that $p \min r = q$ if and only if $p \geq q$ holds on the clearance levels. Hence, the natural order is a partial (even total) order with least element $\mathsf{Top Secret}$, greatest element $\mathsf{Public}$, where all ascending (descending) $\omega$-chains become constant after at most 4 steps and hence the supremum (infimum) exists.

    We show that the scalar multiplication $\max$ is $\omega$-bicontinuous in the second argument and assume that $p \in \clearUniverse$ and $(q_i)_{i \in \Nats} \subseteq \clearUniverse$ is an ascending (descending) $\omega$-chain. Since the $\omega$-chain is eventually constant, there exists $j \in \Nats$ such that $q_i = q_j$ for all $i \geq j$. Hence,
    $$
        \max\{ p, \sup_{i \leq j} q_i \} = \sup_{i \leq j} \max\{ p, q_i \}
        \quad \text{and} \quad
        \max\{p, \inf_{i \leq j} q_i \} = \inf_{i \leq j} \max\{ p, q_i \}.
    $$
    The additive operation $\min$ is also $\omega$-bicontinuous using the same argument.
\end{proof}

Note that the natural order $\sleq$ is the reverse of the semantic order $\geq$ on the clearance levels. This is also the reason why we compute the upper bound $\wp{C_{AC}}(\mathsf{TS}) \sleq I$ to obtain the minimal clearance level $I \leq \wp{C_{AC}}(\mathsf{TS})$.

\begin{lemma}
    The clearance module $\modclear$ over the monoid $\monclear$ admits Heyting (co-)implications.
\end{lemma}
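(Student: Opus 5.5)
The clearance module is totally ordered under its natural order, so the plan is to reduce this statement to \Cref{thm:totimpl_totcoimpl_bounded_lattice}, exactly as in the corresponding lemma for $\modprob$.

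First, I would recall from the previous lemma that the natural order $\sleq$ on $\clearUniverse$ is the reversed clearance order $\geq$. Under this order the least element is $\bot = \mathsf{TS}$ and the greatest element is $\top = \mathsf{P}$. Since $\clearUniverse$ is a finite chain $\mathsf{P} < \mathsf{C} < \mathsf{S} < \mathsf{TS}$, the reversed order is again a total order. Binary meets and joins therefore exist and are given by $\hand = \max$ and $\hor = \min$ with respect to the clearance levels. This makes $(\clearUniverse, \sleq)$ a totally ordered bounded lattice.

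Second, I would apply \Cref{thm:totimpl_totcoimpl_bounded_lattice}. It yields the bi-Heyting algebra $(\clearUniverse, \sleq, \max, \min, \totimpl, \totcoimpl)$ with
\[
  c \totimpl d = \begin{cases} \mathsf{P}, & \text{if } c \geq d,\\ d, & \text{otherwise,}\end{cases}
  \qquad
  c \totcoimpl d = \begin{cases} \mathsf{TS}, & \text{if } d \geq c,\\ d, & \text{otherwise,}\end{cases}
\]
where $\geq$ denotes the clearance order. These values are obtained by translating $\top,\bot$ and the natural-order comparisons through the reversal.

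The only step needing care is bookkeeping of this order reversal. In particular, I must check that $\top$ and $\bot$, the meet and join, and the case conditions $a \hleq b$ are all correctly transcribed into clearance-order terms. No genuine mathematical obstacle is expected, because totality plus finiteness makes the hypotheses of \Cref{thm:totimpl_totcoimpl_bounded_lattice} immediate.
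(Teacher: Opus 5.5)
Your proposal is correct and matches the paper's proof. Both arguments note that the natural order on $\clearUniverse$ is total, namely the reversed clearance order with $\top = \mathsf{P}$, $\bot = \mathsf{TS}$, $\hand = \max$ and $\hor = \min$, and then instantiate \Cref{thm:totimpl_totcoimpl_bounded_lattice}; they arrive at the same explicit formulas for $\totimpl$ and $\totcoimpl$, and you simply spell out the order-reversal bookkeeping that the paper leaves implicit.
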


\begin{proof}
    Since the module $\modclear$ is totally ordered, \Cref{thm:totimpl_totcoimpl_bounded_lattice} states that the implications
    $$
        p \totimpl q = \left. \begin{cases}
            \mathsf{P} & \text{if } p \geq q, \\
            q & \text{otherwise}
        \end{cases} \right\}
        \quad \text{and} \quad
        p \totcoimpl q = \left. \begin{cases}
            \mathsf{TS} & \text{if } q \geq p, \\
            q & \text{otherwise}
        \end{cases} \right\}.
    $$
    specify a bi-Heyting algebra $(\clearUniverse, \geq, \max, \min, \totimpl, \totcoimpl)$ for $p, q \in \clearUniverse$.
\end{proof}

\end{document}